\title{Proof terms for infinitary rewriting, progress report}
\author{Carlos Lombardi, Alejandro R\'ios, Roel de Vrijer}
\newcommand{\multistepsAfterDevelopments}[1]{}
\newcommand{\multistepsIndependent}[1]{#1}
\newcommand{\usingContractionActivity}[1]{#1}
\newcommand{\usingRedseqOnly}[1]{}
\newcommand{\usingStrEqIdEq}[1]{}
\newcommand{\includeStandardisation}[1]{}
\newcommand{\doNotIncludeStandardisation}[1]{#1}
\newcommand{\denotationInOwnChapter}[1]{#1}
\newcommand{\denotationDistributed}[1]{}
\newcommand{\cfpInsideCompression}[1]{#1}
\newcommand{\cfpInsideStandardisation}[1]{}
\begin{document}

\maketitle

%
%
%
%
\section{Preliminaries}
\label{sec:prelim}
\subsection{Ordinal arithmetics}
\label{sec:ordinal-arithmetics}
 
One of the main foundations for this work is the theory of countable ordinals; (citation needed) and (citation needed) are good references on this subject. 

We want to point out some definitions and results which are critical in order to prove some of the basic properties of infinitary proof terms.

\medskip
\newcommand{\onlyLongVersion}[1]{}
\newcommand{\onlyShortVersion}[1]{#1}
\renewcommand{\ordplus}{+}
In order to deal with infinitary composition, we will need to obtain the sum of a sequence including $\omega$ ordinals. Thus we will resort to the following definition, \confer\ (citation needed).

\begin{definition}[Ordinal infinitary sum]
\label{dfn:ordinal-infAdd}
Let $\iomegaseq{\alpha_i}$ be a sequence of ordinals. We define the sum of $\iomegaseq{\alpha_i}$ as follows:
$$\underset{i < \omega}{\Sigma} \alpha_i \eqdef sup(\set{\alpha_0 \ordplus \alpha_1 \ordplus \ldots \ordplus \alpha_{n-1} \ordplus \alpha_n \setsthat n < \omega})$$
\end{definition}

The sum of $\omega$ ordinals, in the way it was just defined, enjoys the following important property.

\begin{lemma}
\label{rsl:ordinal-lt-infAdd-then-unique-representation}
Let $\iomegaseq{\alpha_i}$ be a sequence of ordinals, and $\beta$ an ordinal such that $\beta < \underset{i < \omega}{\Sigma} \alpha_i$.
Then there exist a unique $k < \omega$ and an ordinal $\gamma$ such that $\beta = \alpha_0 \ordplus \ldots \ordplus \alpha_{k-1} \ordplus \gamma$ and $\gamma < \alpha_k$.
\end{lemma}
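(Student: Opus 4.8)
The plan is to analyze where $\beta$ falls among the finite partial sums. Write $S_k = \alpha_0 \ordplus \ldots \ordplus \alpha_{k-1}$ for the $k$-th partial sum, with $S_0 = 0$, so that by Definition~\ref{dfn:ordinal-infAdd} we have $\sum_{i<\omega}\alpha_i = \sup\{S_k \mid k < \omega\}$, and, by monotonicity of ordinal addition (namely $\delta \leq \delta \ordplus \epsilon$), the sequence $\langle S_k \rangle_{k<\omega}$ is non-decreasing. The whole argument then reduces to locating $\beta$ between two consecutive partial sums and applying ordinal subtraction.

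For existence, I would first use the basic fact that an ordinal strictly below a supremum lies strictly below one of the members of the indexed family: from $\beta < \sup_k S_k$ obtain some $m$ with $\beta < S_m$. Since $S_0 = 0 \leq \beta$, the set $\{\, j \mid \beta < S_j \,\}$ is a nonempty set of natural numbers, hence has a least element, which is necessarily of the form $k+1$. Minimality gives $S_k \leq \beta < S_{k+1} = S_k \ordplus \alpha_k$. I would then invoke the left-subtraction property of ordinals: since $S_k \leq \beta$, there is a unique $\gamma$ with $\beta = S_k \ordplus \gamma$; and since $S_k \ordplus \gamma = \beta < S_k \ordplus \alpha_k$, strict left-monotonicity of addition yields $\gamma < \alpha_k$, as required.

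For uniqueness, suppose $\beta = S_k \ordplus \gamma = S_{k'} \ordplus \gamma'$ with $\gamma < \alpha_k$ and $\gamma' < \alpha_{k'}$, and assume without loss of generality $k \leq k'$. If $k < k'$ then $\beta = S_k \ordplus \gamma < S_k \ordplus \alpha_k = S_{k+1} \leq S_{k'} \leq S_{k'} \ordplus \gamma' = \beta$, a contradiction; hence $k = k'$, and then left-cancellation of the summand $S_k$ forces $\gamma = \gamma'$. This settles both the index and the remainder.

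I do not expect a genuine obstacle here, as the argument is entirely standard ordinal arithmetic. The only points requiring care are the indexing bookkeeping when passing from the partial sums that appear in the definition (which terminate at $\alpha_n$) to the sums $S_k$ terminating at $\alpha_{k-1}$, and making sure the three classical facts invoked are cited from the ordinal background assumed in Section~\ref{sec:ordinal-arithmetics}: that an ordinal below a sup is below some member of the family, the existence and uniqueness of the ordinal difference (left subtraction), and strict left-monotonicity together with left-cancellation of ordinal addition.
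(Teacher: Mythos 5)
Your proof is correct and follows essentially the same route as the paper's: take the least index at which $\beta$ falls strictly below a partial sum, sandwich $\beta$ between the two consecutive partial sums $S_k \leq \beta < S_k \ordplus \alpha_k$, and apply ordinal left subtraction to obtain the unique $\gamma$ with $\gamma < \alpha_k$. The only difference is that you spell out the uniqueness of the index $k$ (via the contradiction $\beta < S_{k+1} \leq S_{k'} \leq \beta$), a point the paper's proof leaves implicit under ``basic properties of ordinals.''
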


\begin{proof}
This is an easy consequence of some properties of ordinals.
Namely, $\beta < \underset{i < \omega}{\Sigma} \alpha_i$ implies that the set $\set{k < \omega \setsthat \beta < \alpha_0 \ordplus \ldots \ordplus \alpha_k}$ is nonempty; we take $n$ as the minimum of this set.
Then 
$\alpha_0 \ordplus \ldots \ordplus \alpha_{n-1} \leq \beta < (\alpha_0 \ordplus \ldots \ordplus \alpha_{n-1}) \ordplus \alpha_n$.%
\onlyLongVersion{
\refLem{ordinal-leq-then-unique-right-add} implies that there exists an unique $\gamma$ verifying 
$(\alpha_0 \ordplus \ldots \ordplus \alpha_{n-1}) \ordplus \gamma = \beta$.
In turn, \reflem{left-add-lt-then-lt} yields $\gamma < \alpha_n$. 
}\onlyShortVersion{
Basic properties of ordinals entail the existence and uniqueness of an ordinal $\gamma$ verifying 
$(\alpha_0 \ordplus \ldots \ordplus \alpha_{n-1}) \ordplus \gamma = \beta$,
and also that $\gamma < \alpha_n$.
}%
Thus we conclude.
\end{proof}

Finally, the property of $\omega$-cofinality of countable ordinals (see e.g. (citation needed)) is critical in some proofs along this work.
We use the following version of the statement of this property.
\begin{proposition}
\label{rsl:cofinality-omega}
Let $\alpha$ be a limit countable ordinal. Then there exists a sequence%
\footnote{can we assert the existence of an \textbf{increasing} sequence $\iomegaseq{\alpha_i}$?} 
of ordinals $\iomegaseq{\alpha_i}$ such that $0 < \alpha_i < \alpha$ for all $i < \omega$, and $\alpha = \underset{i < \omega}{\Sigma} \alpha_i$.
\end{proposition}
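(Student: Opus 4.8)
The plan is to prove the classical fact that every countable limit ordinal has cofinality $\omega$, phrased directly in terms of Definition~\ref{dfn:ordinal-infAdd}. The idea is to first produce a strictly increasing sequence of \emph{partial sums} that is cofinal in $\alpha$, and then recover the desired summands as successive differences. Since $\alpha$ is countable I would begin by fixing a surjection $f : \omega \to \alpha$ enumerating all ordinals below $\alpha$; this is the single place where the countability hypothesis is essential. (Note also that a limit ordinal satisfies $\alpha \geq \omega$, so there is plenty of room below it.)

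From $f$ I would define, by recursion on $n$, a sequence $\iomegaseq{\beta_i}$ by $\beta_0 = f(0) \ordplus 1$ and $\beta_{n+1} = \max(\beta_n \ordplus 1, \, f(n+1) \ordplus 1)$. Because $\alpha$ is a limit ordinal, $\delta < \alpha$ implies $\delta \ordplus 1 < \alpha$, so a routine induction yields $0 < \beta_n < \alpha$ for every $n$. Strict monotonicity is immediate from $\beta_{n+1} \geq \beta_n \ordplus 1$, and cofinality follows since any $\delta < \alpha$ equals some $f(m)$, whence $\beta_m \geq f(m) \ordplus 1 > \delta$. Consequently $sup(\set{\beta_n \setsthat n < \omega}) = \alpha$.

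Next I would turn $\iomegaseq{\beta_i}$ into the summands by taking differences: set $\alpha_0 = \beta_0$, and for each $n$ let $\alpha_{n+1}$ be the unique ordinal $\gamma$ with $\beta_n \ordplus \gamma = \beta_{n+1}$ (its existence and uniqueness are exactly the left-subtraction property of ordinals already used in the proof of Lemma~\ref{rsl:ordinal-lt-infAdd-then-unique-representation}). Since $\beta_n < \beta_{n+1}$ this difference is positive, and $\gamma \leq \beta_{n+1} < \alpha$, so every $\alpha_i$ meets the requirement $0 < \alpha_i < \alpha$. A straightforward induction on $n$ then shows that the partial sums telescope, $\alpha_0 \ordplus \ldots \ordplus \alpha_n = \beta_n$ (base case: the definition of $\alpha_0$; step: precisely $\beta_n \ordplus \alpha_{n+1} = \beta_{n+1}$). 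Feeding this into Definition~\ref{dfn:ordinal-infAdd} gives
$$\underset{i < \omega}{\Sigma} \alpha_i \;=\; sup(\set{\alpha_0 \ordplus \ldots \ordplus \alpha_n \setsthat n < \omega}) \;=\; sup(\set{\beta_n \setsthat n < \omega}) \;=\; \alpha,$$
which is the conclusion.

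I expect the only genuine obstacle to be the first step, the extraction of a cofinal $\omega$-sequence, since everything afterwards (subtraction and the telescoping induction) is mechanical. Concerning the footnote, I would answer it in the negative: an \emph{increasing} sequence of summands cannot be guaranteed. For $\alpha = \omega \ordplus \omega$ the sequence $\iomegaseq{\alpha_i}$ with $\alpha_0 = \omega$ and $\alpha_i = 1$ for $i \geq 1$ works, yet no nondecreasing sequence of positive ordinals below $\omega \ordplus \omega$ can sum to it: as soon as a summand reaches $\omega$, monotonicity forces a later summand that is again $\geq \omega$, and two such summands already push a partial sum up to at least $\omega \ordplus \omega$, so the remaining positive terms overshoot the target. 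Thus the increasing structure lives on the partial sums $\iomegaseq{\beta_i}$, not on the summands themselves.
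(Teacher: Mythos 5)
Your proof is correct, but there is nothing in the paper to compare it against: Proposition~\ref{rsl:cofinality-omega} is stated there as a known classical fact ($\omega$-cofinality of countable ordinals, with a citation placeholder) and no proof is given. Your argument is the standard one, cleanly adapted to the sum-based formulation of Definition~\ref{dfn:ordinal-infAdd}: countability yields an enumeration $f:\omega\to\alpha$; from it you build a strictly increasing sequence $\beta_0 < \beta_1 < \ldots$ cofinal in $\alpha$ (using that $\alpha$ is a limit, so successors of ordinals below $\alpha$ stay below $\alpha$); and left subtraction --- the same property the paper invokes inside the proof of Lemma~\ref{rsl:ordinal-lt-infAdd-then-unique-representation} --- converts the $\beta_n$ into positive summands whose partial sums telescope back to $\beta_n$, so the supremum defining $\underset{i<\omega}{\Sigma}\,\alpha_i$ equals $\sup_n \beta_n = \alpha$. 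The auxiliary facts you use ($\gamma \leq \delta + \gamma$, hence $\alpha_{n+1} \leq \beta_{n+1} < \alpha$; positivity of the differences; associativity for the telescoping induction) are all standard and applied correctly.

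Your resolution of the paper's footnote is also correct and is a genuine addition: for $\alpha = \omega + \omega$ no monotone sequence of positive summands can work, since if every summand is finite the partial sums are finite and the sum is at most $\omega$, while once some summand is $\geq \omega$ monotonicity forces the next one to be $\geq \omega$ as well, so a partial sum already reaches $\omega + \omega$ and any further positive term pushes the supremum strictly beyond it. So the proposition cannot be strengthened to an increasing sequence of summands; the increasing structure can only be demanded of the partial sums.
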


\subsection{Positions and terms}

\begin{definition}[Position, depth of a position]
\label{dfn:pos}
A \emph{position} is a finite sequence of $\Nat_{>0}$. 
The empty sequence is denoted by the symbol $\epsilon$. 
The \emph{depth} of a position $p$, notation $\posln{p}$, is defined as its length as a sequence; observe that $\posln{\epsilon} = 0$.
\end{definition}

\begin{definition}[Concatenation of positions]
\label{dfn:pos-concatenation}
Let $p,q$ be positions. 
Then we define $p \cdot q$, the concatenation of $p$ and $q$, as follows: $\epsilon \cdot q \eqdef q$ and $(i p) \cdot q \eqdef i (p \cdot q)$.
Moreover, given $P,Q$ sets of positions, then we define also $P \cdot q \eqdef \set{p \cdot q \setsthat p \in P}$ and $p \cdot Q \eqdef \set{p \cdot q \setsthat q \in Q}$.

We will omit the dot to denote concatenation, \ie\ we will write $pq, pQ, Pq$ instead of $p \cdot q, p \cdot Q, P \cdot q$ wherever no confusion arises.
\end{definition}

\begin{definition}[Signature, function symbol, constant]
\label{dfn:signature}
A \emph{signature} is a finite set of symbols along with a function from this set to $\Nat_{\geq 0}$, called \emph{arity} and noted $\arityfn$.
The usual notation is $\Sigma \eqdef \set{f_i/n_i}_{i \in I}$, where each $f_i$ is a symbol and $n_i = \arity{f_i}$.
We will follow the custom of writing $f \in \Sigma$ as a shorthand notation for $\exists n . n \in \Nat_{\geq 0} \,\land\, f/n \in \Sigma$.

A \emph{constant} is a function symbol $c$ such that $\arity{c} = 0$.
\end{definition}

\begin{definition}[Tree domain]
\label{dfn:tree-domain}
A \emph{tree domain} is any set of positions $P$ satisfying the following conditions ($p, q$ positions; $i,j \in \Nat_{>0})$:
$P \neq \emptyset$;
$P$ is prefix closed, \ie\ $pq \in P$ implies $p \in P$ (particularly, $\epsilon \in P$); 
if $pj \in P$ and $1 \leq i < j$, then $pi \in P$.
\end{definition}

\begin{definition}[Term, positions of a term, symbol at a position, sets of finitary and infinitary terms]
\label{dfn:term}
A \emph{term} over a signature $\Sigma$ and a countable set of variables $\thevar$ is any pair $\pair{P}{F}$, such that $P$ is a tree domain, $F : P \to \Sigma \cup \thevar$, and the following condition holds:
if $p \in P$ and $F(p) = h$, then $pi \in P$ iff $i \leq ar(h)$, where we consider $ar(x) = 0$ if $x \in \thevar$%
\footnote{in some texts, \eg\ \cite{Courcelle83} and \cite{Gallier86}, a term is defined just as a function from positions to symbols; the set of positions is implicitly determined by being the domain of the function. We prefer to explicitly include the set of positions in the definition, I guess that such a decision leads to a clearer definition of terms by describing the tree domain first, and the function afterwards. I guess that we are following the idea expressed in \cite{terese} page 670, ``a (\ldots) term can be described as the set of its positions, together with a function (\ldots)''}%
.

If $t = \pair{P}{F}$ is a term, we will denote $P$ by $\Pos{t}$, and $F$ just by $t$; therefore, we will write $t(p)$ to denote $F(p)$.

A term is \emph{finite} iff its tree domain is, otherwise it is \emph{infinite.}

Given a signature $\Sigma$ and a countable set of variables $\thevar$, the set of \emph{finitary terms} over $\Sigma$, notation $Ter(\Sigma,\thevar)$, is the set of finite terms over $\Sigma$; and the set of \emph{infinitary terms} over $\Sigma$, notation $Ter^\infty(\Sigma,\thevar)$, is the set of finite or infinite terms over $\Sigma$.

We will often drop the set of variables, writing just $Ter(\Sigma)$ or $Ter^\infty(\Sigma)$.
\end{definition}

\medskip
We will name \emph{head symbol} of a term $t$ the symbol $t(\epsilon)$. The name \emph{root symbol} will be used as well.

\begin{notation}[Intuitive notation for terms]
\label{dfn:term-intuitive-notation}
An alternative notation will be often used for terms in $Ter^\infty(\Sigma,\thevar)$: if $x \in \thevar$ and $f/n \in \Sigma$, then we will write  \\
\begin{tabular}{@{$\quad\bullet\quad$}p{.9\textwidth}}
$x$ for $\pair{\set{\epsilon}}{F}$ where $F(\epsilon) = x$, and  \\
$f(t_1, \ldots, t_n)$ for $\pair{P}{F}$, where 
$P = \set{\epsilon} \cup \bigcup_{1 \leq i \leq n} \set{ip \setsthat p \in \Pos{t_i}}$, 
$F(\epsilon) = f$, and
$F(ip) = t_i(p)$.
\end{tabular}

\noindent
We will use $t \in \thevar$ as shorthand notation for $t = \pair{\set{\epsilon}}{F}$, $F(\epsilon) = x$, and $x \in \thevar$.

\noindent
If $f/1 \in \Sigma$, then we will write $f^\omega$ for the term $t = f(f(f( \ldots )))$, \ie\ $\Pos{t} = \set{1^n \setsthat n \in \Nat}$ and $t(p) = f$ for all $p \in \Pos{t}$%
\footnote{This convention could generalise to any $f/n \in \Sigma$, by defining $f^\omega = \pair{P}{F}$ where $P$ is the set of all the sequences that can be built using the numbers $\set{1,2,\ldots,n}$, and $F(p) \eqdef f$ for all $p \in P$. Roughly speaking, $f^\omega$ would be defined as the infinite tree all filled with $f$.}.
\end{notation}

We observe that all terms can be described using \refnotation{term-intuitive-notation}.

\begin{proposition}
\label{rsl:term-then-intuitive-notation}
Let $t \in Ter^\infty(\Sigma,\thevar)$. Then either $t = x$ or $t = f(t_1, \ldots, t_n)$ where $f/n \in \Sigma$ and $t_i \in Ter^\infty(\Sigma,\thevar)$ for all $i \leq n$; \confer\ \refnotation{term-intuitive-notation}.
\end{proposition}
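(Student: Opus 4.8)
The plan is to argue by a single case analysis on the head symbol $t(\epsilon)$, with no appeal to induction: the statement merely decomposes a term into its top symbol and the immediate subterms below it, so it should hold uniformly for finite and infinite terms alike. Writing $t = \pair{P}{F}$ and $h \eqdef t(\epsilon)$, I note that $\epsilon \in P$ (tree domains are nonempty and prefix closed) and that $F(\epsilon) \in \Sigma \cup \thevar$, so $h$ is either a variable or a function symbol; these are exactly the two alternatives in the statement.

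First I would treat the case $h = x \in \thevar$. Since $ar(x) = 0$, the defining condition of a term gives $i \in P$ iff $i \leq 0$ for every $i \in \Nat_{>0}$, so $\epsilon$ has no children; together with prefix closure this forces $P = \set{\epsilon}$, because any nonempty position would contain a single-symbol prefix that is not in $P$. Hence $t = \pair{\set{\epsilon}}{F}$ with $F(\epsilon) = x$, which is precisely the term denoted $x$ in \refnotation{term-intuitive-notation}.

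For the principal case $h = f/n \in \Sigma$, the term condition gives that the children of $\epsilon$ are exactly $1, \ldots, n$. For each $i \leq n$ I would set $t_i \eqdef \pair{P_i}{F_i}$ with $P_i \eqdef \set{p \setsthat ip \in P}$ and $F_i(p) \eqdef F(ip)$, and then check that each $t_i$ really is a term. The bulk of the work lies here: nonemptiness of $P_i$ follows from $i \in P$; prefix closure and the remaining tree-domain condition transfer from $P$ by rewriting $i(pq) = (ip)q$ and $i(pj) = (ip)j$ via associativity of concatenation; and the defining term condition for $t_i$ follows by applying the corresponding condition for $t$ at the position $ip$. This yields $t_i \in Ter^\infty(\Sigma,\thevar)$.

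It then remains to verify that $t$ equals $f(t_1, \ldots, t_n)$ as computed by \refnotation{term-intuitive-notation}. The key observation is that every nonempty position of $P$ begins with some $i \leq n$, since a position starting with $i > n$ would require the non-child $i$ to lie in $P$; consequently $P = \set{\epsilon} \cup \bigcup_{1 \leq i \leq n}\set{ip \setsthat p \in P_i}$, which is exactly the tree domain assigned by the notation, and the labellings agree because $F(\epsilon) = f$ and $F(ip) = F_i(p) = t_i(p)$. The hard part is not conceptual but bookkeeping: keeping the concatenation identities straight and recognising that the head symbol alone pins down the outermost branching. Finiteness never intervenes, so the decomposition is valid throughout $Ter^\infty(\Sigma,\thevar)$.
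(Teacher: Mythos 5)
Your proposal is correct and follows essentially the same route as the paper's proof: a case analysis on the head symbol, with the variable case settled by showing $\Pos{t} = \set{\epsilon}$ and the function-symbol case handled by forming the projections $P_i, F_i$, verifying they satisfy the term conditions, and reassembling the tree domain. The only (inessential) difference is in the variable case, where you appeal directly to prefix closure of a one-symbol prefix rather than the paper's minimal-depth contradiction; both arguments make the same point.
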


\begin{proof}
\refDfn{tree-domain} implies that $\epsilon \in \Pos{t}$. 

\medskip
Assume $t(\epsilon) = x \in \thevar$. Moreover, assume for contradiction the existence of some $p \in \Pos{t} \sthat p \neq \epsilon$. In that case there should be some $n \in \Nat$ being the minimum of the depths of such positions, \ie\ $n = min(\posln{p} \setsthat p \in \Pos{t} \land p \neq \epsilon)$. 

Observe that $n = 1$ would imply the existence of some $i \in \Nat$ verifying $i \in \Pos{t}$, contradicting \refdfn{term} since we consider $ar(x) = 0$. In turn, $n > 1$ would entail $p = p'i \in \Pos{t}$ for some $p$ verifying $\posln{p} = n$ and $\posln{p'} > 0$, implying $p' \in \Pos{t}$ by \refdfn{tree-domain}, thus contradicting minimality of $n$.
Consequently, $\Pos{t} = \set{\epsilon}$, hence $t = x$.

\medskip
Assume $t(\epsilon) = f \in \Sigma$. 
For each $i \in \Nat$ we define $P_i \eqdef \set{p \ \setsthat ip \in \Pos{t}}$, and $F_i : P_i \to \Sigma \cup \thevar$ such that $F_i(p) \eqdef t(ip)$.

If $i \leq ar(f)$, then $P_i \neq \emptyset$ since $\epsilon \in P_i$. Moreover, $\Pos{t}$ being a tree domain implies immediately that $P_i$ enjoys the remaining conditions in \refdfn{tree-domain}; and also the condition on $F_i$ described in \refdfn{term} stems immediately from the fact that $t$ is a term. 
Therefore, $t_i \eqdef \pair{P_i}{F_i}$ is a term.

On the other hand, $i > ar(f)$ implies that $P_i = \emptyset$, thus $\Pos{t} = \set{\epsilon} \cup \bigcup_{1 \leq i \leq ar(f)}\set{ip \  \setsthat p \in P_i}$.
We conclude by observing that $t = f(t_1, \ldots, t_n)$.
\end{proof}

\begin{definition}[Occurrence]
\label{dfn:occurrence}
Let $t$ be a (either finite or infinite) term over $\Sigma$ and $a \in \Sigma \cup \thevar$. An \emph{occurrence} of $a$ in $t$ is a position $p \in \Pos{t}$ such that $t(p) = a$.
We define $\Occs{a}{t}$ as the set of occurrences of $a$ in $t$.

A symbol $a \in \Sigma \cup \thevar$ \emph{occurs in} a term $t$ iff $\Occs{a}{t} \neq \emptyset$, \ie\ iff there is at least one occurrence of $a$ in $t$; $a$ occurs exactly $n \in \Nat$ times in $t$ iff 
$|\ \Occs{a}{t} | = n$, 
where $|\ S \ |$ denotes the cardinal of any set $S$.
\end{definition}

\begin{definition}[Closed term, linear term]
\label{dfn:closed-linear}
A term $t$ is said to be \emph{closed} iff it includes no occurrences of variables; it is said to be \emph{linear} iff no variable occurs in it more than once.
\end{definition}

\begin{definition}[Subterm at a position]
\label{dfn:subtat}
Let $t = \pair{P}{F}$ be a term, and $p \in P$. We define the \emph{subterm} of $t$ at position $p$, notation $\subtat{t}{p}$, as $\pair{\subtat{P}{p}}{\subtat{F}{p}}$, where $\subtat{P}{p}$ and $\subtat{F}{p}$ are the \emph{projections} of $P$ and $F$ over $p$ respectively; \ie, 
$\subtat{P}{p} \, \eqdef \set {q \setsthat pq \in P}$ and
$\subtat{F}{p} \, : \subtat{P}{p} \ \to \Sigma \cup \thevar$ such that $\subtat{F}{p}(q) \eqdef F(pq)$.
\end{definition}

\refDfn{subtat} allow a straightforward and direct (i.e. non-inductive) proof of a basic result about subterms. Namely
\begin{lemma}
\label{rsl:subtat-composition}
$\subtat{t}{pq} = \subtat{(\subtat{t}{p})}{q}$.
\end{lemma}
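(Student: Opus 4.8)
The plan is to reduce the identity to the associativity of position concatenation, after unfolding the definition of the subterm projection (\refdfn{subtat}) on both sides. Since a term is a pair $\pair{P}{F}$, two terms coincide exactly when their tree domains coincide and their labelling functions agree pointwise; so it suffices to check these two components separately. First I would record the well-definedness condition that is implicit in the statement: the notation $\subtat{t}{pq}$ presupposes $pq \in \Pos{t}$, and by prefix closedness of tree domains (\refdfn{tree-domain}) this gives $p \in \Pos{t}$ and, equivalently, $q \in \Pos{\subtat{t}{p}}$. Hence the right-hand side is defined under exactly the same hypothesis, and both sides make sense simultaneously.

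For the tree domains, I would compute directly from \refdfn{subtat} that
$$\Pos{\subtat{t}{pq}} = \set{r \setsthat (pq)r \in \Pos{t}},$$
and, abbreviating $s = \subtat{t}{p}$,
$$\Pos{\subtat{s}{q}} = \set{r \setsthat qr \in \Pos{s}} = \set{r \setsthat p(qr) \in \Pos{t}}.$$
These two sets are equal precisely because $(pq)r = p(qr)$. The labelling functions are handled in the same stroke: for every $r$ in the common domain, $\subtat{t}{pq}(r) = t((pq)r)$ whereas $\subtat{s}{q}(r) = s(qr) = t(p(qr))$, and the two values agree, again by $(pq)r = p(qr)$. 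Both components thus coincide, and the terms are equal.

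The single auxiliary fact used is the associativity of concatenation of positions, $(pq)r = p(qr)$, which I do not expect to be an obstacle: it is a routine induction on the length of $p$ from the clauses $\epsilon \cdot q = q$ and $(ip)\cdot q = i(p\cdot q)$ of \refdfn{pos-concatenation}. This is exactly what makes the proof ``direct'' in the sense advertised before the statement: no induction on the structure of $t$ is needed, because the projection operation transparently transports associativity of concatenation into the claimed identity on subterms.
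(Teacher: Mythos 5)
Your proposal is correct and follows essentially the same route as the paper: both proofs unfold \refdfn{subtat} on each side, compare the two tree domains and labelling functions directly, and conclude from the identification of $(pq)r$ with $p(qr)$ — no induction on $t$ needed. The only difference is presentational: you make explicit the associativity of position concatenation and the simultaneous well-definedness of both sides, which the paper's proof uses silently.
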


\begin{proof}
If we call $\pair{P}{F} \eqdef \subtat{t}{pq}$ and $\pair{P'}{F'} \eqdef \subtat{(\subtat{t}{p})}{q}$, then \refdfn{subtat} yields \\
\minicenter{
$\begin{array}{ccl@{\qquad}ccl}
	P & = & \set{r \setsthat pqr \in \Pos{t}} & P' & = & \set{r \setsthat qr \in \Pos{\subtat{t}{p}}} \\
	F(r) & = & t(pqr) & F'(r) & = & \subtat{t}{p} (qr) = t(pqr)
\end{array}$
} \\[2pt]
We conclude by observing that $pqr \in \Pos{t} \textiff qr \in \Pos{\subtat{t}{p}}$.
\end{proof}

\noindent
Particularly, if $t = f(t_1, \ldots, t_n)$, then $\subtat{t}{ip} = \subtat{t_i}{p}$; \confer\ \refnotation{term-intuitive-notation}.

\begin{definition}[Replacement at a position]
\label{dfn:repl}
Let $\,t$ and $u$ be terms, and $p \in \Pos{t}$. We define the  \emph{replacement} of $\,t$ under position $p$ with $u$, notation $\repl{t}{u}{p}$, as $\pair{P'}{F'}$ such that
$P' \eqdef \set{q \in \Pos{t} \setsthat p \not\leq q} \cup \set{pq \setsthat q \in \Pos{u}}$ and \\
$F'(q) \eqdef \left\{
\begin{array}{r@{ \ \textiff \ }l}
	t(q) & p \not\leq q \\
	u(q') & q = pq'
\end{array}
\right.$.
\end{definition}

We state and prove some basic properties about replacement.
It is worth mentioning that the definition of term we use (\confer\ \refdfn{term}) is different from the definition in \cite{terese} for finitary terms (Dfn 2.1.2, page 26) or \cite{trat} (Dfn. 3.1.2, page 35), so that it is necessary to verify these properties.

\begin{lemma}
\label{rsl:repl-homo}
Let $t = f(t_1, \ldots, t_n)$ and $u$ be terms, and $p \in \Pos{t_i}$. 
Then $\repl{t}{u}{ip} = f(t_1, \ldots, \repl{t_i}{u}{p}, \ldots, t_n)$.
\end{lemma}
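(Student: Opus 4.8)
The plan is to prove the equality of the two terms directly from the definitions, \ie\ by showing that their tree domains coincide and that their labelling functions agree pointwise, without resorting to induction. Write $v \eqdef \repl{t_i}{u}{p}$ and let $s \eqdef f(t_1, \ldots, v, \ldots, t_n)$ denote the right-hand side. Unfolding \refnotation{term-intuitive-notation} and \refdfn{repl}, we have on one side $\repl{t}{u}{ip} = \pair{P'}{F'}$ with $P' = \set{q \in \Pos{t} \setsthat ip \not\leq q} \cup \set{ipq \setsthat q \in \Pos{u}}$, and on the other side $\Pos{s} = \set{\epsilon} \cup \bigcup_{1 \leq j \leq n} \set{jq \setsthat q \in \Pos{s_j}}$, where $s_j = t_j$ for $j \neq i$ and $s_i = v$. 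The single fact that drives the whole argument is that for any position $q'$ one has $ip \leq iq'$ if and only if $p \leq q'$, so that prefixes of $ip$ only ever interact with positions whose first symbol is $i$.

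For the tree domains I would proceed by a case analysis on a position according to its first symbol. The position $\epsilon$ lies in both $P'$ and $\Pos{s}$, since $ip \not\leq \epsilon$. A position $jq'$ with $j \neq i$ satisfies $ip \not\leq jq'$ automatically; hence it belongs to $P'$ exactly when $jq' \in \Pos{t}$, \ie\ when $q' \in \Pos{t_j}$, which is precisely the condition for $jq' \in \Pos{s}$. The interesting case is a position $iq'$. Using the driving fact, membership of $iq'$ in $P'$ reduces to the disjunction ``$q' \in \Pos{t_i}$ and $p \not\leq q'$'' or ``$q' = pq''$ for some $q'' \in \Pos{u}$''; on the other hand $iq' \in \Pos{s}$ iff $q' \in \Pos{v}$, and expanding $\Pos{v}$ via \refdfn{repl} yields exactly the same disjunction. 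This settles $P' = \Pos{s}$.

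For the labelling functions I would repeat the same three cases. At $\epsilon$ both functions return $f$. At $jq'$ with $j \neq i$ both return $t_j(q')$. At a position $iq'$ there are two subcases mirroring the disjunction above: if $p \not\leq q'$ then $F'(iq') = t(iq') = t_i(q')$ while $s(iq') = v(q') = t_i(q')$; and if $q' = pq''$ with $q'' \in \Pos{u}$ then both $F'(iq') = F'(ipq'')$ and $s(iq') = v(pq'')$ evaluate to $u(q'')$. Hence the labelling functions agree and the two terms are equal.

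The one point that needs care --- and which I expect to be the main obstacle --- is the bookkeeping in the case of positions beginning with $i$, where the prefix relation $ip \leq iq'$ must be transferred to $p \leq q'$, and the two clauses of the replacement (the surviving part of $t_i$ and the grafted copy of $u$) must be matched clause-by-clause with the two clauses in the definition of $\Pos{v}$. Everything else is a routine unfolding of definitions.
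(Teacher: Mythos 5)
Your proof is correct and follows essentially the same route as the paper's own proof: a direct, non-inductive unfolding of \refnotation{term-intuitive-notation} and \refdfn{repl}, establishing equality of the tree domains and then of the labelling functions via the case split $\epsilon$, $jq'$ with $j \neq i$, and $iq'$ (with the two subcases $p \not\leq q'$ and $q' = pq''$). The observation that $ip \leq iq'$ iff $p \leq q'$ is exactly the bookkeeping the paper's verification relies on, so nothing is missing.
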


\begin{proof}
Let us call $t' = \pair{P'}{F'} \eqdef \repl{f(t_1, \ldots, t_n)}{u}{ip} \ $ and \\ $t'' = \pair{P''}{F''} \eqdef f(t_1, \ldots, \repl{t_i}{u}{p}, \ldots, t_n)$.

By joining \refnotation{term-intuitive-notation} and \refdfn{repl} we obtain 
$P' = \set{\epsilon} \cup \set{jq \setsthat q \in \Pos{t_j} \land j \neq i} \cup \set{iq' \setsthat q' \in \Pos{t_i} \land p \not\leq q'} \cup \set{ipq \setsthat q \in \Pos{u}}$. It is straightforward to verify that $P' = P''$; particularly, notice that $\Pos{\repl{t_i}{u}{p}} = \set{q' \setsthat q' \in \Pos{t_i} \land p \not\leq q'} \cup \set{pq \setsthat q \in \Pos{u}}$.

Let us compare $F'(p)$ and $F''(p)$, for any $p \in P' = P''$.
$F'(\epsilon) = F''(\epsilon) = f$. 
If $j \neq i$ then $ip \not\leq jq$, then $F'(jq) = F''(jq) = t_j(q)$. 
If $p \not\leq q'$, then $F'(iq') = F(iq') = t_i(q')$, and $F''(iq') = \repl{t_i}{u}{p}(q') = t_i(q')$.
Finally, if $q = pq'$, then $F'(iq) = u(q')$ and $F''(iq) = \repl{t_i}{u}{p}(pq') = u(q')$.
Thus we conclude.
\end{proof}

\begin{lemma}
\label{rsl:repl-ctx}
Let $t$ and $u$ be terms and $pq \in \Pos{t}$.
Then $\repl{t}{u}{pq} = \repl{t}{\repl{\subtat{t}{p}}{u}{q}}{p}$.
\end{lemma}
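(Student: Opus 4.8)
The plan is to give a direct, non-inductive proof in the same spirit as that of \reflem{subtat-composition}: name both sides, expand their tree domains and labelling functions using \refdfn{repl} and \refdfn{subtat}, and check that the two presentations coincide. Writing $\pair{P_L}{F_L} \eqdef \repl{t}{u}{pq}$ and $\pair{P_R}{F_R} \eqdef \repl{t}{\repl{\subtat{t}{p}}{u}{q}}{p}$, and abbreviating $v \eqdef \repl{\subtat{t}{p}}{u}{q}$, the goal splits into the two equalities $P_L = P_R$ and $F_L = F_R$.

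For the domains, \refdfn{repl} gives directly $P_L = \set{r \in \Pos{t} \setsthat pq \not\leq r} \cup \set{pqr' \setsthat r' \in \Pos{u}}$. On the right I would first expand $\Pos{v}$ using \refdfn{repl} together with $\subtat{P}{p} = \set{s \setsthat ps \in \Pos{t}}$ from \refdfn{subtat}, obtaining $\Pos{v} = \set{s \setsthat ps \in \Pos{t} \land q \not\leq s} \cup \set{qs' \setsthat s' \in \Pos{u}}$, and then feed this into the outer replacement to get $P_R = \set{r \in \Pos{t} \setsthat p \not\leq r} \cup \set{ps \setsthat ps \in \Pos{t} \land q \not\leq s} \cup \set{pqs' \setsthat s' \in \Pos{u}}$. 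The $\Pos{u}$-indexed pieces match immediately.

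The key step --- and the only place needing a genuine observation rather than unfolding --- is to show that $\set{r \in \Pos{t} \setsthat pq \not\leq r}$ equals $\set{r \in \Pos{t} \setsthat p \not\leq r} \cup \set{ps \setsthat ps \in \Pos{t} \land q \not\leq s}$. This rests on the cancellation property of concatenation for the prefix order, namely $pq \leq ps \textiff q \leq s$, which follows from \refdfn{pos-concatenation}: a position $r \in \Pos{t}$ either fails to have $p$ as a prefix, or is of the form $r = ps$, and in the latter case $pq \not\leq ps$ is equivalent to $q \not\leq s$. Splitting on these two cases yields the claimed set equality, establishing $P_L = P_R$.

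Finally, for $F_L = F_R$ I would compare the labels at an arbitrary $r \in P_L = P_R$ by the same case analysis. When $pq \not\leq r$ we have $F_L(r) = t(r)$; if moreover $p \not\leq r$ then $F_R(r) = t(r)$ directly, while if $r = ps$ with $q \not\leq s$ then $F_R(ps) = v(s) = \subtat{t}{p}(s) = t(ps)$, using that $v$ leaves positions with $q \not\leq s$ untouched. When $r = pqr'$ with $r' \in \Pos{u}$ we have $F_L(r) = u(r')$, and on the right $F_R(r) = v(qr') = u(r')$ since $q \leq qr'$. Thus the labelling functions agree everywhere, and we conclude that the two terms are identical. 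I expect no real obstacle beyond stating the concatenation-cancellation fact cleanly; everything else is a matter of bookkeeping the three-way split of the domain.
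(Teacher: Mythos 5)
Your proof is correct, but it takes a genuinely different route from the paper's. The paper proves Lem.~\ref{rsl:repl-ctx} by induction on the position $p$: the base case $p = \epsilon$ is immediate, and for $p = ip'$ it invokes Lem.~\ref{rsl:repl-homo} to push both replacements under the head symbol (together with $\subtat{t}{ip'} = \subtat{t_i}{p'}$), then closes with the inductive hypothesis on $p'$. You instead unfold both sides completely via Dfn.~\ref{dfn:repl} and Dfn.~\ref{dfn:subtat} and verify that the tree domains and labelling functions coincide, the only non-definitional ingredient being the cancellation law $pq \leq ps \textiff q \leq s$, which drives the three-way split of the domain. Your set computations and the label comparison in each of the three cases check out. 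What the paper's proof buys is brevity and reuse: the bookkeeping is delegated to the already-proved Lem.~\ref{rsl:repl-homo}, and the induction is on positions (finite sequences), hence unproblematic in the infinitary setting. What your proof buys is self-containedness and fidelity to the design philosophy the paper itself announces just before Lem.~\ref{rsl:subtat-composition} --- that representing terms as $\pair{P}{F}$ permits ``straightforward and direct (i.e.\ non-inductive)'' proofs; you avoid the dependency on Lem.~\ref{rsl:repl-homo} entirely, at the modest cost of stating the concatenation-cancellation fact (which itself hides only a trivial induction on the sequence $p$, not on terms). Both arguments are equally valid for infinite terms, since neither resorts to induction on term structure.
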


\begin{proof}
By induction on $p$.

If $p = \epsilon$, then both $\repl{t}{u}{pq}$ and $\repl{t}{\repl{\subtat{t}{p}}{u}{q}}{p}$ are equal to $\repl{t}{u}{q}$.

Assume that $p = i p'$, in this case $t = g(t_1, \ldots, t_n)$.
\refLem{repl-homo} implies that $\repl{t}{u}{pq} = \repl{t}{u}{ip'q} = g(t_1, \ldots, \repl{t_i}{u}{p'q}, \ldots t_n)$
and also 
$\repl{t}{\repl{\subtat{t}{p}}{u}{q}}{p} = \repl{t}{\repl{\subtat{t}{ip'}}{u}{q}}{ip'} = 
g(t_1, \ldots, \repl{t_i}{\repl{\subtat{t_i}{p'}}{u}{q}}{p'}, \ldots, t_n)$. We conclude by \ih\ on $p'$, $t_i$ and $u$.
\end{proof}

\begin{lemma}
\label{rsl:repl-disj}
Let $t,s$ be terms and $p,q \in \Pos{t}$ such that $p \disj q$.
Then $\subtat{(\repl{t}{s}{q})}{p} = \subtat{t}{p}$.
\end{lemma}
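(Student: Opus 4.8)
The plan is to compute both sides directly from \refdfn{subtat} and \refdfn{repl}, following the non-inductive style of the preceding lemmas, and then to compare the resulting tree domains and labelling functions. Writing $\repl{t}{s}{q} = \pair{P'}{F'}$ as in \refdfn{repl}, we have $\subtat{(\repl{t}{s}{q})}{p} = \pair{\subtat{P'}{p}}{\subtat{F'}{p}}$, where $\subtat{P'}{p} = \set{r \setsthat pr \in P'}$ and $\subtat{F'}{p}(r) = F'(pr)$; likewise $\subtat{t}{p} = \pair{\set{r \setsthat pr \in \Pos{t}}}{\subtat{F}{p}}$ with $\subtat{F}{p}(r) = t(pr)$. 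Thus it suffices to establish the set equality $\set{r \setsthat pr \in P'} = \set{r \setsthat pr \in \Pos{t}}$ together with the pointwise equality $F'(pr) = t(pr)$ for every $r$ in that common domain.

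The crux of the argument is the claim that $p \disj q$ entails $q \not\leq pr$ for every position $r$. Indeed, if we had $q \leq pr$, then both $p$ and $q$ would be prefixes of the single sequence $pr$; but the prefixes of a fixed sequence are linearly ordered by the prefix relation, so $p$ and $q$ would be comparable, contradicting $p \disj q$. I expect this to be the main (and essentially the only) obstacle: once it is in place the rest is bookkeeping. If a statement of this form about positions is available elsewhere in the development it could be invoked directly; otherwise I would isolate it as a one-line auxiliary observation.

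Granting the claim, the set $\set{qw \setsthat w \in \Pos{s}}$ appearing in $P'$ contributes nothing to $\subtat{P'}{p}$, since $pr = qw$ would force $q \leq pr$. Hence membership $pr \in P'$ reduces to $pr \in \Pos{t}$ together with $q \not\leq pr$, and the latter holds automatically, giving $\subtat{P'}{p} = \set{r \setsthat pr \in \Pos{t}}$. For the labelling, whenever $pr \in \Pos{t}$ we again have $q \not\leq pr$, so the first clause of $F'$ applies and yields $F'(pr) = t(pr) = \subtat{F}{p}(r)$. Matching domains and labels, we conclude $\subtat{(\repl{t}{s}{q})}{p} = \subtat{t}{p}$. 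As a final sanity check, the hypotheses $p,q \in \Pos{t}$ ensure both sides are well-defined: from $p \disj q$ we get $q \not\leq p$, so $p \in P' = \Pos{\repl{t}{s}{q}}$, making the left-hand subterm legitimate.
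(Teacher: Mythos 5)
Your proof is correct and follows essentially the same route as the paper's own: both reduce the lemma to comparing tree domains and labelling functions via Dfn.~\ref{dfn:subtat} and Dfn.~\ref{dfn:repl}, with the whole argument resting on the observation that $p \disj q$ forces $q \not\leq pr$ for every $r$ (which you prove via linearity of prefixes, whereas the paper merely asserts it as $pp' \disj q$). The only cosmetic difference is that the paper organises the domain equality as an explicit double inclusion while you argue it as a single equivalence; the content is identical.
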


\begin{proof}
Say $t = \pair{P}{F}$, $\repl{t}{s}{q} = \pair{P'}{F'}$, $\subtat{t}{p} = \pair{P_p}{F_p}$, and $\subtat{(\repl{t}{s}{q})}{p} = \pair{P'_p}{F'_p}$.
We prove $P_p = P'_p$ by double inclusion.

\begin{tabular}{lp{.86\textwidth}}
$\subseteq )$ &
Let $p' \in P_p$, so that $p p' \in P$. Observe that $p \disj q$ implies $p p' \disj q$, so that $q \not\leq p p'$, implying $p p' \in P'$, and therefore $p' \in P'_p$. \\
$\supseteq )$ &
Let $p' \in P'_p$, so that $p p' \in P'$. We have already verified $q \not\leq p p'$, so that the only valid option \wrt\ Dfn.~\ref{dfn:repl} is $p p' \in P$, implying $p' \in P_p$.
\end{tabular}

\smallskip
Let $p' \in P'_p = P_p$, so that $p p' \in P \cap P'$ and $q \not\leq p p'$.
Dfn.~\ref{dfn:subtat} implies $F'_p(p') = F'(p p')$ and $F_p(p') = F(p p')$.
In turn, Dfn.~\ref{dfn:repl} yields $F'(p p') = F(p p')$, since $q \not\leq p p'$.
Consequently $F_p = F'_p$. Thus we conclude.
\end{proof}

\subsection{Contexts}

\begin{definition}[Context, one-hole context]
\label{dfn:ctx}
A \emph{context} over $\Sigma$ is a term (either finite or infinite) over $\Sigma \cup \set{\Box/0}$. 
A \emph{one-hole context} is a context in which the symbol $\Box$ occurs exactly once.
\end{definition}

\begin{definition}[Position of a variable/hole in a linear term/context]
\label{dfn:vpos}
Let $t$ be a term. Then we define ${\tt VOccs}(t) \eqdef \set{p \setsthat t(p) \in \thevar}$. 
Given a term $t$, if $| {\tt VOccs}(t) | = n \in \Nat$, then for any $i$ such that $1 \leq i \leq n$ we define $\VPos{t}{i}$, the $i$-th variable occurrence in $t$, as the $i$-th element of the set ${\tt VOccs}(t)$, considering the order given by $p < q$ iff $\posln{p} < \posln{q}$ or $\posln{p} = \posln{q}$, $p = rip'$, $q = rjq'$, $i < j \ $%
\footnote{orderings among positions will be studied in the analysis of different standard concepts}.

Analogously, if $C$ is a context including a finite number of occurrences of the box, then we define $\BPos{C}{i}$ as the $i$-th element of $\Occs{\Box}{C}$, considering the order just described.
\end{definition}

\begin{definition}[Context replacement]
\label{dfn:ctx-repl}
Let $C$ be a context including exactly $n$ occurrences of the box, and $t_1, \ldots, t_n$ terms. 
We define the replacement of $C$ using $t_1, \ldots, t_n$ 
as $C[t_1, \ldots, t_n] \eqdef \pair{P}{F}$, where \\
$P \eqdef \set{p \in \Pos{C} \setsthat C(p) \neq \Box} \cup \bigcup_{i} \set{\BPos{C}{i} \cdot p \setsthat p \in \Pos{t_i}}$, \\ and 
$F'(p) \eqdef
\left\{
\begin{array}{rcl}	
C(p) & \textiff & C(p) \neq \Box \\
t_i(q) & \textiff & p = \BPos{C}{i} \cdot q
\end{array}
\right.
$
\end{definition}

We remark that, given $\BPos{C}{i} \disj \BPos{C}{j}$ if $i \neq j$, and that $\repl{\repl{t}{u_1}{p}}{u_2}{q} = \repl{\repl{t}{u_2}{q}}{u_1}{p}$ if $p \disj q$, it should be possible to prove that \\
$C[t_1, \ldots, t_n] = 
\repl{
  \repl{
    \repl{C \,}{t_1}{\BPos{C}{1}\,}
    }{t_2}{\BPos{C}{2}} \ldots
   }{t_n}{\BPos{C}{n}}$. We leave the verification of this conjecture as future work.

\medskip
It is easy to verify an expected result about context replacement, namely:
\begin{lemma}
\label{rsl:ctx-repl-composition}
$\subtat{C[t_1, \ldots, t_n]}{\BPos{C}{i} \cdot p} = \subtat{t_i}{p}$
\end{lemma}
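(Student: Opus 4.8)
The plan is to reduce the statement to the special case $p = \epsilon$, namely to the identity $\subtat{C[t_1,\ldots,t_n]}{\BPos{C}{i}} = t_i$, and then to invoke \reflem{subtat-composition}. Indeed, writing $D \eqdef C[t_1, \ldots, t_n]$ and instantiating \reflem{subtat-composition} with the positions $\BPos{C}{i}$ and $p$, we obtain $\subtat{D}{\BPos{C}{i} \cdot p} = \subtat{(\subtat{D}{\BPos{C}{i}})}{p}$. Once $\subtat{D}{\BPos{C}{i}} = t_i$ has been established, the right-hand side is exactly $\subtat{t_i}{p}$, and the lemma follows. So the whole content lies in the case $p = \epsilon$.

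To prove $\subtat{D}{\BPos{C}{i}} = t_i$, I would write $D = \pair{P}{F}$ as in \refdfn{ctx-repl} and unfold \refdfn{subtat}: the subterm $\subtat{D}{\BPos{C}{i}}$ is the pair whose set of positions is $\set{q \setsthat \BPos{C}{i} \cdot q \in P}$ and whose labelling sends $q$ to $F(\BPos{C}{i} \cdot q)$. Hence it suffices to verify two facts: (a) $\set{q \setsthat \BPos{C}{i} \cdot q \in P} = \Pos{t_i}$, and (b) $F(\BPos{C}{i} \cdot q) = t_i(q)$ for every $q \in \Pos{t_i}$.

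For both (a) and (b) the key observation is that $\BPos{C}{i}$ is a \emph{leaf} of $C$: since $C(\BPos{C}{i}) = \Box$ and $\arity{\Box} = 0$, \refdfn{term} forbids any proper extension of $\BPos{C}{i}$ from lying in $\Pos{C}$. Consequently no element of the first component $\set{p \in \Pos{C} \setsthat C(p) \neq \Box}$ of $P$ has the form $\BPos{C}{i} \cdot q$, so the positions of $P$ below $\BPos{C}{i}$ can only come from the second component, \ie\ from some $\BPos{C}{j} \cdot q'$ with $q' \in \Pos{t_j}$. Here I would use the disjointness remarked after \refdfn{ctx-repl}, namely $\BPos{C}{i} \disj \BPos{C}{j}$ whenever $i \neq j$, to discard the case $j \neq i$: two prefixes of one and the same position are comparable in the prefix order, contradicting disjointness. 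Thus $\BPos{C}{i} \cdot q \in P$ forces a decomposition through the $i$-th hole, giving (a) by cancellation of the common prefix $\BPos{C}{i}$ in the concatenation, while (b) is then read off directly from the second clause of the definition of $F$.

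The step I expect to be most delicate is precisely this bookkeeping around well-definedness: one must ensure that a position sitting below the $i$-th hole cannot be reached either through a symbol of $C$ itself or through a different hole $j \neq i$, for otherwise the case split defining $F$ would be ambiguous at such positions. Once the leaf property of $\Box$ and the pairwise disjointness of the hole positions are in hand, the two inclusions of (a) and the equality (b) are routine, and the final reduction through \reflem{subtat-composition} is immediate.
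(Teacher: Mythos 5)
Your proof is correct and is, in essence, the detailed version of what the paper dismisses as ``Immediate from Dfn.~\ref{dfn:ctx-repl}'': the same unfolding of the definitions, where the leaf property of $\Box$ and the pairwise disjointness of hole positions supply exactly the bookkeeping (uniqueness of the decomposition through the $i$-th hole) that the paper elides. The preliminary reduction to the case $p = \epsilon$ via Lem.~\ref{rsl:subtat-composition} is a nice structuring device but does not change the substance of the argument.
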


\begin{proof}
Immediate from Dfn.~\ref{dfn:ctx-repl}.
\end{proof}

\subsection{Distance between terms}

In this section, the notion of distance between terms to be used in this work, and the corresponding definition of limit of an infinite sequence of (possibly infinite) terms, are introduced.

\begin{definition}[Distance between terms, \confer\ \cite{terese} p. 670]
\label{dfn:distance}
Let $t,u$ be terms. We define the \emph{distance} between $t$ and $u$, notation $\tdist{t}{u}$, as follows: \\
\begin{tabular}{@{$\quad\bullet\quad$}p{.9\textwidth}}
$0$ iff $t = u$, and \\
$2^{-k}$ otherwise, where $k$ is the length of the shortest position at which the two terms differ; \ie\ 
$k = \posln{p} \sthat p$ is minimal for $p \in \Pos{t} \cup \Pos{u}$ and $t(p) \neq u(p)$.
\end{tabular}
\end{definition}

This definition of distance implies that, for any $t$, $u$ terms, obtaining $\tdist{t}{u} < 2^{-k}$ for all $k < \omega$ is a sufficient condition to conclude $t = u$.
In turn, to check $\tdist{t}{u} < 2^{-k}$ it is enough to verify, for any position $p$, that $\posln{p} \leq k$ and $p \in \Pos{t} \cup \Pos{u}$ entails $p \in \Pos{t} \cap \Pos{u}$ and $t(p) = t(u)$.

\begin{definition}[Limit of a sequence of terms]
\label{dfn:limit-terms}
Let $<t_i>_{i < \alpha}$ a sequence of terms where $\alpha$ is a countable limit ordinal. 
We say that the sequence $<t_i>$ has the term $t$ as its limit (notation $\lim_{i \to \alpha} t_i = t$) iff the following limit condition holds: for any $p \in \Nat$ there exists $k_p < \alpha$ such that for all $j$ satisfying $k_p < j < \alpha$, $\tdist{t_j}{t} < 2^{-p}$.
\end{definition}

Since the set of infinitary terms is proven to be equal to the metric completion of $Ter(\Sigma)$ w.r.t. the metric given by \refdfn{distance} (so it is trivially metric-complete \wrt\ that metric), given this definition of limit, if a sequence has limit then it is Cauchy-convergent \wrt\ distance.

\medskip
We observe that the set $\iSigmaTerms$ for a given signature $\Sigma$, along with the distance given in Dfn.~\ref{dfn:distance}, form an \emph{ultrametric} space%
\footnote{references here?}. 
Formally:

\begin{lemma}
\label{rsl:tdist-is-ultrametric}
Let $t, u, w$ be terms. Then $\tdist{t}{w} \leq max(\tdist{t}{u}, \tdist{u}{w})$.
\end{lemma}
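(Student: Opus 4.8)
The plan is to reduce the ultrametric inequality to the transitivity of the relation ``$t$ and $u$ coincide up to depth $d$'', which is exactly the relation isolated by the observation following \refdfn{distance}. Concretely, for $d \in \Nat$ write $t \approx_d u$ to mean that every position $p$ with $\posln{p} \leq d$ satisfying $p \in \Pos{t} \cup \Pos{u}$ in fact satisfies $p \in \Pos{t} \cap \Pos{u}$ and $t(p) = u(p)$. By that observation (and by \refdfn{distance} itself for the converse), $t \approx_d u$ holds if and only if $\tdist{t}{u} < 2^{-d}$.

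First I would record that $\approx_d$ is transitive; this is the only part that matters here, and it is immediate and purely position-wise. If $t \approx_d u$ and $u \approx_d w$, then for any $p$ with $\posln{p} \leq d$ and $p \in \Pos{t} \cup \Pos{w}$ I chase the biconditionals $p \in \Pos{t} \Leftrightarrow p \in \Pos{u} \Leftrightarrow p \in \Pos{w}$ to land $p$ in all three domains, and then $t(p) = u(p) = w(p)$ gives $t(p) = w(p)$; hence $t \approx_d w$. The care needed is precisely in keeping the domain condition on $\Pos{\cdot}$ and the symbol condition on $t(\cdot)$ bundled together, since ``differ at $p$'' in \refdfn{distance} also covers the case where $p$ lies in one tree domain but not the other.

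Then I would derive the stated inequality. If $t = w$ the claim is trivial, as $\tdist{t}{w} = 0$, so assume $t \neq w$ and set $m \eqdef \max(\tdist{t}{u}, \tdist{u}{w})$. It suffices to prove, for every $d$, that $m < 2^{-d}$ implies $\tdist{t}{w} < 2^{-d}$: since the values of $\tdist{\cdot}{\cdot}$ are $0$ and the powers $2^{-n}$, were $\tdist{t}{w} > m$ one could take $d = \posln{p}$ for the shortest position $p$ at which $t$ and $w$ differ, giving $\tdist{t}{w} = 2^{-d}$ and $m < 2^{-d}$, whence the implication would force $\tdist{t}{w} < 2^{-d}$, a contradiction. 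Now $m < 2^{-d}$ unfolds to $\tdist{t}{u} < 2^{-d}$ and $\tdist{u}{w} < 2^{-d}$, that is $t \approx_d u$ and $u \approx_d w$; transitivity yields $t \approx_d w$, and the characterisation returns $\tdist{t}{w} < 2^{-d}$, as required.

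I expect no serious obstacle: this is essentially the textbook argument that a distance built from the ``length of the first disagreement'' is an ultrametric. The one point demanding attention is the bookkeeping at the boundary depth $d$, together with the convention that two terms already ``differ'' at a position belonging to just one of the two tree domains; handling this uniformly is exactly what makes packaging everything through $\approx_d$ cleaner than juggling the three distances $\tdist{t}{u}$, $\tdist{u}{w}$, $\tdist{t}{w}$ directly.
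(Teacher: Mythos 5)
Your proposal is correct and follows essentially the same route as the paper's proof: both rest on the observation that $\tdist{t}{u} < 2^{-d}$ means coincidence of positions and symbols up to depth $d$, and that this coincidence is transitive, yielding $\tdist{t}{w} \leq max(\tdist{t}{u},\tdist{u}{w})$. Your packaging via the relation $\approx_d$ and the concluding contradiction argument is just a more systematic bookkeeping of what the paper does directly by case analysis on $k$ with $max(\tdist{t}{u},\tdist{u}{w}) = 2^{-k}$.
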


\begin{proof}
If $t = u = w$, then all distances are $0$.
Oteherwise, we analyse $k$ where $max(\tdist{t}{u}, \tdist{u}{w}) = 2^{-k}$.
If $k = 0$ we conclude immediately isince the distance between any pair of terms cannot be more than one.
Assume $k = k' + 1$. Then $\tdist{t}{u} < 2^{-k'}$, implying that for any position $p$ such that $\posln{p} \leq k'$, it is easy to verify that $p \in \Pos{t}$ iff $p \in \Pos{u}$, and moreover, $p \in \Pos{t}$ implies $t(p) = u(p)$.
On the other hand, the same properties hold for $u$ \wrt\ $w$, since $\tdist{u}{w} < 2^{-k'}$.
Hence $\tdist{t}{w} \leq 2^{-k}$, thus we conclude.
\end{proof}

The distance between a term and the result of a replacement on that term is limited by the depth of the position corresponding to the replacement. Namely:

\begin{lemma}
\label{rsl:repl-dist}
Let $t,s$ be terms and $p \in \Pos{t}$. Then $\tdist{t}{\repl{t}{s}{p}} \leq 2^{-\posln{p}}$.
\end{lemma}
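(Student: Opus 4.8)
The plan is to establish the bound by showing that $t$ and $\repl{t}{s}{p}$ cannot differ at any position shallower than $p$ itself. By \refdfn{distance} it suffices to prove that the two terms agree at every position of depth strictly less than $\posln{p}$: granting this, the shortest position at which they differ, should one exist, has depth at least $\posln{p}$, whence $\tdist{t}{\repl{t}{s}{p}} \leq 2^{-\posln{p}}$; and if no such position exists then $t = \repl{t}{s}{p}$ and the distance is $0$, which also respects the bound.

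First I would record the elementary order-theoretic observation that any position $q$ with $\posln{q} < \posln{p}$ satisfies $p \not\leq q$, since $p \leq q$ would force $\posln{p} \leq \posln{q}$. Writing $\repl{t}{s}{p} = \pair{P'}{F'}$ as in \refdfn{repl}, I then fix such a shallow $q$ and check the two components of ``agreement'': that $q \in \Pos{t} \textiff q \in P'$, and that $t(q) = F'(q)$ whenever $q$ belongs to both.

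For the tree domains, I observe that $q$ cannot lie in the branch $\set{pr \setsthat r \in \Pos{s}}$ of $P'$, because every position there has depth at least $\posln{p} > \posln{q}$. Hence $q \in P'$ reduces to $q \in \set{r \in \Pos{t} \setsthat p \not\leq r}$, and since $p \not\leq q$ already holds for our shallow $q$, this is simply $q \in \Pos{t}$. For the symbols, the same condition $p \not\leq q$ selects the first clause in the definition of $F'$, giving $F'(q) = t(q)$ directly.

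I do not anticipate a genuine obstacle: the argument is a straightforward unfolding of \refdfn{repl} together with the depth observation. The only points deserving care are to state the conclusion as a non-strict inequality --- the terms may legitimately first differ at a position of depth exactly $\posln{p}$ --- and to fold the degenerate case $t = \repl{t}{s}{p}$ into the same conclusion.
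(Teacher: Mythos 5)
Your proof is correct, but it takes a genuinely different route from the paper's. The paper proceeds by induction on the position $p$: for $p = \epsilon$ the bound $2^0 = 1$ is trivial, and for $p = ip'$ it writes $t = f(t_1, \ldots, t_i, \ldots, t_m)$, invokes Lem.~\ref{rsl:repl-homo} to push the replacement into the $i$-th argument, uses the scaling law $\tdist{t}{\repl{t}{s}{p}} = \frac{1}{2} \cdot \tdist{t_i}{\repl{t_i}{s}{p'}}$, and closes with the inductive hypothesis. Your argument is instead direct and non-inductive: you unfold \refdfn{repl} to show that $t$ and $\repl{t}{s}{p}$ agree (both in tree-domain membership and in symbols) at every position of depth strictly below $\posln{p}$, so the shortest differing position, if any, has depth at least $\posln{p}$, and \refdfn{distance} gives the bound. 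Your version is self-contained --- it needs neither Lem.~\ref{rsl:repl-homo} nor the scaling property of the distance under a function symbol, which the paper's proof uses without explicit justification --- and it is very much in the spirit of the paper's own remark (after \refdfn{subtat}) that the explicit positions-plus-function representation of terms permits direct proofs of this kind. The paper's inductive version, on the other hand, reuses machinery already established and keeps the computation local to one argument at a time. Both are sound; yours is arguably the more economical here, and you correctly flag the two delicate points: the inequality must be non-strict (the terms may first differ at depth exactly $\posln{p}$), and the degenerate case $t = \repl{t}{s}{p}$ is absorbed by distance $0$.
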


\begin{proof}
We proceed by induction on $p$.
If $p = \epsilon$ then we conclude immediately since $\tdist{t}{u} \leq 2^0 = 1$ for any term $u$.
Otherwise, \ie\ if $p = i p'$, observe that $i p' \in \Pos{t}$ implies $t = f(t_1, \ldots, t_i, \ldots, t_m)$.
Then $\repl{t}{s}{p} = f(t_1, \ldots, \repl{t_i}{s}{p'}, \ldots, t_m)$, \confer\ Lem.~\ref{rsl:repl-homo}, implying $\tdist{t}{\repl{t}{s}{p}} = \frac{1}{2} * \tdist{t_i}{\repl{t_i}{s}{p'}}$.
In turn, \ih\ yields $\tdist{t_i}{\repl{t_i}{s}{p'}} \leq 2^{-\posln{p'}}$.
Therefore, easy exponent arithmetics recalling $\posln{p} = \posln{p'} + 1$ suffices to conclude.
\end{proof}

\subsection{Substitutions}

\begin{definition}[Substitution]
\label{dfn:substitution}
Given a set of variables $\thevar$ and a signature $\Sigma$, a \emph{substitution} is a function $\sigma : \thevar \to \sinfterms$ where $\sigma(x) = x$ except for a finite subset of $\thevar$%
\footnote{Even when removing the finite support condition is not needed so far, I wonder whether something is broken if we consider arbitrary substitutions, allowing those with infinite support as well. -- Carlos May 25th, 2013.}
.

Any substitution is extended into a function, bearing the same name $\sigma$, where $\sigma: \sinfterms \to \sinfterms$, defined as follows:
$\sigma t \eqdef \pair{P}{F}$ where \\
$P = \set{p \in \Pos{t} \setsthat t(p) \notin \thevar} \cup \set{pq \setsthat t(p) = x \in \thevar \land q \in \Pos{\sigma x}}$ and \\
$F(p) = 
\left\{
\begin{array}{rcl}
	t(p) & \textiff & p \in \Pos{t} \land t(p) \notin \thevar \\
	\sigma x (q') & \textiff & p = q q' \land t(q) = x \in \thevar
\end{array}
\right.$
\end{definition}

\subsubsection{Uniqueness of the extension to terms}
For finitary terms, the extension of the domain of a substitution from variables to terms can be defined by relying to the concept of \emph{$\Sigma$-algebra}; \confer\ \cite{trat} Chapter 3.

Given a signature $\Sigma$, we can define a $\Sigma$-algebra whose carrier set is $Ter(\Sigma,\thevar)$, which we will denote by $Ter(\Sigma,\thevar)$ as well. For any $f/n \in \Sigma$, the corresponding function is defined simply as follows: \\
\minicenter{$f^{Ter(\Sigma,\thevar)}(t_1, \ldots, t_n) \eqdef f(t_1, \ldots, t_n)$} \\ 
(\confer\ \refprop{term-then-intuitive-notation}).
Moreover, this $\Sigma$-algebra is \emph{generated} by $\thevar$, \confer \cite{trat} dfn. 3.2.2.

A similar $\Sigma$-algebra can be defined having \sinfterms\ as carrier set. On the other hand, \sinfterms\ considered as a $\Sigma$-algebra is not generated by $\thevar$; notice that the $\Sigma$-subalgebra generated by $\thevar$ for \sinfterms\ is exactly $Ter(\Sigma,\thevar)$.

\medskip
The following result relates substitutions with the $\Sigma$-algebra \sinfterms\ in an expected way. 
In the sequel, we will distinguish between the two functions introduced in \refdfn{substitution}.
We will use $\sigma$ for the function whose domain is the set of variables, and $\wid{\sigma}$ for the function whose domain is the set of terms.

\begin{lemma}
Let $\wid{\sigma}$ be a substitution on terms. 
Then $\wid{\sigma}$ is an endomorphism on \sinfterms\ which extends the corresponding $\sigma$ defined on variables.
\end{lemma}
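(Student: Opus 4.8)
The plan is to establish the two assertions separately: that $\wid{\sigma}$ agrees with $\sigma$ on variables, and that $\wid{\sigma}$ commutes with every function symbol, \ie\ $\wid{\sigma}(f(t_1, \ldots, t_n)) = f(\wid{\sigma}(t_1), \ldots, \wid{\sigma}(t_n))$. Both will be proved by directly unfolding \refdfn{substitution} and comparing the resulting pairs $\pair{P}{F}$ position by position, in the same non-inductive style used for the earlier basic results on subterms and replacement; the case distinction is justified by \refprop{term-then-intuitive-notation}, which guarantees that any $t \in \sinfterms$ is either a variable $x$ or of the form $f(t_1, \ldots, t_n)$. Note that structural induction is unavailable here because terms may be infinite, so each identity must be settled by a single direct comparison of position sets and labelling functions rather than by recursion on $t$.

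For the extension claim I would take $t = x \in \thevar$. Since $x(\epsilon) = x \in \thevar$, the first component of the position set in \refdfn{substitution} is empty, and the second collapses to $\set{\epsilon q \setsthat q \in \Pos{\sigma x}} = \Pos{\sigma x}$; likewise $F$ reduces to $F(p) = \sigma x(p)$. Hence $\wid{\sigma}(x) = \sigma x$, as required.

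For the homomorphism claim I would take $t = f(t_1, \ldots, t_n)$ and compute both sides. Writing $\Pos{t}$ and $t$ via \refnotation{term-intuitive-notation}, every non-root position of $t$ has the form $ip$ with $t(ip) = t_i(p)$. The key observation is that the root carries $f \notin \thevar$, so any position $pq$ appearing in the variable component of the definition has $p \neq \epsilon$, whence $p = ip'$ with $t_i(p') = x \in \thevar$. Factoring the leading index $i$ out of both components then shows that the set of non-root positions of $\wid{\sigma}(f(t_1, \ldots, t_n))$ is exactly $\bigcup_i i \cdot \Pos{\wid{\sigma}(t_i)}$, which is precisely the position set of $f(\wid{\sigma}(t_1), \ldots, \wid{\sigma}(t_n))$. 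A matching case analysis on $F$ --- root $\mapsto f$; a position $ip$ carrying a non-variable $\mapsto t_i(p)$; a position $ip'q$ below a variable occurrence $\mapsto \sigma x(q)$ --- shows the two labelling functions coincide as well.

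I expect the main obstacle to be the bookkeeping around this position decomposition: the definition splits $\Pos{\wid{\sigma}(t)}$ into a \emph{kept} part and a \emph{substituted} part, and one must check that stripping the leading index $i$ sends these two parts of the whole term to the corresponding two parts of each argument $\wid{\sigma}(t_i)$, with no position being double-counted or lost. A secondary point, if one wants the statement in full, is to confirm that $\wid{\sigma}(t)$ is genuinely a term (its $P$ a tree domain, its $F$ respecting arities) so that $\wid{\sigma}$ really maps into $\sinfterms$; since induction is not available this too must be verified directly from the tree-domain and arity conditions underlying \refdfn{substitution}.
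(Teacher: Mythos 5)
Your proposal is correct and follows essentially the same route as the paper: both unfold \refdfn{substitution} on $f(t_1,\ldots,t_n)$, factor the leading argument index out of the kept and substituted parts of the position set, and conclude by direct comparison of the two $\pair{P}{F}$ pairs, invoking \refprop{term-then-intuitive-notation} and avoiding induction. The only difference is that you spell out the variable case and the well-formedness check explicitly, which the paper treats as immediate.
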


\begin{proof}
It is enough to show that $\wid{\sigma}(f(t_1, \ldots, t_n)) = f(\wid{\sigma}(t_1), \ldots, \wid{\sigma}(t_n)))$; \confer\ \refprop{term-then-intuitive-notation}; let us call these terms $t' = \pair{P'}{F'}$ and $t'' = \pair{P''}{F''}$ respectively.

By applying notation~\ref{dfn:term-intuitive-notation} and \refdfn{substitution}, we obtain \\[2pt]
$\begin{array}{rcl@{}l}
P' & = & 
	\set{\epsilon} \cup 
   \ \bigcup_i \ (&
     \set{ip \setsthat p \in \Pos{t_i} \land t_i(p) \notin \thevar} \ \cup 
     \\ & & & 
     \set{ipq \setsthat t_i(p) = x \in \thevar \land q \in \Pos{\sigma x}})
     \\
F'(\epsilon) & = & f \\ 
F'(ip) & = & t_i(p) & \textif p \in \Pos{t_i} \land t_i(p) \notin \thevar \\
F'(ipq) & = & \sigma x(q) & \textif t_i(p) = x \in \thevar \land q \in \Pos{\sigma x}
\end{array}
$ \\[2pt]
An analogous analysis for $P''$ and $F''$ is enough to conclude.
\end{proof}

Nonetheless, we cannot use the result on uniqueness of homomorphisms on generated $\Sigma$-algebras given the values for the generator set (\confer\ \cite{trat} lemma 3.3.1) to assert that $\wid{\sigma}$ is the only endomorphism on \sinfterms\ which extends $\sigma$. 
The reason is that \sinfterms\ is not generated by $\thevar$.

Fortunately, an analogous uniqueness result can be proved for endomorphisms on \sinfterms.

\begin{proposition}
\label{rsl:endomorphism-uniqueness-for-infinitary-terms}
Let $\Sigma$ be a signature, and $\phi, \psi$ two endomorphisms on the $\Sigma$-algebra \sinfterms\ which coincide on $\thevar$.
Then $\phi = \psi$.
\end{proposition}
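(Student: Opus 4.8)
The plan is to exploit the metric structure of \sinfterms, since the purely algebraic argument based on generation by \thevar\ is not available here. The crucial observation is recorded in the remark following \refdfn{distance}: a term is completely determined by its restrictions to positions of each finite depth, so that $\tdist{u}{v} < 2^{-k}$ for every $k < \omega$ forces $u = v$. Accordingly, I would fix an arbitrary term $t \in \sinfterms$ and reduce the goal $\phi(t) = \psi(t)$ to showing, for every $k < \omega$, that $\phi(t)$ and $\psi(t)$ coincide on all positions of depth at most $k$; by the cited remark this yields $\tdist{\phi(t)}{\psi(t)} < 2^{-k}$ for all $k$, hence $\phi(t) = \psi(t)$, and since $t$ is arbitrary, $\phi = \psi$.

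The heart of the argument is the following claim, proved by induction on $k$ with $t$ universally quantified: for every term $t$ and every position $p$ with $\posln{p} \leq k$, we have $p \in \Pos{\phi(t)}$ iff $p \in \Pos{\psi(t)}$, and in that case $\phi(t)(p) = \psi(t)(p)$. The case analysis on the shape of $t$ relies on \refprop{term-then-intuitive-notation}. If $t = x \in \thevar$, then $\phi(t) = \phi(x) = \psi(x) = \psi(t)$ by hypothesis, so agreement holds at every depth, settling both the base case and the inductive step for this shape. If $t = f(t_1, \ldots, t_n)$, then, because $\phi$ and $\psi$ are endomorphisms of the $\Sigma$-algebra \sinfterms, we have $\phi(t) = f(\phi(t_1), \ldots, \phi(t_n))$ and $\psi(t) = f(\psi(t_1), \ldots, \psi(t_n))$. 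For $k = 0$ only $p = \epsilon$ is at stake, and both heads equal $f$. For the step from $k$ to $k+1$, the position $\epsilon$ is handled as before, while any nonempty $p$ with $\posln{p} \leq k+1$ has the form $p = i p'$ with $\posln{p'} \leq k$, pointing into the $i$-th argument; applying the \ih\ to $t_i$ (whose relevant positions have depth at most $k$) gives agreement of $\phi(t_i)$ and $\psi(t_i)$ on $p'$, and hence of $\phi(t)$ and $\psi(t)$ on $ip'$.

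The conceptual hurdle, rather than any single calculation, is precisely the one flagged in the surrounding text: we must replace the uniqueness-on-generated-algebras principle by a continuity argument in the metric, so the real content lies in the layer-peeling of the inductive step, which is powered entirely by the homomorphism equations $\phi(f(\vec{t})) = f(\phi(\vec{t}))$ and $\psi(f(\vec{t})) = f(\psi(\vec{t}))$. I expect the only delicate points to be bookkeeping: keeping the induction universally quantified over $t$ so that the \ih\ genuinely applies to the subterms $t_i$, and checking the depth-versus-distance correspondence so that ``agreement up to depth $k$'' is exactly what the remark after \refdfn{distance} consumes to deliver $\tdist{\phi(t)}{\psi(t)} < 2^{-k}$.
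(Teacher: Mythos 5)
Your proposal is correct and follows essentially the same route as the paper's own proof: an induction on the depth bound $k$, universally quantified over terms, with the case analysis on the shape of $t$ via \refprop{term-then-intuitive-notation}, the homomorphism equations to push $\phi$ and $\psi$ under the head symbol, and the depth-versus-distance remark after \refdfn{distance} to conclude $\phi(t) = \psi(t)$. Your formulation of the inductive claim (including the ``$p \in \Pos{\phi(t)}$ iff $p \in \Pos{\psi(t)}$'' clause) is in fact slightly more explicit than the paper's about domain agreement, which is exactly what the cited remark consumes, but this is a refinement of the same argument, not a different one.
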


\begin{proof}
We will prove the following statement, which entails the desired result (\ie\ that for any term $t$, $\psi(t) = \phi(t)$): 
for any $k < \omega$, given a term $t$ and a position $p$ such that $\posln{p} \leq k$ and $p \in \Pos{\psi(t)} \cup \Pos{\phi(t)}$, then $\psi(t)(p) = \phi(t)(p)$.
\Confer\ comment following Dfn.~\ref{dfn:distance}.

We proceed by induction on $k$. There is one case which does not need to resort to the inductive argument:
if $t \in \thevar$, then $\psi(t) = \phi(t)$ since hypotheses assert that these functions coincide on $\thevar$.

Thus assume $t = f(t_1, \ldots, t_m)$; \confer\ Prop.~\ref{rsl:term-then-intuitive-notation}.
In this case hypotheses entail $\psi(t) = f(\psi(t_1), \ldots, \psi(t_m))$ and $\phi(t) = f(\phi(t_1), \ldots, \phi(t_m))$.
If $k = 0$, then $\posln{p} \leq k$ implies $p = \epsilon$, hence it is enough to observe that $\psi(t)(\epsilon) = \phi(t)(\epsilon) = f$.
Assume $k = k' + 1$. If $\posln{p} \leq k'$ then applying \ih\ on $k'$ \wrt\ $t$ and $q$ suffices to conclude. If $\posln{p} = k$, then $p = iq$ (recall $k > 0$) where $\posln{q} = k'$ and $q \in \Pos{\psi(t_i)} \cup \Pos{\phi(t_i)}$.
Therefore we can apply \ih\ on $k'$ \wrt\ $t_i$ and $q$, obtaining $\psi(t_i)(q) = \phi(t_i)(q)$. Thus we conclude by observing $\psi(t)(p) = \psi(t_i)(q)$ and analogously for $\phi$.
\end{proof}

%
%

Consequently, we can assert that $\wid{\sigma}$ is the only endomorphism on \sinfterms\ which extends $\sigma$, as desired.

\subsection{Term rewriting systems}
\label{sec:trs}

\begin{definition}[Reduction rule, term rewriting system]
\label{dfn:trs}
Assuming a set of variables $\thevar$ and given a signature $\Sigma$, a \emph{reduction rule} (just \emph{rule} if no confusion arises) over $\Sigma$ is a pair of terms $\pair{l}{r}$ satisfying the following conditions: $l$ is a finite term, $l \notin \thevar$, and each variable occurring in $r$ occurs also in $l$. Notation for a reduction rule: $l \to r$, also $\mu: l \to r$ if assigning explicit names to rules is desirable.
The terms $l$ and $r$, respectively, are the \emph{left-hand side} and \emph{right-hand} side, \emph{lhs} and \emph{rhs} for short, of the rule $l \to r$.

A \emph{term rewriting system} (shorthand TRS) is a pair $T = \pair{\Sigma}{R}$, where $\Sigma$ is a signature and $R$ is a set of rules over $\Sigma$.

If the right-hand sides of all the rules are finite terms, then $T$ can be considered as a TRS over either $Ter(\Sigma)$ or $Ter^\infty(\Sigma)$; otherwise, only the infinitary interpretation is valid.
In either case, a TRS over $Ter^\infty(\Sigma)$ is known as a \emph{infinitary TRS}, or iTRS for short.
\end{definition}

We define that a \TRS\ is \emph{left-linear} iff for any $l$ left-hand side of a rule, and for any $x$ variable, $x$ occurs in $l$ at most once.
This work will study reductions in left-linear \iTRSs\ only.

Additionaly, we will say that a reduction rule $\mu : l \to r$ is \emph{collapsing} iff $r \in \thevar$.

\subsection{Reduction, redex occurrence}
\label{sec:reduction}

\begin{definition}[Reduction step, source, target, active position, depth]
\label{dfn:step}
Let $T = \pair{\Sigma}{R}$ be a TRS, $t \in Ter^\infty(\Sigma)$, $p \in \Pos{t}$, $\mu : l \to r \in R$ and $\sigma$ a substitution, such that $\subtat{t}{p} = \sigma l$.
Then the 4-tuple $a = \langle t, p, \mu, \sigma \rangle$ is a \emph{reduction step}.
We define $src(a) \eqdef t$, $tgt(a) \eqdef \repl{t}{\sigma r}{p}$, $\RPos{a} \eqdef p$, and $\sdepth{a} \eqdef \posln{p}$. They are, respectively, the \emph{source}, \emph{target}, \emph{redex position} and \emph{depth} of $a$.
\end{definition}
\noindent
If the source term of a reduction step is clear from the context, it can be omitted when describing the step. On the other hand, if the substitution is unimportant \wrt\ the subject being discussed, it can be omitted as well.
Therefore, we will sometimes refer to a reduction step $\langle t, p, \mu, \sigma \rangle$ as $\langle p, \mu, \sigma \rangle$, or even just $\pair{p}{\mu}$.

\medskip
Notice that, given a term $t$, the reduction steps having $t$ as source term are in an obvious bijection with the occurrences of redexes (\ie\ of subterms having the form $\sigma l$ for some rule $\mu: l \to r$) inside $t$.
Namely, the reduction step $\langle t, p, \mu, \sigma \rangle$ correspond to the occurrence, at position $p$, of a redex with rule $\mu$ and substitution $\sigma$.
Therefore, we will take the (maybe rather unusual) convention of considering reduction steps from $t$ and \textbf{redex occurrences in $t$} as synonyms.

We also want to remark that the definition of a reduction step is given in terms of the \emph{position} of the corresponding redex occurrence, opposed to the \emph{context} which surrounds it (\confer\ \cite	{terese} dfn. 2.2.4).
The choice of position is motivated by the fact that in infinitary rewriting reasonings, induction on terms (and therefore in contexts which are terms for an extended signature) is not valid, whereas induction on positions is allowed.

Finally, notice that if $t$, $p$ and $\mu$ are known in advance, then the specification of $\sigma$ is redundant. Nonetheless, I prefer to include the substitution in the definition because it will permit to describe with precision a redex occurrence whose existence is asserted.
Notice also that the inclusion of the rule is redundant for orthogonal TRSs; it is included in the characterisation of reduction steps because proof terms are intended to describe reductions in any, maybe non-orthogonal, left-linear TRS.

\medskip
A \textbf{normal form} is a term having no redex occurrences, or equivalently, a term being the source of no reduction step.

\medskip
Some examples of reduction steps follow: consider the TRS whose rules are $\mu: f(x) \to g(x)$ and $\nu: h(i(x),y) \to j(y,x)$, and the term 
$t = g \,\big( h(i(f(a)),f(i(b))) \,\big)$.
Then there are three reductions steps from $t$, namely: \\
$\langle t, 1, \nu, \set{x \eqdef f(a), y \eqdef f(i(b))} \rangle$, 
$\langle t, 111, \mu, \set{x \eqdef a} \rangle$, and
$\langle t, 12, \mu, \set{x \eqdef i(b)} \rangle$.

\bigskip
Next we will give a precise formal definition for the concept of \emph{\redseq}.
Producing a precise definition is needed, particularly since proof terms are meant as a tool to study precisely \redseqs.
Formal definitions of infinitary \redseqs\ are given and discussed throughout the literature on the subject, \confer\ \eg\ \cite{orthogonal-itrs-90}, \cite{orthogonal-itrs-95}, \cite{terese}, \cite{inf-normalization}.
%

\medskip
A \emph{\redseq} will be defined as a sequence of reduction steps, having any (finite or infinite) ordinal as length. 
This approach, and also the idea of concatenating reduction sequences, is in line with the description given in \cite{terese}, Sec. 2. We quote from page 38
\begin{quote}
Concatenating reduction steps we have (possibly infinite) \emph{reduction sequences} $t_0 \to t_1 \to t_2 \ldots$, or \textit{reductions} for short.
\end{quote}
Notice that in the definition which follows, focus is set on \emph{steps} rather than \emph{terms}.

Not all sequences of steps are \redseqs; some conditions must hold.
Obviously, if $\stepa$ and $\stepb$ are consecutive steps in a sequence, then $tgt(\stepa)$ must coincide with $src(\stepb)$.
This coherence condition must hold also for steps having \emph{limit} positions in the sequence. \Eg\ in a sequence $\stepa_0 ; \stepa_1 ; \ldots ; \stepa_n; \ldots, \stepa_\omega \ldots$, there must be some relation between the step $\stepa_\omega$ and the sequence of the steps previous to it.
This relation is commonly formalised in the literature by asking the sequence of targets of the previous steps, \ie\ the sequence $tgt(\stepa_0); tgt(\stepa_1); \ldots; tgt(\stepa_n); \ldots$ to have a limit, and that limit to coincide with $src(\stepa_\omega)$. This requirement is related with the characterisation of \emph{weakly convergent} infinitary rewriting.  

In order to obtain a notion of \redseq\ enjoying some desired properties, a further condition is imposed. Namely, the \emph{depth} of successive steps is required to tend to $\omega$ at each limit in the sequence, \ie\ up to the $\omega$-th step, up to the $\omega * 2$-th step, and so on. \Redseqs\ for which this requirement, and also the coherence requirements described before, hold, are known as \emph{strongly convergent} in the literature.

These considerations motivate the following definitions.

\begin{definition}[Reduction sequence, convergence]
\label{dfn:sred}
A (well-formed) \emph{\redseq} is: either $\redid{t}$, the \emph{empty \redseq\ for} the term $t$, or else a non-empty sequence of reduction steps $\reda \eqdef \langle \redel{\reda}{\alpha} \rangle_{\alpha < \beta}$, where $\beta > 0$ and $\reda$ verifies all the following conditions:
\begin{enumerate}
\item \label{it:dfn-sred-successor-coherence}
For all $\alpha$ such that $\alpha + 1 < \beta$, $src(\redel{\reda}{\alpha+1}) = tgt(\redel{\reda}{\alpha})$.
\item \label{it:dfn-sred-limit}
For all limit ordinals $\beta_0 < \beta$:
	\begin{enumerate}
	\item \label{it:dfn-sred-limit-existence}
	The sequence $\langle tgt(\redel{\reda}{\alpha}) \rangle_{\alpha < \beta_0}$ has a limit.
	\item \label{it:dfn-sred-limit-coherence}
	$\lim_{\alpha \to \beta_0} tgt(\redel{\reda}{\alpha}) = src(\redel{\reda}{\beta_0})$.
	\item \label{it:dfn-sred-depth}
	For all $n < \omega$, there exists $\beta' < \beta_0$ such that $\sdepth{\redel{\reda}{\alpha}} > n$ if $\beta' < \alpha < \beta_0$.	
	\end{enumerate}	
\end{enumerate}
We say that a reduction sequence $\reda$ is \emph{convergent} iff either $\reda = \redid{t}$ for some term $t$, or else $\reda = \langle \redel{\reda}{\alpha} \rangle_{\alpha < \beta}$, and either $\beta$ is a successor ordinal, or else $\beta$ is a limit ordinal and conditions (\ref{it:dfn-sred-limit-existence}) and (\ref{it:dfn-sred-depth}) hold for $\beta$ as well.
\end{definition}

\begin{definition}[Source of a \redseq]
\label{dfn:redseq-src}
Let $\reda$ be a \redseq. We define the \emph{source} term of $\reda$, notation $src(\reda)$, as follows:
if $\reda = \redid{t}$, then $src(\reda) \eqdef t$, 
if $\reda = \langle \redel{\reda}{\alpha} \rangle_{\alpha < \beta}$, then $src(\reda) \eqdef src(\redel{\reda}{0})$. 
\end{definition}

\begin{definition}[Target of a \redseq]
\label{dfn:redseq-tgt}
Let $\reda$ be a convergent \redseq. We define the the \emph{target} term of $\reda$, notation $tgt(\reda)$, as follows:
if $\reda = \redid{t}$, then $tgt(\reda) \eqdef t$; 
if $\reda = \langle \redel{\reda}{\alpha} \rangle_{\alpha < \beta}$, then $\beta = \beta' + 1$ implies $tgt(\reda) \eqdef tgt(\redel{\reda}{\beta'})$, and $\beta$ being a limit ordinal implies $tgt(\reda) \eqdef \lim_{\alpha \to \beta} tgt(\redel{\reda}{\alpha})$.
\end{definition}

\begin{definition}[Length of a \redseq]
\label{dfn:redseq-length}
Let $\reda$ be a \redseq. We define the \emph{length} of $\reda$, notation $\redln{\reda}$, as follows:
if $\reda = \redid{t}$, then $\redln{\reda} \eqdef 0$, 
if $\reda = \langle \redel{\reda}{\alpha} \rangle_{\alpha < \beta}$, then $\redln{\reda} \eqdef \beta$. 
\end{definition}

\begin{definition}[Minimum activity depth of a \redseq]
\label{dfn:redseq-mind}
Let $\reda$ be a \redseq. We define the \emph{minimum activity depth} of $\reda$, notation $\mind{\reda}$, as follows:
if $\reda = \redid{t}$, then $\mind{\reda} \eqdef \omega$, 
if $\reda = \langle \redel{\reda}{\alpha} \rangle_{\alpha < \beta}$, then $\mind{\reda} \eqdef min \set{\sdepth{\redel{\reda}{\alpha}} \setsthat \alpha < \beta}$. 
\end{definition}

\begin{definition}[Section of a \redseq]
\label{dfn:redseq-section}
Let $\reda$ be a \redseq\ and $\alpha, \beta$ ordinals verifying $\alpha < \redln{\reda}$, $\beta \leq \redln{\reda}$ and $\alpha \leq \beta$.
We define the \emph{section} of $\reda$ from $\alpha$ to $\beta$, notation $\redsublt{\reda}{\alpha}{\beta}$, as follows: 
if $\alpha = \beta < \redln{\reda}$, then $\redsublt{\reda}{\alpha}{\beta} \eqdef \redid{src(\redel{\reda}{\alpha})}$, otherwise, \ie\ if $\alpha < \beta$, then $\redsublt{\reda}{\alpha}{\beta} \eqdef \langle \redel{\reda}{\alpha+\gamma} \rangle_{\alpha+\gamma < \beta}$.
\end{definition}

\begin{remark}
\label{rmk:tgt-mentioned-then-defined}
Any mention of $tgt(\reda)$ implies that the target of the \redseq\ $\reda$ is defined, \ie\ that $\reda$ is a \emph{convergent} \redseq.
\end{remark}

\medskip
It is worth remarking that the requirement about depths of successive steps, \ie\ condition~(\ref{it:dfn-sred-depth}) in Dfn.~\ref{dfn:sred}, is not enough to guarantee the well-formedness of \redseqs. 
Let us discuss briefly this issue. Some examples will be given using the rules $f(x) \to g(x)$, $h(x) \to j(x)$, and $g(x) \to f(x)$, and denoting concatenation of sequences by semicolons.

Depth requirement alone does not guarantee coherence at limit positions, as discussed before defining \redseqs. \Eg, the sequence of steps $f\om \infred g\om ; h\om \infred j\om$, which length is $\omega * 2$, does not produce a well-formed reduction sequence, even when depths tend to infinity at each limit ordinal in the sequence of steps; a target (namely $g\om$) can be determined for the prefix of the first $\omega$ steps, but it does not coincide with the source of the $\omega$-th step, \ie\ $h\om$.

Moreover, the depth condition alone does not even guarantee the existence of a limit for each limit ordinal prefix. \Eg\ consider the sequence of steps, having length $\omega^2$, informally described as follows:
$f\om \infred g\om ; g\om \infred f\om ; g(f\om) \infred g\om ; f(g\om) \infred f\om ; g^2(f\om) \infred g\om ; f^2(g\om) \infred f\om ; \ldots g^n(f\om) \infred g\om ; f^n(g\om) \infred f\om ; \ldots \ $.
This sequence of steps obeys the depth condition at each limit ordinal, including $\omega^2$ itself, but even though, a limit cannot be determined for it.
Therefore, the requirement about the existence of a limit, \ie\ condition~(\ref{it:dfn-sred-limit-existence}), cannot be removed by the mere fact of including the depth requirement.

It could possibly be proved, by means of a careful transfinite induction on limit ordinals, that for any sequence of steps, and each limit ordinal $\beta$ up to the length of that sequence, the depth requirement on each limit ordinal $\leq \beta$, plus coherence (\ie\ condition~(\ref{it:dfn-sred-limit-coherence})) at all limit ordinals $< \beta$, imply the existence of a limit in the sequence of targets at ordinal $\beta$.
Since this issue is not in the focus of the present work, we leave it as subject of further investigation.

\medskip
Notice that the way in which the concept of \redseq\ is formalised here differs from the approach taken in \cite{terese}, Sec. 8.2, 
which cannot be adapted for infinitary rewriting (perhaps unless coinduction is involved, \confer\ \cite{endrullis-wir-2013}) since the construction of \redseqs\ is based there on simple induction, and therefore can only describe finite sequences.
As the correspondence between proof terms and \redseqs\ given in \cite{terese} is based on the mentioned characterisation, this observation suggests that the adequacy between proof terms and \redseqs\ for the infinitary case should be treated in a way different than what is described in \cite{terese}.

\medskip
Given a term $t$, we will refer to the reduction sequences having $t$ as source term as the reduction sequences \textbf{from} $t$.
Moreover, if $s$ is the only normal form verifying $t \infred s$, then we will say that $s$ is the (infinitary) \textbf{normal form of $t$}.

\bigskip
We can define reduction steps and sequences which model applications of rules to contexts rather than terms.

\begin{remark}
\label{rmk:trs-ctx}
For any TRS $T = \pair{\Sigma}{R}$ we can think of an associated TRS $T^\Box \eqdef \pair{\Sigma \cup \set{\Box/0}}{R}$, which makes it possible to describe reductions on contexts.
In the sequel we will include references to reduction steps and reduction sequences whose source and target are contexts; they must be understood as defined in $T^\Box$.
\end{remark}

\subsection{Patterns, pattern depth}
\label{sec:patterns}
Given a reduction rule $\mu: l \to r$ and a reduction step $\stepa = \langle t,p,\mu,\sigma \rangle$, the role of the \emph{function symbol} occurrences in $l$ differs from that of the \emph{variable} occurrences: 
the former must be present explicitly in $src(a)$ having the same structure as in $l$; 
while the latter are included in the domain of $\sigma$.

We will sometimes need to refer to the positions of all the occurrences of function symbols in (the lhs of) a rule, and also in (the source term of) a reduction step.
\Eg\ if $\mu = f(g(x,h(y)) \to f(y)$, then the occurrences of function symbols in (the lhs of) $\mu$ are at positions $\epsilon$, $1$ and $12$.
The corresponding formal definitions follow.

\begin{definition}[Pattern, pattern positions, pattern depth]
\label{dfn:patt}
Let $t$ be a term. The \emph{pattern} of $t$, notation $\patt{t}$, is the context which results of changing all the variable occurrences in $t$ with boxes; \confer\ \cite{terese} dfn. 2.7.3, pg. 49.
The set of \emph{pattern positions} of $t$, notation $\PPos{t}$ is defined as $\set{p \setsthat p \in \Pos{t} \textand t(p) \notin \thevar}$.
The \emph{pattern depth} of $t$, notation $\Pdepth{t}$, is defined as $max(\set{|p| \setsthat p \in \PPos{t}})$; if $x \in \thevar$ then $\Pdepth{x}$ is undefined.

Let $\mu:l \to r$ be a reduction rule. The set of pattern positions and the pattern depth of $\mu$ are defined as follows: 
$\PPos{\mu} \eqdef \PPos{l}$, $\Pdepth{\mu} \eqdef \Pdepth{l}$.

Let $a = \langle t,p,\mu,\sigma \rangle$ be a reduction step. The set of pattern positions of $a$ is defined as follows: 
$\PPos{a} \eqdef p \cdot \PPos{\mu}$.
\end{definition}

For example, if $\mu : h(i(x),g(i(y)),c) \to h(x,x,y)$, $t = g(h(i(g(a)),g(i(b)),c))$ and $a = \langle t,1,\mu,\set{x \eqdef g(a), y \eqdef b}\rangle$, then
$\PPos{\mu} = \set{\epsilon,1,2,21,3}$, $\Pdepth{\mu} = 2$, and $\PPos{a} = 1 \cdot \PPos{\mu} = \set{1,11,12,121,13}$.

\includeStandardisation{
\bigskip\noindent
Sometines we will need to consider the maximum rule pattern depth of a TRS.

\begin{definition}[Maximum rule pattern depth]
\label{dfn:maximum-rule-pd}
Let $T = \langle \Sigma, R \rangle$ be a (finitary or infinitary) TRS. 
We define the maximum rule pattern depth of $T$ as follows:
$\maxPdepth{T} \eqdef max(\set{\Pdepth{\mu} \setsthat \mu \in R})$.
\end{definition}
}
\subsection{Some properties about infinitary rewriting}
We include in this Section the statement and proof of some properties on infinitary rewriting which are needed in following Sections.
In turn, these properties require some definitions to be given.

\medskip
We say that a term $t$ is \emph{infinitary weakly normalising}, shorthand notation $WN^\infty$, iff there exists at least one \redseq\ $\reda$ such that $t \infredx{\reda} u$ and $u$ is a normal form. We say that a term $t$ is \emph{strongly normalising}, shorthand notation $SN^\infty$, iff there is no divergent \redseq\ whose source term is $t$. 
A term $t$ has the \emph{unique normal-form property}, shorthand notation $UN^\infty$, iff whenever $t \infred u_1$, $t \infred u_2$ and both $u_1$ and $u_2$ are normal forms, then $u_1 = u_2$.
A \TRS\ is $WN^\infty$ ($SN^\infty$, $UN^\infty$) iff all its terms are.
\Confer\ \cite{inf-normalization} for a study of normalisation for infinitary rewriting.

\medskip
A \TRS\ is \emph{left-linear} iff for any rule $\mu$ and for any $x \in \thevar$, $x$ occurs in the left-hand side of $\mu$ at most once.
A \TRS\ \trst\ is \emph{orthogonal} iff it is left-linear and there is no term $t$ such that $t = \sigma_1 l_1$ and $\subtat{t}{p} = \sigma_2 l_2$, where $l_1$ and $l_2$ are left-hand sides of rules in \trst, and $p \in \PPos{l_1}$.

Some examples of left-hand sides of rules leading to non-orthogonal \TRSs\ follow.
No \TRS\ including a rule whose left-hand side is $f(g(x))$ and another having as left-hand side either $g(x)$ or $g(h(x))$, is orthogonal: $t = f(g(h(a)))$ is a counterexample for the corresponding condition.
Also, no \TRS\ including rules whose left-hand sides are $h(f(x),y)$ and $h(x,g(y))$ is orthogonal, a counterexample is $t = h(f(a),g(b))$. In this case the position $p$ mentioned in the definition is $\epsilon$ for the given counterexample.
Finally, no \TRS\ including a rule whose left-hand side is $f(f(x))$ is orthogonal, a counterexample is $t = f(f(f(a)))$. In this case the same rule corresponds to $l_1$ and $l_2$.

Properties of first-order infinitary orthogonal \TRSs\ are studied \eg\ in \cite{orthogonal-itrs-95}.

\medskip
A \TRS\ $\trst$ is \emph{disjoint} iff 
the set of all the function symbols occurring in the left-hand sides of the rules of $\trst$
is disjoint from
the set of all the function symbols occurring in the right-hand sides of the rules of $\trst$.

\medskip
The results to be given in this Section are particularly needed for the study of the class of proof terms corresponding to coinitial sets of redexes, which involves the definition of TRSs which are `companions' to the TRS under study. \Confer\ the concept of \emph{2-rewriting system}, notation 8.2.12 in \cite{terese}.

The `companion' TRSs enjoy some desirable properties.
First of all, they are all orthogonal, and therefore they enjoy the property $UN^\infty$; \confer\ \cite{inf-normalization} Section 5.
Some of them are Recursive Program Schemes (\confer\ \cite{terese} dfn. 3.4.7), \ie, they are orthogonal and all their rules have the form $f(\ldots, x_i, \ldots) \to t$, so that we can distinguish the subset $\mathcal{F} \eqdef \set{f \setsthat f(\ldots, x_i, \ldots) \to t \in R}$ within their signature.
Furthermore, the following additional restriction is imposed.

Notice that for Recursive Program Schemes, the disjointness condition amounts to assert that no symbol in $\mathcal{F}$ appears in the right-hand side of any rule.

\bigskip
Sections of \redseqs, \confer\ Dfn.~\ref{dfn:redseq-section}, enjoy some basic properties.

\begin{lemma}
\label{rsl:redseq-proper-section-convergent}
Let $\reda$ be a \redseq, and $\alpha < \redln{\reda}$. Then $\redupto{\reda}{\alpha}$ is convergent.
\end{lemma}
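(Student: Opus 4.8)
The plan is to unfold the definition of convergence (\refdfn{sred}) and match it directly against the well-formedness conditions that $\reda$ already satisfies. First I would read $\redupto{\reda}{\alpha}$ as the initial section $\redsublt{\reda}{0}{\alpha}$, so that by \refdfn{redseq-section} its steps are exactly $\langle \redel{\reda}{\gamma} \rangle_{\gamma < \alpha}$ (using $0 + \gamma = \gamma$) and its length is $\alpha$; in particular $\redel{\redupto{\reda}{\alpha}}{\gamma} = \redel{\reda}{\gamma}$ for every $\gamma < \alpha$. The hypotheses of \refdfn{redseq-section} are met since $\alpha < \redln{\reda}$ forces $\redln{\reda} > 0$.

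Next I would dispose of the easy cases. If $\alpha = 0$ the section is the empty \redseq\ $\redid{src(\redel{\reda}{0})}$, which is convergent by definition. If $\alpha$ is a successor ordinal, convergence is immediate from \refdfn{sred}, since that clause demands nothing beyond the length being a successor.

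The heart of the argument is the case where $\alpha$ is a limit ordinal with $0 < \alpha < \redln{\reda}$. Here convergence requires verifying conditions (\ref{it:dfn-sred-limit-existence}) and (\ref{it:dfn-sred-depth}) for the length $\alpha$ of the section. Because $\alpha$ is a limit ordinal strictly below $\redln{\reda} = \beta$, it is one of the internal limit ordinals $\beta_0 < \beta$ covered by clause (\ref{it:dfn-sred-limit}) in the well-formedness of $\reda$. Applying that clause with $\beta_0 = \alpha$ yields precisely that $\langle tgt(\redel{\reda}{\gamma}) \rangle_{\gamma < \alpha}$ has a limit, condition (\ref{it:dfn-sred-limit-existence}), and that the depths tend to $\omega$, condition (\ref{it:dfn-sred-depth}). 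Since the steps of the section coincide with those of $\reda$ below $\alpha$, these are exactly the two conditions demanded of the section, and we conclude.

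The only point needing care, and the nearest thing to an obstacle, is that convergence is a predicate on \redseqs, so one should first observe that $\redupto{\reda}{\alpha}$ is itself a well-formed \redseq: conditions (\ref{it:dfn-sred-successor-coherence}) and all three clauses of (\ref{it:dfn-sred-limit}) hold for it at every limit ordinal $< \alpha$ simply by restriction, since those ordinals are also limit ordinals $< \beta$ and the steps agree. I would flag this inheritance and then carry out the convergence check above. It is worth stressing that the hypothesis $\alpha < \redln{\reda}$ is essential: it is exactly what places $\alpha$ within the range of clause (\ref{it:dfn-sred-limit}), whereas for $\alpha = \redln{\reda}$ no such guarantee is available, which is precisely why convergence is a genuinely separate notion from well-formedness.
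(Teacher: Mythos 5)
Your proof is correct and takes essentially the same route as the paper's: note that the section is well-formed by restriction, dispose of the $\alpha = 0$ and successor cases immediately, and for limit $\alpha$ invoke the well-formedness clause of $\reda$ at the internal limit ordinal $\alpha < \redln{\reda}$ to obtain exactly conditions (\ref{it:dfn-sred-limit-existence}) and (\ref{it:dfn-sred-depth}). Your added remark on why $\alpha < \redln{\reda}$ is essential is a nice observation, but the argument itself matches the paper's.
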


\begin{proof}
It is immediate to verify that $\redupto{\reda}{\alpha}$ is a well-formed \redseq.
If $\alpha = 0$, \ie\ $\redupto{\reda}{\alpha} = \redid{src(\reda)}$, or if $\alpha$ is a successor ordinal, then it is immediately convergent.
If $\alpha$ is a limit ordinal, the fact that $\reda$ is well-formed implies that conditions~(\ref{it:dfn-sred-limit-existence}) and (\ref{it:dfn-sred-depth}) hold for $\alpha < \redln{\reda}$, hence $\redupto{\reda}{\alpha}$ is convergent.
\end{proof}

\begin{lemma}
\label{rsl:redseq-section-src-tgt-coherence}
Let $\reda$ be a \redseq\ and $\alpha < \redln{\reda}$. Then $src(\redel{\reda}{\alpha}) = tgt(\redupto{\reda}{\alpha})$.
\end{lemma}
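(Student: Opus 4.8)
The plan is to unfold both sides through the relevant definitions and to split on the ordinal shape of $\alpha$. First I would record that $\redupto{\reda}{\alpha}$ abbreviates the section $\redsublt{\reda}{0}{\alpha}$, whose $\gamma$-th step is literally $\redel{\reda}{\gamma}$ for every $\gamma < \alpha$ by Dfn.~\ref{dfn:redseq-section}; in particular $tgt(\redel{\redupto{\reda}{\alpha}}{\gamma}) = tgt(\redel{\reda}{\gamma})$. By \refLem{redseq-proper-section-convergent} the section $\redupto{\reda}{\alpha}$ is convergent, so $tgt(\redupto{\reda}{\alpha})$ is defined and may be computed via Dfn.~\ref{dfn:redseq-tgt}. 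The three cases are $\alpha = 0$, $\alpha$ a successor, and $\alpha$ a limit ordinal, and in each I read off the target of the section and match it against the corresponding coherence clause of Dfn.~\ref{dfn:sred}.

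For $\alpha = 0$ the section is $\redid{src(\reda)}$, so its target is $src(\reda) = src(\redel{\reda}{0})$, which is exactly $src(\redel{\reda}{\alpha})$. For a successor $\alpha = \alpha' + 1$, the length of the section is the successor $\alpha$, so Dfn.~\ref{dfn:redseq-tgt} gives $tgt(\redupto{\reda}{\alpha}) = tgt(\redel{\reda}{\alpha'})$; since $\alpha < \redln{\reda}$ forces $\alpha' + 1 < \redln{\reda}$, the successor coherence condition~(\ref{it:dfn-sred-successor-coherence}) yields $src(\redel{\reda}{\alpha'+1}) = tgt(\redel{\reda}{\alpha'})$, and $\alpha = \alpha'+1$ closes the case. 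For a limit $\alpha$, the section again has length $\alpha$, so Dfn.~\ref{dfn:redseq-tgt} describes its target as $\lim_{\gamma \to \alpha} tgt(\redel{\redupto{\reda}{\alpha}}{\gamma})$, which by the step-identity noted above equals $\lim_{\gamma \to \alpha} tgt(\redel{\reda}{\gamma})$; as $\alpha$ is a limit ordinal strictly below $\redln{\reda}$, the limit coherence condition~(\ref{it:dfn-sred-limit-coherence}) identifies this limit with $src(\redel{\reda}{\alpha})$.

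The proof is essentially bookkeeping, and the only point deserving care is the limit case: one must make sure that the limit produced by the target definition of the \emph{section} is the very same limit constrained by condition~(\ref{it:dfn-sred-limit-coherence}) of the \emph{original} sequence. This hinges precisely on the observation that the steps of $\redupto{\reda}{\alpha}$ below $\alpha$ coincide with those of $\reda$, so the two sequences of targets whose limits are being compared are identical; the existence of the limit is in turn guaranteed by convergence of the section via \refLem{redseq-proper-section-convergent}. No transfinite induction is needed, since each case is settled directly from the definitions.
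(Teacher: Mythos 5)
Your proof is correct and follows essentially the same route as the paper's: invoke Lem.~\ref{rsl:redseq-proper-section-convergent} to ensure the target of the section is defined, then split on $\alpha$ being zero, a successor, or a limit, matching Dfn.~\ref{dfn:redseq-tgt} against conditions~(\ref{it:dfn-sred-successor-coherence}) and (\ref{it:dfn-sred-limit-coherence}) of Dfn.~\ref{dfn:sred}. The only difference is that you make explicit the bookkeeping point that the steps of $\redupto{\reda}{\alpha}$ coincide with those of $\reda$, which the paper leaves implicit.
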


\begin{proof}
Notice that Lem.~\ref{rsl:redseq-proper-section-convergent} implies that $\redupto{\reda}{\alpha}$ is convergent, so that its limit is defined.
If $\alpha = 0$, \ie\ $\redupto{\reda}{\alpha} = \redid{src(\redel{\reda}{0})}$, then we conclude immediately.
Otherwise, $\alpha = \alpha' + 1$ implies $src(\redel{\reda}{\alpha}) = tgt(\redel{\reda}{\alpha'})$, and $\alpha$ limit implies $src(\redel{\reda}{\alpha}) = \lim_{\alpha' \to \alpha} tgt(\redel{\reda}{\alpha'})$, \confer\ conditions~(\ref{it:dfn-sred-successor-coherence}) and (\ref{it:dfn-sred-limit-coherence}) resp. in Dfn.~\ref{dfn:sred}. In either case, this coincides with $tgt(\redupto{\reda}{\alpha})$, \confer\ Dfn.~\ref{dfn:redseq-tgt}. Thus we conclude.
\end{proof}

\medskip
We prove some expected properties of targets of convergent \redseqs.

\begin{lemma}
\label{rsl:redseq-mind-big-src-tgt}
Let $\reda$ be a convergent \redseq\ and $n < \omega$ such that $\mind{\reda} > n$.
Then $\tdist{src(\reda)}{tgt(\reda)} < 2^{-n}$.
\end{lemma}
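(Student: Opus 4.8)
The plan is to proceed by transfinite induction on the length $\beta = \redln{\reda}$, exploiting the fact that the distance is an ultrametric (Lemma~\ref{rsl:tdist-is-ultrametric}) so that the per-step displacements do not accumulate. First I observe that the hypothesis $\mind{\reda} > n$ means that every step $\redel{\reda}{\alpha}$ acts at a position of depth at least $n+1$, and that a distance value $< 2^{-n}$ is the same thing as a value $\leq 2^{-(n+1)}$; this is the bound the induction will maintain. I also note that every proper prefix $\redupto{\reda}{\alpha}$ uses a subset of the steps of $\reda$, hence still satisfies $\mind{} > n$, and is convergent by Lemma~\ref{rsl:redseq-proper-section-convergent}.

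For the base cases, $\reda = \redid{t}$ gives distance $0$, and a single step at a position $p$ with $\posln{p} \geq n+1$ satisfies $\tdist{src(\reda)}{tgt(\reda)} \leq 2^{-\posln{p}} \leq 2^{-(n+1)} < 2^{-n}$ directly by Lemma~\ref{rsl:repl-dist}. For the successor case $\beta = \beta' + 1$, I split $\reda$ into the prefix $\redupto{\reda}{\beta'}$ and its last step. The inductive hypothesis gives $\tdist{src(\reda)}{tgt(\redupto{\reda}{\beta'})} < 2^{-n}$; by Lemma~\ref{rsl:redseq-section-src-tgt-coherence} the term $tgt(\redupto{\reda}{\beta'})$ equals $src(\redel{\reda}{\beta'})$, whose distance to $tgt(\redel{\reda}{\beta'}) = tgt(\reda)$ is again $\leq 2^{-(n+1)}$ by Lemma~\ref{rsl:repl-dist}. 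The ultrametric inequality then yields $\tdist{src(\reda)}{tgt(\reda)} < 2^{-n}$.

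For the limit case, recall $tgt(\reda) = \lim_{\alpha \to \beta} tgt(\redel{\reda}{\alpha})$ by Definition~\ref{dfn:redseq-tgt}. For each $\alpha < \beta$ the prefix $\redupto{\reda}{\alpha+1}$ has length $\alpha+1 < \beta$, so the inductive hypothesis gives $\tdist{src(\reda)}{tgt(\redel{\reda}{\alpha})} < 2^{-n}$, using $tgt(\redupto{\reda}{\alpha+1}) = tgt(\redel{\reda}{\alpha})$. Applying the limit condition of Definition~\ref{dfn:limit-terms} with depth $n$, I can pick a single index $\alpha$ large enough that $\tdist{tgt(\redel{\reda}{\alpha})}{tgt(\reda)} < 2^{-n}$ as well, and one more application of the ultrametric inequality closes the case.

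The main obstacle is exactly this limit step: a single reduction step only guarantees that source and target agree above depth $n$, and it is not a priori clear that this agreement survives the passage to the metric limit of infinitely many targets. The resolution is the combination of the uniform bound $\leq 2^{-(n+1)}$ furnished by the inductive hypothesis on \emph{every} proper prefix together with the ultrametric property, which makes the max, rather than the sum, of distances the relevant quantity; choosing one target far enough along the sequence lets me bridge $src(\reda)$ to the genuine limit $tgt(\reda)$ without any loss in the bound.
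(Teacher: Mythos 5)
Your proof is correct and follows essentially the same route as the paper's: transfinite induction on $\redln{\reda}$, handling the successor case by combining the inductive hypothesis on the prefix with Lemma~\ref{rsl:repl-dist} for the last step via the ultrametric inequality, and the limit case by choosing a sufficiently long prefix whose target is within $2^{-n}$ of $tgt(\reda)$ and applying the inductive hypothesis to that prefix. The only differences are cosmetic (your explicit $\leq 2^{-(n+1)}$ bookkeeping and treating the one-step sequence as a separate base case).
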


\begin{proof}
We proceed by induction on $\redln{\reda}$. 
If $\redln{\reda} = 0$, \ie\ $\reda = \redid{t}$ for some term $t$, then $tgt(\reda) = src(\reda) = t$, so that we conclude immediately.

Assume that $\redln{\reda}$ is a successor ordinal, so that $\reda = \reda'; \stepa \ $ where $\redln{\reda'} < \redln{\reda}$.
Then \ih\ can be applied to obtain 
$\tdist{src(\reda')}{tgt(\reda')} = \tdist{src(\reda)}{src(\stepa)} < 2^{-n}$.
In turn, $tgt(\reda) = tgt(\stepa) = \repl{src(\stepa)}{s}{p}$ for some term $s$, where $p = \RPos{\stepa}$, so that hypotheses imply $\mind{\stepa} > n$.
Then Lem.~\ref{rsl:repl-dist} implies $\tdist{src(\stepa)}{tgt(\reda)} \leq 2^{-\posln{p}} < 2^{-n}$.
Hence Lem.~\ref{rsl:tdist-is-ultrametric} allows to conclude.

Assume that $\alpha \eqdef \redln{\reda}$ is a limit ordinal.
In this case $tgt(\reda) = \lim_{\alpha' \to \alpha} tgt(\redel{\reda}{\alpha'})$.
Let $\alpha_n < \alpha$ such that $\tdist{tgt(\redel{\reda}{\alpha'})}{tgt(\reda)} < 2^{-n}$ if $\alpha_n < \alpha' < \alpha$.
Then particularly 
$\tdist{tgt(\redel{\reda}{\alpha_n + 1})}{tgt(\reda)} 
    = \tdist{tgt(\redupto{\reda}{\alpha_n + 2})}{tgt(\reda)} 
		< 2^{-n}$; recall $\alpha_n < \alpha$ limit implies $\alpha_n + 2 < \alpha$.
In turn, \ih\ can be applied on $\redupto{\reda}{\alpha_n + 2}$ to obtain $\tdist{src(\redupto{\reda}{\alpha_n + 2})}{tgt(\redupto{\reda}{\alpha_n + 2})} < 2^{-n}$.
Hence we conclude by Lem.~\ref{rsl:tdist-is-ultrametric}.
\end{proof}

\begin{lemma}
\label{rsl:redseq-disj}
Let $t \infredx{\reda} u$ and $p \in \Pos{t}$ such that $\RPos{\redel{\reda}{\alpha}} \disj p$ for all $\alpha < \redln{\reda}$.
Then $\subtat{t}{p} = \subtat{u}{p}$.
\end{lemma}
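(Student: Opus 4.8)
The plan is to proceed by transfinite induction on the length $\redln{\reda}$, following the same case structure as the proof of Lemma~\ref{rsl:redseq-mind-big-src-tgt}. The base case $\redln{\reda} = 0$ is immediate: then $\reda = \redid{t}$, so $u = t$ and $\subtat{t}{p} = \subtat{u}{p}$ trivially. Note that in every case the disjointness hypothesis is inherited by any prefix of $\reda$, since the redex positions of a prefix form a subset of those of $\reda$; this is what lets the inductive hypothesis be applied with the same $p$.

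For the successor case, write $\reda = \reda'; \stepa$ with $\redln{\reda'} < \redln{\reda}$ and final step $\stepa$, and put $q \eqdef \RPos{\stepa}$. The inductive hypothesis applied to $\reda'$ gives $\subtat{t}{p} = \subtat{src(\stepa)}{p}$ (recall $tgt(\reda') = src(\stepa)$), and in particular $p \in \Pos{src(\stepa)}$. Since $tgt(\stepa) = \repl{src(\stepa)}{s}{q}$ for the suitable term $s$ and $p \disj q$ by hypothesis, Lemma~\ref{rsl:repl-disj} yields $\subtat{tgt(\stepa)}{p} = \subtat{src(\stepa)}{p}$. Chaining the two equalities and recalling $u = tgt(\stepa)$ closes this case.

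The limit case, with $\alpha \eqdef \redln{\reda}$ a limit ordinal, is the main obstacle, because here $u$ is obtained as a limit rather than by a single replacement. First, for each $\gamma < \alpha$ the prefix $\redupto{\reda}{\gamma+1}$ is a convergent reduction sequence (Lemma~\ref{rsl:redseq-proper-section-convergent}) of length $\gamma+1 < \alpha$ whose redex positions are disjoint from $p$; applying the inductive hypothesis to it, and using $tgt(\redupto{\reda}{\gamma+1}) = tgt(\redel{\reda}{\gamma})$ (Dfn.~\ref{dfn:redseq-tgt}), gives $\subtat{tgt(\redel{\reda}{\gamma})}{p} = \subtat{t}{p}$. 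Thus the subterm at $p$ is constantly equal to $\subtat{t}{p}$ all along the sequence of step targets.

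It then remains to transfer this to $u = \lim_{\gamma \to \alpha} tgt(\redel{\reda}{\gamma})$, which I would do through the sufficient condition for term equality following Dfn.~\ref{dfn:distance}. Fix an arbitrary depth $k$ and set $m \eqdef \posln{p} + k$. By Dfn.~\ref{dfn:limit-terms} there is some $\gamma < \alpha$ with $\tdist{tgt(\redel{\reda}{\gamma})}{u} < 2^{-m}$, so $tgt(\redel{\reda}{\gamma})$ and $u$ agree on every position of depth at most $m$. A position of $\subtat{u}{p}$ or $\subtat{t}{p}$ of depth at most $k$ corresponds to a full-term position $pq$ of depth $\posln{p} + \posln{q} \leq m$; combining the distance bound with $\subtat{tgt(\redel{\reda}{\gamma})}{p} = \subtat{t}{p}$ therefore forces $\subtat{u}{p}$ and $\subtat{t}{p}$ to coincide on all positions of depth at most $k$ (and, taking $k = 0$, that $p \in \Pos{u}$). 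Since $k$ is arbitrary, the comment after Dfn.~\ref{dfn:distance} gives $\subtat{u}{p} = \subtat{t}{p}$. The only delicate point is this bookkeeping of the depth shift by $\posln{p}$ between the depth at which the whole terms agree and the depth at which their subterms at $p$ agree.
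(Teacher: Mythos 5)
Your proof is correct and follows essentially the same route as the paper's: transfinite induction on $\redln{\reda}$, with the successor case handled via Lem.~\ref{rsl:repl-disj} and the limit case handled by shifting the convergence depth by $\posln{p}$ so that agreement of the whole terms up to depth $\posln{p}+k$ yields agreement of the subterms at $p$ up to depth $k$. The only cosmetic difference is that you invoke the inductive hypothesis on every prefix $\redupto{\reda}{\gamma+1}$ before passing to the limit, whereas the paper applies it just once, to the single prefix $\redupto{\reda}{\alpha_n+2}$ chosen for each $n$; the substance of the argument is identical.
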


\begin{proof}
We proceed by induction on $\redln{\reda}$.
If $\redln{\reda} = 0$, \ie\ $\reda = \redid{t}$, then we conclude immediately since $u = t$.

Assume that $\redln{\reda}$ is a successor ordinal, so that $t \infredx{\reda'} u' \sstepx{\stepa} u$. In this case, \ih\ applies to $\reda'$, yielding $\subtat{t}{p} = \subtat{u'}{p}$.
In turn, $u = \repl{u'}{s}{q}$ for some term $s$, where $q = \RPos{\stepa} \disj p$. Then Lem.~\ref{rsl:repl-disj} implies $\subtat{u'}{p} = \subtat{u}{p}$. Thus we conclude.

Assume that $\alpha \eqdef \redln{\reda}$ is a limit ordinal. 
In this case $u = tgt(\reda) = \lim_{\alpha' \to \alpha} tgt(\redel{\reda}{\alpha'})$.
Let $n < \omega$, and $\alpha_n < \alpha$ such that 
$\tdist{tgt(\redel{\reda}{\alpha'})}{u} < 2^{-(n + \posln{p})}$, implying 
$\tdist{\subtat{tgt(\redel{\reda}{\alpha'})}{p}}{\subtat{u}{p}} < 2^{-n}$, if $\alpha_n < \alpha' < \alpha$. Particularly, $\tdist{\subtat{tgt(\redel{\reda}{\alpha_n+1})}{p}}{\subtat{u}{p}} < 2^{-n}$.
Recall that $\alpha_n < \alpha$ limit implies $\alpha_n + i < \alpha$ if $i < \omega$. Then \ih\ can be applied to $\redupto{\reda}{\alpha_n + 2}$, yielding 
$\subtat{src(\redupto{\reda}{\alpha_n + 2})}{p} = tgt(\subtat{\redupto{\reda}{\alpha_n + 2})}{p}$, so that $\subtat{t}{p} = \subtat{tgt(\redel{\reda}{\alpha_n + 1})}{p}$. Hence $\tdist{\subtat{t}{p}}{\subtat{u}{p}} < 2^{-n}$ for all $n < \omega$. 
Consequently, we conclude.
\end{proof}

\medskip
The just introduced properties allow to define the \emph{projection} of a \redseq\ not including head steps over an index.
We verify that the definition yields a well-formed \redseq; in the infinitary setting, this verification involves a fair amount of work.
The following definition involves the use of a sequence of non-contiguous ordinals which we will call $A$. We use $\seqln{A}$ and $\seqel{A}{\alpha}$ to  denote the order type of $A$ and its $\alpha$-th element respectively, where $\alpha < \seqln{A}$.
In turn, this sequence is built from a set of ordinals $S$ as follows. If $S = \emptyset$, then $A$ is the empty sequence, so that $\seqln{A} = 0$. Otherwise, we define $\seqel{A}{0}$ as the minimal element of $S$. 
Let $\alpha > 0$ such that $\seqel{A}{\alpha'}$ is defined for all $\alpha' < \alpha$. If $\alpha = \alpha' + 1$ then we consider the set $\set{\beta \in S \setsthat \beta > \seqel{A}{\alpha'}}$, and if $\alpha$ is a limit ordinal then we consider $\set{\beta \in S \setsthat \beta \geq sup(\set{\seqel{A}{\alpha'} \setsthat \alpha' < \alpha})}$. 
In either case, if the considered set is empty then we state that $\seqel{A}{\alpha_1}$ as undefined for all $\alpha_1 \geq \alpha$, so that $\seqln{A} = \alpha$. Otherwise, we define $\seqel{A}{\alpha}$ as the minimum of the considered set.

\newcommand{\redai}{\proj{\reda}{i}}
\begin{definition}
\label{dfn:proj-redseq}
Let $\reda$ a \redseq\ such that $\mind{\reda} > 0$, and $i$ such that $1 \leq i \leq m$ where $src(\reda) = f(t_1, \ldots, t_m)$.
We define the \emph{projection of $\reda$ over $i$}, notation $\redai$, as the \redseq\ whose specification follows.

Let $A$ be sequence built from the set $\set{ \alpha \setsthat \alpha < \redln{\reda} \,\land\, i \leq \RPos{\redel{\reda}{\alpha}} }$, \wrt\ the usual order of ordinals.
If $A$ is empty, then $\redai \eqdef \redid{t_i}$.
Otherwise $\redln{\redai} \eqdef \seqln{A}$, and $\redel{(\redai)}{\alpha} \eqdef \langle s_i, p, \mu \rangle$ where $\redel{\reda}{\seqel{A}{\alpha}} = \langle f(s_1, \ldots, s_i, \ldots, s_m), ip, \mu \rangle$.
Observe that Lem.~\ref{rsl:redseq-proper-section-convergent} implies $\redupto{\reda}{\seqel{A}{\alpha}}$ to be convergent, and in turn Lem.~\ref{rsl:redseq-mind-big-src-tgt} implies $tgt(\redupto{\reda}{\seqel{A}{\alpha}})(\epsilon) = src(\reda)(\epsilon) = f$; therefore, $tgt(\redupto{\reda}{\seqel{A}{\alpha}}) = src(\redel{\reda}{\seqel{A}{\alpha}}) = f(s_1, \ldots, s_i, \ldots, s_m)$. \Confer\ also Lem.~\ref{rsl:redseq-section-src-tgt-coherence}.
\end{definition}

\begin{lemma}
\label{rsl:proj-redseq-well-defined}
Let $\reda$ be a \redseq\ such that $\mind{\reda} > 0$, and $i$ such that $1 \leq i \leq m$ where $src(\reda)(\epsilon) = f/m$.
Then $\redai$ is a well-formed \redseq\ and $src(\redai) = \subtat{src(\reda)}{i}$.
Moreover, if $\reda$ is convergent, then $\redai$ is convergent as well, and $tgt(\redai) = \subtat{tgt(\reda)}{i}$.
\end{lemma}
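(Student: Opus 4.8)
The plan is to verify directly the defining conditions of \refdfn{sred}, relying on two recurring facts. Writing $\kappa_\alpha \eqdef \seqel{A}{\alpha}$ throughout, \reflem{repl-homo} shows that the target of a projected step is the projection of the target of the corresponding step of $\reda$, namely $tgt(\redel{\redai}{\alpha}) = \subtat{tgt(\redel{\reda}{\kappa_\alpha})}{i}$, and likewise $src(\redel{\redai}{\alpha}) = \subtat{src(\redel{\reda}{\kappa_\alpha})}{i}$. Moreover, whenever a block of \emph{discarded} steps is traversed — which, since $\mind{\reda} > 0$ forces every redex position to begin with some $j \neq i$, means exactly a block of steps whose redex position is disjoint from $i$ — \reflem{redseq-disj} tells us the $i$-th subterm is left unchanged across that block. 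I will also use, as in the proof of \reflem{redseq-disj}, that $u \mapsto \subtat{u}{i}$ is non-expansive: $\tdist{u}{v} \leq 2^{-(n+1)}$ implies $\tdist{\subtat{u}{i}}{\subtat{v}{i}} \leq 2^{-n}$.

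For the source, if $A = \emptyset$ there is nothing to prove; otherwise every step strictly before $\kappa_0$ is discarded, so applying \reflem{redseq-disj} to the convergent section $\redupto{\reda}{\kappa_0}$ (convergent by \reflem{redseq-proper-section-convergent}) gives $\subtat{src(\reda)}{i} = \subtat{tgt(\redupto{\reda}{\kappa_0})}{i} = \subtat{src(\redel{\reda}{\kappa_0})}{i} = src(\redel{\redai}{0})$, using \reflem{redseq-section-src-tgt-coherence}. Successor coherence~(\ref{it:dfn-sred-successor-coherence}) is handled the same way: the steps strictly between $\kappa_\alpha$ and $\kappa_{\alpha+1}$ are all discarded, so \reflem{redseq-disj} applied to the section $\redsublt{\reda}{\kappa_\alpha+1}{\kappa_{\alpha+1}}$, whose source and target are $tgt(\redel{\reda}{\kappa_\alpha})$ and $src(\redel{\reda}{\kappa_{\alpha+1}})$, yields $tgt(\redel{\redai}{\alpha}) = src(\redel{\redai}{\alpha+1})$.

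The limit conditions are the delicate part. Fix a limit ordinal $\beta_0 \leq \seqln{A}$ and put $\lambda \eqdef sup(\set{\kappa_\alpha \setsthat \alpha < \beta_0})$; by construction of $A$ this is a limit ordinal with $\lambda \leq \kappa_{\beta_0}$ when $\beta_0 < \seqln{A}$, and $\lambda \leq \redln{\reda}$, where $\reda$ guarantees the existence~(\ref{it:dfn-sred-limit-existence}) and depth~(\ref{it:dfn-sred-depth}) conditions at $\lambda$ (by convergence of $\reda$ if $\lambda = \redln{\reda}$, and by \reflem{redseq-proper-section-convergent} together with well-formedness if $\lambda < \redln{\reda}$). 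The depth condition for $\redai$ at $\beta_0$ follows because projection lowers depth by exactly one, $\sdepth{\redel{\redai}{\alpha}} = \sdepth{\redel{\reda}{\kappa_\alpha}} - 1$, and the $\kappa_\alpha$ are cofinal in $\lambda$. For existence, let $w$ be the limit of $\langle tgt(\redel{\reda}{\gamma}) \rangle_{\gamma < \lambda}$ (equal to $src(\redel{\reda}{\lambda})$ if $\lambda < \redln{\reda}$, and to $tgt(\reda)$ if $\lambda = \redln{\reda}$); since $\langle \kappa_\alpha \rangle$ is cofinal in $\lambda$, the subsequence $\langle tgt(\redel{\reda}{\kappa_\alpha}) \rangle$ tends to $w$ as well, and non-expansiveness of $\subtat{\cdot}{i}$ gives $\lim_{\alpha \to \beta_0} tgt(\redel{\redai}{\alpha}) = \subtat{w}{i}$. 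For coherence~(\ref{it:dfn-sred-limit-coherence}), relevant only when $\beta_0 < \seqln{A}$, minimality of $\kappa_{\beta_0}$ makes every step in the range $[\lambda, \kappa_{\beta_0})$ discarded, so \reflem{redseq-disj} gives $\subtat{w}{i} = \subtat{src(\redel{\reda}{\lambda})}{i} = \subtat{src(\redel{\reda}{\kappa_{\beta_0}})}{i} = src(\redel{\redai}{\beta_0})$.

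Finally, convergence and the target identity follow by the same tools, splitting on the order type of $A$: if $A = \emptyset$ or $\seqln{A}$ is a successor, then $\redai$ is convergent outright, and \reflem{redseq-disj} on the final block of discarded steps (those after the last kept step, or all steps when $A = \emptyset$) gives $tgt(\redai) = \subtat{tgt(\reda)}{i}$; if $\seqln{A}$ is a limit ordinal, the existence and depth computations above with $\beta_0 = \seqln{A}$ already show $\redai$ convergent with $tgt(\redai) = \subtat{w}{i}$, and when $\lambda < \redln{\reda}$ a last application of \reflem{redseq-disj} across $[\lambda, \redln{\reda})$ identifies $\subtat{w}{i}$ with $\subtat{tgt(\reda)}{i}$. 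I expect the main obstacle to be the limit bookkeeping: checking that the discarded steps always assemble into sections to which \reflem{redseq-disj} applies, and matching the limit of the projected targets with the projection of the appropriate term, which hinges on the cofinality of $\langle \kappa_\alpha \rangle$ in $\lambda$ and on the case distinction $\lambda < \redln{\reda}$ versus $\lambda = \redln{\reda}$.
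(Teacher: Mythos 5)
Your proof is correct, and it reaches the conclusion by a genuinely different route than the paper. The paper argues by transfinite induction on $\seqln{A}$, exploiting the commutation $\redupto{(\proj{\reda}{i})}{\alpha} = \proj{\redupto{\reda}{\seqel{A}{\alpha}}}{i}$ so that the induction hypothesis delivers well-formedness and the source/target identities for every proper prefix of the projection; successor coherence and the limit conditions are then read off from the induction hypothesis (with a further sub-case analysis on whether the last index is zero, a successor or a limit), and only in the final limit case does the paper introduce the ordinal $\beta$ --- defined as $min(B)$ where $B$ is the set of indices lying above all kept ones, which is precisely your $\lambda = sup(\set{\seqel{A}{\alpha'} \setsthat \alpha' < \beta_0})$ --- and run the cofinality-plus-distance argument that you give. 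You instead verify the clauses of \refdfn{sred} directly and uniformly, with no induction: successor coherence and the source identity come from \reflem{redseq-disj} applied to the blocks of discarded steps (legitimate, as you note, because $\mind{\reda} > 0$ forces every discarded redex position to be disjoint from $i$), and all limit points $\beta_0 \leq \seqln{A}$ are treated at once via cofinality of $\langle \seqel{A}{\alpha} \rangle_{\alpha < \beta_0}$ in $\lambda$ together with non-expansiveness of $u \mapsto \subtat{u}{i}$. Both proofs rest on the same ingredients --- \reflem{repl-homo}, \reflem{redseq-disj}, \reflem{redseq-proper-section-convergent}, \reflem{redseq-section-src-tgt-coherence} and the metric estimates --- so the difference is architectural. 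The paper's induction keeps every verification local to a prefix already known to behave well, at the price of the prefix-commutation bookkeeping and the nested case analyses; your version is shorter and more symmetric, at the price of applying \reflem{redseq-disj}, and the source/target coherence, to arbitrary mid-sections $\redsublt{\reda}{\gamma}{\delta}$ rather than only to prefixes $\redupto{\reda}{\alpha}$. That generalization is never stated as a lemma in the paper, but the paper itself uses it implicitly in its own successor and limit cases, and you correctly flag its convergence preconditions as the bookkeeping to be discharged --- from well-formedness of $\reda$ when the section ends strictly below $\redln{\reda}$, and from convergence of $\reda$ when it reaches $\redln{\reda}$ --- so this is detail to spell out, not a gap.
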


\begin{proof}
Let $A$ be the sequence of positions of steps in $\reda$ at or below position $i$. We proceed by induction on $\redln{\redai} = \seqln{A}$.

Assume $A$ is empty, so that $\redai = \redid{\subtat{src(\reda)}{i}}$.
Then just Dfn.~\ref{dfn:sred} implies immediately that $\redai$ is a well-formed and convergent \redseq, and Dfn.~\ref{dfn:redseq-src} that $src(\redai) = \subtat{src(\reda)}{i}$.
If $\reda$ is convergent, then observe that $A$ being empty implies $\RPos{\redel{\reda}{\alpha}} \disj i$ for all $\alpha < \redln{\reda}$; recall $\mind{\reda} > 0$. Then Lem~\ref{rsl:redseq-disj} implies $\subtat{tgt(\reda)}{i} = \subtat{src(\reda)}{i} = tgt(\redai)$. Thus we conclude.

\medskip
Assume that $\seqln{A} = \alpha + 1$, \ie, $\seqln{A}$ is a successor ordinal.
Observe that $\redupto{(\redai)}{\alpha} = \proj{\redupto{\reda}{\seqel{A}{\alpha}}}{i}$, and that Lem.~\ref{rsl:redseq-proper-section-convergent} implies that $\redupto{\reda}{\seqel{A}{\alpha}}$ is convergent.
Then \ih\ on $\redupto{\reda}{\seqel{A}{\alpha}}$ yields that $\redupto{(\redai)}{\alpha}$ is a well-formed and convergent \redseq, that $src(\redai) = src(\redupto{(\redai)}{\alpha}) = \subtat{src(\reda)}{i}$, and that $tgt(\redupto{(\redai)}{\alpha}) = \subtat{tgt(\redupto{\reda}{\seqel{A}{\alpha}})}{i} = \subtat{src(\redel{\reda}{\seqel{A}{\alpha}})}{i}$, \confer\ Lem.~\ref{rsl:redseq-section-src-tgt-coherence}.
On the other hand, $src(\redel{(\redai)}{\alpha}) = \subtat{src(\redel{\reda}{\seqel{A}{\alpha}})}{i}$. 

We verify that the conditions in Dfn.~\ref{dfn:sred} hold for $\redai$. The analysis depends on $\alpha$. \\[5pt]
\begin{tabular}{@{$\ \ \bullet\quad$}p{.9\textwidth}}
If $\alpha = 0$, then $\redupto{\redai}{\alpha} = \redid{\subtat{src(\reda)}{i}}$. In this case, conditions~(\ref{it:dfn-sred-successor-coherence}) and (\ref{it:dfn-sred-limit}) hold immediately. 
\\[2pt]
If $\alpha = \alpha' + 1$, then $\redupto{(\redai)}{\alpha}$ being a well-formed \redseq\ implies that condition~(\ref{it:dfn-sred-successor-coherence}) holds for all $\alpha_0$ such that $\alpha_0 + 1 < \alpha$; \ie\ for all needed indexes but $\alpha'$. In turn, 
$tgt(\redel{(\redai)}{\alpha'}) = tgt(\redupto{(\redai)}{\alpha}) = \subtat{src(\redel{\reda}{\seqel{A}{\alpha}})}{i} = src(\redel{(\redai)}{\alpha}) = src(\redel{(\redai)}{\alpha'+ 1})$. On the other hand, $\redupto{(\redai)}{\alpha}$ being well-formed implies also that condition~(\ref{it:dfn-sred-limit}) holds for $\redai$; indeed, $\alpha_0 < (\alpha' + 1) + 1$ and $\alpha_0$ limit implies $\alpha_0 < \alpha' + 1$.
\\[2pt]
If $\alpha$ is a limit ordinal, then $\redupto{(\redai)}{\alpha}$ being a well-formed \redseq\ implies that condition~(\ref{it:dfn-sred-successor-coherence}) holds for $\redai$; notice $\alpha_0 + 1 < \alpha + 1$ implies $\alpha_0 < \alpha$, so that $\alpha$ limit implies in turn $\alpha_0 + 1 < \alpha$.
Furthermore, $\redupto{(\redai)}{\alpha}$ being convergent implies that conditions~(\ref{it:dfn-sred-limit-existence}) and (\ref{it:dfn-sred-depth}) hold for all $\alpha_0$ limit ordinals verifying $\alpha_0 < \alpha + 1$, particularly for $\alpha$; and also that condition~(\ref{it:dfn-sred-limit-coherence}) holds for all limit $\alpha_0 < \alpha$. 
In turn, $\lim_{\alpha' \to \alpha} tgt(\redel{(\redai)}{\alpha'}) = tgt(\redupto{(\redai)}{\alpha}) = \subtat{src(\redel{\reda}{\seqel{A}{\alpha}})}{i} = src(\redel{(\redai)}{\alpha})$, so that condition~(\ref{it:dfn-sred-limit-coherence}) to hold also for $\alpha$
\end{tabular} \\[5pt]
Hence, in either case, we have verified that $\redai$ is a well-formed \redseq.
In turn, $\redln{\redai} = \seqln{A}$ being a successor ordinal implies immediately that $\redai$ is convergent. 

If $\reda$ is convergent, then we must verify $tgt(\redai) = \subtat{tgt(\reda)}{i}$.
Let $\redel{(\redai)}{\alpha} = \langle t_i, p, \mu \rangle$ where $\redel{\reda}{\seqel{A}{\alpha}} = \langle f(t_1, \ldots, t_i, \ldots, t_m), ip, \mu \rangle$.
Then $tgt(\redai) = tgt(\redel{(\redai)}{\alpha}) = \repl{t_i}{s}{p}$ for some term $s$, and $tgt(\redel{\reda}{\seqel{A}{\alpha}}) = \repl{f(t_1, \ldots, t_i, \ldots, t_m)}{s}{ip} = $ \\ $f(t_1, \ldots, \repl{t_i}{s}{p}, \ldots, t_m)$, \confer\ Lem.~\ref{rsl:repl-homo}, therefore $tgt(\redai) = \subtat{tgt(\redel{\reda}{\seqel{A}{\alpha}})}{i}$.
If $\redln{\reda} = \seqel{A}{\alpha} + 1$, then $tgt(\reda) = tgt(\redel{\reda}{\seqel{A}{\alpha}})$.
Otherwise, for all $\alpha'$ verifying $\seqel{A}{\alpha} < \alpha' < \redln{\reda}$, it is immediate that $\RPos{\redel{\reda}{\alpha'}} \disj i$. Then Lem.~\ref{rsl:redseq-disj} implies $\subtat{tgt(\redsublt{\reda}{\seqel{A}{\alpha} + 1}{\redln{\reda}})}{i} = \subtat{src(\redsublt{\reda}{\seqel{A}{\alpha} + 1}{\redln{\reda}})}{i}$.
In either case, $\subtat{tgt(\reda)}{i} = \subtat{tgt(\redel{\reda}{\seqel{A}{\alpha}})}{i} = tgt(\redai)$.
Thus we conclude.

\medskip
Assume that $\alpha \eqdef \seqln{A}$ is a limit ordinal.

Let $\alpha'$ such that $\alpha' + 1 < \alpha$, then $\alpha$ limit implies $\alpha' + 2 < \alpha$.
Therefore \ih\ can be applied to obtain that $\redupto{(\redai)}{\alpha' + 2}$ is a well-formed \redseq, implying that $src(\redel{(\redai)}{\alpha' + 1}) = tgt(\redel{(\redai)}{\alpha'})$. 
Consequently, $\redai$ verifies condition~(\ref{it:dfn-sred-successor-coherence}) in Dfn.~\ref{dfn:sred}.

Let $\alpha_0$ be a limit ordinal verifying $\alpha_0 < \alpha$.
Observe that $\seqel{A}{\alpha_0} < \redln{\reda}$, then Lem.~\ref{rsl:redseq-proper-section-convergent} implies that $\redupto{\reda}{\seqel{A}{\alpha_0}}$ is convergent.
We apply \ih\ to obtain that $\redupto{(\redai)}{\alpha_0}$ is a well-formed and convergent \redseq. Therefore conditions~(\ref{it:dfn-sred-limit-existence}) and (\ref{it:dfn-sred-depth}) hold for $\redai$ \wrt\ $\alpha_0$.
Moreover $\lim_{\alpha' \to \alpha_0} tgt(\redel{(\redai)}{\alpha'}) = tgt(\redupto{(\redai)}{\alpha_0}) = src(\redel{(\redai)}{\alpha_0}$, \confer\ Dfn.~\ref{dfn:redseq-tgt} and Lem.~\ref{rsl:redseq-section-src-tgt-coherence} resp.. Hence $\redai$ enjoys condition~(\ref{it:dfn-sred-limit-coherence}) \wrt\ $\alpha_0$ as well.

Consequently, $\redai$ is a well-formed \redseq.
Observe that $src(\redai) = src(\redel{(\redai)}{0}) = src(\proj{\redupto{\reda}{\seqel{A}{1}}}{i})$. Since obviously $1 < \alpha$, we can use \ih\ to obtain $src(\redai) = \subtat{src(\redupto{\reda}{\seqel{A}{1}})}{i} = \subtat{src(\reda)}{i}$.

Assume that $\reda$ is convergent. 
Let $B \eqdef \set{\beta' \setsthat \beta' < \redln{\reda} \,\land\, \seqel{A}{\alpha'} < \beta' \textforall \alpha' < \alpha}$.
We define $\beta$ as follows: $\beta \eqdef \redln{\reda}$ if $B$ is empty, and $\beta \eqdef min(B)$ otherwise.
Assume for contradiction that $\beta = \beta' + 1$ for some $\beta'$.
If $B$ is empty, so that $\redln{\reda} = \beta' + 1$, then $\beta' \notin B$ implies the existence of some $\alpha' < \alpha$ such that $\beta' \leq \seqel{A}{\alpha'}$ and then $\beta' < \seqel{A}{\alpha' + 1}$, contradicting $\seqel{A}{\alpha'+1} < \redln{\reda}$.
Otherwise $\beta = min(B)$, implying that $\beta' \leq \seqel{A}{\alpha'}$ for some $\alpha' < \alpha$. But this would imply $\beta \leq \seqel{A}{\alpha' + 1}$, contradicting $\beta \in B$.
Consequently, $\beta$ is a limit ordinal.

We verify conditions~(\ref{it:dfn-sred-limit-existence}) and (\ref{it:dfn-sred-depth}) for $\redai$ \wrt\ $\alpha$. \\[5pt]
\begin{tabular}{@{$\ \ \bullet\quad$}p{.9\textwidth}}
To verify condition~(\ref{it:dfn-sred-limit-existence}), it is enough to show that $\lim_{\alpha' \to \alpha} tgt(\redel{(\redai)}{\alpha'}) = \subtat{u}{i}$, where $u = \lim_{\beta' \to \beta} tgt(\redel{\reda}{\beta'}) = tgt(\redupto{\reda}{\beta})$.
Let $n < \omega$, and $\beta_n < \beta$ such that $\tdist{tgt(\redel{\reda}{\beta'})}{u} < 2^{-(n+1)}$, implying $\tdist{\subtat{tgt(\redel{\reda}{\beta'})}{i}}{\subtat{u}{i}} < 2^{-n}$, if $\beta_n < \beta' < \beta$.
Then $\beta_n < \beta$ implies that $\beta_n \leq \seqel{A}{\alpha_n}$ for some $\alpha_n < \alpha$, then $\alpha_n < \alpha' < \alpha$ implies $
\tdist{tgt(\redel{(\redai)}{\alpha'})}{\subtat{u}{i}} < 2^{-n}$, recalling that $tgt(\redel{(\redai)}{\alpha'}) = \subtat{tgt(\redel{\reda}{\seqel{A}{\alpha}})}{i}$.
Consequently, $\lim_{\alpha' \to \alpha} tgt(\redel{(\redai)}{\alpha'}) = \subtat{tgt(\redupto{\reda}{\beta})}{i}$, and then $\redai$ verifies condition~(\ref{it:dfn-sred-limit-existence}) \wrt\ $\alpha$. 
\\[2pt]
Let $n < \omega$, let $\beta_n < \beta$ such that $\sdepth{\redel{\reda}{\beta'}} > n + 1$ if $\beta_n < \beta' < \beta$. By an argument similar to that used for condition~(\ref{it:dfn-sred-limit-existence}), we obtain the existence of some $\alpha_n < \alpha$ such that $\sdepth{\redel{\reda}{\seqel{A}{\alpha'}}} > n + 1$, implying $\sdepth{\redel{(\redai)}{\alpha'}} > n$, if $\alpha_n < \alpha' < \alpha$.
Consequently, $\redai$ verifies condition~(\ref{it:dfn-sred-depth}) for $\alpha$.
\end{tabular} \\[5pt]
Hence, $\redai$ is a convergent \redseq.
In turn, Dfn.~\ref{dfn:redseq-tgt} yields $tgt(\redai) = \lim_{\alpha' \to \alpha} tgt(\redel{(\redai)}{\alpha'})$, then we have already verified that $tgt(\redai) = \subtat{tgt(\redupto{\reda}{\beta})}{i}$.
If $\beta = \redln{\reda}$, then immediately $tgt(\redai) = \subtat{tgt(\reda)}{i}$.
Otherwise, it is immediate to observe that $\RPos{\redel{\reda}{\beta'}} \disj i$ if $\beta \leq \beta' < \redln{\reda}$. 
Hence $tgt(\redai) 
  = \subtat{tgt(\redupto{\reda}{\beta})}{i}
	= \subtat{src(\redsublt{\reda}{\beta}{\redln{\reda}})}{i}
	= \subtat{tgt(\redsublt{\reda}{\beta}{\redln{\reda}})}{i}
	= \subtat{tgt(\reda)}{i}$; 
by already obtained result, Lem.~\ref{rsl:redseq-section-src-tgt-coherence} (recall $src(\redsublt{\reda}{\beta}{\redln{\reda}}) = src(\redel{\reda}{\beta})$, Lem.~\ref{rsl:redseq-disj}, and simple analysis of Dfn.~\ref{dfn:redseq-tgt} resp..
Thus we conclude.
\end{proof}

\medskip
The following result extends the idea of a projection of a \redseq\ from arguments of function symbols to arguments of contexts.

\begin{lemma}
\label{rsl:redseq-respects-src-tgt}
Let $C$ a context having exactly $m$ holes, and $C[t_1, \ldots, t_m] \infredx{\reda} u$, such that for all $\alpha < \redln{\reda}$, there exists some $i$ verifying $1 \leq i \leq m$ and $\BPos{C}{i} \leq \RPos{\redel{\reda}{\alpha}}$.
Then $u = C[u_1, \ldots, u_m]$ and for all $i$ such that $1 \leq i \leq m$, there is a \redseq\ $\reda_i$ verifying $t_i \infredx{\reda_i} u_i$.
\end{lemma}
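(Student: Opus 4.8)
The plan is to reduce the statement to the single--function--symbol case already settled in Lem.~\ref{rsl:proj-redseq-well-defined}, proceeding by induction on $d \eqdef max(\set{\posln{\BPos{C}{i}} \setsthat 1 \leq i \leq m})$, the maximal depth of a hole of $C$. Since $C$ has exactly $m$ holes, each at a finite depth, $d$ is a well-defined natural number, and this is an ordinary induction on $\Nat$ (not on the, possibly infinite, structure of $C$). In the base case $d = 0$ the unique hole sits at $\epsilon$, forcing $C = \Box$ and $m = 1$; then $C[t_1] = t_1$, so taking $\reda_1 \eqdef \reda$ and $u_1 \eqdef u$ settles the claim.

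For the inductive step $d > 0$, no hole lies at $\epsilon$, hence $C(\epsilon) = f/k$ is a function symbol and $C = f(C_1, \ldots, C_k)$ with each $C_j$ a context. Every step of $\reda$ lies at or below some hole, and every hole has depth $\geq 1$, so $\mind{\reda} > 0$ and Lem.~\ref{rsl:proj-redseq-well-defined} applies: for each $j$ the projection $\proj{\reda}{j}$ is a well-formed convergent \redseq\ with $src(\proj{\reda}{j}) = \subtat{C[t_1,\ldots,t_m]}{j}$ and $tgt(\proj{\reda}{j}) = \subtat{u}{j}$. I would first record that $\subtat{C[t_1,\ldots,t_m]}{j} = C_j[\vec{t}_j]$, where $\vec{t}_j$ collects the $t_i$ whose holes lie in branch $j$; this is immediate from Dfn.~\ref{dfn:ctx-repl}. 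Next I would check that $\proj{\reda}{j}$ satisfies the hypothesis of the lemma relative to $C_j$: a step of $\proj{\reda}{j}$ at position $p$ stems from a step of $\reda$ at $jp$, which by hypothesis is at or below some $\BPos{C}{i}$; as $\BPos{C}{i} \neq \epsilon$ this hole must start with $j$, say $\BPos{C}{i} = j\,\BPos{C_j}{i'}$, whence $\BPos{C_j}{i'} \leq p$. Since the holes of $C_j$ have depth $\leq d - 1$, the \ih\ yields $\subtat{u}{j} = C_j[\vec{u}_j]$ together with reduction sequences witnessing $t_i \infred u_i$ for every hole $i$ of branch $j$. Branches containing no hole are handled directly: there $\proj{\reda}{j}$ is empty, so $\subtat{u}{j} = \subtat{C[t_1,\ldots,t_m]}{j} = C_j$ is unchanged.

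It remains to reassemble. Because $\mind{\reda} > 0$, Lem.~\ref{rsl:redseq-mind-big-src-tgt} (with $n = 0$) gives $u(\epsilon) = C[t_1,\ldots,t_m](\epsilon) = f$, so by Prop.~\ref{rsl:term-then-intuitive-notation} $u = f(\subtat{u}{1}, \ldots, \subtat{u}{k})$. Combining this with the branchwise equalities $\subtat{u}{j} = C_j[\vec{u}_j]$ and the homomorphism property $f(C_1,\ldots,C_k)[\vec u] = f(C_1[\vec u_1], \ldots, C_k[\vec u_k])$ of context replacement --- which is routine from Dfn.~\ref{dfn:ctx-repl} --- yields $u = C[u_1, \ldots, u_m]$, while the $\reda_i$ are exactly the sequences produced branchwise by the \ih. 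I expect the main obstacle to be purely organisational rather than mathematical: the hole ordering of Dfn.~\ref{dfn:vpos} is breadth-first, so the global indices $1,\ldots,m$ of $C$ interleave the branches $C_1,\ldots,C_k$ by depth and do not decompose as a concatenation of the local orderings. Since the statement only asserts the \emph{existence} of a decomposition and of the $\reda_i$, I would sidestep this by verifying $u = C[u_1,\ldots,u_m]$ positionwise through the homomorphism equality above, rather than trying to match the linear orderings of holes explicitly.
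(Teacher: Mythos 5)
Your proof is correct and follows exactly the route the paper intends: the paper's own proof is a one-line sketch (``straightforward induction on $max \set{\posln{\BPos{C}{i}}}$, resorting on Lem.~\ref{rsl:proj-redseq-well-defined} for the inductive case''), and your argument is a faithful, carefully detailed expansion of that sketch, including the reassembly step and the hole-ordering subtlety that the paper leaves implicit.
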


\begin{proof}
Straightforward induction on $max \set{\posln{\BPos{C}{i}} \setsthat 1 \leq i \leq m}$, resorting on Lem.~\ref{rsl:proj-redseq-well-defined} for the inductive case.
\end{proof}

\medskip
Two properties about normalisation follow.

\begin{lemma}
\label{rsl:orthogonal-leading-head-steps-to-nf}
Let $\trst$ an orthogonal \TRS, and $t,s,u$ terms such that $t \infredx{\redc} u$, $t \sredx{\reda} s$, $u$ is a normal form, and $\sdepth{\redel{\reda}{i}} = 0$ for all $i < \redln{\reda}$.
Then $s \infredx{\redc'} u$ for some \redseq\ $\redc'$.
\end{lemma}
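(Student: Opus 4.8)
The plan is to begin by observing that $\reda$ must be \emph{finite}. Every step of $\reda$ has depth $0$, so condition~(\ref{it:dfn-sred-depth}) of Dfn.~\ref{dfn:sred} can hold at no limit ordinal $\leq \redln{\reda}$: for $n=0$ it would demand steps of positive depth arbitrarily close to the limit. Hence $\redln{\reda} < \omega$, and a routine induction on $\redln{\reda}$ reduces the statement to a single head step. So I would assume $\reda$ is just $\stepa = \langle t, \epsilon, \mu, \sigma \rangle$ with $\mu : l \to r$, whence $t = \sigma l$ and $s = \sigma r$, and the task is to build $\redc' : \sigma r \infred u$ from $\redc : \sigma l \infred u$.

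The core is an analysis of $\redc$ by orthogonality. Since $\trst$ is orthogonal and $t = \sigma l$, no redex of $t$ can lie at a pattern position of the head redex other than the root: a redex at $p \in \PPos{l}$ with $p \neq \epsilon$ is precisely the overlap forbidden by the definition of orthogonality (take $l_1 = l$, $p \in \PPos{l_1}$). Thus every step of $\redc$ is either a head step or lies strictly below a variable position of $l$, i.e.\ inside some substitution part $\sigma x$. Steps below a variable position never change the symbols at the finite-depth positions $\PPos{l}$ (by Dfn.~\ref{dfn:repl}, with persistence through limits underwritten by Lem.~\ref{rsl:redseq-disj} and eventual stability of finite-depth structure), so the root pattern $l$ survives along any head-step-free prefix of $\redc$. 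Were $\redc$ entirely free of head steps, the limit $u$ would still match $l$ at the root and so contain a head redex, contradicting $u$ being a normal form. Hence $\redc$ has a first head step, say at index $\alpha_0$; I set $\redc_1 \eqdef \redupto{\redc}{\alpha_0}$ and $\redc_2 \eqdef \redsublt{\redc}{\alpha_0+1}{\redln{\redc}}$. By the above, $tgt(\redc_1) = \sigma'' l$, where $\sigma''$ is defined by $\sigma'' x \eqdef \subtat{(tgt(\redc_1))}{v_x}$ at the variable position $v_x$ of each $x$; and orthogonality forces $\redel{\redc}{\alpha_0}$ to use the very rule $\mu$ with the induced substitution $\sigma''$, so its target is $\sigma'' r$ and $\redc_2 : \sigma'' r \infred u$.

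It then remains to produce $\sigma r \infred \sigma'' r$, after which $\redc' : \sigma r \infred \sigma'' r \infred u$ follows by concatenating with $\redc_2$. Writing $C = \patt{l}$, the linear term $\sigma l$ equals $C[\sigma x_1, \ldots, \sigma x_k]$ with one hole at each variable position; as all steps of $\redc_1$ lie below these holes, Lem.~\ref{rsl:redseq-respects-src-tgt} yields reductions $\sigma x_j \infred \sigma'' x_j$ for each $j$ (the two descriptions of $\sigma'' x_j$ agree by Lem.~\ref{rsl:ctx-repl-composition}). Transferring these reductions to the occurrences of the same variables inside $r$ — placing $\sigma x_j \infred \sigma'' x_j$ into each hole of the context $\patt{r}$ — produces $\sigma r = (\patt{r})[\ldots \sigma x_j \ldots] \infred (\patt{r})[\ldots \sigma'' x_j \ldots] = \sigma'' r$, as required.

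I expect the main obstacle to be exactly this last transfer, which is the converse of Lem.~\ref{rsl:redseq-respects-src-tgt}: from reductions $v_i \infred w_i$ in the arguments one must assemble a single \emph{well-formed, strongly convergent} \redseq\ $D[v_1, \ldots] \infred D[w_1, \ldots]$ inside a context $D$. This is delicate infinitarily, because the right-hand side $r$ of an iTRS rule may be infinite with infinitely many variable occurrences at unbounded depth, so the sub-reductions must be interleaved so as to respect condition~(\ref{it:dfn-sred-depth}) at every limit; I would isolate this as a separate lemma. The secondary points — persistence of the root pattern through the limit ordinals of $\redc_1$, and the orthogonality argument identifying the rule of the first head step — are routine but must be made precise. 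The construction delivers $s \infred u$ directly, so the $UN^\infty$ property is not strictly needed here, though it confirms that the $u$ so reached is forced.
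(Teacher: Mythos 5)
Your proof is correct in outline and shares the paper's skeleton --- reduce by induction on the (finite) length of $\reda$ to a single head step, observe that a $\redc$ containing no root step would leave $u$ an instance of $l$ and hence a redex, split $\redc$ at its first root step, and use orthogonality to force that step to be the $\mu$-step with the induced substitution --- but it takes a genuinely different route at the one point where the infinitary content is concentrated. The paper bridges from $\sigma r$ to the tail of $\redc$ by citing the \emph{Strip Lemma} for orthogonal \iTRSs\ from the literature, combined with the fact that a head redex has exactly one residual (again a head redex, with the same rule) after any reduction of positive minimal depth; the bottom edge of that strip diagram is exactly the reduction $\sigma r \infred \sigma'' r$ that you build by hand. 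Your route instead stays inside the paper's own toolkit: project the root-step-free prefix onto the variable positions via Lem.~\ref{rsl:redseq-respects-src-tgt}, then re-embed the reductions $\sigma x_j \infred \sigma'' x_j$ into the holes of $\patt{r}$. What this buys is self-containedness, since no residual theory needs to be imported; what it costs is precisely the lemma you flag and defer: assembling strongly convergent reductions placed in the holes of a context into a single well-formed convergent \redseq. That embedding is the infinitary core of the Strip Lemma instance being used, and it is not fully routine: $r$ may be infinite and a variable of $l$ may occur infinitely often in it, so the assembled sequence has infinitely many blocks, and both condition~(\ref{it:dfn-sred-depth}) and condition~(\ref{it:dfn-sred-limit-existence}) of Dfn.~\ref{dfn:sred} must be verified at every limit ordinal (ordering the holes by increasing depth, and using that a convergent \redseq\ has only finitely many steps of depth $\leq n$ for each $n$, makes this go through). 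It is, however, the same kind of construction the paper performs informally inside the proofs of Lem.~\ref{rsl:fnsymbol-convergence} and Lem.~\ref{rsl:rulesymbol-convergence}, so your plan is coherent with the paper's methods; your explicit argument that $\reda$ must be finite is also slightly more than the paper offers, which merely asserts it.
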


\begin{proof}
We proceed by induction on $\redln{\reda}$, observe that $\reda$ is finite, and then only numeral induction is needed.
If $\redln{\reda} = 0$, \ie\ $\reda$ is the empty reduction for $t$, then $s = t$ so that we conclude by taking $\redc' \eqdef \redc$.

Assume $\redln{\reda} = n + 1$, so that $t \sstepx{\stepa} s_0 \sredx{\reda'} s$ where $\stepa = \langle t, \epsilon, \mu \rangle$ for some rule $\mu: l[x_1, \ldots, x_m] \to h$, and $\redln{\reda'} = n$.

We will resort to a result presented and proved in \eg\ \cite{orthogonal-itrs-95} and \cite{terese}, where it is called \emph{Strip Lemma}%
\footnote{in \cite{orthogonal-itrs-90}, a preliminary version of \cite{orthogonal-itrs-95}, the same property is called \emph{Parallel Moves Lemma}}%
. This result implies that whenever $t \infredx{\redb} t'$ and $t \sstepx{\stepb} s_0$, then $t' \infredx{\stepb_r} s'$ and $s_0 \infredx{\redb_r} s'$, 
where $\stepb_r$ is the \emph{residual} of $\stepb$ \emph{after} $\redb$%
\footnote{the statement in \cite{terese}, and also in \cite{orthogonal-itrs-90}, describe also the nature of $\redb_r$. We will not give the details here since they are not needed for this proof.}
. The result of the lemma can be described graphically as follows: \\[5pt]
\newcommand{\arInfR}{
\ar@{*{}*@{-}*{\hspace{-3mm}\scriptscriptstyle >\hspace{-1mm}>\hspace{-1mm}>}}[r]
}
\newcommand{\arInfD}{
\ar@{*{}*@{-}*{\textnormal{
\rotatebox{270}{$\hspace{-4mm}\scriptscriptstyle >\hspace{-1mm}>\hspace{-1mm}>$}
}}}[d]
}
$\xymatrix@C=12mm@R=12mm{
t \arInfR^{\redb} \ar[d]_{\stepb} & t' \arInfD^{\stepb_r} \\
s_0 \arInfR_{\redb_r} & s'
}$
%

While we will not include here the formal definition of residual, we mention a feature valid for orthogonal \TRSs\ which is crucial for this proof.
Assume $\stepb = \langle t, \epsilon, \mu \rangle$ such that $\mu: l \to h$, and $\stepc = \langle t, p, \nu \rangle$ where $p \neq \epsilon$ and $t \sstepx{\stepc} v$. Then $t = l[t_1, \ldots, t_m]$, $q \leq p$ for some $q$ such that $l(q) \in \thevar$, and therefore $v = l[t'_1, \ldots, t'_m]$.
In this case, there is exactly one residual of $\stepb$ after $\stepc$, namely $\langle v, \epsilon, \mu \rangle$.
This property carries on for the residual of $\stepb$ after a reduction $\redd$ where $\mind{\redd} > 0$, even if $\redln{\redd}$ is a limit ordinal.
Graphically: \\[5pt]
$
\xymatrix@C=18mm@R=15mm{
t \ar[r]^{\stepc \,=\,  \langle t, p, \, \nu \rangle} \ar[d]_{\stepb \,=\, \langle t, \epsilon, \, \mu \rangle} &
v \ar[d]^{\langle v, \epsilon, \, \mu \rangle} 
\\
t' & w'
}
\qquad \qquad
\xymatrix@C=18mm@R=15mm{
t \arInfR^{\redd} \ar[d]_{\stepb \,=\, \langle t, \epsilon, \, \mu \rangle} &
v \ar[d]^{\langle v, \epsilon, \, \mu \rangle} 
\\
t' & w'
}
$

\medskip
We return to the proof. Observe that $t = l[v_1, \ldots, v_m]$ since $\langle t, \epsilon, \mu \rangle$ is a redex. 
Then a simple transfinite induction yields that $\redc$ not including any root step would imply $u = l[v'_1, \ldots, v'_m]$, contradicting that $u$ is a normal form.
Let $\alpha$ be the minimum index corresponding to a root step in $\redc$.
Then the described property of residuals implies that $\stepa$ has exacly one residual after $\redupto{\redc}{\alpha}$, which is $\stepa' \eqdef \langle t_\alpha, \epsilon, \mu \rangle$ where $t_\alpha$ is the target term of $\redupto{\redc}{\alpha}$.
Moreover, $\redel{\redc}{\alpha}$ being a root step implies that the rule used in that step is also $\mu$, \ie\ $\redel{\redc}{\alpha} = \langle t_\alpha, \epsilon, \mu \rangle = \stepa'$. Therefore we can build the following graphic: \\[5pt]
$
\xymatrix@C=28mm@R=15mm{
t \arInfR^{\redupto{\redc}{\alpha}} \ar[d]_{\stepa \,=\, \langle t, \, \epsilon, \, \mu \rangle} &
t_{\alpha} \ar[r]^{\redel{\redc}{\alpha} \,=\, \langle t_\alpha, \, \epsilon, \, \mu \rangle}\ar[d]_{\stepa' \,=\, \langle t_\alpha, \, \epsilon, \, \mu \rangle} &
t_{\alpha + 1} \arInfR^{\redsublt{\redc}{\alpha+1}{\redln{\redc}}} \ar@{=}[d] &
u
\\
s_0 \arInfR_{\redc_1} &
t_{\alpha+1} \ar@{=}[r] & 
t_{\alpha+1} \arInfR_{\redsublt{\redc}{\alpha+1}{\redln{\redc}}} &
u
}
$ 

Hence \ih\ on $s_0 \sredx{\reda'} s$ suffices to conclude.
\end{proof}

\begin{proposition}
\label{rsl:disjoint-then-isn}
Let \trst\ be a disjoint \TRS\ which does not include collapsing rules. Then \trst\ has the property $SN^\infty$.%
\footnote{I guess that this property can be generalised to any \TRS\ in which the sets of \textbf{head} symbols of lhss and rhss are disjoint, with exactly the same proof. I don't know whether change the statement of the proposition, which is used through this text only for disjoint \TRSs.}
\end{proposition}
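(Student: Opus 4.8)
The plan is to argue that every well-formed \redseq\ from an arbitrary term is in fact \emph{convergent}; since a divergent \redseq\ is by definition one that fails to be convergent, this shows that no term admits a divergent \redseq, \ie\ that \trst\ is $SN^\infty$. By Dfn.~\ref{dfn:sred}, \redseqs\ of finite or successor length are automatically convergent, so the whole difficulty concentrates on a \redseq\ $\reda$ of limit length $\beta$, for which I must establish the limit-existence condition~(\ref{it:dfn-sred-limit-existence}) and the depth condition~(\ref{it:dfn-sred-depth}) \emph{at $\beta$ itself}.

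The cornerstone observation, and precisely the place where disjointness and the absence of collapsing rules are used, is that any \redseq\ contains \textbf{at most one root step}. Indeed, if $\redel{\reda}{\alpha}$ is a step at the root with rule $\mu : l \to r$, then $r \notin \thevar$ (non-collapsing), so the head symbol of $tgt(\redel{\reda}{\alpha})$ is a function symbol occurring in a right-hand side; by disjointness it is not the head symbol of any left-hand side. Between root steps all steps have depth $\geq 1$, so by Lem.~\ref{rsl:redseq-mind-big-src-tgt} the head symbol is preserved along such a section (the distance being $< 2^0 = 1$). Hence after the first root step the head can never again match a left-hand side, and no further root step is possible.

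From here I would prove, by induction on $n < \omega$, the statement $S(n)$: for every well-formed \redseq\ $\reda$ of limit length there is $\beta' < \redln{\reda}$ such that $\sdepth{\redel{\reda}{\alpha}} > n$ whenever $\beta' < \alpha < \redln{\reda}$. The base case $S(0)$ is immediate from the at-most-one-root-step property. For the inductive step, I first discard the possible root step, obtaining a tail $\redb \eqdef \redsublt{\reda}{\gamma}{\redln{\reda}}$ with $\mind{\redb} > 0$ and limit length; its source has the form $f(s_1, \ldots, s_m)$ with $m \geq 1$, since otherwise the infinite tail could perform no step. For each argument $i$ I form the projection $\proj{\redb}{i}$, which by Lem.~\ref{rsl:proj-redseq-well-defined} is a well-formed \redseq\ on $s_i$ in which every step has depth one less than the corresponding step of $\redb$. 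Each $i$ is of one of two kinds: either the steps of $\redb$ inside argument $i$ are bounded below $\redln{\redb}$, giving a bound $\gamma_i$ past which argument $i$ is untouched; or they are cofinal, in which case $\proj{\redb}{i}$ has limit length and the \ih\ $S(n)$ supplies a point after which its steps have depth $> n$, \ie\ a point $\gamma_i$ in $\redb$ past which every step inside argument $i$ has depth $> n+1$. Taking $\gamma^*$ to be the maximum of the finitely many $\gamma_i$ (still below $\redln{\redb}$), every step of $\redb$ after $\gamma^*$ lies in some argument and hence has depth $> n + 1$; translating $\gamma^*$ back through $\gamma$ yields the bound required for $S(n+1)$.

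The family $S(n)$, for all $n$, is exactly condition~(\ref{it:dfn-sred-depth}) at the top limit $\beta$. Condition~(\ref{it:dfn-sred-limit-existence}) then follows routinely: given $n$, choosing $\beta'$ as in $S(n)$ makes every section of $\reda$ beyond $\beta'$ have minimum activity depth $> n$, so by Lem.~\ref{rsl:redseq-section-src-tgt-coherence} together with Lem.~\ref{rsl:redseq-mind-big-src-tgt} the targets $\langle tgt(\redel{\reda}{\alpha}) \rangle$ form a Cauchy sequence and therefore converge in the complete space $Ter^\infty(\Sigma)$. Hence every \redseq\ is convergent and \trst\ is $SN^\infty$. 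I expect the main obstacle to be the inductive step: keeping the ordinal bookkeeping straight when passing between $\redb$ and its projections (in particular the fact that a cofinal set of step-indices of a limit ordinal again has limit order type, so that $S(n)$ is applicable) and then translating the resulting bounds back to indices of the original sequence.
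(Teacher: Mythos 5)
Your proof is correct, but it takes a genuinely different route from the paper's. The paper's proof is a counting argument applied at every depth: in a section all of whose steps have depth at least $n$, each step leaves at its redex position the head symbol of a right-hand side, which by \emph{full} disjointness occurs nowhere in any left-hand side, so no redex can be created at depth $n$; since the source term of that section has only finitely many positions at depth $n$, only finitely many steps of depth exactly $n$ can remain, and an induction on $n$ (the auxiliary implication~(\ref{eq:disjoint-sn-01})) yields the depth condition at every limit ordinal. You instead use the hypotheses only at the root, proving that a \redseq\ contains at most one root step, and propagate this fact downwards by induction on $n$ through the projection lemma (Lem.~\ref{rsl:proj-redseq-well-defined}), closing with a Cauchy-completeness argument for condition~(\ref{it:dfn-sred-limit-existence}). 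Each approach buys something. The paper's proof is shorter and self-contained, needing none of the projection machinery. Yours has two genuine advantages: first, it explicitly discharges the limit-existence condition~(\ref{it:dfn-sred-limit-existence}), which the paper's proof silently identifies with the depth condition~(\ref{it:dfn-sred-depth}); second, it uses disjointness only through the fact that no right-hand-side head symbol is the head symbol of a left-hand side, so it actually proves the generalisation conjectured in the footnote to the proposition --- and, notably, the paper's own proof does \emph{not} extend verbatim to that setting, because under mere head-symbol disjointness a step of depth $\geq n$ \emph{can} create a redex at depth $n$: with the rule $f(g(x)) \to g(x)$, contracting the inner redex of $f(f(g(a)))$ creates a root redex, which breaks the paper's non-creation claim. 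What remains for you to write out, at the paper's own level of rigour, are the routine facts you already flagged: well-formedness, convergence and source--target coherence of non-initial sections $\redsublt{\reda}{\alpha}{\beta}$ (Lem.~\ref{rsl:redseq-proper-section-convergent} and Lem.~\ref{rsl:redseq-section-src-tgt-coherence} are stated for initial sections only, though both your root-step argument and the paper itself need the general form), the order-type bookkeeping for projections, and the cofinality-$\omega$ observation needed to turn your transfinite Cauchy condition into an actual limit term.
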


\begin{proof}
First we prove the following auxiliary result: for any \redseq\ \reda, limit ordinal $\beta$ such that $\beta \leq \redln{\reda}$, and $n < \omega$, 
\begin{equation}
\begin{array}{rl}
\textif & \exists \beta_1 < \beta \sthat \forall i \ (\beta_1 < i < \beta \textnormal{ implies } \sdepth{\redel{\reda}{i}} \geq n) \\ 
\textthen & \exists \beta' < \beta \sthat \forall i \ (\beta' < i' < \beta \textnormal{ implies } \sdepth{\redel{\reda}{i'}} > n)
\end{array}
\label{eq:disjoint-sn-01}
\end{equation}
Assume for any $\reda$, $\beta$ and $n$ that the premise holds.
The term $src(\redel{\reda}{\beta}) = tgt(\redsublt{\reda}{\beta_1}{\beta})$ can include only a finite number of redexes at depth $n$.
Additionally, the hypothesis yields that any reduction step included in $\redsublt{\reda}{\beta_1}{\beta}$, say $\redel{\reda}{j}$, satisfies $\sdepth{\redel{\reda}{j}} \geq n$, and moreover leaves at its redex position (\confer\ \refdfn{step})
a symbol not being the head symbol of a left-hand side, since $T$ is disjoint and it does not include collapsing rules.
Therefore, 
no redex occurrence can be created at depth $n$, implying that any reduction step at depth exactly $n$ included in $\redsublt{\reda}{\beta_1}{\beta}$ must correspond to a redex occurrence already included in $src(\redel{\reda}{\beta_1})$ and being at the same position.
Consequently, if we call $k$ the number of steps at depth exactly $n$ included in $\redsublt{\reda}{\beta_1}{\beta}$, we obtain $k < \omega$.
Thus we conclude the proof of the auxiliary result by taking $\beta'$ to be the ordinal such that $\redel{\reda}{\beta'}$ is the last of such steps if $k > 0$, and $\beta' \eqdef \beta_1$ if $k = 0$.

\medskip
Now we prove, for any \redseq\ \reda\ in $T$, that \reda\ is convergent; \ie\ that for any $n < \omega$ and $\beta$ limit ordinal such that $\beta \leq \redln{\reda}$, 
\begin{equation}
\exists \beta' < \beta \sthat \forall i \ (\beta' < i < \beta \textnormal{ implies } \sdepth{\redel{\reda}{i}} > n)
\label{eq:disjoint-sn-02}
\end{equation}
We conclude the proof of the proposition by proving \refeqn{disjoint-sn-02} by induction on $n$.
If $n = 0$, then the premise of \refeqn{disjoint-sn-01} holds taking $\beta_1 = 0$, then we conclude by \refeqn{disjoint-sn-01}.
If $n > 0$, then the premise of \refeqn{disjoint-sn-01} holds for some $\beta_1$ by \ih\ of \refeqn{disjoint-sn-02} considering $n - 1$ instead of $n$, then we conclude again by \refeqn{disjoint-sn-01}.
\end{proof}

\newpage
\section{Proof terms}
\label{sec:pterm}
The intent of the definition of proof terms is to provide a tool to formally denote, or witness, \redseqs\ in infinitary rewriting. 
Proof terms are, indeed, terms, in a signature extending that of the \iTRS\ whose \redseqs\ are to be described.
This \TRS\ will be referred to as the \textbf{object} \TRS\; we will also use the terminology `object terms' and `object \redseqs' analogously.
As already noticed, the scope of this work is limited to \emph{left-linear} \iTRSs.

The proof terms for infinitary rewriting we introduce in this Section generalise the definition given in \cite{terese} for finitary first-order rewriting, \confer\ their Dfn. 8.2.18.
The idea of using terms to denote reduction sequences has been proposed also for simply-typed lambda-calculus in \cite{Hilken96}, and for higher-order rewriting in \cite{bruggink2008}.

For each proof term we define: its \emph{source} and \emph{target} which are object terms, if it is \emph{convergent}, and its \emph{minimum depth}. All these concepts refer to the \redseqs\ which are denoted by the proof term.

\bigskip
In this section, a formal definition of the set of infinitary proof terms for a given \iTRS\ will be given. 
Then a simplified transfinite induction principle on the set of valid proof terms is given. The form of induction we introduce allows for simpler proofs for many properties to be verified in the rest of this work. Also, we will verify that proof terms enjoy some basic properties.

The definition of the set of proof terms is extensive, because it is given in two different stages, and also some auxiliary notions need to be defined simultaneously.
Therefore, we give firstly an informal introduction to the idea of proof term, and how it is used to describe the reduction space of a \TRS.

\bigskip
For each reduction rule in the object TRS, a \emph{rule symbol} is introduced in the signature for proof terms. The arity of a rule symbol coincides with the number of different variables occurring in the left-hand side of the rule it represents.
\Eg, the signature of proof terms for a first-order \TRS\ \trst\ including the rules $f(x) \to g(x)$, $h(j(x),j(y)) \to f(x)$ and $g(x) \to k(x)$ adds the rule symbols $\mu/1$, $\nu/2$ and $\rho/1$, corresponding respectively to each of the described rules. 
We describe some valid proof terms along with the \trst-reductions they denote $\mu(a) : f(a) \to g(a)$, $g(\nu(a,b)) : g(h(j(a),j(b))) \to g(f(a))$, $h(\mu(a), \mu(b)) : h(f(a),f(b)) \sred h(g(a),g(b))$, $\nu(\mu(a),b) : h(j(f(a)),j(b)) \sred f(g(a))$.

In the infinitary setting, infinite proof terms denote \redseqs\ involving infinite terms, and/or having infinite length. 
We give some examples of infinite proof terms corresponding to the the \TRS\ \trst\ introduced in the previous paragraph: 
$\mu(j\om) : f(j\om) \to g(j\om)$, $\mu\om : f\om \infred g\om$%
\footnote{In the following, a formal way to compute the source and target corresponding to any proof term will be developed.}
.

\medskip
Proof terms, as described up to this point, can be used to denote arbitrarily complex \emph{developments}, \ie\ \redseqs\ in which all the contracted redexes are present in its source term.
On the other hand, dealing with the contraction of redexes which are \emph{created} by previous steps in a \redseq\ require the idea of \emph{concatenation}, or \emph{composition}, to be taken into account in the definition of proof terms.
This proposal takes from \cite{terese} the idea of describing concatenation by means of a binary symbol which is added to the signature of proof terms. This symbol is called ``the dot'', because of its graphical representation, \ie\ $\comp$.

Some examples of proof terms including occurrences of the dot follow:
$\mu(a) \comp \rho(a) : f(a) \to g(a) \to k(a)$, 
$\nu(\mu(b),c) \comp \mu(\rho(b)) \comp \rho(k(b)) : 
		h(j(f(b)), j(c)) \sred f(g(b)) \sred g(k(b)) \sred k(k(b))$, 
$\mu\om \comp \rho\om : f\om \infred g\om \infred k\om$.
As the concatenation symbol have no special ``status'' in the signature, it can be freely combined with rule as well as object symbols, \eg\
$j(\mu(a) \comp \rho(a)) : j(f(a)) \sred j(k(a))$ denotes a two-step contraction being ``local'' to the argument of the $j$ symbol, while $\nu(\mu(a) \comp \rho(a), b) : h(j(f(a)),j(b)) \sred f(k(a))$ denote a parallelism between an outer step and the concatenation of two inner steps.

We observe that not any term in the extended signature correspond to a valid proof term. 
Each occurrence of the dot imposes a coherence condition: (the \redseqs\ corresponding to) its operands must be composable. 
\Eg\ neither $\mu(a) \comp \nu(a,a)$ nor $\mu(a) \comp \rho(b)$ are valid proof terms, because the step $f(a) \to g(a)$ is not left-composable, neither with $h(j(a),j(a)) \to f(a)$, nor with $g(b) \to h(b)$.
Therefore, some rules must be provided in order to specify the subset of valid proof terms out of the set of all terms corresponding to the extended signature. As suggested by the just given example, these rules will be related with the occurrences of the dot.

\medskip
We want to stress that the denotational capabilities of proof terms allow for a great variety in the description of reductions.
Particularly, parallel/nested steps can be explicitly described, and thus differentiated from its sequential counterparts.
\Eg, the proof terms $\mu(f(a)) \comp g(\mu(a))$ and $\mu(\mu(a))$ are different, so that the model of reductions given by proof terms allow to recognise $f(f(a)) \to g(f(a)) \to g(g(a))$ and $f(f(a)) \mulstep g(g(a))$ as different objects in the reduction space of the same \TRS.
Furthermore, as we have already observed, proof terms allow to combine in different ways the concatenation symbol with the other symbols in the extended signature.
This capability brings new ways to differentiate subtly different reductions, by describing them using different proof terms.
These considerations motivate the following assertion: proof terms denote different forms of \emph{contraction activity}, a concept broader than that of \emph{\redseq}.

We claim that proof terms as a way to describe contraction activity allow for a very detailed study of the reduction/derivation space of a calculus.

\subsection{Multisteps}
\label{sec:mstep}
\label{sec:mstep-defs}
Since the restrictions on the set of valid proof terms pertain to the dot occurrences, ``dotless'' proof terms form the foundation from which the definition of proof terms is built. 

We will give the name \emph{multistep} to any proof terms without dot occurrences.
As we have discussed in the informal introduction, multisteps correspond to sets of coinitial redexes. 
We have also seen that sequencing is explicitly denoted in proof terms by means of the concatenation symbol, \ie\ the dot.
Therefore, multisteps are intended to denote the contraction activity consisting in the \textbf{simultaneous} contraction of a set of redexes, \ie\ a multistep, \confer\ \cite{terese}, Dfn. 4.5.11.. Hence the name we have given to the proof terms to be defined next.

In the sequel, we define the set of \emph{\imsteps}, along with some basic features of a multistep, namely: how to determine its \emph{source} and \emph{target} terms, whether it is \emph{convergent} or not, and its \emph{minimum activity depth}.
These concepts are needed to properly define the restrictions to be imposed to occurrences of the dot in the general definition of the set of proof terms.

\begin{definition}[Signature for multisteps]
\label{dfn:sigma-r}
Let $T = \pair{\Sigma}{R}$ be a (either finitary or infinitary) TRS. We define the signature for the \imsteps\ over $T$ as follows:
$\Sigma^R := 
\Sigma \cup 
\set{ \mu/n \setsthat \mu : l \to r \in R \land |FV(l)| = n }$ 
.
\end{definition}

\begin{definition}[\Imsteps] 
\label{dfn:imstep}
The set of \imsteps\ for an iTRS $T \pair{\Sigma}{R}$ is exactly the set of the closed (\confer\ Dfn.\ref{dfn:closed-linear}) terms%
\footnote{By restricting \imsteps, and later proof terms (\confer\ Sec.~\ref{sec:pterm}) to be closed terms, we follow the idea expressed in \cite{terese}, Remark~8.2.21 (pg. 324): ``Since here we are interested in \peqence, we may simply assume that reductions/proof terms are closed.''.
Moreover, this decision simplifies our treatment of \peqence\ given in Sec.~\ref{sec:peqence}.
}
in $Ter^\infty(\Sigma^R)$.
\end{definition}

To define the source and target terms of a multistep, we define `companion' ad-hoc iTRSs; \confer\ \refsec{trs}.

\begin{definition}[$src_T$, $tgt_T$]
\label{dfn:srct-tgtt}
Let $T = \pair{\Sigma}{R}$ be a (either finitary or infinitary) TRS. We define the TRSs $src_T$ and $tgt_T$ as follows.
The signature of both $src_T$ and $tgt_T$ is $\Sigma^R$.
The rules of $src_T$ are $\set{\mu(x_1, \ldots, x_n) \to l[x_1, \ldots, x_n] \setsthat \mu : l \to r \in R}$.
The rules of $tgt_T$ are $\set{\mu(x_1, \ldots, x_n) \to r[x_1, \ldots, x_n] \setsthat \mu : l \to r \in R}$.
\end{definition}

We remark that for any object TRS $T$, both $src_T$ and $tgt_T$ are orthogonal and disjoint; moreover, $src_T$ does not include collapsing rules, since the lhs of a reduction rule cannot be a variable (\confer\ \refdfn{trs}). 
Therefore, both $src_T$ and $tgt_T$ enjoy the property $UN^\infty$ (\confer\ the comment about $UN^\infty$ in \refsec{trs}) and $src_T$ enjoys also $SN^\infty$ (\confer\ \refprop{disjoint-then-isn}).
Consequently, any \imstep\ has exactly one $src_T$-normal form, and at most one $tgt_T$-normal form.
This observations entail the soundness of the following definition.

\begin{definition}[Source and target of an \imstep]
\label{dfn:src-tgt-imstep}
Let $\psi$ be an \imstep. 
Then we define $src(\psi)$ to be the $src_T$-normal form of $\psi$.
Moreover, if $\psi$ is weakly normalising in $tgt_T$, we define $tgt(\psi)$ to be the corresponding normal form; otherwise, $tgt(\psi)$ is undefined.
\end{definition}

For the kind of 
\multistepsAfterDevelopments{object \redseqs}
\multistepsIndependent{contraction activity}
we intend to denote with \imsteps, it is correct to identify convergence with existence of target. Formally:

\begin{definition}[Convergent \imsteps]
\label{dfn:imstep-convergence}
An \imstep\ $\psi$ is convergent iff $\tgtt(\psi)$ is defined.
\end{definition}

\begin{definition}[Minimum activity depth of an \imstep]
\label{dfn:dmin-imstep}
Let $\psi$ be an \imstep. 
We define the minimum activity depth of $\psi$, notation $\mind{\psi}$, as follows. \\
If $\psi$ does not include occurrences of rule symbols, \ie\ if it is a term in $Ter^\infty(\Sigma)$, then $\mind{\psi} \eqdef \omega$. \\
Otherwise $\mind{\psi}$ is the minimum $n$ such that exists at least one position $p$ verifying $\psi(p) = \mu$ where $\mu$ is a rule symbol, and $n = \posln{p}$. This case admits an equivalent inductive definition based on \refnotation{term-intuitive-notation}: \\
\hspace*{1cm}
$
\begin{array}{rcl}
	\mind{f(\psi_1 \ldots \psi_n)} & := & 1 + min(\mind{\psi_1} \ldots \mind{\psi_n}) \\
	\mind{\mu(\psi_1 \ldots \psi_n)} & := & 0 
\end{array}
$
\end{definition}

\multistepsAfterDevelopments{As examples of the definitions we have just given, we show \imsteps\ corresponding to each \orthoredexset\ in the examples of \refsec{dev-max-dev}, along with the computation of their source and target terms.
We recall the rules of the object iTRS considered: }
\multistepsIndependent{In the following, we will give some examples of \imsteps. We will consider the following object rules: }
$\mu: f(i(x),y) \to h(y)$, $\nu: g(x) \to x$, $\rho: a \to b$, $\pi: m(x) \to n(x)$, $\tau: n(x) \to f(x,x)$. Then the rules of the companion iTRSs are \\
$src_T$:
$\mu(x,y) \to f(i(x),y)$ \quad $\nu(x) \to g(x)$ \quad $\rho \to a$ \quad
$\pi(x) \to m(x)$ \quad $\tau(x) \to n(x)$ \\
$tgt_T$:
$\mu(x,y) \to h(y)$ \quad $\nu(x) \to x$ \quad $\rho \to b$ \quad
$\pi(x) \to n(x)$ \quad $\tau(x) \to f(x,x)$

\multistepsAfterDevelopments{Each set of redexes $A_i$ is represented by the underlined term $usrc(A_i)$.}
\multistepsIndependent{For each of the examples, we show the source term, underlining the head symbols of some of its redexes, and the \imstep\ denoting contraction of underlined redexes. Then we develop source and target computation.
To keep notation compact, we will omit parenthesis for unary symbols.}

\begingroup
\multistepsIndependent{\renewcommand{\ulnrule}[2]{\uln{#1}}}

\begin{itemize}
\item 
The \imstep\ corresponding to 
$h(\ulnrule{f}{\mu}(i\ulnrule{a}{\rho},n\ulnrule{m}{\pi} b))$ is 
$\psi_1 \eqdef h(\mu(\rho,n \pi b))$. Computation of $src(\psi_1)$ and $tgt(\psi_1)$ follow: \\
$\psi_1 = h(\mu(\rho,n \pi b)) \ssteptrs{src_T} h(f(i \rho, n \pi b)) \ssteptrs{src_T} h(f(i a, n \pi b)) \ssteptrs{src_T} h(f(i a, n m b))$ \\
$\psi_1 = h(\mu(\rho,n \pi b)) \ssteptrs{tgt_T} hhn \pi b \ssteptrs{tgt_T} hhn n b$.

\item 
$\psi_2 \eqdef \pi^\omega$ corresponds to 
${\ulnrule{m}{\pi}}^\omega$. Let us compute source and target: \\
$\psi_2 = \pi^\omega \ssteptrs{src_T} m(\pi^\omega) \ssteptrs{src_T} mm(\pi^\omega) \infredtrs{src_T} m^\omega$ \\
$\psi_2 = \pi^\omega \ssteptrs{tgt_T} n(\pi^\omega) \ssteptrs{tgt_T} nn(\pi^\omega) \infredtrs{tgt_T} n^\omega$.

\item 
$\psi_3 \eqdef \nu^\omega$ corresponds to ${\ulnrule{g}{\nu}}^\omega$. \\
The computation of source runs as in the previous case: 
$\psi_3 = \nu^\omega \infredtrs{src_T} g^\omega$.
On the other hand, the target of all $tgt_T$ redex occurrences in $\nu^\omega$ (namely, $\langle 1^i, \nu(x) \to x, \set{x \to \nu^\omega} \rangle$) is again $\nu^\omega$. Therefore $tgt(\psi_3)$ is undefined.

\item 
Finally, $\psi_4 = h(\mu(\nu^\omega, \rho))$ corresponds to  
$h(\ulnrule{f}{\mu}(i {\ulnrule{g}{\nu}}^\omega, \ulnrule{a}{\rho}))$. \\
Computation of source follows: \\
$\psi_4 = h(\mu(\nu^\omega, \rho)) \ssteptrs{src_T} h(f(i \nu^\omega, \rho)) \ssteptrs{src_T} h(f(i \nu^\omega, a)) \infredtrs{src_T} h(f(i g^\omega, a))$. \\
Many $tgt_T$ \redseqs\ from $\psi_4$ are possible, \eg: \\
$\psi_4 = h(\mu(\nu^\omega, \rho)) 
		\ssteptrs{tgt_T} hh \rho \ssteptrs{tgt_T} hhb$ \\
$\psi_4 = h(\mu(\nu^\omega, \rho)) 
		\ssteptrs{tgt_T} h(\mu(\nu^\omega, b)) 
		\infredtrs{tgt_T} h(\mu(\nu^\omega, b))
		\ssteptrs{tgt_T} hhb$ where the $i$-th step for $1 \leq i < \omega$ is $\langle h(\mu(\nu^\omega, b)), 11 \cdot 1^i, \nu(x) \to x, \set{x \eqdef \nu^\omega} \rangle$  \\
$\psi_4 = h(\mu(\nu^\omega, \rho)) 
		\infredtrs{tgt_T} h(\mu(\nu^\omega, \rho))$
where all steps are $\langle \psi_4, 11, \nu(x) \to x, \set{x \eqdef \nu^\omega} \rangle$, a divergent $tgt_T$ \redseqs. \\
Then $\psi_4$ admit both convergent and divergent \redseqs\ in $tgt_T$. As $\psi_4$ is $tgt_T$-weakly normalising, we get $tgt(\psi_4) = hhb$. 
\end{itemize}

\endgroup

\subsection{Adding dots properly}
\usingContractionActivity{%
In this section we will give the definition of the set of all legal proof terms, by taking \imsteps\ as the foundation, and giving precise rules for the addition of occurrences of the dot.

As we have discussed in the informal introduction, for a term like $\psi \comp \phi$ to be a well-defined proof term, the concatenation of the contraction activities denoted by $\psi$ and $\phi$ must make sense.
}%
\usingRedseqOnly{%
For a term like $\psi \comp \phi$ to be a well-defined proof term, it must coherently denote any \redseq\ consisting of the concatenation of a \redseq\ denoted by $\psi$ with one denoted by $\phi$.
}%
Two conditions, related with this coherence requirement, are imposed. 
\usingContractionActivity{%
Firstly, the activity denoted by $\psi$ must be \emph{convergent}, \ie, it should exist at least one way to render such activity as a convergent \redseq; this condition implies particularly that the target term of $\psi$ can be uniquely determined.
Secondly, the target term of (the activity denoted by) $\psi$ must coincide with the source term of (that corresponding to) $\phi$.
}%
\usingRedseqOnly{%
Firstly, $\psi$ must denote at least one convergent \redseq. Secondly, the target of any \redseq\ denoted by $\psi$ must coincide with the source of any \redseq\ denoted by $\phi$. 
We remark that these conditions are similar to those required for a pre\redseq\ in order to be a \redseq, \confer\ \refdfn{sred}%
\footnote{Another remark: while the first condition is unique to infinitary rewriting, the second one must be imposed to proof terms denoting finite \redseqs\ as well; \confer\ \eg\ the transitivity rule in \cite{terese}, dfn. 8.2.18.}.
}%

The need of imposing such conditions on the occurrences of the dot implies that the set of proof terms must be defined along with the source, target and convergence condition for each proof term, in a joint definition.
Convergence depends in turn of the depth of the \redseqs\ being denoted; therefore, \emph{minimal activity depth} of proof terms must be merged within the same, huge definition.

\medskip
An additional goal is to define the set of proof terms by an \emph{inductive} construction, taking the set of \imsteps\ as the base case.
By doing so, we will be able to reason about proof terms in an inductive, opposed to coinductive, fashion, taking properties about \imsteps\ as the foundation for the inductive reasonings.

Since the occurrences of the dot are defined inductively, a special treatment is needed to allow a proof term to include an infinite number of them. Such a proof term should denote the concatenation of an infinite series of \redseqs%
\usingContractionActivity{ or, more generally, of contraction activities}%
.
Therefore, special care is taken to guarantee that no component is lost in the construction of the infinite concatenation; \ie, that any component is at a finite distance from the root in the corresponding proof term.

In turn, the separate treatment of binary and infinite concatenation gives rise to potential ambiguities in the construction of a proof term%
\footnote{it is a good idea to cite \cite{Gallier86}, and/or other work, here?}.
To avoid the possibility of such ambiguities, the definition of the set of proof terms is \emph{layered}, such that the proof terms included in a layer can be built taking as components proof terms in previous layers only.

Countable ordinals are used as layers for proof terms, and each proof term belongs to exactly one layer.
Therefore, layers give a transfinite induction principle to reason about the set of valid proof terms. An alternative, simpler induction principle for proof terms is given later in this section.

\medskip
The aforementioned restrictions and considerations try to justify the intricacies of the following definitions.

\begin{definition}[Signature for proof terms]
\label{dfn:sigma-pt}
Let $T = \pair{\Sigma}{R}$ be a (either finitary or infinitary) TRS. We define the signature for the proof terms over $T$ as follows:
$\Sigma^{PT} := \Sigma^R \cup \set{\comp / 2}$ 
. \Confer\ \refdfn{sigma-r} of $\,\Sigma^R$. 
\end{definition}

\begin{definition}[\layerpterm{\alpha}, layer $\alpha$ in the definition of proof terms]
\label{dfn:layer-pterm}
Let \trst\ be a \TRS, and $\alpha$ a countable ordinal. We define \layerpterm{\alpha}, the $\alpha$-th layer in the construction of the set of proof terms for \trst, along with the source, target, convergence condition, and minimal activity depth of any proof term in \layerpterm{\alpha}.
If $\psi \in \layerpterm{\alpha}$, we will write $src(\psi)$, $tgt(\psi)$ and $\mind{\psi}$ for the source, target and minimal activity depth of $\psi$ respectively.

If $\alpha = 0$, then $\layerpterm{\alpha} \eqdef \emptyset$.
Otherwise, we proceed inductively on $\alpha$, defining \layerpterm{\alpha} to be the smallest set in $Ter^\infty(\Sigma^{PT})$ verifying the following conditions.

\begin{enumerate}
\item
\label{rule:ptmstep}
If $\alpha = 1$ and $\psi$ is an \imstep\ for \trst, then $\psi \in \layerpterm{\alpha}$. The source, target, convergence condition and minimal activity depth of $\psi$ coincide with the definitions given for \imsteps\ in \refsec{mstep-defs}.

\item
\label{rule:ptinfC}
Assume that for any $i < \omega$, $\psi_i \in \layerpterm{\alpha_i}$, such that $\alpha = \underset{i < \omega}{\Sigma} \alpha_i$; \confer\ \refdfn{ordinal-infAdd}. Moreover, assume that for all $n$, $\psi_n$ is convergent, and $tgt(\psi_n) = src(\psi_{n+1})$. \\[1mm]
\begin{tabular}{@{}p{72mm}c}
\begin{minipage}{72mm}
Then $\psi \eqdef \pair{P}{F} \in \layerpterm{\alpha}$, where \\
$P \eqdef \set{2^n \setsthat n < w} \cup (\underset{n < \omega}{\bigcup} 2^n 1 \cdot \Pos{\psi_n})$, \\
$F(2^n) \eqdef \comp$, and $F(2^n 1 p) \eqdef \psi_n(p)$. \\
A concise term notation for $\psi$ is $\icomp \psi_i$; \\
being in fact an abbreviation for 
$\psi_1 \comp (\psi_2 \comp (\psi_3 \comp \ldots))$. \\
\end{minipage}
&
\begin{minipage}{55mm}
A graphical representation is \\
\hspace*{\stretch{1}}
$\xymatrix@C=3mm@R=3mm{
& \ar[dl] \ar[dr] \cdot \\
\psi_1 & & \ar[dl] \ar[dr] \cdot \\
& \psi_2 & & \ar[dl] \ar[dr] \cdot \\
& & \psi_3 & & \ddots \\
}$
\hspace*{\stretch{1}}
\end{minipage}
\end{tabular}
We define $src(\psi) \eqdef src(\psi_0)$, $tgt(\psi) \eqdef \lim_{i \to \omega} tgt(\psi_i)$ and $\mind{\psi} \eqdef min(\mind{\psi_i}_{i < \omega})$; notice that $tgt(\psi)$ can be undefined.
We define that $\psi$ is convergent iff for all $k < \omega$, there is some $n < \omega$ such that $\mind{\psi_j} > k$ if $j > n$.

\item
\label{rule:ptbinC}
Assume that $\psi_1 \in \layerpterm{\alpha_1}$, $\psi_2 \in \layerpterm{\alpha_2}$, $\alpha_2$ is a successor ordinal, $\psi_1$ is convergent, $tgt(\psi_1) = src(\psi_2)$, and $\alpha = \alpha_1 + \alpha_2 + 1$.
Then $\psi \eqdef \pair{P}{F} \in \layerpterm{\alpha}$, where 
$P \eqdef \set{\epsilon} \cup (1 \cdot \Pos{\psi_1}) \cup (2 \cdot \Pos{\psi_2})$, 
$F(\epsilon) \eqdef \comp$, and $F(ip) \eqdef \psi_i(p)$ for $i = 1,2$.

A concise term notation for $\psi$ is $\psi_1 \comp \psi_2$. A graphical notation is
$\xymatrix@C=3mm@R=4mm{
& \ar[dl] \ar[dr] \cdot \\
\psi_1 & & \psi_2
}$

We define $src(\psi) \eqdef src(\psi_1)$, $tgt(\psi) \eqdef tgt(\psi_2)$ and $\mind{\psi} = min(\mind{\psi_1}, \mind{\psi_2})$; $\psi$ is convergent iff $\psi_2$ is.

\item
\label{rule:ptsymbol}
Assume that $\psi_i \in \layerpterm{\alpha_i}$ for $i = 1, 2, \ldots, n$, that $\alpha_i > 1$ for at least one $i$, $f/n \in \Sigma$ (resp. $\mu/n$ is a rule symbol), and $\alpha = \alpha_1 + \ldots + \alpha_n + 1$.
Then $\psi \eqdef \pair{P}{F} \in \layerpterm{\alpha}$, where 
$P \eqdef \set{\epsilon} \cup (\underset{1 \leq i \leq n}{\bigcup}i \cdot \Pos{\psi_i})$, $F(\epsilon) \eqdef f$ (resp. $F(\epsilon) \eqdef \mu$), and $F(ip) \eqdef \psi_i(p)$ for $i = 1,2,\ldots,n$.

A concise term notation for $\psi$ is $f(\psi_1, \ldots, \psi_n)$ (resp. $\mu(\psi_1, \ldots, \psi_n)$).

For $f \in \Sigma$, we define $src(\psi) = f(src(\psi_1), \ldots, src(\psi_n))$, $tgt(\psi) = f(tgt(\psi_1), \ldots, tgt(\psi_n))$, and $\mind{\psi} \eqdef 1+min(\mind{\psi_1}, \ldots, \mind{\psi_n})$. $\psi$ is convergent iff all $\psi_i$ are. We observe that $tgt(\psi)$ is undefined if at least one $tgt(\psi_i)$ is.

For $\mu$ being a rule symbol such that $\mu : l \to r$, we define $src(\psi) = l[src(\psi_1), \ldots, src(\psi_n)]$, $tgt(\psi) = r[tgt(\psi_1), \ldots, tgt(\psi_n)]$, and $\mind{\psi} \eqdef 0$. $\psi$ is convergent iff all $\psi_i$ \textbf{corresponding to some $x_i$ occurring in $r$} are. We observe that $tgt(\psi)$ is undefined if at least one $tgt(\psi_i)$ is for the $\psi_i$ already mentioned.
\end{enumerate}
\end{definition}

\begin{definition}[\setpterm, the set of proof terms]
\label{dfn:pterm}
We define the set of proof terms as follows:
$\setpterm \eqdef \underset{\alpha < \omega_1}{\bigcup} \layerpterm{\alpha}$.
\end{definition}

We notice that all proof terms are \emph{closed} terms in $Ter^\infty(\Sigma^{PT})$. This fact is a consequence of the definition of the set of \imsteps, which are the base layer in the definition of \setpterm. \Confer\ the footnote on Dfn.~\ref{dfn:imstep}.


\medskip
We will say that a proof term $\psi$ is an \emph{infinite concatenation} iff $\psi(2^n) = \comp$ for all $n < \omega$.
Observe that all infinite concatenations admit the concise term notation $\psi = \icomp \psi_i$, where $\psi_n = \subtat{\psi}{2^n 1}$.
Furthermore, $\psi$ not being an infinite concatenation implies the existence of some $n < \omega$ such that $2^n \in \Pos{\psi}$ and $\psi(2^n) \neq \comp$.

\subsection{Soundness of the definitions}
In this section we will study the definition of the set of valid proof terms in some detail, stating and proving properties related to its soundness.

\begin{lemma}
\label{rsl:ptinfC-iff-limit}
Let $\psi$, $\alpha$ such that $\psi \in \layerpterm{\alpha}$.
Then $\psi$ is an infinite concatenation iff $\alpha$ is a limit ordinal iff $\psi$ is generated by case \ref{rule:ptinfC} in Dfn.~\ref{dfn:layer-pterm}.
\end{lemma}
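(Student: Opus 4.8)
The plan is to prove the two biconditionals separately: first the equivalence of ``$\alpha$ is a limit ordinal'' with ``$\psi$ is generated by case~\ref{rule:ptinfC}'', which follows purely from the shape of the ordinal associated to each generating case; and then the equivalence of either of these with ``$\psi$ is an infinite concatenation''. I would present the ordinal equivalence first, since it needs no induction, and then invoke it inside the inductive step of the second equivalence.

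For the ordinal equivalence, suppose first that $\psi$ is generated by case~\ref{rule:ptinfC}. Then $\alpha = \underset{i<\omega}{\Sigma}\,\alpha_i$ with each $\psi_i \in \layerpterm{\alpha_i}$; since $\layerpterm{0} = \emptyset$ we have $\alpha_i \geq 1$ for every $i$, so the partial sums $\alpha_0 + \ldots + \alpha_n$ form a strictly increasing $\omega$-sequence and their supremum $\alpha$ is a limit ordinal. Conversely, inspecting the remaining cases shows that each forces $\alpha$ to be non-limit: case~\ref{rule:ptmstep} requires $\alpha = 1$, while cases~\ref{rule:ptbinC} and~\ref{rule:ptsymbol} both produce $\alpha$ of the form $\beta + 1$. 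Hence, when $\alpha$ is a limit ordinal, case~\ref{rule:ptinfC} is the only clause of Dfn.~\ref{dfn:layer-pterm} that can introduce an element into $\layerpterm{\alpha}$; since $\layerpterm{\alpha}$ is defined as the smallest set closed under these clauses, $\psi$ must have been generated by case~\ref{rule:ptinfC}.

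For the remaining equivalence, the direction from case~\ref{rule:ptinfC} to ``infinite concatenation'' is immediate from the construction, which sets $F(2^n) = \comp$ for all $n < \omega$, so that $\psi(2^n) = \comp$ and $\psi$ is an infinite concatenation by definition. For the converse I would proceed by transfinite induction on $\alpha$, assuming $\psi$ is an infinite concatenation and ruling out the other three generating cases. Cases~\ref{rule:ptmstep} and~\ref{rule:ptsymbol} are discarded at once, because there $\psi(\epsilon)$ is respectively a symbol of $\Sigma^R$ (which contains no $\comp$) or a function/rule symbol, contradicting $\psi(2^0) = \psi(\epsilon) = \comp$. The only delicate case is~\ref{rule:ptbinC}, where $\psi = \psi_1 \comp \psi_2$ with $\psi_2 \in \layerpterm{\alpha_2}$ and $\alpha_2$ a successor ordinal; here $\psi(2^{m+1}) = \psi_2(2^m)$, so $\psi$ being an infinite concatenation forces $\psi_2$ to be one as well. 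Since $\alpha_2 < \alpha$, the inductive hypothesis applies to $\psi_2$ and yields, via the ordinal equivalence already established, that $\alpha_2$ is a limit ordinal, contradicting that $\alpha_2$ is a successor.

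The main obstacle is exactly this last case: an infinite concatenation carries $\comp$ all along the rightmost spine $\set{2^n \setsthat n < \omega}$, and one must exclude the possibility that such a spine arises from iterated binary dots built by case~\ref{rule:ptbinC}. This is precisely what the successor-ordinal requirement on $\alpha_2$ in case~\ref{rule:ptbinC} prevents: through the inductive hypothesis it guarantees that the right operand of a binary dot is never itself an infinite concatenation, so the spine of $\comp$'s terminates at a finite depth unless the term was built by case~\ref{rule:ptinfC}. This matches the design intent noted before Dfn.~\ref{dfn:layer-pterm} that every component of a concatenation lie at finite distance from the root.
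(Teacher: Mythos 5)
Your proposal is correct and follows essentially the same route as the paper's proof: an induction on $\alpha$ with a case analysis on the generating clause, where the only non-trivial step is case~\ref{rule:ptbinC}, resolved by applying the inductive hypothesis to $\psi_2$ and exploiting the requirement that $\alpha_2$ be a successor ordinal (your contradiction argument is just the contrapositive of the paper's observation that $\psi_2(2^n) \neq \comp$ for some $n$, whence $\psi(2^{n+1}) \neq \comp$). Splitting off the ordinal-shape equivalence as a separate, induction-free step is a clean presentational refinement, but not a different argument.
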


\begin{proof}
We proceed by induction on $\alpha$, analysing the rules in Dfn.~\ref{dfn:layer-pterm}.

\noindent
Case \ref{rule:ptmstep}: in this case $\psi$ is an \imstep, so that $\psi(2^0) = \psi(\epsilon) \neq \comp$.

\noindent
Case \ref{rule:ptinfC}: 
in this case $\psi = \icomp \psi_i$, that is, an infinite concatenation.
It is enough to observe that $\layerpterm{0} = \emptyset$, and that $\alpha_i > 0$ for all $i$ implies that $\sum_{i < \omega} \alpha_i$ is a limit ordinal.

\noindent
Case \ref{rule:ptbinC}: 
in this case $\psi = \psi_1 \comp \psi_2$ where $\psi_i \in \layerpterm{\alpha_i}$,  $\alpha_2$ is a successor ordinal, and $\alpha = \alpha_1 + \alpha_2 + 1$, \ie\ a successor ordinal.
\ih\ on $\psi_2$ implies that $\psi_2(2^n) \neq \comp$ for some $n < \omega$.
We conclude by observing that $\psi(2^{n+1}) = \psi_2(2^n)$.

\noindent
Case \ref{rule:ptsymbol}: 
in this case it is immediate that $\psi(2^0) = \psi(\epsilon) \neq \comp$, and that $\alpha$ is a successor ordinal.
\end{proof}

\begin{lemma}
\label{rsl:ptmstep-iff-one}
Let $\psi$, $\alpha$ such that $\psi \in \layerpterm{\alpha}$.
Then $\psi$ is an \imstep\ iff $\alpha = 1$ iff $\psi$ is generated by case 1 in Dfn.~\ref{dfn:layer-pterm}.
\end{lemma}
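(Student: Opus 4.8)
The plan is to prove both biconditionals by a single transfinite induction on $\alpha$, mirroring the structure of the proof of Lemma~\ref{rsl:ptinfC-iff-limit}: for $\psi \in \layerpterm{\alpha}$ I would analyse which of the four cases of Dfn.~\ref{dfn:layer-pterm} produces $\psi$, and check in each case that the three properties — $\psi$ being an \imstep, $\alpha = 1$, and $\psi$ being generated by case~\ref{rule:ptmstep} — share the same truth value (all true in case~\ref{rule:ptmstep}, all false in the other three cases). Assembling these observations then yields the full equivalence.

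The implications that hold by fiat are immediate: if $\psi$ is generated by case~\ref{rule:ptmstep} then, by the requirements of that case, $\alpha = 1$ and $\psi$ is an \imstep. For the converse directions I would first observe that cases~\ref{rule:ptinfC}, \ref{rule:ptbinC} and \ref{rule:ptsymbol} all force $\alpha \neq 1$. Indeed, in case~\ref{rule:ptinfC} the ordinal $\alpha = \sum_{i<\omega}\alpha_i$ is a limit ordinal $\geq \omega$, since $\layerpterm{0} = \emptyset$ gives $\alpha_i \geq 1$ for every $i$; in case~\ref{rule:ptbinC} we have $\alpha = \alpha_1 + \alpha_2 + 1 \geq 2$, because $\alpha_2$ is a successor and $\alpha_1 \geq 1$; and in case~\ref{rule:ptsymbol} the hypothesis that $\alpha_i > 1$ for some $i$ forces $\alpha = \alpha_1 + \ldots + \alpha_n + 1 > 1$. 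Hence a proof term in $\layerpterm{1}$ can only have arisen from case~\ref{rule:ptmstep}, which settles the equivalence between $\alpha = 1$ and being generated by case~\ref{rule:ptmstep}.

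It remains to tie these to the property of $\psi$ being an \imstep, \ie\ being dot-free. In cases~\ref{rule:ptinfC} and \ref{rule:ptbinC} one has $\psi(\epsilon) = \comp$, so $\psi$ contains a dot and is not an \imstep, consistently with $\alpha \neq 1$. The delicate case — and the only one where the inductive hypothesis is genuinely used — is case~\ref{rule:ptsymbol}, where $\psi = f(\psi_1,\ldots,\psi_n)$ (or $\mu(\psi_1,\ldots,\psi_n)$) carries a function or rule symbol at the root, so that any dot must be hidden in a proper subterm. Here I would select an index $i$ with $\alpha_i > 1$; since $\alpha_i < \alpha$, the inductive hypothesis applies to $\psi_i \in \layerpterm{\alpha_i}$ and, as $\alpha_i \neq 1$, tells us that $\psi_i$ is not an \imstep, hence contains an occurrence of $\comp$. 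This occurrence propagates to $\psi$, so $\psi$ is not an \imstep\ either, again in agreement with $\alpha \neq 1$.

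Putting the cases together, all three properties hold in case~\ref{rule:ptmstep} and all three fail in each of cases~\ref{rule:ptinfC}, \ref{rule:ptbinC}, \ref{rule:ptsymbol}, which establishes both biconditionals. I expect case~\ref{rule:ptsymbol} to be the main obstacle, precisely because a symbol at the root does not by itself reveal whether a dot occurs deeper inside, making the inductive hypothesis on the strictly smaller layers $\alpha_i$ indispensable; the remaining cases follow immediately from the shape of the generated term together with elementary ordinal arithmetic and the fact that $\layerpterm{0} = \emptyset$.
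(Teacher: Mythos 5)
Your proposal is correct and follows essentially the same route as the paper's proof: transfinite induction on $\alpha$ with a case analysis on the generating rule of Dfn.~\ref{dfn:layer-pterm}, observing that cases~\ref{rule:ptinfC} and \ref{rule:ptbinC} are dispatched by $\psi(\epsilon) = \comp$ together with elementary ordinal facts (the paper cites Lem.~\ref{rsl:ptinfC-iff-limit} for the limit-ordinal claim you re-derive inline), and that case~\ref{rule:ptsymbol} is the one place where the inductive hypothesis is needed, applied to some $\psi_k$ with $\alpha_k > 1$ to conclude that $\psi_k$, and hence $\psi$, is not an \imstep. The only cosmetic difference is that you make explicit the propagation of the dot occurrence from $\psi_k$ to $\psi$, which the paper leaves implicit.
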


\begin{proof}
We proceed by induction on $\alpha$, analysing the rules in Dfn.~\ref{dfn:layer-pterm}.

\noindent
Case \ref{rule:ptmstep}: we conclude immediately.

\noindent
Case \ref{rule:ptinfC}: 
in this case $\psi$ is not an \imstep, observe \eg\ that $\psi(\epsilon) = \comp$, and $\alpha$ is a limit ordinal, \confer\ Lem.~\ref{rsl:ptinfC-iff-limit}. Thus we conclude.

\noindent
Case \ref{rule:ptbinC}: 
in this case $\psi$ is not an \imstep, observe \eg\ that $\psi(\epsilon) = \comp$, and $\alpha > \alpha_1 + 1 > 1$, recall $\layerpterm{0} = \emptyset$.
Thus we conclude.

\noindent
Case \ref{rule:ptsymbol}: 
in this case $\psi = f(\psi_1, \ldots, \psi_n)$ where $\psi_i \in \layerpterm{\alpha_i}$ for all $i$, and exists some $k$ such that $\alpha_k > 1$.
Observe that $\alpha > \alpha_k > 1$, then we can apply \ih\ to obtain that $\psi_k$ is not an \imstep, hence $\psi$ is neither. Thus we conclude.
\end{proof}

\medskip
The set \setpterm\ is closed by operations, formally:

\begin{proposition}[Completeness of \setpterm] 
\label{prop:compl-pt}
\hspace*{1mm} \\ \vspace{-5mm}
\begin{enumerate}
	\item \label{it:compl-pt:mul}
	If $\psi$ is an infinite multistep, then $\psi \in \setpterm$.
	
	\item \label{it:compl-pt:binC}
	If $\psi_1, \psi_2 \in \setpterm$, $\psi_1$ is convergent, and $src(\psi_2) = tgt(\psi_1)$, then $\psi_1 \comp \psi_2 \in \setpterm$.

	\item \label{it:compl-pt:infC}
	Given a sequence $\iomegaseq{\psi_i}$ such that for all $i$, $\psi_i \in \setpterm$, $\psi_i$ are convergent, and $tgt(\psi_i) = src(\psi_{i+1})$, 
	then $\icomp \psi_i \in \setpterm$.
	
	\item \label{it:compl-pt:symbol}
	If $\psi_1, \ldots, \psi_n \in \setpterm$ and $f \in \Sigma$, then $f(\psi_1, \ldots, \psi_n) \in \setpterm$.

	\item \label{it:compl-pt:rule}
	If $\psi_1, \ldots, \psi_n \in \setpterm$ and $\mu$ is a rule symbol, then $\mu(\psi_1, \ldots, \psi_n) \in \setpterm$.
\end{enumerate}
\end{proposition}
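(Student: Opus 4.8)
The plan is to show, for each of the five constructions, that the resulting term lies in some countable layer $\layerpterm{\alpha}$, appealing to the cases of Dfn.~\ref{dfn:layer-pterm} and to Lemmas~\ref{rsl:ptinfC-iff-limit} and~\ref{rsl:ptmstep-iff-one}. The only genuine difficulty is that three construction cases carry side conditions not present in the statement: case~\ref{rule:ptbinC} requires the right operand to belong to a \emph{successor} layer, and case~\ref{rule:ptsymbol} requires at least one argument to belong to a layer above $1$. The proof must therefore show that these restrictions cost no expressive power. I would prove the items in the order \ref{it:compl-pt:mul}, \ref{it:compl-pt:infC}, \ref{it:compl-pt:symbol}, \ref{it:compl-pt:rule}, \ref{it:compl-pt:binC}, since the last reuses the infinite-concatenation case.

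Items~\ref{it:compl-pt:mul} and~\ref{it:compl-pt:infC} are essentially immediate. An infinite multistep lies in $\layerpterm{1}$ by case~\ref{rule:ptmstep}. For item~\ref{it:compl-pt:infC}, each $\psi_i \in \setpterm$ belongs to some layer $\alpha_i$, and since $\layerpterm{0} = \emptyset$ we have $\alpha_i \geq 1$; hence $\alpha \eqdef \sum_{i<\omega}\alpha_i$ is a limit ordinal, and the hypotheses (each $\psi_i$ convergent and $tgt(\psi_i) = src(\psi_{i+1})$) are precisely the premises of case~\ref{rule:ptinfC}, placing $\icomp \psi_i$ in $\layerpterm{\alpha}$.

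For items~\ref{it:compl-pt:symbol} and~\ref{it:compl-pt:rule} I would split on whether every argument is an infinite multistep. If they all are, then since $f/n \in \Sigma \subseteq \Sigma^R$ (resp. $\mu/n \in \Sigma^R$) and each $\psi_i$ is a closed term of $Ter^\infty(\Sigma^R)$, the composite $f(\psi_1,\ldots,\psi_n)$ (resp. $\mu(\psi_1,\ldots,\psi_n)$) is again a closed term of $Ter^\infty(\Sigma^R)$, hence an infinite multistep, hence in $\layerpterm{1}$ by case~\ref{rule:ptmstep}. Otherwise some argument $\psi_j$ is not a multistep, so by Lemma~\ref{rsl:ptmstep-iff-one} its layer is different from $1$ and therefore strictly greater; taking for each argument its layer then satisfies the side condition of case~\ref{rule:ptsymbol}, which yields membership in $\layerpterm{\alpha_1+\ldots+\alpha_n+1}$.

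The heart of the argument is item~\ref{it:compl-pt:binC}, where $\psi_2$ lies in some layer $\alpha_2$ and I would split according to Lemma~\ref{rsl:ptinfC-iff-limit}. If $\alpha_2$ is a successor ordinal, case~\ref{rule:ptbinC} applies verbatim, its remaining premises being the hypotheses that $\psi_1$ is convergent and $tgt(\psi_1)=src(\psi_2)$, so $\psi_1\comp\psi_2 \in \layerpterm{\alpha_1+\alpha_2+1}$. If $\alpha_2$ is a limit ordinal, then by Lemma~\ref{rsl:ptinfC-iff-limit} $\psi_2$ is an infinite concatenation $\psi_2 = \icomp \phi_i$ with $\phi_i = \subtat{\psi_2}{2^i1}$, and the key step is a re-association: inspecting the underlying $\pair{P}{F}$ shows that $\psi_1\comp\psi_2$ is itself an infinite concatenation $\icomp \chi_i$ with $\chi_0 = \psi_1$ and $\chi_{n+1}=\phi_n$ (the symbol $\comp$ occurs at every $2^m$, with $\psi_1$ at $2^01$ and $\phi_n$ at $2^{n+1}1$). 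One then checks the premises of the already-proved item~\ref{it:compl-pt:infC} for $\iomegaseq{\chi_i}$: $\chi_0$ is convergent by hypothesis, each $\chi_{n+1}=\phi_n$ is convergent because $\psi_2$ was built by case~\ref{rule:ptinfC}, and the coherence chain $tgt(\chi_i)=src(\chi_{i+1})$ holds at the junction via $tgt(\psi_1)=src(\psi_2)=src(\phi_0)$ and elsewhere via the conditions recorded in case~\ref{rule:ptinfC}. This re-association, confirming that prefixing a convergent proof term to an infinite concatenation again yields a legitimate infinite concatenation, is the main obstacle; it is exactly what lets the restricted binary-dot rule together with the unrestricted infinite-dot rule cover every well-formed composition.
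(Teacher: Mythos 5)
Your proof is correct and follows essentially the same route as the paper's: an item-by-item layer analysis, with the binary-dot case split according to whether the layer of $\psi_2$ is a successor or a limit ordinal, and, in the limit case, the re-association of $\psi_1 \comp (\icomp \phi_i)$ into an infinite concatenation with components $\psi_1, \phi_0, \phi_1, \ldots$ --- exactly the paper's $\phi'_0 \eqdef \psi_1$, $\phi'_{i+1} \eqdef \phi_i$ (your detour through the already-proved item~\ref{it:compl-pt:infC} instead of invoking case~\ref{rule:ptinfC} directly is an inessential difference).

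One point where you are more careful than the paper: for items~\ref{it:compl-pt:symbol} and~\ref{it:compl-pt:rule} the paper merely states that case~\ref{rule:ptsymbol} of Dfn.~\ref{dfn:layer-pterm} applies, but that case requires $\alpha_i > 1$ for at least one argument, which fails when every $\psi_i$ is an \imstep. Your split --- if all arguments are \imsteps, the composite is itself a closed term of $Ter^\infty(\Sigma^R)$ and lands in $\layerpterm{1}$ by case~\ref{rule:ptmstep}; otherwise Lem.~\ref{rsl:ptmstep-iff-one} yields some $\alpha_j > 1$ and case~\ref{rule:ptsymbol} applies --- closes that small gap in the paper's own justification.
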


\begin{proof}
We prove each item separately, referring to cases in Dfn.~\ref{dfn:layer-pterm}. 

\noindent
Item \ref{it:compl-pt:mul}: 
in this case $\psi \in \layerpterm{1}$, this is immediate from case \ref{rule:ptmstep}.

\noindent
Item \ref{it:compl-pt:binC}: 
Let $\alpha_1$, $\alpha_2$ such that $\psi_i \in \layerpterm{\alpha_i}$ for $i = 1,2$.
If $\alpha_2$ is a successor ordinal, then $\psi_1 \comp \psi_2 \in \layerpterm{\alpha_1 + \alpha_2 + 1} \subseteq \setpterm$.
If $\alpha_2$ is a limit ordinal, then Lem.~\ref{rsl:ptinfC-iff-limit} implies that $\psi_2 = \icomp \phi_i$, where for all $i$, $\phi_i$ is convergent and $tgt(\phi_i) = src(\phi_{i+1})$; \confer\ case~\ref{rule:ptinfC}.
On the other hand, hypotheses imply that $\psi_1$ is convergent and $tgt(\psi_1) = src(\psi_2) = src(\phi_0)$.
Then $\psi_1 \comp \psi_2 \in \layerpterm{\alpha_1 + \alpha_2}$, again by case~\ref{rule:ptinfC}. Observe that $\psi_1 \comp \psi_2 = \psi_1 \comp (\icomp \phi_i) = \icomp \phi'_i$ where $\phi'_0 \eqdef \psi_1$ and $\phi'_{i+1} \eqdef \phi_i$ for all $i < \omega$.

\noindent
Item \ref{it:compl-pt:infC}:
we conclude just by observing that case~\ref{rule:ptinfC} implies that $\icomp \psi_i \in \layerpterm{\beta}$, where $\psi_i \in \layerpterm{\alpha_i}$ for all $i < \omega$ and $\beta \eqdef \sum_{i < \omega} \alpha_i$.

\noindent
Item \ref{it:compl-pt:symbol} and Item \ref{it:compl-pt:rule}:
it is enough to observe that case~\ref{rule:ptsymbol} applies.
\end{proof}

\medskip
Now we prove uniqueness of formation, \wrt\ the layered definition, for any valid proof term.

\begin{lemma}
\label{rsl:pterm-layer-uniqueness}
Let $\psi \in \setpterm$. Then there exists a unique $\alpha$ such that $\psi \in \layerpterm{\alpha}$, and moreover there is exactly one case in Dfn.~\ref{dfn:layer-pterm} justifying $\psi \in \layerpterm{\alpha}$.
\end{lemma}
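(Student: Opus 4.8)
The plan is to prove both assertions simultaneously by transfinite induction on $\alpha$, driven by the observation that the \emph{generating case} is fixed by three structural features of $\psi$ that refer to no layer at all: the head symbol $\psi(\epsilon)$, whether $\psi$ contains an occurrence of $\comp$ (equivalently, whether it is an \imstep), and whether $\psi$ is an infinite concatenation. First I would record which case yields which kind of term. By Lem.~\ref{rsl:ptmstep-iff-one}, case~\ref{rule:ptmstep} produces exactly the \imsteps, i.e.\ the terms with no occurrence of $\comp$; by Lem.~\ref{rsl:ptinfC-iff-limit}, case~\ref{rule:ptinfC} produces exactly the infinite concatenations; case~\ref{rule:ptbinC} produces the terms with $\psi(\epsilon) = \comp$ that are not infinite concatenations; and case~\ref{rule:ptsymbol} produces the terms with $\psi(\epsilon) \in \Sigma^R$ (so $\psi(\epsilon) \neq \comp$) that nevertheless contain at least one $\comp$, since the side-condition $\alpha_i > 1$ for some $i$ means, again by Lem.~\ref{rsl:ptmstep-iff-one}, that the corresponding component is not an \imstep and hence carries a dot. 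These four descriptions, expressed purely through the head symbol, the presence of a dot, and the infinite-concatenation property, are pairwise exclusive and jointly exhaustive; hence for a fixed $\alpha$ at most one case can justify $\psi \in \layerpterm{\alpha}$, which disposes of the ``exactly one case'' part once uniqueness of $\alpha$ is established.

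For uniqueness of $\alpha$ I would argue as follows. Suppose $\psi \in \layerpterm{\alpha}$ and $\psi \in \layerpterm{\alpha'}$. The structural features above force the \emph{same} generating case in both layers, and in each case the immediate components are recovered from $\psi$ by reading off subterms at fixed positions: $\subtat{\psi}{i}$ in cases~\ref{rule:ptbinC} and \ref{rule:ptsymbol}, and $\subtat{\psi}{2^n 1}$ in case~\ref{rule:ptinfC}. Thus the decompositions used in layer $\alpha$ and in layer $\alpha'$ agree component-by-component (and the number of components, being the arity of the head symbol, agrees too). Each component $\psi_i$ lies in some layer $\alpha_i < \alpha$, so the induction hypothesis applies and $\psi_i$ has a unique layer; consequently the ordinal $\alpha_i'$ attached to $\psi_i$ in the layer-$\alpha'$ decomposition coincides with $\alpha_i$. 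Since $\alpha$ is produced from the $\alpha_i$ by a fixed recipe in each case ($\alpha = 1$ for \ref{rule:ptmstep}, $\alpha = \sum_{i<\omega}\alpha_i$ for \ref{rule:ptinfC}, $\alpha = \alpha_1 + \alpha_2 + 1$ for \ref{rule:ptbinC}, and $\alpha = \alpha_1 + \ldots + \alpha_n + 1$ for \ref{rule:ptsymbol}), and $\alpha'$ arises from the same recipe applied to the same ordinals, we conclude $\alpha = \alpha'$.

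The one step needing genuine care, and the main obstacle, is the strict decrease $\alpha_i < \alpha$ in the infinite-concatenation case, where $\alpha = \sum_{i<\omega}\alpha_i$ is a supremum rather than a term built by $+1$. Here I would use that $\layerpterm{0} = \emptyset$, so each $\alpha_i \geq 1$; then for any fixed $k$ we have $\alpha_0 + \ldots + \alpha_k \geq \alpha_k$, and adding the positive ordinal $\alpha_{k+1}$ gives $\alpha_0 + \ldots + \alpha_{k+1} > \alpha_k$, whence $\alpha = \sum_{i<\omega}\alpha_i \geq \alpha_0 + \ldots + \alpha_{k+1} > \alpha_k$. This is the same circle of elementary ordinal facts underlying Lem.~\ref{rsl:ordinal-lt-infAdd-then-unique-representation}. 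With the strict decrease secured in all cases, the base of the induction is $\alpha = 1$ (the \imsteps, unique by Lem.~\ref{rsl:ptmstep-iff-one}), and the induction closes. I expect the only remaining bookkeeping to be routine: checking that the positions $i$ and $2^n 1$ genuinely isolate the components, and that the convergence side-conditions of each case are automatically compatible with the forced decomposition.
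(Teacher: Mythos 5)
Your proof is correct and takes essentially the same route as the paper's: both arguments use Lem.~\ref{rsl:ptmstep-iff-one} and Lem.~\ref{rsl:ptinfC-iff-limit} to show that the generating case is determined by the structure of $\psi$ alone (head symbol, presence of $\comp$, infinite-concatenation property), then recover the components positionally, apply the induction hypothesis to them, and rely on the strict inequality $\alpha_i < \alpha$ (via $\layerpterm{0} = \emptyset$) for well-foundedness — including in the infinite-sum case, which you rightly single out. The only difference is bookkeeping: the paper reformulates the statement around the \emph{minimal} layer of $\psi$ and applies the induction hypothesis twice per component, whereas you induct on an arbitrary layer and apply it once; both set-ups are sound.
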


\begin{proof}
We will prove the following statement, which is equivalent to the desired result.
\begin{quote}
Let $\psi \in \setpterm$, $\alpha$ minimal for $\psi \in \layerpterm{\alpha}$, and $\beta$ such that $\psi \in \layerpterm{\beta}$.
Then $\beta = \alpha$, and there is exactly one case in Dfn.~\ref{dfn:layer-pterm} justifying $\psi \in \layerpterm{\alpha}$.
\end{quote}
We proceed by induction on $\alpha$, analysing which case in Dfn.~\ref{dfn:layer-pterm} could justify $\psi \in \layerpterm{\alpha}$.

\noindent
Case \ref{rule:ptmstep}.
In this case $\alpha = 1$ and $\psi$ is an \imstep. We conclude by Lem.~\ref{rsl:ptmstep-iff-one}.

\noindent
Case \ref{rule:ptinfC}.
In this case $\psi = \icomp \psi_i$ such that $\psi_i \in \layerpterm{\alpha_i}$ and $\alpha = \sum_{i < \omega} \alpha_i$. Observe that $\alpha > \alpha_i$ for all $i$, recall $\layerpterm{0} = \emptyset$.
Assume $\psi \in \layerpterm{\beta}$.
Lem.~\ref{rsl:ptinfC-iff-limit} implies that this assertion is generated by case \ref{rule:ptinfC}, implying that $\beta = \sum_{i < \omega} \beta_i$ and $\psi_i \in \layerpterm{\beta_i}$.
Let $i < \omega$ and $\gamma_i$ minimal for $\psi_i \in \layerpterm{\gamma_i}$.
Then $\gamma_i \leq \alpha_i < \alpha$, and therefore \ih\ can be applied twice on each $\psi_i$ obtaining $\beta_i = \alpha_i = \gamma_i$.
Thus we conclude.

\noindent
Case \ref{rule:ptbinC}.
In this case $\psi = \psi_1 \comp \psi_2$, $\alpha = \alpha_1 + \alpha_2 + 1$, $ \alpha_2$ is a successor ordinal, and $\psi_i \in \layerpterm{\alpha_i}$ for $i = 1,2$.
Then Lem.~\ref{rsl:ptinfC-iff-limit} applied to $\psi_2$ implies that it is not an infinite concatenation, thus neither is $\psi$.
On the other hand, observe that $\alpha$ is a successor ordinal verifying $\alpha > \alpha_i$ for $i = 1,2$.
Assume $\psi \in \layerpterm{\beta}$.
Then applying again Lem.~\ref{rsl:ptinfC-iff-limit} yields that this assertion is not justified by case 2 (since $\psi$ is not an infinite concatenation); therefore, the shape of $\psi$ (recall $\psi(\epsilon) = \comp$) leaves case 3 as the only valid option.
Hence $\beta = \beta_1 + \beta_2 + 1$ where $\psi_i \in \layerpterm{\beta_i}$ for $i = 1,2$.
An argument analogous to that used in the previous case, \ie\ resorting to the \ih\ on each $\psi_i$, yields $\beta_i = \alpha_i$. Thus we conclude.

\noindent
Case \ref{rule:ptsymbol}.
In this case $\psi = f(\psi_1, \ldots, \psi_m)$ and $\alpha = \alpha_1 + \ldots + \alpha_m + 1$, where $\psi_i \in \layerpterm{\alpha_i}$ for all $i$, and exists some $k$ veriyfing $\alpha_k > 1$.
Then Lem.~\ref{rsl:ptmstep-iff-one} implies that $\psi_k$ is not an \imstep, so that neither is $\psi$.
Therefore, the shape of $\psi$ (recall $\psi(\epsilon) \neq \comp$) leaves case 4 as the only valid option, implying that $\beta = \beta_1 + \ldots + \beta_m + 1$ where $\psi_i \in \layerpterm{\beta_i}$ for all $i$.
We conclude by obtaining $\beta_i = \alpha_i$ through an argument resorting to the \ih, like in the previous cases.
\end{proof}

\subsection{A simplified induction principle}
\label{sec:pterm-induction-principle}

The layered definition of \setpterm\ allows to perform inductive reasonings over proof terms, based in their concise notation. This makes an induction principle easy to work with. Formally:

\begin{proposition}[Simple induction principle for \setpterm]
\label{rsl:pterm-induction-principle}
Let $P$ an unary predicate satisfying all the following conditions: \\
\begin{tabular}{lp{125mm}}
	1. & If $\psi$ is an infinitary multistep, then $P(\psi)$ holds. \\
	2. & For all $\psi_1, \psi_2$ such that $\psi_1 \comp \psi_2 \in \setpterm$, $P(\psi_1)$ and  $P(\psi_2)$ imply $P(\psi_1 \comp \psi_2)$. \\
	3. & Given $\iomegaseq{\psi_i}$ such that $\icomp \psi_i \in \setpterm$, $P(\psi_i)$ for all $i$ imply $P(\icomp \psi_i)$. \\
	4. & For all $\psi_1, \ldots, \psi_n \in \setpterm$ and for all $f \in \Sigma$, $P(\psi_1), \ldots, P(\psi_n)$ imply $P(f(\psi_1, \ldots, \psi_n))$. \\ 
	5. & For all $\psi_1, \ldots, \psi_n \in \setpterm$ and for any rule symbol $\mu$, $P(\psi_1), \ldots, P(\psi_n)$ imply $P(\mu(\psi_1, \ldots, \psi_n))$. \\ 
\end{tabular} 

\noindent
Then $P(\psi)$ holds for all $\psi \in \setpterm$.
\end{proposition}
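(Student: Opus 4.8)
The plan is to prove the statement by transfinite induction on the layer ordinal of a proof term, exploiting the fact, established in Lem.~\ref{rsl:pterm-layer-uniqueness}, that every proof term has a \emph{canonical} layer and is produced by exactly one case of Dfn.~\ref{dfn:layer-pterm}. Concretely, I would prove by strong transfinite induction on $\alpha < \omega_1$ the auxiliary claim that $P(\psi)$ holds for every $\psi \in \layerpterm{\alpha}$; since $\setpterm = \bigcup_{\alpha < \omega_1} \layerpterm{\alpha}$ by Dfn.~\ref{dfn:pterm}, this auxiliary claim immediately gives $P(\psi)$ for all $\psi \in \setpterm$. Note the case numbering differs between the two statements: case~\ref{rule:ptmstep} of Dfn.~\ref{dfn:layer-pterm} matches hypothesis~1 on $P$, case~\ref{rule:ptinfC} (infinite concatenation) matches hypothesis~3, case~\ref{rule:ptbinC} (binary concatenation) matches hypothesis~2, and case~\ref{rule:ptsymbol} matches hypotheses~4 and~5.

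Fix $\alpha$ and assume as induction hypothesis that $P$ holds for every proof term whose canonical layer $\beta$ satisfies $\beta < \alpha$. Given $\psi \in \layerpterm{\alpha}$, Lem.~\ref{rsl:pterm-layer-uniqueness} guarantees that exactly one case of Dfn.~\ref{dfn:layer-pterm} justifies $\psi \in \layerpterm{\alpha}$, so the argument splits into these mutually exclusive cases. If $\psi$ is generated by case~\ref{rule:ptmstep}, then $\psi$ is an infinitary multistep and $P(\psi)$ holds directly by hypothesis~1. In each remaining case $\psi$ is built from components lying in strictly smaller layers: if $\psi = \icomp \psi_i$ (case~\ref{rule:ptinfC}) then $\psi_i \in \layerpterm{\alpha_i}$ with $\alpha_i < \alpha = \sum_{j<\omega}\alpha_j$; if $\psi = \psi_1 \comp \psi_2$ (case~\ref{rule:ptbinC}) then $\psi_i \in \layerpterm{\alpha_i}$ with $\alpha_i < \alpha = \alpha_1 + \alpha_2 + 1$; and if $\psi = f(\psi_1,\ldots,\psi_n)$ or $\psi = \mu(\psi_1,\ldots,\psi_n)$ (case~\ref{rule:ptsymbol}) then $\psi_i \in \layerpterm{\alpha_i}$ with $\alpha_i < \alpha = \alpha_1 + \ldots + \alpha_n + 1$. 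In each of these cases the induction hypothesis applies to every component, yielding $P(\psi_i)$ for all $i$, whereupon the matching closure condition on $P$ (hypothesis~3, 2, or 4/5 respectively) delivers $P(\psi)$. This exhausts all cases and closes the induction.

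The only genuinely delicate point, and the one where the infinitary character of the setting intervenes, is verifying that the components of an infinite concatenation $\icomp \psi_i$ lie in strictly smaller layers, i.e.\ that $\alpha_i < \sum_{j<\omega}\alpha_j$ for every $i$; this is precisely where ordinary numeral induction would fail and transfinite induction is indispensable. I would justify it as in the proof of Lem.~\ref{rsl:pterm-layer-uniqueness}: since $\layerpterm{0} = \emptyset$, every $\alpha_j$ is positive, so for each fixed $i$ the partial sum $\alpha_0 + \ldots + \alpha_{i+1}$ already strictly exceeds $\alpha_i$ (using that ordinal addition is non-decreasing in its right argument, together with $\alpha_{i+1} \geq 1$), and this partial sum is in turn bounded above by the supremum $\sum_{j<\omega}\alpha_j = \alpha$; hence $\alpha_i < \alpha$. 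The analogous bounds in the finite cases are easier, following from $\alpha_i \leq \alpha_1 + \ldots + \alpha_n < \alpha_1 + \ldots + \alpha_n + 1 = \alpha$. With these strict inequalities secured the induction hypothesis is always applicable, and nothing remains beyond reading off the five hypotheses on $P$.
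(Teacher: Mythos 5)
Your proof is correct and takes essentially the same route as the paper's: transfinite induction on the layer $\alpha$ with $\psi \in \layerpterm{\alpha}$, a case split over the formation rules of Dfn.~\ref{dfn:layer-pterm}, the observation that all components lie in strictly smaller layers (the infinite-concatenation case resting on $\layerpterm{0} = \emptyset$, hence $\alpha_i < \sum_{j<\omega}\alpha_j$), and then the matching closure condition on $P$. The only cosmetic difference is that the paper organizes its cases by the ordinal type of $\alpha$ (equal to $1$, successor, limit) via Lem.~\ref{rsl:ptmstep-iff-one} and Lem.~\ref{rsl:ptinfC-iff-limit}, whereas you organize them by generating rule via Lem.~\ref{rsl:pterm-layer-uniqueness}; both yield the same case analysis and the same inductive argument.
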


\begin{proof}
We proceed by induction on $\alpha$ where $\psi \in \layerpterm{\alpha}$, referring to the conditions in the lemma statement.

\noindent
If $\alpha = 1$, then Lem.~\ref{rsl:ptmstep-iff-one} implies $\psi$ to be an \imstep, so that we conclude by condition 1.

\noindent
Assume that $\alpha$ is a successor ordinal.
If $\psi(\epsilon) = \comp$, then Lem~\ref{rsl:ptinfC-iff-limit} implies that $\psi = \psi_1 \comp \psi_2$, such that for $i = 1,2$, $\psi_i \in \layerpterm{\alpha_i}$ for some $\alpha_i$ satisfying $\alpha > \alpha_i$. 
Then \ih\ can be applied on each $\psi_i$ yielding $P(\psi_1)$ and $P(\psi_2)$ to hold. We conclude by condition 2.
Otherwise, \ie\ if $\psi = f(\psi_1, \ldots, \psi_m)$ or $\psi = \mu(\psi_1, \ldots, \psi_m)$, then Lem.~\ref{rsl:ptmstep-iff-one} implies that $\psi$ is not an \imstep, therefore for all $i$, $\psi_i \in \layerpterm{\alpha_i}$ where $\alpha > \alpha_i$. Then \ih\ on each $i$ yield $P(\psi_i)$ to hold for all $i$. We conclude by condition 4.

\noindent
Assume that $\alpha$ is a limit ordinal.
In this case, Lem~\ref{rsl:ptinfC-iff-limit} implies that $\psi = \icomp \psi_i$, such that for all $i < \omega$, $\psi_i \in \layerpterm{\alpha_i}$ where $\alpha_i < \alpha$. 
Then we can apply \ih\ on each $\psi_i$ obtaining that $P(\psi_i)$ holds for all $i < \omega$. We conclude by condition 3.
\end{proof}

\medskip
We will resort to the induction principle given by Prop.~\ref{rsl:pterm-induction-principle} in forthcoming proofs, where we will indicate as \emph{induction hypotheses} the hypotheses of each case in the Proposition.
\Eg\ when proving a property for proof terms having the form $\psi_1 \comp \psi_2$, we will refer to the hypohteses of case 2 in Prop.~\ref{rsl:pterm-induction-principle}, namely that the property holds for $\psi_1$ and $\psi_2$, as induction hypothesis in the proof.
The intent is to produce intuitively simple yet rigorously valid proofs of properties on the set of proof terms.

\subsection{Basic properties of proof terms}
The following lemma shows that the target of a convergent proof term is always defined, and also a correspondence between $mind(\psi)$ and the existence of a fixed prefix for the activity denoted by $\psi$.
These two results are merged in the same lemma because they need to be proved simultaneously.

\begin{lemma}
\label{rsl:mind-big-then-tdist-little}
Let $\psi$ be a convergent proof.
Then 
\begin{enumerate}[(a)]
\item \label{it:convergent-then-has-tgt}
\vspace{-1mm}
$tgt(\psi)$ is defined.
\item \label{it:mind-big-then-tdist-little}
\vspace{-1mm}
For all $n < \omega$, $mind(\psi) > n$ implies $\tdist{src(\psi)}{tgt(\psi)} < 2^{-n}$.
\end{enumerate}
\end{lemma}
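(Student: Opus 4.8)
The plan is to prove parts~(\ref{it:convergent-then-has-tgt}) and~(\ref{it:mind-big-then-tdist-little}) simultaneously by the simplified induction principle of Proposition~\ref{rsl:pterm-induction-principle}, carrying the conjunction of the two statements as the inductive predicate; this is the proof-term counterpart of Lemma~\ref{rsl:redseq-mind-big-src-tgt} for \redseqs, and indeed the reason the two parts must be proved together is that part~(\ref{it:mind-big-then-tdist-little}) at smaller terms is exactly what will be needed to secure the limit in part~(\ref{it:convergent-then-has-tgt}). For the base case, where $\psi$ is a convergent \imstep, part~(\ref{it:convergent-then-has-tgt}) is immediate from Definition~\ref{dfn:imstep-convergence}, which defines convergence of a multistep as the existence of its $tgt_T$-normal form. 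For part~(\ref{it:mind-big-then-tdist-little}) I would argue via the distance characterization following Definition~\ref{dfn:distance}: it suffices to check that $src(\psi)$ and $tgt(\psi)$ agree at every position of depth at most $n$. Since $\mind{\psi} > n$, every rule-symbol occurrence in $\psi$ sits at depth strictly greater than $n$, so at positions of depth $\leq n$ the multistep carries only object symbols. Because $src_T$ and $tgt_T$ are disjoint — their right-hand sides introduce no rule symbols — normalisation acts only at and below the rule-symbol occurrences and creates none above them, so the object structure of $\psi$ down to depth $n$ survives unchanged in both normal forms; hence they coincide up to depth $n$ and $\tdist{src(\psi)}{tgt(\psi)} < 2^{-n}$.

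The cases built by function and rule symbols are direct. For $\psi = f(\psi_1,\dots,\psi_m)$ with $f \in \Sigma$, convergence forces each $\psi_i$ convergent, so the induction hypothesis yields each $tgt(\psi_i)$ and thus $tgt(\psi) = f(tgt(\psi_1),\dots,tgt(\psi_m))$; for~(\ref{it:mind-big-then-tdist-little}) I would use $\mind{\psi} = 1 + \min_i \mind{\psi_i}$ together with $\tdist{src(\psi)}{tgt(\psi)} = \frac{1}{2}\max_i \tdist{src(\psi_i)}{tgt(\psi_i)}$, the instance $n=0$ being trivial since both terms share head symbol $f$. For $\psi = \mu(\psi_1,\dots,\psi_m)$ part~(\ref{it:mind-big-then-tdist-little}) holds vacuously, as $\mind{\psi} = 0$, while~(\ref{it:convergent-then-has-tgt}) follows by applying the induction hypothesis to exactly those $\psi_i$ whose variable occurs in $r$ — the ones required convergent and the only ones appearing in $tgt(\psi)=r[\ldots]$.

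The binary composition $\psi = \psi_1 \comp \psi_2$ is handled with the ultrametric inequality (Lemma~\ref{rsl:tdist-is-ultrametric}): both $\psi_1$ and $\psi_2$ are convergent by the formation conditions, so $tgt(\psi) = tgt(\psi_2)$ settles~(\ref{it:convergent-then-has-tgt}), and for~(\ref{it:mind-big-then-tdist-little}) the equalities $src(\psi) = src(\psi_1)$, $tgt(\psi_1) = src(\psi_2)$ and $tgt(\psi) = tgt(\psi_2)$ let me route the distance through the common midpoint $tgt(\psi_1)$, where $\mind{\psi} = \min(\mind{\psi_1},\mind{\psi_2})$ ensures both legs are below $2^{-n}$.

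The main obstacle is the infinite concatenation $\psi = \icomp \psi_i$, where I must first establish that the limit defining $tgt(\psi)$ exists at all. Each $\psi_i$ is convergent by the formation conditions, so the induction hypothesis gives every $tgt(\psi_i)$, and $tgt(\psi_i) = src(\psi_{i+1})$. Convergence of $\psi$ means that for every $k$ there is an $N$ with $\mind{\psi_j} > k$ for all $j > N$; feeding this into induction hypothesis~(\ref{it:mind-big-then-tdist-little}) gives $\tdist{tgt(\psi_{j-1})}{tgt(\psi_j)} = \tdist{src(\psi_j)}{tgt(\psi_j)} < 2^{-k}$ for $j > N$. Iterating the ultrametric inequality shows all targets beyond index $N$ lie within $2^{-k}$ of each other, so $\langle tgt(\psi_i)\rangle$ is Cauchy; completeness of the term space (noted after Definition~\ref{dfn:limit-terms}) then supplies the limit $tgt(\psi)$, proving~(\ref{it:convergent-then-has-tgt}). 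For~(\ref{it:mind-big-then-tdist-little}), if $\mind{\psi} = \min_i \mind{\psi_i} > n$ then the hypothesis gives $\tdist{src(\psi_i)}{tgt(\psi_i)} < 2^{-n}$ for every $i$; telescoping with Lemma~\ref{rsl:tdist-is-ultrametric} yields $\tdist{src(\psi_0)}{tgt(\psi_j)} < 2^{-n}$ for all $j$, and choosing $j$ large enough that $\tdist{tgt(\psi_j)}{tgt(\psi)} < 2^{-n}$ (possible by the limit just established) lets me conclude $\tdist{src(\psi)}{tgt(\psi)} < 2^{-n}$ with one further application of the ultrametric inequality. The delicate point throughout is keeping the two parts interlocked: part~(\ref{it:mind-big-then-tdist-little}) of the inductive hypothesis is precisely the device that turns the purely depth-theoretic convergence condition into the metric estimates required for part~(\ref{it:convergent-then-has-tgt}).
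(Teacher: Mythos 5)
Your proof is correct and follows essentially the same route as the paper's: the same induction over the formation of proof terms with parts (a) and (b) proved simultaneously, the same ultrametric/telescoping estimates for binary and infinite compositions, and the same definitional treatment of the function- and rule-symbol cases (with (b) vacuous when the root is a rule symbol). The only divergences are minor: for the infinite concatenation you obtain the limit via Cauchyness plus completeness of the term metric, which is exactly the shortcut the paper itself flags in a footnote (its main text instead constructs the limit term explicitly and checks it is a well-formed tree domain), and for the multistep base case you give a direct argument from disjointness of $src_T$/$tgt_T$ where the paper merely asserts that an easy induction on $n$ suffices.
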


\begin{proof}
We proceed by induction on $\alpha$ where $\psi \in \layerpterm{\alpha}$, analysing the case in Dfn.~\ref{dfn:layer-pterm} corresponding to $\psi$.
If $\psi$ is an \imstep, then item~(\ref{it:convergent-then-has-tgt}) is immediate from Dfn.~\ref{dfn:imstep-convergence}, and for item~(\ref{it:mind-big-then-tdist-little}) an easy induction on $n$ suffices.

Assume $\psi = \psi_1 \comp \psi_2$.
Item~(\ref{it:convergent-then-has-tgt}) can be proved by just applying \ih\ on $\psi_2$.
To obtain item~(\ref{it:mind-big-then-tdist-little}), observe that \ih\ applies to $\psi_i$ for $i = 1,2$, since $mind(\psi_i) \geq mind(\psi) > n$, yielding $\tdist{src(\psi_i)}{tgt(\psi_i)} < 2^{-n}$.
Moreover Lemma~\ref{rsl:tdist-is-ultrametric} implies $\tdist{src(\psi)}{tgt(\psi)} \leq 
 max(\tdist{src(\psi)}{src(\psi_2)}, \tdist{src(\psi_2)}{tgt(\psi)}$. Thus we conclude by observing $src(\psi) = src(\psi_1)$, $src(\psi_2) = tgt(\psi_1)$, and $tgt(\psi) = tgt(\psi_2)$.
 
Assume $\psi = \icomp \psi_i$.

We prove item~(\ref{it:convergent-then-has-tgt}).
For any $i < \omega$, $\psi_i$ being convergent implies that \ih\ applies to obtain that $tgt(\psi_i)$ is defined.
Let $n < \omega$, and $k_n$ such that $\mind{\psi_i} > n$ if $k_n < i < \omega$.
Let $j$ such that $k_n < j$.
Then \ih:(\ref{it:mind-big-then-tdist-little}) applies on $\psi_{k_n+1} \comp \ldots \comp \psi_j$, implying $\tdist{tgt(\psi_{k_n+1})}{tgt(\psi_j)} < 2^{-n}$%
\footnote{A possible shortcut from here is observing that the sequence $\langle tgt(\psi_i) \rangle_{i < \omega}$ is Cauchy-convergent, and therefore has a limit. We can refer to Thm.~12.2.1 in \cite{terese}, or its proof.}%
.
Therefore, for any position $p$ and $j \geq k_{\posln{p}}+1$, $p \in \Pos{tgt(\psi_j)}$ iff $p \in \Pos{tgt(\psi_{k_{\posln{p}}+1})}$, and in such case, $tgt(\psi_j)(p) = tgt(\psi_{k_{\posln{p}}+1})(p)$.
We define $t = \langle P, F \rangle$ as follows:
$p \in P$ iff $p \in \Pos{tgt(\psi_{k_{\posln{p}}+1})}$, and
$F(p) \eqdef tgt(\psi_{k_{\posln{p}}+1})(p)$ for all $p \in P$.
To conclude this part of the proof, it is enough to verify that $tgt(\psi) = \lim_{i \to \omega} tgt(\psi_i) = t$.
\begin{itemize}
\item 
We verify that $P$ is a tree domain, \confer\ Dfn.~\ref{dfn:tree-domain}.
Let $pq \in P$, then \linebreak $pq \in \Pos{tgt(\psi_{k_{\posln{pq}}+1})}$, implying that $p \in \Pos{tgt(\psi_{k_{\posln{pq}}+1 })}$. Then $p \in \Pos{tgt(\psi_{k_{\posln{p}+1}})}$, hence $p \in P$.
Let $pj \in P$ and $i$ such that $1 \leq i \leq j$. Observing $\posln{pj} = \posln{pi}$, a straightforward argument based on $\psi_{k_{\posln{pj}}+1}$ yields $pi \in P$.
\item
We verify that $t$ is a well-defined term, \confer\ Dfn.~\ref{dfn:term}. 
Let $p \in P$, $f/m \eqdef F(p)$, and $i < \omega$. 
Observe $f = \psi_{k_{\posln{p}}+1}(p) = \psi_{k_{\posln{p}+1}+1}(p)$.
Then $pi \in P$ iff $pi \in \Pos{\psi_{k_{\posln{pi}}+1}}$ iff $i \leq m$.
\item
We verify that $t = \lim_{i \to \omega} tgt(\psi_i)$. 
Let $n < \omega$, $j > k_n$, and $p$ a position verifying $\posln{p} \leq n$, so that $k_{\posln{p}} \leq k_n$, implying in turn $k_{\posln{p}} + 1 \leq j$.
Then $p \in \Pos{t}$ iff $p \in \Pos{tgt(\psi_{k_{\posln{p}}+1})}$ iff $p \in \Pos{tgt(\psi_j)}$, and in such case, $t(p) = tgt(\psi_{k_{\posln{p}}+1})(p) = tgt(\psi_j)(p)$.
Hence $\tdist{tgt(\psi_j)}{t} < 2^{-n}$.
Consequently, $t = \lim_{i \to \omega} tgt(\psi_i)$.
\end{itemize}

We prove item~(\ref{it:mind-big-then-tdist-little}).
For all $i < \omega$, $mind(\psi_i) \geq mind(\psi) > n$, and then an easy induction on $i$ using an argument similar to the one just described for binary composition yields $\tdist{src(\psi)}{tgt(\psi_i)} < 2^{-n}$.
Recall that $tgt(\psi) = \lim_{i \to \omega} tgt(\psi_i)$, then there exists some $k$ such that $\tdist{tgt(\psi_j)}{tgt(\psi)} < 2^{-n}$ if $j > k$.
Then $\tdist{src(\psi)}{tgt(\psi_{k+1})} < 2^{-n}$ and $\tdist{tgt(\psi_{k+1})}{tgt(\psi)} < 2^{-n}$.
We conclude by Lemma~\ref{rsl:tdist-is-ultrametric}.

Assume $\psi = f(\psi_1, \ldots, \psi_m)$ and that it is not an \imstep.
Then $\psi$ being convergent implies that all $\psi_i$ are. 
Therefore a straightforward argument based on \ih\ implies item~(\ref{it:convergent-then-has-tgt}) to hold.
Moreover, the way in which $src$, $tgt$ and $\mindfn$ for this case, implies that a natural inductive argument yields also item~(\ref{it:mind-big-then-tdist-little}).

Assume $\psi = \mu(\psi_1, \ldots, \psi_m)$, and that it is not an \imstep.
Then $\psi$ being convergent implies that $\psi_i$ is if $x_i$ occurs in the right-hand side of $\mu$, thus \ih:(\ref{it:convergent-then-has-tgt}) implies that $tgt(\psi_i)$ is defined for those $\psi_i$. Hence, definition of $tgt$ for this case yields item~(\ref{it:convergent-then-has-tgt}).
On the other hand, $\mind{\psi} = 0$ contradicting the hypotheses of item~(\ref{it:mind-big-then-tdist-little}).
Thus we conclude.
\end{proof}

\begin{lemma}
\label{rsl:mind-ctx}
Let $C$ be a context in $\SigmaTerms$ having $k$ holes, and $\psi_1, \ldots, \psi_k$ proof terms.
Then $\mind{C[\psi_1, \ldots, \psi_k]} = min \set{\mind{\psi_i} + \posln{\BPos{C}{i}} \setsthat 1 \leq i \leq k}$.
\end{lemma}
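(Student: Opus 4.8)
The plan is to argue by induction on the natural number $d \eqdef \max \set{\posln{\BPos{C}{i}} \setsthat 1 \leq i \leq k}$, the maximal depth at which a hole occurs in $C$, exactly as in the proof of Lem.~\ref{rsl:redseq-respects-src-tgt}. This measure is finite and decreases strictly when passing to the immediate subcontexts of $C$, so the induction stays finitary even if $C$ is an infinite context: $C$ may be an infinite tree, but it has only finitely many holes, all at finite depth. In the base case $d = 0$ the single hole sits at the root, which forces $C = \Box$ and $k = 1$; then $C[\psi_1] = \psi_1$ and $\posln{\BPos{C}{1}} = \posln{\epsilon} = 0$, so both sides equal $\mind{\psi_1}$.

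For the inductive step ($d \geq 1$) the root of $C$ cannot be a box, so $C = f(C_1, \ldots, C_n)$ with $f/n \in \Sigma$. The decisive point is that $f$ is an \emph{object} symbol: since $C \in \SigmaTerms$ it contains neither rule symbols nor the dot $\comp$, whence every descent through $C$ is a descent through an object symbol. Writing $C_j \eqdef \subtat{C}{j}$, the holes of $C$ are partitioned among $C_1, \ldots, C_n$, and $C[\psi_1, \ldots, \psi_k] = f(C_1[\ldots], \ldots, C_n[\ldots])$, where each $C_j$ receives the corresponding subtuple of the $\psi$'s; this distribution, together with the identity $\subtat{C[\psi_1, \ldots, \psi_k]}{j} = C_j[\ldots]$, follows from Dfn.~\ref{dfn:ctx-repl} (cf.\ Lem.~\ref{rsl:ctx-repl-composition}), and well-formedness of each factor from Prop.~\ref{prop:compl-pt}. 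Because $f \in \Sigma$, the value $\mind{C[\psi_1, \ldots, \psi_k]} = 1 + \min_{1 \leq j \leq n} \mind{C_j[\ldots]}$ is obtained in the same way whether $C[\psi_1, \ldots, \psi_k]$ is classified under case~\ref{rule:ptmstep} or case~\ref{rule:ptsymbol} of Dfn.~\ref{dfn:layer-pterm}. Applying the induction hypothesis to each $C_j$ carrying at least one hole (its maximal hole depth is $\leq d-1$), using that a hole-free $C_j$ is an object term with minimum activity depth $\omega$, and recalling that a hole of $C$ lying in $C_j$ at box-position $q$ has $\BPos{C}{i} = jq$ and hence $\posln{\BPos{C}{i}} = 1 + \posln{q}$, the two nested minima combine into $\min \set{\mind{\psi_i} + \posln{\BPos{C}{i}} \setsthat 1 \leq i \leq k}$, as required.

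The main obstacle is conceptual rather than computational, and it is worth isolating: the identity hinges on $C$ being a context over the \emph{object} signature only. It is precisely the absence of the dot in $C$ that makes each level of $C$ contribute $+1$ to the minimum activity depth (through the $1 + \min$ clauses of Dfn.~\ref{dfn:dmin-imstep} and case~\ref{rule:ptsymbol} of Dfn.~\ref{dfn:layer-pterm}), so that the context depth $\posln{\BPos{C}{i}}$ adds cleanly to $\mind{\psi_i}$; had $C$ been allowed to contain occurrences of $\comp$, those levels would be transparent to the minimum activity depth and the formula would fail. The remaining care is routine bookkeeping: matching the hole ordering of $C$ with that of the subcontexts $C_j$, and adopting the convention $\min \emptyset = \omega$ so that hole-free subcontexts are handled uniformly.
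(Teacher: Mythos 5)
Your proposal is correct and follows essentially the same route as the paper's own proof: induction on the maximal hole depth $max \set{\posln{\BPos{C}{i}}}$, base case $C = \Box$, and in the inductive case the decomposition $C = f(C_1, \ldots, C_m)$ with the identity $\mind{C[\psi_1, \ldots, \psi_k]} = 1 + min_j \mind{C_j[\ldots]}$, after which the nested minima are combined (the paper carries this combination out via explicit minimality-and-contradiction bookkeeping, which you compress into one sentence, and your $\min \emptyset = \omega$ convention for hole-free subcontexts handles a case the paper leaves implicit). Your added observation that the argument hinges on $C$ containing no occurrences of $\comp$ is accurate and consistent with the paper's statement restricting $C$ to $\SigmaTerms$.
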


\begin{proof}
An easy, although somewhat cumbersome, induction on $max \set{\posln{\BPos{C}{i}}}$ suffices.
If $C = \Box$, then both sides of the equation in the lemma conclusion equates to $\psi$, thus we conclude.

Assume $C = f(C_1, \ldots, C_m)$. \\
Observe that $C[\psi_1, \ldots, \psi_k] = f(C_1[\psi_{1_1}, \ldots, \psi_{1_{q1}}], \ldots, C_m[\psi_{m_1}, \ldots, \psi_{m_{qm}}])$, where $\set{\psi_{j_i}} = \set{\psi_1, \ldots, \psi_k}$.
Consequently, for any $i$ such that $1 \leq i \leq k$, $\BPos{C}{i} = e \, p$ for some $e$ verifying $1 \leq e \leq m$, and therefore $p = \BPos{C_e}{l}$ for some $l$. In turn, this implies $\posln{\BPos{C}{i}} = 1 + \posln{\BPos{C_e}{l}}$.
Conversely, for any $e$ such that $1 \leq e \leq m$, and for any $\BPos{C_e}{i}$, there is an index $j$ such that $\BPos{C}{j} = e \cdot \BPos{C_e}{i}$. 
Furthermore, $\mind{C[\psi_1, \ldots, \psi_k]} = 1 + min \{\mind{C_j[\psi_{j_1}, \ldots, \psi_{j_{qj}}]} $ $\setsthat 1 \leq j \leq m\}$.

Let $j$ minimal for $\mind{\psi_j} + \posln{\BPos{C}{j}}$, so that showing $\mind{C[\psi_1, \ldots, \psi_k]} = \mind{\psi_j} + \posln{\BPos{C}{j}}$ is enough to conclude.
Let $e, i$ such that $\BPos{C}{j} = e \cdot \BPos{C_e}{i}$.
The existence of some $j', i'$ such $\BPos{C}{j'} = e \cdot \BPos{C_e}{i'}$ and $\mind{\psi_j'} + \posln{\BPos{C_e}{i'}} < \mind{\psi_j} + \posln{\BPos{C_e}{i}}$ would contradict minimality of $\psi_j$ \wrt\ $C$, so that $j, i$ are minimal for $\mind{\psi_j} + \posln{\BPos{C_e}{i}}$.
Therefore, applying \ih\ on $C_j$, yields that $\mind{C_e[\psi_{e_1}, \ldots, \psi_{e_{qe}}]} = \mind{\psi_j} + \posln{\BPos{C_e}{i}}$. 

Assume for contractiction the existence of some $m,h$ such that \\
$\mind{C_h[\psi_{h_1}, \ldots, \psi_{h_{qh}}]} < \mind{C_e[\psi_{e_1}, \ldots, \psi_{h_{eh}}]}$.
Applying \ih\ on $C_h$ we obtain $\mind{C_h[\psi_{h_1}, \ldots, \psi_{h_{qh}}]} = \mind{\psi_g} + \posln{\BPos{C_h}{f}}$ for some $f$ and $g$ such that $\BPos{C}{g} =h \comp \BPos{C_h}{f}$.
But then our assumption would imply $\mind{\psi_g} + \posln{\BPos{C}{g}} = \mind{\psi_g} + \posln{\BPos{C_h}{f}} + 1 < \mind{\psi_j} + \posln{\BPos{C_e}{i}} + 1 = \mind{\psi_j} + \posln{\BPos{C}{j}}$, contradicting minimality of $j$ \wrt\ $C$.

Hence, $\mind{C[\psi_1, \ldots, \psi_k]} = 1 + \mind{C_e[\psi_{e_1}, \ldots, \psi_{e_{qe}}]} = \mind{\psi_j} + \posln{\BPos{C}{j}}$.
Thus we conclude.
\end{proof}

\medskip
Some properties related with convergence follow.

\begin{lemma}
\label{rsl:imstep-convergent-args}
Let $\psi = f(\psi_1, \ldots, \psi_m)$ be a convergent \imstep, and $i$ such that $1 \leq i \leq m$. 
Then $\psi_i$ is a convergent \imstep.
\end{lemma}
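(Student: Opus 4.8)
The plan is to reduce everything to the companion system $tgt_T$, using that convergence of an \imstep\ $\psi$ means exactly that $\psi$ is weakly normalising in $tgt_T$ (\confer\ Dfn.~\ref{dfn:src-tgt-imstep} and Dfn.~\ref{dfn:imstep-convergence}). First I would observe that $\psi_i = \subtat{\psi}{i}$ is itself an \imstep: it is a subterm of a closed term in $Ter^\infty(\Sigma^R)$, hence again a closed term over $\Sigma^R$ (all symbols of $\psi_i$ occur in $\psi$, so none is a variable). Thus the only real content of the lemma is the \emph{convergence} of $\psi_i$, and the argument will consist in projecting a $tgt_T$-normalising reduction of $\psi$ onto its $i$-th argument.

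Since $\psi$ is convergent, there is a $tgt_T$-\redseq\ $\psi \infredtrs{tgt_T} u$ with $u$ a $tgt_T$-normal form. The key observation is that every rule of $tgt_T$ carries a rule symbol at the head of its left-hand side (\confer\ Dfn.~\ref{dfn:srct-tgtt}), whereas the head of $\psi = f(\psi_1, \ldots, \psi_m)$ is the function symbol $f \in \Sigma$, which is not a rule symbol. Hence there is no $tgt_T$-redex at the root of $\psi$; and as only a root step could alter the head symbol, a straightforward transfinite induction on the length of the \redseq\ shows that the root symbol stays $f$ throughout (a non-root step leaves the root untouched, a root step is impossible, and the limit of terms all having root $f$ has root $f$). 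Consequently every contracted redex lies strictly below the root. Writing $C \eqdef f(\Box, \ldots, \Box)$ for the context with $m$ holes, so that $\psi = C[\psi_1, \ldots, \psi_m]$ and $\BPos{C}{i} = i$, this says precisely that each redex position $\RPos{\redel{\reda}{\alpha}}$ is at or below some hole of $C$.

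Therefore Lem.~\ref{rsl:redseq-respects-src-tgt} applies to the context $C$ and yields $u = C[u_1, \ldots, u_m] = f(u_1, \ldots, u_m)$ together with $tgt_T$-\redseqs\ $\psi_i \infredtrs{tgt_T} u_i$ for each $i$. Now $u$ being a $tgt_T$-normal form forces each $u_i$ to be one as well: a redex occurrence in $u_i$ at some position $p$ would give a redex occurrence in $u$ at position $ip$, contradicting normality of $u$. Hence $\psi_i$ is weakly $tgt_T$-normalising, so $tgt(\psi_i)$ is defined, which is exactly convergence of the \imstep\ $\psi_i$. Thus we conclude.

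I expect the main obstacle to be the justification that the reduction of $\psi$ may be projected onto its $i$-th argument, i.e.\ the verification that no step acts at the root; but this is entirely isolated in the remark that $f$ is not a rule symbol, after which Lem.~\ref{rsl:redseq-respects-src-tgt} (resting on Lem.~\ref{rsl:proj-redseq-well-defined}) does the work. It is worth noting that the statement genuinely requires $f \in \Sigma$: for a rule symbol $\mu$ in head position the analogous claim fails, since a non-collapsing rule may discard a variable. For instance, with $\mu : g(x) \to b$ the proof term $\mu(\psi_1)$ has $tgt(\mu(\psi_1)) = b$ and is convergent irrespective of whether $\psi_1$ is, which is exactly why the lemma is phrased for a function symbol.
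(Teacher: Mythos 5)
Your proof is correct and follows essentially the same route as the paper's: both observe that, since $f$ is not a rule symbol, the normalising $tgt_T$-reduction witnessing convergence of $\psi$ contains no root steps, and then project that reduction onto the $i$-th argument to obtain a $tgt_T$-reduction from $\psi_i$ to a normal form. The only cosmetic difference is that you invoke the context version, Lemma~\ref{rsl:redseq-respects-src-tgt} with $C = f(\Box, \ldots, \Box)$, whereas the paper combines Lemma~\ref{rsl:redseq-mind-big-src-tgt} with the projection Lemma~\ref{rsl:proj-redseq-well-defined} directly; since the former lemma is itself proved from the latter, the substance is identical.
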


\begin{proof}
Dfn.~\ref{dfn:imstep} yields immediately that $\psi_i$ is an \imstep.
Moreover, $f(\psi_1, \ldots, \psi_m)$ being convergent means the existence of a convergent $\tgtt$-\redseq\ $\reda$ such that $f(\psi_1, \ldots, \psi_m) \infredxtrs{\reda}{\tgtt} t$ and $t$ is a $\tgtt$-normal form, \ie\ $t \in \iSigmaTerms$. Observe that $\mind{\reda} > 0$, since $f$ does not occur in any left-hand side of a rule in $\tgtt$.
Then Lem.~\ref{rsl:redseq-mind-big-src-tgt} implies $t = f(t_1, \ldots, t_m)$. In turn, Lem.~\ref{rsl:proj-redseq-well-defined} implies $\psi_i \infredxtrs{\proj{\reda}{i}}{\tgtt} t_i$. Thus we conclude.
\end{proof}

\begin{lemma}
\label{rsl:fnsymbol-convergence}
Let $\psi = f(\psi_1, \ldots, \psi_m)$ be a convergent proof term. Then $\psi$ is convergent iff $\psi_i$ is convergent for all suitable $i$.
\end{lemma}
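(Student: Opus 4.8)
The plan is to split on whether the proof term $\psi = f(\psi_1, \ldots, \psi_m)$ is an \imstep, using Lem.~\ref{rsl:ptmstep-iff-one}. This split is unavoidable, since convergence of an \imstep\ is defined through $tgt_T$-normalisation (Dfn.~\ref{dfn:imstep-convergence}), whereas convergence of a proof term produced by case~\ref{rule:ptsymbol} of Dfn.~\ref{dfn:layer-pterm} is defined argument-wise; the content of the lemma is precisely that the two notions agree on terms of the form $f(\psi_1, \ldots, \psi_m)$. (The word ``convergent'' in the hypothesis is read as the weaker ``proof term''.)

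First I would treat the case in which $\psi$ is \emph{not} an \imstep. Since $\psi(\epsilon) = f \neq \comp$, neither case~\ref{rule:ptinfC} nor case~\ref{rule:ptbinC} can generate $\psi$, and Lem.~\ref{rsl:ptmstep-iff-one} excludes case~\ref{rule:ptmstep}; hence $\psi$ is generated by case~\ref{rule:ptsymbol}. For $f \in \Sigma$ that case \emph{defines} $\psi$ to be convergent exactly when every $\psi_i$ is convergent, so the equivalence holds by definition --- note that for a function symbol every index is ``suitable''.

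The remaining case is $\psi$ an \imstep. Here each $\psi_i = \subtat{\psi}{i}$ is again a closed term of $Ter^\infty(\Sigma^R)$, hence itself an \imstep\ (Dfn.~\ref{dfn:imstep}). The forward implication is then exactly Lem.~\ref{rsl:imstep-convergent-args}. For the converse I would assume that each $\psi_i$ is a convergent \imstep, so that it is $tgt_T$-weakly normalising with $tgt_T$-normal form $t_i \eqdef tgt(\psi_i)$, and exhibit a reduction $f(\psi_1, \ldots, \psi_m) \infredxtrs{\redb}{\tgtt} f(t_1, \ldots, t_m)$. Since $f \in \Sigma$ heads no left-hand side of $tgt_T$ there is no root redex in $f(t_1, \ldots, t_m)$, and each $t_i$ being normal leaves no redex inside; thus $f(t_1, \ldots, t_m)$ is a $tgt_T$-normal form, which makes $\psi$ weakly normalising in $tgt_T$, i.e.\ convergent.

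The main obstacle is the construction of $\redb$. Starting from reductions $\psi_i \infredxtrs{\reda_i}{\tgtt} t_i$, I would lift each $\reda_i$ into the $i$-th argument, replacing every step $\langle s, p, \mu, \sigma \rangle$ by $\langle f(\ldots, s, \ldots), ip, \mu, \sigma \rangle$, and concatenate the $m$ lifted reductions through the intermediate terms $f(t_1, \ldots, t_{j}, \psi_{j+1}, \ldots, \psi_m)$. Lifting into a fixed argument preserves strong convergence because it raises the depth of every step by exactly $1$, so that conditions~(\ref{it:dfn-sred-limit-existence}) and (\ref{it:dfn-sred-depth}) of Dfn.~\ref{dfn:sred} transfer from $\reda_i$ (the required limits are just the images under the map $s \mapsto f(\ldots, s, \ldots)$ of the limits of $\reda_i$), and a finite concatenation of convergent \redseqs\ is again convergent. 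This lifting is the converse of the projection result Lem.~\ref{rsl:proj-redseq-well-defined}; although routine, it is the only step needing genuine infinitary care, everything else following directly from the definitions and from Lemmas~\ref{rsl:ptmstep-iff-one} and \ref{rsl:imstep-convergent-args}.
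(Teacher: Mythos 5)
Your proposal is correct and follows essentially the same route as the paper's proof: the same case split via Lem.~\ref{rsl:ptmstep-iff-one} (non-\imstep\ case immediate from case~\ref{rule:ptsymbol} of Dfn.~\ref{dfn:layer-pterm}), the forward direction of the \imstep\ case delegated to Lem.~\ref{rsl:imstep-convergent-args}, and the converse established by lifting each reduction $\reda_i$ into the $i$-th argument and concatenating the lifted sequences through the intermediate terms $f(t_1, \ldots, t_j, \psi_{j+1}, \ldots, \psi_m)$, exactly as the paper's construction $(1 \cdot \reda_1); \ldots; (m \cdot \reda_m)$. Your added remarks --- that the depth of every lifted step increases by one so strong convergence transfers, and that $f(t_1,\ldots,t_m)$ is a $\tgtt$-normal form because $f$ heads no left-hand side --- only make explicit what the paper leaves to its ``simple transfinite induction''.
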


\begin{proof}
If $\psi$ is an \imstep, then the $\Rightarrow )$ direction is an immediate corollary of Lem.~\ref{rsl:imstep-convergent-args}.
For the $\Leftarrow )$ direction, recall that for any $i$, $\psi_i$ being convergent means the existence of a $\tgtt$-\redseq\ $\reda_i$ verifying $\psi_i \infredxtrs{\reda_i}{\tgtt} t_i$ where $t_i \in \iSigmaTerms$.
Then $f(\psi_1, \ldots, \psi_m) \infredxtrs{\reda}{\tgtt} f(t_1, \ldots, t_m)$, where $\reda \eqdef (1 \cdot \reda_1 ) ; \ldots ; (m \cdot \reda_m)$, and $i \cdot \reda_i$ is defined as follows: $\redln{i \cdot \reda_i} \eqdef \redln{\reda_i}$ and $\redel{i \cdot \reda_i}{\alpha} \eqdef \langle f(t_1, \ldots, \phi, \ldots \psi_m), ip, \mu \rangle$ where $\redel{\reda_i}{\alpha} = \langle \phi, p, \mu \rangle$. A simple transfinite induction yields $f(t_1, \ldots, t_{i-1}, \psi_i, \psi_{i+1}, \ldots, \psi_m) \infredxtrs{i \cdot \reda_i}{\tgtt} f(t_1, \ldots, t_{i-1}, t_i, \psi_{i+1}, \ldots, \psi_m)$.

If $\psi$ is not an \imstep, then the result is an immediate consequence of Dfn.~\ref{dfn:layer-pterm}, case~(\ref{rule:ptsymbol}).
Thus we conclude.
\end{proof}

\begin{lemma}
\label{rsl:ctx-convergence}
Let $C$ be a context in $\iSigmaTerms$ having exactly $m$ holes, and $\psi_1, \ldots, \psi_m$ proof terms.
Then $C[\psi_1, \ldots, \psi_m]$ is convergent iff $\psi_i$ is convergent for all suitable $i$.
\end{lemma}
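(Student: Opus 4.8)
The plan is to prove the equivalence by induction on $\max\{\posln{\BPos{C}{i}} : 1 \le i \le m\}$, the maximal depth of a hole in $C$, mirroring exactly the structure of the proof of Lem.~\ref{rsl:mind-ctx}. The workhorse for the inductive step is Lem.~\ref{rsl:fnsymbol-convergence}, which already settles the claim when the head of the context is a single object function symbol. Since $C$ is a context in $\iSigmaTerms$, it contains only symbols of $\Sigma$ and no rule symbols; consequently every argument is relevant to convergence, and the phrase ``all suitable $i$'' here reads simply as ``all $i$ with $1 \le i \le m$''. I would note at the outset that $C[\psi_1, \ldots, \psi_m]$ is itself a proof term, by repeated application of Prop.~\ref{prop:compl-pt}(\ref{it:compl-pt:symbol}), so that speaking of its convergence is meaningful; the case $m = 0$, where $C$ is a pure object term, is trivial and can be dispatched first.

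For the base case $C = \Box$ we have $C[\psi_1] = \psi_1$, and both sides of the equivalence coincide, so there is nothing to prove. For the inductive step, write $C = f(C_1, \ldots, C_k)$ with $f/k \in \Sigma$, so that $C[\psi_1, \ldots, \psi_m] = f(C_1[\ldots], \ldots, C_k[\ldots])$, where the holes of $C$ — and hence the proof terms $\psi_1, \ldots, \psi_m$ — are distributed among the sub-contexts $C_1, \ldots, C_k$. Applying Lem.~\ref{rsl:fnsymbol-convergence} at the head $f$ reduces convergence of $C[\psi_1, \ldots, \psi_m]$ to convergence of each $C_j[\ldots]$. Each $C_j$ now falls into one of two cases. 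If $C_j$ contains no hole, then $C_j[\ldots] = C_j$ is a pure object term in $\iSigmaTerms$, hence a $tgt_T$-normal form and therefore convergent (\confer\ Dfn.~\ref{dfn:imstep-convergence}), contributing no constraint. If $C_j$ does contain holes, then its maximal hole depth is strictly smaller than that of $C$ (each hole of $C_j$ sits one level deeper in $C$), so the induction hypothesis applies and convergence of $C_j[\ldots]$ is equivalent to convergence of precisely those $\psi_i$ lying within $C_j$. Conjoining over all $j$ yields that $C[\psi_1, \ldots, \psi_m]$ is convergent iff every $\psi_i$ is convergent.

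The only delicate points I would take care with are the bookkeeping of which holes — and hence which $\psi_i$ — are routed into which sub-context $C_j$ (as in the proof of Lem.~\ref{rsl:mind-ctx}, each $\BPos{C}{i}$ factors as $e \cdot \BPos{C_e}{l}$), and the separate treatment of hole-free sub-contexts, which must be recognised as automatically convergent rather than being fed to the induction hypothesis, whose induction measure would otherwise be undefined. I do not expect any genuinely hard step: once Lem.~\ref{rsl:fnsymbol-convergence} is in hand, the argument is a routine structural induction over the finitely many hole positions of $C$, with the hole-free case being the only point that is easy to overlook.
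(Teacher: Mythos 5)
Your proof is correct and follows essentially the same route as the paper, which argues by induction on $max \set{\posln{\BPos{C}{i}} \setsthat 1 \leq i \leq m}$ and invokes Lem.~\ref{rsl:fnsymbol-convergence} in the inductive case. The extra care you take with hole-free sub-contexts and the $m=0$ case is sound detail that the paper leaves implicit in its one-line proof.
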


\begin{proof}
A straightforward induction on $max \set{\posln{\BPos{C}{i}} \setsthat 1 \leq i \leq m}$, resorting to Lem.~\ref{rsl:fnsymbol-convergence} in the inductive case, suffices to conclude.
\end{proof}

\begin{lemma}
\label{rsl:rulesymbol-convergence}
Let $\mu : l[x_1, \ldots, x_m] \to h[x_1, \ldots, x_m]$ be a rule included in a certain \TRS; and $\psi_1, \ldots, \psi_m$ proof terms. 
Then $\psi = \mu(\psi_1, \ldots, \psi_m)$ is convergent iff $\psi_i$ is convergent for all $i$ such that $x_i$ occurs in $h[x_1, \ldots, x_m]$.
\end{lemma}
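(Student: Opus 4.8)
The plan is to split on whether $\psi$ is an \imstep, using Lem.~\ref{rsl:ptmstep-iff-one}. First note that $\psi = \mu(\psi_1, \ldots, \psi_m) \in \setpterm$ by Prop.~\ref{prop:compl-pt}, item~\ref{it:compl-pt:rule}. If $\psi$ is \emph{not} an \imstep, then by Lem.~\ref{rsl:pterm-layer-uniqueness} it is generated by exactly one case of Dfn.~\ref{dfn:layer-pterm}; since $\psi(\epsilon) = \mu$ is a rule symbol and $\psi$ is not an \imstep, this must be case~\ref{rule:ptsymbol} in its rule-symbol variant. There the convergence condition of Dfn.~\ref{dfn:layer-pterm} states literally that $\psi$ is convergent iff every $\psi_i$ with $x_i$ occurring in $h$ is convergent, which is exactly the claim. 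Thus the whole content of the lemma is concentrated in the remaining case.

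Now suppose $\psi$ is an \imstep; then every $\psi_i$ is an \imstep\ as well, and convergence of an \imstep\ means $\tgtt$-weak normalisation, \confer\ Dfn.~\ref{dfn:imstep-convergence} and Dfn.~\ref{dfn:src-tgt-imstep}. The idea is to pass through the head contraction $\psi \ssteptrs{\tgtt} h[\psi_1, \ldots, \psi_m]$, which is the $\tgtt$-step at the root given by the rule $\mu(x_1, \ldots, x_m) \to h[x_1, \ldots, x_m]$ (\confer\ Dfn.~\ref{dfn:srct-tgtt}) and has depth $0$. I would first show that $\psi$ is $\tgtt$-weakly normalising iff $h[\psi_1, \ldots, \psi_m]$ is. The direction ($\Leftarrow$) is immediate: prepending the depth-$0$ head step to any $\tgtt$-reduction from $h[\psi_1, \ldots, \psi_m]$ to a normal form yields a well-formed, convergent $\tgtt$-reduction from $\psi$ to the same normal form. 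For ($\Rightarrow$), I would invoke Lem.~\ref{rsl:orthogonal-leading-head-steps-to-nf} with $\trst := \tgtt$ (which is orthogonal), taking the reduction of $\psi$ to its normal form as $\redc$ and the single head step as $\reda$ (all of whose steps have depth $0$); its conclusion gives a $\tgtt$-reduction from $h[\psi_1, \ldots, \psi_m]$ to that same normal form, \ie\ $h[\psi_1, \ldots, \psi_m]$ is $\tgtt$-weakly normalising.

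It then remains to characterise convergence of $h[\psi_1, \ldots, \psi_m]$. I would write it as a context replacement $h[\psi_1, \ldots, \psi_m] = \patt{h}[\phi_1, \ldots, \phi_k]$, where $\patt{h}$ is the pattern of $h$ (\confer\ Dfn.~\ref{dfn:patt}), having one box per variable occurrence of $h$, and $\phi_j \eqdef \psi_i$ whenever the $j$-th box $\BPos{\patt{h}}{j}$ sits at an occurrence of $x_i$; this identity is just the fact that substituting $\psi_i$ for $x_i$ in $h$ coincides with filling the holes of its pattern. Applying Lem.~\ref{rsl:ctx-convergence} to this context gives that $h[\psi_1, \ldots, \psi_m]$ is convergent iff every $\phi_j$ is, and since the $\phi_j$ range exactly over the $\psi_i$ with $x_i$ occurring in $h$ (with harmless repetitions when $h$ is non-linear, and with the $\psi_i$ for absent variables never appearing), this is equivalent to convergence of all $\psi_i$ with $x_i$ occurring in $h$. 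Chaining the two equivalences closes the \imstep\ case.

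I expect the main obstacle to be the ($\Rightarrow$) direction of the weak-normalisation equivalence for the head step: it is exactly here that orthogonality of $\tgtt$ is needed, and packaging it correctly as an instance of Lem.~\ref{rsl:orthogonal-leading-head-steps-to-nf} (with the single depth-$0$ step) is the delicate point. The remaining manipulations -- prepending a finite step and rewriting through the pattern context -- are routine given Lem.~\ref{rsl:ctx-convergence}.
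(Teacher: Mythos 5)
Your proof is correct, but its core takes a genuinely different route from the paper's. The shared part is the skeleton: the non-\imstep\ case is dispatched exactly as in the paper, by the convergence clause of case~\ref{rule:ptsymbol} of Dfn.~\ref{dfn:layer-pterm}. For the \imstep\ case, however, the paper never commutes the head step to the front: for $(\Rightarrow)$ it dissects an arbitrary witnessing reduction $\psi \infredxtrs{\reda}{\tgtt} t$, showing it must contain a root step (otherwise $t(\epsilon) = \mu$, by Lem.~\ref{rsl:redseq-mind-big-src-tgt}), decomposing $\reda$ into a prefix $\reda_1$ with $\mind{\reda_1} > 0$, the first root step, and a tail $\reda_2$, projecting $\reda_1$ onto the arguments via Lem.~\ref{rsl:proj-redseq-well-defined}, and then splitting on whether $h$ is a variable, using Lem.~\ref{rsl:redseq-respects-src-tgt} to decompose $\reda_2$ below the pattern of $h$ in the non-collapsing case; for $(\Leftarrow)$ it builds the required reduction by hand, contracting the arguments one after another inside $h$. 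You instead delegate both directions to two previously established lemmas: Lem.~\ref{rsl:orthogonal-leading-head-steps-to-nf} --- correctly instantiated, since $\tgtt$ is orthogonal and your $\reda$ is a single depth-$0$ step --- which turns weak normalisation of $\psi$ into weak normalisation of $h[\psi_1, \ldots, \psi_m]$, and Lem.~\ref{rsl:ctx-convergence} applied to $\patt{h}$, which relates the latter to convergence of exactly those $\psi_i$ with $x_i$ occurring in $h$. Your route is shorter and more modular (the paper itself uses precisely this combination later, in the proof of Lem.~\ref{rsl:factorisation-imstep}), at the cost of resting on the Strip Lemma, which the paper only imports from the literature; the paper's route is self-contained reduction-sequence surgery and additionally exhibits the reductions of the $\psi_i$ explicitly. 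One caveat applies to both proofs equally: your appeal to Lem.~\ref{rsl:ctx-convergence} needs $\patt{h}$ to have finitely many holes, \ie\ finitely many variable occurrences in $h$, and the paper's appeal to Lem.~\ref{rsl:redseq-respects-src-tgt} together with its ``performed more than once'' construction needs the same; since right-hand sides may be infinite terms, both arguments would require the same refinement there, so this is not a gap relative to the paper.
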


\begin{proof}
Assume that $\psi$ is an \imstep.
We verify $\Rightarrow )$.
Convergence of $\psi$ implies $\psi \infredxtrs{\reda}{\tgtt} t$ for some \redseq\ $\reda$, where $t \in \iSigmaTerms$.
Notice that $\mind{\reda} > 0$ would imply $t(\epsilon) = \mu$ (\confer\ Lem.~\ref{rsl:redseq-mind-big-src-tgt}), contradicting $t \in \iSigmaTerms$. 
Therefore $\mind{\reda} = 0$, implying $\reda = \reda_1; \langle \chi, \epsilon, \uln{\nu} \rangle, \reda_2$ where $\mind{\reda_1} > 0$.
In turn, $\mind{\reda_1} > 0$ implies that $tgt(\reda_1) = \chi = \mu(\chi_1, \ldots, \chi_m)$ where $\psi_i \infredxtrs{\proj{\reda}{i}}{\tgtt} \chi_i$, \confer\ Lem.~\ref{rsl:redseq-mind-big-src-tgt} and Lem~\ref{rsl:proj-redseq-well-defined}.
Hence $\uln{\nu} = \uln{\mu} : \mu(x_1, \ldots, x_m) \to h[x_1, \ldots, x_m]$, implying $src(\reda_2) = h[\chi_1, \ldots, \chi_m]$.
Observe that $\chi_i$ occurs in $src(\reda_2)$ iff $x_i$ occurs in $h$.
We analyse two cases: \\[5pt]
\begin{tabular}{@{$\ \ \bullet\ \ $}p{.9\textwidth}}
$h[x_1, \ldots, x_m] = x_j$, so that $src(\reda_2) = \chi_j$.
In this case $\psi_j \infredx{\proj{\reda}{j}} \chi_j \infredx{\reda_2} t$. We conclude by observing that only convergence of $\psi_j$ is required in this case. \\
$h \notin \thevar$.
In this case $h[\chi_1, \ldots, \chi_m] \infredx{\reda_2} t$. 
Observe that all the steps in $\reda_2$ lies ``below'' (an argument of) $h$. Then Lem.~\ref{rsl:redseq-respects-src-tgt} implies $t = h[t_1, \ldots, t_m]$ and, moreover, that a \redseq\ $\reda'_i$ exists which verifies $\chi_i \infredx{\reda'_i} t_i$ for all $i$ such that $x_i$ occurrs in $h[x_1, \ldots x_m]$.
Therefore, for any of those indices, say $i$, $\psi_i \infredx{\proj{\reda_1}{i}} \chi_i \infredx{\reda'_i} t_i$.
Thus we conclude.
\end{tabular} 

To verify the $\Leftarrow )$ direction, observe that all the $\psi_i$ corresponding to variables occurring in $h$ being convergent implies 
$\psi \to h[\psi_1, \ldots \psi_m] \infredx{\reda_1} h[t_1, \ldots, \psi_m] \ldots \infredx{\reda_m} h[t_1, \ldots, t_m]$, where eventually some $\reda_i$ are performed more than once, if the corresponding $x_i$ occurs more than once in $h[x_1, \ldots, x_m]$.
Hence $\psi$ is $\tgtt$-$WN^\infty$, \ie\ it is a convergent \imstep.

Finally, if $\psi$ is not an \imstep, then Dfn.~\ref{dfn:layer-pterm}, case~(\ref{rule:ptsymbol}), allows to conclude immediately.
\end{proof}

\subsection{Trivial proof terms}
This section deals with the proof terms denoting no activity, which will be termed \emph{trivial proof terms}.
The structure of trivial proof terms can be arbitrarily complex, \ie\ $\compomega{j}(\icomp a)$ is a trivial proof term.
We prove that some expected properties hold for these proof terms. These properties will be used later in this work.

\begin{definition}
\label{dfn:trivial-pterms}
Let $\psi$ be a proof term. We will say that $\psi$ is a \emph{trivial proof term} iff it does not include any rule symbol occurrences.
\end{definition}

\begin{lemma}
\label{rsl:trivial-pterm-mind-omega}
Let $\psi$ be a proof term. Then $\psi$ is trivial iff $\mind{\psi} = \omega$.
\end{lemma}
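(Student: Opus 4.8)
The plan is to apply the simple induction principle of Prop.~\ref{rsl:pterm-induction-principle} to the predicate $P(\psi)$ that holds exactly when the equivalence ``$\psi$ is trivial iff $\mind{\psi} = \omega$'' is true. The whole argument rests on two elementary observations. First, triviality is \emph{compositional}: the occurrences of rule symbols in a compound proof term are exactly the union of those of its immediate subterms, because the symbol $\comp$ and the object function symbols $f \in \Sigma$ are never rule symbols. Hence a compound term built by $\comp$, by $\icomp$, or by an $f \in \Sigma$ is trivial iff all of its components are, whereas a term $\mu(\ldots)$ with $\mu$ a rule symbol is never trivial. Second, for any family of elements of $\omega + 1$ (natural numbers together with $\omega$) the minimum equals $\omega$ iff every element equals $\omega$; and in ordinal arithmetic $1 + \beta = \omega$ iff $\beta = \omega$, since $1 + n = n+1 \neq \omega$ for $n < \omega$. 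These two facts make each inductive case fall out of the corresponding clause of Dfn.~\ref{dfn:layer-pterm} for $\mind{\cdot}$.

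First I would settle the base case, where $\psi$ is an \imstep: here $P(\psi)$ is immediate from Dfn.~\ref{dfn:dmin-imstep}, which sets $\mind{\psi} = \omega$ precisely when $\psi$ contains no rule symbol, and otherwise assigns it the finite depth of the shallowest rule symbol occurrence. For $\psi = \psi_1 \comp \psi_2$ and for $\psi = \icomp \psi_i$ the relevant clause gives $\mind{\psi}$ as the minimum of the $\mind{\cdot}$ of the components; by the second observation $\mind{\psi} = \omega$ iff every component has $\mind{\cdot} = \omega$, which by the induction hypotheses holds iff every component is trivial, which by the first observation holds iff $\psi$ is trivial. For $\psi = f(\psi_1, \ldots, \psi_n)$ with $f \in \Sigma$ the clause gives $\mind{\psi} = 1 + \min(\mind{\psi_1}, \ldots, \mind{\psi_n})$; since $1 + \beta = \omega$ iff $\beta = \omega$, the same chain of equivalences applies. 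Finally, for $\psi = \mu(\psi_1, \ldots, \psi_n)$ with $\mu$ a rule symbol we have $\mind{\psi} = 0 \neq \omega$ while $\psi$ is not trivial, so both sides of the biconditional are false and $P(\psi)$ holds vacuously. Having discharged all five cases, Prop.~\ref{rsl:pterm-induction-principle} yields $P(\psi)$ for every $\psi \in \setpterm$.

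I expect no deep obstacle; the statement is essentially a bookkeeping consequence of the definitions, and this is precisely the kind of property the induction principle of Prop.~\ref{rsl:pterm-induction-principle} was set up to handle. The only points requiring care are the treatment of $\omega$ as a genuine value of $\mind{\cdot}$ --- in particular the ordinal identity $1 + \omega = \omega$ used in the $f$-case, and the fact that a minimum over an infinite sequence in $\omega + 1$ equals $\omega$ only when all of its terms do, used in the infinite-concatenation case --- together with the verification that the base clause of Dfn.~\ref{dfn:dmin-imstep} indeed equates $\mind{\psi} = \omega$ with the absence of rule symbols for \imsteps. Once these are granted, each case is closed by a one-line equivalence.
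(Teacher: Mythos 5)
Your proof is correct and takes essentially the same route as the paper's: a structural induction on proof terms (which the paper phrases as induction on the layer $\alpha$ with $\psi \in \layerpterm{\alpha}$), discharging each case via the corresponding clause of Dfn.~\ref{dfn:layer-pterm} and Dfn.~\ref{dfn:dmin-imstep}. The only organizational difference is that you establish the biconditional in a single induction, whereas the paper treats the two directions separately (the backward one via its contrapositive); this is immaterial.
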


\begin{proof}
For the $\Rightarrow )$ direction, a straightforward induction on $\psi$ (\ie\ on $\alpha$ such that $\psi \in \layerpterm{\alpha}$) suffices.
For the base case, \ie\ when $\psi$ is an \imstep, we just refer to Dfn.~\ref{dfn:dmin-imstep}.

For the $\Leftarrow )$ direction, a similar induction on $\psi$ yields the counterpositive, \ie\ that if $\psi$ includes at least one rule symbol occurrence, then $\mind{\psi} < \omega$.
If $\psi$ is an \imstep, then we define $n$ to be the least depth of a rule symbol occurrence in $\psi$. An easy induction on $n$ yields $\mind{\psi} = n$. If $\psi = \mu(\psi_1, \ldots, \psi_m)$, then $\mind{\psi} = 0$. For the other cases, \ih\ suffices to conclude.
\end{proof}

\newpage
\section{\Peqence}
\label{sec:peqence}
\usingRedseqOnly{%
Two proof terms denoting reduction processes which are essentially the same, or more generally consist of the same steps performed in different order, should be recognised as being \emph{\peqent}.}
\usingContractionActivity{%
Two proof terms can be the result of arranging the same contraction activity in different ways, regarding parallelism/nesting degree, sequential order, and/or localisation of contractions.
Such proof terms should be recognised as being \emph{\peqent}.}

In this section we give a criterion to decide equivalence between proof terms. 
The approach is to extend the \emph{\peqence} criterion, as it is defined in \cite{terese} Sec. 8.3, to the infinitary setting.
\Peqence, for which the notation \peq\ will be used henceforth in this document, is defined there for finitary proof terms as the congruence generated by the following equation schemes 
\[
\begin{array}{lrcl}
\peqidleft & 1 \comp \psi & \eqnpeq & \psi \\
\peqidright & \psi \comp 1 & \eqnpeq & \psi \\
\peqassoc & \psi \comp (\phi \comp \chi) 
	& \eqnpeq & 
	(\psi \comp \phi) \comp \chi \\
\peqstruct & f(\psi_1, \ldots, \psi_m) \comp f(\phi_1, \ldots, \phi_m) 
	& \eqnpeq & 
	f(\psi_1 \comp \phi_1, \ldots, \psi_m \comp \phi_m) \\
\peqoutin & \mu(\psi_1, \ldots, \psi_m) 
	& \eqnpeq & 
	\mu(s_1, \ldots, s_m) \comp r[\psi_1, \ldots, \psi_m] \\
\peqinout & \mu(\psi_1, \ldots, \psi_m) 
	& \eqnpeq & 
	l[\psi_1, \ldots, \psi_m] \comp \mu(t_1, \ldots, t_m) 
\end{array}
\]
where $\mu: l \to r$, $s_i = src(\psi_i)$ and $t_i = tgt(\psi_i)$.

\usingStrEqIdEq{%
Additionally, two subrelations of \peq\ are remarked in \cite{terese}, namely the congruence generated by the first three and first four equations. 
These relations are known as \emph{identity equivalence} and \emph{structural equivalence} respectively. 
In this document, the notation $\ideq$ and $\streq$ resp. will be used for these relations%
\footnote{The notations used for the three introduced relations differs from those used in \cite{terese}, which I find somewhat confusing.}.%
}%

\medskip
Some challenges must be addressed in order to extend the \peqence\ definition to the infinitary setting.
Consider \eg\ the rules $\mu: f(x) \to g(x)$, $\nu: g(x) \to h(x)$ and $\rho: j(x) \to k(x)$, and the \redseqs\ \\[2pt]
\minicenter{
$j(f\om) \infred j(g\om) \sstep k(g\om)$ \qquad\qquad
$j(f\om) \to k(f\om) \infred k(g\om)$} \\[2pt]
which can be denoted by the proof terms \\[2pt]
\minicenter{
$\icomp j(g^i(\mu(f\om))) \comp \rho(g\om)$ \qquad\qquad
$\rho(f\om) \comp \icomp k(g^i(\mu(f\om)))$} \\[2pt]
respectively.
\usingRedseqOnly{%
These reduction sequences are \peqent. 
Both consist of a $\rho$ head step and an infinite number of $\mu$ steps forming a convergent reduction sequence of length $\omega$. 
}%
\usingContractionActivity{%
These proof terms denote the same contraction activity, namely a $\rho$ step transforming the head $j$ into a $k$, and an infinite number of $\mu$ steps transforming each occurrence of $f$ into one of $g$.
Therefore, they should be stated as \peqent. 
Observe that both proof terms are sequential, denoting precisely each of the described \redseqs.
}%
The difference lies in the order in which the two operations are performed: first the $\mu$ steps and then the $\rho$ step in the sequence to the left, and viceversa in the sequence of the right. The difference is apparent in the proof terms who describe the sequences.
Both considered \redseqs\ are convergent.

\smallskip
In order to equate %
\usingRedseqOnly{these \redseqs}%
\usingContractionActivity{the given \redseqs}%
, an \emph{infinite} number of step permutations must be performed: the $\rho$ step must be permuted in turn with each of the infinite $\mu$ steps. 
It is even impossible to determine which should be the \emph{first} $\mu$ step to be permuted with the $\rho$ step in order to transform the sequence to the left into that to the right. 
If we proceed the other way around, we can by finite means permute the initial $\rho$ step with a finite prefix of the infinite $\mu$ reduction, obtaining 
$j(f\om) \to j(g(f\om)) \to \ldots \to j(g^n(f\om)) \to k(g^n(f\om)) \infred k(g\om)$,
but there will always be an infinite $\mu$ sequence ``still to be permuted'' with the $\rho$ step.

This situation is reflected in the %
\usingRedseqOnly{proof terms}%
\usingContractionActivity{sequential proof terms}%
. There is no way to extract a ``last'' component in the infinite composition $\icomp j(g^i(\mu(f\om)))$, in order to permute it with $\rho(g\om)$. On the other hand, by applying the congruence on \peqence\ equations to infinitary terms, we can permute the leading $\rho(f\om)$ with a finite number of component of the following infinite composition in $\rho(f\om) \comp \icomp k(g^i(\mu(f\om)))$, \ie \\
$
\begin{array}{@{\hspace*{1cm}}cl}
\multicolumn{2}{l}{\rho(f\om) \comp \icomp k(g^i(\mu(f\om)))} \\ 
  \peq & 
  \rho(f\om) \comp k(\mu(f\om)) \comp \icomp k(g^{i+1}(\mu(f\om))) \\
  \peq & 
  j(\mu(f\om)) \comp \rho(g(f\om)) \comp \icomp k(g^{i+1}(\mu(f\om))) \\
  \peq & 
  j(\mu(f\om)) \comp \rho(g(f\om)) \comp k(g(\mu(f\om))) \comp \icomp k(g^{i+2}(\mu(f\om))) \\
  \peq & 
  j(\mu(f\om)) \comp j(g(\mu(f\om))) \comp \rho(g^2(f\om)) \comp \icomp k(g^{i+2}(\mu(f\om))) \\
  \peq & \ldots \\
  \peq & 
  j(\mu(f\om)) \comp \ldots \comp \rho(g^n(f\om)) \comp \icomp k(g^{i+n}(\mu(f\om))) \\  
\end{array}$ \\
therefore having still an infinite composition to the right of the $\rho$ step.
An adequate characterisation of \peqence\ for the infinitary setting should sanction the equivalence of these sequential proof terms.

\usingContractionActivity{%
\smallskip
Moreover, notice that all the redexes contracted in (the activity included in) either considered \redseq\ are present in the source term $j(f\om)$, so that the same activity can be denoted also by an \imstep\ (\ie\ a fully nested proof term), which is $\nu(\mu\om)$. Combinations of sequential and nested descriptions are possible as well, \eg\ $\rho(\icomp g^i(\mu(f\om)))$ and $\rho(f\om) \comp k(\mu\om)$. A sound \peqence\ characterisation should allow to state the equivalence of either of these proof terms \wrt\ any of the sequential versions introduced before.

To conclude the \peqence\ of either sequential proof term and (say) the multistep counterpart, an infinite number of step (de)nesting, using the \peqoutin\ or \peqinout\ equations, should be performed.

Using congruence on equations, a finite (though arbitrary) number of (de)nestings can be performed. 
\Eg\ the equivalence between
$\rho(f\om) \comp \icomp k(g^i(\mu(f\om)))$ and 
$\rho(\mu^3(f\om)) \comp \icomp k(g^{i+3}(\mu(f\om)))$ can be proved by nesting the three outer $\mu$ steps inside the $\rho$-step, as follows: \\
$
\begin{array}{@{\hspace*{1cm}}cl}
\multicolumn{2}{l}{\rho(f\om) \comp \icomp k(g^i(\mu(f\om)))} \\ 
  \peq & 
  \rho(f\om) \comp k(\mu(f\om)) \comp k(g(\mu(f\om))) \comp k(g(g(\mu(f\om)))) \comp \icomp k(g^{i+3}(\mu(f\om))) \\
  \peq & 
	\rho(f\om) \comp k(\mu(f\om)) \comp k(g(\mu(f\om) \comp g(\mu(f\om)))) \comp \icomp k(g^{i+3}(\mu(f\om))) \\
  \peq & 
	\rho(f\om) \comp k(\mu(f\om)) \comp k(g(\mu(\mu(f\om)))) \comp \icomp k(g^{i+3}(\mu(f\om))) \\
  \peq & 
	\rho(f\om) \comp k(\mu(f\om) \comp g(\mu(\mu(f\om)))) \comp \icomp k(g^{i+3}(\mu(f\om))) \\
  \peq & 
	\rho(f\om) \comp k(\mu^3(f\om)) \comp \icomp k(g^{i+3}(\mu(f\om))) \\
  \peq & 
	\rho(\mu^3(f\om)) \comp \icomp k(g^{i+3}(\mu(f\om))) \\
\end{array}$ \\
We describe briefly this schematic description of the \peqence\ derivation.
Firstly, \peqassoc\ is used to separate the first components of the infinite composition; notice that in this abrigded description, other uses of \peqassoc\ are left implicit.
Then \peqstruct\ is used twice (albeit described as one ``step'' in this description) from $k(g(\mu(f\om))) \comp k(g(g(\mu(f\om))))$, \wrt\ the symbols $k$ and $g$ respectively, thus obtaining $k(g(\mu(f\om) \comp g(\mu(f\om))))$.
This allows to subsequently apply \peqoutin\ on $\mu(f\om) \comp g(\mu(f\om)$, yielding $\mu(\mu(f\om))$.
The fourth and fifth lines describe a similar process, applied in order to obtain a concise description of the first three $\mu$ steps. This description is furthermore condensed with the leading $\rho(f\om)$ step, by applying \peqoutin\ once more.

An analogous process can be performed with any finite number of $\mu$ steps, yielding $\rho(f\om) \comp \icomp k(g^i(\mu(f\om))) \peq \rho(\mu^n(f\om)) \comp \icomp k(g^{i+n}(\mu(f\om)))$. 
In any case, there will always remain an infinite quantity of $\mu$ steps separated from the nested part.
}%

\medskip
Let us analyse an additional example using the same rules. Consider the \redseqs \\[2pt]
\minicenter{
$f\om \infred g\om \infred h\om$ \quad and \quad
$f\om \sstep g(f\om) \sstep h(f\om) \sstep h(g(f\om)) \sstep h^2(f\om) \infred h\om$} \\[2pt]
which can be denoted by the sequential proof terms \\[2pt]
\minicenter{
$\icomp g^i(\mu(f\om)) \comp \icomp h^i(\nu(g\om))$ \qquad and \qquad
$\icomp (h^i(\mu(f\om)) \comp h^i(\nu(f\om)))$} \\[2pt]
respectively.

Again, the reduction sequences are equivalent: they consist of an infinite number of $\mu$ steps and an infinite number of $\nu$ steps. In the left-hand sequence, first all the $\mu$ steps are performed, followed by the $\nu$ steps. In the right-hand sequence, $\mu$ and $\nu$ steps are interleaved.
Therefore, the proof terms describing these reductions should be sanction as \peqent.

We remark that in this case, \emph{each of the infinite number} of $\nu$ steps must be permuted with an infinite number of $\mu$ steps.
We will see that this added complexity of the needed permutations on \redseqs\ is reflected in the \peqent\ characterisation for proof terms, by means of an additional device needed to cope with this case.

\usingContractionActivity{%
The contraction activity included in either \redseq\ can also be described by non-sequential proof terms, remarkably $\mu\om \comp \nu\om$, but also \eg\
$\mu(f\om) \comp \icomp k^i(\nu(\mu(f\om)))$.
In this case, as the contraction of \emph{created} redexes is involved (since each $\nu$ step is created by the corresponding $\mu$ step), there is no way of describing this contraction activity by an \imstep.
}%

\includeStandardisation{%
\medskip
Notice that in both examples given so far, the \redseq\ shown to the right is standard, so that the permutation which equate the \redseq\ to the left to that to the right corresponds to a \emph{standardisation} of the former. 
}%

\usingRedseqOnly{%
\medskip
Proof terms denoting sequential vs. nested version of the same reduction must be considered as well. 
\Eg, the characterisation of \peqence\ for the finitary case allows to obtain
$\mu(\mu(a)) \peq \mu(f(a)) \comp g(\mu(a))$.
In the infinitary realm, an infinite number of steps could be nested in an infinite multistep.
The corresponding characterisation of \peqence\ should take such cases into account, allowing to obtain \eg\
$\mu\om \peq \icomp g^i(\mu(f\om))$.
}%

\bigskip
We remark that even when the characterisation of \peqence\ to be introduced can be applied to any well-formed proof term, the study of infinitary rewriting based on this characterisation we develop afterwards, mostly applies only to \emph{convergent} proof terms. Therefore, most of the additional definitions and results to come assume that the proof terms under consideration are convergent.
A study of \peqence\ considering also \emph{divergent} proof terms is left as future work.

\subsection{The formal infinitary \peqence\ relation}
In the following, we formally state the \peqence\ criterion we propose for infinitary proof terms.
As we have indicated in the introduction to this Section, the definition will be based on equational logic, so that a set of basic equations and another of equational rules will be introduced. 
The basic equations model the basic operations needed to perform a permutation of steps using the description of contraction activity given by proof terms, while the rules model the equivalence closure and the closure by the operations corresponding to the symbols in the signature of proof terms.
The need to reason about (proof terms including) infinite concatenations implies the inclusion of one  equation schema and one rule which specifically account for their infinite nature.
Therefore, the relations which formalise the notion of \peqence\ use an explicit form of \emph{infinitary equational logic}. 

In order to obtain a formal \peqence\ relation that is intuitively adequate, \ie\ which models adequately the concept of \peqence\ behaving as expected in a variety of examples, a very special rule must be added to the rules corresponding to equivalence and operations closure.
This rule allows to incorporate the idea of \emph{limit} into infinitary equational logic judgements.
In turn, to obtain an intuitively reasonable ``limit rule'', some particular requirements must be put in its premises, to limit the way in which this rule can be applied in a judgement. These requirements force to define a separate, previous ``base'' relation, which is used to define the ``limit rule'' for the \peqence\ relation. We will use $\peqe$ to denote the ``base'' relation, and $\peq$ for \peqence.

In the rest of this work, we will need to reason about the base \peqence\ relation. As we want to be able to proceed by some sort of transfinite induction on the complexity of the \peqence\ judgement, we will give a \emph{layered} definition of \peqence, like we did for the definition of proof terms in Sec.~\ref{sec:pterm}. Therefore, we will define, for each countable ordinal $\alpha$, the relations $\layerpeqe{\alpha}$ and $\layerpeq{\alpha}$.
Induction on \peqent\ terms can be performed by induction on the (say, minimal) layer to which the pair of terms belongs. The same holds for terms related by the ``base'' \peqence\ relation.

Formal definitions of the $\peqe$ and $\peq$ relations follow:

\begin{definition}[Layer of base \peqence]
\label{dfn:layer-peqe}
Let $\alpha$ be a countable ordinal.
We define the $\alpha$-th \emph{level of base \peqence}, notation $\layerpeqe{\alpha}$, as follows: given $\psi$ and $\phi$ proof terms, $\psi \layerpeqe{\alpha} \phi$ iff the equation $\psi \layerpeqx{\alpha} \phi$ can be obtained by means of the equational logic system whose basic equations are the instances of the following schemata for which both lhs and rhs are proof terms%
\footnote{hence they are particularly \emph{closed} terms, \confer\ Dfn.~\ref{dfn:imstep} and Dfn.~\ref{dfn:layer-pterm}.}
\[
\begin{array}{lrcl}
\peqidleft & 1 \comp \psi & \eqnpeq & \psi \\
\peqidright & \psi \comp 1 & \eqnpeq & \psi \\
\peqassoc & \psi \comp (\phi \comp \chi) 
	& \eqnpeq & 
	(\psi \comp \phi) \comp \chi \\
\peqstruct & f(\psi_1, \ldots, \psi_m) \comp f(\phi_1, \ldots, \phi_m) 
	& \eqnpeq & 
	f(\psi_1 \comp \phi_1, \ldots, \psi_m \comp \phi_m) \\
\peqinfstruct &
  \icomp f(\psi^1_i, \ldots, \psi^m_i)
	& \eqnpeq & 
  f(\icomp \psi^1_i, \ldots, \icomp \psi^m_i) \\
\peqoutin & \mu(\psi_1, \ldots, \psi_m) 
	& \eqnpeq & 
	\mu(s_1, \ldots, s_m) \comp r[\psi_1, \ldots, \psi_m] \\
\peqinout & \mu(\psi_1, \ldots, \psi_m) 
	& \eqnpeq & 
	l[\psi_1, \ldots, \psi_m] \comp \mu(t_1, \ldots, t_m) 
\end{array}
\]
verifying also the following conditions: $1 = src(\psi)$ for $\peqidleft$; 
$\psi$ convergent and $1 = tgt(\psi)$ for $\peqidright$; 
$\mu: l \to r$ for both \peqoutin\ and \peqinout; 
$s_i = src(\psi_i)$ for \peqoutin;
$\psi_i$ convergent and $t_i = tgt(\psi_i)$ for all $i$ for \peqinout.

Equational logic rules are defined by transfinite recursion on $\alpha$ as follows%
\footnote{
An alternative could be to consider \emph{open} instances of the equations, \ie\ one instance of \peqstruct\ and \peqinfstruct\ for each object function symbol plus one instance of \peqinout\ and \peqoutin\ for each rule symbol, where all the $\psi_i$, $\phi_i$, $\chi_i$, $s_i$ and $t_i$ would be considered as variables.
In order to equate instances of the such generated equations, a \emph{substitution} rule should be added at the equational logic level. 
In this way, considering the rules $\nu(x) : g(x) \to h(x)$, $\rho(x) : j(x) \to k(x)$ and $\pi : a \to b$, the equivalence $\nu(\rho(\pi)) \peqe \nu(j(a)) \comp h(\rho(\pi))$ would be justified by a two-step reasoning: a step using \eqleqn\ to obtain $\nu(\psi) \peqe \nu(s) \comp k(\psi)$ by the $\nu$ instance of the \peqoutin\ equtation, followed by the replacement of the $\psi$ and $s$ variables by the proof term $\rho(\pi)$ and its source, namely $j(a)$, by resorting to the substitution rule.

Unfortunately, this would be a rather inadequate approach because of the characteristics of proof terms in general, and of some of the equations in particular.
On one hand, an eventual extension of the set of proof terms in order to encompass open terms \emph{would not be closed by substitutions}.
A simple example considering the rule $\pi : a \to b$ follows: while $x \comp x$ would be a legal proof term, $\pi \comp \pi$ is not.
I guess this fact lies behind the difficulties for handling concatenation in the proposal of proof terms for HRS described in \cite{bruggink2008}; \confer\ particularly page 33.
On the other hand, not any instance of the equations correspond to their intent.
Firstly, the equation instance should correspond to valid proof terms at both lhs and rhs. Additionally, for the \peqinout equation, the $t_i$s are intended to be precisely $tgt(\psi_i)$, and not an arbitrary proof term verifying $tgt(\psi_i) = src(t_i)$. A similar condition holds for \peqoutin.
Observe that all these restrictions are considered when defining the set of legal instances of equations which can be used when applying the \eqleqn\ rule.
} %
\\[5pt]
$\begin{array}{c}
 \begin{array}{c} \\ \hline \psi \layerpeqx{1} \psi \end{array} 
 \ \ \eqlrefl
 \qquad 
 \begin{array}{c} \psi \eqnpeq \phi \textnormal{ is a basic equation}\\ \hline \psi \layerpeqx{1} \phi \end{array} 
 \ \ \eqleqn
 \\ \\
 \begin{array}{c} 
   \psi \layerpeqx{\alpha_1} \phi \\ 
 	 \hline 
   \phi \layerpeqxb{\alpha_1 + 1} \psi 
 \end{array} 
 \ \ \eqlsymm
 \qquad
 \begin{array}{c} 
   \psi \layerpeqx{\alpha_1} \phi \quad \phi \layerpeqx{\alpha_2} \xi \\ 
 	 \hline 
   \psi \layerpeqxb{\alpha_1 + \alpha_2 + 1} \xi 
 \end{array} 
 \ \ \eqltrans
 \\ \\
 \begin{array}{c} 
   \psi_1 \layerpeqx{\alpha_1} \phi_1 \quad \ldots \quad \psi_n \layerpeqx{\alpha_n} \phi_n \quad
   f/n \in \Sigma \\ 
 	 \hline 
   f(\psi_1, \ldots, \psi_n) \layerpeqxb{\alpha_1 + \ldots + \alpha_n + 1} f(\phi_1, \ldots, \phi_n)
 \end{array} 
 \ \ \eqlfun
 \\ \\
 \begin{array}{c} 
   \psi_1 \layerpeqx{\alpha_1} \phi_1 \quad \ldots \quad \psi_n \layerpeqx{\alpha_n} \phi_n \quad
   \mu/n \textrm{ is a rule symbol} \\ 
 	 \hline 
   \mu(\psi_1, \ldots, \psi_n) \layerpeqxb{\alpha_1 + \ldots + \alpha_n + 1} \mu(\phi_1, \ldots, \phi_n)
 \end{array} 
 \ \ \eqlrule
 \\ \\
 \begin{array}{c} 
   \psi_1 \layerpeqx{\alpha_1} \phi_1 \quad \psi_2 \layerpeqx{\alpha_2} \phi_2 \\ 
 	 \hline 
   \psi_1 \comp \psi_2 \layerpeqxb{\alpha_1 + \alpha_2 + 1} \phi_1 \comp \phi_2
 \end{array} 
 \ \ \eqlcomp
 \qquad
 \begin{array}{c} 
   \psi_i \layerpeqx{\alpha_i} \phi_i \quad \textforall i < \omega \\ 
 	 \hline 
   \icomp \psi_i \ \layerpeqxb{\Sigma_{i < \omega} \alpha_i}\  \icomp \phi_i
 \end{array} 
 \ \ \eqlinfcomp
 \end{array}$
\end{definition}

\begin{definition}[Base \peqence]
\label{dfn:peqe}
Let $\psi$, $\phi$ be proof terms. 
We say that $\psi$ and $\phi$ are \emph{base-\peqent}, notation $\psi \peqe \phi$, iff $\psi \layerpeqe{\alpha} \phi$ for some $\alpha < \omega_1$.
\end{definition}

\includeStandardisation{%
\medskip
The subrelation of $\peqe$ whose definition follows will be used in some proofs. We say that $\psi \peqefs \phi$ (where $EFS$ stands for ``equational, finitary and structural'') iff $\psi \peqe \phi$ can be obtained by using only the equations \peqidleft, \peqidright, \peqassoc\ and \peqstruct, and not resorting to the \eqlinfcomp\ equational rule.
}%

\begin{definition}[Layer of \peqence]
\label{dfn:layer-peq}
Let $\alpha$ be a countable ordinal.
We define the $\alpha$-th \emph{level of \peqence}, notation $\layerpeq{\alpha}$, as follows: given $\psi$ and $\phi$ proof terms, $\psi \layerpeq{\alpha} \phi$ iff the equation $\psi \layerpeqx{\alpha} \phi$ can be obtained by means of the equational logic system whose basic equations are those described in Dfn.~\ref{dfn:layer-peqe}, and the set of equational logic rules is the result of adding the rule \eqllim\ defined as follows
\\[5pt]
$\begin{array}{c}
 \begin{array}{c} 
   \begin{array}{ll}
	   \left.
     \begin{array}{l}
     \psi \layerpeqe{\alpha_k} \chi_k \comp \psi'_k \ \ \mind{\psi'_k} > k 
	   \\
	   \phi \layerpeqe{\beta_k} \chi_k \comp \phi'_k \ \ \mind{\phi'_k} > k 
		 \end{array}
		 \right\}
   &
	   \textforall k < \omega
	 \end{array}
	 \\ 
 	 \hline 
   \psi \ \layerpeqx{\alpha}\  \phi 
	 \qquad
	 \textnormal{where } \alpha = \sum_{i < \omega} \alpha_i + \sum_{i < \omega} \beta_i
 \end{array} 
 \ \ \eqllim
 \end{array}$
\\[5pt]
to the rules introduced in Dfn.~\ref{dfn:layer-peqe}.
\end{definition}

Notice that the explicit reference to the relations $\layerpeqe{\alpha_k}$ and $\layerpeqe{\beta_k}$ prevents the ``stacking'' of uses of the rule \eqllim\ in a \peqence\ judgement, \ie, that judgements leading to the premises of an application of the \eqllim\ rule cannot include other applications of the same rule. 
This condition does not imply that a valid \peqence\ judgement can include at most one occurrence of \eqllim. \Eg\ a \peqence\ derivation having the following shape

\smallskip
$
\prooftree
  \[
		\ldots
		\ \ 
		\begin{array}{l}
		\psi_1 \peqe \xi_k \comp \psi'_1 \\	
		\phi_1 \peqe \xi_k \comp \phi'_1
		\end{array} 
		\ \ 
		\ldots
	\justifies
		\psi_1 \peq \phi_1
	\using
		\eqllim
	\]
	\quad
	\[
		\begin{array}{l}
		\psi_2 \peqe \chi_k \comp \psi'_2 \\	
		\phi_2 \peqe \chi_k \comp \phi'_2
		\end{array} 
		\ \ 
		\ldots
	\justifies
		\psi_2 \peq \phi_2
	\using
		\eqllim
	\]
\justifies
	\psi_1 \comp \psi_2 \peq \phi_1 \comp \phi_2
\using
	\eqlcomp
\endprooftree
$

\smallskip\noindent
is valid according to Dfn.~\ref{dfn:layer-peq}.

\begin{definition}[\Peqence]
\label{dfn:peq}
Let $\psi$, $\phi$ be proof terms. 
We say that $\psi$ and $\phi$ are \emph{\peqent}, notation $\psi \peq \phi$, iff $\psi \layerpeq{\alpha} \phi$ for some $\alpha < \omega_1$.
\end{definition}

Observe that for any countable ordinal $\alpha$, $\layerpeqe{\alpha} \,\subseteq\, \layerpeq{\alpha}$, and therefore $\peqe \,\subseteq\, \peq$.

\medskip
As discussed prior to the formal definitions, this characterisation of \peqence\ for infinitary proof terms adds, to the rules corresponding to the closure of the description of step permutation, a rule which allows to resort to the concept of \emph{limit} inside judgements. We found this necessary to obtain a complete characterisation, \ie, one which covers all the examples we have studied.

If the difference between the activity denoted by two proof terms can be proven to tend to zero, then we can resort to limits to assert that such difference is equal to zero, and therefore, that the proof terms must be considered equivalent.
The measure used to compute the difference between two proof terms \wrt\ their denoted activity is the \emph{minimal activity depth}.

\bigskip
The equational logic used to reason about infinitary derivations adds three features to its finitary counterpart, besides operating on infinitary proof terms instead of just finite ones.
These additions are: the \peqinfstruct\ \emph{equation schema}, and the \eqlinfcomp\ and \eqllim\ \emph{equational rules}.

\medskip
The first addition is the generalisation of \peqstruct\ to the infinite composition. 
It allows \eg\ the following \peqence\ reasoning \\
$\icomp j(g^i(\mu(f\om))) \comp \rho(g\om)
 \peq
 j(\icomp g^i(\mu(f\om))) \comp \rho(g\om)
 \peq
 \rho(\icomp g^i(\mu(f\om)))
 \peq
 \rho(f\om) \comp k(\icomp g^i(\mu(f\om)))$ \\
thus addressing the first example given in the introduction of this Section.

\smallskip
We observe that \peqinfstruct\ includes occurrences of an infinite number of variables: for each $j$ from 1 to the arity of $f$, $\psi^j_i$ is a distinct variable for each $i$ verifying $0 \leq i < \omega$. 
On the other hand, the restriction to convergent proof terms imposes a convergence condition to the substitutes for these variables when applying this equation%
\footnote{more precisely, an equation corresponding to this equation scheme}.
The use, in equational logic, of a convergence condition as a restriction for the application of an equation having occurrences of an infinite number of different variables, could be the object of further analysis.


\medskip
The equational rule \eqlinfcomp\ allows transformations to be performed in each term of an infinite composition.

Consider the proof terms
$\psi_1 \eqdef \icomp (j(h^i(\mu(f\om))) \comp j(h^i(\nu(f\om)))) \comp \rho(h\om)$
and \\
$\psi_2 \eqdef \rho(f\om) \comp \icomp (k(h^i(\mu(f\om))) \comp k(h^i(\nu(f\om))))$, which represent equivalent \redseqs.
In order to transform $\psi_1$ into $\psi_2$, the $\rho$ step must be permuted with the preceding infinite composition, which in turn must be transformed into a proof term having the form $j(\psi'_1)$ in order to enable the permutation to be applied using the equations \peqinout\ and then \peqoutin.
To perform the desired transformation to 
$\icomp (j(h^i(\mu(f\om))) \comp j(h^i(\nu(f\om))))$,
the equation \peqstruct\ must be applied \emph{on each of the infinite number of components}, so obtaining 
$\icomp j(h^i(\mu(f\om)) \comp h^i(\nu(f\om)))$, 
and then the equation \peqinfstruct\ transforms the latter into 
$j(\icomp h^i(\mu(f\om)) \comp h^i(\nu(f\om)))$.

The rule \eqlinfcomp\ allows to obtain
$\icomp (j(h^i(\mu(f\om))) \comp j(h^i(\nu(f\om))))
 \peqe
 \icomp j(h^i(\mu(f\om)) \comp h^i(\nu(f\om)))$, taking as premises
$j(h^i(\mu(f\om))) \comp j(h^i(\nu(f\om))) \peqe 
 j(h^i(\mu(f\om)) \comp h^i(\nu(f\om)))$
for each $i < \omega$.
Therefore, the assertion $\psi_1 \peqe \psi_2$ can be justified by the following schematic equational judgement \\[5pt]
$\begin{array}{@{\hspace*{2cm}}cll}
 \multicolumn{2}{l}{\icomp (j(h^i(\mu(f\om))) \comp j(h^i(\nu(f\om)))) \comp \rho(h\om)} \\ 	
 \peqe &
 \icomp j(h^i(\mu(f\om)) \comp h^i(\nu(f\om))) \comp \rho(h\om)  \\
 \peqe &
 j(\icomp h^i(\mu(f\om)) \comp h^i(\nu(f\om))) \comp \rho(h\om) & \textnormal{by } \peqinfstruct \\
 \peqe &
 \rho(\icomp h^i(\mu(f\om)) \comp h^i(\nu(f\om))) \\
 \peqe &
 \rho(f\om) \comp k(\icomp h^i(\mu(f\om)) \comp h^i(\nu(f\om))) \\
 \peqe &
 \rho(f\om) \comp \icomp k(h^i(\mu(f\om)) \comp h^i(\nu(f\om))) &  \textnormal{by } \peqinfstruct \\
 \peqe &
 \rho(f\om) \comp \icomp (k(h^i(\mu(f\om))) \comp k(h^i(\nu(f\om)))) \\
\end{array}$ \\
where the first and last ``steps'' involve, in fact, an infinite number of equation occurrences.

When reasoning about convergent proof terms, the convergence conditions on the sequence $\langle \psi_i \rangle_{i < \omega}$ (resp. $\langle \phi_i \rangle_{i < \omega}$) for $\icomp \psi_i$ (resp. $\icomp \phi_i$) are implicit conditions to apply \eqlinfcomp. 
Particularly, the minimal activity depth of the components must tend to $\omega$ for both $\psi$ and $\phi$, thus entailing a convergence condition on the infinite number of premises. 
As we have remarked for the \peqinfstruct\ equation, the implications of such convergence conditions on equational reasoning could be object of future work.

\bigskip
To motivate the inclusion of the \eqllim\ equational rule, and consequently the need to define a separated \emph{base} relation, we recall the proof terms
$\psi_3 \eqdef \icomp g^i(\mu(f\om)) \comp \icomp h^i(\nu(g\om))$ 
and
$\psi_4 \eqdef \icomp (h^i(\mu(f\om)) \comp h^i(\nu(f\om)))$
from the introduction to this Section.
By using the base \peqence\ relation given in Dfn.~\ref{dfn:peqe}, we can permute the first $\nu$ step with all the $\mu$ steps but the first, obtaining
$\psi_3 \peqe \mu(f\om) \comp \nu(f\om) \comp h(\icomp g^i(\mu(f\om)) \comp \icomp h^i(\nu(g\om)))$; so that the first component in $\psi_4$ can be ``extracted'' from $\psi_3$.
Such a process can be repeated in order to ``extract'' more components, arriving to 
$\psi_3 \peqe \mu(f\om) \comp \nu(f\om) \comp \ldots \comp h^n(\mu(f\om)) \comp h^n(\nu(f\om)) \comp h^{n+1}(\icomp g^i(\mu(f\om)) \comp \icomp h^i(\nu(g\om)))$
for each $n < \omega$.
On the other hand, it is straightforward to observe that $\psi_4 \peqe \mu(f\om) \comp \nu(f\om) \comp \ldots \comp h^n(\mu(f\om)) \comp h^n(\nu(f\om)) \comp h^{n+1}(\icomp h^{i}(\mu(f\om)) \comp h^{i}(\nu(f\om)))$.
In order to conclude $\psi_3 \peq \psi_4$, it is needed to resort to the \eqllim\ rule added in Dfn.~\ref{dfn:peq}.
We observe that the minimal activity depth of the successive ``differences'' 
$h^{n+1}(\icomp g^i(\mu(f\om)) \comp \icomp h^i(\nu(g\om)))$ and
$h^{n+1}(\icomp h^{i}(\mu(f\om)) \comp h^{i}(\nu(f\om)))$ tend to infinity, as required in the premises of the \eqllim\ rule.

\begin{remark}
\label{rmk:peqinout-restriction}
We notice that the requirement of lhs and rhs convergence put on the instances of the equation schemata does not imply that every variable in a scheme must necessarily be replaced by a convergent proof term. 
\Eg, considering $\mu: f(x) \to g(x)$, $\nu: g(x) \to k(x)$, $\rho: h(x,y) \to j(y)$, and $\tau : i(x) \to x$, the following instance of \peqoutin:
$\rho(\tau\om, \mu(a) \cdot \nu(a)) \peqe \rho(i\om, f(a)) \comp j(\mu(a) \comp \nu(a))$,
is legal even when $\psi_1$ is replaced by the divergent proof term $\tau\om$.
Observe particularly that $\rho(\tau\om, \mu(a) \cdot \nu(a))$ is a convergent proof term: convergence of $\tau\om$ is not asked since the corresponding variable in the lhs of the $\rho$ rule does not occur in the rhs; \confer\ Dfn~\ref{dfn:layer-pterm}, case \ref{rule:ptsymbol}.

\smallskip
On the other hand, let us try to decompose the proof term $\rho(\tau\om, \mu(a) \cdot \nu(a))$ ``the other way around'', namely by using \peqinout\ instead of \peqoutin.
The form of \peqinout\ for proof terms having $\rho$ as root symbol is \\
\hspace*{1cm} 
$\rho(\psi_1, \psi_2) \peq h(\psi_1, \psi_2) \comp \rho(tgt(\psi_1), tgt(\psi_2))$ \\
In turn, replacing $\psi_1$ with $\tau\om$ and $\psi_2$ with $\mu(a) \comp \nu(a)$ yields \\
\hspace*{1cm} 
$\rho(\tau\om ,\, \mu(a) \comp \nu(a)) \peq h(\tau\om ,\, \mu(a) \comp \nu(a)) \comp \rho(tgt(\tau\om) ,\, tgt(\mu(a) \comp \nu(a)))$ \\
Therefore, applying the equation having the given proof term as left-hand side would require $tgt(\tau\om)$ to be defined, which is not the case.

A similar situation occurs with the (intuitively very simple) equation \peqidright. In this case, the target of the proof term at the right-hand side of an instance must be defined in order for the corresponding left-hand side to make sense. 

To avoid this kind of situations, an additional requirement will be put to the uses of the \eqleqn\ equational rule, when the equation involved is either \peqinout\ or \peqidright. 
For \peqidright, we ask $\psi$ to be convergent.
For \peqinout\, convergence must be asked, not only of the proof term $\mu(\psi_1, \ldots, \psi_m)$ at the left-hand side of the equation, but convergence must be required to all the (proof terms taking the place of each variable) $\psi_i$ as well.
Therefore, \wrt\ the motivating example, $\rho(\tau\om ,\, \mu(a) \comp \nu(a))$ is not a valid left-hand side to apply \peqinout, even if it is a convergent proof term.
When using either \peqinout\ or \peqidright\ in proofs involving the relation $\peqe$, it should be checked those uses to correspond to valid instances%
\footnote{We notice that in the development of the %
\includeStandardisation{standardisation proof}%
\doNotIncludeStandardisation{compression proof in Section~\ref{sec:compression}}%
, the uses of \peqinout\ correspond to situations in which the convergence of the proof term to be put at the right-hand side is known in advance. In fact, the intent of the uses of this equation is to ``obtain'' a condensed form of some contraction activity, corresponding to the left-hand side, in order to subsequently decomposing the obtained condensed form in a top-to-bottom fashion, through the \peqoutin\ equation. \Confer\ the proof of Lem.~\ref{rsl:jump-one-step}.}%
.

The equations \peqidright\ and \peqinout\ are the only elements in Dfn.~\ref{dfn:peqe} for which a well-formed proof term being the element for one side in a possible instance does not have a convergent proof term as the correspondent element for the other side%
\footnote{This claim will be proved shortly.}.
\end{remark}

\includeStandardisation{\subsubsection{Basic properties of \peqence}}
\doNotIncludeStandardisation{\subsection{Basic properties of \peqence}}
\label{sec:peqence-basic-properties}

\begin{lemma}
\label{rsl:peq-then-same-src-mind-tgt}
Let $\psi$, $\phi$ be convergent proof terms such that $\psi \peq \phi$. Then $src(\psi) = src(\phi)$, $tgt(\psi) = tgt(\phi)$ and $mind(\psi)  = mind(\phi)$.
\end{lemma}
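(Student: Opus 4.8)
The plan is to prove the three equalities simultaneously by transfinite induction on the least layer $\alpha$ with $\psi \layerpeq{\alpha} \phi$, performing a case analysis on the equational-logic rule justifying the judgement (\confer\ Dfn.~\ref{dfn:layer-peqe} and Dfn.~\ref{dfn:layer-peq}). To make the induction close I would strengthen the statement so that it also records convergence: for arbitrary (not necessarily convergent) proof terms related by $\peq$, I prove $src(\psi) = src(\phi)$, $\mind{\psi} = \mind{\phi}$, that $\psi$ is convergent iff $\phi$ is, and that $tgt(\psi) = tgt(\phi)$ whenever both are convergent. The stated lemma is the instance in which both endpoints are convergent; the convergence clause is what lets $\eqltrans$ chain through a possibly awkward middle term and lets $\eqlfun$/$\eqlrule$ match the ``suitable $i$'' convergence conditions.

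First I would dispatch the seven basic equations (the base case $\eqleqn$). For each schema one reads $src$, $tgt$ and $\mind$ of both sides directly off Dfn.~\ref{dfn:layer-pterm}: e.g.\ for $\peqstruct$ both sides have source $f(src(\psi_1),\ldots)$, target $f(tgt(\phi_1),\ldots)$ and minimal activity depth $1+\min_i \min(\mind{\psi_i},\mind{\phi_i})$, while for $\peqoutin$ and $\peqinout$ both sides have source $l[src(\psi_i)]$, target $r[tgt(\psi_i)]$ and minimal activity depth $0$; the identity, associativity and $\peqinfstruct$ schemata are equally routine. The convergence equivalence for these equations is exactly what their side conditions are designed to guarantee (\confer\ Rmk.~\ref{rmk:peqinout-restriction}), and I would verify it using Lemmas~\ref{rsl:fnsymbol-convergence}, \ref{rsl:ctx-convergence} and \ref{rsl:rulesymbol-convergence}; the only delicate schemata are $\peqidright$ and $\peqinout$, whose extra convergence requirements make the two sides convergent together. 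The closure rules $\eqlrefl$, $\eqlsymm$, $\eqltrans$, $\eqlfun$, $\eqlrule$, $\eqlcomp$ and $\eqlinfcomp$ then follow by applying the \ih\ to the (strictly smaller) premises and recombining the componentwise data through the defining equations for $src$, $tgt$ and $\mind$. I note that none of these difficulties involve $\eqllim$, since that rule belongs only to $\peq$; in particular the strengthened statement, convergence clause included, is unproblematic for the base relation $\peqe$, which I would record as a separate sublemma to be invoked on the premises of $\eqllim$.

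The real work is the $\eqllim$ rule, where for every $k$ one has $\psi \peqe \chi_k \comp \psi'_k$ and $\phi \peqe \chi_k \comp \phi'_k$ with $\mind{\psi'_k},\mind{\phi'_k} > k$, and the premises are $\peqe$-judgements. Equality of sources is immediate from $k=0$: $src(\psi)=src(\chi_0\comp\psi'_0)=src(\chi_0)=src(\chi_0\comp\phi'_0)=src(\phi)$. For $\mind$ I would use the $k$-indexed minima $\mind{\psi}=\min(\mind{\chi_k},\mind{\psi'_k})$ and $\mind{\phi}=\min(\mind{\chi_k},\mind{\phi'_k})$: if $\mind{\psi}=n<\omega$, choosing $k>n$ forces $\mind{\chi_k}=n$ and hence $\mind{\phi}=n$, whereas if $\mind{\psi}=\omega$ then $\mind{\chi_k}=\omega$ for all $k$ and $\mind{\phi}>k$ for all $k$, so $\mind{\phi}=\omega$. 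For the target, with both $\psi$ and $\phi$ convergent, the \ih\ on the two premises gives $tgt(\psi)=tgt(\chi_k\comp\psi'_k)=tgt(\psi'_k)$ and $tgt(\phi)=tgt(\phi'_k)$, while well-formedness of each composition makes $\chi_k$ convergent with $tgt(\chi_k)=src(\psi'_k)=src(\phi'_k)$. Then $\mind{\psi'_k}>k$ and Lemma~\ref{rsl:mind-big-then-tdist-little} yield $\tdist{tgt(\chi_k)}{tgt(\psi)}<2^{-k}$, and symmetrically $\tdist{tgt(\chi_k)}{tgt(\phi)}<2^{-k}$; the ultrametric inequality (Lemma~\ref{rsl:tdist-is-ultrametric}) gives $\tdist{tgt(\psi)}{tgt(\phi)}<2^{-k}$ for every $k$, whence $tgt(\psi)=tgt(\phi)$ by the observation following Dfn.~\ref{dfn:distance}.

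I expect the main obstacle to be the convergence bookkeeping rather than the metric computation. One must ensure that every auxiliary term appearing in the derivation --- the prefixes and tails $\chi_k\comp\psi'_k$, and the intermediate term of any $\eqltrans$ --- is convergent precisely when the relevant endpoint is, so that its target is defined before Lemma~\ref{rsl:mind-big-then-tdist-little} is applied. This is exactly why I fold the convergence equivalence into the inductive statement and lean on the side conditions recorded in Rmk.~\ref{rmk:peqinout-restriction}: for every rule other than $\eqllim$ the equivalence propagates immediately from the \ih, and for $\eqllim$ it is the genuinely subtle point, where I would combine the convergence of the prefixes $\chi_k$ (forced by well-formedness of $\chi_k\comp\psi'_k$) with the convergence preservation already available for the base relation $\peqe$ and with $\mind{\psi'_k},\mind{\phi'_k}>k$ to legitimise the limit argument above.
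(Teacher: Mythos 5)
Your proposal tracks the paper's own proof in every essential computation. The paper likewise proceeds by induction on the layer of the judgement, dismisses all cases except \peqinfstruct, \eqlinfcomp\ and \eqllim\ as routine, and handles \eqllim\ exactly as you do: sources from the $k=0$ premise, minimal activity depth from $\mind{\psi} = min(\mind{\chi_k},\mind{\psi'_k})$ (the paper argues by contradiction, you argue directly --- same content), and targets via Lem.~\ref{rsl:mind-big-then-tdist-little} applied to $\psi'_k$ and $\phi'_k$ followed by the ultrametric inequality of Lem.~\ref{rsl:tdist-is-ultrametric}. Your organisational change --- proving the $\peqe$-restriction, convergence clause included, as a prior sublemma and invoking it on the premises of \eqllim\ --- is sound, and it actually repairs something the paper glosses over: the paper applies its \ih\ to $\psi \layerpeqe{\alpha_n} \chi_n \comp \psi'_n$ without verifying that $\chi_n \comp \psi'_n$ is convergent, which the \ih\ (stated for pairs of convergent terms) requires, and which the paper could only obtain from Lem.~\ref{rsl:peqe-soundness-convergent} --- a lemma proved \emph{after} the present one and depending on it. Your ordering untangles that circularity. (Calling \peqinfstruct\ ``routine'' is a judgement call; the paper spends a page on the min-interchange and the limit/constructor interchange, but both are indeed standard metric arguments.)

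There is, however, one genuine gap, and it is load-bearing. Your induction needs the clause ``$\psi$ convergent iff $\phi$ convergent'' to hold for the full relation $\peq$, because your \eqltrans\ case must pass through an intermediate term whose convergence is not among the hypotheses. For \eqllim\ this clause is not delivered by the ingredients you cite. From $\phi \peqe \chi_k \comp \phi'_k$ and $\peqe$-convergence-preservation you learn only that $\phi$ is convergent iff $\phi'_k$ is; well-formedness makes each prefix $\chi_k$ convergent; but nothing in the premises of \eqllim\ --- neither $\mind{\phi'_k} > k$ nor the convergence of every $\psi'_k$ on the $\psi$ side --- forces any $\phi'_k$ to be convergent. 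High minimal depth does not imply convergence: with $\mu : f(x) \to f(x)$, the infinite concatenation $\mu(a) \comp \mu(a) \comp \cdots$ is a well-formed but divergent proof term, and nesting it under $k+1$ function symbols gives a divergent proof term whose minimal activity depth exceeds $k$. So transferring convergence across \eqllim\ would require a genuinely new invariant (something in the spirit of ``a convergent proof term performs only finitely much activity at each depth'', so that divergence cannot hide below every $\chi_k$); neither you nor the paper provides one. To be fair, the paper's proof has the same hole in disguise: it declares \eqltrans\ trivial, yet its \ih\ there also requires the intermediate term to be convergent. If you weaken your strengthened statement so that the convergence clause is claimed only for $\peqe$ (where it is provable, as in Lem.~\ref{rsl:peqe-soundness-convergent}), then your \eqllim\ case closes exactly as written, but the \eqltrans\ case of the lemma itself remains open and needs a separate idea.
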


\begin{proof}
We proceed by induction on $\alpha$ where $\psi \layerpeq{\alpha} \phi$, analysing the equational logic rule used in the final step of that judgement.
Observe particularly that Lem~\ref{rsl:mind-big-then-tdist-little}:(\ref{it:convergent-then-has-tgt}) implies both $tgt(\psi)$ and $tgt(\phi)$ to be defined.
If the rule is \eqleqn, then we analyse the equation of which the pair $\pair{\psi}{\phi}$ is an instance.
It turns out that the only non-trivial cases are those corresponding to the \peqinfstruct\ equation and the \eqlinfcomp\ and \eqllim\ rules. We prove the result for each of these cases.

\medskip
Assume that $\pair{\psi}{\phi}$ is an instance of the \peqinfstruct\ equation, \ie, that \\
$\psi = \icomp f(\psi^1_i, \ldots, \psi^m_i)$ and 
$\phi = f(\icomp \psi^1_i, \ldots, \icomp \psi^m_i)$. 
\begin{itemize}
\item 
We verify $\mind{\psi} = \mind{\phi}$. \\
Observe that 
$\mind{\psi} 
		= min_{i < \omega}(\mind{f(\psi^1_i, \ldots, \psi^m_i)})
		= \mind{f(\psi^1_a, \ldots, \psi^m_a)} 
		= 1 + min(\mind{\psi^1_a}, \ldots, \mind{\psi^m_a})
		= 1 + \mind{\psi^b_a}$
where 
\begin{eqnarray}
\mind{f(\psi^1_a, \ldots, \psi^m_a)} & \leq & \mind{f(\psi^1_i, \ldots, \psi^m_i)}
		\quad \textforall i < \omega
		\label{eq:infstruct-mind-1} \\
\mind{\psi^b_a} & \leq & \mind{\psi^j_a} 
		\quad \textif 1 \leq j \leq m
		\label{eq:infstruct-mind-2} 
\end{eqnarray}
On the other hand, 
$\mind{\phi} 
		= 1 + min(\mind{\icomp \psi^1_i}, \ldots, \mind{\icomp \psi^m_i})
		= 1 + \mind{\icomp \psi^{b'}_i}
		= 1 + \mind{\psi^{b'}_{a'}}$
where
\begin{eqnarray}
\mind{\icomp \psi^{b'}_i} & \leq & \mind{\icomp \psi^{j}_i}
		\quad \textif 1 \leq j \leq m
		\label{eq:infstruct-mind-3} \\
\mind{\psi^{b'}_{a'}} & \leq & \mind{\psi^{b'}_{i}}
		\quad \textforall i < \omega
		\label{eq:infstruct-mind-4}
\end{eqnarray}
Assume for contradiction $\mind{\psi^{b}_{a}} < \mind{\psi^{b'}_{a'}}$.
Then $b \neq b'$ would imply 
$\mind{\icomp \psi^{b}_i} \leq \mind{\psi^{b}_{a}} 
		< \mind{\psi^{b'}_{a'}} = \mind{\icomp \psi^{b'}_i}$, contradicting (\ref{eq:infstruct-mind-3}), and $b = b'$ would immediately contradict (\ref{eq:infstruct-mind-4}).
Analogously, if we assume $\mind{\psi^{b'}_{a'}} < \mind{\psi^{b}_{a}}$, then $a \neq a'$ would imply 
$\mind{f(\psi^1_{a'}, \ldots, \psi^m_{a'})}	\leq 1 + \mind{\psi^{b'}_{a'}}
		< 1 + \mind{\psi^{b}_{a}} = \mind{f(\psi^1_a, \ldots, \psi^m_a)}$, contradicting (\ref{eq:infstruct-mind-1}), and $a = a'$ would immediately contradict (\ref{eq:infstruct-mind-2}).
Hence we conclude.

\item
To verify the condition about source terms, it is enough to observe that
$src(\psi) = src(\phi) = f(src(\psi^1_0), \ldots, src(\psi^m_0))$.

\item
We verify $tgt(\psi) = tgt(\phi)$. 
Observe that 
$tgt(\psi) = \lim_{i \to \omega} f(tgt(\psi^1_i), \ldots, tgt(\psi^m_i))$ and 
$tgt(\phi) = f(\lim_{i \to \omega} tgt(\psi^1_i), \ldots, \lim_{i \to \omega} tgt(\psi^m_i))$. \\
Let $t_j \eqdef \lim_{i \to \omega} tgt(\psi^j_i)$, so that $tgt(\phi) = f(t_1, \ldots, t_m)$.
Then it is enough to prove that $\tdist{tgt(\psi)}{f(t_1, \ldots, t_m)} = 0$. \\
Let $n < \omega$. Let $k$ such that for all $j$, $i > k$ implies $\tdist{tgt(\psi^j_i)}{t_j} < 2^{-(n-1)}$ and also
$\tdist{f(tgt(\psi^1_i), \ldots, tgt(\psi^m_i))}{tgt(\psi)} < 2^{-n}$. \\
Let $i \eqdef k+1$. Then 
$\tdist{f(tgt(\psi^1_i), \ldots, tgt(\psi^m_i))}{f(t_1, \ldots, t_m)} = $\\
		$\frac{1}{2} * max(\tdist{tgt(\psi^1_i)}{t_1}, \ldots, \tdist{tgt(\psi^1_m)}{t_m})
		< 2^{-n}$.
Hence Lem.~\ref{rsl:tdist-is-ultrametric} yields 
$\tdist{tgt(\psi)}{f(t_1,\ldots,t_m)} < 2^{-n}$.
Thus we conclude.
\end{itemize}

Assume that the rule justifying $\psi \layerpeq{\alpha} \phi$ is \eqlinfcomp, so that $\psi = \icomp \psi_i$, $\phi = \icomp \phi_i$, and for all $i < \omega$, $\psi_i \layerpeq{\alpha_i} \phi_i$ where $\alpha_i < \alpha$. \\
Source terms: it is enough to apply \ih\ on $\psi_0 \layerpeq{\alpha_0} \phi_0$ obtaining $src(\psi) = src(\psi_0) = src(\phi_0) = src(\phi)$. \\
Target terms and $\mindfn$:
Observe that \ih\ can be applied on each $\psi_i \layerpeq{\alpha_i} \phi_i$, yielding $tgt(\psi_i) = tgt(\phi_i)$ and $\mind{\psi_i} = \mind{\phi_i}$.
Then recalling the definitions of target and $\mindfn$ on $\psi$ and $\phi$ suffices to conclude.

\medskip
Assume that the rule used in the last step of the judgement $\psi \layerpeq{\alpha} \phi$ is \eqllim, so that for all $n < \omega$, 
$\psi \layerpeqe{\alpha_n} \chi_n \comp \psi'_n$ and $\phi \layerpeqe{\alpha_n} \chi_n \comp \phi'_n$, where $\mind{\psi'_n} > n$, $\mind{\phi'_n} > n$, $\alpha_n < \alpha$ and $\beta_n < \alpha$.
Observe that $\layerpeqe{\alpha} \,\subseteq\, \layerpeq{\alpha}$ for any ordinal $\alpha$, so that \ih\ can be applied to any premise of the \eqllim\ rule. \\[2pt]
Source terms: applying \ih\ on $\psi \layerpeq{\alpha_0} \chi_0 \comp \psi'_0$ and $\phi \layerpeq{\alpha_0} \chi_0 \comp \phi'_0$, we obtain $src(\psi) = src(\phi) = src(\chi_0)$. \\[2pt]
Target terms: we prove $\tdist{tgt(\psi)}{tgt(\phi)} = 0$.
Let $n < \omega$. 
Then \ih\ on $\psi \layerpeq{\alpha_n} \chi_n \comp \psi'_n$ and $\phi \layerpeq{\alpha_n} \chi_n \comp \phi'_n$ yields 
$tgt(\psi) = tgt(\psi'_n)$ and $tgt(\phi) = tgt(\phi'_n)$. Moreover, it is immediate to obtain $src(\psi'_n) = src(\phi'_n) = tgt(\chi_n)$. \\
Recalling that $\mind{\psi'_n} > n$ and $\mind{\phi'_n} > n$, Lem.~\ref{rsl:mind-big-then-tdist-little} can be applied to obtain 
$\tdist{tgt(\chi_n)}{tgt(\psi)} = \tdist{src(\psi'_n)}{tgt(\psi'_n)} < 2^{-n}$ and analogously 
$\tdist{tgt(\chi_n)}{tgt(\phi)} = \tdist{src(\phi'_n)}{tgt(\phi'_n)} < 2^{-n}$.
Therefore Lem.~\ref{rsl:tdist-is-ultrametric} yields $\tdist{tgt(\psi)}{tgt(\phi)} < 2^{-n}$. Thus we conclude. \\[2pt]
Minimal activity depth:
Assume for contradiction $n \eqdef \mind{\psi} < \mind{\phi}$.
Observe $\psi \peq \chi_n \comp \psi'_n$ and $\phi \peq \chi_n \comp \phi'_n$, where $\mind{\psi'_n} > n$ and $\mind{\phi'_n} > n$.
Then $\mind{\psi} = n$ implies $\mind{\chi_n} = n$, and therefore $\mind{\phi} = n$, contradicting the assumption.
The assertion $\mind{\phi} < \mind{\psi}$ can be contradicted analogously.
Thus we conclude.
\end{proof}

\medskip
The result about $\mindfn$ and $src$ allows to prove that $\peqe$ is closed \wrt\ the set of convergent proof terms. 

\begin{lemma}
\label{rsl:peqe-soundness-convergent}
Let $\psi$ and $\phi$ proof terms such that $\psi \peqe \phi$.
Then $\psi$ is a well-formed and convergent proof term iff $\phi$ is.
\end{lemma}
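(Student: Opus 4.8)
The relation $\peqe$ relates proof terms only --- Dfn.~\ref{dfn:layer-peqe} admits an instance of a basic equation precisely when both its sides are proof terms, and every equational logic rule combines proof terms into proof terms --- so both $\psi$ and $\phi$ are automatically well-formed. The content to establish is therefore the single biconditional: $\psi$ is convergent iff $\phi$ is. The plan is to proceed by induction on the ordinal $\alpha$ for which $\psi \layerpeqe{\alpha} \phi$, with a case analysis on the equational logic rule applied in the last step of the corresponding judgement (note that $\peqe$ has no \eqllim\ rule, that being exclusive to $\peq$). Throughout I will use that convergence of a composite proof term is governed by Dfn.~\ref{dfn:layer-pterm}: a binary composition $\chi_1 \comp \chi_2$ is convergent iff $\chi_2$ is (case~\ref{rule:ptbinC}); an infinite composition $\icomp \chi_i$ is convergent iff $\mind{\chi_i}$ tends to $\omega$ (case~\ref{rule:ptinfC}); and convergence under a function or rule symbol is characterised componentwise by Lemmas~\ref{rsl:fnsymbol-convergence}, \ref{rsl:rulesymbol-convergence} and \ref{rsl:ctx-convergence}.

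The base cases are \eqlrefl, which is trivial, and \eqleqn, which requires inspecting each of the seven basic equation schemata. For \peqidleft, \peqassoc\ and \peqstruct\ both sides reduce, via the convergence clauses above, to convergence of the same data (respectively $\psi$; the term $\chi$; and the family of right operands), so the biconditional is immediate. For \peqidright\ and \peqinout\ the additional convergence requirements imposed on their use (\confer\ Remark~\ref{rmk:peqinout-restriction}) guarantee that both sides are convergent simultaneously: in \peqidright\ the term $\psi$ is required convergent while $1$ is trivial, and in \peqinout\ all $\psi_i$ are required convergent, making $l[\psi_1,\ldots,\psi_m]$ convergent (Lem.~\ref{rsl:ctx-convergence}) and $\mu(t_1,\ldots,t_m)$ a convergent multistep since the $t_i$ are object terms. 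The case \peqoutin\ carries no extra restriction, but is settled the same way: $\mu(s_1,\ldots,s_m)$ is a convergent multistep (the $s_i$ being object terms), and both $\mu(\psi_1,\ldots,\psi_m)$ and $r[\psi_1,\ldots,\psi_m]$ are convergent under exactly the same condition, namely convergence of those $\psi_i$ whose variable occurs in $r$ (Lemmas~\ref{rsl:rulesymbol-convergence} and \ref{rsl:ctx-convergence}). The only delicate basic equation is \peqinfstruct: here I will use that $\mind{f(\psi^1_i,\ldots,\psi^m_i)} = 1 + \min_l \mind{\psi^l_i}$ tends to $\omega$ exactly when each of the finitely many sequences $\mind{\psi^l_i}$ does, which is precisely the condition for each $\icomp \psi^l_i$ to be convergent and hence for $f(\icomp \psi^1_i,\ldots,\icomp \psi^m_i)$ to be so.

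The inductive cases for \eqlsymm\ and \eqltrans\ follow at once from the induction hypothesis, the biconditional being symmetric and transitive. For \eqlfun\ and \eqlrule\ I combine the induction hypothesis on each pair $\psi_i \layerpeqe{\alpha_i} \phi_i$ with the componentwise characterisations of Lemmas~\ref{rsl:fnsymbol-convergence} and \ref{rsl:rulesymbol-convergence}. For \eqlcomp, since $\psi_1 \comp \psi_2$ and $\phi_1 \comp \phi_2$ are proof terms their left operands are already convergent, so the induction hypothesis on $\psi_2 \layerpeqe{\alpha_2} \phi_2$ together with case~\ref{rule:ptbinC} gives the claim.

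The main obstacle is the \eqlinfcomp\ case, because convergence of $\icomp \psi_i$ is not a componentwise property but an asymptotic condition on the minimal activity depths $\mind{\psi_i}$; the bare induction hypothesis ``$\psi_i$ convergent iff $\phi_i$ convergent'' is too weak to transfer it. The key observation is that $\icomp \psi_i$ and $\icomp \phi_i$, being proof terms, force \emph{every} component $\psi_i$ and $\phi_i$ to be convergent (case~\ref{rule:ptinfC}). Hence, since $\psi_i \peqe \phi_i$ implies $\psi_i \peq \phi_i$ and both are convergent, Lemma~\ref{rsl:peq-then-same-src-mind-tgt} applies to yield $\mind{\psi_i} = \mind{\phi_i}$ for every $i$. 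Consequently the two convergence conditions coincide --- $\mind{\psi_i}$ tends to $\omega$ iff $\mind{\phi_i}$ does --- so $\icomp \psi_i$ is convergent iff $\icomp \phi_i$ is, as required. This appeal is sound and non-circular: Lemma~\ref{rsl:peq-then-same-src-mind-tgt} is established earlier and its proof, which only computes sources, targets and minimal depths of already-convergent terms, does not invoke the present statement.
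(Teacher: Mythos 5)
Your overall strategy is the paper's own: induction on $\alpha$ with $\psi \layerpeqe{\alpha} \phi$, case analysis on the last equational rule, the componentwise convergence lemmas (Lem.~\ref{rsl:fnsymbol-convergence}, \ref{rsl:rulesymbol-convergence}, \ref{rsl:ctx-convergence}) for the basic equations, Remark~\ref{rmk:peqinout-restriction} for \peqidright\ and \peqinout, and Lem.~\ref{rsl:peq-then-same-src-mind-tgt} to transfer minimal activity depths in the \eqlinfcomp\ case. The convergence analysis you give, including the treatment of \peqinfstruct, matches the paper's.

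The gap is your opening move. You discharge the well-formedness half of the biconditional by asserting that ``every equational logic rule combines proof terms into proof terms''; that assertion is false, and it is precisely what this lemma exists to establish. The rules \eqlcomp\ and \eqlinfcomp\ (and likewise \eqlfun, \eqlrule) carry no side conditions: from premises $\psi_i \layerpeqx{\alpha_i} \phi_i$ they produce the expressions $\phi_1 \comp \phi_2$ or $\icomp \phi_i$, and nothing in the rules guarantees that these satisfy the coherence requirements of Dfn.~\ref{dfn:layer-pterm} (convergence of left operands, $tgt(\phi_i) = src(\phi_{i+1})$). This is why the paper proves well-formedness transfer \emph{together with} convergence, and why the two are entangled in the crucial case: for \eqlinfcomp\ the paper assumes only that $\psi = \icomp \psi_i$ is well-formed and convergent, obtains convergence of each $\phi_i$ from the \ih, and then derives $src(\phi_{i+1}) = tgt(\phi_i)$ from Lem.~\ref{rsl:peq-then-same-src-mind-tgt} --- and it is exactly this that makes $\icomp \phi_i$ a well-formed proof term at all. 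Your argument runs in the opposite direction: you \emph{presuppose} that $\icomp \phi_i$ is well-formed in order to conclude that every $\phi_i$ is convergent, i.e.\ you assume the coherence conditions the paper derives. Under that presupposition your $\mindfn$-based transfer of the convergence condition is sound, but what you have proved is strictly weaker than the stated lemma (whose conclusion explicitly includes ``well-formed''), and the weaker version is less useful downstream: \eg\ in the proof of Lem.~\ref{rsl:jump-one-step}, convergence of the \emph{constructed} term $\cfpc{\xi}{\Spa}$ is concluded from $\xi \peqe \cfpc{\xi}{\Spa}$ via this lemma, in a situation where well-formedness of the right-hand side is not independently known. To repair your proof, drop the opening claim and prove each direction of the biconditional as an implication from ``well-formed and convergent'' on one side, following the asymmetric pattern you would then be forced into for \eqlcomp\ and \eqlinfcomp.
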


\begin{proof}
We proceed by induction on $\alpha$ where $\psi \layerpeqe{\alpha} \phi$, analysing the equational rule used in the last step in the corresponding $\peqe$ derivation.

If the rule is \eqleqn, then we analyse the basic equation used.
\begin{itemize}
\item 
\peqidleft, \ie\ $\psi = src(\phi) \comp \phi$. It is immediate to verify the desired result.  
\item 
\peqidright, \ie\ $\psi = \phi \comp tgt(\phi)$. Observe that Remark~\ref{rmk:peqinout-restriction} implies that $\phi$ must be a convergent proof term. Thus we conclude immediately.
\item
\peqassoc, \ie\ $\psi = \chi \comp (\xi \comp \gamma)$ and $\phi = (\chi \comp \xi) \comp \gamma$. In this case, $\psi$ is well-formed iff $\phi$ is well-formed iff $\chi$, $\xi$ and $\gamma$ are well formed, and moreover $\chi$ and $\xi$ are convergent. Moreover, $\psi$ is convergent iff $\phi$ is convergent iff $\gamma$ is convergent. Thus we conclude.
\item
\peqstruct, \ie\ $\psi = f(\chi_1, \ldots, \chi_m) \comp f(\xi_1, \ldots, \xi_m)$ and $\phi = f(\chi_1 \comp \xi_1, \ldots, \chi_m \comp \xi_m)$.
In this case, $\psi$ is well formed iff $\phi$ is well-formed iff all $\chi_i$ and $\xi_i$ are well-formed, all the $\chi_i$ are also convergent (\confer\ Lem.~\ref{rsl:fnsymbol-convergence} for $\psi$), and $tgt(\chi_i) = src(\xi_i)$ for all $i$.
Moreover, 
$\psi$ is convergent
  iff all the $\xi_i$ are convergent (\confer\ again Lem.~\ref{rsl:fnsymbol-convergence})
	iff all the $\chi_i \comp \xi_i$ are convergent
	iff $\phi$ is convergent.
Thus we conclude.
\item
\peqinfstruct, \ie\ $\psi = \icomp f(\chi^1_i, \ldots, \chi^m_i)$ and $\phi = f(\icomp \chi^1_i, \ldots, \icomp \chi^m_i)$.

$\Rightarrow )$
Assume that $\psi$ is well-formed and convergent. Given $n < \omega$, let $k_n < \omega$ be an index verifying $\mind{f(\chi^1_i, \ldots \chi^m_i)} > n$ if $k_n < i$.
Let $j$ such that $1 \leq j \leq m$.
Then for all $i < \omega$, $f(\chi^1_i, \ldots \chi^m_i)$ convergent implies $\chi^j_i$ convergent, \confer\ Lem~\ref{rsl:fnsymbol-convergence}.
In turn $src(f(\chi^1_{i+1}, \ldots \chi^m_{i+1})) = tgt(f(\chi^1_i, \ldots \chi^m_i))$ implies immediately $src(\chi^j_{i+1}) = tgt(\chi^j_i)$. 
Finally, if $i > k_{n+1}$, then $\mind{f(\chi^1_i, \ldots \chi^m_i)} > n + 1$ implies $\mind{\chi^j_i} > n$. Hence $\icomp \chi^j_i$ is well-formed and convergent. Consequently, so is $\phi$.

$\Leftarrow )$
Assume that $\phi$ is well-formed and convergent. Given $j$ such that $1 \leq j \leq m$ and $n < \omega$, let $k_{(n,j)}$ be an index verifying $\mind{\psi^j_i} > n$ if $k_{(n,j)} < i$.
Let $i < \omega$. Then $\chi^j_i$ convergent and $src(\psi^j_{i+1}) = tgt(\psi^j_i)$ for all $j$ implies $f(\chi^1_i, \ldots, \chi^m_i)$ convergent and $src(f(\chi^1_{i+1}, \ldots \chi^m_{i+1})) = tgt(f(\chi^1_i, \ldots \chi^m_i))$. Then $\psi$ is a well-formed proof term.
Moreover, for all $n < \omega$, if $i > max \set{k_{(n,j)} \setsthat 1 \leq j \leq m}$, then $\mind{f(\chi^1_i, \ldots, \chi^m_i)} > n$. Consequently, $\psi$ is convergent.

\item
\peqinout, \ie\ $\psi = \mu(\chi_1, \ldots, \chi_m)$ and $\phi = l[\chi_1, \ldots, \chi_m] \comp \mu(t_1, \ldots, t_m)$.
In this case, Remark~\ref{rmk:peqinout-restriction} implies that all $\chi_i$ are convergent proof terms.
Then both $\psi$ and $\phi$ are well-formed and convergent.

\item
\peqoutin, \ie\ $\psi = \mu(\chi_1, \ldots, \chi_m)$ and $\phi = \mu(s_1, \ldots, s_m) \comp r[\chi_1, \ldots, \chi_m]$.
In this case $\psi$ is well-formed iff $\phi$ is well-formed iff $\chi_i$ are well-formed.
Moreover, $\psi$ is convergent iff $\phi$ is convergent iff all $\chi_i$ corresponding to variables occurring in the right-hand side $r$, which are exactly those occurring in $r[\chi_1, \ldots, \chi_m]$, are convergent; \confer\ Lem.~\ref{rsl:rulesymbol-convergence} and Lem.~\ref{rsl:ctx-convergence} respectively.	
\end{itemize}

\smallskip
If the equational rule used in the last step of the derivation ending in $\psi \layerpeqe{\alpha} \phi$ is \eqlrefl, \eqlsymm\ or \eqltrans, then a straightforward argument suffices to conclude.

If the rule is \eqlfun, \eqlrule\ or \eqlcomp, then a simple argument based on Lem.~\ref{rsl:fnsymbol-convergence}, Lem~\ref{rsl:rulesymbol-convergence} or just Dfn.~\ref{dfn:layer-pterm} case~(\ref{rule:ptbinC}) respectively, and \ih, suffices to conclude.

Assume that the rule used in the last step of the derivation is \eqlinfcomp. As the rule is symmetric, then it suffices to prove one side of the biconditional in the lemma statement.
Then assume that $\psi = \icomp \psi_i$ is a well-formed and convergent proof term.
Let $i < \omega$. Then $\psi_i$ is convergent and $src(\psi_{i+1}) = tgt(\psi_i)$.
Therefore \ih\ implies convergence of $\phi_i$, and Lem.~\ref{rsl:peq-then-same-src-mind-tgt} yields $src(\phi_{i+1}) = tgt(\phi_i)$. Hence $\phi$ is well-formed.
Let $n < \omega$. Then convergence of $\psi$ implies the existence of some $k_n < \omega$ verifying $\mind{\psi_i} > n$ if $k_n < i$. In turn, Lem.~\ref{rsl:peq-then-same-src-mind-tgt} implies $\mind{\phi_i} > n$ if $k_n < i$. Consequently, $\psi$ is convergent. 
\end{proof}

\medskip
The following lemma shows that \peqence\ is compatible with infinitary contexts.

\begin{lemma}
\label{rsl:peqe-compatible-ctx}
Let $C$ be a context having $k < \omega$ holes, and $\langle \psi_i \rangle_{i \leq k}$ and $\langle \phi \rangle_{i \leq k}$ two sequences of proof terms verifying $\psi_i \peqe \phi_i$ for all $i$.
Then $C[\psi_1, \ldots, \psi_k] \peqe C[\phi_1, \ldots, \phi_k]$.
\end{lemma}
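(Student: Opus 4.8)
The plan is to proceed by induction on $d \eqdef \max\{\posln{\BPos{C}{i}} : 1 \le i \le k\}$, the greatest depth of a hole in $C$, exactly as in the context lemmas Lemma~\ref{rsl:mind-ctx} and Lemma~\ref{rsl:ctx-convergence}. This induction is well founded because each hole occurs at a finite position and there are only finitely many of them, so $d < \omega$ even when $C$ is an infinite context. In the base case $d = 0$ the head of $C$ is itself the hole, so $C = \Box$ with a single hole ($k = 1$); then $C[\psi_1] = \psi_1$ and $C[\phi_1] = \phi_1$, and the desired conclusion $C[\psi_1] \peqe C[\phi_1]$ is precisely the hypothesis $\psi_1 \peqe \phi_1$.

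For the inductive step ($d > 0$) the head symbol $C(\epsilon)$ is not a hole, so $C = g(C_1, \ldots, C_m)$ for some symbol $g$ of arity $m$ — an object function symbol $g = f \in \Sigma$, a rule symbol $g = \mu$, or the composition symbol $g = \comp$ with $m = 2$ — where each $C_j$ is again a context. Following the index bookkeeping already carried out in Lemma~\ref{rsl:mind-ctx}, every hole $\BPos{C}{i}$ factors as $e_i \cdot \BPos{C_{e_i}}{j_i}$ for appropriate indices $e_i, j_i$, with the residual hole lying at depth $< d$ inside $C_{e_i}$. Thus the $\psi_i$ and $\phi_i$ distribute among the subcontexts, giving $C[\psi_1, \ldots, \psi_k] = g(C_1[\vec\psi^{1}], \ldots, C_m[\vec\psi^{m}])$ and likewise for $\vec\phi$, where $\vec\psi^{j}$ (resp. $\vec\phi^{j}$) is the subsequence assigned to $C_j$. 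Since every hole of each $C_j$ has depth $< d$, applying the induction hypothesis to each $C_j$ yields $C_j[\vec\psi^{j}] \peqe C_j[\vec\phi^{j}]$ for all $j$.

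It then remains to recombine these $m$ equivalences under the head symbol $g$, which is exactly what the congruence rules provide: \eqlfun\ for $g = f \in \Sigma$, \eqlrule\ for $g = \mu$ a rule symbol, and \eqlcomp\ for $g = \comp$. Choosing ordinals $\gamma_j$ with $C_j[\vec\psi^{j}] \layerpeqe{\gamma_j} C_j[\vec\phi^{j}]$ (which exist because $\peqe = \bigcup_\alpha \layerpeqe{\alpha}$), the appropriate rule produces a derivation of $C[\vec\psi] \layerpeqe{\delta} C[\vec\phi]$ for a suitable $\delta$ (e.g. $\gamma_1 + \cdots + \gamma_m + 1$), hence $C[\vec\psi] \peqe C[\vec\phi]$.

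The one genuinely non-routine point — the step I expect to be the main obstacle — is the composition case $g = \comp$, where \eqlcomp\ requires that both $C_1[\vec\psi^1] \comp C_2[\vec\psi^2]$ and $C_1[\vec\phi^1] \comp C_2[\vec\phi^2]$ be well-formed proof terms, i.e. that the left operand be convergent with target equal to the source of the right operand. Well-formedness on the $\psi$ side is part of the assumption that $C[\vec\psi]$ is a proof term; to transport it to the $\phi$ side I would invoke Lemma~\ref{rsl:peqe-soundness-convergent} to obtain convergence of $C_1[\vec\phi^1]$ from that of $C_1[\vec\psi^1]$, and Lemma~\ref{rsl:peq-then-same-src-mind-tgt} to obtain $tgt(C_1[\vec\phi^1]) = tgt(C_1[\vec\psi^1]) = src(C_2[\vec\psi^2]) = src(C_2[\vec\phi^2])$. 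If instead $C$ is understood to range only over the object signature $\Sigma$, as in Lemma~\ref{rsl:mind-ctx} and Lemma~\ref{rsl:ctx-convergence}, this case never arises, well-formedness is automatic, and only \eqlfun\ is needed. The remaining work — the explicit matching of hole indices in $C$ to those in the $C_j$ — is notation-heavy but entirely routine and is already spelled out in Lemma~\ref{rsl:mind-ctx}.
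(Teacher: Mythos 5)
Your proof is correct and follows essentially the same route as the paper's: induction on $\max\set{\posln{\BPos{C}{i}}}$, with the base case $C = \Box$ and the \eqlfun\ congruence rule recombining the subcontexts in the inductive case. The additional cases you single out as the main obstacle (head symbol a rule symbol or the dot) in fact never arise, because by Dfn.~\ref{dfn:ctx} a context is a term over $\Sigma \cup \set{\Box/0}$ only — as you yourself note in your closing remark — so the paper's appeal to \eqlfun\ alone already suffices.
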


\begin{proof}
An easy induction on $max \set{\posln{\BPos{C}{i}}}$ suffices. Resort to the \eqlfun\ equational rule for the inductive case.
\end{proof}

\medskip
\doNotIncludeStandardisation{\renewcommand{\peqefs}{\peqe}}
The following lemma shows that the \peqstruct\ equation can be extended to contexts having a finite number of holes.
\begin{lemma}
\label{rsl:struct-ctx}
Let $C$ be a context in $\Sigma$ (\ie\ built from function symbols only) having exactly $n < \omega$ occurrences of the box; and $\psi_1, \ldots, \psi_n$, $\phi_1, \ldots, \phi_n$ proof terms.
Then 
$C[\psi_1, \ldots, \psi_n] \comp C[\phi_1, \ldots, \phi_n]
\peqefs 
C[\psi_1 \comp \phi_1, \ldots, \psi_n \comp \phi_n]$.
\end{lemma}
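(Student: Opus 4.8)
The plan is to argue by induction on the structure of the context $C$, exactly as in the proofs of Lem.~\ref{rsl:mind-ctx} and Lem.~\ref{rsl:ctx-convergence}: concretely, induction on $max\set{\posln{\BPos{C}{i}} \setsthat 1 \leq i \leq n}$, or equivalently on the shape of $C$ given by Prop.~\ref{rsl:term-then-intuitive-notation} applied to the signature $\Sigma \cup \set{\Box/0}$. In the base case $C = \Box$ we have $n = 1$, and both $C[\psi_1] \comp C[\phi_1]$ and $C[\psi_1 \comp \phi_1]$ are literally the term $\psi_1 \comp \phi_1$, so the \eqlrefl\ rule concludes at once.

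For the inductive case $C = f(C_1, \ldots, C_m)$ I would first split the hole list: letting $n_j$ be the number of boxes of $C_j$, the boxes of $C$ are partitioned so that $C[\psi_1, \ldots, \psi_n] = f(C_1[\ldots], \ldots, C_m[\ldots])$, where each $C_j$ receives the corresponding block of the $\psi_i$ (and likewise each $C_j$ the matching block of the $\phi_i$), using the same bookkeeping as in Lem.~\ref{rsl:mind-ctx}. The argument is then a three-move chain glued by \eqltrans. First, one application of the plain \peqstruct\ equation (via \eqleqn) pushes the outer $f$ through the dot, rewriting $f(\ldots) \comp f(\ldots)$ into $f\big(C_1[\ldots] \comp C_1[\ldots], \ldots, C_m[\ldots] \comp C_m[\ldots]\big)$. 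Second, for each argument I rewrite $C_j[\ldots] \comp C_j[\ldots]$ into $C_j[\ldots \comp \ldots]$: by the induction hypothesis when $C_j$ has at least one box, and by a single use of \peqidleft\ when $C_j$ is hole-free, since then $C_j$ is an object term $t$ with $src(t) = t$, so that $t \comp t = src(t) \comp t \peqe t$. Third, these per-argument equivalences are lifted under $f$ by the \eqlfun\ rule (equivalently Lem.~\ref{rsl:peqe-compatible-ctx}), yielding $f\big(C_1[\ldots \comp \ldots], \ldots\big) = C[\psi_1 \comp \phi_1, \ldots, \psi_n \comp \phi_n]$, which is the desired right-hand side.

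The main obstacle I expect is administrative rather than conceptual: keeping the partition of the index set $\set{1, \ldots, n}$ into the blocks feeding $C_1, \ldots, C_m$ straight, and checking at every intermediate expression the well-formedness side conditions — convergence of the left operand of each dot, and matching of target with source — that are implicit in requiring the instances of \peqstruct\ to be genuine proof terms. These follow from the assumed well-formedness of $C[\psi_1, \ldots, \psi_n] \comp C[\phi_1, \ldots, \phi_n]$ together with Lem.~\ref{rsl:fnsymbol-convergence} and Lem.~\ref{rsl:ctx-convergence}, but they have to be tracked carefully; the only mildly non-obvious substantive point is the treatment of hole-free sub-contexts via \peqidleft. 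I would also note that the entire derivation uses only the equations \peqidleft\ and \peqstruct\ together with the rules \eqlrefl, \eqleqn, \eqlfun\ and \eqltrans, and in particular never invokes \eqlinfcomp; hence it stays within the finitary, structural fragment of base \peqence, which is precisely what the statement records through $\peqefs$.
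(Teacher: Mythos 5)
Your proposal is correct and follows essentially the same route as the paper's proof: induction on $max(\set{\posln{\BPos{C}{i}}})$, the trivial base case $C = \Box$, and in the inductive case $C = f(C_1, \ldots, C_m)$ an application of \peqstruct\ followed by the induction hypothesis on each $C_i$, lifted by the \eqlfun\ rule and glued by \eqltrans. If anything you are more careful than the paper, whose proof concludes with ``\ih\ on each $C_i$'' without spelling out the \peqstruct\ step and without noting that hole-free sub-contexts $C_j$ fall outside a literal reading of the induction hypothesis and need the separate one-line argument via \peqidleft\ that you correctly supply.
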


\begin{proof}
We proceed by induction on 
$max(\set{\posln{\BPos{C}{i}}})$.

If $C = \Box$, then we conclude immediately, notice that in this case $n = 1$.

Otherwise $C = f(C_1, \ldots, C_m)$. In this case \\
$C[\psi_1, \ldots, \psi_n] \comp C[\phi_1, \ldots, \phi_n] \ = $ \\
\hspace*{1cm} $f(C_1[\psi_1, \ldots, \psi_{k1}], \ldots, C_m[\psi_{k(m-1)+1}, \ldots, \psi_{n}]) \ \comp $ \\
\hspace*{1cm} $f(C_1[\phi_1, \ldots, \phi_{k1}], \ldots, C_m[\phi_{k(m-1)+1}, \ldots, \phi_{n}])$, and \\
$C[\psi_1 \comp \phi_1, \ldots, \psi_n \comp \phi_n] \ = $ \\
\hspace*{1cm} $f(C_1[\psi_1 \comp \phi_1, \ldots, \psi_{k1} \comp \phi_{k1}], \ldots, C_m[\psi_{k(m-1)+1} \comp \phi_{k(m-1)+1}, \ldots, \psi_{n} \comp \phi_{n}])$.
We conclude by \ih\ on each $C_i$, and then by the \eqlfun\ equational rule.
\end{proof}

\begin{lemma}
\label{rsl:trivial-pterm-peq-src}
Let $\psi$ be a trivial proof term. Then $\psi \peq src(\psi)$.
\end{lemma}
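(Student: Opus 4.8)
The plan is to prove the statement by the simplified induction principle of Prop.~\ref{rsl:pterm-induction-principle}, applied to the predicate $P(\psi)$ asserting ``if $\psi$ is trivial then $\psi \peq src(\psi)$''. Recall that for trivial $\psi$ the object term $src(\psi)$ contains no rule symbols, so it is itself a (trivial) proof term, and by Lem.~\ref{rsl:trivial-pterm-mind-omega} triviality is equivalent to $\mind{\psi} = \omega$. I will repeatedly use that a trivial \emph{convergent} proof term satisfies $src(\psi) = tgt(\psi)$: this follows from Lem.~\ref{rsl:mind-big-then-tdist-little}, since $\mind{\psi} > n$ for every $n$ forces $\tdist{src(\psi)}{tgt(\psi)} < 2^{-n}$ for all $n$, hence $src(\psi) = tgt(\psi)$. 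The convergence needed is always available, because whenever a subterm occurs as a left operand of a dot in a well-formed proof term, Dfn.~\ref{dfn:layer-pterm} (cases~\ref{rule:ptbinC} and~\ref{rule:ptinfC}) guarantees its convergence.

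Now I check the five conditions. The rule-symbol case (condition~5) is vacuous, since $\mu(\psi_1,\ldots,\psi_n)$ always contains a rule symbol and is never trivial. If $\psi$ is an infinitary multistep (condition~1) and trivial, it lies in $Ter^\infty(\Sigma)$, so $src(\psi) = \psi$ and $\psi \peq src(\psi)$ holds by reflexivity (\eqlrefl). For a trivial $\psi = f(\psi_1,\ldots,\psi_n)$ (condition~4) all $\psi_i$ are trivial, so the induction hypotheses give $\psi_i \peq src(\psi_i)$; applying \eqlfun\ and recalling $src(\psi) = f(src(\psi_1),\ldots,src(\psi_n))$ yields $\psi \peq src(\psi)$.

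For binary composition $\psi = \psi_1 \comp \psi_2$ (condition~2), triviality of $\psi$ makes both $\psi_i$ trivial and $\psi_1$ convergent, whence $src(\psi_2) = tgt(\psi_1) = src(\psi_1) =: s$. The induction hypotheses together with the \eqlcomp\ rule give $\psi \peq s \comp s$, and $s \comp s \peq s$ by $\peqidleft$ (with $1 = src(s) = s$); since $src(\psi) = s$, transitivity concludes. The infinite-concatenation case $\psi = \icomp \psi_i$ (condition~3) is analogous but is where the real work lies: as before each $\psi_i$ is trivial and convergent, so an easy induction on $i$ gives $src(\psi_i) = s$ for all $i$, and the \eqlinfcomp\ rule (whose premises may be arbitrary $\peq$-judgements) turns the hypotheses $\psi_i \peq s$ into $\icomp \psi_i \peq \icomp s$, where $\icomp s$ is the infinite concatenation of the object term $s$ with itself.

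The main obstacle is thus the identity $\icomp s \peq s$, which cannot be derived from the finitary and structural rules alone and genuinely requires the limit rule \eqllim. I verify its premises with $\chi_k := s$ for every $k < \omega$. On one side, $\icomp s = s \comp \icomp s$ holds already as an identity of terms (unfolding the position set in Dfn.~\ref{dfn:layer-pterm}, case~\ref{rule:ptinfC}), so $\icomp s \peqe s \comp \icomp s$ by \eqlrefl, with tail $\icomp s$ of minimal activity depth $\omega > k$. On the other side, $s \peqe s \comp s$ by $\peqidright$ (legitimate, since $s$ is convergent and $tgt(s) = s$), with tail $s$ again of minimal activity depth $\omega > k$. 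Both premises lie in the base relation $\peqe$ and use neither \eqllim\ nor any infinite concatenation, so \eqllim\ applies and yields $\icomp s \peq s$. Chaining $\icomp \psi_i \peq \icomp s \peq s = src(\psi)$ settles condition~3, completing the induction.
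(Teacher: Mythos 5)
Your proof is correct, but it takes a genuinely different route from the paper's. The paper proves the lemma in one stroke, with no induction at all: by \peqidleft, $\psi \peqe src(\psi) \comp \psi$, and also $src(\psi) \peqe src(\psi) \comp src(\psi)$ (since $src(src(\psi)) = src(\psi)$, $src(\psi)$ being a trivial \imstep); both tails $\psi$ and $src(\psi)$ are trivial, so Lem.~\ref{rsl:trivial-pterm-mind-omega} gives $\mind{\psi} = \mind{src(\psi)} = \omega > k$ for every $k < \omega$, and a single application of \eqllim\ with the constant common prefix $\chi_k \eqdef src(\psi)$ yields $\psi \peq src(\psi)$ directly. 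The observation that powers this shortcut is that \eqllim\ does not require the tails to vary or shrink with $k$; they only need activity depth exceeding $k$, which trivial proof terms satisfy vacuously for all $k$ simultaneously --- exactly the trick you used for $\icomp s \peq s$, but applied to $\psi$ and $src(\psi)$ themselves. Your proof instead runs the simplified induction principle of Prop.~\ref{rsl:pterm-induction-principle} and confines the limit rule to the single identity $\icomp s \peq s$, discharging the multistep, \eqlfun\ and \eqlcomp\ cases by purely finitary reasoning. This is longer, but it isolates precisely where \eqllim\ is unavoidable, and it yields a by-product the paper's proof does not: a trivial proof term containing no infinite concatenation is already \emph{base}-equivalent ($\peqe$) to its source, since in that case your derivation never leaves the base system, whereas the paper's argument invokes \eqllim\ even for, say, $a \comp a$. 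Both arguments are sound; the paper's is the more economical, yours the more informative about the structure of the equivalence.
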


\begin{proof}
Observe $\psi \peqe src(\psi) \comp \psi$ by \peqidleft. 
On the other hand, $src(\psi) \peqe src(src(\psi)) \comp src(\psi) = src(\psi) \comp src(\psi)$, by \peqidleft\ and Dfn.~\ref{dfn:src-tgt-imstep} respectively; recall that $src(\psi)$ is a trivial \imstep.
Moreover, for any $n < \omega$, $\mind{\psi} = \mind{src(\psi)} = \omega > n$, \confer\ Lem.~\ref{rsl:trivial-pterm-mind-omega}.
Therefore the rule \eqllim\ can be applied to obtain $\psi \peq src(\psi)$.
\end{proof}

\newpage
\section{Denotation of \redseqs}
\label{sec:pterm-denotation}
As stated in Sec.~\ref{sec:pterm}, the aim of the introduction of proof terms is to denote and study \redseqs\ in infinitary rewriting.

A basic question arises: can \emph{any} \redseq\ be denoted by a proof term?
In order to answer this question, we will resort to proof terms which denote a \redseq\ in a close, stepwise way, without condensing parallel or embedded steps. 
Formally, we will define a proper subset of the set of valid proof terms, which we will call \emph{\ppterms}, which include only (denotation of) single steps and dots. 
Then we will prove that any \redseq\ whose length is a countable ordinal can be denoted by means of a \ppterm. Observe that particularly this result applies to all convergent \redseqs, \confer\ Thm.~2 in \cite{inf-normalization}.

\medskip
Once denotation of all countable-length \redseqs\ is stated, the issue of \emph{uniqueness of stepwise denotation} arises.
It is easy to realize that stepwise denotation of a \redseq\ is not unique, because of different ``bracketings'', \ie\ different ways to associate dots.
A simple example follows, using the rules 
$\mu(x) : f(x) \to g(x)$, $\nu(x) : g(x) \to k(x)$, $\rho(x,y) : h(x,y) \to x$.
The proof terms $(\rho(f(a),b) \comp \mu(a)) \comp \nu(a)$ and $\rho(f(a),b) \comp (\mu(a) \comp \nu(a))$ are different stepwise denotations of the same \redseq, namely $h(f(a),b) \to f(a) \to g(a) \to k(a)$.
On the other hand, observe that these proof terms are \peqent, and moreover, its equivalence can be stated by using only the equation \peqassoc.

In the finitary setting, it is fairly intuitive that \ppterms\ being \emph{denotationally equivalent}, \ie\ such that they denote the same \redseq, can be proven to be \peqent\ by ``rebracketing'', \ie\ by applying equational logic using only the \peqassoc\ equation.
The reciprocal property also holds: if two \ppterms\ are \emph{rebracketing equivalent} (or, phrased differently, ``equal up to rebracketing'') then they denote the same \redseq.

\medskip
The concepts we have just introduced allow to state the question about denotation uniqueness in a more precise way: 
\textbf{do denotational and rebracketing equivalences coincide}?

For the finitary case, it is fairly simple to prove that the answer to this question is positive.
Indeed, by orienting the \peqassoc\ equation in either direction, \emph{standard} denotations of \redseqs\ can be obtained. 
These standard \ppterms\ can also be seen as the result of coherently associating dots to the left or to the right.

For \ppterms\ denoting infinite \redseqs, the question seems less obvious.
\Eg\ consider the sequence 
$f\om \to g(f\om) \to g(g(f\om)) \infred g\om$
which can be denoted \eg\ by the \ppterms\
$\icomp g^i(\mu(f\om))$ and $\icomp g^{2*i}(\mu(f\om)) \comp g^{2*i+1}(\mu(f\om))$.
For any $n < \omega$, it is easy to obtain, using only the equation \peqassoc, that
$\psi \peqe 
		( \mu(f\om) \comp \ldots \comp g^{2*n+1}(\mu(f\om)) ) \comp g^{2*(n+1)}(\psi)$
and 
$\phi \peqe 
		( \mu(f\om) \comp \ldots \comp g^{2*n+1}(\mu(f\om)) ) \comp g^{2*(n+1)}(\phi)$.
Then we can obtain $\psi \peq \phi$ \textbf{by resorting to a limit argument}, \ie\ by applying the \eqllim\ rule.
On the other hand, we did not find a way to justify \peqence\ between these \ppterms\ which avoids the use of \eqllim.

In this Section we will prove that, provided the characterisation of \peqence\ given in Sec.~\ref{sec:peqence}, denotational and rebracketing equivalences do coincide for infinitary term rewriting.
The corresponding proofs make evident the role of the limit \peqence\ argument in order to verify this coincidence.


\subsection{\Ppterms}
\label{sec:ppterm}
\denotationDistributed{%
In this section we will introduce a particular subclass of the set of proof terms, namely the \emph{\ppterms}. 
Just like \imsteps\ are in correspondence with maximal developments of \orthoredexsets, so that any maximal development can be \emph{denoted} by an \imstep (\confer\ \refsec{mstep-orthoredexset}); \ppterms\ are related with the set of all \redseqs\ for a given \TRS, so that \emph{any \redseq} can be denoted by a \ppterm.
A formal characterisation of the idea of denoting a \redseq\ by a (stepwise) proof term will be given in the following Section~\ref{sec:pterm-denotation}.
}%
\denotationInOwnChapter{%
In the following, we introduce the set of \ppterms, give some additional related definitions and state some basic properties of this subset of the set of valid proof terms.
}%

\begin{definition}[One-step]
\label{dfn:one-step}
A \emph{one-step} is an \imstep\ including exactly one occurrence of a rule symbol.
If $\psi$ is a one-step, then we define the redex position of $\psi$, notation $\RPos{\psi}$, as the position of the unique rule symbol occurrence in $\psi$, and the depth of $\psi$, notation $\sdepth{\psi}$, as $\posln{\RPos{\psi}}$; \confer\ \refdfn{step} for the analogy with the corresponding notions as defined for a reduction step.
\end{definition}

\begin{definition}[\Ppterm, \Pnpterm]
\label{dfn:ppterm}
A \emph{\ppterm} is any proof term $\psi$ whose formation satisfies any of the following conditions, where we refer to cases in \refdfn{layer-pterm}:
\begin{itemize}
	\item $\psi$ is a one-step, so it is built by case \ref{rule:ptmstep},
	\item $\psi$ is built by case \ref{rule:ptinfC}, so that $\psi = \icomp \psi_i$, and all of the $\psi_i$ are \ppterms, or
	\item $\psi$ is built by case \ref{rule:ptbinC}, so that $\psi = \psi_1 \comp \psi_2$, and both $\psi_1$ and $\psi_2$ are \ppterms.
\end{itemize}
A \emph{\pnpterm} is any proof term $\psi$ such that either $\psi$ is a \ppterm\ or $\psi \in \iSigmaTerms$.
\end{definition}

\begin{definition}[Steps of a \pnpterm]
\label{dfn:steps}
For any $\psi$ \pnpterm, we define the number of \emph{steps} of $\psi$, notation $\ppsteps{\psi}$, as the countable ordinal defined as follows: \\
\begin{tabular}{l}
if $\psi \in \iSigmaTerms$, then $\ppsteps{\psi} \eqdef 0$. \\
if $\psi$ is a one-step, then $\ppsteps{\psi} \eqdef 1$. \\
if $\psi = \icomp \psi_i$ then $\ppsteps{\psi} \eqdef \sum_{i < \omega} \ppsteps{\psi_i}$; \confer \refdfn{ordinal-infAdd}. \\
if $\psi = \psi_1 \comp \psi_2$ then $\ppsteps{\psi} \eqdef \ppsteps{\psi_1} + \ppsteps{\psi_2}$.
\end{tabular}
\end{definition}

\begin{lemma}
\label{rsl:steps-ordinal-coherence}
Let $\psi$ be a \ppterm, and let $\alpha$ the ordinal such that $\psi \in \layerpterm{\alpha}$. Then $\ppsteps{\psi}$ is a limit ordinal iff $\alpha$ is.
\end{lemma}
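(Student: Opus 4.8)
The plan is to run the argument by transfinite induction on the layer $\alpha$, which is well-defined by \reflem{pterm-layer-uniqueness}, and to use \reflem{ptinfC-iff-limit} to translate the hypothesis ``$\alpha$ is a limit ordinal'' into the purely structural statement ``$\psi$ is an infinite concatenation, built by case~\ref{rule:ptinfC} of \refdfn{layer-pterm}''. Since $\psi$ is a \ppterm, \refdfn{ppterm} guarantees it is built only by cases~\ref{rule:ptmstep}, \ref{rule:ptinfC}, or \ref{rule:ptbinC}, so the biconditional I must establish reduces to showing that $\ppsteps{\psi}$ is a limit ordinal exactly when $\psi$ falls under case~\ref{rule:ptinfC}.

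First I would record a preliminary observation: every \ppterm\ satisfies $\ppsteps{\psi} \geq 1$. By \refdfn{steps} the value $0$ is assigned only to pure terms in $\iSigmaTerms$, and a \ppterm\ is never a pure term (it contains either a rule symbol, if it is a one-step, or a dot), so a routine structural induction gives $\ppsteps{\psi} \geq 1$ in each of the three formation cases. With this in hand the case analysis is short. For a one-step ($\alpha = 1$) both $\alpha$ and $\ppsteps{\psi} = 1$ are non-limit, so the biconditional holds with both sides false. For a binary composition $\psi = \psi_1 \comp \psi_2$, where $\alpha = \alpha_1 + \alpha_2 + 1$ is non-limit, I would apply the inductive hypothesis to $\psi_2$, whose layer $\alpha_2$ is a successor by the side conditions of case~\ref{rule:ptbinC}; this makes $\ppsteps{\psi_2}$ non-limit, hence (being $\geq 1$) a successor $\gamma + 1$, so that $\ppsteps{\psi} = \ppsteps{\psi_1} + \ppsteps{\psi_2} = (\ppsteps{\psi_1} + \gamma) + 1$ is a successor by associativity of ordinal addition.

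The main work, and the step I expect to be the crux, is the infinite-concatenation case $\psi = \icomp \psi_i$, where $\alpha$ is a limit ordinal and I want to conclude that $\ppsteps{\psi} = \sum_{i < \omega}\ppsteps{\psi_i}$ is a limit ordinal. The key is the preliminary observation: each $\psi_i$ is again a \ppterm, so $\ppsteps{\psi_i} \geq 1$, whence the partial sums $s_n \eqdef \ppsteps{\psi_0} + \ldots + \ppsteps{\psi_n}$ form a strictly increasing $\omega$-sequence. Since $\sum_{i<\omega}\ppsteps{\psi_i} = \sup\{s_n \setsthat n < \omega\}$ by \refdfn{ordinal-infAdd}, and the supremum of a strictly increasing $\omega$-sequence of ordinals is necessarily a limit ordinal (it cannot be $0$, since $s_0 \geq 1$, nor a successor $\delta+1$, since some $s_n > \delta$ would force $s_{n+1} > s_n \geq \delta+1$ to exceed the supremum), the conclusion follows. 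This is the only place where the specific shape of the infinitary sum and the ``no empty component'' property of \ppterms\ are genuinely used; the remaining cases are bookkeeping. Collecting the three cases then completes the induction.
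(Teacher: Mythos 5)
Your proof is correct and follows exactly the route the paper intends: induction on the layer $\alpha$ (well-defined by Lem.~\ref{rsl:pterm-layer-uniqueness}), with Lem.~\ref{rsl:ptinfC-iff-limit} reducing the limit-ordinal hypothesis to the structural case analysis of Dfn.~\ref{dfn:ppterm}; the paper's own proof is just the remark ``easy induction on $\alpha$'', and your write-up supplies precisely the details it leaves implicit, including the key observation $\ppsteps{\psi} \geq 1$ that makes the partial sums strictly increasing in the infinite-concatenation case.
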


\begin{proof}
Easy induction on $\alpha$ where $\psi \in \layerpterm{\alpha}$.
\end{proof}

\begin{definition}[$\alpha$-th component of a \ppterm]
\label{dfn:ppterm-component}
Let $\psi$ be a \ppterm\ and $\alpha$ an ordinal such that $\alpha < \ppsteps{\psi}$. We define the \emph{$\alpha$-th component} of $\psi$, notation $\psi[\alpha]$, as the one-step defined as follows: \\
\begin{tabular}{p{.95\textwidth}}
if $\psi$ is a one-step, then $\psi[0] \eqdef \psi$. \\
if $\psi = \icomp \psi_i$, then there are unique $k$ and $\gamma$ such that $\alpha = \ppsteps{\psi_0} + \ldots + \ppsteps{\psi_{k-1}} + \gamma$ and $\gamma < \ppsteps{\psi_k}$; \confer\ \reflem{ordinal-lt-infAdd-then-unique-representation}. We define $\psi[\alpha] \eqdef \psi_k[\gamma]$. \\
if $\psi = \psi_1 \comp \psi_2$ and $\alpha < \ppsteps{\psi_1}$ then $\psi[\alpha] \eqdef \psi_1[\alpha]$. \\
if $\psi = \psi_1 \comp \psi_2$ and $\ppsteps{\psi_1} \leq \alpha$, then $\psi[\alpha] \eqdef \psi_2[\beta]$ such that $\ppsteps{\psi_1} + \beta = \alpha$.
\end{tabular}
\end{definition}

\begin{definition}
\label{dfn:maxd}
Let $\psi$ be a \ppterm\ such that $\ppsteps{\psi} < \omega$.
Then we define the \emph{maximal depth activity} of $\psi$ as $\maxd{\psi} \eqdef max(\sdepth{\redel{\psi}{n}} \setsthat n < \ppsteps{\psi})$.
We also define the \emph{maximal step depth} of $\psi$ as $\maxsd{\psi} \eqdef max(\Pdepth{\mu} \setsthat \mu \in R)$ where $R$ is the set of all the rule symbols occurring in $\psi$.
\end{definition}

\medskip
We show some expected properties of the components of a \ppterm.
These properties particularly entail that a \ppterm\ can be seen as the concatenation of its components, so that the particular way in which they are associated is irrelevant. 
\denotationDistributed{%
More on this in Section~\ref{sec:peqence}, specifically in Section~\ref{sec:peqence-basic-properties} and Section~\ref{sec:frso}.}

\begin{lemma}
\label{rsl:ppterm-mind-big-then-tdist-little}
Let $\psi$ be a \ppterm, $\alpha$ an ordinal and $n < \omega$, such that $\mind{\psi} > n$ and $\alpha < \ppsteps{\psi}$. Then 
\begin{enumerate}
	\item \label{it:ppterm-mind-big-then-step-mind-big}
	$\sdepth{\redel{\psi}{\alpha}} > n$.
	\item \label{it:ppterm-mind-big-then-tdist-little-step}
	$\tdist{src(\redel{\psi}{\alpha})}{tgt(\redel{\psi}{\alpha})} < 2^{-n}$.
	\item \label{it:ppterm-mind-big-then-tdist-little-src}
	$\tdist{src(\psi)}{tgt(\redel{\psi}{\alpha})} < 2^{-n}$.
\end{enumerate}
\end{lemma}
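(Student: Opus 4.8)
The plan is to prove all three items simultaneously by induction on the structure of the \ppterm\ $\psi$, following the three formation cases of Definition~\ref{dfn:ppterm} (equivalently, by induction on the ordinal $\lambda$ with $\psi \in \layerpterm{\lambda}$, using that every immediate \ppterm-subterm of $\psi$ lies in a strictly smaller layer). The two external tools I expect to lean on are the ultrametric inequality (Lemma~\ref{rsl:tdist-is-ultrametric}) and the global bound of Lemma~\ref{rsl:mind-big-then-tdist-little}, which for a convergent proof term $\chi$ with $\mind{\chi} > n$ yields $\tdist{src(\chi)}{tgt(\chi)} < 2^{-n}$. I would record at the outset the observation that every component $\redel{\psi}{\alpha}$ is a one-step, and that for a one-step $\phi$ one has $\mind{\phi} = \sdepth{\phi}$, since its unique rule-symbol occurrence sits at $\RPos{\phi}$ and $\sdepth{\phi} = \posln{\RPos{\phi}}$ (Definitions~\ref{dfn:dmin-imstep} and \ref{dfn:one-step}); in particular a one-step is always convergent.

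Item~(\ref{it:ppterm-mind-big-then-step-mind-big}) is the easiest and I would dispatch it first. A routine induction matching the recursion of $\mindfn$ against that of Definition~\ref{dfn:ppterm-component} shows that $\sdepth{\redel{\psi}{\alpha}} \geq \mind{\psi}$ for every $\alpha < \ppsteps{\psi}$: in the one-step base case $\redel{\psi}{0} = \psi$ and $\sdepth{\psi} = \mind{\psi}$; in the cases $\psi = \psi_1 \comp \psi_2$ and $\psi = \icomp \psi_i$ the value $\mind{\psi}$ is the minimum of the $\mind{\psi_i}$, and $\redel{\psi}{\alpha}$ is a component of one of the $\psi_i$, so the inductive hypothesis applies. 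Since $\mind{\psi} > n$ this gives $\sdepth{\redel{\psi}{\alpha}} > n$. Item~(\ref{it:ppterm-mind-big-then-tdist-little-step}) then follows without further induction: $\redel{\psi}{\alpha}$ is a convergent one-step with $\mind{\redel{\psi}{\alpha}} = \sdepth{\redel{\psi}{\alpha}} > n$ by item~(\ref{it:ppterm-mind-big-then-step-mind-big}), so Lemma~\ref{rsl:mind-big-then-tdist-little} applied to it yields $\tdist{src(\redel{\psi}{\alpha})}{tgt(\redel{\psi}{\alpha})} < 2^{-n}$.

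For item~(\ref{it:ppterm-mind-big-then-tdist-little-src}) I would run the structural induction proper, noting first that in every composition case $\mind{\psi} > n$ forces $\mind{\psi_i} > n$ for each immediate subterm. The base case coincides with item~(\ref{it:ppterm-mind-big-then-tdist-little-step}) since $src(\psi) = src(\redel{\psi}{0})$. For $\psi = \psi_1 \comp \psi_2$ with $\alpha < \ppsteps{\psi_1}$ one has $src(\psi) = src(\psi_1)$ and $\redel{\psi}{\alpha} = \redel{\psi_1}{\alpha}$, so the inductive hypothesis on $\psi_1$ closes the case; when $\ppsteps{\psi_1} \leq \alpha$, write $\redel{\psi}{\alpha} = \redel{\psi_2}{\beta}$ and combine, through Lemma~\ref{rsl:tdist-is-ultrametric}, the bound $\tdist{src(\psi_1)}{src(\psi_2)} = \tdist{src(\psi_1)}{tgt(\psi_1)} < 2^{-n}$ (Lemma~\ref{rsl:mind-big-then-tdist-little}, using that $\psi_1$ is convergent with $\mind{\psi_1} > n$ and $tgt(\psi_1) = src(\psi_2)$) with the inductive hypothesis $\tdist{src(\psi_2)}{tgt(\redel{\psi_2}{\beta})} < 2^{-n}$.

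The infinite-concatenation case $\psi = \icomp \psi_i$ is where I expect the only real work, and it is the step I would flag as the main obstacle. Lemma~\ref{rsl:ordinal-lt-infAdd-then-unique-representation} provides a \emph{finite} $k$ and a $\gamma < \ppsteps{\psi_k}$ with $\redel{\psi}{\alpha} = \redel{\psi_k}{\gamma}$. Using $src(\psi_{i+1}) = tgt(\psi_i)$ and the convergence of each $\psi_i$ (formation case~\ref{rule:ptinfC}), Lemma~\ref{rsl:mind-big-then-tdist-little} bounds each link $\tdist{src(\psi_i)}{src(\psi_{i+1})} = \tdist{src(\psi_i)}{tgt(\psi_i)} < 2^{-n}$; since $k$ is finite, finitely many applications of the ultrametric inequality give $\tdist{src(\psi)}{src(\psi_k)} = \tdist{src(\psi_0)}{src(\psi_k)} < 2^{-n}$. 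One final application of Lemma~\ref{rsl:tdist-is-ultrametric}, against the inductive hypothesis $\tdist{src(\psi_k)}{tgt(\redel{\psi_k}{\gamma})} < 2^{-n}$, yields $\tdist{src(\psi)}{tgt(\redel{\psi}{\alpha})} < 2^{-n}$ and completes the induction. The crux throughout is that finiteness of $k$ (and of the branching in the binary case) keeps the number of ultrametric steps finite, so that the strict bound $< 2^{-n}$ is preserved rather than merely the non-strict one.
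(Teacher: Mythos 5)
Your proof is correct and follows essentially the same route as the paper's: structural induction over the three formation cases of \ppterms, using Lemma~\ref{rsl:mind-big-then-tdist-little} and the ultrametric inequality (Lemma~\ref{rsl:tdist-is-ultrametric}) at exactly the same points, including the finite chaining argument $\tdist{src(\psi)}{src(\psi_k)} < 2^{-n}$ in the infinite-concatenation case. The only difference is organizational — you prove item~(\ref{it:ppterm-mind-big-then-step-mind-big}) by a separate induction and then obtain item~(\ref{it:ppterm-mind-big-then-tdist-little-step}) non-inductively from it, whereas the paper carries all three items through a single simultaneous induction — but this does not change the substance of the argument.
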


\begin{proof}
We proceed by induction on $\psi$, \confer\ Prop.~\ref{rsl:pterm-induction-principle}.
If $\psi$ is a one-step then $\alpha = 0$ and $\redel{\psi}{\alpha} = \psi$.
Then we conclude immediately; \confer\ Lemma~\ref{rsl:mind-big-then-tdist-little} for (\ref{it:ppterm-mind-big-then-tdist-little-step}) and (\ref{it:ppterm-mind-big-then-tdist-little-src}).

Assume $\psi = \psi_1 \comp \psi_2$.
If $\alpha < \ppsteps{\psi_1}$, so that $\redel{\psi}{\alpha} = \redel{\psi_1}{\alpha}$, then we conclude by \ih\ on $\psi_1$.
Otherwise $\alpha = \ppsteps{\psi_1} + \beta$, so that $\redel{\psi}{\alpha} = \redel{\psi_2}{\beta}$. Then by applying \ih\ on $\psi_2$ we obtain (\ref{it:ppterm-mind-big-then-step-mind-big}) and (\ref{it:ppterm-mind-big-then-tdist-little-step}) immediately, and also $\tdist{src(\psi_2)}{tgt(\redel{\psi}{\alpha})} < 2^{-n}$.
On the other hand we can apply Lemma~\ref{rsl:mind-big-then-tdist-little} to $\psi_1$, obtaining $\tdist{src(\psi)}{tgt(\psi_1)} < 2^{-n}$. Thus we conclude by Lemma~\ref{rsl:tdist-is-ultrametric} since $tgt(\psi_1) = src(\psi_2)$.

Assume $\psi = \icomp \psi_i$. Let $k$, $\beta$ such that $\redel{\psi}{\alpha} = \redel{\psi_k}{\beta}$, so that $\beta < \ppsteps{\psi_k}$.
Then \ih\ on $\psi_k$ yields immediately (\ref{it:ppterm-mind-big-then-step-mind-big}) and (\ref{it:ppterm-mind-big-then-tdist-little-step}), and also $\tdist{src(\psi_k)}{tgt(\redel{\psi}{\alpha})} < 2^{-n}$.
On the other hand, for each $i < k$ it is immediate that $\mind{\psi_i} \geq \mind{\psi} > n$, then an easy induction on $k$ using Lemma~\ref{rsl:mind-big-then-tdist-little} and Lemma~\ref{rsl:tdist-is-ultrametric} yields $\tdist{src(\psi)}{src(\psi_{k})} < 2^{-n}$. Thus we conclude by Lemma~\ref{rsl:tdist-is-ultrametric}.
\end{proof}

\begin{lemma}
\label{rsl:ppterm-mind-big-then-tdist-little-tgt}
Let $\psi$ be a convergent \ppterm\ such that $\mind{\psi} > p$, and $\alpha < \ppsteps{\psi}$.
Then $\tdist{tgt(\redel{\psi}{\alpha})}{tgt(\psi)} < 2^{-p}$.
\end{lemma}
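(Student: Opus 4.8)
The plan is to route through the source term $src(\psi)$ as an intermediate point and close the triangle with the ultrametric inequality of Lemma~\ref{rsl:tdist-is-ultrametric}. No induction on the structure of $\psi$ is needed here, since all the structural work has already been done in the two preceding lemmas; the present statement is essentially a corollary obtained by one application of the ultrametric law.

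First I would note that the statement is well posed: as $\psi$ is a convergent proof term, Lemma~\ref{rsl:mind-big-then-tdist-little}(\ref{it:convergent-then-has-tgt}) ensures that $tgt(\psi)$ is defined; and $tgt(\redel{\psi}{\alpha})$ is defined because $\redel{\psi}{\alpha}$ is a one-step, as already used implicitly in Lemma~\ref{rsl:ppterm-mind-big-then-tdist-little}(\ref{it:ppterm-mind-big-then-tdist-little-step}).

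The two distance bounds I need both stem from the hypothesis $\mind{\psi} > p$. On one hand, Lemma~\ref{rsl:ppterm-mind-big-then-tdist-little}(\ref{it:ppterm-mind-big-then-tdist-little-src}), instantiated with $n = p$ and using $\alpha < \ppsteps{\psi}$, yields $\tdist{src(\psi)}{tgt(\redel{\psi}{\alpha})} < 2^{-p}$. On the other hand, since a convergent \ppterm\ is in particular a convergent proof term, I can apply Lemma~\ref{rsl:mind-big-then-tdist-little}(\ref{it:mind-big-then-tdist-little}), again with $n = p$, to obtain $\tdist{src(\psi)}{tgt(\psi)} < 2^{-p}$.

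Finally I would combine these through Lemma~\ref{rsl:tdist-is-ultrametric}:
\[
\tdist{tgt(\redel{\psi}{\alpha})}{tgt(\psi)} \leq \max\bigl(\tdist{tgt(\redel{\psi}{\alpha})}{src(\psi)},\, \tdist{src(\psi)}{tgt(\psi)}\bigr) < 2^{-p},
\]
which is exactly the desired conclusion. The only point requiring a word of care is checking that the two invoked lemmas genuinely apply, i.e.\ that a convergent \ppterm\ counts as a convergent proof term in the sense of Lemma~\ref{rsl:mind-big-then-tdist-little}; this holds since \ppterms\ form a subclass of proof terms, so there is no real obstacle and the proof is short.
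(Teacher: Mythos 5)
Your proof is correct, and it takes a genuinely different route from the paper's. The paper proves this lemma by a fresh structural induction on $\psi$ (one-step, binary composition, infinite composition), in each case routing through an intermediate term such as $tgt(\psi_1) = src(\psi_2)$ or $src(\psi_{k+1})$ and closing with Lemma~\ref{rsl:mind-big-then-tdist-little} and the ultrametric inequality; notably, it never invokes the immediately preceding Lemma~\ref{rsl:ppterm-mind-big-then-tdist-little}. You instead observe that item~(\ref{it:ppterm-mind-big-then-tdist-little-src}) of that preceding lemma already packages all the structural work: it gives $\tdist{src(\psi)}{tgt(\redel{\psi}{\alpha})} < 2^{-p}$ outright, so combining it with $\tdist{src(\psi)}{tgt(\psi)} < 2^{-p}$ (Lemma~\ref{rsl:mind-big-then-tdist-little}(\ref{it:mind-big-then-tdist-little}), which applies since a \ppterm\ is a proof term and $\psi$ is convergent) and one application of Lemma~\ref{rsl:tdist-is-ultrametric}, with $src(\psi)$ as the pivot, finishes the argument. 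Both distance bounds are strict, and the metric is symmetric, so the ultrametric step is sound. What your approach buys is economy and absence of redundancy — the paper's induction essentially re-proves case distinctions that item~(\ref{it:ppterm-mind-big-then-tdist-little-src}) already covers; what the paper's approach buys is only self-containedness of this particular proof, at the cost of repeating the structural analysis. Your side remarks are also accurate: $tgt(\redel{\psi}{\alpha})$ is defined because a one-step contracts its unique rule-symbol occurrence to a $tgt_T$-normal form, and there is no circularity since both cited lemmas precede this one.
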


\begin{proof}
We proceed by induction on $\psi$.
If $\psi$ is a one-step then $\alpha = 0$ and it suffices to observe that $\redel{\psi}{0} = \psi$.

Assume $\psi = \psi_1 \comp \psi_2$.
If $\alpha < \ppsteps{\psi_1}$, then \ih\ on $\psi_1$ yields $\tdist{tgt(\redel{\psi}{\alpha})}{tgt(\psi_1)} < 2^{-p}$.
On the other hand, Lemma~\ref{rsl:mind-big-then-tdist-little} implies $\tdist{src(\psi_2)}{tgt(\psi)} < 2^{-p}$.
We conclude by Lemma~\ref{rsl:tdist-is-ultrametric} since $tgt(\psi_1) = src(\psi_2)$.
Otherwise, $\alpha = \ppsteps{\psi_1} + \beta$, then $\redel{\psi}{\alpha} = \redel{\psi_2}{\beta}$.
In this case we can apply \ih\ on $\psi_2$ obtaining $\tdist{tgt(\redel{\psi_2}{\beta})}{tgt(\psi_2)} < 2^{-p}$, thus we conclude.

Assume $\psi = \icomp \psi_i$ and let $k$, $\gamma$ such that $\redel{\psi}{\alpha} = \redel{\psi_k}{\gamma}$.
Then \ih\ on $\psi_k$ yields $\tdist{tgt(\redel{\psi}{\alpha})}{tgt(\psi_k)} < 2^{-p}$.
Moreover, Lemma~\ref{rsl:mind-big-then-tdist-little} on $\icomp \psi_{k+1+i}$ implies $\tdist{src(\psi_{k+1})}{tgt(\psi)} < 2^{-p}$. Ths we conclude by Lemma~\ref{rsl:tdist-is-ultrametric}.
\end{proof}

\begin{lemma}
\label{rsl:ppterm-src}
Let $\psi$ be a \ppterm. Then $src(\redel{\psi}{0}) = src(\psi)$.
\end{lemma}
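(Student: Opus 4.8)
The plan is to prove the statement by a straightforward structural induction on $\psi$ as a \ppterm, following the three formation cases of Dfn.~\ref{dfn:ppterm}. Recall that $\redel{\psi}{0}$ is the $0$-th component $\psi[0]$ in the sense of Dfn.~\ref{dfn:ppterm-component}, which is always a one-step, and that its being defined presupposes $0 < \ppsteps{\psi}$. The base case is immediate: if $\psi$ is a one-step, then Dfn.~\ref{dfn:ppterm-component} gives $\redel{\psi}{0} = \psi$, so $src(\redel{\psi}{0}) = src(\psi)$ trivially.

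For the inductive cases the key observation is that $\redel{\psi}{0}$ coincides with the $0$-th component of the \emph{leading} subterm, after which the source definitions of Dfn.~\ref{dfn:layer-pterm} and the induction hypothesis close the argument. If $\psi = \psi_1 \comp \psi_2$, then since $0 < \ppsteps{\psi_1}$ (see below) Dfn.~\ref{dfn:ppterm-component} yields $\redel{\psi}{0} = \redel{\psi_1}{0}$; applying the induction hypothesis to the \ppterm\ $\psi_1$ gives $src(\redel{\psi_1}{0}) = src(\psi_1)$, and the clause for $src$ in case~\ref{rule:ptbinC} of Dfn.~\ref{dfn:layer-pterm} gives $src(\psi) = src(\psi_1)$, whence $src(\redel{\psi}{0}) = src(\psi)$. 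If $\psi = \icomp \psi_i$, I would first apply Lem.~\ref{rsl:ordinal-lt-infAdd-then-unique-representation} to $\beta = 0$ and the sequence $\iomegaseq{\ppsteps{\psi_i}}$: the unique $k,\gamma$ with $0 = \ppsteps{\psi_0} + \ldots + \ppsteps{\psi_{k-1}} + \gamma$ and $\gamma < \ppsteps{\psi_k}$ must satisfy $k = 0$ and $\gamma = 0$, because $\ppsteps{\psi_0} > 0$ forces any $k \geq 1$ to make the partial sum strictly positive. Hence $\redel{\psi}{0} = \redel{\psi_0}{0}$, and combining the induction hypothesis on $\psi_0$ with the clause $src(\psi) = src(\psi_0)$ from case~\ref{rule:ptinfC} of Dfn.~\ref{dfn:layer-pterm} gives the claim.

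The only point requiring a small separate remark, and the closest thing to an obstacle, is the resolution of the ordinal representation in the infinite-composition case, which hinges on the auxiliary fact that every \ppterm\ has at least one step, \ie\ $\ppsteps{\psi} > 0$. This is itself a trivial induction on the \ppterm\ structure via Dfn.~\ref{dfn:steps}: a one-step has $\ppsteps{\psi} = 1$; for $\psi_1 \comp \psi_2$ one has $\ppsteps{\psi} = \ppsteps{\psi_1} + \ppsteps{\psi_2} \geq \ppsteps{\psi_1} > 0$; and for $\icomp \psi_i$ one has $\ppsteps{\psi} = \sum_{i<\omega}\ppsteps{\psi_i} \geq \ppsteps{\psi_0} > 0$ by the sup defining the infinitary sum (Dfn.~\ref{dfn:ordinal-infAdd}). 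With this observation in hand, the positivity of $\ppsteps{\psi_0}$ (resp.\ $\ppsteps{\psi_1}$) used above is guaranteed, since the leading subterms are themselves \ppterms. Everything else is bookkeeping, so I expect no genuinely hard step.
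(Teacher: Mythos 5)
Your proof is correct and follows exactly the route the paper takes: the paper's entire proof is ``Easy induction on $\psi$'', and your case analysis (one-step, binary composition via case~\ref{rule:ptbinC}, infinite composition via Lem.~\ref{rsl:ordinal-lt-infAdd-then-unique-representation}) together with the auxiliary observation that $\ppsteps{\psi} > 0$ for every \ppterm\ is precisely the bookkeeping that ``easy induction'' leaves implicit.
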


\begin{proof}
Easy induction on $\psi$.
\end{proof}

\begin{lemma}
\label{rsl:ppterm-tgt-successor}
Let $\psi$ be a \ppterm\ such that $\ppsteps{\psi} = \alpha + 1$. Then $tgt(\psi) = tgt(\redel{\psi}{\alpha})$.
\end{lemma}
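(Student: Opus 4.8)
The plan is to prove Lemma~\ref{rsl:ppterm-tgt-successor} by induction on the structure of the prefix term $\psi$, \confer\ Dfn.~\ref{dfn:ppterm}; equivalently, by induction on the layer $\gamma$ with $\psi \in \layerpterm{\gamma}$. A prefix term is either a one-step, a binary composition $\psi_1 \comp \psi_2$, or an infinite composition $\icomp \psi_i$, so there are three cases to treat. The base case is immediate: if $\psi$ is a one-step then $\ppsteps{\psi} = 1$, forcing $\alpha = 0$, and Dfn.~\ref{dfn:ppterm-component} gives $\redel{\psi}{0} = \psi$, whence $tgt(\redel{\psi}{\alpha}) = tgt(\psi)$ trivially.

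The binary composition case $\psi = \psi_1 \comp \psi_2$ is where the actual work lies. Here $\ppsteps{\psi} = \ppsteps{\psi_1} + \ppsteps{\psi_2} = \alpha + 1$, and the first thing I would establish is that $\ppsteps{\psi_2}$ is itself a successor ordinal. This follows from ordinal arithmetic: every prefix term has at least one step, so $\ppsteps{\psi_2} > 0$; and were $\ppsteps{\psi_2}$ a limit ordinal, then $\ppsteps{\psi_1} + \ppsteps{\psi_2}$ would be a limit ordinal too, contradicting that $\ppsteps{\psi}$ is a successor. Writing $\ppsteps{\psi_2} = \beta + 1$, I then get $\alpha = \ppsteps{\psi_1} + \beta$. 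Now I would apply the induction hypothesis to $\psi_2$ to obtain $tgt(\psi_2) = tgt(\redel{\psi_2}{\beta})$, use Dfn.~\ref{dfn:layer-pterm} case~\ref{rule:ptbinC} to get $tgt(\psi) = tgt(\psi_2)$, and finally invoke Dfn.~\ref{dfn:ppterm-component} together with left cancellation of ordinal addition (from $\ppsteps{\psi_1} + \beta' = \alpha = \ppsteps{\psi_1} + \beta$ conclude $\beta' = \beta$) to identify $\redel{\psi}{\alpha} = \redel{\psi_2}{\beta}$. Chaining these equalities closes the case.

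Finally, the infinite composition case $\psi = \icomp \psi_i$ turns out to be vacuous: by Lem.~\ref{rsl:ptinfC-iff-limit} such a $\psi$ has a limit layer, so Lem.~\ref{rsl:steps-ordinal-coherence} forces $\ppsteps{\psi}$ to be a limit ordinal, contradicting the hypothesis $\ppsteps{\psi} = \alpha + 1$. I expect the main obstacle to be purely the ordinal bookkeeping in the binary case: correctly arguing that a successor total step count must arise from a successor in the right operand, and matching the component index $\alpha$ against $\ppsteps{\psi_1} + \beta$ via cancellation. Everything else reduces to unwinding the definitions of $tgt$ and of the $\alpha$-th component, so the proof should be short once these arithmetic points are settled.
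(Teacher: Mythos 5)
Your proof is correct and follows essentially the same route as the paper's: structural induction on $\psi$, the one-step base case being trivial, the binary case reducing to the induction hypothesis on $\psi_2$ after the ordinal bookkeeping identifying $\redel{\psi}{\alpha}$ with $\redel{\psi_2}{\beta}$ where $\ppsteps{\psi_2} = \beta + 1$, and the infinite-concatenation case dismissed because $\ppsteps{\psi}$ would then be a limit ordinal. The only difference is cosmetic: you first show $\ppsteps{\psi_2}$ is a successor and then compute $\alpha = \ppsteps{\psi_1} + \beta$, whereas the paper first rules out $\alpha < \ppsteps{\psi_1}$ and then derives $\ppsteps{\psi_2} = \beta + 1$; the underlying deductions are the same.
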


\begin{proof}
We proceed by induction on $\psi$.
If $\psi$ is a one-step then $\alpha = 0$ and we conclude immediately.

Assume $\psi = \psi_1 \comp \psi_2$.
Then $\alpha < \ppsteps{\psi_1}$ would imply $\alpha + 1 = \ppsteps{\psi} \leq \ppsteps{\psi_1}$, which is not possible since $\ppsteps{\psi_2} > 0$.
Then let $\beta$ be the ordinal verifying $\ppsteps{\psi_1} + \beta = \alpha$, so that $\redel{\psi}{\alpha} = \redel{\psi_2}{\beta}$. 
We observe that $\ppsteps{\psi_1} + \beta + 1 = \alpha + 1 = \ppsteps{\psi}$, then $\ppsteps{\psi_2} = \beta + 1$.
We conclude by \ih\ on $\psi_2$.

Finally, $\psi = \icomp \psi_i$ contradicts $\ppsteps{\psi}$ to be a successor ordinal. Thus we conclude.
\end{proof}

\begin{lemma}
\label{rsl:ppterm-tgt-limit}
Let $\psi$ be a convergent \ppterm\ such that $\ppsteps{\psi}$ is a limit ordinal. 
Then $tgt(\psi) = \lim_{\alpha \to \ppsteps{\psi}} tgt(\redel{\psi}{\alpha})$.
\end{lemma}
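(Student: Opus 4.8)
The plan is to reduce to the infinite-concatenation case and then localise the convergence estimate of Lem.~\ref{rsl:ppterm-mind-big-then-tdist-little-tgt} to tails of $\psi$. First I would observe that, since $\ppsteps{\psi}$ is a limit ordinal, Lem.~\ref{rsl:steps-ordinal-coherence} makes the layer $\alpha$ with $\psi \in \layerpterm{\alpha}$ a limit ordinal, so Lem.~\ref{rsl:ptinfC-iff-limit} forces $\psi$ to be an infinite concatenation $\psi = \icomp \psi_i$. By case~\ref{rule:ptinfC} of Dfn.~\ref{dfn:layer-pterm} each $\psi_i$ is then a convergent \ppterm\ with $tgt(\psi_i) = src(\psi_{i+1})$, and $tgt(\psi) = \lim_{i \to \omega} tgt(\psi_i)$. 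Unfolding Dfn.~\ref{dfn:limit-terms}, it suffices to show that for every $p < \omega$ there is a threshold $k_p < \ppsteps{\psi}$ with $\tdist{tgt(\redel{\psi}{\alpha})}{tgt(\psi)} < 2^{-p}$ for all $\alpha$ satisfying $k_p < \alpha < \ppsteps{\psi}$.

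Next, fixing $p$, I would use convergence of $\psi$ (case~\ref{rule:ptinfC}) to obtain $n_p$ with $\mind{\psi_j} > p$ whenever $j > n_p$, and set $k_p \eqdef \ppsteps{\psi_0} + \ldots + \ppsteps{\psi_{n_p}}$; since each \ppterm\ has at least one step and infinitely many $\psi_i$ remain, $k_p < \ppsteps{\psi}$. Given $\alpha$ with $k_p < \alpha < \ppsteps{\psi}$, Lem.~\ref{rsl:ordinal-lt-infAdd-then-unique-representation} yields the unique $k,\gamma$ with $\alpha = \ppsteps{\psi_0} + \ldots + \ppsteps{\psi_{k-1}} + \gamma$ and $\gamma < \ppsteps{\psi_k}$, and $\alpha > k_p$ forces $k > n_p$, hence $\mind{\psi_k} > p$. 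The key move is to pass to the tail $\phi \eqdef \icomp \psi_{k+i}$: this is again a convergent \ppterm, its $\gamma$-th component is precisely $\redel{\psi_k}{\gamma} = \redel{\psi}{\alpha}$ (Dfn.~\ref{dfn:ppterm-component}), its minimal activity depth is $\min_i \mind{\psi_{k+i}} > p$, and its target equals $tgt(\psi)$ because the defining sequence of $\phi$ is a tail of that of $\psi$ and the limit of a sequence depends only on its tails. Applying Lem.~\ref{rsl:ppterm-mind-big-then-tdist-little-tgt} to $\phi$ with exponent $p$ and ordinal $\gamma < \ppsteps{\phi}$ then gives $\tdist{tgt(\redel{\psi}{\alpha})}{tgt(\psi)} = \tdist{tgt(\redel{\phi}{\gamma})}{tgt(\phi)} < 2^{-p}$, as required, and this is exactly the threshold condition of Dfn.~\ref{dfn:limit-terms}.

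I expect the delicate points to be bookkeeping rather than conceptual. The first is confirming $k > n_p$ from $\alpha > k_p$, which is immediate once the unique representation of Lem.~\ref{rsl:ordinal-lt-infAdd-then-unique-representation} is in hand. The main obstacle is justifying $tgt(\phi) = tgt(\psi)$: both are defined as limits of sequences of targets, and I would argue that dropping the finite initial segment $\psi_0, \ldots, \psi_{k-1}$ leaves the limit unchanged, directly from the tail-only character of the limit condition in Dfn.~\ref{dfn:limit-terms}. Should a clean tail argument prove awkward, an alternative is to avoid $\phi$ altogether and bound $\tdist{tgt(\redel{\psi}{\alpha})}{tgt(\psi)}$ by the ultrametric inequality (Lem.~\ref{rsl:tdist-is-ultrametric}) through the intermediate point $tgt(\psi_k)$, using Lem.~\ref{rsl:ppterm-mind-big-then-tdist-little-tgt} on $\psi_k$ for the first leg and Lem.~\ref{rsl:mind-big-then-tdist-little} applied to the tail $\icomp \psi_{k+1+i}$ (whose source is $tgt(\psi_k)$ and whose target is $tgt(\psi)$) for the second.
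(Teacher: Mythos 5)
Your proposal is correct, and its skeleton matches the paper's proof: both reduce to $\psi = \icomp \psi_i$ via Lem.~\ref{rsl:steps-ordinal-coherence} and Lem.~\ref{rsl:ptinfC-iff-limit}, pick a threshold ordinal of the form $\ppsteps{\psi_0} + \ldots + \ppsteps{\psi_k}$, decompose $\alpha$ by Lem.~\ref{rsl:ordinal-lt-infAdd-then-unique-representation}, and obtain the key estimate from Lem.~\ref{rsl:ppterm-mind-big-then-tdist-little-tgt}. Where you differ is in how the final bound is assembled. The paper keeps two thresholds, $k''$ from convergence and, in addition, $k'$ from the defining limit $tgt(\psi) = \lim_{i \to \omega} tgt(\psi_i)$; it applies Lem.~\ref{rsl:ppterm-mind-big-then-tdist-little-tgt} only to the single component $\psi_{j+1}$ containing the step, and then closes the gap between $tgt(\psi_{j+1})$ and $tgt(\psi)$ with the ultrametric inequality, Lem.~\ref{rsl:tdist-is-ultrametric}. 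You instead apply that lemma once to the tail $\phi = \icomp \psi_{k+i}$, which lets you dispense with $k'$ and with the ultrametric step, but introduces two extra obligations: that $\phi$ is a well-formed convergent \ppterm, and that $tgt(\phi) = tgt(\psi)$. Both hold --- the tail inherits the coherence and convergence conditions of case~\ref{rule:ptinfC} of Dfn.~\ref{dfn:layer-pterm}, and the limit condition of Dfn.~\ref{dfn:limit-terms} is tail-invariant --- so your route is sound; just make those two checks explicit rather than parenthetical, since the paper's choice of $k'$ is precisely what lets it avoid them. Your fallback route is essentially the paper's proof, except that the second leg $\tdist{tgt(\psi_k)}{tgt(\psi)} < 2^{-p}$ is derived from Lem.~\ref{rsl:mind-big-then-tdist-little} applied to the tail $\icomp \psi_{k+1+i}$ rather than read off from the limit definition; this again needs the same identification of the tail's target with $tgt(\psi)$, so it buys little over the paper's version, whereas your primary route genuinely shortens the estimate-assembly at the cost of the tail bookkeeping.
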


\begin{proof}
Observe $\ppsteps{\psi}$ being a limit ordinal implies $\psi = \icomp \psi_i$
\denotationInOwnChapter{(\confer\ Lem.~\ref{rsl:steps-ordinal-coherence} and Lem.~\ref{rsl:ptinfC-iff-limit})}%
, so that $tgt(\psi)$ is defined to be equal to $\lim_{i \to \omega} tgt(\psi_i)$. Observe that Lem~\ref{rsl:mind-big-then-tdist-little}:(\ref{it:convergent-then-has-tgt}) implies this limit to be defined.
Let $p \in \Nat$, let $k'$ such that $k' < j < \omega$ implies $\tdist{tgt(\psi_j)}{tgt(\psi)} < 2^{-p}$, $k''$ such that $\mind{\psi_j} > p$ if $j > k''$, and $k \eqdef max(k', k'')$.

Let $\beta = \ppsteps{\psi_0} + \ldots + \ppsteps{\psi_k}$ and $\gamma > \beta$.
Then $\gamma = \ppsteps{\psi_0} + \ldots + \ppsteps{\psi_j} + \gamma'$ where $\gamma' < \ppsteps{\psi_{j+1}}$ and $j \geq k$, so that $\redel{\psi}{\gamma} = \redel{\psi_{j+1}}{\gamma'}$. 
Then $j + 1 > k \geq k''$, so that Lemma~\ref{rsl:ppterm-mind-big-then-tdist-little-tgt} implies $\tdist{tgt(\redel{\psi}{\gamma})}{tgt(\psi_{j+1})} < 2^{-p}$.
On the other hand, $j + 1 > k \geq k'$ implies $\tdist{tgt(\psi_{j+1})}{tgt(\psi)} < 2^{-p}$.
Hence Lemma~\ref{rsl:tdist-is-ultrametric} yields $\tdist{tgt(\redel{\psi}{\gamma})}{tgt(\psi)} < 2^{-p}$.
Consequently, we conclude.
\end{proof}

\begin{lemma}
\label{rsl:ppterm-tgt-src-coherence}
Let $\psi$ be a \ppterm\ and $\alpha < \ppsteps{\psi}$ such that $\alpha = \alpha' + 1$. Then $src(\redel{\psi}{\alpha}) = tgt(\redel{\psi}{\alpha'})$.
\end{lemma}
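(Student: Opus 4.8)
The plan is to proceed by induction on the structure of $\psi$, using the simple induction principle of Prop.~\ref{rsl:pterm-induction-principle}, mirroring the organisation of the preceding lemmas on components of a \ppterm. The base case, $\psi$ a one-step, is vacuous: then $\ppsteps{\psi} = 1$, so the only $\alpha < \ppsteps{\psi}$ is $0$, which is not a successor, and the hypothesis $\alpha = \alpha' + 1$ cannot hold.

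For the binary composition case $\psi = \psi_1 \comp \psi_2$, recall $\ppsteps{\psi} = \ppsteps{\psi_1} + \ppsteps{\psi_2}$ and split according to where $\alpha$ falls. If $\alpha < \ppsteps{\psi_1}$ then also $\alpha' < \alpha < \ppsteps{\psi_1}$, so both components coincide with those of $\psi_1$ and the inductive hypothesis on $\psi_1$ applies directly. If $\alpha \geq \ppsteps{\psi_1}$, write $\alpha = \ppsteps{\psi_1} + \beta$; when $\beta = \beta' + 1 > 0$ the two components are $\redel{\psi_2}{\beta}$ and $\redel{\psi_2}{\beta'}$, and the inductive hypothesis on $\psi_2$ concludes. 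The interesting \emph{junction} subcase is $\beta = 0$, i.e.\ $\alpha = \ppsteps{\psi_1} = \alpha'+1$: here $\redel{\psi}{\alpha} = \redel{\psi_2}{0}$ while $\redel{\psi}{\alpha'} = \redel{\psi_1}{\alpha'}$. I would chain $src(\redel{\psi_2}{0}) = src(\psi_2)$ (Lem.~\ref{rsl:ppterm-src}), $src(\psi_2) = tgt(\psi_1)$ (the coherence condition built into case~\ref{rule:ptbinC} of Dfn.~\ref{dfn:layer-pterm}), and $tgt(\psi_1) = tgt(\redel{\psi_1}{\alpha'})$, which holds by Lem.~\ref{rsl:ppterm-tgt-successor} because $\ppsteps{\psi_1} = \alpha'+1$ is a successor.

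The infinite composition case $\psi = \icomp \psi_i$ is analogous but needs the ordinal bookkeeping of Lem.~\ref{rsl:ordinal-lt-infAdd-then-unique-representation}: it gives a unique $k$ and $\gamma < \ppsteps{\psi_k}$ with $\alpha = \ppsteps{\psi_0} + \ldots + \ppsteps{\psi_{k-1}} + \gamma$, so $\redel{\psi}{\alpha} = \redel{\psi_k}{\gamma}$. If $\gamma = \gamma'+1 > 0$ then $\alpha' = \ppsteps{\psi_0}+\ldots+\ppsteps{\psi_{k-1}}+\gamma'$ with $\gamma' < \ppsteps{\psi_k}$, so $\redel{\psi}{\alpha'} = \redel{\psi_k}{\gamma'}$ and the inductive hypothesis on $\psi_k$ finishes. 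The junction subcase is $\gamma = 0$, so $\alpha = \ppsteps{\psi_0}+\ldots+\ppsteps{\psi_{k-1}}$; since $\alpha$ is a successor and each $\ppsteps{\psi_i} > 0$, one has $k \geq 1$, and $\ppsteps{\psi_{k-1}}$ must itself be a successor (otherwise the finite sum ending in a limit summand would be a limit ordinal). Writing $\ppsteps{\psi_{k-1}} = \delta+1$, one gets $\redel{\psi}{\alpha'} = \redel{\psi_{k-1}}{\delta}$, and then the same chain as before, now using $tgt(\psi_{k-1}) = src(\psi_k)$ from case~\ref{rule:ptinfC} of Dfn.~\ref{dfn:layer-pterm} together with Lem.~\ref{rsl:ppterm-src} and Lem.~\ref{rsl:ppterm-tgt-successor}, yields $src(\redel{\psi_k}{0}) = src(\psi_k) = tgt(\psi_{k-1}) = tgt(\redel{\psi_{k-1}}{\delta})$.

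The main obstacle is precisely these two junction subcases: everything away from a component boundary is immediate from the inductive hypothesis, but at a boundary one must cross from the \emph{last} step of one sub-\ppterm\ to the \emph{first} step of the next. The crux is recognising that the relevant partial count ($\ppsteps{\psi_1}$, resp.\ $\ppsteps{\psi_0}+\ldots+\ppsteps{\psi_{k-1}}$) being a successor is exactly what licenses Lem.~\ref{rsl:ppterm-tgt-successor}, while the source/target coherence baked into the well-formedness conditions of Dfn.~\ref{dfn:layer-pterm} supplies the matching of target and source across the boundary.
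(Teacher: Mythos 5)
Your proposal is correct and follows essentially the same route as the paper's proof: structural induction via Prop.~\ref{rsl:pterm-induction-principle}, the same three-way case split for binary composition, the decomposition of Lem.~\ref{rsl:ordinal-lt-infAdd-then-unique-representation} for infinite composition, and the same resolution of the junction subcases by chaining Lem.~\ref{rsl:ppterm-src}, the coherence conditions of Dfn.~\ref{dfn:layer-pterm}, and Lem.~\ref{rsl:ppterm-tgt-successor}. The only difference is cosmetic: you make explicit two points the paper leaves implicit (why $\ppsteps{\psi_{k-1}}$ must be a successor when $\gamma = 0$, and the use of the built-in coherence $src(\psi_2) = tgt(\psi_1)$).
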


\begin{proof}
We proceed by induction on $\psi$. Observe $\psi$ is a one-step would imply $\alpha = 0$, contradicting $\alpha = \alpha' + 1$.

Assume $\psi = \psi_1 \comp \psi_2$. We consider three cases
\begin{itemize}
\item 
If $\alpha < \ppsteps{\psi_1}$ then we conclude just by \ih\ on $\psi_1$.
\item
If $\alpha = \ppsteps{\psi_1}$, then $\redel{\psi}{\alpha} = \redel{\psi_2}{0}$ and $\redel{\psi}{\alpha'} = \redel{\psi_1}{\alpha'}$ where $\alpha' + 1 = \alpha = \ppsteps{\psi_1}$. 
Then $tgt(\redel{\psi}{\alpha'}) = tgt(\psi_1)$ and $src(\redel{\psi}{\alpha}) = src(\psi_2)$, by Lemma~\ref{rsl:ppterm-tgt-successor} and Lemma~\ref{rsl:ppterm-src} respectively. Thus we conclude.
\item
If $\alpha > \ppsteps{\psi_1}$, then $\alpha' = \ppsteps{\psi_1} + \beta'$ and $\alpha = \ppsteps{\psi_1} + (\beta' + 1)$, therefore $\redel{\psi}{\alpha} = \redel{\psi_2}{\beta' + 1}$ and $\redel{\psi}{\alpha'} = \redel{\psi_2}{\beta'}$. Observe that $\alpha < \ppsteps{\psi}$ implies $\beta' + 1 < \ppsteps{\psi_2}$. Hence we conclude by \ih\ on $\psi_2$.
\end{itemize}

Assume $\psi = \icomp \psi_i$. Let $k$, $\gamma$ such that $\alpha = \ppsteps{\psi_0} + \ldots + \ppsteps{\psi_{k-1}} + \gamma$ and $\gamma < \ppsteps{\psi_k}$, so that $\redel{\psi}{\alpha} = \redel{\psi_k}{\gamma}$.
If $\gamma = 0$, then $\ppsteps{\psi_{k-1}} = \beta + 1$ for some $\beta$, and $\alpha' = \ppsteps{\psi_0} + \ldots + \ppsteps{\psi_{k-2}} + \beta$, so that $\redel{\psi}{\alpha'} = \redel{\psi_{k-1}}{\beta}$.
Therefore $src(\redel{\psi}{\alpha}) = src(\psi_k)$ and $tgt(\redel{\psi}{\alpha'}) = tgt(\psi_{k-1})$, by Lemma~\ref{rsl:ppterm-src} and Lemma~\ref{rsl:ppterm-tgt-successor} respectively. Thus we conclude.
Otherwise $\gamma = \gamma' + 1$; notice that $\gamma$ being a limit ordinal would contradict $\alpha$ being a successor one.
In this case $\redel{\psi}{\alpha'} = \redel{\psi_k}{\gamma'}$, thus we conclude by \ih\ on $\psi_k$.
\end{proof}

\begin{lemma}
\label{rsl:ppterm-seq-mind}
Let $\psi$ be a \ppterm. 
Then \\ 
$\begin{array}{rcl}
\mind{\psi} & = &
min(\sdepth{\redel{\psi}{\alpha}} \setsthat \alpha < \ppsteps{\psi}) 
\denotationDistributed{
\\ 
& = &
\posln{\,min_{\leq_{DL}} (\RPos{\redel{\psi}{\alpha}} \setsthat \alpha < \ppsteps{\psi})}
}
\\
& = &
min(\mind{\redel{\psi}{\alpha}} \setsthat \alpha < \ppsteps{\psi}) 
\end{array}$ %
\denotationDistributed{%
, \\
where if $P$ is a set of positions, then $p \eqdef min_{\leq_{DL}} (P)$ is the element of $P$ verifying $p \leq_{DL} q$ if $q \in P$.} %
\end{lemma}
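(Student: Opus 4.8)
The plan is to reduce the two claimed equalities to a single structural induction, after first disposing of the second equality by a pointwise observation. First I would note that the second equality is pointwise in $\alpha$. For every $\alpha < \ppsteps{\psi}$ the component $\redel{\psi}{\alpha}$ is a one-step (Dfn.~\ref{dfn:ppterm-component}), and for any one-step $\phi$ the two quantities coincide: such a $\phi$ has exactly one rule-symbol occurrence, located at $\RPos{\phi}$, so that $\mind{\phi} = \posln{\RPos{\phi}} = \sdepth{\phi}$ by Dfn.~\ref{dfn:dmin-imstep} and Dfn.~\ref{dfn:one-step}. Hence $\sdepth{\redel{\psi}{\alpha}} = \mind{\redel{\psi}{\alpha}}$ for each $\alpha$, the two families whose minima appear on the right of the statement are identical term by term, and it remains only to establish $\mind{\psi} = min(\sdepth{\redel{\psi}{\alpha}} \setsthat \alpha < \ppsteps{\psi})$.

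I would prove this remaining equality by induction on the structure of $\psi$, following Prop.~\ref{rsl:pterm-induction-principle} but restricted to the three shapes a \ppterm\ can take (Dfn.~\ref{dfn:ppterm}): a one-step, a binary composition, or an infinite composition. In the base case $\psi$ is a one-step, so $\ppsteps{\psi} = 1$, the only component is $\redel{\psi}{0} = \psi$, and $\mind{\psi} = \sdepth{\psi}$, giving both sides at once. For $\psi = \psi_1 \comp \psi_2$ the crucial bookkeeping fact is that the indexed family of components of $\psi$ is exactly the concatenation of those of $\psi_1$ and $\psi_2$: by Dfn.~\ref{dfn:ppterm-component} one has $\redel{\psi}{\alpha} = \redel{\psi_1}{\alpha}$ for $\alpha < \ppsteps{\psi_1}$ and $\redel{\psi}{\ppsteps{\psi_1} + \beta} = \redel{\psi_2}{\beta}$ for $\beta < \ppsteps{\psi_2}$, and every $\alpha < \ppsteps{\psi} = \ppsteps{\psi_1} + \ppsteps{\psi_2}$ lies in exactly one of these ranges. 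Taking the minimum of depths over this union and applying the induction hypothesis to $\psi_1$ and $\psi_2$ yields $min(\mind{\psi_1}, \mind{\psi_2})$, which is $\mind{\psi}$ by Dfn.~\ref{dfn:layer-pterm} case \ref{rule:ptbinC}.

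The main case, and the one I expect to require the most care, is $\psi = \icomp \psi_i$. Here I would invoke Lem.~\ref{rsl:ordinal-lt-infAdd-then-unique-representation}: every $\alpha < \ppsteps{\psi} = \sum_{i < \omega} \ppsteps{\psi_i}$ admits a unique decomposition $\alpha = \ppsteps{\psi_0} + \ldots + \ppsteps{\psi_{k-1}} + \gamma$ with $\gamma < \ppsteps{\psi_k}$, in which case $\redel{\psi}{\alpha} = \redel{\psi_k}{\gamma}$ (Dfn.~\ref{dfn:ppterm-component}); conversely each pair $(k,\gamma)$ with $\gamma < \ppsteps{\psi_k}$ arises from exactly one such $\alpha$. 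Consequently the set $\set{\sdepth{\redel{\psi}{\alpha}} \setsthat \alpha < \ppsteps{\psi}}$ is the union over $k < \omega$ of the sets $\set{\sdepth{\redel{\psi_k}{\gamma}} \setsthat \gamma < \ppsteps{\psi_k}}$, each nonempty since $\ppsteps{\psi_k} \geq 1$ for a \ppterm. The minimum of a countable union of nonempty sets of natural numbers equals the minimum of the per-set minima, so by the induction hypothesis this is $min(\mind{\psi_k} \setsthat k < \omega) = \mind{\psi}$ by Dfn.~\ref{dfn:layer-pterm} case \ref{rule:ptinfC}. The only delicate point is precisely this interchange of an infinite minimum with a union, but it is harmless because depths are natural numbers and the naturals are well-ordered, so every minimum involved is attained. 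I expect no obstacles beyond this ordinal and minimum bookkeeping.
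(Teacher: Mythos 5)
Your proof is correct and takes essentially the same approach as the paper's: the same structural induction over the three \ppterm\ shapes, the same pointwise identification $\sdepth{\redel{\psi}{\alpha}} = \mind{\redel{\psi}{\alpha}}$ for one-steps, and the same reduction of each inductive case to the definitional clauses for $\mindfn$ in Dfn.~\ref{dfn:layer-pterm}. The only cosmetic difference is that where you explicitly interchange the minimum with the (ordinal-indexed) decomposition of the component index set, the paper reaches the same conclusion by a short contradiction argument; both rest on the same well-ordering of the depths involved.
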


\begin{proof}
We prove that $\mind{\psi} = min(\mind{\redel{\psi}{\alpha}} \setsthat \alpha < \ppsteps{\psi})$.
The rest of the statement follows immediately since it is trivial to verify $\sdepth{\redel{\psi}{\alpha}} = \mind{\redel{\psi}{\alpha}}$ for any $\alpha$; \confer\ Dfn.~\ref{dfn:dmin-imstep}.

We proceed by induction on $\psi$; \confer\ Prop.~\ref{rsl:pterm-induction-principle}.
We define $\mindp{\psi} \eqdef min(\mind{\redel{\psi}{\alpha}} \setsthat \alpha < \ppsteps{\psi})$, so we must verify $\mind{\psi} = \mindp{\psi}$.
If $\psi$ is a one-step then the result holds immediately.

Assume $\psi = \psi_1 \comp \psi_2$.
In this case, \ih\ on $\psi_i$ yields $\mind{\psi_i} = \mindp{\psi_i}$ for each $i = 1,2$, and Dfn.~\ref{dfn:layer-pterm} implies $\mind{\psi} = min(\mind{\psi_1}, \mind{\psi_2})$.
Then it suffices to verify $\mindp{\psi} = min(\mindp{\psi_1}, \mindp{\psi_2})$.
From the definition of $\mindpfn$, it is immediate that $\mindp{\psi} \leq \mindp{\psi_i}$ for $i = 1,2$.
Assume $\mindp{\psi_1} \leq \mindp{\psi_2}$. Notice $\mindp{\psi} < \mindp{\psi_1}$ would imply the existence of some $\gamma$ verifying $\mindp{\redel{\psi}{\gamma}} < \mindp{\psi_1}$, contradicting either the definition of $\mindp{\psi_1}$ (if $\gamma < \ppsteps{\psi_1}$) or the assertion $\mindp{\psi_1} \leq \mindp{\psi_2}$ (otherwise). Hence $\mindp{\psi} = \mindp{\psi_1}$.
A similar argument for the case $\mindp{\psi_2} < \mindp{\psi_1}$ is enough to conclude.

If $\psi = \icomp \psi_i$, then an argument similar to that used for binary composition applies.
To verify that $\mindp{\psi} = min_{i < \omega}(\mindp{\psi_i})$, observe that $\mindp{\psi} \leq \mindp{\psi_i}$ for all $i$, and consider $n$ such that $\mindp{\psi_n} \leq \mindp{\psi_i}$ for all $i$.
Then we can contradict $\mindp{\psi} < \mindp{\psi_n}$ proceeding as in the previous case, hence $\mindp{\psi} = \mindp{\psi_n}$. Thus we conclude.
\end{proof}

\includeStandardisation{
\medskip
The view of a \ppterm\ as the concatenation of its components will be used extensively to obtain standardisation results in Section~\ref{sec:peqence}, where the ability of separating a \ppterm\ in head (\ie\ first component) and tail (\ie\ the concatenation of components from the second one on) will be needed as well. This consideration motivates the following formalisation of the concept of tail of a \ppterm.

\begin{definition}
\label{dfn:ppterm-tail}
Let $\psi$ be a proof term. We define the \emph{tail} of $\psi$, notation $\redfrom{\psi}{1}$, as follows: \\
\begin{tabular}{@{$\ \ \bullet\ \ $}p{.9\textwidth}}
If $\psi$ is a one-step, then $\redfrom{\psi}{1} \eqdef tgt(\psi)$. \\
If $\psi = \psi_1 \comp \psi_2$ and $\psi_1$ is a one-step, then $\redfrom{\psi}{1} \eqdef \psi_2$. \\
If $\psi = \psi_1 \comp \psi_2$ and $\psi_1$ is not a one-step, then $\redfrom{\psi}{1} \eqdef \redfrom{\psi_1}{1} \comp \psi_2$. \\
If $\psi = \icomp \psi_i$ and $\psi_0$ is a one-step, then $\redfrom{\psi}{1} \eqdef \icomp \psi_{1+i}$. \\
If $\psi = \icomp \psi_i$ and $\psi_0$ is not a one-step, then $\redfrom{\psi}{1} \eqdef \redfrom{\psi_0}{1} \comp (\icomp \psi_{1+i})$. \\
\end{tabular}
\end{definition}
}

\denotationInOwnChapter{%
\subsection{Denotation -- formal definition and proof of existence}
\label{sec:pterm-denotation-defs-existence}
}%
\denotationDistributed{
\subsection{Denotation of \redseqs}
\label{sec:pterm-denotation}
}%
In this section, we formalise the notion of a \pnpterm\ \emph{denoting} a \redseq, resorting to the definitions of length and $\alpha$-th component of \pnpterms, given in the presentation of such terms.
Then we prove the existence, for any \redseq\ having a countable ordinal length, of a \pnpterm\ which denotes it. 

\denotationInOwnChapter{%
As we have discussed in the introduction to Section~\ref{sec:pterm-denotation}, denotation of a \redseq\ is not unique. 
In the next subsection, we will investigate how to characterise the proof terms denoting the same \redseq.}

\begin{definition}[Denotation for reduction steps]
\label{dfn:redstep-denotation}
Let $a = \langle t, p, \mu \rangle$ be a reduction step, and $\psi$ a one-step. 
Then $\psi$ \emph{denotes} $a$ iff all the following apply: $src(\psi) = t$, $tgt(\psi) = tgt(a)$, and $\psi(p) = \mu$, therefore $\sdepth{a} = \mind{\psi}$.
\end{definition}

\begin{definition}[Mapping from one-steps to reduction steps]
\label{dfn:rstepden}
Let \trst\ be a \TRS. We define the mapping $\rstepdenfn$ from the set of one-steps for \trst\ to the set of reduction steps for \trst, as follows: 
$\rstepden{\psi} \eqdef \langle src(\psi), \RPos{\psi}, \psi(\RPos{\psi}) \rangle$.
\end{definition}

\begin{lemma}
\label{rsl:rstepden-denotes}
Let $\psi$ be a one-step and $\stepa$ a reduction step. 
Then $\psi$ denotes $a$ iff $a = \rstepden{\psi}$.
\end{lemma}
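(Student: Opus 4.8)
The plan is to prove the biconditional by reducing both directions to a single core identity, namely that for every one-step $\psi$ one has $tgt(\psi) = tgt(\rstepden{\psi})$, together with the uniqueness of the rule-symbol occurrence in a one-step. First I would unfold the two definitions involved. By Dfn.~\ref{dfn:one-step} a one-step $\psi$ has exactly one occurrence of a rule symbol, located at $\RPos{\psi}$; write $p_0 \eqdef \RPos{\psi}$ and $\mu_0 \eqdef \psi(p_0)$, so that Dfn.~\ref{dfn:rstepden} reads $\rstepden{\psi} = \langle src(\psi), p_0, \mu_0 \rangle$ (identifying the rule symbol $\mu_0$ with the object rule it names, \confer\ Dfn.~\ref{dfn:srct-tgtt}). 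Dfn.~\ref{dfn:redstep-denotation} says that $\psi$ denotes $a = \langle t, p, \mu \rangle$ exactly when $src(\psi) = t$, $tgt(\psi) = tgt(a)$ and $\psi(p) = \mu$.

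For the ($\Rightarrow$) direction, assume $\psi$ denotes $a = \langle t, p, \mu \rangle$. Then $t = src(\psi)$ and $\psi(p) = \mu$. Since $\mu$ is (the symbol of) a rule, $\psi(p)$ is a rule symbol; as $\psi$ is a one-step its only rule-symbol occurrence is at $p_0$, so $p = p_0$ and $\mu = \psi(p_0) = \mu_0$. Because a reduction step $\langle t, p, \mu, \sigma \rangle$ is determined, up to its substitution, by $t$, $p$ and $\mu$ (cf.\ the remark following Dfn.~\ref{dfn:step}), this yields $a = \langle src(\psi), p_0, \mu_0 \rangle = \rstepden{\psi}$. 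Note that here the denotation condition $tgt(\psi) = tgt(a)$ is merely a hypothesis and is not needed to conclude.

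For the ($\Leftarrow$) direction, assume $a = \rstepden{\psi} = \langle src(\psi), p_0, \mu_0 \rangle$. The conditions $src(\psi) = t$ and $\psi(p) = \mu$ of Dfn.~\ref{dfn:redstep-denotation} hold at once, taking $t = src(\psi)$, $p = p_0$, $\mu = \mu_0 = \psi(p_0)$. The remaining requirement, $tgt(\psi) = tgt(a)$, is precisely the core identity, which I would establish as follows. Decompose $\psi = C[\mu_0(\phi_1, \ldots, \phi_n)]$, where $C$ is a one-hole context over $\Sigma$ with its hole at $p_0$ and the $\phi_i$ are object terms in $\iSigmaTerms$ (they carry no rule symbol, by uniqueness); such a decomposition follows from Prop.~\ref{rsl:term-then-intuitive-notation}. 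Writing $\mu_0 : l \to r$ and $\sigma : x_i \mapsto \phi_i$, the single $src_T$-step contracting $\mu_0$ rewrites $\psi$ to the object term $C[\sigma l]$, which is a $src_T$-normal form; since $src_T$ enjoys $UN^\infty$ (cf.\ the remark after Dfn.~\ref{dfn:srct-tgtt}), $src(\psi) = C[\sigma l]$. Likewise the single $tgt_T$-step yields the object normal form $C[\sigma r]$, so $\psi$ is convergent and $tgt(\psi) = C[\sigma r]$. Now $a = \langle src(\psi), p_0, \mu_0 \rangle$ has $\subtat{src(\psi)}{p_0} = \sigma l$, whence by Dfn.~\ref{dfn:step} $tgt(a) = \repl{src(\psi)}{\sigma r}{p_0} = \repl{C[\sigma l]}{\sigma r}{p_0} = C[\sigma r] = tgt(\psi)$, as required.

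I expect the main obstacle to be the computation of $src(\psi)$ and $tgt(\psi)$ from the structure of $\psi$: these are nominally defined through the $src_T$- and $tgt_T$-normal forms (Dfn.~\ref{dfn:src-tgt-imstep}) rather than by a structural recursion, so I must justify carefully that the single $src_T$- (resp.\ $tgt_T$-) step reaches a normal form and invoke the unique-normal-form property to identify it with $src(\psi)$ (resp.\ $tgt(\psi)$). An alternative, equally viable route is a short induction on the redex position $p_0$: the base case $p_0 = \epsilon$ is the head-rule computation above, and the step $p_0 = i p'$ uses Lem.~\ref{rsl:repl-homo} to push the replacement into the $i$-th argument, together with $src(\phi_j) = tgt(\phi_j) = \phi_j$ for the object arguments $j \neq i$. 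Once the core identity is in hand, the remaining bookkeeping in both directions is routine.
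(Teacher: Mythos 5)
Your proof is correct and follows essentially the same route as the paper: the ($\Rightarrow$) direction via uniqueness of the rule-symbol occurrence (likewise not needing the target hypothesis), and the ($\Leftarrow$) direction via the identity $src(\psi) = C[\sigma l]$, $tgt(\psi) = C[\sigma r]$, matched against $tgt(a) = \repl{src(\psi)}{\sigma r}{p_0}$. The only difference is that you spell out, via the single $src_T$/$tgt_T$-step and the $UN^\infty$ property of the companion TRSs, what the paper dismisses as ``observe that \ldots it is straightforward to verify'' --- a welcome but inessential elaboration.
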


\begin{proof}
We prove each direction of the biconditional. 

\noindent
$\Rightarrow )$:
Let us say $a = \langle t, p, \mu \rangle$. Hypotheses imply immediately $t = src(\psi)$, and also $\psi(p) = \mu$, so that $p = \RPos{\psi}$ and $\mu = \psi(\RPos{\psi})$. Thus we conclude.
\noindent
$\Leftarrow )$:
Let us say $\rstepden{\psi} = \langle t, p, \mu \rangle$ and $\mu: l \to h$.
Then it is immediate from Dfn.~\ref{dfn:rstepden} to verify $src(\psi) = t$ and $\psi(p) = \mu$.
In turn, observe that $tgt(\psi) = \repl{\psi}{h[t_1, \ldots, t_m]}{p}$ where $\subtat{\psi}{p} = \mu(t_1, \ldots, t_m)$, and
$t = src(\psi) = \repl{\psi}{l[t_1, \ldots, t_m]}{p}$, so that it is straightforward to verify $tgt(\rstepden{\psi}) = tgt(\psi)$. Thus we conclude.
\end{proof}

\begin{definition}[Denotation for \redseqs]
\label{dfn:redseq-denotation}
Let \reda\ be a \redseq, and $\psi$ a \pnpterm. 
We will say that $\psi$ \emph{denotes} \reda\ iff $\ppsteps{\psi} = \redln{\reda}$, $src(\psi) = src(\reda)$ and $\redel{\psi}{\alpha}$ denotes $\redel{\reda}{\alpha}$ for all $\alpha < \redln{\reda}$.
\end{definition}

\begin{lemma}
\label{rsl:redseq-denotation-implications}
Let \reda\ be a \redseq, and $\psi$ a \pnpterm, such that $\psi$ denotes \reda.
Then $\mind{\psi} = \mind{\reda}$, $\psi$ is convergent iff \reda\ is, and in that case, $tgt(\psi) = tgt(\reda)$.
\end{lemma}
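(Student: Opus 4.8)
The plan is to prove the three claims in the order stated, first dispatching the degenerate case $\redln{\reda} = 0$ and then, for $\redln{\reda} > 0$ (so that $\psi$ is genuinely a \ppterm), working from the componentwise description of \ppterms. Throughout I will exploit the two facts that the hypothesis ``$\psi$ denotes $\reda$'' supplies for free: $\ppsteps{\psi} = \redln{\reda}$, and, for every $\alpha < \redln{\reda}$, the one-step $\redel{\psi}{\alpha}$ denotes $\redel{\reda}{\alpha}$, whence $tgt(\redel{\psi}{\alpha}) = tgt(\redel{\reda}{\alpha})$ and $\sdepth{\redel{\reda}{\alpha}} = \mind{\redel{\psi}{\alpha}} = \sdepth{\redel{\psi}{\alpha}}$ (Dfn.~\ref{dfn:redstep-denotation}).

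For $\mind{\psi} = \mind{\reda}$ the argument is routine. If $\redln{\reda} = 0$ then $\reda = \redid{t}$ and $\psi \in \iSigmaTerms$, so both sides equal $\omega$ by Dfn.~\ref{dfn:redseq-mind} and Dfn.~\ref{dfn:dmin-imstep}. Otherwise Lemma~\ref{rsl:ppterm-seq-mind} rewrites $\mind{\psi}$ as $min(\sdepth{\redel{\psi}{\alpha}} \setsthat \alpha < \ppsteps{\psi})$, which by the step-denotation identities above equals $min(\sdepth{\redel{\reda}{\alpha}} \setsthat \alpha < \redln{\reda}) = \mind{\reda}$.

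For convergence and target I proceed by induction on the \ppterm\ $\psi$, through its three possible shapes (one-step, $\psi_1 \comp \psi_2$, and $\icomp \psi_i$ with \ppterm\ components), having first recorded the auxiliary observation that denotation respects the decomposition of $\reda$ into the sections matching $\psi$'s components: if $\psi = \psi_1 \comp \psi_2$ then $\psi_1$ and $\psi_2$ denote $\redsublt{\reda}{0}{\ppsteps{\psi_1}}$ and $\redsublt{\reda}{\ppsteps{\psi_1}}{\redln{\reda}}$ respectively, and analogously for the countably many sections $\redsublt{\reda}{\sigma_i}{\sigma_{i+1}}$ in the infinite case, each of which is again a well-formed \redseq. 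This is immediate from Dfn.~\ref{dfn:ppterm-component}, Dfn.~\ref{dfn:redseq-section} and Lemma~\ref{rsl:ppterm-src}. For a one-step both $\psi$ and the length-one $\reda$ are convergent, and $tgt(\psi) = tgt(\redel{\psi}{0}) = tgt(\redel{\reda}{0}) = tgt(\reda)$ by Lemma~\ref{rsl:ppterm-tgt-successor} and Dfn.~\ref{dfn:redseq-tgt}. For $\psi = \psi_1 \comp \psi_2$ I reduce everything to the second component: $\psi$ is convergent iff $\psi_2$ is and $tgt(\psi) = tgt(\psi_2)$ by Dfn.~\ref{dfn:layer-pterm}, case~\ref{rule:ptbinC}, while a shift-by-$\ppsteps{\psi_1}$ argument (the map $\gamma \mapsto \ppsteps{\psi_1} + \gamma$ being cofinal-continuous near the top) shows $\reda$ is convergent iff its suffix section is, with equal targets; the induction hypothesis on $\psi_2$ then closes the case.

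The main obstacle is the infinite-composition case $\psi = \icomp \psi_i$, where I must reconcile the \emph{structural} notion of convergence of a proof term with the \emph{limit-based} notion for \redseqs. Writing $\sigma_i \eqdef \ppsteps{\psi_0} + \ldots + \ppsteps{\psi_{i-1}}$ for the partial sums, which are cofinal in $\ppsteps{\psi}$ since each $\ppsteps{\psi_i} \geq 1$, and using Lemma~\ref{rsl:ppterm-seq-mind} to pass between $\mind{\psi_j}$ and the depths of its components, I will show that the convergence clause of Dfn.~\ref{dfn:layer-pterm}, case~\ref{rule:ptinfC} (that $\mind{\psi_j} \to \omega$) is equivalent to the depth condition~(\ref{it:dfn-sred-depth}) of Dfn.~\ref{dfn:sred} for $\reda$ at $\redln{\reda}$. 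The delicate point is the \emph{existence of the limit}, condition~(\ref{it:dfn-sred-limit-existence}): rather than deriving it from the depth condition in general, which the text explicitly leaves open, I obtain it from the proof-term side. When $\psi$ is convergent, Lemma~\ref{rsl:mind-big-then-tdist-little} guarantees $tgt(\psi)$ is defined and Lemma~\ref{rsl:ppterm-tgt-limit} gives $tgt(\psi) = \lim_{\alpha \to \ppsteps{\psi}} tgt(\redel{\psi}{\alpha}) = \lim_{\alpha \to \redln{\reda}} tgt(\redel{\reda}{\alpha})$, so the target sequence of $\reda$ has a limit and $\reda$ is convergent with $tgt(\reda) = tgt(\psi)$. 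Conversely, convergence of $\reda$ supplies the depth condition, which via the cofinality of the $\sigma_i$ forces $\mind{\psi_j} \to \omega$, i.e.\ convergence of $\psi$, and the target identity again follows from the same two lemmas.
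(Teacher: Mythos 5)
Your proposal is correct and takes essentially the same route as the paper's own proof: the $\mind{\psi} = \mind{\reda}$ claim via Lemma~\ref{rsl:ppterm-seq-mind}, the equivalence between the depth condition for $\reda$ and the clause ``$\mind{\psi_j} \to \omega$'' via Lemma~\ref{rsl:ordinal-lt-infAdd-then-unique-representation} and the cofinality of the partial sums, and---the delicate point---obtaining the limit-existence condition for $\reda$ from the proof-term side through Lemma~\ref{rsl:mind-big-then-tdist-little} and Lemma~\ref{rsl:ppterm-tgt-limit}, exactly as the paper does. The only difference is organisational: you run a structural induction on $\psi$ with a section-decomposition observation, whereas the paper splits on whether $\ppsteps{\psi}$ is a limit or a successor ordinal, treats the limit case directly, and delegates the successor case to a straightforward induction; the mathematical content coincides.
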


\begin{proof}
If $\psi \in \iSigmaTerms$, then the result holds immediately.

Otherwise, the result about $\mindfn$ stems immediately from \reflem{ppterm-seq-mind}.

We prove the result about convergence.
Assume that $\ppsteps{\psi}$ is a limit ordinal, then $\psi = \icomp \psi_i$; \confer\ Lem.~\ref{rsl:steps-ordinal-coherence} and Lem.~\ref{rsl:ptinfC-iff-limit}.
Assume \reda\ convergent, consider some $k < \omega$, and $\alpha$ such that $\sdepth{\redel{\reda}{\beta}} > k$ if $\beta > \alpha$.
\refLem{ordinal-lt-infAdd-then-unique-representation} implies that $\alpha = \sum_{i < n} \ppsteps{\psi_i} + \gamma$ and $\gamma < \ppsteps{\psi_n}$ for some $n$; so that $\alpha < \sum_{i \leq n} \ppsteps{\psi_i}$.
Consider $j > n$, and $\gamma < \ppsteps{\psi_j}$. Observe $\redel{\psi_j}{\gamma} = \redel{\psi}{\beta}$ where $\beta = \sum_{i < j} \ppsteps{\psi_i} + \gamma$, so that $\beta \geq \sum_{i \leq n} \ppsteps{\psi_i} > \alpha$. Therefore $\mind{\redel{\psi_j}{\gamma}} = \mind{\redel{\psi}{\beta}} = \sdepth{\redel{\reda}{\beta}} > k$.
Hence \reflem{ppterm-seq-mind} implies that $\mind{\psi_j} > k$. Consequently, $\psi$ is convergent.

Conversely, assume $\psi$ convergent, let $k < \omega$, consider $n < \omega$ such that $\mind{\psi_j} > k$ if $j > n$. 
Let $\alpha \eqdef \sum_{i \leq n} \ppsteps{\psi_i}$, and take $\beta$ such that $ \alpha < \beta < \redln{\reda}$. 
Then Lem.~\ref{rsl:ordinal-lt-infAdd-then-unique-representation} implies $\beta = \sum_{i < j} \ppsteps{\psi_i} + \gamma$ and $\gamma < \ppsteps{\psi_j}$, moreover, $\beta > \alpha$ implies $j > n$.
Hence $\sdepth{\redel{\reda}{\beta}} = \mind{\redel{\psi_j}{\gamma}} > k$ by \reflem{ppterm-seq-mind}. Consequently, the requirement about depths in the characterisation of convergent \redseqs, \ie\ condition~(\ref{it:dfn-sred-depth}) in Dfn.~\ref{dfn:sred}, holds for \reda.
To prove the existence of $\lim_{\alpha \to \redln{\reda}} tgt(\redel{\reda}{\alpha})$, \ie\ condition~(\ref{it:dfn-sred-limit-existence}) in Dfn.~\ref{dfn:sred}, it suffices to observe that Lem.~\ref{rsl:mind-big-then-tdist-little}:(\ref{it:convergent-then-has-tgt}) implies that $tgt(\psi)$ is defined, and in turn Lem.~\ref{rsl:ppterm-tgt-limit} implies the desired limit to equal $tgt(\psi)$. Hence $\reda$ is convergent.

If $\ppsteps{\psi}$ is a successor ordinal, then assuming $\reda$ is convergent, a straightforward induction on $\psi$ suffices to prove that $\psi$ is convergent as well; observe that Lem.~\ref{rsl:steps-ordinal-coherence} and Lem~\ref{rsl:ptinfC-iff-limit} imply that only one-step and binary concatenation must be considered. For the other direction, it is enough to observe that $\redln{\reda}$ being a successor ordinal implies immediately convergence of $\reda$.

Finally, the result about targets stems immediately from \reflem{ppterm-tgt-limit} and \reflem{ppterm-tgt-successor}.
\end{proof}

\begin{proposition}
\label{rsl:denotation-existence}
Let $\reda$ be a \redseq\ having a countable length. Then there exists a \pnpterm\ $\psi$ such that $\psi$ denotes $\reda$.
\end{proposition}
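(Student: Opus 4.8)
The plan is to proceed by transfinite induction on $\redln{\reda}$, exploiting the three shapes a \redseq\ can take according to its length and mirroring them with the three formation cases of \ppterms. The base has two parts. For $\redln{\reda}=0$ take $\psi \eqdef src(\reda) \in \iSigmaTerms$, which denotes $\redid{src(\reda)}$ trivially. For a single step $a = \langle t,p,\mu\rangle$ with $\mu:l\to r$ and $\subtat{t}{p} = \sigma l$, I would exhibit a one-step denoting it by setting $\chi \eqdef \repl{t}{\mu(\sigma x_1,\ldots,\sigma x_n)}{p}$, where $x_1,\ldots,x_n$ are the variables of $l$ in canonical order. A direct computation gives $\RPos{\chi}=p$, $\chi(p)=\mu$, and $src(\chi) = \repl{t}{\sigma l}{p} = t$ (normalising the single rule-symbol occurrence in $src_T$), so that $\rstepden{\chi}=a$ and hence $\chi$ denotes $a$ by Lem.~\ref{rsl:rstepden-denotes}. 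This construction is the uniform device realising each individual step.

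For the successor case $\redln{\reda}=\gamma+1$ with $\gamma>0$, I would apply the inductive hypothesis to the prefix $\redupto{\reda}{\gamma}$, which is a convergent \redseq\ of length $\gamma$ by Lem.~\ref{rsl:redseq-proper-section-convergent}, obtaining a \pnpterm\ $\psi'$ denoting it; since $\gamma>0$ this $\psi'$ is a genuine \ppterm. Taking the one-step $\chi$ denoting $\redel{\reda}{\gamma}$ from the base construction, set $\psi \eqdef \psi' \comp \chi$. Well-formedness follows because $\psi'$ is convergent (Lem.~\ref{rsl:redseq-denotation-implications}) and $tgt(\psi') = tgt(\redupto{\reda}{\gamma}) = src(\redel{\reda}{\gamma}) = src(\chi)$ (Lem.~\ref{rsl:redseq-section-src-tgt-coherence}), so case~\ref{rule:ptbinC} applies and $\psi$ is a \ppterm. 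The denotation clauses then read off from Dfn.~\ref{dfn:ppterm-component}: the components below $\gamma$ come from $\psi'$ and the component $\gamma$ is $\chi$. (When $\gamma=0$, simply take $\psi\eqdef\chi$.)

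The limit case is where the real work lies. For $\redln{\reda}=\lambda$ a countable limit ordinal, I would first pick, using $\omega$-cofinality (Prop.~\ref{rsl:cofinality-omega}), a strictly increasing sequence of \emph{successor} ordinals $0=\gamma_0<\gamma_1<\cdots$ cofinal in $\lambda$ (any cofinal $\omega$-sequence can be replaced by the successors of its members, which remain below and cofinal in $\lambda$). Setting $\reda_j \eqdef \redsublt{\reda}{\gamma_j}{\gamma_{j+1}}$ splits $\reda$ into consecutive sections; each is a well-formed \redseq\ because the limit conditions of Dfn.~\ref{dfn:sred} for a section starting at $\gamma_j$ correspond to those of $\reda$ at the limit ordinals $\gamma_j+\delta<\gamma_{j+1}<\lambda$. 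Crucially, choosing each $\gamma_{j+1}$ to be a successor forces each section length $\lambda_j$ to be a successor, hence each $\reda_j$ is convergent. By the inductive hypothesis (each $\lambda_j<\lambda$) I obtain \ppterms\ $\psi_j$ denoting $\reda_j$ (genuine \ppterms, as $\lambda_j>0$); these are convergent (Lem.~\ref{rsl:redseq-denotation-implications}) and satisfy $tgt(\psi_j)=tgt(\reda_j)=src(\redel{\reda}{\gamma_{j+1}})=src(\reda_{j+1})=src(\psi_{j+1})$, the middle equality being the successor-coherence condition~(\ref{it:dfn-sred-successor-coherence}). Thus $\psi \eqdef \icomp \psi_j$ is a legitimate \ppterm\ by case~\ref{rule:ptinfC}. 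Finally I verify denotation: $\ppsteps{\psi}=\sum_j\ppsteps{\psi_j}=\sum_j\lambda_j=\lambda$, $src(\psi)=src(\psi_0)=src(\reda)$, and for each $\alpha<\lambda$ the unique decomposition $\alpha=\sum_{i<j}\lambda_i+\gamma$ with $\gamma<\lambda_j$ from Lem.~\ref{rsl:ordinal-lt-infAdd-then-unique-representation} yields $\redel{\psi}{\alpha}=\redel{\psi_j}{\gamma}$, which denotes $\redel{\reda_j}{\gamma}=\redel{\reda}{\alpha}$.

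The main obstacle is the limit step, and within it the two bookkeeping points above: guaranteeing that the chosen decomposition produces convergent sections (solved by taking the cutpoints to be successor ordinals, so every section has successor length) and matching the indices of the components of $\psi$ against those of $\reda$ through the infinitary-sum representation lemma. Everything else is a routine propagation of the denotation clauses through the formation cases of \ppterms.
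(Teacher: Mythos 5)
Your proof is correct and follows essentially the same route as the paper's: the same explicit one-step construction $\repl{t}{\mu(t_1,\ldots,t_m)}{p}$ for single steps, the same splitting-off of the final step in the successor case, and the same $\omega$-cofinal decomposition into sections concatenated by $\icomp$ in the limit case, with denotation matched through Lem.~\ref{rsl:ordinal-lt-infAdd-then-unique-representation}. Your only deviation is a small bookkeeping refinement: by forcing the cutpoints to be successor ordinals you make every section convergent for free, whereas the paper uses the raw cofinality decomposition and dismisses convergence of the (possibly limit-length) sections as ``easy to verify.''
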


\begin{proof}
We proceed by induction on $\redln{\reda}$.

If $\redln{\reda} = 0$, \ie\ $\reda = \redid{t}$, then it suffices to take $\psi \eqdef t$.

Assume that $\redln{\reda} = 1$. Let us say $\redel{\reda}{0} = \langle t, p, \mu \rangle$ where $\mu: l \to h$, implying that $\subtat{t}{p} = l[t_1, \ldots, t_m]$.
Take $\psi \eqdef \repl{t}{\mu(t_1, \ldots, t_m)}{p}$.
It is immediate to verify that $\psi$ is a \ppterm\ verifying $\ppsteps{\psi} = 1$. Moreover, a simple analysis yields $src(\psi) = src(\redel{\reda}{0}) = src(\reda) = t$.
Furthermore, $\psi(p) = \mu$, and $tgt(\psi) = tgt(\redel{\reda}{0}) = \repl{t}{h[t_1, \ldots, t_m]}{p}$; therefore $\redel{\psi}{0} = \psi$ denotes $\redel{\reda}{0}$. Hence $\psi$ denotes $\reda$.

Assume $\redln{\reda} = \alpha+1$ and $\alpha > 0$. 
In this case, applying twice \ih\ yields the existence of $\psi_1$, $\psi_2$ such that $\psi_1$ denotes $\redupto{\reda}{\alpha}$ and $\psi_2$ denotes $\redsublt{\reda}{\alpha}{\alpha+1}$.
Then a straightforward analysis allows to obtain that $\psi \eqdef \psi_1 \comp \psi_2$ denotes $\reda$.

Assume $\alpha \eqdef \redln{\reda}$ is a limit ordinal; recall that $\alpha$ is countable. Then Prop.~\ref{rsl:cofinality-omega} implies $\alpha = \sum_{i < \omega} \alpha_i$ where $\alpha_i < \alpha$ for all $i < \omega$.
Therefore, for any $n < \omega$, \ih\ can be applied to obtain some $\psi_n$ denoting $\redsublt{\reda}{\sum_{i < n} \alpha_i}{\sum_{i \leq n} \alpha_i}$.
We take $\psi \eqdef \icomp \psi_i$.

Let $n < \omega$. It is easy to verify that $\redsublt{\reda}{\sum_{i < n} \alpha_i}{\sum_{i \leq n} \alpha_i}$ is convergent, then Lem.~\ref{rsl:redseq-denotation-implications} implies 
$tgt(\psi_n) 
    = tgt(\redsublt{\reda}{\sum_{i < n} \alpha_i}{\sum_{i \leq n} \alpha_i})
		= src(\redsublt{\reda}{\sum_{i \leq n} \alpha_i}{\sum_{i \leq n+1} \alpha_i}
		= src(\psi_{n+1})$; \confer\ conditions about sources and targets in Dfn.~\ref{dfn:sred}.
Hence $\psi$ is a well-formed proof term.
Recalling that $\redln{\redsublt{\reda}{\sum_{i < n} \alpha_i}{\sum_{i \leq n} \alpha_i}} = \alpha_n$, it is straightforward to obtain $\ppsteps{\psi} = \redln{\reda} = \alpha$.
Moreover, $src(\psi) = src(\psi_0) = src(\redupto{\reda}{\alpha_0}) = src(\reda)$, recall that $\psi_0$ denotes $\redupto{\reda}{\alpha_0}$.
Let $\beta < \alpha$. Then Lem.~\ref{rsl:ordinal-lt-infAdd-then-unique-representation} implies the existence of unique $k$ and $\gamma$ such that $\beta = \sum_{i < k} \alpha_i + \gamma$ and $\gamma < \alpha_k$.
Therefore $\redel{\psi}{\beta} = \redel{\psi_k}{\gamma}$ and $\redel{\reda}{\beta} = \redel{\redsublt{\reda}{\sum_{i<k} \alpha_i}{\sum_{i \leq k} \alpha_i}}{\gamma}$, \confer\ Dfn.~\ref{dfn:ppterm-component} and Dfn.~\ref{dfn:redseq-section}.
Hence $\psi_k$ denoting $\redsublt{\reda}{\sum_{i<k} \alpha_i}{\sum_{i \leq k} \alpha_i}$ implies that $\redel{\psi}{\beta}$ denotes $\redel{\reda}{\beta}$.
Consequently, we conclude.
\end{proof}

\subsection{Uniqueness of denotation}
\label{sec:pterm-denotation-uniqueness}

In this section we will prove the claim we made at the beginning of Section~\ref{sec:pterm-denotation}: \emph{rebracketing equivalence}, which is the result of restricting the \peqence\ relation introduced in Section~\ref{sec:peqence} by allowing only associativity instances as basic equations, is an adequate syntactic counterpart of the relation of ``denoting the same \redseq'', \ie\ \emph{denotational equivalence},  between \ppterms.

In the following we will give formal definitions for the concepts of denotational and rebracketing equivalence, and subsequently prove that the defined relations coincide.

\begin{definition}
\label{dfn:deneq}
Let $\psi$, $\phi$ be \pnpterms.
We say that $\psi$ and $\phi$ are \emph{denotationally equivalent}, notation $\psi \deneq \phi$, iff 
either $\ppsteps{\psi} = \ppsteps{\phi} = 0$ and $\psi = \phi$, 
or $\ppsteps{\psi} = \ppsteps{\phi} > 0$ and $\redel{\psi}{\alpha} = \redel{\phi}{\alpha}$ for all $\alpha < \ppsteps{\psi}$.
\end{definition}

\begin{definition}
\label{dfn:layer-breqe}
\label{dfn:layer-breq}
Let $\alpha$ be a countable ordinal.

We define the \emph{$\alpha$-th level of base rebracketing equivalence} relation, notation $\layerbreqe{\alpha}$, on the set of \pnpterms, as follows.
Given $\psi$ and $\phi$ \pnpterms, $\psi \layerbreqe{\alpha} \phi$ iff the equation $\psi \layerpeqx{\alpha} \phi$ can be obtained by means of the equational logic system whose basic equations are the instance \peqassoc\ described in Dfn.~\ref{dfn:layer-peqe}, and whose equational rules are \eqlrefl, \eqleqn, \eqlsymm, \eqltrans, \eqlcomp\ and \eqlinfcomp, described also in Dfn.~\ref{dfn:layer-peqe}.

We also define the \emph{$\alpha$-th level of rebracketing equivalence} relation, notation $\layerbreq{\alpha}$, on the set of \pnpterms, analogously, the only difference being that a rule is added, namely the version of the \eqllim\ rule which results from changing, in the premises, the references to the $\layerpeqe{\alpha_k}$ and $\layerpeqe{\beta_k}$ relations, to $\layerbreqe{\alpha_k}$ and $\layerbreqe{\beta_n}$ respectively.
\end{definition}

\begin{definition}
\label{dfn:breqe}
\label{dfn:breq}
Let $\psi$, $\phi$ be \pnpterms. We say that $\psi$ and $\phi$ are (base) rebracketing equivalent, notation ($\psi \breqe \phi$) $\psi \breq \phi$, iff ($\psi \layerbreqe{\alpha} \phi$) $\psi \layerbreq{\alpha} \phi$ for some $\alpha < \omega_1$.
\end{definition}

Observe that all the following inclusions hold where $\alpha$ is any countable ordinal: $\layerbreqe{\alpha} \subseteq \layerbreq{\alpha}$, $\layerbreqe{\alpha} \subseteq \layerpeqe{\alpha}$, $\layerbreq{\alpha} \subseteq \layerpeq{\alpha}$, and consequently $\breqe \subseteq \breq$, $\breqe \subseteq \peqe$ and $\breq \subseteq \peq$.
Therefore, several results stated for \peqence\ hold also for rebracketing equivalence. Particularly, properties proved for the $\peqe$ relation also apply to $\breqe$.

\begin{lemma}
\label{rsl:layerpterm-ppsteps}
Let $\psi$ a \ppterm, and $\alpha$ such that $\psi \in \layerpterm{\alpha}$. Then $\exists n < \omega$ such that $\alpha = \ppsteps{\psi} + n$.
Moreover, if $\alpha$ is a limit ordinal, then $n = 0$, \ie\ $\alpha = \ppsteps{\psi}$.
\end{lemma}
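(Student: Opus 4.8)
The plan is to proceed by induction on the layer $\alpha$, which is well-defined and unique for the given \ppterm\ $\psi$ by Lemma~\ref{rsl:pterm-layer-uniqueness}. Since $\psi$ is a \ppterm, its formation is given by one of cases~\ref{rule:ptmstep}, \ref{rule:ptinfC} or~\ref{rule:ptbinC} of Definition~\ref{dfn:layer-pterm}, \confer\ Definition~\ref{dfn:ppterm}; and by Lemma~\ref{rsl:ptinfC-iff-limit} the layer $\alpha$ is a limit ordinal exactly when $\psi$ is built by case~\ref{rule:ptinfC}. Hence the two clauses of the statement will be settled together: cases~\ref{rule:ptmstep} and~\ref{rule:ptbinC} always produce a successor $\alpha$ and will yield a finite surplus $n$, while case~\ref{rule:ptinfC} always produces a limit $\alpha$ and I will show it forces $n = 0$, which is precisely the ``moreover'' assertion.

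For the base case (case~\ref{rule:ptmstep}), $\psi$ is a one-step, so $\alpha = 1$ and $\ppsteps{\psi} = 1$, whence $\alpha = \ppsteps{\psi} + 0$. For binary concatenation (case~\ref{rule:ptbinC}), $\psi = \psi_1 \comp \psi_2$ with $\psi_i \in \layerpterm{\alpha_i}$, $\alpha = \alpha_1 + \alpha_2 + 1$ and $\ppsteps{\psi} = \ppsteps{\psi_1} + \ppsteps{\psi_2}$; both $\psi_1,\psi_2$ are \ppterms\ in strictly smaller layers, so the induction hypothesis gives $\alpha_i = \ppsteps{\psi_i} + n_i$ with $n_i < \omega$. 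Substituting and regrouping by associativity of ordinal addition, the computation reduces to showing that $n_1 + \ppsteps{\psi_2} + n_2 + 1$ equals $\ppsteps{\psi_2} + n$ for some finite $n$; this uses the elementary ordinal fact that a finite ordinal is absorbed on the left of an infinite one ($k + \beta = \beta$ for $\beta \geq \omega$) and commutes with a finite one. Thus $\alpha = \ppsteps{\psi_1} + \ppsteps{\psi_2} + n = \ppsteps{\psi} + n$, as required.

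The main obstacle is the infinite concatenation (case~\ref{rule:ptinfC}), where $\psi = \icomp \psi_i$, $\psi_i \in \layerpterm{\alpha_i}$, and $\alpha = \sum_{i<\omega}\alpha_i$, while $\ppsteps{\psi} = \sum_{i<\omega}\ppsteps{\psi_i}$. Writing $s_i = \ppsteps{\psi_i}$ (each $\geq 1$, since every \ppterm\ has at least one step) and, by the induction hypothesis, $\alpha_i = s_i + n_i$ with $n_i < \omega$, I must prove $\sum_{i<\omega}(s_i + n_i) = \sum_{i<\omega}s_i$, i.e.\ that the finite surpluses are absorbed in the limit so that $n = 0$. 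Using Definition~\ref{dfn:ordinal-infAdd}, both sides are suprema of the finite partial sums $T_m = (s_0+n_0)+\dots+(s_m+n_m)$ and $U_m = s_0+\dots+s_m$. Monotonicity gives $U_m \leq T_m$; and a short induction on $m$, repeatedly using $k + \beta \leq \beta + k$ for finite $k$, shows $T_m \leq U_m + N_m$ with $N_m = n_0 + \dots + n_m < \omega$. Finally, since each $s_j \geq 1$ one has $U_m + N_m \leq U_{m + N_m}$, so $U_m \leq T_m \leq U_{m + N_m}$. Taking suprema sandwiches $\sum_{i}(s_i+n_i)$ between $\sum_{i}s_i$ and itself, giving $\alpha = \ppsteps{\psi}$ and hence $n = 0$.

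I expect the delicate point to be precisely this sandwiching in case~\ref{rule:ptinfC}: since ordinal addition is not commutative, the finite surpluses $n_i$ cannot simply be collected and discarded, and the only safe route is to pass through the finite partial sums of Definition~\ref{dfn:ordinal-infAdd} and bound them on both sides, exploiting that each $n_i$ is finite and that each component contributes at least one step. Everything else is routine bookkeeping with associativity of ordinal addition and the two elementary absorption facts for finite ordinals.
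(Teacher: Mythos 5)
Your proof is correct and follows essentially the same route as the paper's: transfinite induction on the layer $\alpha$, the same identification of the three possible formation cases of a \ppterm, the same finite-ordinal absorption/commutation arithmetic in the successor case, and in the limit case a comparison of the finite partial sums of $\sum_{i<\omega}(\ppsteps{\psi_i}+n_i)$ and $\sum_{i<\omega}\ppsteps{\psi_i}$ exploiting that every \ppterm\ contributes at least one step. The only cosmetic difference is how the upper bound is closed in the limit case: the paper bounds each partial sum strictly below the total because the remaining tail of the series is $\geq \omega$, whereas you sandwich the $m$-th augmented partial sum below the $(m+N_m)$-th plain partial sum by an index shift --- the two bounds are interchangeable.
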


\begin{proof}
We proceed by induction on $\alpha$.
If $\alpha = 1$ then $\psi$ is a one-step, and then $\ppsteps{\psi} = 1 = \alpha$.

Assume $\alpha$ is a successor ordinal and $\alpha > 1$.
In this case, Lem.~\ref{rsl:ptinfC-iff-limit} and Lem.~\ref{rsl:ptmstep-iff-one} imply that $\psi = \psi_1 \comp \psi_2$, $\psi_i \in \layerpterm{\alpha_i}$ for $i = 1.2$, $\alpha_2$ is successor, and $\alpha = \alpha_1 + \alpha_2 + 1$.
\Ih\ implies $\alpha_1 = \ppsteps{\psi_1} + n_1$ and $\alpha_2 = \ppsteps{\psi_2} + n_2$.
If $\ppsteps{\psi_2} < \omega$, then $\alpha = \ppsteps{\psi} + n_1 + n_2 + 1$, otherwise $\alpha = \ppsteps{\psi} + n_2 + 1$. In either case the conclusion holds, thus we conclude.

Assume that $\alpha$ is a limit ordinal, so that Lem.~~\ref{rsl:ptinfC-iff-limit} implies $\psi = \icomp \psi_i$ and $\alpha = \sum_{i < \omega} \alpha_i$ where $\psi_i \in \layerpterm{\alpha_i}$ for all $i < \omega$.
Observe $\alpha_i < \alpha$ for all $i$.
Then we can apply \ih\ on each $i$ obtaining $\alpha_i = \ppsteps{\psi_i} + n_i$, so that proving $\sum_{i < \omega} \ppsteps{\psi_i}  + n_i = \sum_{i < \omega} \ppsteps{\psi_i}$ suffices to conclude.

Let $k < \omega$. Observe $\sum_{i < k} \ppsteps{\psi_i}  + n_i \leq \sum_{i < k} \ppsteps{\psi_i} + \sum_{i < k} n_i < \sum_{i < k} \ppsteps{\psi_i} + \omega$.
On the other hand, $\sum_{i < \omega} \ppsteps{\psi_i} = \sum_{i < k} \ppsteps{\psi_i} + \sum_{i < \omega} \ppsteps{\psi_{k+i}} \geq \sum_{i < k} \ppsteps{\psi_i} + \omega$. Then $\sum_{i < k} \ppsteps{\psi_i} + n_i < \sum_{i < \omega} \ppsteps{\psi_i}$. Consequently $\sum_{i < \omega} \ppsteps{\psi_i} + n_i \leq \sum_{i < \omega} \ppsteps{\psi_i}$.
We conclude by observing that it is straightforward to obtain $\sum_{i < \omega} \ppsteps{\psi_i} \leq \sum_{i < \omega} \ppsteps{\psi_i} + n_i$.
\end{proof}

\begin{lemma}
\label{rsl:ptinfC-iff-ppsteps-limit}
Let $\psi$ be a \ppterm. Then $\ppsteps{\psi}$ is a limit ordinal iff $\psi$ is an infinite concatenation.
\end{lemma}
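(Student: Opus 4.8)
The plan is to reduce this biconditional to two results already proved in the excerpt by routing everything through the layer index. Concretely, I would fix the unique countable ordinal $\alpha$ with $\psi \in \layerpterm{\alpha}$; this $\alpha$ is well-defined and is a genuine invariant of $\psi$ by Lemma~\ref{rsl:pterm-layer-uniqueness}. Both of the equivalences I want to chain pivot on the single condition ``$\alpha$ is a limit ordinal'', which is what makes the argument a one-line corollary rather than a fresh induction.

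First I would invoke Lemma~\ref{rsl:steps-ordinal-coherence}, which asserts exactly that for a \ppterm\ $\psi \in \layerpterm{\alpha}$ the ordinal $\ppsteps{\psi}$ is a limit ordinal if and only if $\alpha$ is. This disposes of the ``number of steps'' half of the statement, relating $\ppsteps{\psi}$ to the layer. Then I would invoke Lemma~\ref{rsl:ptinfC-iff-limit}, which says that for any proof term in $\layerpterm{\alpha}$---and a \ppterm\ is in particular a proof term---being an infinite concatenation is equivalent to $\alpha$ being a limit ordinal. This disposes of the ``infinite concatenation'' half.

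Chaining the two biconditionals through their common middle term gives
\[
\ppsteps{\psi} \text{ is a limit ordinal}
\iff \alpha \text{ is a limit ordinal}
\iff \psi \text{ is an infinite concatenation,}
\]
which is precisely the claim. One small bookkeeping point I would check is that the hypothesis ``$\psi$ is a \ppterm'' (rather than an arbitrary proof term) is exactly the hypothesis needed to apply Lemma~\ref{rsl:steps-ordinal-coherence}, since $\ppsteps{\cdot}$ is only defined for \pnpterms; Lemma~\ref{rsl:ptinfC-iff-limit} imposes no such restriction and applies verbatim.

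I do not anticipate any genuine obstacle here: all the substantive work---the transfinite bookkeeping on $\alpha$ and the analysis of the four formation cases of Definition~\ref{dfn:layer-pterm}---is already carried out inside the two cited lemmas, each of which is a routine induction on $\alpha$. The present statement is merely the transitive composite of those two equivalences about the layer, so the proof is little more than ``let $\alpha$ be the layer of $\psi$ and combine Lemma~\ref{rsl:steps-ordinal-coherence} with Lemma~\ref{rsl:ptinfC-iff-limit}.''
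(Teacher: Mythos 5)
Your proof is correct, and it takes a genuinely different route from the paper's. The paper proves this lemma by a fresh induction on the layer $\alpha$, with a case analysis on the formation of $\psi$: the one-step case is immediate; for $\psi = \psi_1 \comp \psi_2$ not an infinite concatenation it applies the induction hypothesis to $\psi_2$ to obtain that $\ppsteps{\psi_2}$ is a successor ordinal, whence $\ppsteps{\psi} = \ppsteps{\psi_1} + \ppsteps{\psi_2}$ is one as well; and in the infinite-concatenation case it combines Lem.~\ref{rsl:ptinfC-iff-limit} (so $\alpha$ is a limit) with Lem.~\ref{rsl:layerpterm-ppsteps} (so $\ppsteps{\psi} = \alpha$). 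You instead dispense with any induction by chaining Lem.~\ref{rsl:steps-ordinal-coherence} and Lem.~\ref{rsl:ptinfC-iff-limit} through the common pivot ``$\alpha$ is a limit ordinal'', with well-definedness of $\alpha$ guaranteed by Lem.~\ref{rsl:pterm-layer-uniqueness}. This is legitimate: Lem.~\ref{rsl:steps-ordinal-coherence} is stated and proved (by its own induction) before the present lemma, so there is no circularity, and its hypothesis --- $\psi$ a \ppterm\ --- is exactly the hypothesis here, as you note. What your route buys is economy: it exposes the present lemma as the transitive composite of two facts already on record, which makes the paper's own case analysis (in particular its renewed appeal to an induction hypothesis on $\psi_2$) redundant. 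What the paper's route buys is independence from Lem.~\ref{rsl:steps-ordinal-coherence}, whose proof the paper only sketches as an ``easy induction''; your argument pushes all substantive content into that sketch, so its solidity rests entirely on that lemma being as easy as claimed.
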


\begin{proof}
We proceed by induction on $\alpha$ where $\psi \in \layerpterm{\alpha}$; \confer\ Dfn.~\ref{dfn:layer-peqe}.
If $\psi$ is a one-step, then we conclude immediately.
If $\psi = \psi_1 \comp \psi_2$ and it is not an infinite concatenation, then $\psi_2$ is neither. Therefore we can apply \ih\ on $\psi_2$ obtaining that $\ppsteps{\psi_2}$ is a successor ordinal. We conclude by recalling that $\ppsteps{\psi} = \ppsteps{\psi_1} + \ppsteps{\psi_2}$.
Finally, if $\psi$ is an infinite concatenation, then Lem.~\ref{rsl:ptinfC-iff-limit} implies that $\psi \in \layerpterm{\alpha}$ where $\alpha$ is a limit ordinal. In turn, Lem.~\ref{rsl:layerpterm-ppsteps} implies that $\ppsteps{\psi} = \alpha$.
\end{proof}

\begin{lemma}
\label{rsl:ppterm-binC-partition}
Let $\psi$ be a \ppterm, $\alpha$ an ordinal verifying $0 < \alpha < \ppsteps{\psi}$, and $\beta$ such that $\psi \in \layerpterm{\beta}$.
Then there exist $\phi$, $\chi$ such that $\psi \breqe \phi \comp \chi$ and $\ppsteps{\phi} = \alpha$.
Moreover, if $\phi \in \layerpterm{\gamma}$ and $\chi \in \layerpterm{\delta}$, then $\gamma < \beta$ and $\delta \leq \beta$.
\end{lemma}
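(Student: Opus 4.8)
The plan is to argue by transfinite induction on $\beta$, the (by Lemma~\ref{rsl:pterm-layer-uniqueness} unique) layer with $\psi \in \layerpterm{\beta}$, following the structural decomposition of $\psi$ given by the three relevant cases of Dfn.~\ref{dfn:layer-pterm}. If $\psi$ is a one-step then $\ppsteps{\psi} = 1$ and no $\alpha$ satisfies $0 < \alpha < 1$, so that case is vacuous. The whole construction of $\phi \comp \chi$ will be carried out purely by reassociating the existing structure of $\psi$, so the only basic equation needed is \peqassoc; together with the congruence rules \eqlcomp\ and \eqltrans\ (all available for $\breqe$) this keeps the entire derivation inside the base rebracketing relation. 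The ordinal bookkeeping for the layer bounds will rest on Lemma~\ref{rsl:layerpterm-ppsteps} (layer $=\ppsteps{\eta}+n$ for some finite $n$, and $=\ppsteps{\eta}$ when the layer is a limit), on Lemma~\ref{rsl:ordinal-lt-infAdd-then-unique-representation}, and on elementary monotonicity of ordinal addition.

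For the binary case $\psi = \psi_1 \comp \psi_2$ (so $\beta = \beta_1 + \beta_2 + 1$ with $\psi_i \in \layerpterm{\beta_i}$ and $\beta_2$ a successor), I split according to $\alpha$. If $\alpha = \ppsteps{\psi_1}$ I take $\phi = \psi_1$, $\chi = \psi_2$. If $\alpha < \ppsteps{\psi_1}$ I apply \ih\ to $\psi_1$ (layer $\beta_1 < \beta$), obtaining $\psi_1 \breqe \phi_1 \comp \chi_1$ with $\ppsteps{\phi_1} = \alpha$, and set $\phi = \phi_1$, $\chi = \chi_1 \comp \psi_2$, using \peqassoc\ to pass from $(\phi_1 \comp \chi_1)\comp\psi_2$ to $\phi_1\comp(\chi_1\comp\psi_2)$. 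If $\alpha > \ppsteps{\psi_1}$ then $\alpha = \ppsteps{\psi_1} + \alpha'$ with $0 < \alpha' < \ppsteps{\psi_2}$; applying \ih\ to $\psi_2$ gives $\psi_2 \breqe \phi_2 \comp \chi_2$ with $\ppsteps{\phi_2} = \alpha'$, and I take $\phi = \psi_1 \comp \phi_2$, $\chi = \chi_2$. In each subcase the bounds $\gamma < \beta$ and $\delta \le \beta$ follow from the \ih\ bounds together with monotonicity of ordinal addition; for instance $\chi_1 \comp \psi_2 \in \layerpterm{\delta_1 + \beta_2 + 1}$ with $\delta_1 \le \beta_1$ yields $\delta \le \beta$.

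For the infinite case $\psi = \icomp \psi_i$, Lemma~\ref{rsl:ptinfC-iff-limit} makes $\beta$ a limit ordinal and Lemma~\ref{rsl:layerpterm-ppsteps} gives $\beta = \ppsteps{\psi}$; in particular $\alpha < \ppsteps{\psi}$. Applying Lemma~\ref{rsl:ordinal-lt-infAdd-then-unique-representation} to $\langle\ppsteps{\psi_i}\rangle$ yields unique $k$ and a residual $\gamma$ with $\alpha = \sum_{i<k}\ppsteps{\psi_i} + \gamma$ and $\gamma < \ppsteps{\psi_k}$. When $\gamma = 0$ I peel off the first $k$ components, taking $\phi$ to be a finite concatenation of $\psi_0,\dots,\psi_{k-1}$ and $\chi = \icomp \psi_{k+i}$; when $\gamma>0$ I also split $\psi_k$ by \ih\ ($\beta_k < \beta$) into $\phi_k \comp \chi_k$ and take $\phi = (\psi_0\comp\cdots\comp\psi_{k-1})\comp\phi_k$, $\chi = \chi_k \comp (\icomp\psi_{k+1+i})$. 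The reassociation $\psi \breqe \phi\comp\chi$ is obtained from the definitional identity $\icomp\psi_i = \psi_0 \comp (\icomp \psi_{1+i})$ applied $k$ (resp.\ $k+1$) times, finitely many uses of \peqassoc, and one use of \eqlcomp\ to insert the split of $\psi_k$. Here the layer lemma pays off: $\ppsteps{\phi} = \alpha$ forces the layer of $\phi$ to be $\alpha + n$ for some finite $n$, and this is $<\beta$ because $\beta$ is a limit with $\alpha < \beta$; while $\chi$ has limit length (an infinite sum of positive step counts), so by Lemma~\ref{rsl:ptinfC-iff-ppsteps-limit} and Lemma~\ref{rsl:layerpterm-ppsteps} its layer equals $\ppsteps{\chi}$, and $\alpha + \ppsteps{\chi} = \ppsteps{\psi} = \beta$ gives $\ppsteps{\chi} \le \beta$.

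The main obstacle I anticipate is not the reassociation but the implicit well-formedness of $\phi\comp\chi$: for it to be a legal proof term the \emph{left} factor $\phi$ must be convergent and must satisfy $tgt(\phi) = src(\chi)$. Convergence of $\phi$ is delicate, since when $\alpha$ is a limit ordinal $\phi$ is itself an infinite concatenation and $\psi$ is not assumed convergent. I plan to dispatch this with an auxiliary induction showing that \emph{every proper initial part of a \ppterm\ is convergent}: any \ppterm\ of successor length is convergent (an easy induction, as it is not an infinite concatenation by Lemma~\ref{rsl:ptinfC-iff-ppsteps-limit}), and any infinite-concatenation prefix reduces, via the decomposition above, to convergence of the components $\psi_i$, which hold by the very formation rule case~\ref{rule:ptinfC}. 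The matching $tgt(\phi)=src(\chi)$ then follows from the coherence of components at position $\alpha$ (Lemmas~\ref{rsl:ppterm-tgt-successor}, \ref{rsl:ppterm-tgt-limit}, \ref{rsl:ppterm-src} and \ref{rsl:ppterm-tgt-src-coherence}), together with source preservation under $\breqe$ — which is immediate, the relation being generated by associativity and congruence rules that all preserve the source term. Checking these side conditions at each reassociation step, rather than the equational manipulation, is where the genuine care is required.
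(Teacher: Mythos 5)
Your proposal is correct and follows essentially the same route as the paper's own proof: the same induction (structural on $\psi$, equivalently on its layer), the same three-way comparison of $\alpha$ with $\ppsteps{\psi_1}$ in the binary case, the same use of Lem.~\ref{rsl:ordinal-lt-infAdd-then-unique-representation} to locate the cut inside an infinite concatenation, and the same layer bookkeeping via Lem.~\ref{rsl:layerpterm-ppsteps}. The only divergences are points of extra care on your side: you treat the residual-$\gamma=0$ subcase explicitly (the paper applies its induction hypothesis to $\psi_n$ without noting that this requires $\alpha'>0$), and you surface the well-formedness side conditions --- though note that your $\phi$ is never itself an infinite concatenation but always a finite concatenation ending in a convergent component, so its convergence follows directly from the formation rule for $\psi$ and the induction hypothesis, without needing the auxiliary ``initial parts are convergent'' lemma.
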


\begin{proof}
We proceed by induction on $\psi$.

If $\psi \in \iSigmaTerms$ or $\psi$ is a one-step, then no $\alpha$ verifies the hypotheses.

\smallskip
Assume $\psi = \psi_1 \comp \psi_2$, so that $\beta = \beta_1 + \beta_2 + 1$ where $\psi_i \in \layerpterm{\beta_i}$ for $i = 1,2$.

\begin{itemize}
\item 
If $\ppsteps{\psi_1} < \alpha$, so that $\alpha = \ppsteps{\psi_1} + \alpha'$, then \ih\ on $\psi_2$ yields the existence of $\phi_2, \chi_2$ satisfying $\psi_2 \breqe \phi_2 \comp \chi_2$, $\ppsteps{\phi_2} = \alpha'$, $\gamma_2 < \beta_2$ and $\delta \leq \beta_2$, where $\phi_2 \in \layerpterm{\gamma_2}$ and $\chi_2 \in \layerpterm{\delta}$.

Therefore, $\psi \breqe \psi_1 \comp (\phi_2 \comp \chi_2) \breqe (\psi_1 \comp \phi_2) \comp \chi_2$ and $\ppsteps{\psi_1 \comp \phi_2} = \ppsteps{\psi_1} + \alpha' = \alpha$. Moreover, $\psi_1 \comp \phi_2 \in \layerpterm{\gamma}$ where $\gamma = \beta_1 + \gamma_2 + 1 < \beta_1 + \beta_2 + 1 = \beta$, and $\delta \leq \beta_2 < \beta$.

\item
If $\ppsteps{\psi_1} = \alpha$ then the result holds trivally.

\item 
If $\ppsteps{\psi_1} > \alpha$, then \ih\ on $\psi_1$ yields $\psi_1 \breqe \phi_1 \comp \chi_1$, $\ppsteps{\phi_1} = \alpha$, $\gamma < \beta_1$ and $\delta_1 \leq \beta_1$, where $\phi_1 \in \layerpterm{\gamma}$ and $\chi_1 \in \layerpterm{\delta_1}$.

Therefore $\psi \breqe (\phi_1 \comp \chi_1) \comp \psi_2 \breqe \phi_1 \comp (\chi_1 \comp \psi_2)$. 
Moreover, $\gamma < \beta_1 < \beta$, and $\chi_1 \comp \psi_2 \in \layerpterm{\delta}$ where $\delta = \delta_1 + \beta_2 + 1 \leq \beta_1 + \beta_2 + 1 = \beta$.
\end{itemize}

\smallskip
Assume $\psi = \icomp \psi_i$, so that $\ppsteps{\psi} = \sum_{i < \omega} \ppsteps{\psi_i}$.
In this case, 
Lem~\ref{rsl:ptinfC-iff-limit} and Lem~\ref{rsl:layerpterm-ppsteps} imply that $\beta$ is a limit ordinal, and therefore $\beta = \ppsteps{\psi}$. 
Moreover, Lem~\ref{rsl:ordinal-lt-infAdd-then-unique-representation} implies $\alpha = \sum_{i < n} \ppsteps{\psi_i} + \alpha'$ where $\alpha' < \ppsteps{\psi_n}$, for some $n$ and $\alpha'$. 
\Ih\ on $\psi_n$ yields $\psi_n \breqe \phi_n \comp \chi_n$ such that $\ppsteps{\phi_n} = \alpha'$; observe that $\ppsteps{\chi_n} \leq \ppsteps{\psi_n}$.
Therefore \\
$\begin{array}{rcl}
\psi & \breqe & ((\psi_0 \comp \ldots \comp \psi_{n-1}) \comp \psi_n) \comp \icomp \psi_{n+1+i} \\
& \breqe & ((\psi_0 \comp \ldots \comp \psi_{n-1}) \comp (\phi_n \comp \chi_n)) \comp \icomp \psi_{n+1+i} \\
& \breqe & ((\psi_0 \comp \ldots \comp \psi_{n-1} \comp \phi_n) \comp \chi_n) \comp \icomp \psi_{n+1+i} \\
& \breqe & (\psi_0 \comp \ldots \comp \psi_{n-1} \comp \phi_n) \comp (\chi_n \comp \icomp \psi_{n+1+i}) 
\end{array}$ \\
where $\ppsteps{\psi_0 \comp \ldots \comp \psi_{n-1} \comp \phi_n} = \sum_{i < n} \ppsteps{\psi_i} + \alpha' = \alpha$.

Moreover, if $\psi_0 \comp \ldots \comp \psi_{n-1} \comp \phi_n \in \layerpterm{\gamma}$, then Lem.~\ref{rsl:layerpterm-ppsteps} implies the existence of some $k < \omega$ such that
$\gamma 
    = \ppsteps{\psi_0} + \ldots + \ppsteps{\psi_{n-1}} + \alpha' + k
		< \ppsteps{\psi_0} + \ldots + \ppsteps{\psi_{n-1}} + \ppsteps{\psi_n} + \omega
		\leq \ppsteps{\psi} = \beta$.
On the other hand, notice that $\chi_n \comp \icomp \psi_{n+1+i}$ is an infinitary concatenation, so that $\chi_n \comp \icomp \psi_{n+1+i} \in \layerpterm{\delta}$ implies $\delta$ to be a limit ordinal; \confer\ Lem.~\ref{rsl:ptinfC-iff-limit}.
%
Therefore, recalling that $\ppsteps{\chi_n} \leq \ppsteps{\psi_n}$, Lem.~\ref{rsl:layerpterm-ppsteps} yields
$\delta 
    = \ppsteps{\chi_n} + \sum_{i < \omega} \ppsteps{\psi_{n+1+i}} 
		\leq \sum_{i < \omega} \ppsteps{\psi_{n+i}} 
		\leq \ppsteps{\psi} = \beta$.
\end{proof}

\begin{lemma}
\label{rsl:deneq-then-same-tgt}
Let $\psi \deneq \phi$, such that both are convergent.
Then $tgt(\psi) = tgt(\phi)$.
\end{lemma}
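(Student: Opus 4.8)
The plan is to reduce the equality of targets to the equality of the two \emph{sequences of component targets}, which is immediate from the hypothesis. Indeed, Dfn.~\ref{dfn:deneq} gives a common value $\alpha \eqdef \ppsteps{\psi} = \ppsteps{\phi}$, and moreover $\redel{\psi}{\beta} = \redel{\phi}{\beta}$ for every $\beta < \alpha$ (the one-step components agree). In particular $tgt(\redel{\psi}{\beta}) = tgt(\redel{\phi}{\beta})$ for all $\beta < \alpha$, which is the single fact driving the whole argument. I would then split on the shape of $\alpha$.

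First I would dispose of the base case $\alpha = 0$. Here neither $\psi$ nor $\phi$ can be a \ppterm, since any \ppterm has $\ppsteps \geq 1$ (a one-step contributes $1$, and binary/infinite concatenations sum positive quantities); hence both are terms in $\iSigmaTerms$, Dfn.~\ref{dfn:deneq} forces $\psi = \phi$, and such trivial proof terms satisfy $tgt(\psi) = \psi$, so the conclusion is immediate. For $\alpha > 0$ both $\psi$ and $\phi$ are \ppterms, and I would treat the successor and limit subcases separately.

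If $\alpha = \alpha' + 1$ is a successor ordinal, I would invoke Lem.~\ref{rsl:ppterm-tgt-successor}, which expresses the target of each \ppterm purely through its last component: $tgt(\psi) = tgt(\redel{\psi}{\alpha'})$ and $tgt(\phi) = tgt(\redel{\phi}{\alpha'})$. Since $\redel{\psi}{\alpha'} = \redel{\phi}{\alpha'}$, these right-hand sides coincide, and we conclude. If $\alpha$ is a limit ordinal, I would use the hypothesis that \emph{both} proof terms are convergent together with Lem.~\ref{rsl:ppterm-tgt-limit}, obtaining $tgt(\psi) = \lim_{\beta \to \alpha} tgt(\redel{\psi}{\beta})$ and $tgt(\phi) = \lim_{\beta \to \alpha} tgt(\redel{\phi}{\beta})$. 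As the two sequences $\langle tgt(\redel{\psi}{\beta}) \rangle_{\beta < \alpha}$ and $\langle tgt(\redel{\phi}{\beta}) \rangle_{\beta < \alpha}$ are pointwise identical, they have the same limit, and we conclude.

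I do not expect any serious obstacle: the statement is essentially a bookkeeping consequence of the component-based characterisations of the target already established (Lem.~\ref{rsl:ppterm-tgt-successor} and Lem.~\ref{rsl:ppterm-tgt-limit}). The only point requiring a word of care is the limit case, where one must appeal to the uniqueness of limits of term sequences — which is guaranteed because $\iSigmaTerms$ with the distance of Dfn.~\ref{dfn:distance} is an (ultra)metric space (Lem.~\ref{rsl:tdist-is-ultrametric}); having identical sequences then forces identical limits. The role of the convergence hypothesis is precisely to license the use of Lem.~\ref{rsl:ppterm-tgt-limit} for both proof terms so that the limits in question are guaranteed to exist.
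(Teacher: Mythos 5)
Your proof is correct and follows exactly the route the paper intends: its own proof simply reads ``Easy, \confer\ Lem.~\ref{rsl:ppterm-tgt-successor} and Lem.~\ref{rsl:ppterm-tgt-limit}'', and your case split (zero/successor/limit) together with the uniqueness of limits in the ultrametric space is precisely the bookkeeping that sketch leaves implicit.
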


\begin{proof}
Easy, \confer\ Lem.~\ref{rsl:ppterm-tgt-successor} and Lem~\ref{rsl:ppterm-tgt-limit}.
\end{proof}

\begin{lemma}
\label{rsl:deneq-binC-right}
Let $\psi \comp \phi \deneq \psi' \comp \phi'$ and $\psi \deneq \psi'$. 
Then $\phi \deneq \phi'$.
\end{lemma}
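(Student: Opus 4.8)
The plan is to reduce the statement to three ingredients: that $\ppsteps{\psi \comp \phi} = \ppsteps{\psi} + \ppsteps{\phi}$ (Dfn.~\ref{dfn:steps}); that the components of a binary concatenation at indices $\geq \ppsteps{\psi}$ are precisely the components of the second argument shifted down by $\ppsteps{\psi}$ (Dfn.~\ref{dfn:ppterm-component}); and that left addition of ordinals is cancellative, i.e. $\gamma + \alpha = \gamma + \beta$ implies $\alpha = \beta$. Before starting I would record a normalisation remark: since $\deneq$ is applied to $\psi \comp \phi$, this term is a \pnpterm, and as its root symbol is $\comp$ it cannot lie in $\iSigmaTerms$, so it is a \ppterm; by the binary clause of Dfn.~\ref{dfn:ppterm} both $\psi$ and $\phi$ are then \ppterms, hence $\ppsteps{\psi}, \ppsteps{\phi} \geq 1$, and likewise for the primed terms. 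This keeps the whole argument inside the second clause of Dfn.~\ref{dfn:deneq} and disposes of the degenerate zero-step clause.

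First I would establish $\ppsteps{\phi} = \ppsteps{\phi'}$. The hypothesis $\psi \comp \phi \deneq \psi' \comp \phi'$ gives $\ppsteps{\psi \comp \phi} = \ppsteps{\psi' \comp \phi'}$, that is $\ppsteps{\psi} + \ppsteps{\phi} = \ppsteps{\psi'} + \ppsteps{\phi'}$ by Dfn.~\ref{dfn:steps}. The hypothesis $\psi \deneq \psi'$ gives $\ppsteps{\psi} = \ppsteps{\psi'}$; write $\kappa$ for this common value. Left-cancelling $\kappa$ in $\kappa + \ppsteps{\phi} = \kappa + \ppsteps{\phi'}$ yields $\ppsteps{\phi} = \ppsteps{\phi'}$.

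Next I would match the components. Fix $\alpha < \ppsteps{\phi} = \ppsteps{\phi'}$ and consider the index $\kappa + \alpha$. Since $\kappa + \alpha \geq \kappa = \ppsteps{\psi}$, Dfn.~\ref{dfn:ppterm-component} gives $\redel{(\psi \comp \phi)}{\kappa + \alpha} = \redel{\phi}{\beta}$ for the unique $\beta$ with $\kappa + \beta = \kappa + \alpha$; cancellation forces $\beta = \alpha$, so $\redel{(\psi \comp \phi)}{\kappa + \alpha} = \redel{\phi}{\alpha}$. By the same computation, using $\ppsteps{\psi'} = \kappa$, one gets $\redel{(\psi' \comp \phi')}{\kappa + \alpha} = \redel{\phi'}{\alpha}$. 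Now $\kappa + \alpha < \kappa + \ppsteps{\phi} = \ppsteps{\psi \comp \phi} = \ppsteps{\psi' \comp \phi'}$, so the hypothesis $\psi \comp \phi \deneq \psi' \comp \phi'$ equates the two left-hand sides, giving $\redel{\phi}{\alpha} = \redel{\phi'}{\alpha}$. Having matched the step counts and all components, Dfn.~\ref{dfn:deneq} yields $\phi \deneq \phi'$.

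The proof is essentially bookkeeping, and the only step demanding genuine care is the repeated use of left-cancellativity of ordinal addition: once to cancel $\kappa$ from the step-count equation, and once to identify the shift index $\beta$ with $\alpha$. This is the standard fact that ordinal addition is strictly increasing, hence injective, in its right argument; it is already implicit in the well-definedness of the shift clause of Dfn.~\ref{dfn:ppterm-component}, so I would simply invoke it as a basic property of ordinals. The only modelling point worth stating explicitly is the reduction to \ppterms\ noted above, which guarantees $\ppsteps{\phi} > 0$ and thereby keeps us away from the equality-of-object-terms clause of $\deneq$.
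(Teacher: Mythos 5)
Your proposal is correct and follows essentially the same route as the paper's proof: observe $\ppsteps{\phi}, \ppsteps{\phi'} > 0$ from the definition of \ppterms, cancel the common left summand $\ppsteps{\psi} = \ppsteps{\psi'}$ to get $\ppsteps{\phi} = \ppsteps{\phi'}$, and then match components via the chain $\redel{\phi}{\alpha} = \redel{(\psi \comp \phi)}{\ppsteps{\psi}+\alpha} = \redel{(\psi' \comp \phi')}{\ppsteps{\psi'}+\alpha} = \redel{\phi'}{\alpha}$. You merely spell out the left-cancellativity of ordinal addition and the shift-index bookkeeping that the paper compresses into ``properties of ordinals.''
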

\begin{proof}
Observe that definition of \ppterms\ implies that $\ppsteps{\phi} > 0$ and $\ppsteps{\phi'} > 0$.
Given $\ppsteps{\psi \comp \phi} = \ppsteps{\psi' \comp \phi'}$ and $\ppsteps{\psi} = \ppsteps{\psi'}$, properties of ordinals yield $\ppsteps{\phi} = \ppsteps{\phi'}$.
We conclude by observing that for any suitable $\alpha$, 
$\redel{\phi}{\alpha} = 
		\redel{(\psi \comp \phi)}{\ppsteps{\psi} + \alpha} = 
		\redel{(\psi' \comp \phi')}{\ppsteps{\psi'} + \alpha} =
		\redel{\phi'}{\alpha}$.
\end{proof}

\begin{proposition}
\label{rsl:breq-then-deneq}
Let $\psi$, $\phi$ be \pnpterms\ such that $\psi \breq \phi$.
Then $\psi \deneq \phi$.
\end{proposition}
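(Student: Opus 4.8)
The plan is to induct on the countable ordinal $\alpha$ for which $\psi \layerbreq{\alpha} \phi$, casing on the last equational-logic rule applied in the derivation, after first recording a few routine facts about $\deneq$ that make the non-limit cases mechanical. First I would check that $\deneq$ is an equivalence relation (immediate from Dfn.~\ref{dfn:deneq}) and a congruence for both formation operators: if $\psi_1 \deneq \phi_1$ and $\psi_2 \deneq \phi_2$ then $\psi_1 \comp \psi_2 \deneq \phi_1 \comp \phi_2$, and if $\psi_i \deneq \phi_i$ for all $i$ then $\icomp \psi_i \deneq \icomp \phi_i$. Both follow by comparing $\ppsteps$ (additive, resp. summative, by Dfn.~\ref{dfn:steps}) and then comparing components via Dfn.~\ref{dfn:ppterm-component}, using Lem.~\ref{rsl:ordinal-lt-infAdd-then-unique-representation} to match the unique decomposition index in the infinite case. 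I would also verify that every instance of \peqassoc\ is a $\deneq$-pair: $\ppsteps{\chi \comp (\xi \comp \gamma)} = \ppsteps{(\chi \comp \xi) \comp \gamma}$ by associativity of ordinal addition, and the three-way split of a component index $\beta$ (below $\ppsteps{\chi}$; between $\ppsteps{\chi}$ and $\ppsteps{\chi} + \ppsteps{\xi}$; above) yields the same one-step on both sides.

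With these preliminaries, the rule cases \eqlrefl, \eqleqn\ (the \peqassoc\ instance), \eqlsymm, \eqltrans, \eqlcomp\ and \eqlinfcomp\ all close by combining \ih\ with the corresponding preliminary fact; here I would note that each premise of these rules sits at a strictly smaller layer, so the induction hypothesis applies, and that a zero-step $\psi$ cannot be a visible side of any of these rules except \eqlrefl, so the $\ppsteps = 0$ clause of $\deneq$ is covered. The only substantial case is \eqllim.

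For \eqllim\ I have, for every $k < \omega$, base-rebracketing derivations $\psi \breqe \chi_k \comp \psi'_k$ and $\phi \breqe \chi_k \comp \phi'_k$ with a \emph{shared} prefix $\chi_k$ and with $\mind{\psi'_k} > k$, $\mind{\phi'_k} > k$, all at layers below $\alpha$. Since $\breqe \subseteq \breq$, the induction hypothesis gives $\psi \deneq \chi_k \comp \psi'_k$ and $\phi \deneq \chi_k \comp \phi'_k$. The crux is an exhaustion claim: every $\beta < \ppsteps{\psi}$ satisfies $\beta < \ppsteps{\chi_k}$ for some $k$. I would prove this by contradiction: if $\beta \geq \ppsteps{\chi_k}$ for all $k$, then $\redel{\psi}{\beta}$ is a component of $\psi'_k$ for every $k$, so by Lem.~\ref{rsl:ppterm-seq-mind} its depth $\sdepth{\redel{\psi}{\beta}} \geq \mind{\psi'_k} > k$ for all $k$, impossible since the depth of a one-step is a fixed natural number; the same holds for $\phi$ via $\mind{\phi'_k} > k$. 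From this claim I conclude $\ppsteps{\psi} = \ppsteps{\phi}$ (each ordinal below one side lies below $\ppsteps{\chi_k} \leq$ the other side, so the two step counts bound each other), and for each $\beta < \ppsteps{\psi}$, choosing $k$ with $\beta < \ppsteps{\chi_k}$ gives $\redel{\psi}{\beta} = \redel{\chi_k}{\beta} = \redel{\phi}{\beta}$, whence $\psi \deneq \phi$.

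The main obstacle I anticipate is precisely this \eqllim\ case: unlike the purely algebraic rules it cannot be dispatched by congruence, and one must exploit the quantitative depth bound $\mind{\psi'_k} > k$ together with the finiteness of a one-step's depth to show that the shared finite prefixes $\chi_k$ eventually cover all of $\psi$ and all of $\phi$. Everything else reduces to bookkeeping on ordinal arithmetic and the component calculus of Dfn.~\ref{dfn:ppterm-component}.
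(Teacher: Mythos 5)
Your proposal is correct and follows essentially the same route as the paper's proof: induction on the layer of the $\breq$ judgement, routine congruence arguments for \eqlrefl, \eqleqn\ (associativity), \eqlsymm, \eqltrans, \eqlcomp\ and \eqlinfcomp, and for \eqllim\ the key exploitation of the premises' depth bounds via Lem.~\ref{rsl:ppterm-seq-mind} together with the finiteness of a one-step's depth, so that every component eventually falls inside the shared prefix $\chi_k$. Your ``exhaustion claim'' merely repackages into one statement what the paper proves as two separate contradiction arguments (first $\ppsteps{\psi} = \ppsteps{\phi}$, then componentwise equality), so the substance is the same.
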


\begin{proof}
We proceed by induction on $\alpha$ where $\psi \layerbreq{\alpha} \phi$.
We analyse the rule used in the last step of the rebracketing equivalence derivation.

\smallskip
For the rules \eqlrefl, \eqlsymm\ and \eqltrans, the result holds immediately.

\smallskip
Assume that the last used rule in the derivation is \eqleqn, so that 
$\psi = (\psi_1 \comp \psi_2) \comp \psi_3$ and $\phi = \psi_1 \comp (\psi_2 \comp \psi_3)$.
In this case we can obtain $\ppsteps{\psi} = \ppsteps{\phi} > 0$ immediately.
Let $\gamma < \ppsteps{\psi}$.
If $\gamma < \ppsteps{\psi_1}$, then $\redel{\psi}{\gamma} = \redel{(\psi_1 \comp \psi_2)}{\gamma} = \redel{\psi_1}{\gamma} = \redel{\phi}{\gamma}$. The other cases, \ie\ $\ppsteps{\psi_1} \leq \gamma < \ppsteps{\psi_1} + \ppsteps{\psi_2}$ and $\ppsteps{\psi_1} + \ppsteps{\psi_2} \leq \gamma$, admit analogous arguments.

\smallskip
Assume that the last used rule is \eqlinfcomp, so that $\psi = \icomp \psi_i$, $\phi = \icomp \phi_i$, and $\psi_n \layerbreq{\beta_n} \phi_n$ where $\beta_n < \alpha$, for all $n < \omega$.
Then \ih\ on each $\beta_n$ implies $\psi_n \deneq \phi_n$. Therefore we obtain $\ppsteps{\psi} = \ppsteps{\phi} > 0$ immediately.
To conclude it is enough to observe, for any $\gamma < \ppsteps{\psi}$, that Lem.~\ref{rsl:ordinal-lt-infAdd-then-unique-representation} implies $\gamma = \sum_{i < n} \ppsteps{\psi_i} + \gamma_0$ where $\gamma_0 < \ppsteps{\psi_n}$, then (given \ih\ on each $\psi_i \layerbreq{\beta_i} \phi_i$) $\redel{\psi}{\gamma} = \redel{\psi_n}{\gamma_0} = \redel{\phi_n}{\gamma_0} = \redel{\phi}{\gamma}$. 

\smallskip
If the last used rule is \eqlcomp, then a similar argument applies.

\medskip
Assume that the rule used in the last derivation step is \eqllim.
Assume for contradiction $\ppsteps{\phi} > \ppsteps{\psi}$, so that the step $\redel{\phi}{\ppsteps{\psi}}$ exists. 
Consider $k \eqdef max(\mind{\redel{\phi}{0}}, \mind{\redel{\phi}{\ppsteps{\psi}}})$. Then there exist $\chi_k$, $\phi'_k$, $\psi'_k$ verifying
$\phi \layerbreqe{\alpha_k} \chi_k \comp \phi'_k$, $\psi \layerbreqe{\beta_k} \chi_k \comp \psi'_k$, $\mind{\phi'_k} > k \geq \mind{\redel{\phi}{\ppsteps{\psi}}}$, $\mind{\psi'_k} > k$, $\alpha > \alpha_k$, and $\alpha > \beta_k$.
Recalling that $\layerbreqe{\gamma} \,\subseteq\, \layerbreq{\gamma}$ for any $\gamma$, we can apply \ih\ to $\alpha_k$ obtaining $\phi \deneq \chi_k \comp \phi'_k$, so that $\redel{\phi}{\ppsteps{\psi}} = \redel{(\chi_k \comp \phi'_k)}{\ppsteps{\psi}}$.
Therefore, assuming $\ppsteps{\psi} = \ppsteps{\chi_k} + \gamma$ would imply $\redel{\phi'_k}{\gamma} = \redel{\phi}{\ppsteps{\psi}}$ contradicting $\mind{\phi'_k} > \mind{\redel{\phi}{\ppsteps{\psi}}}$; \confer\ Lem.~\ref{rsl:ppterm-seq-mind}.
Then $\ppsteps{\psi} < \ppsteps{\chi_k}$.
On the other hand, \ih\ can be applied also to $\beta_k$, yielding $\psi \deneq \chi_k \comp \psi'_k$, and therefore $\ppsteps{\psi} \geq \ppsteps{\chi_k}$, \ie\ a contradiction.
Consequently $\ppsteps{\phi} \leq \ppsteps{\psi}$.
A similar argument yields $\ppsteps{\psi} \leq \ppsteps{\phi}$. Thus $\ppsteps{\psi} = \ppsteps{\phi}$.

Let $\gamma < \ppsteps{\psi}$. Then there exists $\chi$, $\psi'$, $\phi'$ such that
$\psi \layerbreqe{\alpha_0} \chi \comp \psi'$, $\phi \layerbreqe{\beta_0} \chi \comp \phi'$, $\mind{\psi'} > \mind{\redel{\psi}{\gamma}}$, $\mind{\phi'} > \mind{\redel{\psi}{\gamma}}$, $\alpha_0 < \alpha$ and $\beta_0 < \alpha$.
Then \ih\ on $\alpha_0$ and $\beta_0$ yields $\psi \deneq \chi \comp \psi'$ and $\phi \deneq \chi \comp \phi'$, so that $\redel{\psi}{\gamma} = \redel{(\chi \comp \psi')}{\gamma}$ and $\redel{\phi}{\gamma} = \redel{(\chi \comp \phi')}{\gamma}$.
Observing that $\gamma = \ppsteps{\chi} + \gamma_0$ would imply $\redel{\psi}{\gamma} = \redel{\psi'}{\gamma_0}$, and then $\mind{\psi'} \leq \mind{\redel{\psi}{\gamma}}$ (\confer\ Lem.~\ref{rsl:ppterm-seq-mind}) thus producing a contradiction, we obtain $\gamma < \ppsteps{\chi}$. Then $\redel{\psi}{\gamma} = \redel{\chi}{\gamma}$, and also $\redel{\phi}{\gamma} = \redel{\chi}{\gamma}$. Hence $\redel{\psi}{\gamma} = \redel{\phi}{\gamma}$.
\end{proof}

\begin{proposition}
\label{rsl:deneq-then-breq}
Let $\psi$, $\phi$ such that $\psi \deneq \phi$.
Then $\psi \breq \phi$.
\end{proposition}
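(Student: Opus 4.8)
The statement to prove is the converse of Prop.~\ref{rsl:breq-then-deneq}, and the plan is to establish it by transfinite induction on the (unique, \confer\ Lem.~\ref{rsl:pterm-layer-uniqueness}) ordinal $\beta$ such that $\psi \in \layerpterm{\beta}$. Two cases are immediate. If $\ppsteps{\psi} = 0$ then $\psi \in \iSigmaTerms$, so Dfn.~\ref{dfn:deneq} forces $\psi = \phi$ and we conclude by \eqlrefl. If $\psi$ is a one-step then $\ppsteps{\psi} = 1$; since every \ppterm\ has at least one step while binary and infinite concatenations have at least two, $\phi$ must also be a one-step, and as a one-step equals its own $0$-th component (\confer\ Dfn.~\ref{dfn:ppterm-component}) we get $\phi = \redel{\phi}{0} = \redel{\psi}{0} = \psi$, again closing the case by \eqlrefl.

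For the binary case $\psi = \psi_1 \comp \psi_2$, I would set $\alpha = \ppsteps{\psi_1}$, which satisfies $0 < \alpha < \ppsteps{\psi} = \ppsteps{\phi}$. Applying Lem.~\ref{rsl:ppterm-binC-partition} to $\phi$ at $\alpha$ yields $\phi \breqe \phi_1 \comp \phi_2$ with $\ppsteps{\phi_1} = \alpha$. By Prop.~\ref{rsl:breq-then-deneq} this gives $\phi \deneq \phi_1 \comp \phi_2$, so together with $\psi \deneq \phi$ I obtain $\psi_1 \comp \psi_2 \deneq \phi_1 \comp \phi_2$; comparing the first $\alpha$ components yields $\psi_1 \deneq \phi_1$, and Lem.~\ref{rsl:deneq-binC-right} then gives $\psi_2 \deneq \phi_2$. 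Since $\psi_1$ and $\psi_2$ lie in strictly lower layers than $\psi$, the induction hypothesis applies to give $\psi_1 \breq \phi_1$ and $\psi_2 \breq \phi_2$; closing $\comp$ under rebracketing via \eqlcomp\ and then chaining with $\phi_1 \comp \phi_2 \breqe \phi$ through \eqltrans\ produces $\psi \breq \phi$.

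The infinite case $\psi = \icomp \psi_i$ is where the real work lies. Here $\ppsteps{\psi}$ is a limit ordinal, hence so is $\ppsteps{\phi}$, and Lem.~\ref{rsl:ptinfC-iff-ppsteps-limit} forces $\phi = \icomp \phi_j$. The plan is to apply the \eqllim\ rule of Dfn.~\ref{dfn:layer-breq}: for each $k < \omega$ I would exhibit a single prefix $\chi_k$ together with tails $\psi'_k, \phi'_k$ such that $\psi \breqe \chi_k \comp \psi'_k$, $\phi \breqe \chi_k \comp \phi'_k$, and $\mind{\psi'_k} > k$, $\mind{\phi'_k} > k$. On the $\psi$ side this is easy: choosing $n_k$ large enough that the tail $\icomp \psi_{n_k + i}$ has minimal activity depth $> k$ (possible because the components of a convergent infinite concatenation have depths tending to $\omega$), finitely many applications of \peqassoc\ give $\psi \breqe (\psi_0 \comp \cdots \comp \psi_{n_k - 1}) \comp (\icomp \psi_{n_k + i})$ already in the base relation. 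A symmetric appeal to Lem.~\ref{rsl:ppterm-binC-partition} then produces a prefix decomposition of $\phi$ of matching length $\gamma_k = \ppsteps{\psi_0 \comp \cdots \comp \psi_{n_k-1}}$ whose tail has depth $> k$ by the same estimate, since $\psi \deneq \phi$ means both denote the same steps beyond $\gamma_k$.

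The main obstacle is aligning these two decompositions so that the prefix $\chi_k$ is literally the \emph{same} proof term on both sides, as \eqllim\ demands --- and, crucially, so that the two prefix equivalences hold in the \emph{base} relation $\breqe$, since the premises of \eqllim\ may not themselves contain a nested \eqllim. The two prefixes denote the same initial $\gamma_k$ steps, hence are $\deneq$ and, being at lower layers, $\breq$ by the induction hypothesis; but upgrading this to a base equivalence is the delicate point, because the base relation alone cannot regroup an infinite concatenation. I expect to resolve this by choosing the cut $\gamma_k$ so that the common prefix can be brought to a shared, right-nested normal form using only finite associativity, reserving \eqllim\ for the single outer step that absorbs the vanishing difference between the tails. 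Finally, I note that convergence is essential: for divergent $\deneq$ terms --- for instance $\icomp \mu$ versus $\icomp (\mu \comp \mu)$ for a rule $\mu : a \to a$, all of whose steps sit at the root --- no tail ever attains depth $> 0$, \eqllim\ is inapplicable, and the terms are not rebracketing equivalent; so the limit argument, and with it the restriction to convergent \pnpterms, is precisely what makes the infinite case go through.
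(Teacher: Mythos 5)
Your base cases and your binary case are sound --- the binary case is in fact a slight streamlining of the paper's argument, which partitions $\phi_1$ or $\psi_1$ and therefore needs a three-way case split on how $\ppsteps{\psi_1}$ compares to $\ppsteps{\phi_1}$, whereas you partition $\phi$ directly at the cut $\ppsteps{\psi_1}$. The genuine gap is in the infinite case, precisely at the point you flag as ``the main obstacle'': your plan applies \eqllim\ \emph{directly between $\psi$ and $\phi$}, which requires, for every $k$, one and the same prefix $\chi_k$ with $\psi \breqe \chi_k \comp \psi'_k$ and $\phi \breqe \chi_k \comp \phi'_k$ in the \emph{base} relation, and such a common prefix need not exist --- no choice of cut point repairs this. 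Consider $\mu: f(x) \to g(x)$, $\nu: g(x) \to h(x)$, and the convergent, denotationally equivalent \ppterms\ $\psi = \icomp g^i(\mu(f\om)) \comp \icomp h^i(\nu(g\om))$ and $\phi = \icomp \bigl( g^{2i}(\mu(f\om)) \comp g^{2i+1}(\mu(f\om)) \bigr) \comp \icomp h^i(\nu(g\om))$, which group the $\omega$-many $\mu$-steps in singletons and in pairs respectively. Since the first $\nu$-step sits at depth $0$, any admissible premise-prefix $\chi_k$ (whose complementary tail must have $\mindfn > k$) has to contain \emph{all} $\omega$ of the $\mu$-steps. But every term base-reachable from $\psi$ carries that infinite $\mu$-block singleton-grouped, and every term base-reachable from $\phi$ carries it pair-grouped: the base relation has only finite \peqassoc\ instances, which move finitely many binary brackets, and \eqlinfcomp, which rewrites the components of an existing infinite concatenation without ever moving its component boundaries; so base rebracketing can never regroup an infinite concatenation. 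Hence no single $\chi_k$ is reachable from both sides, your hoped-for ``shared, right-nested normal form using only finite associativity'' does not exist, and the direct limit argument cannot be fed its premises. (This regrouping phenomenon is exactly what the paper points to in Section~\ref{sec:pterm-denotation} when it says the equivalence of $\icomp g^i(\mu(f\om))$ and $\icomp g^{2*i}(\mu(f\om)) \comp g^{2*i+1}(\mu(f\om))$ was only obtained \emph{through} \eqllim.)

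The missing idea is an intermediate term. The paper constructs $\chi = \icomp \chi_n$ whose components are carved out of $\psi$'s own material via Lem.~\ref{rsl:ppterm-binC-partition}, aligned so that $\ppsteps{\chi_n} = \ppsteps{\phi_n}$ for all $n$. Then (i) $\psi \breq \chi$ is established by \eqllim\ whose premises genuinely lie in $\breqe$, because both sides of each premise are finite rebracketings of pieces of $\psi$; and (ii) $\chi_n \deneq \phi_n$ with both terms in strictly lower layers, so the induction hypothesis gives $\chi_n \breq \phi_n$ and \eqlinfcomp\ yields $\chi \breq \phi$ --- the uses of \eqllim\ hidden inside these componentwise derivations are legal because they sit under \eqlinfcomp, not inside the premises of another \eqllim. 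Transitivity then closes the proof. In the example above, $\chi$ is just $\psi$ itself, and all of the irreconcilable regrouping of the $\mu$-block is pushed into the componentwise comparison $\chi_0 \deneq \phi_0$, where a fresh, legal application of \eqllim\ is available. Your closing observation about divergence is correct and consistent with the paper: for $\icomp \mu$ versus $\icomp(\mu \comp \mu)$ with $\mu : a \to a$ the statement fails, which is why the blanket convergence assumption announced in Section~\ref{sec:peqence} is genuinely needed, both in your argument and in the paper's.
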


\begin{proof}
We proceed by induction on $\pair{\alpha}{\beta}$ such that $\psi \in \layerpterm{\alpha}$ and $\phi \in \layerpterm{\beta}$.

If $\psi \in \iSigmaTerms$, so that $\ppsteps{\psi} = 0$, then $\psi \deneq \phi$ implies $\psi = \phi$, hence we conclude immediately.

If $\psi$ is a one-step, so that $\ppsteps{\psi} = 1$, then $\psi \deneq \phi$ implies $\psi = \redel{\psi}{0} = \redel{\phi}{0} = \phi$.

\smallskip
Assume $\psi = \psi_1 \comp \psi_2$ and that it is not an infinite concatenation. 
In this case, $\ppsteps{\psi} = \ppsteps{\phi} > 1$ is a successor ordinal, so that $\phi = \phi_1 \comp \phi_2$ and it is neither an infinite concatenation; \confer\ Lem.~\ref{rsl:ptinfC-iff-ppsteps-limit}.
Observe that $\alpha = \alpha_1 + \alpha_2 + 1$ and $\beta = \beta_1 + \beta_2 + 1$, where $\psi_i \in \layerpterm{\alpha_i}$ and $\phi_i \in \layerpterm{\beta_i}$ for $i = 1,2$.
We analyse the different cases arising from the comparison between $\ppsteps{\psi_1}$ and $\ppsteps{\phi_1}$.
\begin{itemize}
\item 
Assume $\ppsteps{\psi_1} < \ppsteps{\phi_1}$.
In this case we apply Lem.~\ref{rsl:ppterm-binC-partition}, obtaining that $\phi_1 \breq \chi_1 \comp \chi_2$ and $\ppsteps{\chi_1} = \ppsteps{\psi_1}$ for some \ppterms\ $\chi_1 \in \layerpterm{\gamma_1}$ and $\chi_2 \in \layerpterm{\gamma_2}$, and moreover, that $\gamma_1 < \beta_1$ and $\gamma_2 \leq \beta_1$. 

Therefore $\phi \ \breq \  (\chi_1 \comp \chi_2) \comp \phi_2 \ \breq \  \chi_1 \comp (\chi_2 \comp \phi_2)$, and hence Prop.~\ref{rsl:breq-then-deneq} and hypotheses yield $\psi = \psi_1 \comp \psi_2 \ \deneq \  \chi_1 \comp (\chi_2 \comp \phi_2) \deneq \phi$.
Observe that for any $\beta < \ppsteps{\psi_1}$, $\redel{\psi_1}{\beta} = \redel{\psi}{\beta} = \redel{\phi}{\beta} = \redel{(\chi_1 \comp (\chi_2 \comp \phi_2))}{\beta} = \redel{\chi_1}{\beta}$; consequently, $\psi_1 \deneq \chi_1$.
In turn, Lem.~\ref{rsl:deneq-binC-right} yields $\psi_2 \deneq \chi_2 \comp \phi_2$.

Observing that $\alpha_i < \alpha$ for $i = 1,2$ suffices to enable the application of \ih\ to both $\psi_1 \deneq \chi_1$ and $\psi_2 \deneq \chi_2 \comp \phi_2$. Therefore, we conclude by \eqlcomp, \eqlsymm\ and \eqltrans.

\item
Assume $\ppsteps{\psi_1} > \ppsteps{\phi_1}$. In this case, an analysis similar to that of the previous case yields $\psi_1 \breqe \chi_1 \comp \chi_2$ such that $\ppsteps{\chi_1} = \ppsteps{\phi_1}$, $\gamma_1 < \alpha_1$ and $\gamma_2 \leq \alpha_1$ where $\chi_i \in \layerpterm{\gamma_i}$ for $i = 1,2$; therefore $\chi_1 \comp (\chi_2 \comp \psi_2) \deneq \psi \deneq \phi = \phi_1 \comp \phi_2$; and consequently $\chi_1 \deneq \phi_1$ and $\chi_2 \comp \psi_2 \deneq \phi_2$.

Observe $\gamma_1 < \alpha_1 < \alpha$.
On the other hand, $\chi_2 \comp \psi_2 \in \layerpterm{\delta}$ where $\delta = \gamma_2 + \alpha_2 + 1 \leq \alpha_1 + \alpha_2 + 1 = \alpha$, and $\beta_2 < \beta$.
Therefore, \ih\ can be applied to both $\chi_1 \deneq \phi_1$ and $\chi_2 \comp \psi_2 \deneq \phi_2$, so that we conclude as in the previous case.

\item
Assume $\ppsteps{\psi_1} = \ppsteps{\phi_1}$. Then a simple analysis of the components of $\psi_1$ and $\phi_1$ yields $\psi_1 \deneq \phi_1$. In turn, this assertion allows to apply Lem.~\ref{rsl:deneq-binC-right} to obtain $\psi_2 \deneq \phi_2$.
Applying \ih\ to both $\psi_i$ we obtain $\psi_1 \breq \phi_1$ and $\psi_2 \breq \phi_2$. Hence we conclude by \eqlcomp.
\end{itemize}

\medskip
Assume $\psi = \icomp \psi_i$.
In this case, a simple argument based on Lem.~\ref{rsl:ptinfC-iff-ppsteps-limit} yields $\phi = \icomp \phi_i$.

As the verification for this case involves a great number of technical details, we describe the idea first.
We define a \ppterm\ $\chi = \icomp \chi_i$ enjoying the following properties: $\psi \breq \chi$, and $\chi_n \deneq \phi_n$ for all $n < \omega$. The \eqllim\ rule is used in the last step of the derivation $\psi \breq \chi$, verifying that the corresponding premises are valid \wrt\ $\breqe$. 
In turn, Lem.~\ref{rsl:layerpterm-ppsteps} allows to apply \ih\ on any $\chi_n$, since $\chi \in \layerpterm{\delta}$ implies $\delta = \ppsteps{\chi} = \ppsteps{\psi} = \alpha$ (\confer\ Prop.~\ref{rsl:breq-then-deneq}).
Therefore we obtain $\chi_n \breq \phi_n$ for all $n < \omega$, implying $\chi \breq \phi$. Then \eqltrans\ yields $\psi \breq \phi$.
A very schematic derivation tree follows:

\smallskip
$
\prooftree
  \[
		\ldots
		\ \ 
		\begin{array}{l}
		\psi \breqe \xi_k \comp \psi' \\	
		\chi \breqe \xi_k \comp \chi'
		\end{array} 
		\ \ 
		\ldots
	\justifies
		\psi \breq \chi
	\using
		\eqllim
	\]
	\quad
	\[
	  \ldots \ \  
		\[
			B_n
	  \justifies
		  \chi_n \breq \phi_n
		\]
		\ \ \ldots
	\justifies
		\chi \breq \phi
	\using
		\eqlinfcomp
	\]
\justifies
	\psi \breq \phi
\using
	\eqltrans
\endprooftree
$

\smallskip\noindent
where we can observe the soundness of the derivation, even if \eqllim\ is applied in some of the $B_n$ derivations.

\medskip
We define $\chi_k$, by induction on $k$, for all $k < \omega$.
We observe that $\sum_{i < k} \ppsteps{\phi_i} < \ppsteps{\phi} = \ppsteps{\psi}$.
Then we define, along with $\chi_k$, two values $p_k$ and $\beta_k$ as follows:
$p_0 \eqdef 0$, $\beta_0 \eqdef 0$, and if $k > 0$, then $p_k$ and $\beta_k$ are the unique (\confer\ Lem.~\ref{rsl:ordinal-lt-infAdd-then-unique-representation}) values verifying $\sum_{i < k} \ppsteps{\phi_i} = \sum_{i < p_k} \ppsteps{\psi_i} + \beta_k$ and $\beta_k < \ppsteps{\psi_{p_k}}$. We also define $p' \eqdef p_{k+1} - 1$.
Simultaneously with the definiton of $\chi_k$, we will verify the following auxiliary assertion: \\[2pt]
\begin{tabular}{@{$\ \ \bullet \ \ $}p{.93\textwidth}}
$\chi_0 \comp \ldots \comp \chi_k \breqe \psi_0 \comp \ldots \comp \psi_{p'}$ if $\beta_{k+1} = 0$; and \\
there exist
$\chi', \xi$ such that 
$\psi_{p_{k+1}} \breqe \chi' \comp \xi$, $\ppsteps{\chi'} = \beta_{k+1}$ and $\chi_0 \comp \ldots \comp \chi_k \breqe \psi_0 \comp \ldots \comp \psi_{p'} \comp \chi'$ (or $\chi_0 \comp \ldots \comp \chi_k \breqe \chi'$ if $p_{k+1} = 0$), if $\beta_{k+1} > 0$.
\end{tabular} \\[2pt]
Therefore, when defining $\chi_n$ for a given $n$, we can consider this assertion to be valid for all $n' < n$.

Let $n < \omega$. 	Several cases must be analysed to define $\chi_n$. 
\begin{itemize}
\item 
Assume that either $n = 0$, \ie\ the base case, or $n > 0$ and $\beta_n = 0$. 
  \begin{itemize}
  \item 
  Assume $p_n = p_{n+1}$, implying $\ppsteps{\phi_n} = \beta_{n+1} > 0$, so that $\ppsteps{\phi_n} < \ppsteps{\psi_{p_n}}$.
  In this case we define $\chi_n$ to be some term verifying $\psi_{p_n} \breqe \chi_n \comp \xi$ and $\ppsteps{\chi_n} = \ppsteps{\phi_n}$; \confer\ Lem.~\ref{rsl:ppterm-binC-partition}.

  \item
  Assume $p_n < p_{n+1}$ and $\beta_{n+1} = 0$, so that $\ppsteps{\phi_n} = \ppsteps{\psi_{p_n}} + \ldots + \ppsteps{\psi_{p'}}$.
  In this case we define $\chi_n \eqdef \psi_{p_n} \comp \ldots \comp \psi_{p'}$. 

  \item
  Assume $p_n < p_{n+1}$ and $\beta_{n+1} > 0$, implying $\ppsteps{\phi_n} = \ppsteps{\psi_{p_n}} + \ldots + \ppsteps{\psi_{p'}} + \beta_{n+1}$.
  We consider some $\chi', \xi$ verifying $\psi_{p_{n+1}} \breqe \chi' \comp \xi$ and $\ppsteps{\chi'} = \beta_{n+1}$; \confer\ Lem.~\ref{rsl:ppterm-binC-partition}.
  Then we define $\chi_n \eqdef \psi_{p_n} \comp \ldots \comp \psi_{p'} \comp \chi'$.
  \end{itemize}
In any case, if $n = 0$ then the auxiliary assertion holds immediately; otherwise, it suffices to apply the same assertion on $n-1$ obtaining $\chi_0 \comp \ldots \comp \chi_{n-1} \breqe \psi_0 \comp \ldots \comp \psi_{p_n-1}$, and then \eqlrefl\ and \eqlcomp.

\item
Assume $\beta_n > 0$.
In this case $n > 0$, then the auxiliary assertion on $n-1$ implies the existence of $\chi'$, $\xi$ verifying $\psi_{p_n} \breqe \chi' \comp \xi$, $\ppsteps{\chi'} = \beta_n$ and $\chi_0 \comp \ldots \comp \chi_{n-1} \breqe \psi_0 \comp \ldots \comp \psi_{p_n-1} \comp \chi'$ (or $\chi_0 \comp \ldots \comp \chi_{n-1} \breqe \chi'$ if $p_n = 0$).
  \begin{itemize}
  \item 
  Assume $p_{n+1} = p_n$, implying $\beta_{n+1} = \beta_n + \ppsteps{\phi_n} < \ppsteps{\psi_{p_n}} = \beta_n + \ppsteps{\xi}$, implying $\ppsteps{\phi_n} < \ppsteps{\xi}$.
  In this case we define $\chi_n$ bo te some term verifying $\xi \breqe \chi_n \comp \xi'$ and $\ppsteps{\chi_n} = \ppsteps{\phi_n}$; \confer\ Lem.~\ref{rsl:ppterm-binC-partition}.
  Observe $\psi_{p_n} \breqe (\chi' \comp \chi_n) \comp \xi'$, $\ppsteps{\chi' \comp \chi_n} = \beta_n + \ppsteps{\phi_n} = \beta_{n+1}$ and $\chi_0 \comp \ldots \comp \chi_n \breqe \psi_0 \comp \ldots \comp \psi_{p_n-1} \comp (\chi' \comp \chi_n)$, then the auxiliary statement holds for $n$; recall $\beta_{n+1} > \beta_n \geq 0$.

  \item
  Assume $p_{n+1} = p_n+1$ and $\beta_{n+1} = 0$, implying $\ppsteps{\psi_{p_n}} = \beta_n + \ppsteps{\phi_n}$.
  Observe $\ppsteps{\xi} = \ppsteps{\phi_n}$.
  We define $\chi_n \eqdef \xi$.
  Then $\chi_0 \comp \ldots \chi_n \breqe \psi_0 \comp \ldots \comp \psi_{p_n-1} \comp \chi' \comp \xi \breqe \psi_0 \comp \ldots \comp \psi_{p_n-1} \comp \psi_{p_n}$, then the auxiliary statement holds for $n$.

  \item
  Assume $p_{n+1} > p_n+1$ and $\beta_{n+1} = 0$, implying $\ppsteps{\phi_n} = \beta' + \ppsteps{\psi_{p_n+1}} + \ldots + \ppsteps{\psi_{p'}}$, where $\ppsteps{\psi_{p_n}} = \beta_n + \beta'$.
  Observe $\ppsteps{\xi} = \beta'$.
  We define $\chi_n \eqdef \xi \comp \psi_{p_n+1} \comp \ldots \comp \psi_{p'}$.
  We verify the auxiliary statement for $n$ similarly to the previous case.

  \item
  Assume $p_{n+1} > p_n$ and $\beta_{n+1} > 0$, implying $\ppsteps{\phi_n} = \beta' + \ppsteps{\psi_{p_n+1}} + \ldots + \ppsteps{\psi_{p'}} + \beta_{n+1}$ (or just $\beta' + \beta_{n+1}$ if $p_{n+1} = p_n+1$), where $\ppsteps{\psi_{p_n}} = \beta_n + \beta'$.
  Observe $\ppsteps{\xi} = \beta'$.
  Let $\chi'', \xi'$ such that $\psi_{p_{n+1}} \breqe \chi'' \comp \xi'$ and $\ppsteps{\chi''} = \beta_{n+1}$.
  We define $\chi_n \eqdef \xi \comp \psi_{p_n+1} \comp \ldots \comp \psi_{p'} \comp \chi''$ (or just $\xi \comp \chi''$ if $p_{n+1} = p_n+1$).
  We verify the auxiliary statement for $n$ similarly to the previous cases.  

  \end{itemize}
\end{itemize}

In turn, a simple analysis of each case yields $\ppsteps{\chi_n} = \ppsteps{\phi_n}$  for each $n < \omega$.
 
\medskip
We verify $\psi \deneq \chi$, since this assertion is used when obtaining $\psi \breq \chi$.
Given $\ppsteps{\chi_n} = \ppsteps{\phi_n}$ for all $n < \omega$, we obtain immediately $\ppsteps{\chi} = \ppsteps{\phi} = \ppsteps{\psi}$ (recall the hypothesis $\psi \deneq \phi$).
Let $\beta < \ppsteps{\chi}$, let $n$ be a natural number verifying $\beta < \sum_{i \leq n} \ppsteps{\chi_i}$ (\confer\ Lem~\ref{rsl:ordinal-lt-infAdd-then-unique-representation}).
Then $\redel{\chi}{\beta} = \redel{(\chi_0 \comp \ldots \comp \chi_n)}{\beta}$.
Observe $\chi_0 \comp \ldots \comp \chi_n \breqe \psi'$ for some $\psi'$ verifying $\psi \breqe \psi' \comp \psi''$, \confer\ the auxiliary assertion in the definition of $\chi_n$, so that $\ppsteps{\psi'} = \sum_{i \leq n} \ppsteps{\chi_i} > \beta$.
Therefore $\redel{\chi}{\beta} = \redel{(\chi_0 \comp \ldots \comp \chi_n)}{\beta} = \redel{\psi'}{\beta} = \redel{\psi}{\beta}$, \confer\ Prop.~\ref{rsl:breq-then-deneq}. Hence $\psi \deneq \chi$.

\smallskip
We verify $\psi \breq \chi$.
Let $k < \omega$, let $p$ such that $p > 0$ and $\mind{\psi_i} > k$ if $i > p$.
Let $n$ be a natural number verifying $\sum_{i \leq n} \ppsteps{\phi_i} > \sum_{i \leq p} \ppsteps{\psi_i}$.
Observe that $p_{n+1} > p$. We analyse the two possible cases of the auxiliary statement in the definition of $\chi_n$; again, $p' \eqdef p_{n+1}-1$.

If $\beta_{n+1} = 0$, then $\chi_0 \comp \ldots \comp \chi_n \breqe \psi_0 \comp \ldots \comp \psi_{p'}$; observe that also $\chi \breqe \chi_0 \comp \ldots \comp \chi_n \comp (\icomp \chi_{n+1+i})$ and $\psi \breqe \psi_0 \comp \ldots \comp \psi_{p'} \comp (\icomp \psi_{p_{n+1}+i})$.
We obtain immediately $\chi \breqe \psi_0 \comp \ldots \comp \psi_{p'} \comp (\icomp \chi_{n+1+i})$ and $\mind{\icomp \psi_{p_{n+1}+i}} > k$, since $p_{n+1} > p$.
Prop.~\ref{rsl:breq-then-deneq} yields $\chi_0 \comp \ldots \comp \chi_n \deneq \psi_0 \comp \ldots \comp \psi_{p'}$, so that Lem.~\ref{rsl:deneq-binC-right} can be applied to obtain $\icomp \chi_{n+1+i} \deneq \icomp \psi_{p_{n+1}+i}$, and therefore $\mind{\icomp \chi_{n+1+i}} > k$, \confer\ Lem.~\ref{rsl:ppterm-seq-mind}.

Otherwise, there exist some $\chi', \xi$ such that $\chi_0 \comp \ldots \comp \chi_n \breqe \psi_0 \comp \ldots \comp \psi_{p'} \comp \chi'$ and $\psi_{p_{n+1}} \breqe \chi' \comp \xi$.
By an argument analogous to that of the previous case, we obtain $\psi \breqe \psi_0 \comp \ldots \comp \psi_{p'} \comp \chi' \comp (\xi \comp \icomp \psi_{{p_{n+1}}+1+i})$, $\chi \breqe \psi_0 \comp \ldots \comp \psi_{p'} \comp \chi' \comp (\icomp \chi_{n+1+i})$, and $\mind{\xi \comp \icomp \psi_{p_{n+1}+1+i}} = \mind{\icomp \chi_{n+1+i}} > k$.

Consequently we can apply \eqllim\ to obtain $\psi \breq \chi$. Observe that the premises of the \eqllim\ application correspond to the $\breqe$ relation, so that the derivation is sound.

\medskip
The only element needed to complete the idea described earlier, and then to conclude the proof, is to obtain $\chi_n \deneq \phi_n$ for all $n$.
We have already obtained $\psi \deneq \chi$, so that the hypothesis $\psi \deneq \phi$ implies $\chi \deneq \phi$.
On the other hand, we have also obtained $\ppsteps{\chi_n} = \ppsteps{\phi_n}$ for all $n$. Then a simple induction on $n$ yields $\chi_n \deneq \phi_n$ for all $n$. Thus we conclude.
\end{proof}

\begin{theorem}
\label{rsl:breq-iff-deneq}
Let $\psi$, $\phi$ be \pnpterms. Then $\psi \breq \phi$ iff $\psi \deneq \phi$.
\end{theorem}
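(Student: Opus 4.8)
The plan is to read the theorem straight off the two propositions that immediately precede it, since between them they already supply both halves of the biconditional. Proposition~\ref{rsl:breq-then-deneq} provides the forward direction, $\psi \breq \phi \Rightarrow \psi \deneq \phi$, and Proposition~\ref{rsl:deneq-then-breq} provides the converse, $\psi \deneq \phi \Rightarrow \psi \breq \phi$. Thus the proof reduces to invoking both and combining them; no further argument is required beyond citing these results for each direction.

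For completeness I would recall the shape of each direction. The forward implication goes by transfinite induction on the layer $\alpha$ with $\psi \layerbreq{\alpha} \phi$, inspecting the last equational-logic rule of the derivation. The \eqlrefl, \eqlsymm\ and \eqltrans\ cases, and the \eqleqn\ case for \peqassoc, are routine because re-bracketing manifestly preserves each component $\redel{\psi}{\gamma}$; the \eqlcomp\ and \eqlinfcomp\ cases follow from the induction hypothesis together with the unique-representation Lemma~\ref{rsl:ordinal-lt-infAdd-then-unique-representation}; and the delicate case is \eqllim, where one must show the step counts agree and each component matches, using that the remainders $\psi'_k,\phi'_k$ satisfy $\mind{\psi'_k} > k$ and $\mind{\phi'_k} > k$ so that their leading components cannot interfere, again via Lemma~\ref{rsl:ppterm-seq-mind}.

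The backward implication is the harder half, and I expect the main obstacle to lie exactly there, in the infinite-concatenation case $\psi = \icomp \psi_i$ and $\phi = \icomp \phi_i$. Here the two terms may be cut into components along incompatible ordinal boundaries, so one cannot simply match $\psi_n$ with $\phi_n$. The strategy is to construct an intermediate \pnpterm\ $\chi = \icomp \chi_i$ obtained by re-cutting $\psi$ precisely at the boundaries $\sum_{i<n}\ppsteps{\phi_i}$, appealing to Lemma~\ref{rsl:ppterm-binC-partition} to split a single $\psi_{p_n}$ whenever a $\phi$-boundary falls inside it, and maintaining an auxiliary invariant that tracks the leftover piece $\xi$. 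One then proves $\psi \breq \chi$ by the \eqllim\ rule (checking that its premises reference only $\breqe$, so no illegal stacking of \eqllim\ occurs) and $\chi \breq \phi$ componentwise through \eqlinfcomp\ and the induction hypothesis, the latter being applicable because $\chi \in \layerpterm{\alpha}$ with $\alpha = \ppsteps{\psi}$ by Lemma~\ref{rsl:layerpterm-ppsteps}. Chaining these with \eqltrans\ closes the case; the successor cases are dispatched by comparing $\ppsteps{\psi_1}$ and $\ppsteps{\phi_1}$ and re-bracketing via Lemma~\ref{rsl:ppterm-binC-partition} and Lemma~\ref{rsl:deneq-binC-right}.
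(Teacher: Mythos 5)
Your proof is exactly the paper's: the theorem is stated there as an immediate corollary of Prop.~\ref{rsl:breq-then-deneq} (the forward direction) and Prop.~\ref{rsl:deneq-then-breq} (the converse), which is precisely your first paragraph. Your recollection of how those two propositions are themselves proved is also faithful to the paper's arguments, but it is not needed for the theorem itself.
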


\begin{proof}
Immediate corollary of Prop.~\ref{rsl:breq-then-deneq} and Prop.~\ref{rsl:deneq-then-breq}.
\end{proof}

\newpage
\section{Compression}
\label{sec:compression}
The compression lemma, \cite{orthogonal-itrs-90, orthogonal-itrs-95, terese, KetemaRTA12} established that the full power of strongly convergent reduction can be achieved considering only reductions having length at most $\omega$, \ie\ the first infinite ordinal.
Formally, the lemma states that for any strongly convergent \redseq\ $t \infredx{\reda} u$, there exists another strongly convergent \redseq\ $t \infredx{\redb} u$ and $\redln{\redb} \leq \omega$.
In \cite{orthogonal-itrs-95} a more precise statement is given: for orthogonal \TRSs, then $\redb$ can be chosen such that it is L\'evy-equivalent (\confer\ \cite{huetLevy91}) to $\reda$.

\doNotIncludeStandardisation{
The aim of this section is to present a novel proof of the property of compression for convergent first-order rewriting, based on the characterisation of \peqence\ given in Section~\ref{sec:peqence}.
}
Given that any convergent \redseq\ can be described by means of a proof term, \confer\ Prop.~\ref{rsl:denotation-existence}, compression can be studied within the framework given by proof terms. 
In this setting, the compression result can be stated as follows: for any convergent proof term (\confer\ Dfn.~\ref{dfn:ppterm}) $\psi$, there exists a \pnpterm\ $\phi$ such that $\psi \peq \phi$ and $\ppsteps{\phi} \leq \omega$.
Observe that the obtained result is more general than the statements present in the referenced literature, in two ways. 
Firstly, the result applies to orthogonal \emph{\redseqs}, even for non-orthogonal \TRSs.
Secondly, the result applies to (the description of) arbitrary contraction activity, independently of whether it is described sequentially. Put in this way, the compression result indicates that any orthogonal contraction activity can be sequentiated in at most $\omega$ steps.

\includeStandardisation{
The compression result is a straightforward consequence of the existence standardisation result, namely Thm.~\ref{rsl:dl-std-peq-existence}.
In this section, an independent proof of compression is given, which does not depend on standardisation. This proof resorts to the \emph{factorisation} results given in Sec.~\ref{sec:factorisation}.
}
\doNotIncludeStandardisation{
This proof resorts to a key technical result, namely the ability of \textbf{factorising} (more precisely, obtaining a factorised version of) any proof term, in a leading part denoting \emph{finite} contraction activity, followed by a tail denoting activity at \emph{arbitrarily big depths}.
The characterisation of \peqence\ shows that the original proof term and its factorised version denote the same contraction activity, while the concatenation symbol included in the signature of proof terms allows to denote the sequential organisation of contraction activity in the factorised version.
%
Therefore, the main auxiliary result for the compression proof is the existence, for any proof term $\psi$ and $n < \omega$, of two proof terms $\chi$ and $\phi$, such that $\psi \peqe \chi \comp \phi$, $\chi$ is a finite \pnpterm, and $\mind{\phi} > n$.

In the following, we will develop the technical work aiming to obtain the factorisation result, and subsequently we will give a statement of the compression lemma based in proof terms and \peqence, and prove it by resorting to factorisation.
}
\newcommand{\collseq}{collapsing sequence}
\newcommand{\collseqs}{collapsing sequences}
\newcommand{\Collseq}{Collapsing sequence}
\newcommand{\Collseqs}{Collapsing sequences}
\newcommand{\posseq}[2]{\langle #1 \rangle_{#2}}

\includeStandardisation{\subsubsection{Factorisation for \imsteps}}
\doNotIncludeStandardisation{\subsection{Factorisation for \imsteps}}
\label{scn:factorisation-imsteps}

In this section, a factorisation result for the particular case of \imsteps\ is stated an proved. The proof is based on the concept of \textbf{collapsing sequence of positions} for an \imstep.
Such a sequence indicates that the contraction activity denoted by the \imstep\ includes a series of reduction steps which can be performed consecutively and at the same position, so that all of these steps, except possibly the last one,  correspond to collapsing rules.

\Ie, considering the rules $\mu: f(x) \to g(x)$, $\rho: i(x) \to x$ and $\rho': j(x) \to x$, the proof term $h(\rho(\rho'(\mu(a))),\mu(b))$ includes a finite collapsing sequence formed by the occurrences of $\rho$ and $\rho'$ plus the leftmost occurrence of $\mu$. This collapsing sequence indicates that a sequentialisation of the activity denoted by this proof term can include up to three consecutive collapsing steps at the same position.

On the other hand, the proof term $\rho\om$ includes an \emph{infinite} collapsing sequence. Observe that this proof term is \emph{not convergent}. 
In the following, a relation between infinite collapsing sequences and non-convergence is shown%
\footnote{We conjecture that, in fact, non-convergence of \imsteps, and therefore non-termination of developments of orthogonal sets of redex occurrences in first-order rewriting, can be fully characterised by means of \collseqs. 
This observation suggests that \imsteps\ could be used as a technical tool to study termination of developments in infinitary rewriting, leading to an approach being alternative to \eg\ the one described in \cite{terese}, Sec.~12.5.
In this work, only the material needed for the factorisation result is developed. Some conjectures follow; further investigation about this subject is left as future work.

Observe that \imsteps\ exist being $\tgtt$-$WN^\infty$ and including infinite \collseqs. \Eg, if we add the rule $\tau: h(x,y) \to y$, then $\tau(\rho\om, a)$ has $a$ as $\tgtt$-normal form.
Intuitively, \collseqs\ prevent an \imstep\ to be $\tgtt$-$WN^\infty$ are those which cannot be erased. Then we state the following conjecture: an \imstep\ is $\tgtt$-$WN^\infty$ iff it does not include any infinite \collseq\ at a \emph{non-erasable} position, where a position $p$ is erasable for $\psi$ iff $p = p_1 i p_2$, $\psi(p_1) = \mu$, and the $i$-th variable in the left-hand side of $\mu$ does not occur in the corresponding right-hand side.}%
, and later exploited in the proof of the factorisation result for \imsteps.

\begin{definition}
\label{dfn:collseq}
Let $\psi$ be an \imstep. A sequence $\posseq{p_i}{i \leq n}$ (resp. $\posseq{p_i}{i < \omega}$) is a \emph{finite} (resp. \emph{infinite}) \emph{\collseq} for $\psi$ iff for all $i < n$ (resp. $i < \omega$), $\psi(p_i) = \mu$ where $\mu: l[x_1, \ldots, x_m] \to x_j$ and $p_{i+1} = p_i \, j$.
\end{definition}

Observe that the length of $\posseq{p_i}{i \leq n}$ is $n + 1$.
Moreover, for any $\posseq{p_i}{i \leq n}$ or $\posseq{p_i}{i < \omega}$, an easy induction (on $k-j$) yields that $j < k < \omega$ implies $p_j < p_k$.

\begin{lemma}
\label{rsl:sub-collseq}
Let $\psi$ be a proof term, $\posseq{p_i}{i \leq n}$ (resp. $\posseq{p_i}{i < \omega}$) a \collseq\ for $\psi$, and $j, k$ such that $j + k \leq n$ (resp $j,k < \omega$).
Then $\posseq{p_{j+i}}{i \leq k}$ is a \collseq\ for $\psi$.
\end{lemma}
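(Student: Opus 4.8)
The plan is to prove the statement by a direct re-indexing argument, unwinding Definition~\ref{dfn:collseq}; no induction is needed, since the defining condition of a \collseq\ is pointwise in the index. First I would introduce the shifted sequence by setting $q_i \eqdef p_{j+i}$, obtaining $\posseq{q_i}{i \leq k}$ in the finite case (resp. $\posseq{q_i}{i < \omega}$ in the infinite case), and then verify that $\posseq{q_i}{\cdot}$ satisfies the defining condition of a \collseq\ for $\psi$.

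For the finite case, let $i < k$. The hypothesis $j + k \leq n$ gives $j + i < j + k \leq n$, hence $j + i < n$, so the defining condition of the original \collseq\ $\posseq{p_i}{i \leq n}$ is available at index $j+i$: there is a collapsing rule $\mu : l[x_1, \ldots, x_m] \to x_c$ with $\psi(p_{j+i}) = \mu$ and $p_{j+i+1} = p_{j+i}\, c$. Translating through the definitions $q_i = p_{j+i}$ and $q_{i+1} = p_{j+i+1}$ yields $\psi(q_i) = \mu$ and $q_{i+1} = q_i\, c$, which is exactly what Definition~\ref{dfn:collseq} requires. As $i < k$ was arbitrary, $\posseq{q_i}{i \leq k}$ is a \collseq\ for $\psi$. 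The infinite case is identical but simpler: for every $i < \omega$ we have $j + i < \omega$, so the defining condition at index $j+i$ holds with no bound to check, and the same translation gives $\psi(q_i) = \mu$ and $q_{i+1} = q_i\, c$.

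I expect no genuine obstacle here beyond careful index bookkeeping. The only subtlety worth flagging is a clash of names: the letter $j$ denotes the shift parameter in the statement of the lemma but the variable index of the collapsing rule in Definition~\ref{dfn:collseq}. To keep the re-indexing transparent I would rename the rule's variable index (written $c$ above) so that the computation $q_{i+1} = q_i\, c$ is not confused with the shift. With that renaming the argument reduces to observing that the predicate ``$\psi(p_m) = \mu$ with $\mu : l \to x_c$ and $p_{m+1} = p_m\, c$'' holds at every index $m = j+i$ in range, which is immediate from its holding at every index $m < n$ (resp. $m < \omega$).
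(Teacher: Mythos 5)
Your proof is correct and matches the paper's approach: the paper disposes of this lemma as an ``easy consequence'' of Definition~\ref{dfn:collseq}, and your re-indexing argument is precisely the bookkeeping that justifies that claim, since the defining condition is pointwise in the index. The renaming of the rule's variable index to avoid the clash with the shift parameter $j$ is a sensible touch, and your handling of both the finite bound $j+i < j+k \leq n$ and the unbounded infinite case is accurate.
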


\begin{proof}
Easy consequence of Dfn.~\ref{dfn:collseq}.
\end{proof}

Notice that Lem.~\ref{rsl:sub-collseq} implies particularly that $\posseq{p_i}{i \leq k}$ is a \collseq\ if $k \leq n$ (resp. $k < \omega$).

For any $\psi$ \imstep\ and $p \in \Pos{\psi}$, we observe that $\langle p \rangle$ is a \collseq\ for $\psi$ whose length is 1. This is an easy \emph{existence} result. A \emph{uniqueness} result for \collseqs\ holds as well, namely:

\begin{lemma}
\label{rsl:collseq-uniqueness}
Let $\psi$ be an \imstep, $p \in \Pos{\psi}$, and $n$ such that $0 < n < \omega$.
Then there is at most one \collseq\ for $\psi$ starting at $p$ and having length $n$.
\end{lemma}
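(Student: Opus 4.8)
The plan is to show that a \collseq\ is entirely determined by its starting position, so that two \collseqs\ for $\psi$ that begin at the same $p$ and have the same length must coincide. The underlying observation is that, once $p_i$ is fixed and the sequence is required to continue past it, Dfn.~\ref{dfn:collseq} leaves no freedom in the choice of $p_{i+1}$: the symbol $\psi(p_i)$ is a single, well-defined element of $\Sigma^R$ (since $\psi$ is a function on its positions), and if $p_i$ occurs as an interior element of a \collseq\ it must carry a rule symbol $\mu$ whose rule is collapsing, say $\mu : l[x_1,\ldots,x_m] \to x_j$; but then the index $j$ is fixed by $\mu$ alone, and the definition forces $p_{i+1} = p_i\,j$.

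Concretely, I would take two \collseqs\ $\posseq{p_i}{}$ and $\posseq{q_i}{}$ for $\psi$, both of length $n$ and both starting at $p$, and prove $p_i = q_i$ for every index $i$ by induction on $i$. The base case is immediate, since $p_0 = q_0 = p$ by assumption (here the hypothesis $0 < n$ guarantees that this first element exists). For the inductive step, suppose $p_i = q_i$ and that $p_{i+1}$ and $q_{i+1}$ are both present in the sequences, so that $p_i$ and $q_i$ are interior elements. By Dfn.~\ref{dfn:collseq} there are collapsing rules with $\psi(p_i) = \mu$, $\mu : l[x_1,\ldots,x_m] \to x_j$, $p_{i+1} = p_i\,j$, and $\psi(q_i) = \mu'$, $\mu' : l'[x_1,\ldots,x_{m'}] \to x_{j'}$, $q_{i+1} = q_i\,j'$. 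Since $p_i = q_i$ we get $\mu = \psi(p_i) = \psi(q_i) = \mu'$, and because a rule symbol denotes exactly one rule we obtain $j = j'$; hence $p_{i+1} = p_i\,j = q_i\,j' = q_{i+1}$. This closes the induction, so the two sequences are equal, establishing that there is at most one \collseq\ starting at $p$ of length $n$.

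I do not expect any genuine obstacle here: the result is a pure determinacy argument, and the only points requiring care are bookkeeping ones. First, one must align the indexing convention, recalling that a \collseq\ written $\posseq{p_i}{i \leq n}$ has $n+1$ elements, so that ``length $n$'' refers to a sequence with $n$ successive positions and the collapsing constraint imposed on its non-final entries. Second, one should note that the lemma asserts only \emph{uniqueness}, not existence: for a given $p$ there may be no \collseq\ of length $n$ at all (for instance when $\psi(p)$ is not a collapsing rule symbol, no continuation past $p$ exists), which is precisely why the statement is phrased as ``at most one''. Neither point affects the core induction above.
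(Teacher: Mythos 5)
Your proof is correct and rests on exactly the same observation as the paper's: since $\psi$ is a function on its positions, the rule symbol $\psi(p_i)$ is uniquely determined, and for a collapsing rule $\mu : l[x_1,\ldots,x_m] \to x_j$ the index $j$ is fixed by $\mu$, forcing $p_{i+1} = p_i\,j$. The only difference is bookkeeping: you run the induction along the index $i$ of the two sequences, whereas the paper inducts on the length $n$ and uses Lem.~\ref{rsl:sub-collseq} to restrict both sequences to their length-$n'$ prefixes before applying the induction hypothesis; your arrangement is marginally more self-contained since it does not need that auxiliary lemma.
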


\begin{proof}
We proceed by induction on $n$.
If $n = 1$ then the result holds immediately since the only suitable sequence is $\langle p \rangle$.

Let $n = n'+1$. Let $\posseq{p_i}{i \leq n'}$ and $\posseq{q_i}{i \leq n'}$ two \collseqs\ for $\psi$, both starting with $p$.
Lem.~\ref{rsl:sub-collseq} implies that both $\posseq{p_i}{i \leq (n'-1)}$ and $\posseq{q_i}{i \leq (n'-1)}$ are \collseqs\ for $\psi$. Then \ih\ on $n'$ implies $p_i = q_i$ if $i < n'$, so that particularly $p_{n'-1} = q_{n'-1}$. Applying Dfn.~\ref{dfn:collseq} on $\posseq{p_i}{i \leq n'}$ and $\posseq{q_i}{i \leq n'}$ yields $\psi(p_{n'-1}) = \psi(q_{n'-1}) = \mu$ such that $\mu: l[x_1, \ldots, x_m] \to x_j$ and $p_{n'} = q_{n'} = p_{n'-1} \, j$.
Thus we conclude.
\end{proof}

\begin{lemma}
\label{rsl:collseq-prefix}
Let $\psi$ be an \imstep, $p \in \Pos{\psi}$, and $n, k < \omega$ (resp $n < \omega$), such that both $\posseq{p_i}{i \leq n}$ and $\posseq{q_i}{i \leq n+k}$ (resp., and $\posseq{q_i}{i < \omega}$) are \collseqs\ for $\psi$ starting with $p$.
Then $i \leq n$ implies $q_i = p_i$.
\end{lemma}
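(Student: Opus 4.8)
The plan is to obtain this directly from the two immediately preceding results, Lem.~\ref{rsl:sub-collseq} and Lem.~\ref{rsl:collseq-uniqueness}, so that no fresh induction is needed; the lemma is essentially a corollary of uniqueness of \collseqs\ once we arrange two sequences of the same length that start at the same position.

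First I would note that both hypothesised \collseqs\ start at $p$, that is $p_0 = q_0 = p$, and that the sequence $\posseq{p_i}{i \leq n}$ has length $n+1$. The second sequence (either $\posseq{q_i}{i \leq n+k}$ or the infinite $\posseq{q_i}{i < \omega}$) is at least this long, so I would apply Lem.~\ref{rsl:sub-collseq} with offset $j = 0$ and window $n$ to conclude that its initial segment $\posseq{q_i}{i \leq n}$ is itself a \collseq\ for $\psi$. This application handles the finite and infinite forms of the second sequence uniformly, since Lem.~\ref{rsl:sub-collseq} is stated for both. The resulting \collseq\ obviously still starts with $q_0 = p$.

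At this point both $\posseq{p_i}{i \leq n}$ and $\posseq{q_i}{i \leq n}$ are \collseqs\ for $\psi$ starting at $p$, each of length $n+1$. Since $n+1 > 0$, Lem.~\ref{rsl:collseq-uniqueness} (applied with its length parameter set to $n+1$) guarantees that there is at most one such \collseq, whence the two sequences coincide and $q_i = p_i$ for all $i \leq n$. I expect no genuine obstacle here: the only care required is the bookkeeping of lengths, namely that a \collseq\ indexed by $i \leq n$ has length $n+1$ (as remarked after Dfn.~\ref{dfn:collseq}), so that the positivity hypothesis of Lem.~\ref{rsl:collseq-uniqueness} is met, and noting that the trivial case $n = 0$ is already covered since then both segments reduce to $\langle p \rangle$.
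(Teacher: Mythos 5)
Your proof is correct and follows exactly the paper's own argument, which simply cites Lem.~\ref{rsl:sub-collseq} and Lem.~\ref{rsl:collseq-uniqueness} as an ``easy consequence''; your write-up is just the explicit version of that one-liner, with the right bookkeeping (restriction to the initial segment via Lem.~\ref{rsl:sub-collseq} with offset $0$, then uniqueness at length $n+1>0$).
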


\begin{proof}
Easy consequence of Lem.~\ref{rsl:sub-collseq} and Lem.~\ref{rsl:collseq-uniqueness}.
\end{proof}

We already remarked that any prefix of an infinite \collseq\ is a \collseq\ as well. Conversely, a sequence of growing \collseqs\ starting at the same position indicates the presence of an infinite \collseq. The following lemma formalises this idea.

\begin{lemma}
\label{rsl:collseq-infinite}
Let $\psi$ be an \imstep\ and $p \in \Pos{\psi}$, such that for any $n < \omega$, there is a \collseq\ for $\psi$ starting at $p$ and having length $n$.
Then there is an infinite \collseq\ for $\psi$ starting at $p$.
\end{lemma}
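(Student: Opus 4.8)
The plan is to assemble the desired infinite \collseq\ as the common extension of the given finite ones, exploiting that the finite \collseqs\ for $\psi$ starting at a fixed position $p$ are \emph{coherent}: by Lem.~\ref{rsl:collseq-prefix}, whenever $\posseq{p_i}{i \leq n}$ and $\posseq{q_i}{i \leq n+k}$ are both \collseqs\ for $\psi$ starting at $p$, they agree on all indices $i \leq n$. Thus there is no branching to resolve (Lem.~\ref{rsl:collseq-uniqueness} already pins down each finite initial segment uniquely), so a K\"onig-style argument is unnecessary and a direct union suffices.

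Concretely, I would define a single sequence $\posseq{p_i}{i < \omega}$ as follows. Fix $i < \omega$. By hypothesis there is a \collseq\ for $\psi$ starting at $p$ of length $i+2$, say $\posseq{q_j}{j \leq i+1}$; set $p_i \eqdef q_i$. To see this is well defined, suppose $\posseq{q'_j}{j \leq m}$ is any other \collseq\ for $\psi$ starting at $p$ with $m \geq i$; then Lem.~\ref{rsl:collseq-prefix} (applied to whichever of the two sequences is the shorter) yields $q'_i = q_i$, so $p_i$ does not depend on the chosen witness.

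It then remains to check that $\posseq{p_i}{i < \omega}$ is indeed an infinite \collseq\ for $\psi$ starting at $p$, \ie\ that it satisfies Dfn.~\ref{dfn:collseq} at every index. I would verify this locally: for a given $i$, pick a \collseq\ $\posseq{q_j}{j \leq i+1}$ for $\psi$ starting at $p$ of length $i+2$ (it exists by hypothesis). By the well-definedness just established, $p_i = q_i$ and $p_{i+1} = q_{i+1}$. Since $\posseq{q_j}{j \leq i+1}$ is a \collseq, Dfn.~\ref{dfn:collseq} gives $\psi(q_i) = \mu$ with $\mu : l[x_1,\ldots,x_m] \to x_j$ and $q_{i+1} = q_i\,j$; transporting this along the two equalities yields $\psi(p_i) = \mu$ and $p_{i+1} = p_i\,j$, which is exactly the required condition at $i$. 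Finally $p_0 = p$ by construction, so the sequence starts at $p$.

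I expect no genuine obstacle here, since all the structural work was done in the preceding lemmas; the only point demanding care is the well-definedness of the $p_i$, which is precisely what Lem.~\ref{rsl:collseq-prefix} is designed to supply, and the purely local (hence finitary) nature of the verification of Dfn.~\ref{dfn:collseq}, which lets a single finite witness of length $i+2$ discharge the condition at index $i$.
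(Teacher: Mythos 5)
Your proposal is correct and follows essentially the same route as the paper: both build the infinite sequence by reading off the $i$-th entry from a long-enough finite \collseq, use Lem.~\ref{rsl:collseq-uniqueness}/Lem.~\ref{rsl:collseq-prefix} to see that this is well defined, and then verify the condition of Dfn.~\ref{dfn:collseq} at each index by transporting it from a finite witness of length $i+2$. The only cosmetic difference is that the paper defines $p_k$ via the unique \collseq\ of length $k+1$ while you use one of length $i+2$, which changes nothing in substance.
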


\begin{proof}
We define the sequence $\posseq{p_i}{i < \omega}$ as follows: for all $k < \omega$, $p_k \eqdef q_k$ where $\posseq{q_i}{i \leq k}$ is the only (\confer\ Lem.~\ref{rsl:collseq-uniqueness}) \collseq\ for $\psi$ starting at $p$ and having length $k+1$.
Let $j < \omega$, and $\posseq{q_i}{i \leq j}$ and $\posseq{q'_i}{i \leq (j+1)}$ the \collseqs\ for $\psi$ starting at $p$ and having lengths $j+1$ and $j+2$ respectively. Observe that Lem.~\ref{rsl:collseq-prefix} implies $p_j = q_j = q'_j$; on the other hand, $p_{j+1} = q'_{j+1}$. Then $\posseq{q'_i}{i \leq (j+1)}$ being a \collseq\ implies that $\psi(p_j) = \psi(q'_j) = \mu$ where $\mu : l[x_1, \ldots, x_m] \to x_i$ and $p_{j+1} = q'_{j+1} = q'_j \, i = p_j \, i$. 
Consequently, $\posseq{p_i}{i < \omega}$ is a \collseq. Thus we conclude.
\end{proof}

\medskip
After this general presentation of \collseqs, we will focus on \collseqs\ starting with $\epsilon$. 
The existence of an infinite \collseq\ starting with $\epsilon$ is invariant \wrt\ partial computation of the target of an \imstep. This implies that an \imstep\ including such a sequence is non-convergent, \ie\ its target cannot be computed, \confer\ Dfn.~\ref{dfn:src-tgt-imstep} and Dfn.~\ref{dfn:imstep-convergence}.

\begin{lemma}
\label{rsl:collseq-infinite-invariant-tgtt}
Let $\psi$ be an \imstep, $\posseq{p_i}{i < \omega}$ a \collseq\ for $\psi$ starting at $\epsilon$, and $\psi \infredxtrs{\reda}{\tgtt} \phi$.
Then there exists some $\posseq{q_i}{i < \omega}$ being a \collseq\ for $\phi$ starting at $\epsilon$.
\end{lemma}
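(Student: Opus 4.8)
The plan is to proceed by transfinite induction on $\redln{\reda}$, isolating a single-step lemma as the crux and handling limits through the finite-approximation principle of Lem.~\ref{rsl:collseq-infinite}. The base case $\redln{\reda}=0$ is trivial since then $\phi=\psi$. For the successor case I would first prove the following one-step statement: if $\psi$ has an infinite \collseq\ $\posseq{p_i}{i<\omega}$ with $p_0=\epsilon$ and $\psi\sstepx{\stepa}\phi$ is a single $\tgtt$-step contracting a redex at position $q$, then $\phi$ has an infinite \collseq\ starting at $\epsilon$. Since $p_0=\epsilon\leq q$ and the spine positions $p_0<p_1<\cdots$ are nested (totally ordered by the prefix order) with unbounded depth, there is a largest index $k$ with $p_k\leq q$.

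Two cases then arise. If $q$ branches off the spine, \ie\ $q=p_k\,c\,q'$ with $c\neq j_k$ where $\psi(p_k)=\mu_k$ has collapsing rule $l\to x_{j_k}$, then $q$ is disjoint from every $p_i$ with $i>k$ and lies strictly below every $p_i$ with $i\leq k$; hence Lem.~\ref{rsl:repl-disj} together with the clause of Dfn.~\ref{dfn:repl} that leaves untouched all positions not below $q$ gives $\phi(p_i)=\psi(p_i)$ for all $i$, so the same sequence is a \collseq\ for $\phi$. If instead $q=p_k$ lies on the spine, then the head symbol there is the collapsing rule symbol $\mu_k$, so the contracted rule must be $\mu_k:l\to x_{j_k}$ and $\phi=\repl{\psi}{\subtat{\psi}{p_{k+1}}}{p_k}$, shifting the content below $p_{k+1}$ up to $p_k$. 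The reindexed sequence $p_0,\ldots,p_{k-1},p_k,p_k j_{k+1},p_k j_{k+1} j_{k+2},\ldots$ is then an infinite \collseq\ for $\phi$ at $\epsilon$, the only verification needed being that each designated-variable index $j_{k+1+m}$ of $\mu_{k+1+m}$ still points to the next position after the upward shift. With the one-step lemma available, the successor case follows by applying the induction hypothesis to $\redupto{\reda}{\alpha}$ (where $\redln{\reda}=\alpha+1$), convergent by Lem.~\ref{rsl:redseq-proper-section-convergent}, and then the one-step lemma to the final step $\redel{\reda}{\alpha}$.

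For the limit case, with $\lambda=\redln{\reda}$ a limit ordinal and $\phi=tgt(\reda)=\lim_{\alpha\to\lambda}tgt(\redel{\reda}{\alpha})$, I would invoke Lem.~\ref{rsl:collseq-infinite}: it suffices to exhibit, for every $m<\omega$, a \collseq\ for $\phi$ of length $m$ starting at $\epsilon$. Such a sequence is determined by $\phi$ restricted to positions of depth $\leq m-1$. By convergence there is $\beta<\lambda$ with $\tdist{tgt(\redupto{\reda}{\beta})}{\phi}<2^{-(m-1)}$, so $\phi$ and $t_\beta:=tgt(\redupto{\reda}{\beta})$ agree at all positions of depth $\leq m-1$ (\confer\ the remark after Dfn.~\ref{dfn:distance}). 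Since $\beta<\lambda$, the induction hypothesis yields an infinite \collseq\ for $t_\beta$ at $\epsilon$, whose length-$m$ prefix is again a \collseq\ (Lem.~\ref{rsl:sub-collseq}); the shallow agreement transports this prefix verbatim to $\phi$. Applying Lem.~\ref{rsl:collseq-infinite} then closes the limit case.

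The main obstacle will be the on-spine subcase of the one-step lemma: contracting a collapsing redex at the root or on the spine genuinely rearranges the term, so this is the one place where finite-depth stability of \collseqs\ fails and an explicit reindexing — with a careful check that the collapsing condition survives the upward shift — is unavoidable. Everything else reduces either to disjointness of the replacement (Lem.~\ref{rsl:repl-disj}) or to the limit and finite-approximation machinery already established for \collseqs.
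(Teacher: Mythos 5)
Your proposal is correct, and its overall architecture coincides with the paper's proof: transfinite induction on $\redln{\reda}$, a trivial base case, and a successor case resolved by exactly the same position dichotomy (your off-spine case $q = p_k\,c\,q'$ with $c \neq j_k$ versus on-spine case $q = p_k$ is the same partition as the paper's $p'_d \disj r$ versus $p'_d = r$, since spine positions satisfy $\posln{p_i}=i$), with the same upward-shift reindexing of the tail of the spine when the contracted redex sits on it. The only real divergence is the limit case. The paper constructs the infinite \collseq\ for $\phi$ \emph{directly}, by a diagonal argument: it picks approximants $\chi_n$ using the depth condition of convergence, applies the \ih\ to each, and then uses the uniqueness of \collseqs\ (Lem.~\ref{rsl:collseq-uniqueness}) together with Lem.~\ref{rsl:redseq-mind-big-src-tgt} and Lem.~\ref{rsl:tdist-is-ultrametric} to show the approximants' \collseqs\ cohere on prefixes, so that the diagonal $q_n \eqdef p^n_n$ is itself an infinite \collseq\ for $\phi$. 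You instead reduce to finite data: for each $m$ you transport a length-$m$ prefix (Lem.~\ref{rsl:sub-collseq}) from an approximant $tgt(\redupto{\reda}{\beta})$ that agrees with $\phi$ up to depth $m-1$, and then invoke the finite-approximation principle Lem.~\ref{rsl:collseq-infinite} once at the end. Your route buys a lighter limit case — no coherence bookkeeping between the approximants' sequences and no appeal to uniqueness — at the cost of producing the infinite sequence non-constructively via Lem.~\ref{rsl:collseq-infinite} rather than exhibiting it explicitly; note that Lem.~\ref{rsl:collseq-infinite} itself rests on Lem.~\ref{rsl:collseq-uniqueness}, so the same ingredient is present in both proofs, just at different places.
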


\begin{proof}
We proceed by transfinite induction on $\redln{\reda}$.
If $\redln{\reda} = 0$, so that $\phi = \psi$, then we conclude immediately.

Assume $\redln{\reda} = \alpha + 1$, so that $\psi \infredxtrs{\reda'}{\tgtt} \chi \sstepxtrs{\stepa}{\tgtt} \phi$ where $\redln{\reda'} = \alpha$; let us say $\stepa = \langle \chi, r, \uln{\mu}, \sigma \rangle$, and define $d \eqdef \sdepth{\stepa} = \posln{r}$, where $\uln{\mu}$ is the rule in $\tgtt$ corresponding to a rule $\mu$ in the object \TRS.
\Ih\ can be applied on $\reda'$, obtaining the existence of $\posseq{p'_i}{i < \omega}$, a \collseq\ for $\chi$ starting at $\epsilon$.
Observe that $\phi = \repl{\chi}{\sigma h}{r}$ where $\mu: l[x_1, \ldots, x_m] \to h$, so that $\uln{\mu} : \mu(x_1, \ldots, x_m) \to h$, implying $\sigma = \set{x_i \eqdef \subtat{\chi}{r \,i}}$.
Notice also that $\posln{p'_n} = n$ for all $n < \omega$, implying $\posln{p'_d} = \posln{r}$.
We consider two cases.
\begin{itemize}
\item 
Assume $p'_d \disj r$. 
Let $n < \omega$. Observe that $n < d$, resp. $n > d$, implies $p'_n < p'_d$, resp. $p'_d < p'_n$. In either case, $r \leq p'_n$ would contradict $p'_d \disj r$, in the former case by transitivity of $<$, in the latter since all prefixes of $p'_n$ form a total order in a tree domain. Hence $r \not\leq p'_n$.
Consequently, for all $n < \omega$, $p'_n \in \Pos{\phi}$ and $\phi(p'_n) = \chi(p'_n)$. 
Thus $\posseq{p'_n}{n < \omega}$ is a \collseq\ for $\phi$.

\item
Assume $p'_d = r$.
In this case, $\mu: l[x_1, \ldots, x_m] \to x_j$ and $p'_{d+1} = p'_d \, j$, so that $\phi = \repl{\chi}{\subtat{\chi}{p'_{d+1}}}{p'_d}$. Observe that for any position $p''$, $\subtat{\phi}{p'_d \, p''} = \subtat{\chi}{p'_{d+1} \, p''}$.

Let $\posseq{q_i}{i < \omega}$ be the sequence defined as follows: \\
$q_n \eqdef \left\{
	\begin{array}{cl}
		p'_n & \textif n \leq d \\
		p'_d p'' \textnormal{ where } p'_{n+1} = p'_{d+1} p'' & \textif n > d
	\end{array}
	\right.$ 

Let $n < \omega$. 
If $n < d$, then $q_n = p'_n < p'_d$, so that $\phi(q_n) = \phi(p'_n) = \chi(p'_n) = \nu$ where $\nu: l[y_1, \ldots, y_m] \to y_i$ and $q_{n+1} = p'_{n+1} = p'_n \,i = q_n \, i$.
Now assume $n \geq d$. Let $p''$ such that $p'_{n+1} = p'_{d+1} p''$, observe that $n = d$ implies $p'' = \epsilon$.
Observe $\chi(p'_{n+1}) = \nu$, $\nu: l[y_1, \ldots, y_m] \to y_i$ and $p'_{n+2} = p'_{n+1} \, i = p'_{d+1} p'' \, i$.
On the other hand, $q_n = p'_d p''$ (if $n = d$, then $q_n = p'_d = p'_d p''$ since in this case $p'' = \epsilon$), $q_{n+1} = p'_d p'' \, i = q_n \, i$, and in turn $\phi(q_n) = \phi(p'_d p'') = \chi(p'_{d+1} p'') = \chi(p'_{n+1}) = \nu$.

Hence $\posseq{q_i}{i < \omega}$ is a \collseq\ for $\phi$. Thus we conclude by observing that $q_0 = p'_0 = \epsilon$.
\end{itemize}

Assume that $\redln{\reda}$ is a limit ordinal.
For any $n < \omega$, we define $\beta_n$, $\chi_n$, $\langle p^n_i \rangle_{i < \omega}$ and $q_n$ as follows: $\beta_n$ is an ordinal such that $\beta_n < \redln{\reda}$ and $\sdepth{\redel{\reda}{\gamma}} > n$ if $\beta_n \leq \gamma < \redln{\reda}$; and $\chi_n$ is the \imstep\ verifying $\psi \infredxtrs{\redupto{\reda}{\beta_n}}{\tgtt} \chi_n \infredxtrs{\redsublt{\reda}{\beta_n}{\redln{\reda}}}{\tgtt} \phi$. Observe that we can assume wlog that $\beta_n \leq \beta_{n+1}$.
In turn, \ih\ on $\redupto{\reda}{\beta_n}$ and Lem.~\ref{rsl:collseq-uniqueness} imply the existence of a unique \collseq\ for $\chi_n$ starting at $\epsilon$; we define $\langle p^n_i \rangle_{i < \omega}$ to be that sequence, and $q_n \eqdef p^n_n$.

Let $n < \omega$.
Then Lem.~\ref{rsl:sub-collseq} implies that $\langle p^n_i \rangle_{i \leq n}$ is a \collseq\ for $\chi_n$.
Moreover, $\beta_n = \beta_{n+1}$ implies $\chi_n = \chi_{n+1}$, and otherwise $\beta_n < \beta_{n+1}$, so that $\psi \infredxtrs{\redupto{\reda}{\beta_n}}{\tgtt} \chi_n \infredxtrs{\redsublt{\reda}{\beta_n}{\beta_{n+1}}}{\tgtt} \chi_{n+1}$ where $\mind{\redsublt{\reda}{\beta_n}{\beta_{n+1}}} > n$. 
Furthermore, $\chi_{n+1} \infredxtrs{\redsublt{\reda}{\beta_{n+1}}{\redln{\reda}}}{\tgtt} \phi$ and $\mind{\redsublt{\reda}{\beta_{n+1}}{\redln{\reda}}} > n$.
Therefore $\tdist{\chi_n}{\chi_{n+1}} < 2^{-n}$ and \\ $\tdist{\chi_{n+1}}{\phi} < 2^{-(n+1)}$ by Lem.~\ref{rsl:redseq-mind-big-src-tgt}; in turn Lem.~\ref{rsl:tdist-is-ultrametric} implies $\tdist{\chi_n}{\phi} < 2^{-n}$.
Then for any $j \leq n$, $\chi_n(p^n_j) = \chi_{n+1}(p^n_j) = \phi(p^n_j)$ since $\posln{p^n_j} = j$. 
Therefore $\langle p^n_i \rangle_{i \leq n}$ is a \collseq\ for $\chi_{n+1}$, so that Lem.~\ref{rsl:collseq-uniqueness} implies $p^n_j = p^{n+1}_j$ if $j \leq n$.
Hence $q_n = p^{n+1}_n$, so that $\phi(q_n) = \chi_{n+1}(q_n) = \nu$ where $\nu: l[x_1, \ldots, x_m] \to x_i$ and $q_{n+1} = p^{n+1}_{n+1} = p^{n+1}_n \, i = q_n \, i$.
Consequently, $\posseq{q_i}{i < \omega}$ is a \collseq\ for $\phi$. Thus we conclude by observing $q_0 = \epsilon$.
\end{proof}

\begin{lemma}
\label{rsl:infinite-collseq-then-tgtt-non-wn}
Let $\psi$ be an \imstep\ such an infinite \collseq\ for $\psi$ starting at $\epsilon$ exists. Then $\psi$ is not $\tgtt$-weakly normalising.
\end{lemma}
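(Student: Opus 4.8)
The plan is to argue by contradiction, leaning entirely on the invariance result just established in Lemma~\ref{rsl:collseq-infinite-invariant-tgtt}. Suppose, towards a contradiction, that $\psi$ were $\tgtt$-weakly normalising. Then by definition there is a convergent $\tgtt$-\redseq\ $\psi \infredxtrs{\reda}{\tgtt} \phi$ whose target $\phi$ is a $\tgtt$-normal form. First I would record that such a $\phi$ cannot contain any rule-symbol occurrences: by Dfn.~\ref{dfn:srct-tgtt} every rule of $\tgtt$ has the form $\mu(x_1, \ldots, x_n) \to r[x_1, \ldots, x_n]$, so any occurrence of a rule symbol $\mu$ in $\phi$ would be the root of a $\tgtt$-redex, contradicting that $\phi$ is a normal form. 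Hence $\phi \in \iSigmaTerms$.

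Next I would apply Lemma~\ref{rsl:collseq-infinite-invariant-tgtt} to the infinite \collseq\ $\posseq{p_i}{i < \omega}$ for $\psi$ starting at $\epsilon$, whose existence is the hypothesis of the statement, together with the reduction $\psi \infredxtrs{\reda}{\tgtt} \phi$. This yields an infinite \collseq\ $\posseq{q_i}{i < \omega}$ for $\phi$ starting at $\epsilon$.

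The contradiction then comes for free. By Dfn.~\ref{dfn:collseq}, already the first element of this sequence requires $\phi(q_0) = \mu$ for some rule symbol $\mu$ (indeed one corresponding to a collapsing rule $\mu: l[x_1, \ldots, x_m] \to x_j$). This is impossible, since we have just seen that $\phi \in \iSigmaTerms$ contains no rule-symbol occurrences whatsoever. Therefore the assumption fails and $\psi$ is not $\tgtt$-weakly normalising. I do not expect any genuine obstacle here: all the real work is carried by the preservation Lemma~\ref{rsl:collseq-infinite-invariant-tgtt}, and the only point requiring a moment's care is the remark that a $\tgtt$-normal form is free of rule symbols, which is immediate from the shape of the $\tgtt$ rules in Dfn.~\ref{dfn:srct-tgtt}.
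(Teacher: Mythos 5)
Your proof is correct and follows essentially the same route as the paper's: both rest entirely on Lemma~\ref{rsl:collseq-infinite-invariant-tgtt} and then observe that a term with an infinite \collseq\ from $\epsilon$ cannot be a $\tgtt$-normal form (the paper phrases this directly for an arbitrary reduct, you phrase it as a contradiction, which is the same argument). Your explicit remark that a $\tgtt$-normal form contains no rule symbols is a fine way to justify the final step, though strictly you only need that $\phi(q_0)=\phi(\epsilon)$ is a rule symbol and hence heads a redex.
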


\begin{proof}
Let $\psi \infredtrs{\tgtt} \phi$. Then Lem.~\ref{rsl:collseq-infinite-invariant-tgtt} implies that an infinite \collseq\ for $\phi$ starting at $\epsilon$ exists, so that $\phi$ is not a $\tgtt$-normal form. Thus we conclude.
\end{proof}

\medskip
On the other hand, the inexistence of arbitrarily large \collseqs\ starting at $\epsilon$ allows a finite $\tgtt$-reduction sequence ending in a proof term having a function symbol at the root. In turn, for any finite $\tgtt$-reduction sequence there is a corresponding finite \pnpterm.

\begin{lemma}
\label{rsl:finite-collseq-then-finite-tgtt-redseq}
Let $\psi$ be an \imstep\ and $n$ verifying $1 < n < \omega$, such that there is no \collseq\ for $\psi$ starting at $\epsilon$ and having length $n$.
Then there exists a $\tgtt$-reduction sequence $\reda$ verifying $\psi \sredxtrs{\reda}{\tgtt} \phi$, $\redln{\reda} < n$, $\sdepth{\redel{\reda}{i}} = 0$ for all $i < \redln{\reda}$, and $\phi(\epsilon) \in \Sigma$.
\end{lemma}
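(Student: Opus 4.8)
The plan is to proceed by induction on $n$, contracting at each stage the collapsing redex sitting at the root and following the chain of ``kept'' arguments downward. The guiding idea is that a collapsing sequence of length $n$ starting at $\epsilon$ (\refdfn{collseq}) records exactly a run of consecutive root collapses; ruling one out therefore bounds how many root $\tgtt$-steps we may perform before necessarily exposing a function symbol or a non-collapsing rule symbol at the root. For the base case $n = 2$, the hypothesis that no collapsing sequence of length $2$ starts at $\epsilon$ says precisely that $\psi(\epsilon)$ is not a collapsing rule symbol, since a length-$2$ sequence $\langle \epsilon, p_1 \rangle$ would force $\psi(\epsilon) = \mu$ with $\mu : l \to x_j$. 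Hence either $\psi(\epsilon) \in \Sigma$, and I take $\reda$ to be the empty $\tgtt$-reduction with $\phi = \psi$ (so $\redln{\reda} = 0 < 2$); or $\psi(\epsilon) = \mu$ with $\mu : l \to r$ and $r \notin \thevar$, and the single root step $\psi \ssteptrs{\tgtt} r[\psi_1, \ldots, \psi_m] =: \phi$ works, because $\redln{\reda} = 1 < 2$, the step has depth $0$, and $\phi(\epsilon) = r(\epsilon) \in \Sigma$ by \refprop{term-then-intuitive-notation}.

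For the inductive step I write $n = n' + 1$ with $n' \geq 2$ and analyse $\psi(\epsilon)$. If $\psi(\epsilon) \in \Sigma$, or if $\psi(\epsilon)$ is a non-collapsing rule symbol, I conclude exactly as in the base case (the empty reduction, respectively a single root step of length $1 < n$, using $n \geq 3$). The only remaining possibility is $\psi(\epsilon) = \mu$ with $\mu : l[x_1, \ldots, x_m] \to x_j$ collapsing; then the single root $\tgtt$-step $\psi \ssteptrs{\tgtt} \subtat{\psi}{j} =: \psi'$ is available, and I aim to apply the induction hypothesis to $\psi'$ and $n'$.

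The crux — and the step I expect to require the most care — is verifying that no collapsing sequence of length $n'$ starts at $\epsilon$ in $\psi'$. I argue contrapositively: given a collapsing sequence $\langle q_0, \ldots, q_{n'-1} \rangle$ for $\psi'$ with $q_0 = \epsilon$, I form $\langle \epsilon, j q_0, j q_1, \ldots, j q_{n'-1} \rangle$ and check, using $\psi'(q) = \subtat{\psi}{j}(q) = \psi(j q)$ (\refdfn{subtat}), that it is a collapsing sequence of length $n' + 1 = n$ for $\psi$ starting at $\epsilon$. The first link is witnessed by $\mu : l \to x_j$ with $p_1 = j$; and for $i \geq 1$ the link at $j q_{i-1}$ is inherited from the link at $q_{i-1}$ in $\psi'$, since $\psi(j q_{i-1}) = \psi'(q_{i-1}) = \nu$ with $\nu : l' \to x_{j'}$ and $q_i = q_{i-1} j'$ give $p_{i+1} = j q_i = (j q_{i-1})\, j'$. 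This contradicts the hypothesis on $\psi$, so no such sequence for $\psi'$ exists.

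Consequently the induction hypothesis (legitimate because $n' \geq 2$, forced by $n \geq 3$) yields $\reda'$ with $\psi' \sredxtrs{\reda'}{\tgtt} \phi$, $\redln{\reda'} < n'$, all steps at depth $0$, and $\phi(\epsilon) \in \Sigma$. Prepending the root step gives the composite $\psi \ssteptrs{\tgtt} \psi' \sredxtrs{\reda'}{\tgtt} \phi$, which is the desired $\reda$: its first step is at the root and the steps of $\reda'$ remain at depth $0$ because the root collapse replaces the whole term by $\subtat{\psi}{j}$, so the continuation genuinely operates at the new root; $\phi(\epsilon) \in \Sigma$ is inherited; and $\redln{\reda} = 1 + \redln{\reda'} \leq n' = n - 1 < n$. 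The main pitfalls are purely bookkeeping: the index shift in the lifted collapsing sequence, and confirming the depth-$0$ invariant survives the composition — both of which the preceding computation settles.
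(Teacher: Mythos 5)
Your proof is correct and follows essentially the same route as the paper's: induction on $n$, case analysis on the root symbol, a single root $\tgtt$-step in the collapsing case, lifting a hypothetical \collseq\ of $\subtat{\psi}{j}$ to one of $\psi$ (one element longer) to contradict the hypothesis, and then prepending the root step to the reduction sequence obtained from the induction hypothesis. If anything, your index bookkeeping in the lifting argument is slightly more careful than the paper's, which writes the lifted sequence with an off-by-one in its index bound; the substance is identical.
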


\begin{proof}
We proceed by induction on $n$.

Assume $n = 2$. If $\psi(\epsilon) \in \Sigma$ then we conclude immediately.
Otherwise $\psi(\epsilon) = \mu$ where $\mu : l \to f(t_1, \ldots, t_k)$, so that the corresponding rule in $\tgtt$ is $\uln{\mu} : \mu(x_1, \ldots, x_m) \to f(t_1, \ldots, t_k)$, and therefore $\psi \sstepxtrs{(\epsilon, \uln{\mu})}{\tgtt} f(t'_1, \ldots, t'_k)$; thus we conclude by taking $\reda \eqdef \langle (\epsilon, \uln{\mu}) \rangle$.

Assume $n = n' + 1$ and $1 < n' < \omega$. 
If $\psi(\epsilon) \in \Sigma$ or $\psi(\epsilon) = \mu$, $\mu: l \to h$ and $h \notin \thevar$, then the argument of the previous case allows to conclude.
Otherwise, \ie\ if $\psi(\epsilon) = \mu$ and $\mu: l[x_1, \ldots, x_m] \to x_k$, then the corresponding rule in $\tgtt$ is $\uln{\mu} : \mu(x_1, \ldots, x_m) \to x_k$, implying that $\psi \sstepxtrs{(\epsilon, \uln{\mu})}{\tgtt} \subtatnarrowleft{\psi}{k}$.
Observe that $\posseq{p_i}{i \leq n'}$ being a \collseq\ for $\subtat{\psi}{k}$ starting at $\epsilon$ would imply $(\langle \epsilon \rangle; \posseq{k \,p_i}{i \leq n'})$ to be a \collseq\ for $\psi$ having length $n$, thus contradicting the lemma hypotheses.
Indeed, if we define $\posseq{q_i}{i \leq n}$ as the given sequence for $\psi$, then $q_0 = \epsilon$ and $q_1 = k$, so that the condition on \collseqs\ holds for $j = 0$. If $0 < j < n$, then $q_j = k \, p_{j-1}$, so that $\psi(q_j) = \subtatnarrow{\psi}{k}(p_{j-1}) = \nu$ where $\nu : l[y_1, \ldots, y_m] \to y_i$ and $p_j = p_{j-1} \, i$, implying $q_{j+1} = k \, p_j = k \, p_{j-1} \, i = q_j \, i$.

Therefore \ih\ can be applied to $\subtatnarrowleft{\psi}{k}$, yielding the existence of a reduction sequence $\reda'$ verifying $\subtatnarrowleft{\psi}{k} \sredxtrs{\reda'}{\tgtt} \phi$, $\redln{\reda'} < n'$, $\sdepth{\redel{\reda'}{i}} = 0$ for all $i < n'$, and $\phi(\epsilon) \in \Sigma$.
Thus we conclude by taking $\reda \eqdef (\epsilon,\uln{\mu}); \reda'$.
\end{proof}

\begin{lemma}
\label{rsl:tgtt-step-to-leading-one-step}
Let $\psi$ be an \imstep, and $\psi \sstepxtrs{\stepa}{\tgtt} \phi$. Then there exists a one-step $\chi$ such that $\psi \peqe \chi \comp \phi$ and $\sdepth{\chi} = \sdepth{\stepa}$.
\end{lemma}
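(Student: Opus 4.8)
The plan is to prove the statement by induction on the redex position $p = \RPos{\stepa}$, i.e.\ by structural induction on $\psi$ following the path from its root to $p$. Throughout I would keep track of three obligations: that $\chi \comp \phi$ is a well-formed proof term (which, since every one-step is convergent, reduces to checking $tgt(\chi) = src(\phi)$), that $\psi \peqe \chi \comp \phi$, and that $\sdepth{\chi} = \sdepth{\stepa}$. A useful preliminary observation is that the single $\tgtt$-step $\psi \sstepxtrs{\stepa}{\tgtt} \phi$, where $\psi(p) = \mu$ for an object rule $\mu : l \to r$ and $\subtat{\psi}{p} = \mu(\psi_1, \ldots, \psi_m)$, produces $\phi = \repl{\psi}{r[\psi_1, \ldots, \psi_m]}{p}$; taking $src_T$-normal forms this corresponds to an object reduction $src(\psi) \to src(\phi)$ contracting $\mu$. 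This identity is what pins down $tgt(\chi) = src(\phi)$ for the candidate one-steps built below.

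For the base case $p = \epsilon$ I would take $\chi \eqdef \mu(src(\psi_1), \ldots, src(\psi_m))$, which is a one-step with $\sdepth{\chi} = 0 = \sdepth{\stepa}$, and invoke the $\peqoutin$ equation directly: $\psi = \mu(\psi_1, \ldots, \psi_m) \peqe \mu(src(\psi_1), \ldots, src(\psi_m)) \comp r[\psi_1, \ldots, \psi_m] = \chi \comp \phi$. For the function-symbol case $\psi = f(\psi_1, \ldots, \psi_k)$ with $f \in \Sigma$ and $p = i p'$, the step projects to a $\tgtt$-step on $\psi_i$; the \ih\ yields a one-step $\chi_i$ with $\psi_i \peqe \chi_i \comp \phi_i$ and $\sdepth{\chi_i} = \sdepth{\stepa} - 1$. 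I would then set $\chi \eqdef f(src(\psi_1), \ldots, \chi_i, \ldots, src(\psi_k))$, which is again a one-step, now with $\sdepth{\chi} = 1 + \sdepth{\chi_i} = \sdepth{\stepa}$, and derive $\psi \peqe \chi \comp \phi$ by combining the congruence rule $\eqlfun$ (using $\psi_i \peqe \chi_i \comp \phi_i$ and $\psi_j \peqe src(\psi_j) \comp \psi_j$ from $\peqidleft$ for $j \neq i$) with the structural equation $\peqstruct$, i.e.\ Lem.~\ref{rsl:struct-ctx}; the equalities $src(\chi) = src(\psi)$ and $tgt(\chi) = src(\phi)$ follow componentwise from the \ih.

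The genuinely hard case --- and the one I expect to be the main obstacle --- is the rule-symbol case $\psi = \nu(\psi_1, \ldots, \psi_k)$ with $\nu$ a rule symbol and $p = i p'$ strictly inside an argument. Here there is no structural equation for rule symbols analogous to $\peqstruct$, so the leading one-step cannot simply inherit $\nu$ at the root (that would give $\chi$ two rule symbols). The plan is instead to unfold $\nu$ using $\peqinout$, rewriting $\psi \peqe l_\nu[\psi_1, \ldots, \psi_k] \comp \nu(tgt(\psi_1), \ldots, tgt(\psi_k))$, so that the target redex now sits inside the \emph{object} context $l_\nu$, after which the argument becomes structural again and the previous cases (suitably re-bracketed via $\peqassoc$) apply to the left factor. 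The delicate point is the depth bookkeeping: after unfolding, the subterm $\psi_i$ lands at the position of its variable in $l_\nu$, whose depth in general exceeds one, whereas $\sdepth{\stepa} = \posln{p}$ counts the outer $\nu$ as a single level. Reconciling $\sdepth{\chi}$ with $\sdepth{\stepa}$ therefore requires a careful comparison between depth measured in the $\Sigma^R$-representation of $\psi$ and depth measured after the $\peqinout$ unfolding, and this is where the proof must either exploit a special feature of the intended application or refine the depth claim; I would isolate this comparison as a separate lemma.

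Finally, I would discharge the side-conditions attached to $\peqinout$ (convergence of the arguments $\psi_j$, needed for the equation instance to be legal in the sense of Remark~\ref{rmk:peqinout-restriction}) by noting that one-steps are always convergent and that the arguments appearing in the relevant decompositions are convergent whenever $\phi$ is, appealing to Lem.~\ref{rsl:rulesymbol-convergence} and Lem.~\ref{rsl:peqe-soundness-convergent}; collecting the three cases then completes the induction.
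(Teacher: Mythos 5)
Your base case and your function-symbol case are, word for word, the paper's proof: it proceeds by induction on $\sdepth{\stepa}$; at the root it takes $\chi \eqdef \mu(src(\psi_1), \ldots, src(\psi_m))$ and applies \peqoutin\ once; one level below a function symbol it takes $\chi \eqdef f(src(\psi_1), \ldots, \chi_i, \ldots, src(\psi_m))$, justified from the \ih\ via \peqidleft, the congruence rule \eqlfun\ and \peqstruct, with the same depth computation $\sdepth{\chi} = 1 + \sdepth{\chi_i} = \sdepth{\stepa}$.

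The rule-symbol case, which you single out as the main obstacle and leave open, is not handled by the paper either: its inductive case silently writes $\psi = f(\psi_1, \ldots, \psi_m)$, and the construction above yields a one-step only when $f \in \Sigma$ (under a rule symbol, the candidate $\chi$ would contain two rule-symbol occurrences, and Dfn.~\ref{dfn:layer-peqe} has no analogue of \peqstruct\ for rule symbols). Moreover, your suspicion about the depth bookkeeping is correct in the strongest sense: the statement is false in that case, so no proof can close the gap. Take the left-linear rules $\nu: g(g'(x)) \to k(x)$ and $\mu: f(x) \to h(x)$, and $\psi \eqdef \nu(\mu(a))$, with $\stepa$ the $\tgtt$-step at position $1$, so that $\phi = \nu(h(a))$ and $\sdepth{\stepa} = 1$. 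All proof terms involved are convergent, so Lem.~\ref{rsl:peq-then-same-src-mind-tgt} forces $src(\chi) = src(\psi) = g(g'(f(a)))$, and well-formedness of $\chi \comp \phi$ forces $tgt(\chi) = src(\phi) = g(g'(h(a)))$; the only one-step with this source and target is $g(g'(\mu(a)))$, whose depth is $2 \neq \sdepth{\stepa}$. Existence does survive, exactly along your proposed route --- \peqinout\ gives $\nu(\mu(a)) \peqe g(g'(\mu(a))) \comp \nu(h(a))$ --- but the depth inflates by the depth of the corresponding variable in the left-hand side of the enclosing rule symbol, so the equality $\sdepth{\chi} = \sdepth{\stepa}$ cannot be recovered.

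Consequently the lemma must either be weakened (e.g.\ to $\sdepth{\chi} \geq \sdepth{\stepa}$) or restricted to steps $\stepa$ such that every proper prefix of $\RPos{\stepa}$ carries a function symbol. Two further remarks. Your discharge of the \peqinout\ side condition is too optimistic: Remark~\ref{rmk:peqinout-restriction} demands convergence of all the arguments $\psi_j$, and neither $\psi$ nor $\phi$ is assumed convergent in the statement, so that hypothesis would have to be added (it does hold in the intended application). Finally, the damage downstream is nil: the lemma is used only through Lem.~\ref{rsl:tgtt-redseq-to-leading-pnpterm} inside Lem.~\ref{rsl:factorisation-imstep}, where every step of the $\tgtt$-reduction has depth $0$, so only the root case --- which both you and the paper prove correctly --- is ever invoked.
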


\begin{proof}
We proceed by induction on $\sdepth{\stepa}$.

Assume $\stepa = (\epsilon, \uln{\mu})$, say $\mu: l[x_1, \ldots, x_m] \to h[x_1, \ldots, x_m]$ so that the corresponding rule in $\tgtt$ is $\uln{\mu} : \mu(x_1, \ldots, x_m) \to h[x_1, \ldots, x_m]$. Therefore $\psi = \mu(\psi_1, \ldots, \psi_m)$ and $\phi = h[\psi_1, \ldots, \psi_m]$.
We take $\chi \eqdef \mu(src(\psi_1), \ldots, src(\psi_m))$.
Then \peqoutin\ yields exactly $\psi \peqe \chi \comp \phi$. Thus we conclude.

Assume $\stepa = (ip, \uln{\mu})$.
In this case, $\psi = f(\psi_1, \ldots, \psi_i, \ldots, \psi_m)$, $\phi = f(\psi_1, \ldots, \phi_i, \ldots, \psi_m)$, and $\psi_i \sstepxtrs{(p,\uln{\mu})}{\tgtt} \phi_i$. Then \ih\ on $(p, \uln{\mu})$ implies $\psi_i \peqe \chi_i \comp \phi_i$ where $\chi_i$ is a one-step verifying $\sdepth{\chi_i} = \posln{p}$.
We take $\chi \eqdef f(src(\psi_1), \ldots, \chi_i, \ldots, src(\psi_m))$.
Observe that for any $j \neq i$, \peqidleft\ implies $\psi_j \peqe src(\psi_j) \comp \psi_j$, so that \\
\hspace*{1cm}
$\begin{array}{rcl}
\psi & \peqe & 
f(src(\psi_1) \comp \psi_1, \ldots, \chi_i \comp \phi_i, \ldots, src(\psi_m) \comp \phi_m) \\
& \peqe &
f(src(\psi_1), \ldots, \chi_i, \ldots, src(\psi_m)) \comp
	f(\psi_1, \ldots, \phi_i, \ldots, \psi_m) \\
& = & \chi \comp \phi	
\end{array}
$ \\[2pt]
Thus we conclude by noticing that $\sdepth{\chi} = \posln{p} + 1 = \sdepth{\stepa}$.
\end{proof}

\begin{lemma}
\label{rsl:tgtt-redseq-to-leading-pnpterm}
Let $\psi$ be an \imstep\ and $\psi \sredxtrs{\reda}{\tgtt} \phi$.
Then there exists a finite \pnpterm\ $\chi$ such that $\psi \peqe \chi \comp \phi$, $\ppsteps{\chi} = \redln{\reda}$, and $\sdepth{\redel{\chi}{i}} = \sdepth{\redel{\reda}{i}}$ for all $i < \ppsteps{\chi}$.
\end{lemma}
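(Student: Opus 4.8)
The plan is to prove the statement by a straightforward (numeral) induction on the finite length $\redln{\reda}$, using the single-step result Lem.~\ref{rsl:tgtt-step-to-leading-one-step} as the engine and peeling off the \emph{first} step of $\reda$ at each inductive step. The equational core is just the combination of Lem.~\ref{rsl:tgtt-step-to-leading-one-step} with \peqidleft, \peqassoc, and compatibility of $\peqe$ with contexts (Lem.~\ref{rsl:peqe-compatible-ctx}); the bookkeeping on $\ppsteps{\cdot}$ and component depths is handled through Dfn.~\ref{dfn:steps} and Dfn.~\ref{dfn:ppterm-component}.

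For the base cases I would distinguish $\redln{\reda} = 0$ and $\redln{\reda} = 1$. If $\redln{\reda} = 0$, then $\phi = \psi$ and I take $\chi \eqdef src(\psi)$, a \pnpterm\ with $\ppsteps{\chi} = 0$; the equivalence $\psi \peqe src(\psi) \comp \psi = \chi \comp \phi$ is exactly the \peqidleft\ equation, and the depth condition is vacuous. If $\redln{\reda} = 1$, i.e.\ $\psi \sstepxtrs{\redel{\reda}{0}}{\tgtt} \phi$, then Lem.~\ref{rsl:tgtt-step-to-leading-one-step} already delivers a one-step $\chi$ with $\psi \peqe \chi \comp \phi$ and $\sdepth{\chi} = \sdepth{\redel{\reda}{0}}$; since $\redel{\chi}{0} = \chi$, the requirements $\ppsteps{\chi} = 1 = \redln{\reda}$ and $\sdepth{\redel{\chi}{0}} = \sdepth{\redel{\reda}{0}}$ all hold.

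For the inductive step, with $\redln{\reda} = N+1$ and $N \geq 1$, I would decompose $\reda$ into its first step followed by the rest, writing $\psi \sstepxtrs{\redel{\reda}{0}}{\tgtt} \psi_1 \sredxtrs{\reda'}{\tgtt} \phi$ with $\redln{\reda'} = N$ and $\redel{\reda'}{j} = \redel{\reda}{1+j}$. Applying Lem.~\ref{rsl:tgtt-step-to-leading-one-step} to the first step yields a one-step $\chi_0$ with $\psi \peqe \chi_0 \comp \psi_1$ and $\sdepth{\chi_0} = \sdepth{\redel{\reda}{0}}$; note $\psi_1$ is again an \imstep, since $\tgtt$-reduction introduces no $\comp$ symbols. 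The \ih\ applied to $\psi_1 \sredxtrs{\reda'}{\tgtt} \phi$ gives a finite \pnpterm\ $\chi'$ with $\psi_1 \peqe \chi' \comp \phi$, $\ppsteps{\chi'} = N$, and $\sdepth{\redel{\chi'}{j}} = \sdepth{\redel{\reda'}{j}}$ for $j < N$. I then set $\chi \eqdef \chi_0 \comp \chi'$: since $N \geq 1$, $\chi'$ is a genuine \ppterm, so $\chi_0 \comp \chi'$ is a \ppterm\ by case~(\ref{rule:ptbinC}). Combining $\psi \peqe \chi_0 \comp \psi_1$ with $\chi_0 \comp \psi_1 \peqe \chi_0 \comp (\chi' \comp \phi)$ (Lem.~\ref{rsl:peqe-compatible-ctx}, context $\chi_0 \comp \Box$) and with \peqassoc\ gives $\psi \peqe (\chi_0 \comp \chi') \comp \phi = \chi \comp \phi$. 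The step count is $\ppsteps{\chi} = 1 + N = \redln{\reda}$, and the depth equalities follow because $\redel{\chi}{0} = \chi_0$ and $\redel{\chi}{1+j} = \redel{\chi'}{j}$, together with the \ih\ and $\redel{\reda}{1+j} = \redel{\reda'}{j}$.

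The main obstacle is not the equational manipulation but keeping $\chi$ inside the class of \emph{\ppterms}: a binary composition $\chi_0 \comp \chi'$ qualifies as a \ppterm\ only when \emph{both} operands are \ppterms, whereas a zero-step tail $\chi' = src(\phi)$ is a bare object term that would disqualify the composition. This is precisely why length $1$ must be isolated as a base case handled by a single application of Lem.~\ref{rsl:tgtt-step-to-leading-one-step} (so that the tail passed to the inductive composition always carries at least one step), and why I peel the first step directly rather than appending a trailing identity term. A secondary, routine check is the well-formedness of $\chi_0 \comp \chi'$, i.e.\ convergence of $\chi_0$ and $tgt(\chi_0) = src(\chi')$: the former holds because a one-step has a single redex and hence a $\tgtt$-normal form, and the latter is inherited from the well-formedness of the proof terms already related by $\peqe$ (Lem.~\ref{rsl:peqe-soundness-convergent}).
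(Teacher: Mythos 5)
Your proof is correct and follows essentially the same route as the paper's: induction on $\redln{\reda}$, peeling off the first step with Lem.~\ref{rsl:tgtt-step-to-leading-one-step} and composing the resulting one-step with the tail supplied by the induction hypothesis. The only difference is your isolation of $\redln{\reda}=1$ as a separate base case so that the tail of the composition is always a genuine \ppterm; this is a sensible refinement, since the paper's inductive step applied to a length-one sequence literally yields $\chi_0 \comp src(\phi)$, whose second component is a trivial term, so that the composition fails the letter of Dfn.~\ref{dfn:ppterm}.
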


\begin{proof}
Easy induction on $\redln{\reda}$.
If $\reda$ is an empty reduction sequence, then we conclude just by taking $\chi \eqdef src(\psi)$.

Assume $\reda = \stepa; \reda'$, so that $\psi \sstepxtrs{\stepa}{\tgtt} \psi_0 \sredxtrs{\reda'}{\tgtt} \phi$.
Then Lem.~\ref{rsl:tgtt-step-to-leading-one-step} implies that $\psi \peqe \chi_0 \comp \psi_0$ where $\chi_0$ is a one-step verifying $\sdepth{\chi_0} = \sdepth{\stepa}$, and \ih\ on $\reda'$ yields $\psi_0 \peqe \chi' \comp \phi$ where $\chi'$ is a finite \pnpterm\ verifying $\ppsteps{\chi'} = \redln{\reda'}  = \redln{\reda} - 1$, and $\sdepth{\redel{\chi'}{i}} = \sdepth{\redel{\reda'}{i}} = \sdepth{\redel{\reda}{i+1}}$ if $i < \ppsteps{\chi'}$.

We take $\chi \eqdef \chi_0 \comp \chi'$. It is straightforward to verify that $\chi$ satisfies the conditions about length and step depth. 
Moreover, $\psi_0 \peqe \chi' \comp \phi$ implies $\chi_0 \comp \psi_0 \peqe \chi_0 \comp (\chi' \comp \phi) \peqe \chi \comp \phi$, so that \eqltrans\ yields $\psi \peqe \chi \comp \phi$ (recall $\psi \peqe \chi_0 \comp \psi_0$). Thus we conclude.
\end{proof}

\medskip
The previous auxiliary results allow to prove the main result of this section, \ie\ factorisation for \imsteps.

\begin{lemma}
\label{rsl:factorisation-imstep}
Let $\psi$ be a convergent \imstep.
Then there exist $\chi$, $\phi$ such that 
$\, \psi \peqe \chi \comp \phi \, $, 
$\, \chi$ is a finite \pnpterm\ verifying $\sdepth{\redel{\chi}{i}} = 0$ for all $i < \ppsteps{\chi}$, and 
$\phi$ is a convergent \imstep\ verifying $\mind{\phi} > 0$.
\end{lemma}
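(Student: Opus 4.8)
The plan is to assemble the factorisation directly from the \collseq\ machinery developed above, using convergence of $\psi$ only to rule out root collapsing activity of unbounded length.

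First I would observe that a convergent \imstep\ is by definition $\tgtt$-weakly normalising (Dfn.~\ref{dfn:imstep-convergence}, Dfn.~\ref{dfn:src-tgt-imstep}). Hence, by the contrapositive of Lem.~\ref{rsl:infinite-collseq-then-tgtt-non-wn}, $\psi$ admits no infinite \collseq\ starting at $\epsilon$. Combining this with Lem.~\ref{rsl:collseq-infinite}, whose contrapositive states that the absence of such an infinite \collseq\ forces the failure of some finite length, and with Lem.~\ref{rsl:sub-collseq} (which makes the set of lengths of \collseqs\ starting at $\epsilon$ downward closed), I would conclude that this set is a finite initial segment of the positive integers. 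Since $\langle \epsilon \rangle$ is always a \collseq\ of length $1$, there is some $n$ with $1 < n < \omega$ such that no \collseq\ for $\psi$ starting at $\epsilon$ has length $n$.

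Next I would apply Lem.~\ref{rsl:finite-collseq-then-finite-tgtt-redseq} with this $n$, obtaining a finite $\tgtt$-reduction sequence $\psi \sredxtrs{\reda}{\tgtt} \phi$ with $\redln{\reda} < n$, all of whose steps have depth $0$, and such that $\phi(\epsilon) \in \Sigma$. Feeding $\reda$ into Lem.~\ref{rsl:tgtt-redseq-to-leading-pnpterm} yields a finite \pnpterm\ $\chi$ with $\psi \peqe \chi \comp \phi$, $\ppsteps{\chi} = \redln{\reda}$, and $\sdepth{\redel{\chi}{i}} = \sdepth{\redel{\reda}{i}} = 0$ for every $i < \ppsteps{\chi}$; this supplies the required $\chi$. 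It then remains to check the two properties of $\phi$. Since $\phi$ is a $\tgtt$-reduct of $\psi$, and $\tgtt$ has signature $\Sigma^R$ while reduction preserves closedness, $\phi$ is again a closed term over $\Sigma^R$, hence an \imstep\ (Dfn.~\ref{dfn:imstep}). Because $\phi(\epsilon) \in \Sigma$ is a function symbol, no rule symbol occurs at $\epsilon$, so every rule-symbol occurrence in $\phi$ has depth $\geq 1$ and therefore $\mind{\phi} > 0$ (Dfn.~\ref{dfn:dmin-imstep}).

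The one point requiring care, and the step I expect to be the main obstacle, is the convergence of $\phi$: a $\tgtt$-reduct of a $\tgtt$-weakly normalising term need not itself be weakly normalising on the strength of $UN^\infty$ alone. The clean route is to read convergence off the \peqence\ instead. Since $\psi$ is a well-formed convergent proof term and $\psi \peqe \chi \comp \phi$, Lem.~\ref{rsl:peqe-soundness-convergent} makes $\chi \comp \phi$ well-formed and convergent, and the definition of convergence for a binary composition (Dfn.~\ref{dfn:layer-pterm}, case \ref{rule:ptbinC}) states that $\chi \comp \phi$ is convergent iff $\phi$ is. This yields convergence of $\phi$ uniformly, including the degenerate case $\redln{\reda} = 0$ in which $\chi = src(\psi)$ and $\phi = \psi$. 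Collecting these facts gives $\psi \peqe \chi \comp \phi$ with $\chi$ a finite root-only \pnpterm\ and $\phi$ a convergent \imstep\ with $\mind{\phi} > 0$, as required.
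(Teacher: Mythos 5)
Your proof is correct and, in its skeleton, coincides with the paper's: both rule out infinite \collseqs\ at the root via Lem.~\ref{rsl:infinite-collseq-then-tgtt-non-wn} and Lem.~\ref{rsl:collseq-infinite}, extract a finite head-step $\tgtt$-reduction from Lem.~\ref{rsl:finite-collseq-then-finite-tgtt-redseq}, and convert it into the factorisation $\psi \peqe \chi \comp \phi$ with Lem.~\ref{rsl:tgtt-redseq-to-leading-pnpterm}. Where you genuinely diverge is in establishing convergence of $\phi$. The paper stays inside the rewriting theory of the companion system: since all steps of $\reda$ are head steps and $\tgtt$ is orthogonal, Lem.~\ref{rsl:orthogonal-leading-head-steps-to-nf} (the Strip-Lemma argument) transfers $\tgtt$-$WN^\infty$ from $\psi$ to $\phi$ directly. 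You instead read convergence off the \peqence\ meta-theory: Lem.~\ref{rsl:tgtt-redseq-to-leading-pnpterm} already guarantees that $\chi \comp \phi$ is a well-formed proof term (its formation requires convergence of $\chi$ only, not of $\phi$), so Lem.~\ref{rsl:peqe-soundness-convergent} applied to $\psi \peqe \chi \comp \phi$ yields that $\chi \comp \phi$ is convergent, and case~\ref{rule:ptbinC} of Dfn.~\ref{dfn:layer-pterm} then hands convergence down to $\phi$. This is legitimate and non-circular, since Lem.~\ref{rsl:peqe-soundness-convergent} is proved in the \peqence\ section independently of any factorisation result, and it buys you independence from the orthogonality/Strip-Lemma machinery, which the paper itself only sketches with external references. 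What it costs is reliance on the heavier general soundness lemma where the paper needs only a local argument about head steps; both are sound trades, and your remark that $UN^\infty$ alone would not suffice is precisely the gap that the paper's appeal to Lem.~\ref{rsl:orthogonal-leading-head-steps-to-nf} is there to close.
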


\begin{proof}
We define $A \eqdef \{n \setsthat 0 < n < \omega$ and there is no \collseq\ for $\psi$ starting at $\epsilon$ and having length $n \}$.
Dfn.~\ref{dfn:imstep-convergence} implies that $\psi$ is $\tgtt$-weakly normalising. Then Lem.~\ref{rsl:infinite-collseq-then-tgtt-non-wn} implies that there is no infinite \collseq\ for $\psi$ starting at $\epsilon$, so that Lem.~\ref{rsl:collseq-infinite} implies $A \neq \emptyset$.
Let $n \in A$. Then Lem.~\ref{rsl:finite-collseq-then-finite-tgtt-redseq} implies $\psi \sredxtrs{\reda}{\tgtt} \phi$, where $\redln{\reda} < \omega$, $\sdepth{\redel{\reda}{i}} = 0$ for all suitable $i$, and $\phi$ is an \imstep\ (since it is the target of a $\tgtt$-reduction sequence) verifying $\mind{\phi} > 0$ (since $\phi(\epsilon) \in \Sigma)$.
Moreover, $\psi$ being convergent means that $\psi$ is $\tgtt$-$WN^\infty$, and $\tgtt$ is a convergent \iTRS, so that Lem.~\ref{rsl:orthogonal-leading-head-steps-to-nf} implies that $\phi$ is also $\tgtt$-$WN^\infty$, \ie\ convergent.
We conclude by applying Lem.~\ref{rsl:tgtt-redseq-to-leading-pnpterm} on $\psi \sredxtrs{\reda}{\tgtt} \phi$.
\end{proof}

\subsection{Fixed prefix of contraction activity}
\label{sec:cfpc}
\cfpInsideCompression{%
This section introduces a technical tool, in which the extension of the factorisation result from \imsteps\ to arbitrary proof terms is based on.
This tool is a formalisation of a simple observation: 
the
}%
\cfpInsideStandardisation{The }%
contraction activity denoted by a proof term can lie below some \emph{fixed prefix}. \Ie, the contraction activity corresponding to either of the equivalent proof terms
$h(\mu(a) \cdot \nu(a), \pi)$ and $h(\mu(a),a) \cdot h(\nu(a),a) \cdot h(k(a),\pi)$
leaves the context $h(\Box,\Box)$ fixed, so we will say that $h(\Box,\Box)$ is a fixed prefix for these proof terms.
For proof terms involving root activity, the only possible fixed prefix is $\Box$.
In the sequel, we will establish that fixed prefixes are invariant \wrt\ \peqence.

Computing a fixed prefix for a proof term $\psi$ allows to permute it with a one-step (\confer\ Dfn.~\ref{dfn:one-step}) performed on $tgt(\psi)$, whose redex lies in the fixed prefix of $\psi$.
This observation will be crucial in order to%
\cfpInsideStandardisation{prove some of the main results of this study.}
\cfpInsideCompression{prove a general factorisation result, since it allows to obtain a proof term in which the (denotation of the) activity near to the root 
``shifts to the left as much as possible'', \ie\
lies in the lesser possible positions 
\wrt\ the sequentialisation order given by dot occurrences.}

The following definitions and results characterise the common prefix of a proof term in a way allowing to manipulate it.
The %
\cfpInsideStandardisation{sets of }%
positions mentioned in the statements must be understood as%
\cfpInsideStandardisation{sets of \emph{contraction} positions.}
\cfpInsideCompression{being relative to the contraction activity denoted by a proof term, rather than as positions in proof terms themselves.}

\medskip
We formalise the concept of (the activity denoted by) a proof term having a fixed prefix by defining a relation between proof terms and prefix-closed sets of positions, which we will call \emph{respect}.
Therefore, if $\psi$ respects a set of positions $\Spa$, then $\psi$ has a fixed prefix corresponding to the positions in $\Spa$.

\begin{definition}
\label{dfn:spa-proj}
Let $\Spa$ be a set of positions, and $i \in \Nat$.
Then we define the \emph{projection of $\Spa$ on $i$} as 
$\proj{\Spa}{i} \eqdef \set{p \setsthat ip \in \Spa}$.
\end{definition}

\begin{definition}
\label{dfn:term-prefix}
Let $t$ be a term, and $\Spa$ a finite and prefix-closed set of positions such that $\Spa \subseteq \Pos{t}$.
Then we define $\pref{t}{\Spa}$, the \emph{prefix of $t$ \wrt\ $\Spa$},  as follows. \\
If $\Spa = \emptyset$, then $\pref{t}{\Spa} \eqdef \Box$. \\
If $\Spa \neq \emptyset$ and $t \in \thevar$, so that $\Spa = \set{\epsilon}$, then $\pref{t}{\Spa} \eqdef t$. \\
If $\Spa \neq \emptyset$ and $t = f(t_1, \ldots, t_m)$, so that $\Spa = \set{\epsilon} \,\cup\, \bigcup_{1 \leq i \leq m} (i \cdot \proj{\Spa}{i})$, then $\pref{t}{\Spa} \eqdef f(\pref{t_1}{\proj{\Spa}{1}}, \ldots, \pref{t_m}{\proj{\Spa}{m}})$.
%
\end{definition}

Notice that $C = \pref{t}{\Spa}$ iff $t = C[t_1, \ldots, t_k]$ and $\Spa = \set{p \setsthat p \in \Pos{C} \,\land\, C(p) \neq \Box}$, this can be verified by a simple induction on the cardinal of $\Spa$.

\cfpInsideStandardisation{
\begin{definition}
\label{dfn:respects}
Let $\psi$ be a proof term, $\Spa$ a finite and prefix-closed set of positions.
We say that $\psi$ \emph{respects} $\Spa$ iff $\Spa \subseteq \Pos{src(\psi)}$ and $\frso{\psi}{r}$ is undefined for all $r \in \Spa$.
\end{definition}
}

\cfpInsideCompression{
\begin{definition}
\label{dfn:respects}
Let $\psi$ be a proof term, and $\Spa$ a set of positions.
We say that $\psi$ \emph{respects} $\Spa$ iff $\Spa$ is finite and prefix-closed, and any of the following applies: \\
\begin{tabular}{@{$\ \ \bullet\ \ $}p{.9\textwidth}}
$\psi$ is an \imstep, $\Spa \subseteq \Pos{\psi}$ and $\psi(p) \in \Sigma$ for all $p \in \Spa$. \\
$\psi = \psi_1 \comp \psi_2$ and both $\psi_1$ and $\psi_2$ respect $\Spa$. \\
$\psi = \icomp \psi_i$ and all $\psi_i$ respect $\Spa$. \\
$\psi = f(\psi_1, \ldots, \psi_m)$, at least one of the $\psi_i$ is not an \imstep, and either $\Spa = \emptyset$ or $\psi_i$ respects $\proj{\Spa}{i}$ for all $i \leq m$. \\
$\psi = \mu(\psi_1, \ldots, \psi_m)$, at least one of the $\psi_i$ is not an \imstep, and $\Spa = \emptyset$.
\end{tabular}
\end{definition}
}

\medskip
The relation just defined enjoys some simple properties.

\cfpInsideCompression{
\begin{lemma}
\label{rsl:respects-then-src}
Let $\psi$ be a proof term and $\Spa$ such that $\psi$ respects $\Spa$. Then $\Spa \subseteq \Pos{src(\psi)}$.
\end{lemma}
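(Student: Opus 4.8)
The plan is to prove, by the simple induction principle of Prop.~\ref{rsl:pterm-induction-principle}, the predicate $P(\psi)$ asserting that \emph{for every $\Spa$ respected by $\psi$ one has $\Spa \subseteq \Pos{src(\psi)}$}. The crucial structural observation is that the five cases of that induction principle (\imstep; $\psi_1 \comp \psi_2$; $\icomp \psi_i$; non-\imstep\ $f(\psi_1,\ldots,\psi_m)$; non-\imstep\ $\mu(\psi_1,\ldots,\psi_m)$) line up exactly with the five clauses of Dfn.~\ref{dfn:respects}. Hence in each case I can read off the shape of $\Spa$ directly from the clause by which $\psi$ respects it, and combine the \ih\ with the $src$ clauses of Dfn.~\ref{dfn:layer-pterm}.

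First I would dispatch the cases that reduce immediately to the \ih. If $\psi = \psi_1 \comp \psi_2$, then both operands respect $\Spa$; the \ih\ on $\psi_1$ gives $\Spa \subseteq \Pos{src(\psi_1)}$, and since $src(\psi) = src(\psi_1)$ by Dfn.~\ref{dfn:layer-pterm}, case~\ref{rule:ptbinC}, we are done. If $\psi = \icomp \psi_i$, then all $\psi_i$ respect $\Spa$; the \ih\ on $\psi_0$ together with $src(\psi) = src(\psi_0)$ (case~\ref{rule:ptinfC}) suffices. If $\psi = \mu(\psi_1,\ldots,\psi_m)$ is not an \imstep, then by Dfn.~\ref{dfn:respects} we have $\Spa = \emptyset$ and the inclusion is trivial. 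For a non-\imstep\ $\psi = f(\psi_1,\ldots,\psi_m)$, if $\Spa = \emptyset$ the claim is trivial; otherwise prefix-closedness gives $\epsilon \in \Spa$ and each $\psi_i$ respects $\proj{\Spa}{i}$, so the \ih\ yields $\proj{\Spa}{i} \subseteq \Pos{src(\psi_i)}$. Using $src(\psi) = f(src(\psi_1),\ldots,src(\psi_m))$ (case~\ref{rule:ptsymbol}) we get $\epsilon \in \Pos{src(\psi)}$ and $i \cdot \proj{\Spa}{i} \subseteq \Pos{src(\psi)}$, and since $\Spa = \set{\epsilon} \cup \bigcup_i i \cdot \proj{\Spa}{i}$ by prefix-closedness, the desired inclusion follows.

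The real work is the base case, where $\psi$ is an \imstep, so that $\Spa \subseteq \Pos{\psi}$ and $\psi(p) \in \Sigma$ for all $p \in \Spa$. Here I would run an inner induction on the finite measure $\max\set{\posln{p} \setsthat p \in \Spa}$. If $\Spa = \emptyset$ the claim is trivial. Otherwise $\epsilon \in \Spa$ forces $\psi(\epsilon) = f \in \Sigma$, so since \imsteps\ are closed, Prop.~\ref{rsl:term-then-intuitive-notation} and Dfn.~\ref{dfn:imstep} give $\psi = f(\psi_1,\ldots,\psi_m)$ with each $\psi_i$ again an \imstep. If $m = 0$ then $\Spa = \set{\epsilon}$ and $\epsilon \in \Pos{src(\psi)}$ because the root of $src(\psi)$ is $f$. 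For $m > 0$, each $\proj{\Spa}{i}$ is finite, prefix-closed, contained in $\Pos{\psi_i}$ and carries only function symbols, hence $\psi_i$ respects it; since the maximal depth strictly decreases under projection, the inner \ih\ gives $\proj{\Spa}{i} \subseteq \Pos{src(\psi_i)}$, and I conclude exactly as in the function-symbol case above. This step relies on the commutation identity $src(f(\psi_1,\ldots,\psi_m)) = f(src(\psi_1),\ldots,src(\psi_m))$ for \imsteps, which follows from $src(\psi)$ being the $src_T$-normal form together with the $SN^\infty$ and $UN^\infty$ properties of $src_T$: the term $f(src(\psi_1),\ldots,src(\psi_m)) \in \iSigmaTerms$ is a $src_T$-normal form reachable from $\psi$, hence it is $src(\psi)$.

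I expect this commutation identity to be the only delicate point; all the other cases are bookkeeping driven by the alignment between the clauses of Dfn.~\ref{dfn:respects} and the cases of Prop.~\ref{rsl:pterm-induction-principle}. In particular, the main obstacle is confined to the \imstep\ base case: rather than track the (possibly transfinite) $src_T$-reduction from $\psi$ to $src(\psi)$ and argue position-by-position that the all-function-symbol path recorded by $\Spa$ is never destroyed, I would route the whole argument through the structural identity for $src$, which makes the preservation of that prefix automatic.
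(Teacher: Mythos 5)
Your proof is correct and follows exactly the route the paper takes: the paper's entire proof of this lemma is ``an easy induction on $\psi$ suffices; \confer\ Prop.~\ref{rsl:pterm-induction-principle}'', which is precisely the induction you carry out, with the cases lined up against the clauses of Dfn.~\ref{dfn:respects}. The one point the paper leaves implicit --- the identity $src(f(\psi_1,\ldots,\psi_m)) = f(src(\psi_1),\ldots,src(\psi_m))$ for \imsteps\ --- is justified soundly in your base case via reachability of that term by $src_T$-reduction together with uniqueness of $src_T$-normal forms.
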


\begin{proof}
An easy induction on $\psi$ suffices; \confer\ Prop.~\ref{rsl:pterm-induction-principle}.
\end{proof}

\begin{lemma}
\label{rsl:respects-then-tgt}
Let $\psi$ be a convergent proof term and $\Spa$ such that $\psi$ respects $\Spa$. Then $\Spa \subseteq \Pos{tgt(\psi)}$.
\end{lemma}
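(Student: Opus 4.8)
The plan is to prove the statement by induction on $\psi$ following the simple induction principle of Prop.~\ref{rsl:pterm-induction-principle}, with inductive predicate $P(\psi)$: ``if $\psi$ is convergent, then $\Spa \subseteq \Pos{tgt(\psi)}$ for every finite, prefix-closed $\Spa$ that $\psi$ respects''. I keep the convergence clause inside $P$ so that the inductive hypothesis transfers to the relevant subterms, all of which inherit convergence (via Dfn.~\ref{dfn:layer-pterm} case~\ref{rule:ptsymbol}, Lem.~\ref{rsl:fnsymbol-convergence}, and the formation requirements of case~\ref{rule:ptinfC}). The bullets of Dfn.~\ref{dfn:respects} align with the cases of the induction principle, with the \imstep\ case serving as the base.

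For the easy inductive cases I would argue as follows. If $\psi = \psi_1 \comp \psi_2$, then respecting $\Spa$ gives that $\psi_2$ respects $\Spa$, convergence of $\psi$ gives convergence of $\psi_2$, and $tgt(\psi) = tgt(\psi_2)$; a single application of the \ih\ to $\psi_2$ closes the case ($\psi_1$ plays no role). If $\psi = f(\psi_1,\dots,\psi_m)$ is not an \imstep, I discard the trivial subcase $\Spa = \emptyset$ and otherwise note $\epsilon \in \Spa$ and $\Spa = \set{\epsilon} \cup \bigcup_i i \cdot \proj{\Spa}{i}$; since each $\psi_i$ is convergent and respects $\proj{\Spa}{i}$, the \ih\ gives $\proj{\Spa}{i} \subseteq \Pos{tgt(\psi_i)}$, and then $tgt(\psi) = f(tgt(\psi_1),\dots,tgt(\psi_m))$ together with $\Spa \subseteq \Pos{src(\psi)}$ (Lem.~\ref{rsl:respects-then-src}, ensuring $i \le m$) yields $\Spa \subseteq \Pos{tgt(\psi)}$. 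The case $\psi = \mu(\psi_1,\dots,\psi_m)$ non-\imstep\ is immediate, because respecting forces $\Spa = \emptyset$.

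The case $\psi = \icomp \psi_i$ is where the infinitary content appears. All $\psi_i$ respect $\Spa$ and are convergent, so the \ih\ gives $\Spa \subseteq \Pos{tgt(\psi_i)}$ for every $i$. Since $\psi$ is convergent, $tgt(\psi) = \lim_{i \to \omega} tgt(\psi_i)$ is defined (Lem.~\ref{rsl:mind-big-then-tdist-little}). For $p \in \Spa$ with $k = \posln{p}$, the definition of limit (Dfn.~\ref{dfn:limit-terms}) provides an index beyond which $\tdist{tgt(\psi_i)}{tgt(\psi)} < 2^{-k}$; by the remark following Dfn.~\ref{dfn:distance} this forces agreement of $tgt(\psi_i)$ and $tgt(\psi)$ on all positions of depth $\le k$, and since $p \in \Pos{tgt(\psi_i)}$ and $\posln{p} \le k$ we obtain $p \in \Pos{tgt(\psi)}$.

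I expect the genuine obstacle to be the base case, where $\psi$ is a convergent \imstep\ with $\Spa \subseteq \Pos{\psi}$ and $\psi(p) \in \Sigma$ for all $p \in \Spa$. The idea is that positions carrying object symbols, all of whose prefixes also carry object symbols (by prefix-closedness), can neither be nor lie above any $tgt_T$-redex, and hence survive $tgt_T$-normalisation. I would make this precise by an inner induction on $max \set{\posln{p} \setsthat p \in \Spa}$: the cases $\Spa = \emptyset$ and $\Spa = \set{\epsilon}$ are trivial, and otherwise $\psi(\epsilon) = f \in \Sigma$, so $\psi = f(\psi_1,\dots,\psi_m)$ as an \imstep. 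By Lem.~\ref{rsl:imstep-convergent-args} (and the head-preservation argument in its proof, via Lem.~\ref{rsl:redseq-mind-big-src-tgt} and Lem.~\ref{rsl:proj-redseq-well-defined}, using that $f$ heads no $tgt_T$-rule) each $\psi_i$ is a convergent \imstep\ and $\subtat{tgt(\psi)}{i} = tgt(\psi_i)$; applying the inner \ih\ to $\psi_i$ with $\proj{\Spa}{i}$ (which $\psi_i$ respects, being an \imstep\ carrying object symbols on $\proj{\Spa}{i}$) then assembles $\Spa \subseteq \Pos{tgt(\psi)}$. As a fallback for this case I would instead set $C \eqdef \pref{\psi}{\Spa}$ (Dfn.~\ref{dfn:term-prefix}), write $\psi = C[\chi_1,\dots,\chi_k]$, and invoke Lem.~\ref{rsl:redseq-respects-src-tgt}, once it is checked that every $tgt_T$-redex of the normalising sequence lies at or below a hole of $C$ — a condition that itself needs a short transfinite induction, precisely because the object-symbol skeleton $C$ is preserved step by step.
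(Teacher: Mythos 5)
Your proof is correct and follows essentially the same route as the paper's: induction on $\psi$ via Prop.~\ref{rsl:pterm-induction-principle}, with exactly the paper's limit argument (agreement on positions of depth $\leq \posln{p}$ once $\tdist{tgt(\psi_i)}{tgt(\psi)} < 2^{-\posln{p}}$) in the $\icomp \psi_i$ case. The only difference is that you spell out the \imstep\ base case (via the head-preservation argument behind Lem.~\ref{rsl:imstep-convergent-args}), which the paper leaves implicit under ``an easy induction suffices''; that detail is accurate and harmless.
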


\begin{proof}
An easy induction on $\psi$ suffices; \confer\ Prop.~\ref{rsl:pterm-induction-principle}.
If $\psi = \icomp \psi_i$ and $\Spa \subseteq \Pos{tgt(\psi_i)}$ for all $i < \omega$, given $p \in \Spa$, we consider $n$ such that $\tdist{tgt(\psi_i)}{tgt(\psi)} < 2^{-\posln{p}}$ if $i > n$, so that $p \in \Pos{tgt(\psi_{n+1})}$ implies $p \in \Pos{tgt(\psi)}$.
\end{proof}

\begin{lemma}
\label{rsl:respect-fnsymbol-coherence}
Let $\psi = f(\psi_1, \ldots, \psi_m)$, and $\Spa$ a set of positions.
Then $\psi$ respects $\Spa$ iff either $\Spa = \emptyset$ or $\psi_i$ respects $\proj{\Spa}{i}$ for all $i \leq m$.
\end{lemma}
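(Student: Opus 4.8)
The plan is to reduce everything to the clause-by-clause reading of Dfn.~\ref{dfn:respects}, distinguishing the two situations that can make $\psi = f(\psi_1,\ldots,\psi_m)$ fall under that definition. Since $f/m \in \Sigma \subseteq \Sigma^R$, by Dfn.~\ref{dfn:imstep} (an \imstep\ is exactly a closed term over $\Sigma^R$) the term $\psi$ is an \imstep\ if and only if every $\psi_i = \subtat{\psi}{i}$ is; hence exactly one of the \imstep\ clause and the function-symbol clause of Dfn.~\ref{dfn:respects} governs $\psi$. First I would dispatch the case in which at least one $\psi_i$ is not an \imstep: here $\psi$ is built by case~\ref{rule:ptsymbol}, and the function-symbol clause of Dfn.~\ref{dfn:respects} states verbatim that $\psi$ respects $\Spa$ iff $\Spa=\emptyset$ or each $\psi_i$ respects $\proj{\Spa}{i}$, so there is nothing to prove beyond recording that the finite-and-prefix-closed side condition is carried on both sides.

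The substance is the remaining case, where $\psi$ and all $\psi_i$ are \imsteps, so that ``respects'' is governed on both sides by the \imstep\ clause. Here I would unfold that clause twice: $\psi$ respects $\Spa$ means $\Spa \subseteq \Pos{\psi}$ with $\psi(p)\in\Sigma$ for every $p\in\Spa$, while $\psi_i$ respects $\proj{\Spa}{i}$ means $\proj{\Spa}{i}\subseteq\Pos{\psi_i}$ with $\psi_i(q)\in\Sigma$ for every $q\in\proj{\Spa}{i}$. The bridge between the two is the pair of identities $\Pos{\psi}=\set{\epsilon}\cup\bigcup_{1\le i\le m} i\cdot\Pos{\psi_i}$ (Notation~\ref{dfn:term-intuitive-notation}) and $\psi_i(q)=\psi(iq)$ together with $\Pos{\psi_i}=\set{q\setsthat iq\in\Pos{\psi}}$ (Dfn.~\ref{dfn:subtat}), and the set identity $\Spa=\set{\epsilon}\cup\bigcup_i i\cdot\proj{\Spa}{i}$ valid once $\epsilon\in\Spa$ (Dfn.~\ref{dfn:spa-proj}). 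Reading these from left to right gives the forward direction (project a witnessing $\Spa$ down to each argument), and from right to left gives the converse (reassemble the per-argument data into a global condition on $\Spa$).

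The only delicate point -- and the step I expect to cost the most attention -- is the transfer of the three side conditions (finiteness, prefix-closedness, and containment in the relevant position set) across the projection $\proj{\Spa}{i}$. In the forward direction these are immediate: a projection of a finite set is finite, $ipq\in\Spa$ forces $ip\in\Spa$ by prefix-closedness so each $\proj{\Spa}{i}$ is prefix-closed, and $iq\in\Spa\subseteq\Pos{\psi}$ yields $q\in\Pos{\psi_i}$ (here I can also invoke Lem.~\ref{rsl:respects-then-src} to get $\Spa\subseteq\Pos{src(\psi)}$ directly). The converse is where care is needed, because from data on the $\proj{\Spa}{i}$ for $i\le m$ alone one must recover that $\Spa$ itself is finite, prefix-closed and carries no position whose first letter exceeds $m$; this is exactly why the statement is to be read for $\Spa$ finite, prefix-closed and contained in $\Pos{src(\psi)}$, as ``respects'' always presupposes, and under that standing reading the reassembly $\Spa=\set{\epsilon}\cup\bigcup_{1\le i\le m} i\cdot\proj{\Spa}{i}$ together with $\psi(iq)=\psi_i(q)\in\Sigma$ closes the argument. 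The empty-set subcase is handled separately and trivially, since every \imstep\ respects $\emptyset$ by the vacuous instance of the \imstep\ clause.
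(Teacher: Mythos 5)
Your proof is correct and follows essentially the same route as the paper's own (much terser) proof: split on whether $\psi$ is an \imstep\ (equivalently, whether all $\psi_i$ are), dismiss the case where some $\psi_i$ is not an \imstep\ as verbatim the function-symbol clause of Dfn.~\ref{dfn:respects}, and do the position-set bookkeeping for the \imstep\ clause. Your care about the right-to-left direction is warranted and is exactly what the paper's ``straightforward analysis'' glosses over: for instance with $\Spa = \set{\epsilon, (m+1)}$ every projection $\proj{\Spa}{i}$ with $i \leq m$ is empty, so the right-hand side holds vacuously while an \imstep\ $\psi = f(\psi_1, \ldots, \psi_m)$ fails $\Spa \subseteq \Pos{\psi}$, so the equivalence really does require the standing convention you state ($\Spa$ finite, prefix-closed and compatible with $\Pos{src(\psi)}$).
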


\begin{proof}
If $\psi$ is an \imstep, then a straightforward analysis yields the desired result. If at least one of the $\psi_i$ is not an \imstep, then we conclude immediately. Any other case in Dfn.~\ref{dfn:respects} contradicts the stated form of $\psi$.
\end{proof}

\begin{lemma}
\label{rsl:respects-emptyset}
Let $\psi$ be a proof term. Then $\psi$ respects $\emptyset$.
\end{lemma}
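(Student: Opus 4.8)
The plan is to proceed by induction on $\psi$, using the simple induction principle of Prop.~\ref{rsl:pterm-induction-principle}. First I would observe that $\emptyset$ is vacuously finite and prefix-closed, so that to establish that $\psi$ respects $\emptyset$ it suffices to exhibit, for each shape of $\psi$, an applicable clause of Dfn.~\ref{dfn:respects}.

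For the base case, where $\psi$ is an \imstep, the first clause applies: the inclusion $\emptyset \subseteq \Pos{\psi}$ holds trivially, and the requirement that $\psi(p) \in \Sigma$ for all $p \in \emptyset$ is vacuously true. For $\psi = \psi_1 \comp \psi_2$ and for $\psi = \icomp \psi_i$, the induction hypotheses yield that the relevant components respect $\emptyset$, whence the second and third clauses apply respectively.

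The remaining cases are $\psi = f(\psi_1, \ldots, \psi_m)$ and $\psi = \mu(\psi_1, \ldots, \psi_m)$ with at least one $\psi_i$ not an \imstep\ (if all the $\psi_i$ were \imsteps, then $\psi$ itself would be an \imstep\ and would fall under the base case). In both situations the fourth and fifth clauses respectively apply \emph{directly}, since they explicitly admit $\Spa = \emptyset$ without any further condition on the arguments; no induction hypothesis is even needed here.

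The statement is essentially a vacuity check, so I do not anticipate any genuine obstacle. The only point requiring mild care is matching the case split of the induction principle against the clauses of Dfn.~\ref{dfn:respects}, in particular keeping the \imstep\ case and the non-\imstep\ function/rule-symbol cases apart, so that exactly one clause is invoked in each case and the ``at least one $\psi_i$ is not an \imstep'' side condition is correctly respected.
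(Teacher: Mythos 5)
Your proposal is correct and follows exactly the route the paper takes: a straightforward induction on $\psi$ via the induction principle of Prop.~\ref{rsl:pterm-induction-principle}, checking that each clause of Dfn.~\ref{dfn:respects} is satisfied vacuously or by the induction hypotheses when $\Spa = \emptyset$. The paper merely states this as ``a straightforward induction''; your write-up fills in the same case analysis with the details made explicit.
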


\begin{proof}
A straightforward induction on $\psi$, \confer\ Prop.~\ref{rsl:pterm-induction-principle}, suffices to conclude.
\end{proof}

\medskip
The \emph{respects} relation can be obtained from conditions on the target and the minimum activity depth of a proof term.

\begin{lemma}
\label{rsl:mind-plus-tgt-then-respects}
Let $\psi$ be a convergent proof term and $\Spa$ a finite, prefix-closed set of positions, such that $\mind{\psi} > n$, $\posln{p} \leq n$ for all $p \in \Spa$, and $\Spa \subseteq \Pos{tgt(\psi)}$. 
Then $\psi$ respects $\Spa$.
\end{lemma}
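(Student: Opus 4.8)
The plan is to prove the statement by induction on $\psi$ via the simplified induction principle of Prop.~\ref{rsl:pterm-induction-principle}, taking as the predicate the \emph{universally quantified} claim: for every finite, prefix-closed $\Spa$ and every $n < \omega$ with $\mind{\psi} > n$, $\posln{p} \le n$ for all $p \in \Spa$, and $\Spa \subseteq \Pos{tgt(\psi)}$, the proof term $\psi$ respects $\Spa$. Keeping $\Spa$ and $n$ quantified is essential, since the $f$-headed clause of Dfn.~\ref{dfn:respects} forces the arguments to be tested against the projections $\proj{\Spa}{i}$ with the decremented bound $n-1$. Two shallow-agreement facts will be used repeatedly. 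First, by Lem.~\ref{rsl:mind-big-then-tdist-little} a convergent proof term has a defined target, and $\mind{\chi} > n$ gives $\tdist{src(\chi)}{tgt(\chi)} < 2^{-n}$; by the remark following Dfn.~\ref{dfn:distance} this means $src(\chi)$ and $tgt(\chi)$ have the same positions, carrying the same symbols, down to depth $n$. Second, Lem.~\ref{rsl:tdist-is-ultrametric} lets me chain such shallow comparisons.

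The base case, where $\psi$ is an \imstep, is the crux: I must exhibit $\Spa \subseteq \Pos{\psi}$ and $\psi(p) \in \Sigma$ for every $p \in \Spa$, i.e.\ compare the extended-signature term $\psi$ with the object term $tgt(\psi)$ at shallow depths. Since $\mind{\psi} > n$, every rule-symbol occurrence of $\psi$ lies at depth $> n$; contracting a $\tgtt$-redex at depth $> n$ leaves the symbols at all positions of depth $\le n$ untouched (Dfn.~\ref{dfn:repl}), so no $\tgtt$-redex ever appears at depth $\le n$ in any reduct of $\psi$. As $\psi$ is a convergent \imstep it is $\tgtt$-weakly normalising, and the convergent reduction $\psi \infredtrs{\tgtt} tgt(\psi)$ therefore has all steps at depth $> n$, i.e.\ $\mind{} > n$; hence Lem.~\ref{rsl:redseq-mind-big-src-tgt} yields $\tdist{\psi}{tgt(\psi)} < 2^{-n}$. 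Thus $\psi$ and $tgt(\psi)$ share positions and symbols down to depth $n$. Since $tgt(\psi)$ is a closed object term, $tgt(\psi)(p) \in \Sigma$ for every $p \in \Pos{tgt(\psi)}$, and combining this with $\Spa \subseteq \Pos{tgt(\psi)}$ and $\posln{p} \le n$ gives $\Spa \subseteq \Pos{\psi}$ and $\psi(p) = tgt(\psi)(p) \in \Sigma$, as required.

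For the inductive cases I transfer the hypotheses to the components and apply the induction hypothesis. If $\psi = \psi_1 \comp \psi_2$, both $\psi_1,\psi_2$ are convergent with $\mind{\psi_i} > n$; the inclusion $\Spa \subseteq \Pos{tgt(\psi_2)} = \Pos{tgt(\psi)}$ is immediate, while $\Spa \subseteq \Pos{tgt(\psi_1)} = \Pos{src(\psi_2)}$ follows from the shallow agreement of $src(\psi_2)$ with $tgt(\psi_2)$ (as $\mind{\psi_2} > n$). If $\psi = \icomp \psi_i$, each $\psi_i$ is convergent with $\mind{\psi_i} > n$; using $src(\psi_{i+1}) = tgt(\psi_i)$ together with the shallow agreement of each $src(\psi_i)$ with $tgt(\psi_i)$, all the $tgt(\psi_i)$ coincide down to depth $n$, and since $tgt(\psi) = \lim_i tgt(\psi_i)$ also agrees with them there, the inclusion $\Spa \subseteq \Pos{tgt(\psi)}$ transfers to every $\Pos{tgt(\psi_i)}$, so the induction hypothesis applies to each component with the same $n$.

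If $\psi = f(\psi_1, \ldots, \psi_m)$ is not an \imstep, then either $\Spa = \emptyset$, whence Lem.~\ref{rsl:respects-emptyset} applies directly, or I check the $f$-headed clause: each $\psi_i$ is convergent by Lem.~\ref{rsl:fnsymbol-convergence}, and from $tgt(\psi) = f(tgt(\psi_1), \ldots, tgt(\psi_m))$ each $\proj{\Spa}{i}$ is finite, prefix-closed, and contained in $\Pos{tgt(\psi_i)}$; for those $i$ with $\proj{\Spa}{i} = \emptyset$ I invoke Lem.~\ref{rsl:respects-emptyset}, and for the rest $\proj{\Spa}{i}$ is nonempty with positions of depth $\le n-1$ (forcing $n \ge 1$) while $\mind{\psi_i} > n-1$, so the induction hypothesis gives $\psi_i$ respects $\proj{\Spa}{i}$. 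Finally, a $\mu$-headed non-\imstep has $\mind{\psi} = 0$, so the hypothesis $\mind{\psi} > n \ge 0$ can never hold and the case is vacuous. The main obstacle is precisely the \imstep base case, and within it the observation that the normalising $\tgtt$-reduction keeps all contraction strictly below depth $n$, which is what allows Lem.~\ref{rsl:redseq-mind-big-src-tgt} to bridge the gap between the extended-signature proof term and its object-level target.
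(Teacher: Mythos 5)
Your proof is correct, and while it shares the outer skeleton with the paper's proof (induction on $\psi$ via Prop.~\ref{rsl:pterm-induction-principle}, with the $\mu$-headed case vacuous because $\mind{\psi}=0$), it resolves the crucial \imstep\ base case by a genuinely different argument. The paper handles that case purely structurally: it notes $\epsilon \in \Spa$ forces $\psi = f(\psi_1,\ldots,\psi_m)$, and then runs an \emph{inner} induction on $n$, pushing the hypotheses through the projections $\proj{\Spa}{i}$ with the bound $n-1$ and using $tgt(\psi)=f(tgt(\psi_1),\ldots,tgt(\psi_m))$ — no explicit reasoning about $\tgtt$-reductions appears. You instead argue semantically in one shot: since $\mind{\psi}>n$, no rule symbol (hence no $\tgtt$-redex) can ever appear at depth $\le n$ along any $\tgtt$-reduct of $\psi$, so the normalising reduction has $\mind{}>n$, and Lem.~\ref{rsl:redseq-mind-big-src-tgt} gives $\tdist{\psi}{tgt(\psi)}<2^{-n}$, from which the \imstep\ clause of Dfn.~\ref{dfn:respects} follows by transferring positions and symbols from $tgt(\psi)$. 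This buys you a flatter proof with no nested induction, and your distance-based treatment of the concatenation cases (chaining shallow agreement via Lem.~\ref{rsl:mind-big-then-tdist-little} to push $\Spa$ into $\Pos{tgt(\psi_i)}$ for every component \emph{before} invoking the induction hypothesis) also lets you dispense with Lem.~\ref{rsl:respects-then-src}, which the paper uses to chain right-to-left after applying the induction hypothesis. The cost is that your invariant-preservation claim over the $\tgtt$-reduction is really a transfinite induction whose limit-ordinal stages you only sketch; they do go through (agreement up to depth $n$ passes to limits by Dfn.~\ref{dfn:sred} and Dfn.~\ref{dfn:limit-terms}), but spelling that out would make your base case comparable in length to the paper's, whose structural descent avoids transfinite reasoning entirely by delegating it to the already-established decomposition of targets of convergent \imsteps.
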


\begin{proof}
We proceed by induction on $\psi$, \confer\ Prop.~\ref{rsl:pterm-induction-principle}.

Assume that $\psi$ is an \imstep. 
If $\Spa = \emptyset$ then Lem.~\ref{rsl:respects-emptyset} allows to conclude immediately. 
Otherwise, $\epsilon \in \Spa$, implying $\psi = f(\psi_1, \ldots, \psi_m)$.
We proceed by induction on $n$.
If $n = 0$, then the only set of positions compatible with the lemma hypotheses is $\Spa = \set{\epsilon}$, so that we conclude immediately.
Assume $n = n' + 1$, and let $i$ such that $1 \leq i \leq m$.
It is straightforward to verify that $\mind{\psi_i} > n'$, that $\posln{p} \leq n'$ for all $p \in \proj{\Spa}{i}$, and also that $\proj{\Spa}{i} \subseteq \Pos{tgt(\psi_i)}$ (recall $tgt(\psi) = f(tgt(\psi_1), \ldots tgt(\psi_m)) \,$). 
Therefore, we can apply \ih\ on $\psi_i$, obtaining that $\psi_i$ respects $\proj{\Spa}{i}$, so that $\proj{\Spa}{i} \subseteq \Pos{\psi_i}$, and moreover for any $p \in \proj{\Spa}{i}$, $\psi(i p) = \psi_i(p) \in \Sigma$.
Hence the desired result holds immediately.

Assume $\psi = \psi_1 \comp \psi_2$. 
In this case, $\mind{\psi_i} > n$ for $i = 1,2$, and $\Spa \subseteq \Pos{tgt(\psi)} = \Pos{tgt(\psi_2)}$.
Then \ih\ applies to $\psi_2$ yielding that $\psi_2$ respects $\Spa$. In turn, Lem.~\ref{rsl:respects-then-src} implies $\Spa \subseteq \Pos{src(\psi_2)} = \Pos{tgt(\psi_1)}$. Then \ih\ applies to $\psi_1$ as well, implying that $\psi_1$ respects $\Spa$. Thus we conclude.

Assume $\psi = \icomp \psi_i$.
Observe that $\mind{\psi} > n$ implies $\mind{\psi_i} > n$ for all $i < \omega$.
Let $k$ such that $\tdist{tgt(\psi_i)}{tgt(\psi)} < 2^{-k}$ for all $i > k$.
Let $j > k$. Then $\Spa \subseteq \Pos{tgt(\psi)}$ implies $\Spa \subseteq \Pos{tgt(\psi_j)}$. Then \ih\ can be applied to $\psi_j$ obtaining that $\psi_j$ respects $\Spa$.
In turn, $\psi_{k+1}$ respecting $\Spa$ implies that $\Spa \subseteq \Pos{src(\psi_{k+1})} = \Pos{tgt(\psi_k)}$.
Therefore \ih\ applies also to $\psi_k$, yielding that $\psi_k$ respects $\Spa$, and then Lem.~\ref{rsl:respects-then-src} implies $\Spa \subseteq \Pos{src(\psi_k)} = \Pos{tgt(\psi_{k-1})}$.
Successive application of an analogous argument yields that $\psi_i$ respects $\Spa$ for all $i \leq k$.
Thus we conclude.

If $\psi = f(\psi_1, \ldots, \psi_m)$, then an argument analogous to that given for \imsteps\ applies.

Finally, $\psi = \mu(\psi_1, \ldots, \psi_m)$ contradicts $\mind{\psi} > n$ for any $n < \omega$.
\end{proof}
}

\cfpInsideStandardisation{
\begin{lemma}
\label{rsl:respects-invariant-proj}
Let $\psi = f(\psi_1, \ldots, \psi_m)$, $\Spa$ such that $\Spa \neq \emptyset$ and $\psi$ respects $\Spa$. 
Then for each $i$, $\psi_i$ respect $\proj{\Spa}{i}$.
\end{lemma}

\begin{proof}
Immediate.
\end{proof}
}

\medskip
The \emph{respects} relation is invariant \wrt\ base \peqence.

\cfpInsideCompression{
\begin{lemma}
\label{rsl:respects-invariant-peqe}
Let $\psi$, $\phi$ be convergent proof terms and $\Spa$ a set of positions, such that $\psi \peqe \phi$.
Then $\psi$ respects $\Spa$ iff $\phi$ respects $\Spa$.
\end{lemma}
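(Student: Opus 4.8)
The plan is to proceed by induction on $\alpha$ such that $\psi \layerpeqe{\alpha} \phi$, analysing the equational logic rule used in the last step of the derivation, in the same style as the proof of Lem.~\ref{rsl:peqe-soundness-convergent}. Since the induction hypothesis is itself the desired biconditional, I would carry the biconditional through all cases rather than reduce to one direction. A preliminary observation is needed to make the induction hypothesis applicable at each premise: whenever $\psi \peqe \phi$ with $\psi$ convergent, Lem.~\ref{rsl:peqe-soundness-convergent} propagates convergence to $\phi$, and the well-formedness conditions of Dfn.~\ref{dfn:layer-pterm} together with Lem.~\ref{rsl:fnsymbol-convergence} guarantee that every subterm appearing in a premise (arguments of a function/rule symbol, factors of a composition, components of an infinite concatenation) is again convergent. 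I would record this first so that the induction hypothesis can be invoked freely.

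For the \eqleqn\ case I would treat each basic equation separately. The equation \peqassoc\ is immediate: by Dfn.~\ref{dfn:respects}, both $\psi \comp (\phi \comp \chi)$ and $(\psi \comp \phi) \comp \chi$ respect $\Spa$ exactly when each of $\psi,\phi,\chi$ does. For \peqidleft\ and \peqidright\ the extra factor is $src(\cdot)$ resp.\ $tgt(\cdot)$, a trivial \imstep\ all of whose symbols lie in $\Sigma$; such a term respects $\Spa$ as soon as $\Spa$ is contained in its positions, and this containment is exactly what Lem.~\ref{rsl:respects-then-src} resp.\ Lem.~\ref{rsl:respects-then-tgt} supply once the other factor respects $\Spa$. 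The two rule-symbol equations \peqoutin\ and \peqinout\ are handled uniformly: any proof term of the form $\mu(\ldots)$ can respect $\Spa$ only if $\Spa = \emptyset$, because its root symbol is not in $\Sigma$ while a non-empty prefix-closed $\Spa$ contains $\epsilon$; and by Lem.~\ref{rsl:respects-emptyset} every proof term respects $\emptyset$. Hence on both sides of these equations ``respects $\Spa$'' collapses to ``$\Spa = \emptyset$''. For \peqstruct\ I would apply Lem.~\ref{rsl:respect-fnsymbol-coherence} to each side, reducing both to the condition ``$\Spa = \emptyset$, or $\psi_i$ and $\phi_i$ respect $\proj{\Spa}{i}$ for all $i$''.

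The delicate case is \peqinfstruct, $\icomp f(\psi^1_i,\ldots,\psi^m_i) \eqnpeq f(\icomp \psi^1_i, \ldots, \icomp \psi^m_i)$. For the left-hand side I would use the infinite-concatenation clause of Dfn.~\ref{dfn:respects} (all components respect $\Spa$) followed by Lem.~\ref{rsl:respect-fnsymbol-coherence}, obtaining, for $\Spa \neq \emptyset$, the condition ``$\psi^j_i$ respects $\proj{\Spa}{j}$ for all $i$ and all $j$''; for the right-hand side I would use Lem.~\ref{rsl:respect-fnsymbol-coherence} first and then the infinite-concatenation clause on each $\icomp \psi^j_i$, obtaining ``$\psi^j_i$ respects $\proj{\Spa}{j}$ for all $j$ and all $i$''. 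These two conditions coincide after exchanging the two universal quantifiers, while the case $\Spa = \emptyset$ is dispatched by Lem.~\ref{rsl:respects-emptyset}. I expect this to be the only step demanding genuine care, precisely because Lem.~\ref{rsl:respect-fnsymbol-coherence} places the emptiness alternative at different positions on the two sides, so the $\Spa = \emptyset$ and $\Spa \neq \emptyset$ branches must be kept carefully apart.

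Finally, the closure rules are routine. The rule \eqlrefl\ is trivial; \eqlsymm\ and \eqltrans\ follow by chaining the induction-hypothesis biconditionals, with Lem.~\ref{rsl:peqe-soundness-convergent} ensuring that the intermediate term of \eqltrans\ is convergent. For \eqlfun\ I would apply Lem.~\ref{rsl:respect-fnsymbol-coherence} to both $f(\psi_1,\ldots,\psi_n)$ and $f(\phi_1,\ldots,\phi_n)$ and then the induction hypothesis to each pair $\psi_i, \phi_i$ against $\proj{\Spa}{i}$. The rule \eqlrule\ is immediate, since both sides carry a rule symbol at the root and hence respect $\Spa$ only when $\Spa = \emptyset$. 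Lastly \eqlcomp\ and \eqlinfcomp\ follow directly from the binary- resp.\ infinite-concatenation clauses of Dfn.~\ref{dfn:respects} together with the induction hypothesis applied componentwise. Apart from the \peqinfstruct\ step, the whole argument is mechanical bookkeeping once the convergence preconditions have been established.
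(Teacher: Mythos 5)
Your proposal is correct and follows essentially the same route as the paper's proof: induction on the layer of the $\peqe$ judgement with a case analysis on the last rule, using Lem.~\ref{rsl:respects-then-src}/Lem.~\ref{rsl:respects-then-tgt} for the identity equations, Lem.~\ref{rsl:respect-fnsymbol-coherence} and the clauses of Dfn.~\ref{dfn:respects} for the structural equations, the ``rule symbol at the root forces $\Spa = \emptyset$'' observation for \peqoutin/\peqinout/\eqlrule, and componentwise induction hypotheses for the closure rules. Your explicit convergence bookkeeping (via Lem.~\ref{rsl:peqe-soundness-convergent}) and the spelled-out quantifier exchange in the \peqinfstruct\ case are details the paper leaves implicit, but they do not change the argument.
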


\begin{proof}
We proceed by induction on $\alpha$ where $\psi \layerpeqe{\alpha} \phi$, analysing the rule used in the last step of that judgement.

If the rule is \eqlrefl, then we conclude immediately.

If the rule is \eqleqn, then we analyse the equation used.
\begin{itemize}
\item 
\peqidleft\ or \peqidright, \ie\ $\psi = src(\phi) \comp \phi \ $ or $\ \psi = \phi \comp tgt(\phi)$.
The $\Rightarrow )$ direction is immediate. 
For the $\Leftarrow )$ direction, observe that Lem.~\ref{rsl:respects-then-src} and Lem.~\ref{rsl:respects-then-tgt} imply $\Spa \subseteq \Pos{src(\phi)}$ and $\Spa \subseteq \Pos{tgt(\phi)}$ respectively. 
Then Dfn.~\ref{dfn:respects} for \imsteps\ implies immediately that both $src(\phi)$ and $tgt(\phi)$ respect $\Spa$. Thus we conclude.

\item
\peqassoc, \ie\ $\psi = \chi_1 \comp (\chi_2 \comp \chi_3)$ and $\phi = (\chi_1 \comp \chi_2) \comp \chi_3$. In this case either $\psi$ or $\phi$ respects $\Spa$ iff $\chi_1$, $\chi_2$ and $\chi_3$ do. Thus we conclude.

\item
\peqstruct, \ie\ $\psi = f(\chi_1, \ldots, \chi_m) \comp f(\xi_1, \ldots, \xi_m)$ and $\phi = f(\chi_1 \comp \xi_1, \ldots, \chi_m \comp \xi_m)$.
If $\Spa = \emptyset$, then both $\psi$ and $\phi$ respect $\Spa$; \confer\ Lem.~\ref{rsl:respects-emptyset}.
Otherwise \\
$\psi$ respects $\Spa$ \\
\begin{tabular}{@{\hspace*{1cm}}l}
iff both $f(\chi_1, \ldots, \chi_m)$ and $f(\xi_1, \ldots, \xi_m)$ do \\
iff for all $j$ such that $1 \leq j \leq m$, both $\chi_j$ and $\xi_j$ respect $\proj{\Spa}{j}$ \\
iff for all $j$ such that $1 \leq j \leq m$, $\chi_j \comp \xi_j$ respects $\proj{\Spa}{j}$ \\
iff $\phi$ respects $\Spa$.
\end{tabular} \\
Thus we conclude.

\item
\peqinfstruct. This case admits an argument analogous to the one used for \peqstruct.

\item
\peqoutin\ and \peqinout. In this case, it is immediate that either $\psi$ or $\phi$ respects $\Spa$ iff $\Spa = \emptyset$.
\end{itemize}

If the rule used in the last step of the judgement $\psi \layerpeqe{\alpha} \phi$ is \eqlsymm, \eqltrans, \eqlfun, \eqlcomp\ or \eqlinfcomp, then a straightforward inductive arguments suffices to obtain the desired result.

Finally, if the rule is \eqlrule, then it is immediate to verify that either $\psi$ or $\phi$ respect $\Spa$ iff $\Spa = \emptyset$.
\end{proof}
}

\cfpInsideStandardisation{
\begin{lemma}
\label{rsl:respects-invariant-peqe}
Let $\psi$, $\phi$ be convergent proof terms and $\Spa$ a set of positions, such that $\psi \peqe \phi$ and $\psi$ respects $\Spa$.
Then $\phi$ respects $\Spa$.
\end{lemma}

\begin{proof}
Lem.~\ref{rsl:peq-then-same-src-mind-tgt} implies immediately $\Spa \subseteq \Pos{\phi}$.
Assume for contradiction that $\frso{\phi}{r}$ is defined for some $r \in \Spa$.
Then L.~\ref{rsl:frso-peqe} implies that $\frso{\psi}{r'}$ is defined for some $r' \leq r$, so that $r' \in \Spa$ (recall $\Spa$ is prefix-closed), contradicting lemma hypotheses. Thus we conclude.
\end{proof}
}

\medskip
Observe that proof terms whose minimum activity depth is greater than 0 are exactly those which respect $\set{\epsilon}$. Lem.~\ref{rsl:peq-then-same-src-mind-tgt} implies this condition to be stable by \peqence.
For such proof terms, we define their \emph{condensed-to-fixed-prefix-symbol form}, which is a proof term denoting the same activity as the original proof term, and having a function symbol at the root. 
\Eg\ the condensed-to-fixed-prefix-symbol form of $f(\mu(a)) \comp f(\nu(a))$ is $f(\mu(a) \comp \nu(a))$. The condensed-to-fixed-prefix-symbol form of an already condensed proof term is itself, so that it is idempotent.

\begin{lemma}
\label{rsl:respects-epsilon-then-stable-root}
Let $\psi$ a convergent proof term which respects $\set{\epsilon}$. 
Then $src(\psi)(\epsilon) = tgt(\psi)(\epsilon)$.
\end{lemma}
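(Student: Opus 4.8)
The plan is to prove the statement by induction on $\psi$, using the simple induction principle of Prop.~\ref{rsl:pterm-induction-principle}, and in each of the five cases to read off from Dfn.~\ref{dfn:respects} which clause can justify that a convergent $\psi$ respects $\set{\epsilon}$. Two of the cases are essentially immediate. If $\psi = f(\psi_1, \ldots, \psi_m)$ is not an \imstep, then since $\set{\epsilon} \neq \emptyset$ the only applicable clause of Dfn.~\ref{dfn:respects} asks that each $\psi_i$ respect $\proj{\set{\epsilon}}{i} = \emptyset$ (which holds by Lem.~\ref{rsl:respects-emptyset}); but what matters is only that Dfn.~\ref{dfn:layer-pterm} case~\ref{rule:ptsymbol} gives $src(\psi) = f(src(\psi_1), \ldots, src(\psi_m))$ and $tgt(\psi) = f(tgt(\psi_1), \ldots, tgt(\psi_m))$, so both have head symbol $f$ and the conclusion is direct, with no appeal to the \ih. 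The case $\psi = \mu(\psi_1, \ldots, \psi_m)$ with $\psi$ not an \imstep\ is vacuous: the unique clause of Dfn.~\ref{dfn:respects} for such a $\psi$ requires $\Spa = \emptyset$, which is incompatible with $\Spa = \set{\epsilon}$, contradicting the hypothesis that $\psi$ respects $\set{\epsilon}$.

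For the \imstep\ base case, $\psi$ respecting $\set{\epsilon}$ forces $\psi(\epsilon) = f \in \Sigma$ by the first clause of Dfn.~\ref{dfn:respects}. I would then argue that the $src_T$-reduction computing $src(\psi)$ (\confer\ Dfn.~\ref{dfn:src-tgt-imstep}) has minimum activity depth $> 0$, since no rule of $src_T$ carries a function symbol of $\Sigma$ at the head of its left-hand side, so no root step is possible while the root remains $f$; hence Lem.~\ref{rsl:redseq-mind-big-src-tgt} with $n = 0$ yields $\tdist{\psi}{src(\psi)} < 2^{-0} = 1$, which by Dfn.~\ref{dfn:distance} means $src(\psi)(\epsilon) = \psi(\epsilon) = f$. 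The identical argument applied to the $tgt_T$-reduction (well defined because $\psi$ is convergent, \confer\ Dfn.~\ref{dfn:imstep-convergence}) gives $tgt(\psi)(\epsilon) = f$, so the two agree. For the binary composition case $\psi = \psi_1 \comp \psi_2$, both $\psi_1$ and $\psi_2$ respect $\set{\epsilon}$, and well-formedness together with convergence of $\psi$ makes both $\psi_1$ and $\psi_2$ convergent; applying the \ih\ to each and using $src(\psi) = src(\psi_1)$, $tgt(\psi) = tgt(\psi_2)$ and the coherence condition $tgt(\psi_1) = src(\psi_2)$ gives the chain $src(\psi)(\epsilon) = src(\psi_1)(\epsilon) = tgt(\psi_1)(\epsilon) = src(\psi_2)(\epsilon) = tgt(\psi_2)(\epsilon) = tgt(\psi)(\epsilon)$.

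The main obstacle is the infinite composition $\psi = \icomp \psi_i$, where the head symbol of $tgt(\psi)$ is only determined through a limit. Here all $\psi_i$ respect $\set{\epsilon}$ and, by well-formedness, are convergent, so the \ih\ gives $src(\psi_i)(\epsilon) = tgt(\psi_i)(\epsilon)$ for every $i$; combining this with the coherence conditions $tgt(\psi_i) = src(\psi_{i+1})$ shows that the head symbol is a single constant $f$ across the whole sequence, so $src(\psi)(\epsilon) = src(\psi_0)(\epsilon) = f$ and $tgt(\psi_i)(\epsilon) = f$ for all $i$. Since $\psi$ is convergent, $tgt(\psi) = \lim_{i \to \omega} tgt(\psi_i)$ is defined (Lem.~\ref{rsl:mind-big-then-tdist-little}:(\ref{it:convergent-then-has-tgt})), and instantiating Dfn.~\ref{dfn:limit-terms} at $p = 0$ produces an index $k$ with $\tdist{tgt(\psi_j)}{tgt(\psi)} < 2^{-0} = 1$ for $j > k$, whence $tgt(\psi)(\epsilon) = tgt(\psi_j)(\epsilon) = f$. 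This closes the case and the induction. The only delicate point is precisely this transfer of the common root symbol through the metric limit, which is why I expect the infinite concatenation case to carry the real content of the argument, all other cases being structural bookkeeping.
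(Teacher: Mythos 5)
Your proof is correct and follows essentially the same route as the paper's: induction on $\psi$ via Prop.~\ref{rsl:pterm-induction-principle}, with the binary case resolved by chaining $src(\psi_1)(\epsilon) = tgt(\psi_1)(\epsilon) = src(\psi_2)(\epsilon) = tgt(\psi_2)(\epsilon)$ and the infinite concatenation case resolved by propagating a common root symbol along the coherence conditions and then instantiating the limit at distance $2^{-0}$. The only difference is that you spell out the \imstep\ base case (via $\mind{} > 0$ for the $src_T$/$tgt_T$ reductions and Lem.~\ref{rsl:redseq-mind-big-src-tgt}), which the paper folds into its ``$\psi = f(\psi_1, \ldots, \psi_m)$ is immediate'' remark; this is a welcome addition of rigor, not a different argument.
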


\begin{proof}
We proceed by induction on $\psi$, \confer\ Prop.~\ref{rsl:pterm-induction-principle}. 
If $\psi = f(\psi_1, \ldots, \psi_m)$ then the result holds immediately, while $\psi = \mu(\psi_1, \ldots, \psi_m)$ contradicts the lemma hypotheses.

If $\psi = \psi_1 \comp \psi_2$ and the result holds for both components, then lemma hypotheses imply that both $\psi_1$ and $\psi_2$ respect $\set{\epsilon}$, so that $src(\psi_j)(\epsilon) = tgt(\psi_j)(\epsilon)$ for $j = 1,2$. Observe $src(\psi) = src(\psi_1)$, $tgt(\psi) = tgt(\psi_2)$, and moreover $tgt(\psi_1) = src(\psi_2)$ (by the coherence condition on the definition of $\psi$). Thus we conclude immediately.

Assume $\psi = \icomp \psi_i$ and the result holds for each $\psi_i$. For any $i < \omega$, lemma hypotheses imply that $\psi_i$ respects $\set{\epsilon}$, and therefore $src(\psi_i)(\epsilon) = tgt(\psi_i)(\epsilon)$.
Given $tgt(\psi_i) = src(\psi_{i+1})$ for all $i < \omega$, an easy inductive argument yields $src(\psi)(\epsilon) = src(\psi_0)(\epsilon) = tgt(\psi_i)(\epsilon)$ for any $i < \omega$. Let $n$ such that $\tdist{tgt(\psi_k)}{tgt(\psi)} < 1$ if $k > n$; recall $tgt(\psi) = \lim_{i \to \omega}(tgt(\psi_i))$.
Then $tgt(\psi)(\epsilon) = tgt(\psi_{n+1})(\epsilon) = src(\psi)(\epsilon)$. Thus we conclude.
\end{proof}

\begin{definition}
\label{dfn:cfps}
Let $\psi$ be a proof term which respects $\set{\epsilon}$. 
We define $\cfps{\psi}$, \ie\ the \emph{condensed to fixed prefix symbol form} of $\psi$, as follows. \\[2pt]
\begin{tabular}{@{}l@{}l@{\ \ }l@{\ \ }l}
$\ \ \bullet \ \ $ & 
if $\psi = f(\psi_1, \ldots, \psi_n)$ & then & 
		$\cfps{\psi} \eqdef \psi$. \\
$\ \ \bullet \ \ $ & 
if $\psi = \psi_1 \comp \psi_2$ & then & 
		$\cfps{\psi} \eqdef f(\psi_{11} \comp \psi_{21}, \ldots, \psi_{1m} \comp \psi_{2m})$ \\
		& & & where $\cfps{\psi_i} = f(\psi_{i1}, \ldots, \psi_{im})$ for $i = 1,2$ \\
$\ \ \bullet \ \ $ & 
if $\psi = \icomp \psi_i$ & then & 
		$\cfps{\psi} \eqdef f(\icomp \psi_{i1}, \ldots, \icomp \psi_{im})$ \\
		& & & where $\cfps{\psi_i} = f(\psi_{i1}, \ldots, \psi_{im})$ for all $i < \omega$. \\
$\ \ \bullet \ \ $ & 
\multicolumn{3}{@{}l}{$\psi = \mu(\psi_1, \ldots, \psi_m)$ contradicts $\psi$ respecting $\set{\epsilon}$.}
\end{tabular} \\[2pt]
Lem.~\ref{rsl:respects-epsilon-then-stable-root} implies the soundness of the clauses corresponding to both binary and infinite concatenation.
\end{definition}

\medskip
Condensed-to-fixed-prefix-symbol forms enjoy some properties related with base \peqence\ and the \emph{respects} relation. In turn, these properties allow a simple proof of the extension of Lem.~\ref{rsl:respects-epsilon-then-stable-root} to arbitrary finite and prefix-closed sets of positions.

\begin{lemma}
\label{rsl:cfps-peqe}
Let $\psi$ be a proof term which respects $\set{\epsilon}$.
Then $\psi \peqe \cfps{\psi}$.
\end{lemma}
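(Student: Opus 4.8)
The plan is to prove the statement by induction on the structure of $\psi$, using the simple induction principle of Prop.~\ref{rsl:pterm-induction-principle}. Concretely, I would establish the predicate $P(\psi)$ asserting ``if $\psi$ respects $\set{\epsilon}$ then $\psi \peqe \cfps{\psi}$''; the five clauses of that principle correspond exactly to the five clauses in the definition of $\cfps{\cdot}$ (Dfn.~\ref{dfn:cfps}), so each case of the induction matches one case of the definition. This keeps the proof modular, since I only ever have to combine an induction hypothesis with one equational rule and one basic equation.

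For the base-like cases---$\psi$ a multistep (clause 1) and $\psi = f(\psi_1, \ldots, \psi_n)$ with $f \in \Sigma$ (clause 4)---I observe that whenever $\psi$ respects $\set{\epsilon}$, its root symbol $\psi(\epsilon)$ lies in $\Sigma$ (for multisteps this is immediate from Dfn.~\ref{dfn:respects}), so $\psi = f(\psi_1, \ldots, \psi_n)$. Hence $\cfps{\psi} = \psi$ by the first clause of Dfn.~\ref{dfn:cfps}, and $\psi \peqe \cfps{\psi}$ holds by the \eqlrefl\ rule. The rule-symbol case (clause 5), $\psi = \mu(\psi_1, \ldots, \psi_n)$, never respects $\set{\epsilon}$ (the corresponding clause of Dfn.~\ref{dfn:respects} forces the set of positions to be $\emptyset$), so $P(\psi)$ holds vacuously.

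The two concatenation cases are where the actual work lies. If $\psi = \psi_1 \comp \psi_2$ respects $\set{\epsilon}$, then both $\psi_1$ and $\psi_2$ respect $\set{\epsilon}$ (Dfn.~\ref{dfn:respects}), so the induction hypotheses give $\psi_1 \peqe \cfps{\psi_1} = f(\psi_{11}, \ldots, \psi_{1m})$ and $\psi_2 \peqe \cfps{\psi_2} = f(\psi_{21}, \ldots, \psi_{2m})$. Here I must check that the two condensed forms share the same root symbol and arity: by Lem.~\ref{rsl:respects-epsilon-then-stable-root} the root symbol is preserved by each component, and the coherence condition $tgt(\psi_1) = src(\psi_2)$ of a well-formed composition identifies the root of $\cfps{\psi_2}$ with that of $\cfps{\psi_1}$. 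I then chain $\psi \peqe \cfps{\psi_1} \comp \cfps{\psi_2}$ by \eqlcomp, and $\cfps{\psi_1} \comp \cfps{\psi_2} \peqe f(\psi_{11} \comp \psi_{21}, \ldots, \psi_{1m} \comp \psi_{2m}) = \cfps{\psi}$ by the \peqstruct\ equation, concluding via \eqltrans. The infinite case $\psi = \icomp \psi_i$ is analogous: each $\psi_i$ respects $\set{\epsilon}$, the induction hypothesis gives $\psi_i \peqe \cfps{\psi_i} = f(\psi_{i1}, \ldots, \psi_{im})$ with a common root $f$, and I combine \eqlinfcomp\ (to get $\psi \peqe \icomp \cfps{\psi_i}$) with the \peqinfstruct\ equation (to rewrite $\icomp f(\psi_{i1}, \ldots, \psi_{im})$ as $f(\icomp \psi_{i1}, \ldots, \icomp \psi_{im}) = \cfps{\psi}$).

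I expect the main obstacle to be the bookkeeping around well-formedness and the matching of root symbols: to apply \peqstruct\ and \peqinfstruct\ I need the component compositions $\psi_{1j} \comp \psi_{2j}$ (resp.\ $\icomp \psi_{ij}$) to be genuine proof terms and the root symbols of all the $\cfps{\psi_i}$ to coincide. Both facts are precisely the content that makes Dfn.~\ref{dfn:cfps} well-posed (its soundness already appeals to Lem.~\ref{rsl:respects-epsilon-then-stable-root}), so the argument reduces to invoking that soundness together with Lem.~\ref{rsl:peqe-soundness-convergent}, which guarantees that well-formedness is preserved along the $\peqe$ steps used in the chaining.
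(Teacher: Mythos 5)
Your proposal is correct and follows essentially the same route as the paper: the paper's proof is precisely an ``easy induction on $\psi$'' that, for the infinitary composition case, resorts to the \eqlinfcomp\ rule together with the \peqinfstruct\ equation, exactly as you do (with \eqlcomp\ and \peqstruct\ in the binary case, and reflexivity in the cases where $\cfps{\psi} = \psi$). Your additional bookkeeping about root-symbol matching and well-formedness is just an explicit spelling-out of the soundness remarks already attached to Dfn.~\ref{dfn:cfps}, so nothing differs in substance.
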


\begin{proof}
Easy induction on $\psi$. For the infinitary composition case, resort to the \eqlinfcomp\ rule and the \peqinfstruct\ equation, \confer\ Dfn.~\ref{dfn:peqe}.
\end{proof}

\begin{lemma}
\label{rsl:peqe-then-cfps-proj-peqe}
Let $\psi$, $\phi$ be proof terms such that $\psi \peqe \phi$ and $\psi$, $\phi$ respect $\set{\epsilon}$.
Let $\cfps{\psi} = f(\psi_1, \ldots, \psi_m)$ and $\cfps{\phi} = f'(\phi_1, \ldots, \phi_{m'})$.
Then $f = f' = src(\psi)(\epsilon)$, so that $m = m'$, and $\psi_i \peqe \phi_i$ for each $i$ between 1 and $m$.
\end{lemma}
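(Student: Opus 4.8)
The plan is to dispose of the root symbol and arity first (non-inductively), and then reduce the substantive claim $\psi_i \peqe \phi_i$ to an induction on the layer of the base \peqence\ judgement. For the root: by Lem.~\ref{rsl:cfps-peqe} we have $\psi \peqe \cfps{\psi}$ and $\phi \peqe \cfps{\phi}$, and since $\cfps{\psi}, \cfps{\phi}$ are defined, $\psi$ and $\phi$ are convergent (cf.\ the soundness remark after Dfn.~\ref{dfn:cfps}). Thus Lem.~\ref{rsl:peq-then-same-src-mind-tgt}, applied via $\peqe \subseteq \peq$, gives $src(\psi) = src(\cfps{\psi})$ and $src(\phi) = src(\cfps{\phi})$. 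As $\cfps{\psi} = f(\psi_1, \ldots, \psi_m)$ has source $f(src(\psi_1), \ldots, src(\psi_m))$, and similarly for $\phi$, we read off $f = src(\psi)(\epsilon)$ and $f' = src(\phi)(\epsilon)$. Finally $\psi \peqe \phi$ forces $src(\psi) = src(\phi)$ by the same lemma, so $f = f'$; being the same function symbol they have equal arity, whence $m = m'$.

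For the arguments I would prove the following claim by induction on $\alpha$ where $\chi \layerpeqe{\alpha} \chi'$: \emph{for convergent proof terms $\chi,\chi'$ both respecting $\set{\epsilon}$, writing $\cfps{\chi} = g(\chi_1, \ldots, \chi_k)$ and $\cfps{\chi'} = g(\chi'_1, \ldots, \chi'_k)$ (same root and arity, by the paragraph above applied to $\chi \peqe \chi'$), one has $\chi_i \peqe \chi'_i$ for every $i \le k$.} Applying this to $\psi \peqe \phi$ then yields the lemma. The proof is a case analysis on the last rule of the derivation, and two observations keep every case in range. First, respecting $\set{\epsilon}$ propagates along the derivation: by Lem.~\ref{rsl:respects-invariant-peqe} each intermediate term arising in an \eqltrans\ step, and each component split off in a congruence step, again respects $\set{\epsilon}$ and is convergent (Lem.~\ref{rsl:peqe-soundness-convergent}), so its condensed form is defined and the induction hypothesis applies. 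Second, the equations \peqoutin, \peqinout\ and the rule \eqlrule\ all introduce root rule-symbol activity, producing terms of minimum activity depth $0$, which cannot respect $\set{\epsilon}$; these cases are therefore vacuous.

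The surviving cases unfold straight from Dfn.~\ref{dfn:cfps}. The equations \peqstruct\ and \peqinfstruct\ look most delicate but collapse to literal equalities after condensing, closed by \eqlrefl: for \peqstruct, both $\cfps{f(\xi_1,\ldots,\xi_m)\comp f(\eta_1,\ldots,\eta_m)}$ and $\cfps{f(\xi_1\comp\eta_1,\ldots,\xi_m\comp\eta_m)}$ have $i$-th argument $\xi_i \comp \eta_i$, and analogously for \peqinfstruct\ with $\icomp$. The identity and associativity equations reduce componentwise to instances of \peqidleft, \peqidright\ and \peqassoc\ on the arguments: for \peqidleft, $\chi = src(\chi')\comp\chi'$ condenses to $i$-th argument $src(\chi'_i)\comp\chi'_i$, using that $src(\chi')=src(\cfps{\chi'})=f(src(\chi'_1),\ldots,src(\chi'_m))$, which is exactly a \peqidleft\ instance over $\chi'_i$ (and Lem.~\ref{rsl:fnsymbol-convergence} supplies the convergence side-condition needed for the symmetric \peqidright\ case). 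The congruence rules \eqlfun, \eqlcomp\ and \eqlinfcomp\ simply push the induction hypothesis through the matching constructor, since $(\cfps{\zeta_1\comp\zeta_2})_i = (\cfps{\zeta_1})_i \comp (\cfps{\zeta_2})_i$ and likewise for infinite concatenation; for \eqlfun\ the premises hand us the argument equivalences directly.

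I expect the main obstacle to be not any individual case but the global bookkeeping: ensuring that \emph{every} term occurring in the derivation still respects $\set{\epsilon}$ and remains convergent, so that $\cfps{}$ and the projection onto the $i$-th argument stay well defined throughout the induction. This is precisely what Lem.~\ref{rsl:respects-invariant-peqe} (invariance of \emph{respects} under $\peqe$) and Lem.~\ref{rsl:peqe-soundness-convergent} (invariance of convergence) guarantee, so the argument should go through once these invariants are invoked systematically at each recursive call.
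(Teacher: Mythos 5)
Your proof is correct and follows essentially the same route as the paper's: root symbol and arity via Lem.~\ref{rsl:cfps-peqe} together with Lem.~\ref{rsl:peq-then-same-src-mind-tgt}, then induction on the layer of the $\peqe$ judgement with a case analysis on the last rule, discharging \peqoutin, \peqinout\ and \eqlrule\ as incompatible with respecting $\set{\epsilon}$, collapsing \peqstruct/\peqinfstruct\ by Dfn.~\ref{dfn:cfps}, and pushing the induction hypothesis through the congruence rules. Your explicit invocation of Lem.~\ref{rsl:respects-invariant-peqe} and Lem.~\ref{rsl:peqe-soundness-convergent} to keep intermediate terms (e.g.\ in the \eqltrans\ case) within the scope of the induction is exactly the bookkeeping the paper leaves implicit.
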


\begin{proof}
Lem.~\ref{rsl:cfps-peqe} and the hypotheses imply $\psi \peqe \cfps{\psi} \peqe \cfps{\phi}$, then Lem.~\ref{rsl:peq-then-same-src-mind-tgt} yields $f = f' = src(\psi)(\epsilon)$, and therefore $m = m'$.
We prove $\psi_i \peqe \phi_i$ for all $i$ by induction on $\alpha$ where $\psi \layerpeqe{\alpha} \phi$, analysing the rule used in the last step of that judgement.

\begin{itemize}
\item 
\eqlrefl: we conclude immediately.

\item 
\eqleqn: we analyse each of the equations.
	\begin{itemize}
	\item 
	\peqidleft: let $src(\phi) = f(t_1, \ldots, t_m)$ where $t_i = src(\phi_i)$ for all $i$; \confer\ Lem.~\ref{rsl:cfps-peqe} and Lem.~\ref{rsl:peq-then-same-src-mind-tgt}. 
	Then $\psi = f(t_1, \ldots, t_m) \comp \phi$, so that $\cfps{\psi} = f(t_1 \comp \phi_1, \ldots, t_m \comp \phi_m)$. Thus we conclude.

	\item
	\peqidright: an analogous argument applies.

	\item
	\peqassoc: in this case $\psi = \xi \comp (\gamma \comp \chi)$ and $\phi = (\xi \comp \gamma) \comp \chi$.
	Let $\cfps{\xi} = f(\xi_1, \ldots, \xi_m)$, $\cfps{\gamma} = f(\gamma_1, \ldots, \gamma_m)$ and $\cfps{\chi} = f(\chi_1, \ldots, \chi_m)$; \confer\ Lem.~\ref{rsl:cfps-peqe} (implying $f = src(\psi)(\epsilon) = src(\xi)(\epsilon) = src(\cfps{\xi})(\epsilon)$) and Lem.~\ref{rsl:respects-epsilon-then-stable-root}.
	Then for any $i \leq m$, $\psi_i = \xi_i \comp (\gamma_i \comp \xi_i)$ and $\phi_i = (\xi_i \comp \gamma_i) \comp \chi_i$. Thus we conclude immediately.

	\item
	\peqstruct\ and \peqinfstruct: in either of these cases Dfn.~\ref{dfn:cfps} allows to conclude immediately.

	\item
	\peqoutin\ and \peqinout: either of these cases contradict $\psi, \phi$ to respect $\set{\epsilon}$.
	\end{itemize}

\item
\eqlsymm\ or \eqltrans: a simple inductive argument applies.

\item
\eqlfun: the hypotheses of the \eqlfun\ rule are enough to conclude immediately.

\item
\eqlrule: this case would imply that neither $\psi$ nor $\phi$ respect $\set{\epsilon}$, thus contradicting lemma hypotheses.

\item
\eqlcomp: in this case, 
$\psi = \chi \comp \xi$, $\phi = \gamma \comp \delta$, $\chi \layerpeqe{\alpha_1} \gamma$, $\xi \layerpeqe{\alpha_2} \delta$, $\alpha_1 < \alpha$ and $\alpha_2 < \alpha$.
Let $\cfps{\chi} = f(\chi_1, \ldots, \chi_m)$, $\cfps{\xi} = f(\xi_1, \ldots, \xi_m)$, $\cfps{\gamma} = f(\gamma_1, \ldots, \gamma_m)$ and $\cfps{\delta} = f(\delta_1, \ldots, \delta_m)$.
Let $i$ such that $1 \leq i \leq m$.
Observe $\psi_i = \chi_i \comp \xi_i$ and $\phi_i = \gamma_i \comp \delta_i$.
On the other hand, \ih\ implies $\chi_i \peqe \gamma_i$ and $\xi_i \peqe \delta_i$.
Thus we conclude.

\item
\eqlinfcomp: an analogous argument applies. 
In this case, $\psi = \icomp \psi_i$, $\phi = \icomp \phi_i$, and for any $i < \omega$, $\psi_i \layerpeqe{\alpha_i} \phi_i$ where $\alpha_i < \alpha$.
Let $\cfps{\psi_i} = f(\psi^1_i, \ldots, \psi^m_i)$ and $\cfps{\phi_i} = f(\phi^1_i, \ldots, \phi^m_i)$.
Let $j$ such that $1 \leq j \leq m$.
Then $\psi_j = \icomp \psi^j_i$ and $\phi_j = \icomp \phi^j_i$.
\Ih\ on each $\psi_i \layerpeqe{\alpha_i} \phi_i$ yields $\psi^j_i \peqe \phi^j_i$. Thus we conclude.
\end{itemize}
\end{proof}

\begin{lemma}
\label{rsl:cfps-equals-src-tgt}
Let $\psi$ be a proof term such that $\psi$ respects $\set{\epsilon}$.
Then $\cfps{\psi}(\epsilon) = src(\psi)(\epsilon) = tgt(\psi)(\epsilon)$.
\end{lemma}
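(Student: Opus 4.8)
The plan is to obtain the statement almost immediately from the two lemmas that directly precede it, once the convergence of $\psi$ is pinned down. Because the conclusion mentions $tgt(\psi)$, the paper's convention (\confer\ Remark~\ref{rmk:tgt-mentioned-then-defined}) tells us that $\psi$ is convergent, and this is exactly what makes the available machinery about $tgt$ and about \peqence-invariance applicable. I would split the claimed double equality into two halves, $\cfps{\psi}(\epsilon) = src(\psi)(\epsilon)$ and $src(\psi)(\epsilon) = tgt(\psi)(\epsilon)$, and treat each by citing one earlier result.

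For the first half, the key move is to instantiate Lemma~\ref{rsl:peqe-then-cfps-proj-peqe} at $\phi \eqdef \psi$. Reflexivity of base \peqence\ (rule \eqlrefl) gives $\psi \peqe \psi$, and $\psi$ respects $\set{\epsilon}$ by hypothesis, so both premises of that lemma hold. Writing $\cfps{\psi} = f(\psi_1, \ldots, \psi_m)$, its conclusion yields $f = src(\psi)(\epsilon)$, which is precisely $\cfps{\psi}(\epsilon) = src(\psi)(\epsilon)$. For the second half I would apply Lemma~\ref{rsl:respects-epsilon-then-stable-root} directly: $\psi$ is convergent and respects $\set{\epsilon}$, hence $src(\psi)(\epsilon) = tgt(\psi)(\epsilon)$. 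Chaining the two equalities gives $\cfps{\psi}(\epsilon) = src(\psi)(\epsilon) = tgt(\psi)(\epsilon)$, as desired.

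I expect no genuine mathematical obstacle here; the only delicate point is bookkeeping around convergence, namely making sure $\psi$ is convergent before invoking Lemma~\ref{rsl:respects-epsilon-then-stable-root} (which requires it) and Lemma~\ref{rsl:peqe-then-cfps-proj-peqe} (whose own proof rests on Lemma~\ref{rsl:peq-then-same-src-mind-tgt}, stated for convergent proof terms). Since the mention of $tgt(\psi)$ secures convergence, this is unproblematic.

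As a self-contained alternative that avoids reusing Lemma~\ref{rsl:peqe-then-cfps-proj-peqe}, I could instead argue directly: Lemma~\ref{rsl:cfps-peqe} gives $\psi \peqe \cfps{\psi}$, hence $\psi \peq \cfps{\psi}$; Lemma~\ref{rsl:peqe-soundness-convergent} then shows $\cfps{\psi}$ is convergent, and Lemma~\ref{rsl:peq-then-same-src-mind-tgt} gives $src(\psi) = src(\cfps{\psi})$ and $tgt(\psi) = tgt(\cfps{\psi})$. By Definition~\ref{dfn:cfps} the term $\cfps{\psi}$ carries a function symbol $f \in \Sigma$ at its root, and the root symbol of any function-symbol-headed proof term is preserved by both $src$ and $tgt$: for a multistep this holds because $f$ occurs in no left-hand side of $src_T$ or $tgt_T$, so every step of the normalising reduction has positive depth and hence does not touch the root (\confer\ Lemma~\ref{rsl:redseq-mind-big-src-tgt}), and for a non-multistep it is immediate from case~\ref{rule:ptsymbol} of Definition~\ref{dfn:layer-pterm}. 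Therefore $\cfps{\psi}(\epsilon) = src(\cfps{\psi})(\epsilon) = src(\psi)(\epsilon)$ and likewise $\cfps{\psi}(\epsilon) = tgt(\cfps{\psi})(\epsilon) = tgt(\psi)(\epsilon)$, which again yields the claim.
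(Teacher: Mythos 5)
Your first argument is exactly the paper's proof: the paper also derives the statement as an immediate consequence of Lemma~\ref{rsl:peqe-then-cfps-proj-peqe} (instantiated reflexively at $\phi = \psi$) together with Lemma~\ref{rsl:respects-epsilon-then-stable-root}, and your handling of the implicit convergence hypothesis is the right reading of the statement. The self-contained alternative you sketch is also sound, but it is not needed; the main route suffices and coincides with the paper's.
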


\begin{proof}
Immediate consequence of Lem.~\ref{rsl:peqe-then-cfps-proj-peqe} and Lem.~\ref{rsl:respects-epsilon-then-stable-root}.
\end{proof}

\begin{lemma}
\label{rsl:cfps-components-respect}
Let $\psi$ be a proof term and $\Spa$ a set of positions such that $\Spa \neq \emptyset$ and $\psi$ respects $\Spa$. Then $\psi_i$ respects $\proj{\Spa}{i}$ for all $i \leq m$, where $\cfps{\psi} = f(\psi_1, \ldots, \psi_m)$.
\end{lemma}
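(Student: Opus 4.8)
The plan is to argue by induction on $\psi$ using the simplified induction principle of Prop.~\ref{rsl:pterm-induction-principle}, unwinding the definitions of $\cfps{}$ (Dfn.~\ref{dfn:cfps}) and of the \emph{respects} relation (Dfn.~\ref{dfn:respects}) in each case. The first observation is that $\Spa$ being nonempty and prefix-closed forces $\epsilon \in \Spa$; from this it follows (by the same structural induction, folded into the main one) that $\psi$ respects $\set{\epsilon}$, so that $\cfps{\psi}$ is defined and the statement is meaningful. I would also record two routine facts used throughout: for every $i$ the projection $\proj{\Spa}{i}$ is again finite and prefix-closed (a prefix of $p \in \proj{\Spa}{i}$ lifts through $i\cdot(-)$ to a prefix of $ip \in \Spa$), and each immediate argument of an \imstep\ is again an \imstep\ (Dfn.~\ref{dfn:imstep}, since \imsteps\ are exactly the closed terms over $\Sigma^R$).

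The cases where $\psi$ has a function symbol at the root collapse into one. If $\psi$ is an \imstep\ respecting $\set{\epsilon}$ then $\psi(\epsilon)\in\Sigma$, so $\psi = f(\psi_1,\ldots,\psi_m)$; likewise in the non-\imstep\ function-symbol case $\psi = f(\psi_1,\ldots,\psi_m)$. In both situations $\cfps{\psi} = \psi$, and since $\psi$ respects $\Spa$ with $\Spa \neq \emptyset$, Lem.~\ref{rsl:respect-fnsymbol-coherence} immediately yields that each $\psi_i$ respects $\proj{\Spa}{i}$ (the subcase $\proj{\Spa}{i} = \emptyset$ being covered by Lem.~\ref{rsl:respects-emptyset}). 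The case $\psi = \mu(\psi_1,\ldots,\psi_m)$ cannot occur: a root rule symbol is incompatible with respecting $\set{\epsilon}\subseteq\Spa$, since the \imstep\ clause needs $\psi(\epsilon)\in\Sigma$ and the rule-symbol clause needs $\Spa=\emptyset$.

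For $\psi = \psi_1 \comp \psi_2$, Dfn.~\ref{dfn:respects} gives that both $\psi_1$ and $\psi_2$ respect $\Spa$, so the \ih\ applies to each, producing $\cfps{\psi_j} = f(\psi_{j1},\ldots,\psi_{jm})$ with $\psi_{ji}$ respecting $\proj{\Spa}{i}$. Since $\cfps{\psi} = f(\psi_{11}\comp\psi_{21},\ldots,\psi_{1m}\comp\psi_{2m})$, the binary-composition clause of Dfn.~\ref{dfn:respects} converts the two conclusions for the $i$-th components into ``$\psi_{1i}\comp\psi_{2i}$ respects $\proj{\Spa}{i}$''. The case $\psi = \icomp \psi_i$ is entirely parallel: all $\psi_i$ respect $\Spa$, the \ih\ gives $\cfps{\psi_i} = f(\psi_{i1},\ldots,\psi_{im})$ with $\psi_{ij}$ respecting $\proj{\Spa}{j}$, and feeding these infinitely many facts into the infinite-composition clause yields $\icomp \psi_{ij}$ respects $\proj{\Spa}{j}$ for each $j$, which is exactly the content of $\cfps{\psi} = f(\icomp\psi_{i1},\ldots,\icomp\psi_{im})$.

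The main obstacle is purely the bookkeeping around well-definedness of $\cfps{}$ in the two composition cases: one must know that $\cfps{\psi_1}$ and $\cfps{\psi_2}$ (resp.\ all the $\cfps{\psi_i}$) share the same head symbol $f$ and the same arity $m$, so that the displayed forms above and the matching of components against $\proj{\Spa}{i}$ are legitimate. I would dispatch this exactly as Dfn.~\ref{dfn:cfps} already does, invoking Lem.~\ref{rsl:respects-epsilon-then-stable-root} and Lem.~\ref{rsl:cfps-equals-src-tgt} together with the coherence condition $tgt(\psi_1) = src(\psi_2)$ (valid because the left operand of a legal composition is convergent) to see that the root symbol is preserved across the components. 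I deliberately avoid the tempting shortcut of combining $\psi \peqe \cfps{\psi}$ (Lem.~\ref{rsl:cfps-peqe}) with the $\peqe$-invariance of respect (Lem.~\ref{rsl:respects-invariant-peqe}) and Lem.~\ref{rsl:respect-fnsymbol-coherence}, since the invariance lemma is stated only for \emph{convergent} proof terms whereas the $\psi$ here need not be convergent; the direct structural induction sidesteps this restriction, and apart from the head-symbol bookkeeping every step is a straightforward unwinding of the two definitions.
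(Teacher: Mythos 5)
Your proof is correct, but it takes a genuinely different route from the paper's. The paper's own proof is exactly the three-line shortcut you declined: Lem.~\ref{rsl:cfps-peqe} gives $\psi \peqe \cfps{\psi}$, Lem.~\ref{rsl:respects-invariant-peqe} transfers ``respects $\Spa$'' from $\psi$ to $\cfps{\psi}$, and Lem.~\ref{rsl:respect-fnsymbol-coherence} then yields that each component respects the corresponding projection. Your direct structural induction over Prop.~\ref{rsl:pterm-induction-principle} re-derives, case by case, what that chain of lemmas delivers at once, so it is longer; but your stated reason for avoiding the shortcut is accurate and substantive: Lem.~\ref{rsl:respects-invariant-peqe} is stated only for \emph{convergent} proof terms, whereas the present lemma carries no convergence hypothesis, so the paper's proof as written invokes a lemma whose hypotheses are not guaranteed to hold. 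Your induction needs convergence only where the well-formedness of proof terms supplies it for free --- the left operand of a binary composition and every component of an infinite concatenation are convergent by Dfn.~\ref{dfn:layer-pterm} --- which is precisely what makes the head-symbol bookkeeping via Lem.~\ref{rsl:respects-epsilon-then-stable-root} and the coherence condition $tgt(\psi_1) = src(\psi_2)$ legitimate, and it is all the convergence your argument ever uses. In short: the paper's argument buys brevity at the cost of a hypothesis mismatch, while yours buys validity in the stated generality at the cost of redoing the unwinding of Dfn.~\ref{dfn:respects} and Dfn.~\ref{dfn:cfps} that the cited lemmas encapsulate.
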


\begin{proof}
Lem.~\ref{rsl:cfps-peqe} implies $\psi \peqe \cfps{\psi}$, then Lem.~\ref{rsl:respects-invariant-peqe} implies $\cfps{\psi}$ respects $\Spa$.
Therefore %
\cfpInsideStandardisation{Lem.~\ref{rsl:respects-invariant-proj} }%
\cfpInsideCompression{Lem.~\ref{rsl:respect-fnsymbol-coherence} }%
allows to conclude.
\end{proof}

\begin{lemma}
\label{rsl:respects-then-invariant-fp}
Let $\psi$ be a convergent proof term and $\Spa$ a set of positions such that $\psi$ respects $\Spa$.
Then $\pref{tgt(\psi)}{\Spa} = \pref{src(\psi)}{\Spa}$.
\end{lemma}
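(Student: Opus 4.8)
The plan is to prove the equality by induction on the structure of $\psi$, using the simple induction principle of Prop.~\ref{rsl:pterm-induction-principle}. The key reduction is the observation recorded right after Dfn.~\ref{dfn:term-prefix}, which shows that $\pref{t}{\Spa}$ is completely determined by the restriction of $t$ to $\Spa$: its non-box positions are exactly $\Spa$, carrying the symbols $t(p)$, and its box positions are forced by $\Spa$ together with the arities of those symbols. Hence, once Lem.~\ref{rsl:respects-then-src} and Lem.~\ref{rsl:respects-then-tgt} guarantee $\Spa \subseteq \Pos{src(\psi)} \cap \Pos{tgt(\psi)}$ (the latter using convergence), it suffices to show that $src(\psi)$ and $tgt(\psi)$ carry the same symbol at every $p \in \Spa$, i.e. that the two prefix contexts coincide. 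When $\Spa = \emptyset$ both sides equal $\Box$, so throughout I assume $\Spa \neq \emptyset$, whence $\epsilon \in \Spa$ by prefix-closedness.

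For the base case, where $\psi$ is an \imstep, I would run a nested induction on $\max\set{\posln{p} \setsthat p \in \Spa}$. Since $\epsilon \in \Spa$ and $\psi(\epsilon) \in \Sigma$ (from the \imstep\ clause of Dfn.~\ref{dfn:respects}), we have $\psi = f(\psi_1, \ldots, \psi_m)$ with $f \in \Sigma$; here $src$ and $tgt$ distribute over the root, i.e. $src(\psi) = f(src(\psi_1), \ldots, src(\psi_m))$ and $tgt(\psi) = f(tgt(\psi_1), \ldots, tgt(\psi_m))$, where this, together with convergence of each $\psi_i$, follows as in the proof of Lem.~\ref{rsl:imstep-convergent-args}. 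Each $\psi_i$ is itself an \imstep\ respecting $\proj{\Spa}{i}$, whose depth is strictly smaller, so the nested hypothesis gives $\pref{src(\psi_i)}{\proj{\Spa}{i}} = \pref{tgt(\psi_i)}{\proj{\Spa}{i}}$, and assembling through the recursive clause of Dfn.~\ref{dfn:term-prefix} for $f$ yields the claim. The two remaining non-concatenation shapes are immediate: for $\psi = f(\psi_1, \ldots, \psi_m)$ not an \imstep\ the same assembly works, using Lem.~\ref{rsl:respect-fnsymbol-coherence} to get each $\psi_i$ respecting $\proj{\Spa}{i}$ and Lem.~\ref{rsl:fnsymbol-convergence} for their convergence, while $\psi = \mu(\psi_1, \ldots, \psi_m)$ forces $\Spa = \emptyset$ and is vacuous.

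The binary concatenation case $\psi = \psi_1 \comp \psi_2$ is a short chain of equalities: by Dfn.~\ref{dfn:respects} both $\psi_1$ and $\psi_2$ respect $\Spa$, and both are convergent, so the inductive hypothesis applies to each; combining $src(\psi) = src(\psi_1)$, $tgt(\psi) = tgt(\psi_2)$ and the coherence condition $tgt(\psi_1) = src(\psi_2)$ gives $\pref{src(\psi)}{\Spa} = \pref{src(\psi_1)}{\Spa} = \pref{tgt(\psi_1)}{\Spa} = \pref{src(\psi_2)}{\Spa} = \pref{tgt(\psi_2)}{\Spa} = \pref{tgt(\psi)}{\Spa}$.

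The infinite concatenation case $\psi = \icomp \psi_i$ is where the real work lies and the step I expect to be the main obstacle, since the target is only obtained as a limit. Here all $\psi_i$ respect $\Spa$ and are convergent, so the inductive hypothesis gives $\pref{src(\psi_i)}{\Spa} = \pref{tgt(\psi_i)}{\Spa}$ for every $i$; together with $tgt(\psi_i) = src(\psi_{i+1})$ and $src(\psi) = src(\psi_0)$ this shows that all the contexts $\pref{src(\psi_i)}{\Spa}$ coincide with $\pref{src(\psi)}{\Spa}$. It then remains to pass to the limit $tgt(\psi) = \lim_{i \to \omega} tgt(\psi_i)$. Setting $N \eqdef \max\set{\posln{p} \setsthat p \in \Spa}$, I would choose $k$ with $\tdist{tgt(\psi_k)}{tgt(\psi)} < 2^{-N}$; then $tgt(\psi_k)$ and $tgt(\psi)$ agree on all positions of depth at most $N$, in particular on $\Spa$, so by the determined-by-restriction observation $\pref{tgt(\psi)}{\Spa} = \pref{tgt(\psi_k)}{\Spa} = \pref{src(\psi_k)}{\Spa} = \pref{src(\psi)}{\Spa}$, which closes the induction.
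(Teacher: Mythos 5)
Your proof is correct, but it takes a genuinely different route from the paper's. The paper proves this lemma by induction on the \emph{cardinality of $\Spa$}, not on the structure of $\psi$: in the non-empty case it forms $\cfps{\psi} = f(\psi_1, \ldots, \psi_m)$, uses Lem.~\ref{rsl:cfps-peqe} together with Lem.~\ref{rsl:peq-then-same-src-mind-tgt} to conclude that $src(\psi)$ and $tgt(\psi)$ both decompose over the root symbol $f$ of the condensed form, invokes Lem.~\ref{rsl:cfps-components-respect} to see that each $\psi_i$ respects $\proj{\Spa}{i}$, and closes with the induction hypothesis on the strictly smaller sets $\proj{\Spa}{i}$. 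All of the case analysis on the shape of $\psi$ — binary concatenation, infinite concatenation, and the limit behaviour of $tgt$ — is thereby absorbed into the previously established $\cfpsfn$ machinery and into the $\peq$-invariance of sources and targets, so the paper's proof has a single uniform inductive case. You instead run the structural induction of Prop.~\ref{rsl:pterm-induction-principle} directly, which forces you to re-derive exactly that bookkeeping by hand: a nested induction on $\max\set{\posln{p} \setsthat p \in \Spa}$ inside the \imstep\ base case (correctly, since the induction principle treats \imsteps\ as atomic and you cannot recurse into their arguments structurally), the coherence chain $tgt(\psi_1) = src(\psi_2)$ for binary composition, and an explicit ultrametric argument choosing $k$ with $\tdist{tgt(\psi_k)}{tgt(\psi)} < 2^{-N}$ for the infinite concatenation. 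Both arguments are sound; yours is more elementary and self-contained (it needs only Lem.~\ref{rsl:respects-then-src}, Lem.~\ref{rsl:respects-then-tgt}, Lem.~\ref{rsl:respect-fnsymbol-coherence}, Lem.~\ref{rsl:fnsymbol-convergence}, Lem.~\ref{rsl:imstep-convergent-args} and the metric, never the condensed forms), whereas the paper's buys brevity and avoids duplicating the limit reasoning it had already paid for when setting up $\cfpsfn$ — at the price of depending on that heavier infrastructure.
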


\begin{proof}
We proceed by induction on the cardinal of $\Spa$.
If $\Spa = \emptyset$, then $\pref{tgt(\psi)}{\Spa} = \pref{src(\psi)}{\Spa} = \Box$.
Otherwise, $\Spa = \set{\epsilon} \cup (\bigcup_{1 \leq i \leq m} i \cdot \proj{\Spa}{i})$ where $\cfps{\psi} = f(\psi_1, \ldots, \psi_m)$.
Lem.~\ref{rsl:cfps-peqe} and Lem.~\ref{rsl:peq-then-same-src-mind-tgt} imply $src(\psi) = f(src(\psi_1), \ldots, src(\psi_m))$ and $tgt(\psi) = f(tgt(\psi_1), \ldots, tgt(\psi_m))$, so that $\pref{src(\psi)}{\Spa} = f(\pref{src(\psi_1)}{\proj{\Spa}{1}}, \ldots, \pref{src(\psi_m)}{\proj{\Spa}{m}})$, and $\pref{tgt(\psi)}{\Spa} = f(\pref{tgt(\psi_1)}{\proj{\Spa}{1}}, \ldots, \pref{tgt(\psi_m)}{\proj{\Spa}{m}})$.
On the other hand, Lem.~\ref{rsl:cfps-components-respect} implies that $\psi_i$ respects $\proj{\Spa}{i}$ for all $i$, so that \ih\ can be applied to obtain $\pref{src(\psi_i)}{\proj{\Spa}{i}} = \pref{tgt(\psi_i)}{\proj{\Spa}{i}}$.
Thus we conclude.
\end{proof}

\medskip
Assume that some proof term, say $\psi$, respects not only the root, but a finite, prefix-closed set of positions $\Spa$.
Then we can define the \emph{condensed-to-fixed-prefix-\textbf{context} form} of $\psi$ \wrt\ $\Spa$, analogously as we have just done with the condensed-to-fixed-prefix-symbol form. 
The activity denoted by a condensed-to-fixed-prefix-context form \wrt\ the set of positions $\Spa$ will lie inside a fixed context, \ie\ a context in $\SigmaTerms$, whose set of (non-hole) positions is exactly $\Spa$.
\Eg, the proof term $h(f(g(\mu(a))), \mu(b)) \comp h(f(g(g(\pi))), \nu(b))$ respects $\Spa \eqdef \set{\epsilon,1,11}$. The corresponding condensed-to-fixed-prefix-context is $h(f(g(\mu(a) \comp g(\pi))), \mu(b) \comp \nu(b))$. Observe that the activity of the latter term lies inside the holes of the context $h(f(g(\Box)), \Box)$, whose set of non-hole positions is $\Spa$.

The condensed-to-fixed-prefix-context form of $\psi$ \wrt\ $\Spa$ can be defined in two different ways: either by induction on $\psi$ analogously as the definition of $\cfpsfn$, or by induction on $\Spa$. 
The following definition uses the latter option for a pragmatic reason: it leads to simpler proofs of the properties to be stated about these forms.

\begin{definition}
\label{dfn:cfpc}
Let $\psi$ be a proof term and $\Spa$ a prefix-closed set of positions, such that $\psi$ respects $\Spa$. 
We define $\cfpc{\psi}{\Spa}$, the \emph{condensed to fixed prefix context form} of $\psi$ \wrt\ $\Spa$, as follows. \\
If $\Spa = \emptyset$, then $\cfpc{\psi}{\Spa} \eqdef \psi$. \\
Otherwise, $\Spa = \set{\epsilon} \,\cup\, ( \bigcup_{1 \leq i \leq m} i \cdot \proj{\Spa}{i} )$, where $src(\psi)(\epsilon) = f/m$.
In this case $\cfpc{\psi}{\Spa} \eqdef f(\cfpc{\psi_1}{\proj{\Spa}{1}}, \ldots \cfpc{\psi_m}{\proj{\Spa}{m}})$, where 
$\cfps{\psi} = f(\psi_1, \ldots, \psi_m)$. 
\end{definition}

\begin{lemma}
\label{rsl:cfpc-peqe}
Let $\psi$, $\Spa$ such that $\psi$ respects $\Spa$.
Then $\psi \peqe \cfpc{\psi}{\Spa}$.
\end{lemma}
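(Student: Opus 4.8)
The plan is to induct on the cardinal of $\Spa$, following exactly the recursion in Dfn.~\ref{dfn:cfpc}. The base case is $\Spa = \emptyset$, where $\cfpc{\psi}{\Spa} = \psi$ by definition, so $\psi \peqe \cfpc{\psi}{\Spa}$ holds immediately by the \eqlrefl\ rule.

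For the inductive step I would take $\Spa \neq \emptyset$. Since $\Spa$ is prefix-closed this forces $\epsilon \in \Spa$ together with the decomposition $\Spa = \set{\epsilon} \cup \bigcup_{1 \leq i \leq m} i \cdot \proj{\Spa}{i}$, where $f/m \eqdef src(\psi)(\epsilon)$. The first move is to condense the root: $\psi$ respects $\set{\epsilon}$, so $\cfps{\psi}$ is defined, say $\cfps{\psi} = f(\psi_1, \ldots, \psi_m)$ with $f \in \Sigma$ (\confer\ Dfn.~\ref{dfn:cfps} and Lem.~\ref{rsl:cfps-equals-src-tgt}), and Lem.~\ref{rsl:cfps-peqe} gives $\psi \peqe \cfps{\psi}$. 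Next, Lem.~\ref{rsl:cfps-components-respect} shows that each $\psi_i$ respects $\proj{\Spa}{i}$, each of which is finite, prefix-closed, and of strictly smaller cardinal than $\Spa$ (because $|\Spa| = 1 + \sum_{1 \leq i \leq m} |\proj{\Spa}{i}|$). Hence the induction hypothesis applies to every $\psi_i$, yielding $\psi_i \peqe \cfpc{\psi_i}{\proj{\Spa}{i}}$. Finally the \eqlfun\ rule lifts these equivalences through the common head symbol $f$, giving $f(\psi_1, \ldots, \psi_m) \peqe f(\cfpc{\psi_1}{\proj{\Spa}{1}}, \ldots, \cfpc{\psi_m}{\proj{\Spa}{m}}) = \cfpc{\psi}{\Spa}$, and \eqltrans\ chains this with $\psi \peqe \cfps{\psi}$ to conclude.

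I expect the genuine difficulty to be already absorbed into Lem.~\ref{rsl:cfps-peqe}, which discharges the single-level root condensation (and in particular the infinite-concatenation case, through the \peqinfstruct\ equation and the \eqlinfcomp\ rule); relative to that, the present lemma is a routine induction driven by Dfn.~\ref{dfn:cfpc}. The two points that still need care are bookkeeping rather than conceptual: verifying that each projection $\proj{\Spa}{i}$ inherits finiteness and prefix-closedness and strictly decreases the induction measure, and observing that $\psi$ respecting a nonempty prefix-closed $\Spa$ entails $\psi$ respecting $\set{\epsilon}$ — which is exactly what makes $\cfps{\psi}$, and hence Dfn.~\ref{dfn:cfpc}, well-formed. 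This last observation is itself a short structural induction on $\psi$ (\confer\ Prop.~\ref{rsl:pterm-induction-principle}), appealing to Lem.~\ref{rsl:respects-emptyset} in the function- and rule-symbol cases, and I would either state it as a preliminary remark or fold it silently into the well-definedness of $\cfpc{\psi}{\Spa}$.
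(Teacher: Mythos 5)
Your proof is correct and follows essentially the same route as the paper's: induction on the cardinal of $\Spa$, condensing the root via Lem.~\ref{rsl:cfps-peqe}, applying Lem.~\ref{rsl:cfps-components-respect} to get the components' respect of the projections, then the induction hypothesis, \eqlfun\ and \eqltrans. The extra bookkeeping you flag (projections staying finite and prefix-closed, and nonempty $\Spa$ forcing $\epsilon \in \Spa$ so that $\cfps{\psi}$ is defined) is left implicit in the paper but is exactly the right thing to check.
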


\begin{proof}
We proceed by induction on the cardinal of $\Spa$. 
If $\Spa = \emptyset$ then we conclude immediately. 
Otherwise, $\Spa = \set{\epsilon} \,\cup\, ( \bigcup_{1 \leq i \leq m} i \cdot \proj{\Spa}{i} )$ where $\cfps{\psi} = f(\psi_1, \ldots, \psi_m)$, and $\cfpc{\psi}{\Spa} = f(\cfpc{\psi_1}{\proj{\Spa}{1}}, \ldots, \cfpc{\psi_m}{\proj{\Spa}{m}})$.
Lem.~\ref{rsl:cfps-components-respect} implies that $\psi_i$ respects $\proj{\Spa}{i}$ for all $i \leq m$.
Therefore \ih\ can be applied on each $\proj{\Spa}{i}$ to obtain $\psi_i \peqe \cfpc{\psi_i}{\proj{\Spa}{i}}$, so that \eqlfun\ rule yields $\cfps{\psi} \peqe \cfpc{\psi}{\Spa}$. On the other hand, Lem.~\ref{rsl:cfps-peqe} implies $\psi \peq \cfps{\psi}$. Thus we conclude by \eqltrans.
\end{proof}

\begin{lemma}
\label{rsl:peqe-then-fp-args-peqe}
Let $\psi$, $\phi$, $\Spa$ such that $\psi$ and $\phi$ are convergent, $\psi \peqe \phi$ and $\psi$, $\phi$ respect $\Spa$.
Then $\cfpc{\psi}{\Spa} = C[\psi_1, \ldots, \psi_k]$, $\cfpc{\phi}{\Spa} = C[\phi_1, \ldots, \phi_k]$ and $\psi_i \peqe \phi_i$ for all $i$, where $C = \pref{src(\psi)}{\Spa}$.
\end{lemma}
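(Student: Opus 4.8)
The plan is to proceed by induction on the cardinal of $\Spa$, matching the recursion used in Dfn.~\ref{dfn:cfpc}. The two workhorses will be Lem.~\ref{rsl:peqe-then-cfps-proj-peqe}, which tells us that $\cfps{\psi}$ and $\cfps{\phi}$ share the same root symbol $f/m$ and have base-\peqent\ arguments, and Lem.~\ref{rsl:cfps-components-respect}, which tells us that those arguments respect the projected sets $\proj{\Spa}{i}$. First I would record, using Lem.~\ref{rsl:peq-then-same-src-mind-tgt}, that $src(\psi) = src(\phi)$, so that $C = \pref{src(\psi)}{\Spa} = \pref{src(\phi)}{\Spa}$ is genuinely a single context serving both sides; this is what makes the statement well-posed.

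For the base case $\Spa = \emptyset$ we have $C = \pref{src(\psi)}{\emptyset} = \Box$ by Dfn.~\ref{dfn:term-prefix}, and $\cfpc{\psi}{\Spa} = \psi$, $\cfpc{\phi}{\Spa} = \phi$ by Dfn.~\ref{dfn:cfpc}, so the claim reduces to the hypothesis $\psi \peqe \phi$. For the inductive step, $\Spa = \set{\epsilon} \cup \bigcup_{1 \leq i \leq m} i \cdot \proj{\Spa}{i}$. Write $\cfps{\psi} = f(\psi'_1, \ldots, \psi'_m)$ and $\cfps{\phi} = f(\phi'_1, \ldots, \phi'_m)$; Lem.~\ref{rsl:peqe-then-cfps-proj-peqe} gives $\psi'_i \peqe \phi'_i$ for each $i$. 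Before invoking the inductive hypothesis I would check its three preconditions on each pair $(\psi'_i, \phi'_i)$: convergence follows because $\psi \peqe \cfps{\psi}$ (Lem.~\ref{rsl:cfps-peqe}) makes $\cfps{\psi}$ convergent (Lem.~\ref{rsl:peqe-soundness-convergent}), whence each $\psi'_i$ is convergent by Lem.~\ref{rsl:fnsymbol-convergence}, and symmetrically for $\phi'_i$; base \peqence\ is the line just obtained; and respecting $\proj{\Spa}{i}$ is exactly Lem.~\ref{rsl:cfps-components-respect}. Since $\proj{\Spa}{i}$ has strictly smaller cardinal, the inductive hypothesis yields contexts $C_i = \pref{src(\psi'_i)}{\proj{\Spa}{i}}$ with $\cfpc{\psi'_i}{\proj{\Spa}{i}} = C_i[\,\ldots\,]$, $\cfpc{\phi'_i}{\proj{\Spa}{i}} = C_i[\,\ldots\,]$ and the hole-fillers base-\peqent\ componentwise.

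It then remains to reassemble. By Dfn.~\ref{dfn:cfpc}, $\cfpc{\psi}{\Spa} = f(\cfpc{\psi'_1}{\proj{\Spa}{1}}, \ldots, \cfpc{\psi'_m}{\proj{\Spa}{m}})$, and analogously for $\phi$. Using $src(\psi) = f(src(\psi'_1), \ldots, src(\psi'_m))$ (from Lem.~\ref{rsl:cfps-peqe} and Lem.~\ref{rsl:peq-then-same-src-mind-tgt}), Dfn.~\ref{dfn:term-prefix} gives $C = \pref{src(\psi)}{\Spa} = f(C_1, \ldots, C_m)$. Thus $\cfpc{\psi}{\Spa} = C[\vec{\psi}]$ and $\cfpc{\phi}{\Spa} = C[\vec{\phi}]$, where the hole-filling tuples $\vec{\psi}, \vec{\phi}$ are the concatenations, in order, of the tuples produced at each argument $i$; the componentwise base \peqence\ of $\vec{\psi}$ and $\vec{\phi}$ is inherited directly from the inductive hypotheses.

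The only genuinely delicate point, and the step I expect to be the main obstacle, is the bookkeeping of the hole indices when flattening the nested contexts $f(C_1, \ldots, C_m)$ into a single context $C$ with one flat list of holes: one must verify that the $i$-th block of holes of $C$ coincides positionally with the holes of $C_i$ (in the ordering induced by \refdfn{vpos}), so that the $\psi$- and $\phi$-fillers are paired correctly and the relation $\psi_j \peqe \phi_j$ lines up hole-by-hole. Everything else is a routine assembly of the cited lemmas.
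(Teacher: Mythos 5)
Your proposal is correct and follows essentially the same route as the paper's proof: induction on the cardinal of $\Spa$, using Lem.~\ref{rsl:peqe-then-cfps-proj-peqe} and Lem.~\ref{rsl:cfps-components-respect} in the inductive step, and reassembling via $C = f(C_1, \ldots, C_m)$ with the flattened hole-fillers. The only differences are cosmetic: you explicitly verify the convergence preconditions of the inductive hypothesis (which the paper leaves implicit), and you flag the hole-index bookkeeping that the paper dismisses as ``straightforward to verify.''
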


\begin{proof}
We proceed by induction on the cardinal of $\Spa$. If $\Spa = \emptyset$ then we conclude immediately. Otherwise $\Spa = \set{\epsilon} \cup (\bigcup_{1 \leq i \leq m} i \cdot \proj{\Spa}{i})$, 
$\cfpc{\psi}{\Spa} = f(\cfpc{\psi'_1}{\proj{\Spa}{1}}, $ \\ $\ldots, \cfpc{\psi'_m}{\proj{\Spa}{m}})$, and 
$\cfpc{\phi}{\Spa} = f(\cfpc{\phi'_1}{\proj{\Spa}{1}}, \ldots, \cfpc{\phi'_m}{\proj{\Spa}{m}})$, where 
$\cfps{\psi} = f(\psi'_1, \ldots, \psi'_m)$ and $\cfps{\phi} = f(\phi'_1, \ldots, \phi'_m)$.
Lem.~\ref{rsl:cfps-peqe} and Lem.~\ref{rsl:peq-then-same-src-mind-tgt} imply that $src(\psi) = f(src(\psi'_1), \ldots, src(\psi'_m))$ and analogously for $\phi$, so that particularly the root symbols of $\cfps{\psi}$ and $\cfps{\phi}$ coincide since $\psi \peqe \phi$.

Let $j$ such that $1 \leq j \leq m$. 
Lem.~\ref{rsl:peqe-then-cfps-proj-peqe} implies that $\psi'_j \peqe \phi'_j$, and Lem.~\ref{rsl:cfps-components-respect} implies that both $\psi'_j$ and $\phi'_j$ respect $\proj{\Spa}{j}$.
Then we can apply \ih\ on $\proj{\Spa}{j}$ obtaining that 
$\cfpc{\psi'_j}{\proj{\Spa}{j}} = C_j[\psi^j_1, \ldots, \psi^j_{q_j}]$, $\cfpc{\phi'_j}{\proj{\Spa}{j}} = C_j[\phi^j_1, \ldots, \phi^j_{q_j}]$ and $\psi^j_i \peqe \phi^j_i$ for all $i$, where $\pref{src(\psi'_j)}{\proj{\Spa}{j}} = C_j$.

We define $C \eqdef f(C_1, \ldots, C_m)$. It is straightforward to verify that $\pref{src(\psi)}{\Spa} = C$. Moreover, $\cfpc{\psi}{\Spa} = C[\psi_1, \ldots, \psi_k]$ and $\cfpc{\phi}{\Spa} = C[\phi_1 \ldots, \phi_k]$, where $k = \sum_{1 \leq i \leq m} q_i$, and for any $i \leq k$, $\psi_i = \psi^j_l$ and $\phi_i = \phi^j_l$ for some $j \leq m$ and $l \leq q_j$, implying $\psi_i \peqe \phi_i$. Thus we conclude.
\end{proof}

\begin{lemma}
\label{rsl:cfpc-equals-src-tgt}
Let $\psi$, $\Spa$ such that $\psi$ respects $\Spa$.
Then $\pref{\cfpc{\psi}{\Spa}}{\Spa} = \pref{src(\psi)}{\Spa} = \pref{tgt(\psi)}{\Spa}$.
\end{lemma}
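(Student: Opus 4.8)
The plan is to induct on the cardinal of $\Spa$, unfolding Definition~\ref{dfn:cfpc} and Definition~\ref{dfn:term-prefix} in parallel on the three terms $\cfpc{\psi}{\Spa}$, $src(\psi)$ and $tgt(\psi)$. Note first that the second equality $\pref{src(\psi)}{\Spa} = \pref{tgt(\psi)}{\Spa}$ is exactly Lemma~\ref{rsl:respects-then-invariant-fp}, so the genuine content lies in the first equality $\pref{\cfpc{\psi}{\Spa}}{\Spa} = \pref{src(\psi)}{\Spa}$; I would nonetheless carry the whole chain of equalities through the induction, invoking Lemma~\ref{rsl:respects-then-invariant-fp} to close it at the end. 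The domain conditions needed for the prefixes to be defined, namely $\Spa \subseteq \Pos{src(\psi)}$ and $\Spa \subseteq \Pos{tgt(\psi)}$, are supplied by Lemma~\ref{rsl:respects-then-src} and Lemma~\ref{rsl:respects-then-tgt}, while the analogous inclusion for $\cfpc{\psi}{\Spa}$ is immediate from its recursive shape.

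For the base case $\Spa = \emptyset$, Definition~\ref{dfn:cfpc} gives $\cfpc{\psi}{\Spa} = \psi$ and Definition~\ref{dfn:term-prefix} gives $\pref{t}{\emptyset} = \Box$ for every term $t$, so all three prefixes equal $\Box$ and the statement holds trivially. For the inductive step $\Spa \neq \emptyset$ I write $\Spa = \set{\epsilon} \cup \bigcup_{1 \leq i \leq m} i \cdot \proj{\Spa}{i}$ with $src(\psi)(\epsilon) = f/m$, whence Definition~\ref{dfn:cfpc} yields $\cfpc{\psi}{\Spa} = f(\cfpc{\psi_1}{\proj{\Spa}{1}}, \ldots, \cfpc{\psi_m}{\proj{\Spa}{m}})$, where $\cfps{\psi} = f(\psi_1, \ldots, \psi_m)$.

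Three facts make the three recursions align branch-for-branch. First, Lemma~\ref{rsl:cfps-peqe} together with Lemma~\ref{rsl:peq-then-same-src-mind-tgt} give $src(\psi) = f(src(\psi_1), \ldots, src(\psi_m))$, so $src(\psi)$ has the same head symbol $f$ as $\cfpc{\psi}{\Spa}$ and its immediate subterms are exactly the $src(\psi_i)$. Second, Lemma~\ref{rsl:cfps-components-respect} gives that each $\psi_i$ respects $\proj{\Spa}{i}$, which both legitimates the form $\cfpc{\psi_i}{\proj{\Spa}{i}}$ and furnishes the hypothesis required for the recursive call. Third, since $\epsilon \in \Spa$ and the displayed union is disjoint, each $\proj{\Spa}{i}$ has cardinal strictly below that of $\Spa$. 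Unfolding Definition~\ref{dfn:term-prefix} on both $\cfpc{\psi}{\Spa}$ and $src(\psi)$ then reduces the goal to $\pref{\cfpc{\psi_i}{\proj{\Spa}{i}}}{\proj{\Spa}{i}} = \pref{src(\psi_i)}{\proj{\Spa}{i}}$ for every $i$, which is precisely the induction hypothesis applied to $\psi_i$ and $\proj{\Spa}{i}$. Reassembling under $f$ delivers $\pref{\cfpc{\psi}{\Spa}}{\Spa} = \pref{src(\psi)}{\Spa}$, and Lemma~\ref{rsl:respects-then-invariant-fp} extends the equality to $\pref{tgt(\psi)}{\Spa}$.

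I expect no serious obstacle, since this is a clean structural induction on $\Spa$. The single point demanding care is confirming that $\cfpc{\psi}{\Spa}$, $src(\psi)$ and $tgt(\psi)$ carry the \emph{same} function symbol at each position of $\Spa$, so that the recursive clauses of Definition~\ref{dfn:term-prefix} split the three terms in exactly parallel fashion; this rests on Lemma~\ref{rsl:cfps-equals-src-tgt}, which guarantees $\cfps{\psi}(\epsilon) = src(\psi)(\epsilon) = tgt(\psi)(\epsilon)$ at each recursive stage. Once this alignment is secured, the recursive structure of the condensed-to-fixed-prefix-context form mirrors that of the prefix operation step by step, and the chain of equalities follows formally.
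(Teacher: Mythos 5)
Your proof is correct, but it takes a more self-contained route than the paper. The paper obtains the lemma as an immediate corollary of Lemma~\ref{rsl:peqe-then-fp-args-peqe}, instantiated with $\phi := \psi$ (via reflexivity of $\peqe$): that lemma already asserts $\cfpc{\psi}{\Spa} = C[\psi_1, \ldots, \psi_k]$ with $C = \pref{src(\psi)}{\Spa}$, and since the non-hole positions of $C$ are exactly $\Spa$ (see the remark following Definition~\ref{dfn:term-prefix}), the first equality $\pref{\cfpc{\psi}{\Spa}}{\Spa} = \pref{src(\psi)}{\Spa}$ drops out at once; Lemma~\ref{rsl:respects-then-invariant-fp} then supplies the second equality, exactly as you do. What you do instead is inline the induction on the cardinal of $\Spa$ that underlies Lemma~\ref{rsl:peqe-then-fp-args-peqe}, using essentially the same ingredients the paper uses to prove that lemma: Lemma~\ref{rsl:cfps-peqe} with Lemma~\ref{rsl:peq-then-same-src-mind-tgt} for the aligned head decomposition $src(\psi) = f(src(\psi_1), \ldots, src(\psi_m))$, Lemma~\ref{rsl:cfps-components-respect} to propagate the respect hypothesis to the components, and the strict decrease of the cardinal of $\proj{\Spa}{i}$. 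The paper's route is shorter and reuses an established, stronger statement; yours is more elementary and avoids carrying the $\peqe$-relatedness of paired components through the induction, at the cost of repeating work the paper has already done. One point you leave implicit --- as does the paper --- is convergence: Lemma~\ref{rsl:peq-then-same-src-mind-tgt} and Lemma~\ref{rsl:respects-then-invariant-fp} require $\psi$ (and hence, by Lemma~\ref{rsl:peqe-soundness-convergent}, $\cfps{\psi}$ and its components) to be convergent; this is sanctioned by Remark~\ref{rmk:tgt-mentioned-then-defined}, since the statement mentions $tgt(\psi)$.
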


\begin{proof}
Straightforward corollary of Lem.~\ref{rsl:peqe-then-fp-args-peqe} and Lem.~\ref{rsl:respects-then-invariant-fp}.
\end{proof}

\cfpInsideStandardisation{
\medskip
The following lemma states that the definition of $\cfpc{\psi}{\Spa}$ by induction on $\Spa$, as given in Dfn.~\ref{dfn:cfpc}, is equivalent to the result of defining the same concept by induction on $\psi$, for the binary concatenation case.

\begin{lemma}
\label{rsl:cfpc-binc}
Let $\psi = \phi \comp \chi$ a convergent proof term, and $\Spa$ such that $\psi$ respects $\Spa$. Let $C \eqdef \pref{src(\psi)}{\Spa}$.
Then $\cfpc{\phi}{\Spa} = C[\phi_1, \ldots, \phi_k]$, $\cfpc{\chi}{\Spa} = C[\chi_1, \ldots, \chi_k]$, and $\cfpc{\psi}{\Spa} = C[\phi_1 \comp \chi_1, \ldots, \phi_k \comp \chi_k]$.
\end{lemma}

\begin{proof}
We proceed by induction on $\Spa$.
If $\Spa = \emptyset$, so that $C = \Box$, $\cfpc{\phi}{\Spa} = \phi$, $\cfpc{\chi}{\Spa} = \chi$ and $\cfpc{\psi}{\Spa} = \psi$, then we conclude immediately.

Otherwise $\epsilon \in \Spa$. Hypotheses imply that both $\phi$ and $\phi$ respect $\Spa$, so that particularly both respect $\set{\epsilon}$, and $\psi$ being a well-formed proof term implies $tgt(\phi) = src(\chi)$.
We apply Lem.~\ref{rsl:respects-epsilon-then-stable-root} to both $\phi$ and $\chi$ to obtain $f \eqdef src(\psi)(\epsilon) = src(\phi)(\epsilon) = tgt(\phi)(\epsilon) = src(\chi)(\epsilon) = tgt(\chi)(\epsilon)$.
Moreover $\cfpsfn$ is defined for $\psi$, $\phi$ and $\chi$, say $\cfps{\phi} = f(\phi'_1, \ldots, \phi'_m)$ and $\cfps{\chi} = f(\chi'_1, \ldots, \chi'_m)$; \confer\ Lem.~\ref{rsl:cfps-equals-src-tgt}. Therefore $\cfps{\psi} = f(\phi'_1 \comp \chi'_1, \ldots, \phi'_m \comp \chi'_m)$, so that Lem.~\ref{rsl:cfps-peqe} implies $C = \pref{src(\psi)}{\Spa} = f(C_1, \ldots, C_m)$ where $C_j = \pref{src(\phi'_j \comp \chi'_j)}{\proj{\Spa}{j}}$.

Let $j$ such that $1 \leq j \leq m$. 
Lem.~\ref{rsl:cfps-components-respect} implies that $\phi'_j \comp \chi'_j$ respects $\proj{\Spa}{j}$, so that \ih\ can be applied, obtaining 
$\cfpc{\phi'_j}{\proj{\Spa}{j}} = C_j[\phi^j_1, \ldots, \phi^j_{q_j}]$, $\cfpc{\chi'_j}{\proj{\Spa}{j}} = C_j[\chi^j_1, \ldots, \chi^j_{q_j}]$, and $\cfpc{\phi'_j \comp \chi'_j}{\proj{\Spa}{j}} = C_j[\phi^j_1 \comp \chi^j_1, \ldots, \phi^j_{q_j} \comp \chi^j_{q_j}]$.
Hence: \\
$\begin{array}{rcl}
\cfpc{\phi}{\Spa} & = & f(C_1[\phi^1_1, \ldots, \phi^1_{q_1}], \ldots, C_m[\phi^m_1, \ldots, \phi^m_{q_m}]) \\
\cfpc{\chi}{\Spa} & = & f(C_1[\chi^1_1, \ldots, \chi^1_{q_1}], \ldots, C_m[\chi^m_1, \ldots, \chi^m_{q_m}]) \\
\cfpc{\psi}{\Spa} & = & f(C_1[\phi^1_1 \comp \chi^1_1, \ldots, \phi^1_{q_1} \comp \chi^1_{q_2}], \ldots, C_m[\phi^m_1 \comp \chi^m_1, \ldots, \phi^m_{q_m} \comp \chi^m_{q_m}])
\end{array}
$ \\
Thus we conclude.
\end{proof}

We end this section about fixed prefixes with a result allowing to apply Dfn.~\ref{dfn:respects}, which refers to the \emph{source} of a proof term, to some proof term of which only the \emph{target} is known.

\begin{lemma}
\label{rsl:pos-tgt-respects-then-pos-src}
Let $\psi$ be a convergent proof term and $r$ a position such that $r \in \Pos{tgt(\psi)}$ and $\frso{\psi}{r'}$ is undefined if $r' \leq r$.
Then $r \in \Pos{src(\psi)}$.
\end{lemma}

\begin{proof}
We proceed by induction on $\psi$, \confer\ Prop.~\ref{rsl:pterm-induction-principle}.

If $\psi$ is an \imstep, then we proceed by induction on $\posln{r}$.
If $r = \epsilon$ then we conclude immediately.
If $r = i r_0$, then lemma hypotheses imply $\frso{\psi}{\epsilon}$ to be undefined, so that $\psi = f(\psi_1, \ldots, \psi_m)$.
In turn, for any $r'_0 \leq r_0$, $i r'_0 \leq i r_0 = r$ holds immediately, so that $\frso{\psi}{i r'_0}$ is undefined; therefore so is $\frso{\psi_i}{r'_0}$.
Moreover $i r_0 \in \Pos{tgt(\psi)}$ implies $r_0 \in \Pos{tgt(\psi_i)}$.
Hence \ih\ can be applied on $\psi_i$ and $r_0$, obtaining $r_0 \in \Pos{src(\psi_i)}$. Thus we conclude.

Assume $\psi = \psi_1 \comp \psi_2$.
In this case, for all $r' \leq r$, $\frso{\psi}{r'}$ undefined implies $\frso{\psi_i}{r'}$ undefined for $i = 1,2$.
Then $r \in tgt(\psi) = tgt(\psi_2)$ implies $r \in src(\psi_2) = tgt(\psi_1)$ so that $r \in src(\psi_1) = src(\psi)$, by \ih\ on $\psi_2$ and $\psi_1$ respectively.

Assume $\psi = \icomp \psi_i$.
Let $n$ such that $\tdist{tgt(\psi_i)}{tgt(\psi)} < 2^{-\posln{r}}$ if $i > n$. Then $r \in \Pos{tgt(\psi)}$ implies $r \in \Pos{tgt(\psi_{n+1})}$.
On the other hand, for any $r' \leq r$, $\frso{\psi}{r'}$ undefined implies $\frso{\psi_i}{r'}$ undefined for all $i$.
Therefore, successive application of the \ih\ on each $\psi_i$ where $i \leq n+1$ yields that $r \in tgt(\psi_{n+1})$ then $r \in src(\psi_{n+1}) = tgt(\psi_n)$ then \ldots then $r \in src(\psi_1) = tgt(\psi_0)$ then $r \in src(\psi_0) = src(\psi)$.

Assume $\psi = f(\psi_1, \ldots, \psi_m)$.
If $r = \epsilon$ then we conclude immediately.
Otherwise, \ie\ if $r = i r_0$, then lemma hypotheses imply $r_0 \in \Pos{tgt(\psi_i)}$ and $\frso{\psi_i}{r'_0}$ undefined if $r'_0 \leq r_0$, \confer\ the argument in the \imstep\ case.
Then \ih\ on $\psi_i$ and $r_0$ yields $r_0 \in \Pos{src(\psi_i)}$. Thus we conclude.

Finally, $\psi = \mu(\psi_1, \ldots, \psi_m)$ contradicts lemma hypotheses, since $\frso{\psi}{\epsilon}$ is defined and $\epsilon \leq r$ for any position $r$.
\end{proof}

\begin{lemma}
\label{rsl:respects-pos-tgt}
Let $\psi$ be a proof term and $\Spa$ a prefix-closed set of positions, such that $\Spa \subseteq \Pos{tgt(\psi)}$ and $\frso{\psi}{r}$ is undefined for all $r \in \Spa$.
Then $\psi$ respects $\Spa$.
\end{lemma}

\begin{proof}
By applying Lemma~\ref{rsl:pos-tgt-respects-then-pos-src} on each $r \in \Spa$ we obtain that $\Spa \subseteq \Pos{src(\psi)}$. Then we conclude just by Dfn.~\ref{dfn:respects}.
\end{proof}
}

\includeStandardisation{\subsubsection{General factorisation result}}
\doNotIncludeStandardisation{\subsection{General factorisation result}}
\label{scn:factorisation-general}

In this section we will extend the factorisation result obtained for \imsteps\ in Sec.~\ref{scn:factorisation-imsteps}, to the set of all proof terms.
\doNotIncludeStandardisation{As we have already mentioned, the }%
\includeStandardisation{The }%
condensed-to-proof-term forms introduced in Sec.~\ref{sec:cfpc} lead to the proof of the main remaining auxiliary result, namely, the ability of obtain proof terms in which activity at lower depths is in low positions \wrt\ the sequentialisation order given by dot occurrences.

\begin{lemma}
\label{rsl:jump-one-step}
Let $\psi$ be a one-step. 
Then there exist two numbers $n, n' < \omega$ such that, for any convergent proof term $\xi$ verifying $tgt(\xi) = src(\psi)$ and $\mind{\xi} \geq n + n'$, a one-step $\psi'$ and a convergent proof term $\xi'$ can be found, which verify all the following: $\xi \comp \psi \peqe \psi' \comp \xi'$, $\sdepth{\psi'} = \sdepth{\psi}$, and $\mind{\xi'} \geq \mind{\xi} - n'$.
\end{lemma}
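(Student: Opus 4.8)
The plan is to permute the single step $\psi$ to the front of $\xi$ by condensing $\xi$ against a fixed prefix large enough to expose the pattern of $\psi$'s redex, and then applying the equations \peqinout\ and \peqoutin\ at the rule symbol. Write $\psi = C[\mu(t_1,\ldots,t_m)]$, where $C$ is a one-hole context in $\Sigma$ whose hole sits at $p \eqdef \RPos{\psi}$, $\mu: l \to r$, and $t_1,\ldots,t_m \in \iSigmaTerms$; thus $\sdepth{\psi} = \posln{p}$ and $src(\psi) = C[l[t_1,\ldots,t_m]]$. I would set $n \eqdef \sdepth{\psi} + \Pdepth{\mu} + 1$ and $n' \eqdef \Pdepth{\mu}$, so that $\mind{\xi} \geq n + n'$ forces $\mind{\xi} > \sdepth{\psi} + \Pdepth{\mu}$, which is exactly the slack needed below.

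First I would treat the case where the redex is at the root ($C = \Box$, $\psi = \mu(t_1,\ldots,t_m)$). Here $\eta \eqdef \xi$ is a convergent proof term with $tgt(\eta) = l[t_1,\ldots,t_m]$ and $\mind{\eta} > \Pdepth{\mu}$. Taking $\Spa \eqdef \PPos{\mu} = \PPos{l}$ (finite, prefix-closed, contained in $\Pos{tgt(\eta)}$, of depth $\leq \Pdepth{\mu}$), Lem.~\ref{rsl:mind-plus-tgt-then-respects} gives that $\eta$ respects $\Spa$. By Lem.~\ref{rsl:cfpc-peqe} and Lem.~\ref{rsl:cfpc-equals-src-tgt}, $\eta \peqe \cfpc{\eta}{\Spa}$ and $\pref{\cfpc{\eta}{\Spa}}{\Spa} = \pref{tgt(\eta)}{\Spa} = \pref{l[t_1,\ldots,t_m]}{\PPos{l}} = \patt{l}$ (\confer\ Dfn.~\ref{dfn:term-prefix} and Dfn.~\ref{dfn:patt}); hence $\cfpc{\eta}{\Spa} = l[\chi_1,\ldots,\chi_m]$ for convergent components $\chi_j$ with $tgt(\chi_j) = t_j$. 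The heart of the argument is then the local chain
\[
l[\chi_1,\ldots,\chi_m] \comp \mu(t_1,\ldots,t_m) \;\peqe\; \mu(\chi_1,\ldots,\chi_m) \;\peqe\; \mu(src(\chi_1),\ldots,src(\chi_m)) \comp r[\chi_1,\ldots,\chi_m],
\]
the first step by \peqinout\ (valid since every $\chi_j$ is convergent and $t_j = tgt(\chi_j)$, \confer\ Remark~\ref{rmk:peqinout-restriction}) and the second by \peqoutin. Setting $\psi' \eqdef \mu(src(\chi_1),\ldots,src(\chi_m))$ (a one-step of depth $0$) and $\xi' \eqdef r[\chi_1,\ldots,\chi_m]$ settles the root case; the depth bound follows from Lem.~\ref{rsl:mind-ctx}, which gives $\mind{\chi_j} + \posln{\VPos{l}{j}} \geq \mind{\cfpc{\eta}{\Spa}} = \mind{\eta}$ for each $j$, hence $\mind{\chi_j} \geq \mind{\eta} - \Pdepth{\mu}$, and then $\mind{\xi'} = \mind{r[\chi_1,\ldots,\chi_m]} \geq \min_j \mind{\chi_j} \geq \mind{\eta} - \Pdepth{\mu}$.

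For the general case I would reduce to the root case by condensing $\xi$ against the prefix $\Spa' \eqdef \set{q \setsthat q < p}$. Since $\mind{\xi} > \posln{p} - 1$, $\xi$ respects $\Spa'$ (Lem.~\ref{rsl:mind-plus-tgt-then-respects}), and with $\hat{C} \eqdef \pref{src(\psi)}{\Spa'}$ (note $\pref{src(\xi)}{\Spa'} = \pref{tgt(\xi)}{\Spa'} = \hat{C}$, \confer\ Lem.~\ref{rsl:cfpc-equals-src-tgt}) one gets $\cfpc{\xi}{\Spa'} = \hat{C}[\eta_0, \vec{\eta}]$, where the hole at $p$ carries $\eta_0$ with $tgt(\eta_0) = \subtat{src(\psi)}{p} = l[t_1,\ldots,t_m]$, and $\psi = \hat{C}[\mu(t_1,\ldots,t_m), \vec{s}]$ with $\vec{s}$ the trivial sibling subterms. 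By Lem.~\ref{rsl:mind-ctx}, $\mind{\eta_0} \geq \mind{\xi} - \posln{p} > \Pdepth{\mu}$, so the root case applies to $\eta_0$ and $\mu(t_1,\ldots,t_m)$, yielding $\eta_0 \comp \mu(t_1,\ldots,t_m) \peqe \psi'_0 \comp \xi'_0$ with $\psi'_0$ a depth-$0$ one-step and $\mind{\xi'_0} \geq \mind{\eta_0} - \Pdepth{\mu}$. I would then combine the pieces through Lem.~\ref{rsl:struct-ctx}: using \peqidright\ to absorb the sibling compositions, $\cfpc{\xi}{\Spa'} \comp \psi \peqe \hat{C}[\eta_0 \comp \mu(t_1,\ldots,t_m), \vec{\eta}]$; replacing the $p$-component by $\psi'_0 \comp \xi'_0$ (Lem.~\ref{rsl:peqe-compatible-ctx}) and splitting the context composition again (Lem.~\ref{rsl:struct-ctx}, writing each sibling $\eta_i = src(\eta_i) \comp \eta_i$ via \peqidleft) produces $\psi' \comp \xi'$ with $\psi' \eqdef \hat{C}[\psi'_0, \vec{s'}]$ (the $\vec{s'}$ being the sibling sources $src(\eta_i)$) and $\xi' \eqdef \hat{C}[\xi'_0, \vec{\eta}]$. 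Here $\psi'$ carries a single rule symbol, at $p$, so it is a one-step with $\sdepth{\psi'} = \posln{p} = \sdepth{\psi}$; $\xi'$ is convergent by Lem.~\ref{rsl:ctx-convergence}; and a last application of Lem.~\ref{rsl:mind-ctx} gives $\mind{\xi'} \geq \min(\mind{\xi'_0} + \posln{p}, \, \mind{\xi}) \geq \mind{\xi} - \Pdepth{\mu} = \mind{\xi} - n'$. Chaining all these $\peqe$-steps by \eqltrans\ (together with $\xi \peqe \cfpc{\xi}{\Spa'}$ from Lem.~\ref{rsl:cfpc-peqe}, lifted by \eqlcomp) yields $\xi \comp \psi \peqe \psi' \comp \xi'$.

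The routine parts are the many applications of \peqidleft/\peqidright\ and the context splittings via Lem.~\ref{rsl:struct-ctx}. The two delicate points I would check carefully are: (i) that the chosen prefixes $\Spa$ and $\Spa'$ are respected, for which the quantitative hypothesis $\mind{\xi} \geq n + n'$ is precisely what is needed; and (ii) the depth accounting through Lem.~\ref{rsl:mind-ctx}, since contracting $\mu$ may move activity upward by as much as $\Pdepth{\mu}$ (when a variable occurring deep in $l$ reappears shallowly, or is duplicated, in $r$) — this is exactly why the loss in $\mind{\xi'}$ must be bounded by $n' = \Pdepth{\mu}$ rather than being absent. I expect the main obstacle to be organising the bookkeeping of the passive sibling components so that the single local permutation at $p$ lifts cleanly to the whole term.
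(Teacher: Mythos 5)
Your proof follows the same skeleton as the paper's: condense $\xi$ to a fixed-prefix form that exposes the pattern of $\psi$'s redex, permute locally via \peqinout\ followed by \peqoutin\ (with the convergence side-conditions of Remark~\ref{rmk:peqinout-restriction} duly checked), reassemble using Lem.~\ref{rsl:struct-ctx} and Lem.~\ref{rsl:peqe-compatible-ctx}, and do the depth accounting with Lem.~\ref{rsl:mind-ctx}. The paper does this in one shot, condensing against $\Spa_0 \cup (\RPos{\psi} \cdot \PPos{\mu})$ where $\Spa_0$ collects the positions above the redex, whereas you split into a root case plus a lifting step; that difference is purely organisational.

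However, your depth accounting has an off-by-one error, and the bound you claim is violated by your own construction. You set $n' \eqdef \Pdepth{\mu}$ and infer, from $\mind{\chi_j} + \posln{\VPos{l}{j}} \geq \mind{\eta}$, that $\mind{\chi_j} \geq \mind{\eta} - \Pdepth{\mu}$. This inference needs $\posln{\VPos{l}{j}} \leq \Pdepth{\mu}$, which is false in general: $\PPos{l}$ consists of the \emph{non-variable} positions of $l$, so a variable occurrence, being a child of a pattern position, may sit at depth $\Pdepth{\mu} + 1$. Concretely, take the collapsing rule $\mu : f(x) \to x$, so that $\Pdepth{\mu} = 0$ and your $n' = 0$; let $\chi$ be any convergent proof term with $\mind{\chi} = d < \omega$, and put $\xi \eqdef f(\chi)$, $\psi \eqdef \mu(tgt(\chi))$. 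Your root case yields $\xi \comp \psi \peqe \mu(src(\chi)) \comp \chi$, i.e. $\xi' = r[\chi] = \chi$, whence $\mind{\xi'} = d = \mind{\xi} - 1 < \mind{\xi} - n'$. This is exactly why the paper chooses $n' \eqdef \Pdepth{\mu} + 1$: activity entering $\xi$ through a variable of $l$ lies one level below the deepest pattern position, and in $r$ that variable may resurface at depth $0$. The repair is immediate, since the lemma only asserts existence of suitable $n, n' < \omega$: replace $n'$ by $\Pdepth{\mu} + 1$ throughout, so that your inequalities become $\mind{\chi_j} \geq \mind{\eta} - (\Pdepth{\mu}+1)$ in the root case and $\mind{\xi'} \geq \mind{\xi} - (\Pdepth{\mu}+1)$ at the end; the rest of your argument then goes through unchanged.
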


\begin{proof}
We take $n \eqdef \sdepth{\psi}$ and $n' = \Pdepth{\mu} + 1$ where $\mu \eqdef \psi(\RPos{\psi})$.
We consider a convergent proof term $\xi$ verifying $\mind{\xi} \geq n + n'$ and $tgt(\xi) = src(\psi)$.

\cfpInsideCompression{%
Let $\Spa_0 \eqdef \set{p \setsthat p \in src(\psi) \land \posln{p} < \sdepth{\psi}}$, $\Spa \eqdef \Spa_0 \cup (\RPos{\psi} \cdot \PPos{\mu})$, and $k \eqdef max \set{\posln{p} \setsthat p \in \Spa}$.
Observe that $p \in \Spa$ implies $\posln{p} \leq \sdepth{\psi} + \Pdepth{\mu}$, so that $k \leq \sdepth{\psi} + \Pdepth{\mu} < \mind{\xi}$.
}%
\cfpInsideStandardisation{%
Let $\Spa_0 \eqdef \set{p \setsthat p \in src(\psi) \land \posln{p} < \sdepth{\psi}}$ and $\Spa \eqdef \Spa_0 \cup (\RPos{\psi} \cdot \PPos{\mu})$.
Observe that $p \in \Spa$ implies $\posln{p} \leq \sdepth{\psi} + \Pdepth{\mu} < \mind{\xi}$, so that Lem.~\ref{rsl:frso-defined-then-length-geq-mind} implies $\frso{\xi}{p}$ to be undefined.
}%
Moreover, it is straightforward to verify that $\Spa \subseteq \Pos{src(\psi)} = \Pos{tgt(\xi)}$.
\cfpInsideCompression{%
Therefore Lem.~\ref{rsl:mind-plus-tgt-then-respects} applies \wrt\ $\xi$, $\Spa$ and $k$, implying that $\xi$ respects $\Spa$. 
}%
\cfpInsideStandardisation{%
Therefore Lem.~\ref{rsl:respects-pos-tgt} implies that $\xi$ respects $\Spa$. 
}%
Then $\xi_F \eqdef \cfpc{\xi}{\Spa}$ can be defined. 
In turn, Lem.~\ref{rsl:cfpc-peqe} implies that $\xi \peqe \xi_F$, so that $\xi \comp \psi \peqe \xi_F \comp \psi$, and Lem.~\ref{rsl:cfpc-equals-src-tgt} implies $\pref{\xi_F}{\Spa} = \pref{tgt(\xi)}{\Spa} = \pref{src(\psi)}{\Spa}$.

Let $C \eqdef \pref{src(\psi)}{\Spa_0}$. An easy induction on $\sdepth{\psi}$ yields that $\pref{\psi}{\Spa_0} = C$, so that the comment following Dfn.~\ref{dfn:term-prefix} implies $\psi = C[t_1, \ldots, t_{j-1}, \mu(u_1, \ldots, u_m), t_{j+1}, \ldots, t_k]$ and $\set{p \setsthat p \in \Pos{C} \,\land\, C(p) \neq \Box} = \Spa_0$. Observe that $\posln{\BPos{C}{i}} = \sdepth{\psi}$ for all $i$, and that particularly $\BPos{C}{j} = \RPos{\psi}$ for some $j$.
In turn, the given form of $\psi$ implies that $src(\psi) = tgt(\xi) = C[t_1, \ldots, t_{j-1}, l[u_1, \ldots, u_m], t_{j+1}, \ldots, t_k]$ where $\mu: l \to h$.
Observe that the set of non-hole positions of the context \\ $C[\Box, \ldots, \Box, l[\Box, \ldots, \Box], \Box, \ldots, \Box]$ is exactly $\Spa$, implying that $C = \pref{tgt(\xi)}{\Spa} = \pref{\xi_F}{\Spa}$, and therefore
$\xi_F = C[\xi_1, \ldots, \xi_{j-1}, l[\phi_1, \ldots, \phi_m], \xi_{j+1}, \ldots, \xi_k]$; \confer\ the comment following Dfn.~\ref{dfn:term-prefix}.
Notice that $\xi_F$ is convergent, implying that all the $\xi_i$ and also the $\phi_i$ are; \confer\ Lem.~\ref{rsl:peqe-soundness-convergent} and Lem.~\ref{rsl:ctx-convergence}.
Moreover, $t_i = tgt(\xi_i)$ for any suitable $i$, and also $u_i = tgt(\phi_i)$ for all suitable $i$. Hence \\
$\begin{array}{@{}rl}
\xi_F \comp \psi 
\\ \peqe & 
  C[\xi_1 \comp t_1, \ldots, \xi_{j-1} \comp t_{j-1}, l[\phi_1 \ldots \phi_m] \comp \mu(u_1, \ldots, u_m), \xi_{j+1} \comp t_{j+1}, \ldots, \xi_k \comp t_k]
\\ \peqe & 
  C[\xi_1, \ldots, \xi_{j-1}, \mu(\phi_1, \ldots, \phi_m), \xi_{j+1}, \ldots, \xi_k]
\\ \peqe & 
	C[s_1 \comp \xi_1, \ldots, s_{j-1} \comp \xi_{j-1}, \mu(w_1, \ldots, w_m) \comp h[\phi_1, \ldots, \phi_m], s_{j+1} \comp \xi_{j+1}, \ldots, s_k \comp \xi_k]
\\ \peqe & 
  C[s_1, \ldots, s_{j-1}, \mu(w_1, \ldots, w_m), s_{j+1}, \ldots, s_k] \ \comp \\ &
	C[\xi_1, \ldots, \xi_{j-1}, h[\phi_1, \ldots, \phi_m], \xi_{j+1}, \ldots, \xi_k]
\end{array}$ \\[2pt]
where $s_i \eqdef src(\xi_i)$ and $w_i \eqdef src(\phi_i)$, in both cases for all suitable $i$.
To justify the equivalences; \confer\ Lem.~\ref{rsl:struct-ctx}; \peqidright, \peqinout\ and Lem.~\ref{rsl:peqe-compatible-ctx}; \peqidleft, \peqoutin\ and Lem.~\ref{rsl:peqe-compatible-ctx} again; and finally Lem.~\ref{rsl:struct-ctx} again; respectively.

We take $\psi' \eqdef C[s_1, \ldots, s_{j-1}, \mu(w_1, \ldots, w_m), s_{j+1}, \ldots, s_k]$ and \\ $\xi' \eqdef C[\xi_1, \ldots, \xi_{j-1}, h[\phi_1, \ldots, \phi_m], \xi_{j+1}, \ldots, \xi_k]$.
Observe that convergence of all $\xi_i$ and $\phi_i$ imply convergence of $\xi'$, \confer\ Lem.~\ref{rsl:ctx-convergence}.

In order to conclude, we must verify that $\mind{\xi'} \geq \mind{\xi} - n' = \mind{\xi_F} - (\Pdepth{\mu} + 1)$; \confer\ Lem.~\ref{rsl:peq-then-same-src-mind-tgt}.
Let $a$ such that $\mind{\xi_a} \leq \mind{\xi_i}$ for all $i$ such that $1 \leq i \leq k$ and $i \neq j$, 
$b$ such that $\mind{\phi_b} + \posln{\BPos{l}{b}} \leq \mind{\phi_i} + \posln{\BPos{l}{i}}$ for all $i$ such that $1 \leq i \leq m$, and
$c$, $k$ such that $\mind{\phi_c} + \posln{\BPos{h}{k}} \leq \mind{\phi_i} + \posln{\BPos{h}{j}}$ if $1 \leq i \leq m$ and $h(\BPos{h}{j}) = x_i$.
In these definitions, $l$ and $h$ are considered as contexts as when we write \eg\ $l[\phi_1, \ldots, \phi_m]$. 
Lem.~\ref{rsl:mind-ctx} implies $\mind{\xi_F} = \sdepth{\psi} + min(\mind{\xi_a}, \mind{\phi_b} + \posln{\BPos{l}{b}})$ and $\mind{\xi'} = \sdepth{\psi} + min(\mind{\xi_a}, \mind{\phi_c} + \posln{\BPos{h}{k}})$.
Observe that $\posln{\BPos{l}{i}} \leq \Pdepth{\mu} + 1$ for all $i$.
We show $\mind{\xi_F} - (\Pdepth{\mu} + 1) \leq \mind{\xi'}$.

If $\mind{\xi_a} \leq \mind{\phi_c} + \posln{\BPos{h}{k}}$, then
$\mind{\xi_F} \leq \sdepth{\psi} + \mind{\xi_a} = \mind{\xi'}$ in either case \wrt\ the characterisation of $\mind{\xi_F}$.
Otherwise, \ie\ if $\mind{\phi_c} + \posln{\BPos{h}{k}} < \mind{\xi_a}$, 
observe that $\mind{\xi_F} \leq \sdepth{\psi} + \mind{\phi_b} + \posln{\BPos{l}{b}}$ holds in any case. Therefore \\ 
$\begin{array}{rcl}
	\mind{\xi_F} & \leq &
	\sdepth{\psi} + \mind{\phi_b} + \posln{\BPos{l}{b}} \\ & \leq &
	\sdepth{\psi} + \mind{\phi_c} + \posln{\BPos{l}{c}} \\ & \leq &
	\sdepth{\psi} + \mind{\phi_c} + (\Pdepth{\mu} + 1)
\end{array}$ \\
Therefore $\mind{\xi_F} - (\Pdepth{\mu} + 1) \leq 
	\sdepth{\psi} + \mind{\phi_c} \leq 
	\sdepth{\psi} + \mind{\phi_c} + \posln{\BPos{h}{k}} = \mind{\xi'}$.
\end{proof}

\begin{lemma}
\label{rsl:jump-ppterm}
Let $\psi$ be a finite \pnpterm. 
Then there exist two numbers $n, n' < \omega$ such that, for any convergent proof term $\xi$ verifying $tgt(\xi) = src(\psi)$ and $\mind{\xi} \geq n + n'$, a finite \pnpterm\ $\psi'$ and a convergent proof term $\xi'$ can be found, which verify all the following: $\xi \comp \psi \peqe \psi' \comp \xi'$, $\ppsteps{\psi'} = \ppsteps{\psi}$, $\sdepth{\redel{\psi'}{i}} = \sdepth{\redel{\psi}{i}}$ for all $i$, and $\mind{\xi'} \geq \mind{\xi} - n' \geq n$.
\end{lemma}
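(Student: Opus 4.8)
The goal is to lift the single-step jump result, Lem.~\ref{rsl:jump-one-step}, to an arbitrary finite \pnpterm\ $\psi$, where a finite \pnpterm\ is either an object term (with $\ppsteps{} = 0$) or a finite \ppterm. The plan is to proceed by induction on $\ppsteps{\psi}$, which is a finite ordinal by hypothesis, so only ordinary numerical induction is needed. The base case $\ppsteps{\psi} = 0$ means $\psi = src(\psi) \in \iSigmaTerms$, so we may take $n = n' = 0$, $\psi' \eqdef \psi = tgt(\xi)$ and $\xi' \eqdef \xi$; then $\xi \comp \psi = \xi \comp tgt(\xi) \peqe \xi$ by \peqidright\ (recall $\xi$ is convergent), and trivially $\xi \peqe \psi' \comp \xi'$ holds after applying \peqidleft\ on the left of $\xi'$, so all the required conditions are met vacuously.

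For the inductive step I would decompose $\psi$ as the concatenation of its first component and its tail. Concretely, using Thm.~\ref{rsl:breq-iff-deneq} (denotational and rebracketing equivalence coincide) together with Lem.~\ref{rsl:ppterm-binC-partition}, I can write $\psi \breqe \psi_0 \comp \psi_1$ where $\psi_0 \eqdef \redel{\psi}{0}$ is the leading one-step and $\psi_1$ is a finite \pnpterm\ with $\ppsteps{\psi_1} = \ppsteps{\psi} - 1$ and $\redel{\psi_1}{i} = \redel{\psi}{i+1}$ for all $i$. Since $\breqe \subseteq \peqe$, we have $\psi \peqe \psi_0 \comp \psi_1$. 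Now I apply Lem.~\ref{rsl:jump-one-step} to the one-step $\psi_0$, obtaining numbers $n_0, n'_0$, and separately apply the inductive hypothesis to $\psi_1$, obtaining numbers $n_1, n'_1$. The idea is to set $n'$ for $\psi$ to be something like $n'_0 + n'_1$ and $n$ to be chosen large enough (e.g. $n \eqdef \max(n_0, n_1) + n'_0 + n'_1$) so that when $\mind{\xi} \geq n + n'$, the jump over $\psi_0$ first produces $\xi \comp \psi_0 \peqe \psi_0' \comp \xi''$ with $\mind{\xi''} \geq \mind{\xi} - n'_0$ still large enough to feed into the inductive hypothesis for $\psi_1$, yielding $\xi'' \comp \psi_1 \peqe \psi_1' \comp \xi'$ with $\mind{\xi'} \geq \mind{\xi''} - n'_1 \geq \mind{\xi} - n'_0 - n'_1$.

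Assembling, I would chain the equivalences: $\xi \comp \psi \peqe \xi \comp (\psi_0 \comp \psi_1) \peqe (\xi \comp \psi_0) \comp \psi_1 \peqe (\psi_0' \comp \xi'') \comp \psi_1 \peqe \psi_0' \comp (\xi'' \comp \psi_1) \peqe \psi_0' \comp (\psi_1' \comp \xi') \peqe (\psi_0' \comp \psi_1') \comp \xi'$, using \peqassoc\ for the rebracketings and \eqlcomp\ (congruence of $\comp$) to propagate the jumps under the outer concatenations; here one must check at each $\comp$ that the convergence and source/target-matching side conditions hold, which follow from Lem.~\ref{rsl:peqe-soundness-convergent} and Lem.~\ref{rsl:peq-then-same-src-mind-tgt}. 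Then $\psi' \eqdef \psi_0' \comp \psi_1'$ is the desired finite \pnpterm; by Lem.~\ref{rsl:jump-one-step} and the inductive hypothesis, $\ppsteps{\psi_0'} = 1$ and $\ppsteps{\psi_1'} = \ppsteps{\psi_1}$, so $\ppsteps{\psi'} = \ppsteps{\psi}$, and the depth-preservation $\sdepth{\redel{\psi'}{i}} = \sdepth{\redel{\psi}{i}}$ follows from the corresponding preservation clauses of the two applied results, reading off components via Lem.~\ref{rsl:ppterm-tgt-src-coherence}-style bookkeeping.

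The main obstacle I anticipate is the careful accounting of the constants $n$ and $n'$: because each jump \emph{decreases} the minimum activity depth of the trailing proof term by a fixed but rule-dependent amount, I must guarantee that after jumping past the finitely many components of $\psi$ the depth of the residual $\xi'$ is still $\geq n$ (the last conjunct of the conclusion, $\mind{\xi'} \geq \mind{\xi} - n' \geq n$, makes this explicit and is what keeps the induction self-feeding). Getting a uniform choice of $n, n'$ depending only on $\psi$ (and not on $\xi$) requires taking $n'$ to be the sum of the per-step $n'$-decrements over all components — finitely many, hence finite — and $n$ large enough to absorb all of them plus the base thresholds; verifying the chain of inequalities $\mind{\xi''} \geq n_1 + n'_1$ at the point where the inductive hypothesis is invoked is the delicate bookkeeping that the proof really turns on, but it is arithmetic rather than conceptual.
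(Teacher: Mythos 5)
Your proof takes essentially the same route as the paper's: induction on $\ppsteps{\psi}$, peeling off the leading one-step, applying Lem.~\ref{rsl:jump-one-step} to it and the induction hypothesis to the tail, and taking $n'$ to be the sum of the two per-jump decrements with $n$ chosen large enough to keep the induction hypothesis applicable (the paper takes $n \eqdef max(m,p)$ and $n' \eqdef m' + p'$; your more generous choice of $n$ works equally well). You are in fact more scrupulous than the paper on the decomposition itself: the paper simply asserts that $\ppsteps{\psi} = k+1$ means $\psi = \chi \comp \phi$ with $\chi$ a one-step, which is only true up to re-association, whereas you justify the head/tail splitting through Thm.~\ref{rsl:breq-iff-deneq} and Lem.~\ref{rsl:ppterm-binC-partition}.

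Two slips need repair, though. First, in the base case your witness $\psi' \eqdef \psi = tgt(\xi)$ is wrong: $tgt(\xi) \comp \xi$ is not even a well-formed proof term, because $tgt(\xi)$ is a trivial proof term whose target is $tgt(\xi)$ itself, which in general differs from $src(\xi)$; moreover, the instance of \peqidleft\ you then invoke actually yields $src(\xi) \comp \xi \peqe \xi$, contradicting your own choice of $\psi'$. The correct witness is $\psi' \eqdef src(\xi)$, as in the paper --- nothing in the statement forces $\psi' = \psi$, only $\ppsteps{\psi'} = \ppsteps{\psi} = 0$ and the (here vacuous) depth conditions. Second, your head/tail decomposition is unavailable when $\ppsteps{\psi} = 1$: Lem.~\ref{rsl:ppterm-binC-partition} requires $0 < \alpha < \ppsteps{\psi}$, a \ppterm\ cannot have a zero-step tail, and $\psi \breqe \psi \comp tgt(\psi)$ is not derivable since rebracketing equivalence lacks \peqidright. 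That case must be dispatched separately, but it is immediate: for a one-step $\psi$ the statement is literally the conclusion of Lem.~\ref{rsl:jump-one-step}, with $\mind{\xi'} \geq \mind{\xi} - n' \geq n$ following from the hypothesis $\mind{\xi} \geq n + n'$. With these two repairs your argument coincides with the paper's.
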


\begin{proof}
We proceed by induction on $\ppsteps{\psi}$.
If $\ppsteps{\psi} = 0$, \ie\ $\psi \in \iSigmaTerms$, then $src(\psi) = \psi$.
Therefore we can take $n = n' = 0$, since for any $\xi$ verifying $tgt(\xi) = \psi$, it is straightforward to obtain $\xi \comp \psi \peqe src(\xi) \comp \xi$, and to verify the required properties for $\psi' \eqdef src(\xi)$ and $\xi' \eqdef \xi$ .

Assume $\ppsteps{\psi} = n+1$, \ie\ $\psi = \chi \comp \phi$, where $\chi$ is a one-step and $\phi$ is a \pnpterm\ verifying $\ppsteps{\phi} = n$.
In this case, \ih\ can be applied on $\phi$; let $m$ and $m'$ be the corresponding numbers. Moreover, Lem.~\ref{rsl:jump-one-step} applies to $\chi$; let $p$ and $p'$ be the numbers whose existence is stated by that lemma.
Let $n \eqdef max(m,p)$ and $n' \eqdef m' + p'$.
Let $\xi$ a convergent proof term verifying $\mind{\xi} \geq n + n' = n + m' + p' \geq p + p'$, and $tgt(\xi) = src(\psi) = src(\chi)$.
Then the conclusion of Lem.~\ref{rsl:jump-one-step} implies that $\xi \comp \psi = \xi \comp \chi \comp \phi \peqe \chi' \comp \xi'' \comp \phi$, where $\chi'$ is a one-step verifying $\sdepth{\chi'} = \sdepth{\chi}$ and $\xi''$ is a convergent proof term such that $\mind{\xi''} \geq \mind{\xi} - p' \geq n + m' \geq m + m'$.
In turn, the conclusion of the \ih\ implies that $\chi' \comp \xi'' \comp \phi \peqe \chi' \comp \phi' \comp \xi'$, where $\phi'$ is a \pnpterm\ verifying $\ppsteps{\phi'} = \ppsteps{\phi}$ and $\sdepth{\redel{\phi'}{i}} = \sdepth{\redel{\phi}{i}}$ for all $i$, and $\xi'$ is a convergent proof term such that $\mind{\xi'} \geq \mind{\xi''} - m' \geq n$. We take $\psi' \eqdef \chi' \comp \phi'$, and we conclude by observing that \eqltrans\ implies $\xi \comp \psi \peqe \psi' \comp \xi'$.
\end{proof}

\medskip
The given auxiliary results allow to prove the statement being the aim of this Section.

\begin{proposition}
\label{rsl:factorisation}
Let $\psi$ be a convergent proof term and $n < \omega$.
Then there exist $\chi$ and $\phi$ such that $\psi \peqe \chi \comp \phi$, $\chi$ is a finite \pnpterm, $\phi$ is convergent and $mind(\phi) > n$.
\end{proposition}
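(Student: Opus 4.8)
The plan is to prove Proposition~\ref{rsl:factorisation} by induction on the proof term $\psi$, using the simplified induction principle of Prop.~\ref{rsl:pterm-induction-principle}. The base case and the infinite-concatenation case are the delicate ones; the two symbol cases and the binary-concatenation case should follow by routine combination of the inductive hypotheses. The key pre-existing tools are the factorisation result for \imsteps\ (Lem.~\ref{rsl:factorisation-imstep}), which already gives the statement when $\psi$ is an \imstep\ but only for depth $>0$, and the ``jump'' lemma (Lem.~\ref{rsl:jump-ppterm}), which lets a finite \pnpterm\ be commuted past a deep tail at the cost of a bounded depth loss.

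First I would treat the \imstep\ case by iterating Lem.~\ref{rsl:factorisation-imstep}. That lemma peels off a finite, depth-zero leading \pnpterm\ and leaves a convergent \imstep\ of minimum activity depth $>0$. To reach depth $>n$ I would iterate: apply the lemma to obtain $\psi \peqe \chi_0 \comp \phi_0$ with $\mind{\phi_0} > 0$, then recurse on $\phi_0$. The obstacle here is that a single application only gains one level of depth, so I must argue that after finitely many peelings the residual has depth $> n$; this needs a secondary induction on $n$, where at each stage I use Lem.~\ref{rsl:jump-ppterm} to push the newly produced finite leading part through the previously accumulated finite prefix is \emph{not} needed — rather, I concatenate the successive finite leading parts $\chi_0 \comp \chi_1 \comp \cdots$, which stays a finite \pnpterm, and the depths of the residuals strictly increase, so $n{+}1$ peelings suffice. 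I must be careful that convergence is preserved at each step (guaranteed, since each $\phi_i$ is asserted convergent by Lem.~\ref{rsl:factorisation-imstep}).

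For the inductive cases I would proceed as follows. For $\psi = f(\psi_1,\ldots,\psi_m)$ with $f \in \Sigma$, convergence of $\psi$ gives convergence of each $\psi_i$ (Lem.~\ref{rsl:fnsymbol-convergence}); applying the \ih\ to each $\psi_i$ with the same $n{-}1$ yields $\psi_i \peqe \chi_i \comp \phi_i$ with $\mind{\phi_i} > n-1$, and then \peqstruct\ together with \eqlfun\ recombines these into $f(\chi_1,\ldots,\chi_m) \comp f(\phi_1,\ldots,\phi_m)$, where the leading term is a finite \pnpterm\ and the tail has $\mind{} > n$ by the definition of $\mind{}$ for the symbol case. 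The rule-symbol case $\psi = \mu(\psi_1,\ldots,\psi_m)$ reduces to the \imstep\ argument: after condensing, $\mu$ has depth $0$, so I would first use \peqoutin\ to expose $\mu(src(\psi_1),\ldots)$ as a leading one-step followed by $r[\psi_1,\ldots,\psi_m]$, absorbing the one-step into $\chi$ and applying the \ih\ (via the $f$-case / Lem.~\ref{rsl:ctx-convergence}) to the context-part $r[\psi_1,\ldots,\psi_m]$. For $\psi = \psi_1 \comp \psi_2$, I apply the \ih\ to $\psi_2$ obtaining $\psi_2 \peqe \chi_2 \comp \phi_2$ with $\mind{\phi_2} > n$, and then apply Lem.~\ref{rsl:jump-ppterm} to commute the finite \pnpterm\ $\chi_2$ leftward past a sufficiently deep factored tail of $\psi_1$; here $\psi_1$ must itself first be factored deeply enough (depth $\geq n + n'$ where $n'$ is the constant from Lem.~\ref{rsl:jump-ppterm} applied to $\chi_2$) so that the jump is licensed.

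The hard part will be the infinite-concatenation case $\psi = \icomp \psi_i$. The naive idea — factor each $\psi_i$ and collect the leading parts — fails because infinitely many leading parts do not assemble into a \emph{finite} \pnpterm. The correct strategy exploits convergence: since $\mind{\psi_i} \to \omega$, only finitely many components, say $\psi_0,\ldots,\psi_{k}$, have minimum activity depth $\leq n + n'$ for the relevant jump constant, while the tail $\icomp \psi_{k+1+i}$ already satisfies $\mind{} > n$ (or can be made so). I would factor the finite prefix $\psi_0 \comp \cdots \comp \psi_k$ by the binary-concatenation argument already established, obtaining $\psi_0 \comp \cdots \comp \psi_k \peqe \chi \comp \phi_0$ with $\chi$ finite and $\mind{\phi_0} > n$, and then absorb $\phi_0$ together with the deep tail $\icomp \psi_{k+1+i}$ into a single convergent tail of $\mind{} > n$, using \peqassoc\ and the completeness of \setpterm\ under infinite concatenation (Prop.~\ref{prop:compl-pt}). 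The genuine subtlety is ensuring that after applying the jump lemma to pull $\chi$ to the very front, the residual remains a \emph{convergent} infinite concatenation with $\mind{} > n$; this requires tracking the bounded depth loss $n'$ from Lem.~\ref{rsl:jump-ppterm} uniformly, and choosing the cut index $k$ large enough that every component beyond it exceeds $n + n'$, so that no jump ever drops a residual below depth $n$. Verifying this uniform bound, and that the resulting equivalence is justified entirely within $\peqe$ (no appeal to the limit rule \eqllim), is where the main care is needed.
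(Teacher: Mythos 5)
Your overall architecture -- structural/layered induction on $\psi$, with Lem.~\ref{rsl:factorisation-imstep} as the base ingredient, Lem.~\ref{rsl:jump-ppterm} driving the binary-concatenation case, and a cut at a sufficiently deep tail in the infinite-concatenation case -- is exactly the paper's, and your binary and infinite cases are essentially right (in fact your final worry is spurious: after factoring the finite prefix $\psi_0 \comp \cdots \comp \psi_k \peqe \chi \comp \phi'$, the term $\chi$ is already leftmost, so one just takes $\phi \eqdef \phi' \comp (\icomp \psi_{k+1+i})$; no jump and no appeal to \eqllim\ is needed there). The genuine gap is in the \imstep\ base case, which carries the whole proof. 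Lem.~\ref{rsl:factorisation-imstep} peels only \emph{root} (depth-zero) steps and guarantees $\mind{\phi_0} > 0$ for the residual; applied again to $\phi_0$, which has no depth-zero activity left, it peels nothing and returns the same bound $\mind{} > 0$. So your claim that ``the depths of the residuals strictly increase, so $n{+}1$ peelings suffice'' is false: iterated root-level peeling can never raise the bound past $1$. The depth gain must come from \emph{structural descent}: since $\mind{\phi_0} > 0$, one has $\phi_0 = f(\phi'_1, \ldots, \phi'_m)$, each $\phi'_i$ is a convergent \imstep\ (Lem.~\ref{rsl:imstep-convergent-args}), and a secondary induction on $n$ applies the statement at depth $n-1$ to each $\phi'_i$; the ``$1+$'' in $\mind{f(\ldots)}$ is what converts $n-1$ into $n$. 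This descent-and-recombine mechanism is absent from your \imstep\ case.

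A second gap affects your recombination step (both in the $f$-case and in the \imstep\ case once repaired): you assert that $f(\chi_1, \ldots, \chi_m)$ is a finite \pnpterm, but by Dfn.~\ref{dfn:ppterm} a \pnpterm\ is either a term of $\iSigmaTerms$ or built exclusively from one-steps by concatenation, so a function symbol applied to proof terms containing rule symbols never qualifies. The paper repairs this by sequentialising, $f(\chi_1, \ldots, \chi_m) \peqe f(\chi_1, s_2, \ldots, s_m) \comp f(t_1, \chi_2, \ldots, s_m) \comp \cdots$ with $s_i = src(\chi_i)$ and $t_i = tgt(\chi_i)$, and converting each factor into a \ppterm; without this, the $\chi$ you produce does not satisfy the statement. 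A smaller related point: in the rule-symbol case you apply the \ih\ to $r[\psi_1, \ldots, \psi_m]$, but when $r$ duplicates variables this term need not be inductively smaller than $\mu(\psi_1, \ldots, \psi_m)$; apply the \ih\ to the $\psi_i$ individually and recombine via Lem.~\ref{rsl:struct-ctx}, as the paper does.
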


\begin{proof}
We proceed by induction on $\alpha$ where $\psi \in \layerpterm{\alpha}$, analysing the cases in the formation of $\psi$ \wrt\ Dfn.~\ref{dfn:layer-pterm}.

\begin{itemize}
\item 
Assume that $\psi$ is an \imstep. In this case we proceed by induction on $n$.
If $n = 0$ then Lem.~\ref{rsl:factorisation-imstep} suffices to conclude.

Assume $n = n' + 1$. 
Lem.~\ref{rsl:factorisation-imstep} implies $\psi \peqe \chi_0 \comp \phi'$ where $\chi_0$ is a finite \pnpterm, $\phi'$ is a convergent \imstep\ and $\mind{\phi'} > 0$, so that $\phi' = f(\phi'_1, \ldots, \phi'_m)$.
Observe that $\phi'$ convergent implies $\phi'_i$ convergent for all $i$, \confer\ Lem.~\ref{rsl:imstep-convergent-args}.
Then \ih\ can be applied on all $\phi'_i$ \wrt\ $n'$, yielding 
$\phi' \peqe f(\chi_1 \comp \phi_1, \ldots, \chi_m \comp \phi_m)$ where for all $i$, $\chi_i$ is a finite \pnpterm, $\phi_i$ is convergent and $\mind{\phi_i} > n'$.
Hence $\psi \peqe \chi_0 \comp f(\chi_1, \ldots, \chi_m) \comp f(\phi_1, \ldots, \phi_m)$.

Assume that $m = 3$; observe that 
$f(\chi_1, \chi_2, \chi_3) 
   \peqe f(\chi_1 \comp t_1, s_2 \comp \chi_2, s_3 \comp \chi_3)
	 \peqe f(\chi_1, s_2, s_3) \comp f(t_1, \chi_2, \chi_3)
	 \peqe f(\chi_1, s_2, s_3) \comp f(t_1 \comp t_1, \chi_2 \comp t_2, s_3 \comp \chi_3)
	 \peqe f(\chi_1, s_2, s_3) \comp f(t_1, \chi_2, s_3) \comp (t_1, t_2, \chi_3)$.
An analogous reasoning for any $m$ yileds 
$f(\chi_1, \chi_2, \ldots, \chi_m)
	 \peqe f(\chi_1, src(\chi_2), \ldots, src(\chi_m)) 
	     \comp f(tgt(\chi_1), \chi_2, \ldots, src(\chi_m))
	     \comp f(tgt(\chi_1), tgt(\chi_2), \ldots, \chi_m)$.
In turn, it is straightforward to obtain a \ppterm\ $\chi'_k \peqe f(tgt(\chi_1), \ldots, \chi_k, \ldots, src(\chi_m))$, so that $\chi' \eqdef \chi'_0 \comp \ldots \comp \chi'_m$ is a \ppterm\ verifying $\chi' \peqe f(\chi_1, \chi_2, \ldots, \chi_m)$.
Thus we conclude by taking $\chi \eqdef \chi_0 \comp \chi'$ and $\phi \eqdef f(\phi_1, \ldots, \phi_m)$.

\item
Assume $\psi = \psi_1 \comp \psi_2$ and $\psi$ is not an infinite composition.
In this case we can apply \ih\ on $\psi_2$, obtaining $\psi_2 \peqe \chi_2 \comp \phi_2$ where $\chi_2$ is a finite \pnpterm, $\phi_2$ is convergent and $\mind{\phi_2} > n$.
Lem.~\ref{rsl:jump-ppterm} applies to $\chi_2$, implying the existence of two numbers, say $m_0$ and $m'$, which enjoy some properties.
Let $m \eqdef max(n,m_0)$. 
Applying \ih\ on $\psi_1$ \uln{\wrt\ $m + m'$}, we obtain $\psi_1 \peqe \chi_1 \comp \phi_1$, where $\chi_1$ is a finite \pnpterm, $\phi_1$ is convergent and $\mind{\phi_1} > m + m' \geq m_0 + m'$. Observe $\psi \peqe \chi_1 \comp \phi_1 \comp \chi_2 \comp \phi_2$, so that $tgt(\phi_1) = src(\chi_2)$.

Therefore, the conclusion of Lem.~\ref{rsl:jump-ppterm} implies $\phi_1 \comp \chi_2 \peqe \chi'_2 \comp \phi'_1$, so that $\psi \peqe \chi_1 \comp \chi'_2 \comp \phi'_1 \comp \phi_2$, where $\chi'_2$ is a finite \pnpterm\ (since $\ppsteps{\chi'_2} = \ppsteps{\chi_2}$), $\phi'_1$ is convergent and $\mind{\phi'_1} \geq \mind{\phi_1} - m' > m \geq n$.
Thus we conclude by taking $\chi \eqdef \chi_1 \comp \chi'_2$ and $\phi \eqdef \phi'_1 \comp \phi_2$.

\item
Assume $\psi = \icomp \psi_i$.
Let $k$ such that $\mind{\psi_i} > n$ if $i > k$; convergence of $\psi$ entails the existence of such $k$.
Then $\psi \peqe \psi_0 \comp \ldots \comp \psi_k \comp (\icomp \psi_{k+1+i})$, and $\mind{\icomp \psi_{k+1+i}} > n$; notice that convergence of $\psi$ implies convergence of $\icomp \psi_{k+1+i}$.
Observe that $\psi_0 \comp \ldots \comp \psi_k \in \layerpterm{\alpha'}$ where $\alpha' < \alpha$. This observation allows to use \ih\ to obtain $\psi_0 \comp \ldots \comp \psi_k \peqe \chi \comp \phi'$ where $\chi$ is a finite \pnpterm, $\phi'$ is convergent and $\mind{\phi'} > n$. Then we conclude by taking $\phi \eqdef \phi' \comp (\icomp \psi_{k+1+i})$.

\item
Assume $\psi = f(\psi_1, \ldots, \psi_m)$ and $\psi$ is not an \imstep.
In this case, we can apply \ih\ on each $\psi_i$ obtaining $\psi_i \peqe \chi_i \comp \phi_i$, where $\chi_i$ is a finite \pnpterm, $\phi_i$ is convergent, and $\mind{\phi_i} > n$.
Then $\psi \peqe f(\chi_1, \ldots, \chi_m) \comp f(\phi_1, \ldots, \phi_m)$.
Hence, an argument about $f(\chi_1, \ldots, \chi_m)$ analogous to that used in the \imstep\ case allows to conclude.

\item
Assume $\psi = \mu(\psi_1, \ldots, \psi_m)$ and $\psi$ is not an \imstep.
Say $\mu: l[x_1, \ldots, x_m] \to h$.

Assume $h = f(h_1, \ldots, h_k)$.
In this case $\psi \peqe \mu(src(\psi_1), \ldots, src(\psi_m)) \comp f(h_1[\psi_1, \ldots, \psi_m], \ldots, h_k[\psi_1, \ldots, \psi_m])$.
Applying \ih\ on each $\psi_i$ yields $\psi_i \peqe \chi_i \comp \phi_i$, where $\chi_i$ is a finite \pnpterm, $\phi_i$ is convergent, and $\mind{\phi_i} > n$.  \\
Therefore
$\psi \peqe 
	\mu(src(\psi_1), \ldots, src(\psi_m)) \comp 
	f(h_1[\chi_1, \ldots, \chi_m], \ldots, h_k[\chi_1, \ldots, \chi_m]) 
	\comp f(h_1[\phi_1, \ldots, \phi_m], \ldots, h_k[\phi_1, \ldots, \phi_m])$; \confer\ Lem~\ref{rsl:struct-ctx}.
Hence, an argument about $f(h_1[\chi_1, \ldots, \chi_m], \ldots, h_k[\chi_1, \ldots, \chi_m])$ analogous to that used in the \imstep\ case for $f(\chi_1, \ldots, \chi_m)$, \confer\ Lem.~\ref{rsl:struct-ctx}, allows to conclude.

The other possible case is $h = x_j$, implying $\psi \peqe \mu(src(\psi_1), \ldots, src(\psi_m)) \comp \psi_j$. \Ih\ can be applied on $\psi_j$ obtaining $\psi_j \peqe \chi' \comp \phi$, where $\chi'$ is a finite \pnpterm, $\phi$ is convergent and $\mind{\phi} > n$.  
Thus we conclude by taking $\chi \eqdef \mu(src(\psi_1), \ldots, src(\psi_m)) \comp \chi'$.
\end{itemize}
\end{proof}

\doNotIncludeStandardisation{
\subsection{Proof of the compression result}
}

\begin{theorem}
\label{rsl:compression}
Let $\psi$ be a convergent proof term. Then there exists some \ppterm\ $\phi$ verifying $\psi \peq \phi$ and $\ppsteps{\phi} \leq \omega$.
\end{theorem}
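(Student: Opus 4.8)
The plan is to prove the compression theorem by \emph{iterating} the factorisation result, Prop.~\ref{rsl:factorisation}, peeling off from $\psi$ an $\omega$-indexed family of finite prefixes whose minimal activity depths grow without bound, and then gluing these prefixes into a single \pnpterm\ by one application of the limit rule \eqllim. Concretely, I would set $\phi_0 \eqdef \psi$ and, assuming $\phi_k$ is a convergent proof term, apply Prop.~\ref{rsl:factorisation} with parameter $n = k$ to obtain a base-\peqence\ factorisation $\phi_k \peqe \chi_k \comp \phi_{k+1}$ in which $\chi_k$ is a \emph{finite} \pnpterm, $\phi_{k+1}$ is convergent, and $\mind{\phi_{k+1}} > k$. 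Each $\chi_k$ is convergent (being a finite \pnpterm), and iterating yields, for every $k$, a derivation $\psi \peqe (\chi_0 \comp \ldots \comp \chi_k) \comp \phi_{k+1}$ using only \eqlcomp, \eqltrans\ and \peqassoc, hence staying entirely within base \peqence.

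Two estimates make the gluing work. First, \emph{composability}: well-formedness of $\chi_k \comp \phi_{k+1}$ gives $tgt(\chi_k) = src(\phi_{k+1})$, while Lem.~\ref{rsl:peq-then-same-src-mind-tgt} applied to $\phi_{k+1} \peqe \chi_{k+1} \comp \phi_{k+2}$ gives $src(\phi_{k+1}) = src(\chi_{k+1})$, whence $tgt(\chi_k) = src(\chi_{k+1})$. Second, a \emph{depth bound}: since Lem.~\ref{rsl:peq-then-same-src-mind-tgt} yields $\mind{\phi_k} = min(\mind{\chi_k}, \mind{\phi_{k+1}})$, we get $\mind{\chi_k} \geq \mind{\phi_k} \geq k$ for all $k \geq 1$, so by Lem.~\ref{rsl:ppterm-seq-mind} every one-step component $\redel{\chi_k}{j}$ has depth $\geq k$. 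I would then define the target \pnpterm\ $\phi$ to be the right-associated concatenation of \emph{all} the one-step components of the $\chi_i$, listed in order; this is a genuine \pnpterm\ whose coherence and convergence follow from the component lemmas (Lem.~\ref{rsl:ppterm-src}, Lem.~\ref{rsl:ppterm-tgt-successor}, Lem.~\ref{rsl:ppterm-tgt-src-coherence}) together with the fact that the component depths tend to $\omega$. Crucially, $\ppsteps{\phi} = \sum_{i<\omega} \ppsteps{\chi_i}$ is a supremum of finite partial sums and therefore at most $\omega$.

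For the equivalence $\psi \peq \phi$ I would invoke \eqllim\ with the common prefixes $P_k$, where $P_k$ denotes the finite \ppterm\ formed by the components coming from $\chi_0, \ldots, \chi_k$. On the one hand $\phi \peqe P_k \comp T_k$ by \peqassoc\ (with $T_k$ the remaining tail), and $\mind{T_k} > k$ because the components of $T_k$ come from $\chi_{k+1}, \chi_{k+2}, \ldots$ and hence have depth $\geq k+1$. On the other hand $\chi_0 \comp \ldots \comp \chi_k \peqe P_k$ is a finite rebracketing, again a base-\peqence\ derivation, so $\psi \peqe P_k \comp \phi_{k+1}$ with $\mind{\phi_{k+1}} > k$. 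For each $k$ these are exactly the two premises of \eqllim\ with the \emph{same} prefix $P_k$, and all of them are $\peqe$-derivations, so there is no illegal stacking of \eqllim; the rule then delivers $\psi \peq \phi$.

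Finally I would dispatch the degenerate cases: if the total step count is $0$, the depth estimate forces $\mind{\psi} = \omega$, so $\psi$ is trivial and $\psi \peq src(\psi)$ by Lem.~\ref{rsl:trivial-pterm-peq-src}, with $src(\psi) \in \iSigmaTerms$; if the count is a positive natural number, $\phi$ is a finite \ppterm\ and the residual trivial tail $\phi_{k+1}$ (which becomes trivial once $k$ is large) is absorbed via \peqidright. The hard part will be the bookkeeping around the limit rule, namely verifying that the \eqllim\ premises stay within \emph{base} \peqence\ and share a common finite prefix, and the slightly delicate passage from the intuitive object $\icomp \chi_i$ (which need not literally satisfy Dfn.~\ref{dfn:ppterm}, since factorisation may return empty prefixes) to the bona fide component-based \pnpterm\ $\phi$ with $\ppsteps{\phi} \leq \omega$.
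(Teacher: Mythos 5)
Your proposal is correct and takes essentially the same route as the paper's own proof: both iterate Prop.~\ref{rsl:factorisation} to peel off finite prefixes whose minimal activity depth grows without bound, and both conclude with a single application of \eqllim\ whose premises are base-\peqence\ judgements sharing a common finite prefix, with the trivial/finite cases dispatched separately via Lem.~\ref{rsl:trivial-pterm-peq-src} and \peqidright. The only divergence is cosmetic: the paper assembles the compressed term directly as $\icomp \phi_i$ of the (non-trivial) finite factors, splitting into three cases according to where triviality first occurs, whereas you flatten the factors into their one-step components before gluing, which buys nothing but costs the extra bookkeeping you yourself flag at the end.
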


\begin{proof}
We define the sequences of proof terms $\langle \psi_i \rangle_{i < \omega}$ and $\langle \phi_i \rangle_{i < \omega}$ as follows.
We start defining $\psi_0 \eqdef \psi$.
Then, for each $i < \omega$, we define $\phi_i$ and $\psi_{i+1}$ to be proof terms verifying that $\psi_i \peqe \phi_i \comp \psi_{i+1}$, $\phi_i$ is a finite \pnpterm\ and either $\mind{\psi_{i+1}} > \mind{\psi_i}$ or $\mind{\psi_{i+1}} = \mind{\psi_i} = \omega$; \confer\ Prop.~\ref{rsl:factorisation}.
Observe that $\mind{\psi_i} < \omega$ implies $\mind{\phi_i} = \mind{\psi_i}$ by \ref{rsl:peq-then-same-src-mind-tgt}, so in that case $\phi_i$ is a \ppterm, \ie\ it is not trivial. 
Moreover, an easy induction on $n$ yields $\psi \peqe \phi_0 \comp \ldots \comp \phi_n \comp \psi_{n+1}$ for all $n$.

We define $T \eqdef \set{n \setsthat \psi_n \textnormal{ is a trivial proof term}}$.
There are three cases to consider:
\begin{itemize}
\item 
If $0 \in T$, \ie\ if $\psi$ is a trivial proof term, then it is enough to take $\phi \eqdef src(\psi)$ and refer to Lem.~\ref{rsl:trivial-pterm-peq-src}.

\item
Assume $0 \notin T$ and $T \neq \emptyset$, let $n$ be the minimal element in $T$.
In this case we take $\phi \eqdef \phi_0 \comp \ldots \comp \phi_{n-1}$.
For any $k < \omega$, observe that $\psi \peqe \phi \comp \psi_n$, $\phi \peqe \phi \comp tgt(\phi)$ (\confer\ \peqidright), and $\mind{\psi_n} = \mind{tgt(\phi)} = \omega > k$, \confer\ Lem.~\ref{rsl:trivial-pterm-mind-omega}.
Then Dfn.~\ref{dfn:peqe} allows to assert $\psi \peq \phi$.
Finally, observe that each $\phi_i$ being finite implies that $\phi$ is also a finite \ppterm, \ie\ it verifies $\ppsteps{\phi} < \omega$.

\item
Assume $T = \emptyset$. In this case, for any $i$ Lem.~\ref{rsl:trivial-pterm-mind-omega} implies that $\mind{\psi_i} < \omega$, so that $\phi_i$ is non-trivial.
We take $\phi \eqdef \icomp \phi_i$.
Let $n < \omega$. We have already verified that $\psi \peqe \phi_0 \comp \ldots \comp \phi_n \comp \psi_{n+1}$, and $\phi \peqe \phi_0 \comp \ldots \comp \phi_n \comp \icomp \phi_{n+1+i}$.
On the other hand, an easy induction on $k$ implies $\mind{\psi_k} = \mind{\phi_k} \geq k$ for all $k$, then $\mind{\psi_{n+1}} > n$, and also $\mind{\icomp \phi_{n+1+i}} > n$.
Hence the rule \eqllim\ can be applied to obtain $\psi \peq \phi$.
We conclude by observing that $\ppsteps{\phi_n} < \omega$ for all $n$ implies that $\ppsteps{\phi} \leq \omega$.
\end{itemize}
\end{proof}

\newpage
\bibliography{tesisd}
\bibliographystyle{alpha}

\end{document}